\documentclass[11pt,letterpaper]{article}
\usepackage[T1]{fontenc}
\usepackage{lmodern}
\usepackage[margin=1in]{geometry}
\usepackage{amsmath,amssymb,amsthm,mathtools}
\usepackage{booktabs,array,tabularx,enumitem,float,microtype}
\usepackage[ruled,vlined,linesnumbered]{algorithm2e}
\usepackage[font=small,labelfont=bf,skip=8pt]{caption}
\usepackage{placeins}
\usepackage[colorlinks=true,linkcolor=blue,citecolor=blue,urlcolor=blue,
            linktoc=all,bookmarksnumbered]{hyperref}
\hypersetup{
 pdftitle={Provable Subexponential Algorithms for NIST Third-Round Lattice Families},
 pdfsubject={Gaussian sampling and secret recovery}}
\setlist{topsep=5pt,itemsep=3pt,parsep=0pt,leftmargin=*}
\newcolumntype{Y}{>{\raggedright\arraybackslash}X}
\newcolumntype{P}[1]{>{\raggedright\arraybackslash}p{#1}}
\newcommand{\papertablesetup}{\small\renewcommand{\arraystretch}{1.18}}
\newtheorem{theorem}{Theorem}[section]
\newtheorem{lemma}[theorem]{Lemma}
\newtheorem{proposition}[theorem]{Proposition}
\newtheorem{corollary}[theorem]{Corollary}
\theoremstyle{definition}
\newtheorem{definition}[theorem]{Definition}
\theoremstyle{remark}
\newtheorem{remark}[theorem]{Remark}
\numberwithin{equation}{section}
\allowdisplaybreaks[2]

\title{Provable Subexponential Algorithms for\\NIST Third-Round Lattice Families}

\author{
Yiming Gao\textsuperscript{1} \quad
Xuyuan Han\textsuperscript{1} \quad
Honggang Hu\textsuperscript{1,2}\\[0.8em]
\small \textsuperscript{1}School of Cyber Science and Technology,\\
University of Science and Technology of China\\
\small \texttt{qw1234567@mail.ustc.edu.cn, hanxuyuan@mail.ustc.edu.cn}\\[0.15em]
\small \textsuperscript{2}Hefei National Laboratory, Hefei, China\\
\small \texttt{hghu2005@ustc.edu.cn}
}

\date{}
\begin{document}
\maketitle
\begin{abstract}
We give provable classical subexponential algorithms for secret recovery
in growing parameter families associated with NIST third-round lattice
candidates. For the Kyber/ML-KEM, FrodoKEM, SABER, NTRU LPRime, and
Dilithium/ML-DSA families
studied here, polynomial moduli and polylogarithmic coefficient scales
yield recovery of the short secret component in expected time and space
\(2^{(1/2+o(1))n/\ln\ln n}\). 

For noisy or rounded linear relations, we exploit an exact gap in the squared 
Euclidean norm of a comparison vector defined by each coordinate guess. One Gaussian 
list suffices to identify every secret coordinate by binary search, without enumerating the others. We establish the
required sampling guarantees through new geometric bounds for structured
public operators over prime and power of two moduli. The construction
builds on the Wagner-style Gaussian sampling framework of Ducas, Engelberts, and Loyer (CRYPTO 2025)
and the low-error decision-LWE algorithm of Han, Gao, and Hu (2026).

For quotient relations of NTRU type, we develop an affine slice search: fixing
coordinates restricts candidate pairs to slices of the public lattice,
and their estimated Gaussian masses guide the choice of each next
coordinate. In the stated modulus window, Falcon's key generation
quality condition supplies the required mass bound. The algorithm then
recovers an equivalent signing key with high probability in time and space
\(2^{O(n/\ln\ln n)}\). The same search recovers the short key
core for cyclic NTRU-HPS/HRSS.

Together, these results give subexponential algorithms for problem families associated with all seven
NIST third-round lattice candidates. Despite the subexponential complexity, our
results do not establish a reduction in the concrete security of the currently specified parameter
sets.

\end{abstract}

\vfill
\noindent\begin{minipage}{\textwidth}
\footnotesize
\rule{0.4\linewidth}{0.4pt}\par
\textbf{AI disclosure.}
Building on our earlier work~\cite{HGH}, the authors used large language
model tools through Codex (GPT-5.6-sol and GPT-6-astra)
to explore extensions to growing parameter families associated with
Kyber. Further extensions to other NIST third-round lattice candidates
were explored with assistance from GPT-5.6-sol.
The research proceeded through iterative exchanges: the authors
contributed mathematical ideas and approaches, while the tools also
proposed ideas and generated candidate mathematical derivations.
The authors then organized these approaches into the two frameworks presented in this paper, with further assistance
from GPT-6-astra in organizing and drafting the manuscript.
The final manuscript was substantially verified and revised by the authors, who
remain responsible for all mathematical claims, proofs, references,
and conclusions.
\end{minipage}
\clearpage
\setcounter{tocdepth}{2}
\begingroup
\linespread{0.94}\selectfont
\tableofcontents
\endgroup
\clearpage
\section{Introduction}
\label{ui:introduction}
Small secrets and errors are central to efficient cryptography based on
lattices. They simplify arithmetic, reduce key and ciphertext sizes,
and make decryption reliable. These advantages explain their use in
many lattice candidates considered in NIST's standardization
process~\cite{KyberSpec21,FrodoSpec20,SaberSpec,NTRUPrimeSpec,FIPS204,FalconSpec,NTRUSpec}.
The usual estimates of concrete security account for coefficient size
through heuristic models of lattice reduction, sieving, and hybrid
attacks~\cite{APS15}. In these models, the cost of the lattice reduction
step is typically modeled as exponential in its block size.
This raises an asymptotic question: can narrow coefficient distributions
lead to provable recovery in time subexponential in the dimension?

For LWE, small errors have long been known to permit subexponential
algorithms. A central question is how many samples these algorithms
need. Kirchner and Fouque~\cite{KF15} obtained subexponential running
times with a refined BKW algorithm for sufficiently small secrets and
errors. Their claim that a linear number of original samples suffices
relied on sample amplification. Herold, Kirshanova, and May~\cite{HKM18}
identified a gap in that argument and proved the required sample
amplification guarantee from $\Theta(n\ln n)$ original samples.
Algebraic methods provide another route.
Arora and Ge~\cite{AG11} gave a provable subexponential algorithm for
sufficiently small Gaussian errors, using a subexponential number of
samples. Subsequent work reduced the sample requirements for binary
errors~\cite{ACFP14,STA20,Ste24}, including heuristic estimates with
linear samples and a provable result with linear samples for a biased
binary error distribution~\cite{STA20}.

Ducas, Engelberts, and Loyer~\cite{DEL25} avoided sample amplification
by developing a discrete Gaussian sampler based on Wagner's
algorithm~\cite{Wag02}. Their method gave a provable subexponential
algorithm for $\mathrm{SIS}^{\infty}$ with only slightly more variables
than equations. Applying this sampler to LWE required further control
of its accuracy and the dependencies among its outputs.
In our earlier work~\cite{HGH}, we used random auxiliary lattices and
a direct analysis of the dependent Gaussian samples to obtain a
subexponential algorithm for decision LWE. For polynomial prime
moduli and sufficiently small discrete Gaussian errors, the algorithm
allows an arbitrary secret distribution and uses
$n+\Theta(n/\ln\ln n)$ original samples, with expected time and space
$2^{O(n/\ln\ln n)}$. However, this result still uses more than $n$
samples and does not cover structured LWE, such as the module LWE
underlying ML-KEM.

We build on these sampling methods to develop a general framework
for recovering small secrets from noisy or rounded linear relations.
The framework uses the small size of both the secret and the error
to turn each coordinate guess into a measurable difference in squared
norm. A single Gaussian list then identifies every secret coordinate.
We separate this recovery procedure from the geometric properties of
the public operator and prove the bounds needed for structured
instances. This gives recovery from exactly $n$ coefficient
observations in the linear families studied here, including rounded
products and compressed public keys.

We also develop a framework for finding valid short pairs from public
quotients of NTRU type. Together, the two frameworks give provable
subexponential algorithms for growing parameter families associated
with all seven lattice candidates in the third round of NIST's
standardization process. Under the conditions stated below, the
algorithms are classical and randomized, and succeed with high
probability in expected time and space $2^{O(n/\ln\ln n)}$, where $n$
is the relevant coefficient dimension.
\subsection{Contributions}
\label{ui:contributions}

We develop two algorithmic frameworks and the geometric estimates
needed to apply them. The contributions have four parts.

\paragraph{Generic recovery from low energy linear relations.}
A guess for one secret coordinate changes the associated error vector's squared
norm by an exact integer gap. All other secret coordinates contribute
a common offset, so comparing the candidate scores identifies the
coordinate without enumerating the remaining secret.
A single Gaussian list estimates these scores simultaneously for
every coordinate and every admissible public label.
The resulting recovery theorem separates the coordinate tests and
list construction from the algebraic structure of the public operator.

\paragraph{Prefix geometry for structured public operators.}
We prove the smoothing estimates required by the Gaussian list for
scalar matrices over prime fields, finite field multipliers,
prime cyclotomic modules, and operators modulo powers of two.
These estimates connect the generic theorem to the five linear
families. Uniform images, ideal decompositions, and additive order
layers provide the appropriate mass bounds in each algebraic setting,
including coefficient prefixes that end within a polynomial block.

\paragraph{Short pair search from public quotients.}
For a public relation $hf=g\pmod q$, Gaussian averages estimate
the mass in successive affine slices of the public lattice.
The algorithm uses these masses to choose coordinates until it reaches
a short pair passing the public algebraic checks.
A bound on the total Gaussian mass controls the probability of
following a valid pair's path. This framework gives the Falcon and
cyclic NTRU results, with the required geometry supplied by key
quality or by spectral and saturation estimates.

\paragraph{Applications to seven growing families.}
We instantiate the two frameworks for seven families associated with
NIST's third-round lattice candidates.
Kyber and Dilithium were subsequently standardized as ML-KEM and
ML-DSA, respectively~\cite{FIPS203,FIPS204}.
For the five linear families, polynomial moduli and polylogarithmic
coefficient scales give subexponential recovery. Joint secret and error
energy controls the linear recovery cost, while the Gaussian mass of the
public lattice controls the short pair search probability.

Constructing the Gaussian lists is the expensive step in both
frameworks. Their construction combines the Gaussian sampling tools
of~\cite{DEL25,HGH,ADS15} with the geometric estimates proved here.
We first define the seven growing families and describe the related
SIS-infinity result of DEL, then state the general recovery results
and their seven corollaries.

\subsection{The NIST Third-Round Lattice Families}
\label{ui:families}

Table~\ref{ui:family-table} lists the seven NIST third-round lattice
candidates~\cite{NISTIR8413} and their public relations, alongside the
SIS-infinity problem studied by DEL.
We define the growing families through their public operators,
coefficient distributions, and parameter growth. Their public relations
and coefficient profiles follow the cited specifications.
Our NTRU Prime application concerns its NTRU LPRime branch.
We use boldface for matrices and module or coefficient vectors, and
ordinary italic symbols for individual ring elements and coefficients.
The map $\operatorname{coeff}$ converts ring objects to coefficient
vectors. We specify integer representatives where needed.
Throughout, lengths are Euclidean lengths of coefficient vectors.
A polylogarithmic quantity is bounded by a fixed power of $\ln n$.
Parameters are local to each family. For NTRU LPRime, $p$ denotes
the ring degree. For ML-DSA, $d$ denotes the number of dropped
public key bits, as in FIPS~204.

\begin{table}[!htbp]
\centering
\papertablesetup
\setlength{\tabcolsep}{5pt}
\caption{Results and prior work associated with the seven NIST
third-round lattice candidates. Unrounded linear and quotient relations
are interpreted modulo $q$. Rounded public outputs use the conventions
of the corresponding specifications. For SABER, the output is modulo
$p$. Module and quotient relations use the schemes' ring notation.}
\label{ui:family-table}
\begin{tabularx}{\linewidth}{@{}P{.24\linewidth}Y P{.23\linewidth}>{\centering\arraybackslash}p{.14\linewidth}@{}}
\toprule
Family & Relation & Structure & Modulus type\\
\midrule
\multicolumn{4}{@{}l}{\textit{This work}}\\
\addlinespace[3pt]
FrodoKEM
& $\mathbf B=\mathbf A\mathbf S+\mathbf E$
& Unstructured matrix
& Power of two\\
\addlinespace[3pt]
Kyber/ML-KEM
& $\mathbf t=\mathbf A\mathbf s+\mathbf e$
& Cyclotomic module
& Prime\\
\addlinespace[3pt]
SABER
& $\mathbf b=\operatorname{Round}_{q\to p}(\mathbf A^T\mathbf s)$
& Cyclotomic module
& Power of two\\
\addlinespace[3pt]
NTRU LPRime
& $A=\operatorname{Round}(aG)$
& Extension field
& Prime\\
\addlinespace[3pt]
Dilithium/ML-DSA
& $\mathbf A\mathbf s_1+\mathbf s_2=2^d\mathbf t_1+\mathbf t_0$
& Cyclotomic module
& Prime\\
\addlinespace[3pt]
Falcon
& $h=gf^{-1}$
& Cyclotomic ring
& Prime\\
\addlinespace[3pt]
NTRU-HPS/HRSS
& $hf=3g$
& Cyclic ring
& Power of two\\
\midrule
\multicolumn{4}{@{}l}{\textit{Prior work}~\cite[Theorem~4.1]{DEL25}\textsuperscript{1}}\\
\addlinespace[3pt]
Dilithium/ML-DSA
& Scalar $\mathrm{SIS}^{\infty}$
& Unstructured matrix
& Prime\\
\bottomrule
\end{tabularx}
\par\smallskip
\begin{minipage}{\linewidth}
\footnotesize
\textsuperscript{1}DEL25 gives a heuristic cost analysis for Dilithium
in Appendix~A.1.
\end{minipage}
\end{table}

For the five linear families, public matrices or multipliers are
sampled uniformly in the algebraic spaces specified below. Their
relations have the form
$\mathbf y=\mathbf M\mathbf s+\mathbf e\pmod q$, where $n$ counts
the coefficients of the integer secret. Module ranks are fixed.
For Kyber/ML-KEM and SABER, the degree $d$ grows through powers of
two and $n=kd$. For ML-DSA, the degree $m$ grows through powers of
two and $n=\ell m$ counts the coefficients of
$\mathbf s_1$. For quotient relations, $m$ is the number of coefficients
in each ring element: $m=n$ for Falcon and $m=N$ for cyclic NTRU.
We choose all public parameters and energy bounds
before drawing the public operator.
The coefficient support bounds below may depend on $n$, and constant
bounds are included.

\paragraph{FrodoKEM.}
\label{ui:frodo-family}
For the FrodoKEM relation~\cite{FrodoSpec20}, let $\mathbf A$ be uniform in
$(\mathbb Z/q\mathbb Z)^{n\times n}$, and take
\[
 q=2^{\lceil\kappa\log_2 n\rceil},\qquad \kappa>1/2.
\]
The public matrix $\mathbf B=\mathbf A\mathbf S+\mathbf E$ contains
polynomially many prescribed columns. Fix a polylogarithmic $V$ such
that, with probability $1-o(1)$, all prescribed columns simultaneously
have joint secret and error energy at most $nV$.

If the secret and error coefficients are bounded in absolute value by
public positive integers $H_s,H_e=(\ln n)^{O(1)}$, respectively, take
\[
 B=H_s,\qquad V=\max\{1,H_s^2+H_e^2\}.
\]
These bounds hold deterministically for every prescribed column.
The finite sampling tables in FrodoKEM provide such support bounds.
All prescribed columns share the same public matrix.

\paragraph{Kyber/ML-KEM.}
\label{ui:kyber-family}
For Kyber/ML-KEM~\cite{KyberSpec21,FIPS203}, let
\[
 R=\mathbb Z[X]/(X^d+1),\qquad R_q=R/qR,\qquad
 \mathbf A\leftarrow\operatorname{Unif}(R_q^{k\times k}),\qquad
 \mathbf t=\mathbf A\mathbf s_R+\mathbf e_R\pmod q.
\]
Here $\mathbf s_R,\mathbf e_R\in R^k$ are integer module vectors and
$\mathbf t\in R_q^k$. With $\mathbf M$ the coefficient matrix of
$\mathbf A$, set $\mathbf s=\operatorname{coeff}(\mathbf s_R)$,
$\mathbf e=\operatorname{coeff}(\mathbf e_R)$, and
$\mathbf y=\operatorname{lift}_q^+(\operatorname{coeff}(\mathbf t))$.
Take prime $q=n^{\kappa+o(1)}$ for fixed $\kappa\ge1$, with
constant splitting degree $\operatorname{ord}_{2d}(q)$.
Assume that a public polylogarithmic $V\ge1$ gives a joint secret and
error energy bound $nV$ with probability $1-o(1)$.
For centered binomial coefficients with public positive integer parameters
$\eta_s,\eta_e=(\ln n)^{O(1)}$, the respective supports are
$[-\eta_s,\eta_s]$ and $[-\eta_e,\eta_e]$. One may take
\[
 B=\eta_s,\qquad V=\max\{1,\eta_s^2+\eta_e^2\},
\]
giving a deterministic joint energy bound $nV$.
ML-KEM key generation uses $\eta_s=\eta_e=\eta_1$.
The parameter $\eta_2$ is used for encryption errors.
The $k$ ring equations give exactly $n$ coefficient observations.

\paragraph{SABER.}
\label{ui:saber-family}
For SABER~\cite{SaberSpec}, let $\mathbf A$ be uniform in $R_q^{k\times k}$, where
$R_q=(\mathbb Z/q\mathbb Z)[X]/(X^d+1)$.
Choose moduli $p\mid q$ that are powers of two, with
\[
 q=n^{\kappa+o(1)},\qquad \kappa>1/2,\qquad \Delta=q/p\ge2.
\]
The secret coefficient bound $\eta$ and the rounding step $\Delta$
are polylogarithmic. Let $\mathbf s_R\in(\mathbb Z[X]/(X^d+1))^k$,
$\mathbf s=\operatorname{coeff}(\mathbf s_R)$, and let $\mathbf M$
be the coefficient matrix of $\mathbf A^T$.
The public vector $\mathbf b\in R_p^k$ uses the biased rounding map
defined coefficientwise in~\eqref{ud:rnd:map}. Let
$\mathbf b^+=\operatorname{lift}_p^+(\operatorname{coeff}(\mathbf b))$
be the nonnegative integer coefficient lift. The public lift
$\mathbf y=\Delta\mathbf b^+$ satisfies
\[
 \mathbf y=\mathbf M\mathbf s+\mathbf e_{\rm rnd}\pmod q,\qquad
 \|\mathbf e_{\rm rnd}\|^2\le n\Delta^2/4.
\]
Thus $B=\eta$ and $V=\eta^2+\Delta^2/4$ are deterministic bounds.

\paragraph{NTRU LPRime.}
\label{ui:lprime-family}
Take supplied parameters $(p,q,w)$ satisfying the arithmetic conditions
of NTRU LPRime~\cite{NTRUPrimeSpec}, with primes $p,q$ and
$X^p-X-1$ irreducible over $\mathbb F_q$. Our growing family assumes
$q=\Theta(p)$ and uses recovery dimension $n=p$. Define
\[
 O=\mathbb Z[X]/(X^p-X-1),\qquad
 R_q=O/qO=\mathbb F_q[X]/(X^p-X-1).
\]
The public multiplier $G$ is uniform in $R_q$, a degree-$p$ extension
of the prime field $\mathbb F_q$. Let $a\in O$ be a ternary secret
of weight $w$, with $0\le w\le p$. The public product uses centered representatives and is
rounded coefficientwise to the nearest multiples of three.
Writing $A:=\operatorname{Round}(aG)\in O$ for the rounded ring
element and $\mathbf y:=\operatorname{coeff}(A)\in\mathbb Z^n$ for its integer
coefficient vector gives
\[
 \mathbf y=\mathbf M_G\mathbf a+\mathbf e_{\rm rnd}\pmod q,\qquad
 \|\mathbf a\|^2+\|\mathbf e_{\rm rnd}\|^2\le w+n.
\]
Here $\mathbf a=\operatorname{coeff}(a)$, $B=1$, and $V=1+w/n$.

\paragraph{Dilithium/ML-DSA.}
\label{ui:mldsa-family}
For ML-DSA~\cite[Algorithms~6 and~35]{FIPS204}, fix integers
$k\ge\ell\ge1$, let $m\to\infty$ through powers of two, and put
$n=\ell m$, $R=\mathbb Z[X]/(X^m+1)$, and $R_q=R/qR$.
Take prime $q\equiv1\pmod{2m}$ with $q\le n^{C_q}$ for fixed
$C_q$. Choose public integers $\eta\ge1$, $d\ge1$, with
$\eta,\Delta=2^d=(\ln n)^{O(1)}$, and draw
$\mathbf A\leftarrow\operatorname{Unif}(R_q^{k\times\ell})$.
Let $\mathbf s_1\in R^\ell$, $\mathbf s_2\in R^k$ have coefficient
absolute values at most $\eta$.
Set $\mathbf t=\mathbf A\mathbf s_1+\mathbf s_2\pmod q$ and apply
$\operatorname{Power2Round}_d$ coefficientwise to obtain
$\mathbf t_1,\mathbf t_0\in R^k$.
The public key retains $\mathbf t_1$ and discards $\mathbf t_0$, giving
\[
 \Delta\mathbf t_1=\mathbf A\mathbf s_1+\mathbf e_R\pmod q,
 \qquad \mathbf e_R=\mathbf s_2-\mathbf t_0,
 \qquad \|\mathbf e_R\|_\infty\le\eta+\Delta/2.
\]
The coefficient bounds give the parameters
\[
 B=\eta,\qquad V=\eta^2+(\eta+\Delta/2)^2.
\]

For both quotient families, the ring degree $m$ tends to infinity.
Their common modulus window is
\begin{equation}\label{ui:quotient-window}
 2m+1\le q\le m^{A_q},\qquad
 \ln(q/m)\le(\ln m)^{1-c},
 \qquad A_q>0,\quad 0<c<1,
\end{equation}
with fixed constants.

\paragraph{Falcon.}
\label{ui:falcon-family}
For Falcon~\cite{FalconSpec}, let $n\to\infty$ through powers of two,
put $R=\mathbb Z[x]/(x^n+1)$ and $R_q=R/qR$, and take prime
$q\equiv1\pmod{2n}$ in the window~\eqref{ui:quotient-window}
with $m=n$.
Fix a rational constant $C\ge1$ independent of $n$.
Falcon uses $C=117/100$.
The family consists of public keys $h\in R_q$ admitting
$f,g,F_0,G_0\in R$ with
\begin{equation}\label{ui:falcon-quality-family}
 \begin{gathered}
 f\bmod q\in R_q^\times,\qquad h=gf^{-1}\pmod q,\\
 fG_0-gF_0=q,\qquad \gamma_q(f,g)^2\le C^2q.
 \end{gathered}
\end{equation}
Here $\gamma_q$ is the key generation quality measure defined
in~\eqref{uf:falcon-quality}.
We call these public keys \emph{quality-qualified}.

\paragraph{NTRU-HPS/HRSS.}
\label{ui:ntru-family}
For cyclic NTRU-HPS/HRSS~\cite[Sections~1.3.2--1.3.3]{NTRUSpec},
let $N\to\infty$ along a supplied sequence of primes satisfying
\[
 \operatorname{ord}_N(2)=\operatorname{ord}_N(3)=N-1.
\]
Work in $R_N=\mathbb Z[x]/(x^N-1)$ and
$S_N=\mathbb Z[x]/(\Phi_N)$.
Both profiles use a power of two modulus $q=\Theta(N)$ with
$q\ge2N+1$. Their input blocks are independent.
The coefficient laws and optional HPS sorting variant are defined
in Definition~\ref{uf:ntru-proposal-laws}.

For HPS, put $w=q/8-2\le2N/3$.
Sample $f$ with independent ternary coefficients.
Sample $g$ uniformly from polynomials with exactly $w/2$ coefficients
of each sign and all remaining coefficients zero.
Both polynomials have degree at most $N-2$.
Thus $w=\Theta(N)$ while the coefficient sizes remain bounded.

For HRSS, take $q>8\sqrt2(N-1)$, for example the specification's
choice $q=2^{\lceil7/2+\log_2N\rceil}$.
Sample $f,g_0$ independently with the TernaryPlus sampler, using
blocks of length $N-1$, and set $g=(x-1)g_0$.

Condition each raw distribution on the image of $f$ in $S_N$ being
invertible modulo 2 and 3, and that of $g$ being invertible modulo 2.
The public representative satisfies $h(1)=0\pmod q$ and
$hf=3g\pmod q$ in $R_N$.
Setting $\widehat h=3^{-1}h\bmod q$ gives
$\widehat h f=g\pmod q$.

\paragraph{Prior work: scalar SIS-infinity.}
\label{ui:del-sis}
Ducas, Engelberts, and Loyer~\cite[Theorem~4.1]{DEL25} study
SIS-infinity for a uniformly random matrix over $\mathbb Z_q$:
\[
 \mathbf x\in\mathbb Z^m\setminus\{\mathbf0\},\qquad
 \mathbf A\mathbf x=\mathbf0\pmod q,\qquad
 \|\mathbf x\|_\infty\le\beta,
 \qquad \mathbf A\leftarrow\operatorname{Unif}(\mathbb Z_q^{n\times m}).
\]
It uses a polynomial prime modulus, $m=n+\omega(n/\ln\ln n)$,
and $q^{1-n/m}\ge6$, with
\[
 \beta=\frac qf\sqrt{\ln m},\qquad f>1,\qquad
 0<\varepsilon\le\frac1{mq^4},\qquad
 \frac qf\ge\sqrt{\ln(1/\varepsilon)}.
\]
For $m\le\operatorname{poly}(n)$, $q=n^{\Theta(1)}$, and admissible
polylogarithmic $f$ with $\varepsilon$ inverse polynomial in $n$, the expected
running time is $2^{O(n/\ln\ln n)}$ and success probability is
$1-n^{-\Omega(1)}$. They also develop the heuristic cost model
applied to Dilithium in Appendix~A.1.

\subsection{Low Energy Linear Recovery}

The linear recovery framework takes as input
\begin{equation}\label{ui:model}
 \mathbf y=\mathbf M\mathbf s+\mathbf e\pmod q,\qquad
 \mathbf M\in(\mathbb Z/q\mathbb Z)^{h\times n},\qquad 1\le h\le n,
\end{equation}
for an integer $q\ge2$, with public bounds
\[
 \|\mathbf s\|^2+\|\mathbf e\|^2\le nV,\qquad
 \|\mathbf s\|_\infty\le B.
\]
Let $p_E$ bound the probability that these inequalities fail.
The rational $V\ge1$ and integer
$1\le B\le\lfloor\sqrt{nV}\rfloor$ have polynomial bit length and are
fixed before the public draw. If an energy bound alone is given, take
$B=\lfloor\sqrt{nV}\rfloor$.

The candidate radius and effective depth are
\begin{equation}\label{ui:depth}
 R^2=nV+3B^2,\qquad U=R/\sqrt n,\qquad
 Q=\ln(q/\sqrt n)>0,\qquad D=\ln\frac{Q}{1+\ln U}.
\end{equation}
The radius $R$ bounds the error vector for every coordinate candidate.
The normalized radius $U$ measures the secret and error scale, while
$Q$ measures the modulus relative to $\sqrt n$.
At fixed modulus and coordinate bound, smaller joint energy gives a
smaller $U$, a larger depth $D$, and a lower recovery exponent.

Let $\mathbf M_m$ consist of the first $m$ columns of $\mathbf M$.
The public operator enters through the prefix lattices
\[
 \Lambda_m(\mathbf M)=
 \{(\mathbf u,\mathbf v)\in\mathbb Z^h\times\mathbb Z^m:
       \mathbf M_m^T\mathbf u+\mathbf v=0\pmod q\}.
\]
The sampling schedule fixes prefixes $m_i$, widths $t_i$, and a
tolerance $\varepsilon_0$ from the public parameters.
The required geometric event is
\[
 \eta_{\varepsilon_0}(\Lambda_{m_i}(\mathbf M))\le t_i/2
 \quad\text{for every scheduled prefix }i.
\]
The prefix and auxiliary smoothing bounds keep each Gaussian lifting
and pairing step close to its target distribution with controlled error.
The following is an informal form of
Theorem~\ref{ue:main}.

\begin{theorem}[Low energy linear recovery]
\label{ui:informal-recovery}
Assume $D\to\infty$ and $\ln q\le(\ln n)^{c_q}$ for a fixed $c_q$.
If the prescribed geometric event fails with probability at most
$\omega_n=o(1)$, a randomized algorithm recovers the original
integer secret from the given $h$ observations with probability at least
\[
 1-p_E-\omega_n-o(1),
\]
in expected time and space $2^{(1/2+o(1))n/D}$.
One Gaussian list serves all coordinate tests and all admissible
labels for the same public operator.
\end{theorem}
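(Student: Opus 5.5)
The plan has three parts: a coordinate score built from the low energy of $(\mathbf s,\mathbf e)$, a single Gaussian list that estimates every score at once, and a cost analysis whose choice of depth gives the exponent $(1/2+o(1))n/D$.

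\textbf{Coordinate scores.} Fix a coordinate $j\le n$ and a candidate $c\in[-B,B]$. For each candidate I form the comparison vector
\[
 \mathbf w_{j,c}=(\mathbf y-c\,\mathbf M\mathbf e_j,\ \mathbf s-c\,\mathbf e_j\ \text{restricted suitably}).
\]
More concretely, I work with the coset of the $q$-ary lattice that contains $(\mathbf e,\mathbf s)+(0,(s_j-c)\mathbf e_j)$. The squared norm of its short representative is
\[
 \|\mathbf s\|^2+\|\mathbf e\|^2-2cs_j+c^2\ \text{up to the common term}.
\]
As a function of $c$ this is $(c-s_j)^2$ plus an offset that does not depend on $c$. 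Consecutive candidates therefore differ by an exact odd-integer gap, and $s_j$ is the unique minimiser. The candidate radius $R^2=nV+3B^2$ bounds all these vectors.

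To measure the norm, I take the Gaussian (Fourier) score
\[
 \sum_{\mathbf x\in L}\cos(2\pi\langle\mathbf x,\mathbf w_{j,c}\rangle/q),
\]
averaged over a list $L$ of dual vectors of width about $t$. Its expectation is approximately $\exp(-\pi t^2\|\mathbf w\|^2/q^2)$. Ratios between candidates then isolate the gap factor $\exp(-\pi t^2(2k+1)/q^2)$. With $t\approx q/R$ and list size $|L|\gg\exp(2\pi t^2R^2/q^2)\cdot\mathrm{poly}$, a Chernoff bound plus a union bound over all $nB$ pairs $(j,c)$ makes every comparison correct. Because the score is monotone in $(c-s_j)^2$, I run binary search over $c$ instead of scanning all candidates. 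This is why one list serves every label and every coordinate.

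\textbf{Building the list.} I construct $L$ with the Wagner-style sampler of DEL, run through the scheduled prefix lattices $\Lambda_{m_i}(\mathbf M)$ with widths $t_i$. Each level combines pairs agreeing modulo $q$ on a block of about $n/D$ coordinates. The geometric event $\eta_{\varepsilon_0}(\Lambda_{m_i})\le t_i/2$ makes each lifting and pairing step $\varepsilon_0$-close to the discrete Gaussian. For the dependencies among outputs, I import the direct analysis of HGH for dependent samples: pairwise correlations of the cosine estimates stay bounded, so a Chebyshev or second-moment bound replaces independence.

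Depth choice: width multiplies by $\sqrt2$ per level. I start from width about $\sqrt n$, with polylogarithmic factors absorbed. After $\ell$ levels the width must be $\le q/U$ up to constants, which gives $\ell\approx\log_2(Q/(1+\ln U))$-type control. With $\ln q$ polylogarithmic, this yields the effective depth $D$ and blocks of size $n/D$. The list size per level is then $2^{(1/2+o(1))n/D}$: the factor $1/2$ comes from the $\sqrt2$ growth combined with the Gaussian density bound on $q^{n/D}$ buckets. The final list size needed for the score test, $\exp(O(t^2R^2/q^2))$, is subexponential and absorbed since $D\to\infty$.

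\textbf{Main obstacle and failure probability.} The hardest step is tuning the schedule so that the final width is small enough for the exact-gap test while the smoothing condition still holds at every prefix. Within the exponent $(1/2+o(1))n/D$ this tension is tight, and the statistical error $\varepsilon_0$ summed over all levels must remain $o(1)$ even though samples are dependent. I would first try to adapt HGH's auxiliary-lattice argument to the prefix lattices $\Lambda_m(\mathbf M)$ for $h\le n$ rows, checking that using no more than $n$ observations costs only the $3B^2$ term in $R$. The total failure probability is then $p_E$ (energy bound fails), plus $\omega_n$ (geometric event fails), plus $o(1)$ (sampler and test errors), which gives the claimed bound.
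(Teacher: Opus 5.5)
Your skeleton matches the paper's. The public label $(\mathbf y,c\mathbf e_j)$, equivalently $(\mathbf y-c\mathbf M\mathbf e_j,\mathbf 0)$, lies in the coset of $\mathcal D=q\Lambda^*$ containing $\mathbf E_{j,c}=(\mathbf e,c\mathbf e_j-\mathbf s)$. (The vector you wrote, $(\mathbf e,\mathbf s)+(\mathbf 0,(s_j-c)\mathbf e_j)$, has $j$th coordinate $2s_j-c$, but your norm formula is the right one for $\mathbf E_{j,c}$.) A periodic Gaussian score then reads off $\|\mathbf E_{j,c}\|$, adjacent comparisons drive the binary search, and the failure terms add as you say.

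The genuine gap is the quantitative balance behind $2^{(1/2+o(1))n/D}$, and neither final width you state works. At list width $t=q/f$ the score equals $\rho_f(\mathcal D-\mathbf E)/\rho_f(\mathcal D)$. It is uniformly close to $e^{-\pi\|\mathbf E\|^2/f^2}$ on the candidate ball only if $\rho_{\sqrt2\,f}(\mathcal D\setminus\{\mathbf 0\})\le\tau g/64$, which the paper obtains from the final prefix of the geometric event.

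\emph{Width $t\approx q/R$.} Here $f\approx\sqrt n\,U$, and Poisson summation gives $\rho_f(\mathcal D)\ge f^{h+n}/q^h$. This is exponentially large for $q=n^{\kappa+o(1)}$ with $\kappa<1$, so the approximation is lost. Moreover, $\ln f\approx\frac12\ln n$ collapses the sampler's usable depth $\ln(\ln(q/\sqrt{L_0})/(1+\ln f))$ to $O(1)$ for polynomial $q$ and polylogarithmic $U$, which forces $\ln N=\Omega(n)$.

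\emph{Width $t\approx q/U$.} Here $t^2R^2/q^2=n$, so the list you call subexponential has length $e^{\Theta(n)}$.

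The paper instead takes $f^2=KU^2D$. Then $\tau=e^{-\pi n/(KD)}$ and $g\approx\pi\tau/f^2$, so the score test needs $\ln N\ge(2\pi/K+o(1))n/D$. Meanwhile $f$ affects the sampler only through $\ln(1+\ln f)$, so the depth $W\sim D$ survives, and $K>4\pi/\ln2$ puts the demand below the available $\ln N=(\ln2/2+o(1))n/D$.

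The same balance fixes the tolerance, which you treat as merely $o(1)$. The score bias is $2\alpha$, so the conditional similarity parameter must satisfy $\alpha\le g/64$, an exponentially small quantity. Pairing multiplies this parameter by $4$ per level, hence by up to $q^4$ overall. One therefore needs $\varepsilon_0\approx\tau g/q^4$, i.e.\ $L_0=\Theta(n/D)$, and both prefix and auxiliary smoothing must hold at that tolerance.

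Your list construction also misses the mechanism that produces $D$. Pairing vectors that agree modulo $q$ on a block of $b$ coordinates needs about $q^b$ buckets, so each level adds only $b\lesssim\ln N/\ln q$ coordinates. That requires $\Theta(D\ln q)$ levels, and the $\sqrt2$ growth per level pushes the width to $q^{\Theta(D)}\gg q$. Nor is $D$ the number of levels: going from width about $\sqrt{L_0}$ to $q/f$ takes $r\approx2\log_2(q/(f\sqrt{L_0}))$ levels.

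The HGH auxiliary lattice, which you mention only as something to adapt, is the core of the construction. Lift each sample into $\Lambda_{m_i}+(\mathbf 0\oplus\Gamma_i)$ with $\Gamma_i=q\mathbb Z^{b_i}+(q/P)\mathbf z_i\mathbb Z$; its quotient by $\Lambda_{m_i}$ has only $P\approx N$ classes whatever $b_i$ is, and you pair within those classes. The lift needs $s_{i-1}\gtrsim\eta_{\varepsilon_0}(\Gamma_i)\approx qP^{-1/b_i}$, up to a $\sqrt{\ln(1/\varepsilon_0)/b_i}$ factor. So the capacity at threshold $t_i$ is $b_i\approx\ln N/x_i$, where $x_i=\ln(q/t_i)$ runs in an arithmetic progression of step $\ln2/2$. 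Summing gives $\sum_ib_i\approx2\log_2N\cdot\ln(x_1/x_r)\approx2\log_2N\cdot D$. This harmonic sum is the source of both $D$ and the factor $1/2$; the $\sqrt2$ enters only through the spacing, not through a density bound on $q^{n/D}$ buckets. You also need $b_{\max}=o(\ln N)$ so that ADS15 calls costing $2^{b_i+o(b_i)}$ keep the total cost at $N^{1+o(1)}$.

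Finally, the claim that one list serves all admissible labels needs a union bound over $n(2B+1)q^h$ score functions. Chernoff is unavailable for dependent entries, and a Chebyshev fallback would need $N\gtrsim q^h/g^2$. Strong conditional similarity instead gives the martingale Hoeffding tail $2e^{-Nu^2/2}$ around a mean shifted by at most $2\alpha$, so $N\gtrsim g^{-2}n\ln q$ suffices.
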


On the geometric event, recovery holds for all inputs within the energy
bound, including secrets and errors that depend on $\mathbf M$.
Thus the theorem covers both sampled errors and deterministic rounding.
For polynomial $q=n^{\kappa+o(1)}$ with fixed $\kappa>1/2$ and
polylogarithmic $U$, one has $D\sim\ln\ln n$. Write
\begin{equation}\label{ui:headline-complexity}
 T_{\rm lin}(n)=2^{(1/2+o(1))n/\ln\ln n}.
\end{equation}
The five linear families of Section~\ref{ui:families} satisfy these
conditions, and their geometric estimates give the following corollaries.

\begin{corollary}\label{ui:frodo-cor}
For the FrodoKEM family, one Gaussian list recovers all prescribed
secret columns from $\mathbf B=\mathbf A\mathbf S+\mathbf E$ with
probability $1-o(1)$, in expected time and space
$T_{\rm lin}(n)$. Here $n$ is the dimension of one column.
\end{corollary}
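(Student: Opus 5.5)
We will obtain Corollary~\ref{ui:frodo-cor} by applying Theorem~\ref{ui:informal-recovery} (formally Theorem~\ref{ue:main}) to each prescribed column. For a column $j$, take $\mathbf M=\mathbf A$ (so $h=n$), $\mathbf s=\mathbf S_{\cdot j}$, $\mathbf e=\mathbf E_{\cdot j}$, and $\mathbf y=\mathbf B_{\cdot j}$, with $B=H_s$ and $V=\max\{1,H_s^2+H_e^2\}$. These bounds hold deterministically, so $p_E=0$ in the support-bound case. In the general case, $p_E=o(1)$ is the probability that the joint energy bound fails simultaneously across columns. We will check the scale conditions first:
\[
R^2=nV+3B^2=nV(1+o(1)),\qquad U=(\ln n)^{O(1)},\qquad Q=\ln(q/\sqrt n)=(\kappa-\tfrac12+o(1))\ln n.
\]
Since $\kappa>1/2$, we get $Q=\Theta(\ln n)$ and $1+\ln U=O(\ln\ln n)$, so
\[
D=\ln\ln n-O(\ln\ln\ln n)\sim\ln\ln n\to\infty .
\]
We also have $\ln q=O(\ln n)$, which meets the $c_q$ condition. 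The exponent $2^{(1/2+o(1))n/D}$ therefore becomes $T_{\rm lin}(n)$.

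The substantive input is the geometric event $\eta_{\varepsilon_0}(\Lambda_{m_i}(\mathbf A))\le t_i/2$ for every scheduled prefix, for uniform $\mathbf A$ over $\mathbb Z/2^k$. Here we invoke the paper's prefix geometry for operators modulo powers of two (the additive order layer estimate). The plan is to bound the expected Gaussian mass $\mathbb E_{\mathbf A}\bigl[\rho_{1/s}(\Lambda_m(\mathbf A)^*\setminus\{0\})\bigr]$. Note that $\Lambda_m(\mathbf A)^*=\mathbb Z^{n+m}+\frac1q\{(\mathbf u,\mathbf A_m^T\mathbf u)\}$ up to the standard duality. We decompose nonzero $\mathbf u\in(\mathbb Z/q)^n$ by 2-adic valuation $2^j\mathbf u'$ with $\mathbf u'$ not all even. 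For such $\mathbf u$, the vector $\mathbf A_m^T\mathbf u$ is uniform on $(2^j\mathbb Z/q)^m$. This gives a sum over layers $j$ of terms of the form $2^{n(k-j)}\cdot(\text{Gaussian mass of }\tfrac{2^j}{q}\mathbb Z\text{ at width }s)^m$ up to factors. The prefixes $m_i$ and widths $t_i$ fixed by the schedule are exactly those making this sum at most $\varepsilon_0\,\omega_n$. Markov's inequality and a union bound over the $O(n)$ scheduled prefixes then give failure probability $\omega_n=o(1)$. This also handles prefixes with $m<n$, where the $\mathbf u$-part carries the mass. I expect this layer computation, in particular the top layers where $2^j$ is close to $q$ and the modulus is effectively small, to be the main obstacle. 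The first thing to try is to absorb those layers using $\kappa>1/2$, so that even the coarsest layer has $q/2^j$ large relative to $\sqrt n$ whenever the count $2^{n(k-j)}$ is large.

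Finally, the geometric event and the Gaussian list depend only on $\mathbf A$ and the public parameters $(n,q,V,B)$, which are shared by all columns. Theorem~\ref{ui:informal-recovery} states that one list serves all coordinate tests and all admissible labels, and that recovery holds on the geometric event for every input within the energy bound, including inputs depending on $\mathbf A$. So we build the list once, at cost $T_{\rm lin}(n)$. Each of the polynomially many columns is then treated as a new label $\mathbf y$ and decoded at cost $\operatorname{poly}(n)$ times the list size, which is absorbed into the $o(1)$ of the exponent. The failure events are the geometric event, the joint energy event, and the $o(1)$ internal failure of the algorithm. Only the last needs a union bound over columns. We will argue from the formal statement of Theorem~\ref{ue:main} that, conditional on a good list, its internal failure is the list-quality failure, which is common to all labels; failing that, we will boost each coordinate test by independent repetitions so that the per-column failure is $n^{-\omega(1)}$. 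Either way the total success probability is $1-o(1)$.
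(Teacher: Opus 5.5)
Your overall route is the paper's. The scale check gives $D\sim\ln\ln n$. One list is built for the common operator and certificate. Theorem~\ref{ur:shared-list} then supplies a single event covering every label in $(\mathbb Z/q\mathbb Z)^n$, and this is intersected with the simultaneous energy event. That is exactly Corollaries~\ref{ua:frodo-corollary} and~\ref{ua:frodo-headline}. Your repetition fallback is never needed, because the factor $q^h$ in \eqref{ur:list-demand} already pays for the union bound over all labels.

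The one step you sketch yourself, the prefix geometry, would fail as written: you have the dual parametrized backwards. For $\Lambda_m=\{(\mathbf u,\mathbf v):\mathbf A_m^T\mathbf u+\mathbf v\equiv 0\}$ one has
$\Lambda_m^*=\mathbb Z^{n+m}+q^{-1}\{(\mathbf A_m\mathbf z,\mathbf z):\mathbf z\in\mathbb Z^m\}$.
The $q^m$ dual cosets are indexed by the systematic label $\mathbf z\in(\mathbb Z/q\mathbb Z)^m$, consistent with $\det\Lambda_m^*=q^{-m}$, and the randomness sits in the $n$-dimensional image $\mathbf A_m\mathbf z$. Your set $\mathbb Z^{n+m}+q^{-1}\{(\mathbf u,\mathbf A_m^T\mathbf u)\}$ has index $q^n$ over $\mathbb Z^{n+m}$. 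Up to the sign of the second block it is $q^{-1}\Lambda_m$, not $\Lambda_m^*$. With your bookkeeping, the layer of full order contributes about $q^nT^{-(n+m)}$. At the early levels this is huge: there $m_i\ll n$ and $T_i$ is only of order $\sqrt{L_0}$, while $\ln q=(\kappa+o(1))\ln n$ with $\kappa>1/2$. So the claim that the schedule makes your sum small is false.

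With the correct parametrization, a label of additive order $Q$ has image uniform on $(Q^{-1}\mathbb Z/\mathbb Z)^n$. This gives the exact identity of Lemma~\ref{up:theta}:
$\sum_Q Q^{-n}\theta(Q/T)^n\bigl(\theta(Q/T)^m-\theta(Q/(2T))^m\bigr)$.
The layers you expected to be the obstacle are those of small additive order (your $2^j$ close to $q$). They are harmless because of the factor $Q^{-n}$. With $\theta\le e^{1/8}$ and $m\le n$, they sum to at most $e^{(n+m)/8}2^{-n}/(1-2^{-n})$. Here $\kappa>1/2$ plays no role; it enters only through $Q_n>0$ and $D\to\infty$.

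The genuine constraint comes from orders $Q>T$. These require the determinant threshold $T\ge 8q^{m/(n+m)}\varepsilon_0^{-1/(n+m)}$ in \eqref{up:smoothing-width}. The schedule meets it through the allocation bound $m_ix_i=O(\ln N\ln t_i)$ of \eqref{ue:block-bounds}, that is, $m_i\ln(q/t_i)=o(n\ln t_i)$, as checked in Proposition~\ref{up:prefix-interface}. Citing that proposition, which is Lemma~\ref{up:smoothing} with $h=n$ and $d=m_i$, closes your argument.

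A minor point: in the general energy case take $B=\lfloor\sqrt{nV}\rfloor$, so $R^2\le 4nV$ rather than $nV(1+o(1))$. The conclusion $D\sim\ln\ln n$ still follows from Remark~\ref{ue:substitution}.
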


\begin{corollary}\label{ui:kyber-cor}
For the Kyber/ML-KEM family, the original coefficient secret is recovered
from the $k$ ring equations $\mathbf t=\mathbf A\mathbf s_R+\mathbf e_R$
with probability $1-o(1)$, in expected time and space
$T_{\rm lin}(n)$.
\end{corollary}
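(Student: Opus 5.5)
The plan is to derive the corollary from Theorem~\ref{ui:informal-recovery} (in its formal version, Theorem~\ref{ue:main}), applied with $h=n$, $\mathbf M$ the $n\times n$ coefficient matrix of $\mathbf A$, $\mathbf s=\operatorname{coeff}(\mathbf s_R)$, $\mathbf e=\operatorname{coeff}(\mathbf e_R)$ and $\mathbf y=\operatorname{lift}_q^+(\operatorname{coeff}(\mathbf t))$. Three things have to be checked: the parameter conditions, the energy failure probability $p_E$, and the probability $\omega_n$ that the geometric event fails.

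\textbf{Parameters.} Coefficient multiplication in $R_q=(\mathbb Z/q)[X]/(X^d+1)$ is $q$-linear, so the $k$ ring equations become exactly $\mathbf y=\mathbf M\mathbf s+\mathbf e\pmod q$ with $n=kd$ rows. The family assumes $\|\mathbf s\|^2+\|\mathbf e\|^2\le nV$ with probability $1-o(1)$, which gives $p_E=o(1)$. In the centered binomial case the bound holds deterministically with $B=\eta_s$ and $V=\max\{1,\eta_s^2+\eta_e^2\}$. In either case $V$ and $B$ are polylogarithmic, so $U=\sqrt{V+3B^2/n}=(\ln n)^{O(1)}$ and $1+\ln U=O(\ln\ln n)$. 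Since $q=n^{\kappa+o(1)}$ with $\kappa\ge1$, we get $Q=\ln(q/\sqrt n)=(\kappa-1/2+o(1))\ln n$ and $\ln q\le(\ln n)^{1+o(1)}$, so any fixed $c_q>1$ works. Then
\[
D=\ln\frac{Q}{1+\ln U}=\ln\ln n-O(\ln\ln\ln n)\sim\ln\ln n\to\infty .
\]
The time and space bound $2^{(1/2+o(1))n/D}$ therefore equals $T_{\rm lin}(n)$.

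\textbf{Geometric event.} This is the main step. We must show that for a uniform $\mathbf A\in R_q^{k\times k}$, with probability $1-o(1)$, every scheduled prefix lattice satisfies $\eta_{\varepsilon_0}(\Lambda_{m_i}(\mathbf M))\le t_i/2$. The plan is to use the ideal decomposition. Since $\operatorname{ord}_{2d}(q)=f$ is constant, $X^d+1$ factors modulo $q$ into $d/f$ irreducible factors of degree $f$, and $R_q\cong\prod_j\mathbb F_{q^f}$.

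The prefix lattice is $q$-ary. We bound its smoothing parameter through the dual, that is, through the count of short vectors $(\mathbf u,\mathbf v)$ with $\mathbf M_m^T\mathbf u+\mathbf v\equiv0$, using a first-moment (union) bound over $\mathbf A$. For a fixed nonzero $\mathbf u\in R_q^k$, the map $\mathbf A\mapsto\mathbf A^T\mathbf u$ is uniform on the submodule $\prod_{j:\mathbf u_j\ne0}\mathbb F_{q^f}^k$. So $\Pr[\mathbf M_m^T\mathbf u\equiv-\mathbf v]$ is at most $q^{-f\cdot(\text{number of CRT components of }\mathbf u\text{ that are nonzero and also visible in the first }m\text{ coordinates})}$.

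The delicate case is a prefix $m$ that ends inside a polynomial block. Here I would use the fact that projecting a uniform element of an ideal $I\subseteq R_q$ with $|R_q/I|=q^{fr}$ onto its first $m'$ coefficients is uniform on an image of size at least $q^{\max(0,m'-fr)}$, because the kernel has dimension at most $fr$. This yields the mass bound
\[
\sum_{\mathbf u}\rho_{1/t}(\mathbf u)\,q^{-\dim}\le\varepsilon_0 .
\]
The sum over $\mathbf u$ is stratified by the support pattern of the ideal. There are at most $2^{d/f}$ patterns, which costs only a factor $2^{O(n)}$, and this is absorbed because $t_i$ is scheduled with slack (the widths satisfy $t_i\ge\sqrt n\,q^{(h)/m_i}\cdot\mathrm{polylog}$).

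A Markov bound over $\mathbf A$ together with a union bound over the polynomially many prefixes then gives $\omega_n=n^{-\Omega(1)}$, provided each expected excess mass is $o(\varepsilon_0/\mathrm{poly}(n))$. I expect the hardest part to be controlling the low-weight ideals, meaning $\mathbf u$ supported on few CRT components. For these the prefix truncation can lose up to $f$ coordinates per block. Since $f$ is constant, this loss is $O(1)$ per prefix, which is acceptable.

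\textbf{Conclusion.} With $p_E=o(1)$, $\omega_n=o(1)$, and $D\sim\ln\ln n$, Theorem~\ref{ue:main} recovers the original $\mathbf s$, and hence $\mathbf s_R$, with probability $1-o(1)$ in expected time and space $T_{\rm lin}(n)$.
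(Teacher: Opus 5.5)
Your reduction to Theorem~\ref{ue:main} and the parameter bookkeeping ($p_E=o(1)$, $U=(\ln n)^{O(1)}$, $D\sim\ln\ln n$) are fine and match Theorem~\ref{um:recovery} and Corollary~\ref{ua:kyber-headline}. The gap is in the geometric step, which is the whole content of Proposition~\ref{um:geometry}.

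First, you are bounding the wrong lattice. The vectors $(\mathbf u,\mathbf v)$ with $\mathbf M_m^T\mathbf u+\mathbf v\equiv\mathbf 0$ are points of $\Lambda_m$ itself. A first-moment bound on them controls $\lambda_1(\Lambda_m)$, not $\eta_{\varepsilon_0}(\Lambda_m)$. Smoothing needs the nonzero Gaussian mass of
\[
 \Lambda_m^*=\mathbb Z^{n+m}+q^{-1}(\mathbf M_m\mathbf z,\mathbf z),
\]
whose cosets are indexed by labels $\mathbf z\in\mathbb F_q^m$. Their images $\mathbf M_m\mathbf z=\operatorname{coeff}(\mathbf A\widetilde{\mathbf z})$ are uniform in $(g)^k$, where $g=\gcd(X^d+1,z_1,\ldots,z_k)$. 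For $m<n$ the lattices $\Lambda_m$ and $q\Lambda_m^*$ have determinants $q^m$ and $q^n$, so no symmetry lets you exchange them. In particular, a partial block enters on the label side, through the dimension $h_g$ of prefixes lying in $(g)$, not as a truncation of the image. Your displayed sum $\sum_{\mathbf u}\rho_{1/t}(\mathbf u)q^{-\dim}$ is not this dual mass.

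Second, and decisively, uniform-image averaging cannot control small image ideals, and this is not an $O(1)$ loss of truncated coordinates. Since $\varepsilon_0=4ne^{-L_0}$ with $L_0=\Theta(n/D)$, Markov's inequality needs an expected nonzero dual mass of about $e^{-3L_0}$, which is exponentially small. A label class whose ideal has $\mathbb F_q$-dimension $w$ gains only $e^{-(1-o(1))kw\ln T}$ from averaging, which is far too weak when $w<8L_0/(k\ln T)$.

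The missing idea is Lemma~\ref{um:ideal-norm}. By Smith normal form and Hadamard's inequality for the negacyclic multiplication matrix, every nonzero integer polynomial whose residue lies in a dimension-$w$ ideal has norm at least $q^{1-w/d}$. Hence every point of $\mathbb Z^m+\mathbf z/q$ has norm at least $q^{-w/d}=1-o(1)$ for such $w$. This gives decay $\exp(-\tfrac\pi2T^2q^{-2w/d})\le e^{-(32\pi-o(1))L_0}$, which beats the $q^{(k+1)w}$ count of ideals and prefixes.

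Your blanket factor $2^{d/f}$ over support patterns also cannot be absorbed. The schedule has no slack of the kind you cite, and small-$w$ strata decay only polynomially under averaging. The paper instead counts ideals dimension by dimension. For $w\ge8L_0/(k\ln T)$ it uses $\ln\binom dw\le w\ln(en\ln q/L_0)=o(w\ln T)$, which relies on the entropy condition $\ln(n\ln q/L_0)=o(\ln L_0)$.
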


\begin{corollary}\label{ui:saber-cor}
For the SABER family, the original secret is recovered from the $k$
rounded ring equations with probability $1-o(1)$, in expected time
and space $T_{\rm lin}(n)$.
\end{corollary}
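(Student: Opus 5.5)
The plan is to apply the low energy recovery theorem (Theorem~\ref{ui:informal-recovery}, formally Theorem~\ref{ue:main}) to the SABER relation. The input is put in the form \eqref{ui:model} with $h=n=kd$ observations. We take $\mathbf M$ to be the coefficient matrix of $\mathbf A^T$ and $\mathbf y=\Delta\mathbf b^+$. The error is $\mathbf e=\mathbf e_{\rm rnd}$, with $B=\eta$ and $V=\eta^2+\Delta^2/4$. Both bounds hold deterministically, so $p_E=0$.

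First we check the parameter hypotheses. Since $\eta$ and $\Delta$ are polylogarithmic, $U=\sqrt{V+3B^2/n}$ is polylogarithmic and $1+\ln U=O(\ln\ln n)$. Since $q=n^{\kappa+o(1)}$ with $\kappa>1/2$, we get $Q=(\kappa-1/2+o(1))\ln n$. Hence $D=\ln Q-\ln(1+\ln U)=\ln\ln n-O(\ln\ln\ln n)\sim\ln\ln n$, so $D\to\infty$. The bound $\ln q\le(\ln n)^{c_q}$ holds with $c_q=1$ (up to constants). This yields the running time $T_{\rm lin}(n)$.

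The remaining task, and the main obstacle, is the geometric event. We must show $\eta_{\varepsilon_0}(\Lambda_{m_i}(\mathbf M))\le t_i/2$ for every scheduled prefix, with failure probability $\omega_n=o(1)$ over uniform $\mathbf A\in R_q^{k\times k}$, where $q$ is a power of two. I would use the paper's prefix geometry for operators modulo powers of two, which decomposes additively into order layers $2^jR_q$.

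The standard route bounds the smoothing parameter through the dual lattice. Dual vectors of $\Lambda_m(\mathbf M)$ are, up to scaling by $1/q$, of the form $(\mathbf w,\mathbf M_m\mathbf w)$ modulo $q$ with $\mathbf w\in\mathbb Z_q^h$. So we need to bound the expected Gaussian mass
\[
\mathbb E_{\mathbf A}\sum_{\mathbf w\ne0}\rho_{q/t}\bigl(\mathbf w,\mathbf M_m\mathbf w\bigr).
\]
We stratify $\mathbf w$ by its $2$-adic valuation layer: if $\mathbf w\in 2^j R_q^k\setminus 2^{j+1}R_q^k$, then $\mathbf A^T\mathbf w$ is uniform on a subgroup of index governed by $2^j$, and we use this uniform-image estimate on each layer. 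Prefixes ending inside a polynomial block need the partial-coordinate version, since the projection of a uniform element of $2^jR_q$ onto its first coordinates is again uniform modulo $q/2^j$. Summing over layers gives expected mass at most $\varepsilon_0$ times a polynomial factor. Markov's inequality and a union bound over polynomially many prefixes then give $\omega_n=o(1)$.

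The delicate point is the low layers (large $j$). There $\mathbf w$ is divisible by a high power of two and the image has few values. These terms must be absorbed using the widths $t_i$ and the condition $\kappa>1/2$, which keeps $q/\sqrt n$ polynomially large. The theorem then recovers $\mathbf s=\operatorname{coeff}(\mathbf s_R)$ with probability $1-\omega_n-o(1)=1-o(1)$. Because recovery is uniform over all inputs on the geometric event, the deterministic, $\mathbf A$-dependent rounding error causes no difficulty.
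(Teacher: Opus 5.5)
Your reduction to Theorem~\ref{ue:main} is the paper's (Corollary~\ref{ud:recovery}): $h=n=kd$, $\mathbf y=\Delta\mathbf b^+$, the rounding residue as the error, $B=\eta$, $V=\eta^2+\Delta^2/4$, $p_E=0$, and $D\sim\ln\ln n$. Uniformity on the geometric event absorbs the $\mathbf A$-dependence of the rounding. The gap is in the prefix smoothing. You stratify labels by $2$-adic valuation, which is the FrodoKEM tool (Lemma~\ref{up:theta}) for scalar matrices. You then assert that for $\mathbf w\in2^jR_q^k\setminus2^{j+1}R_q^k$ the image $\mathbf A^T\mathbf w$ is uniform on a subgroup whose size is determined by $j$. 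This is false in $R_q=(\mathbb Z/2^{b_q}\mathbb Z)[x]/(x^d+1)$. Since $x^d+1\equiv(x+1)^d\pmod 2$, one has $2=u(1-x)^d$ with $u$ a unit (Lemma~\ref{ud:alg:chain}). The ideals therefore form the single chain $I_i=(1-x)^iR_q$. Each component of $\mathbf A^T\widetilde{\mathbf z}$ is uniform on $I_{jd+r}=2^j(1-x)^rR_q$ (Lemma~\ref{ud:alg:image}), where $0\le r<d$ is the least $(1-x)$-valuation among the components of $2^{-j}\widetilde{\mathbf z}\bmod 2$. For example, $(1+x)^{d-1}$ has $j=0$ but generates an ideal of index $2^{d-1}$. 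So within one $2$-adic layer the image size varies by a factor up to $2^{k(d-1)}$, and a per-layer uniform-image bound overstates the averaging.

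The fix is the paper's Theorem~\ref{ud:geo:smoothing}. Stratify by the full valuation $i=jd+r$, and bound the output factor by $2^{kr}Q^{-n}\theta(Q/T)^{n-kr}\theta(Q/(2T))^{kr}$ with $Q=2^{b_q-j}$. Count the labels in $I_i^k\cap V_m$ through the binary codes $(1+x)^r\mathbb F_2[x]/((1+x)^d)$. The partial last block enters this count of \emph{labels}, through $\max(0,t-r)$ free bits in the lowest layer. It does not enter by projecting the image, which is always the full length-$n$ vector $\mathbf M_m\mathbf z$ with $\mathbf z\in\mathbb Z_q^m$. For $Q\ge4$ the loss $2^{kr}\le2^n$ is absorbed by $Q^{-n}\le4^{-n}$ or by $T^{-n}$.

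For order two it is not absorbed. With $r=d-1$ each image component takes only two values. Averaging over $\mathbf A$ gives only a factor $2^{-k(d-r)}$, which is cancelled by the up to $2^{k(d-r)}$ labels of that valuation. The decay must come from the systematic part instead. Such a label is $(q/2)\mathbf w$ with $\mathbf w$ a nonzero binary vector, so every point of $\mathbb Z^m+\mathbf w/2$ has norm at least $1/2$. Splitting the Gaussian then gives $e^{-\pi T^2/8}\le e^{-8\pi L_0}$ from $T\ge8\sqrt{L_0}$; the condition $\kappa>1/2$ plays no role here.

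Finally, Markov's inequality needs an expected noninteger dual mass far below $\varepsilon_0$, not $\varepsilon_0$ times a polynomial. The paper obtains $e^{-3L_0}$ against $\varepsilon_0=4ne^{-L_0}$ after removing the deterministic integer part.
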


\begin{corollary}\label{ui:lprime-cor}
For the NTRU LPRime family, the original ternary secret $a$ is recovered
from $A=\operatorname{Round}(aG)$ with probability $1-o(1)$, in
expected time and space $T_{\rm lin}(n)$, where $n=p$.
\end{corollary}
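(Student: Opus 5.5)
We prove Corollary~\ref{ui:lprime-cor} by instantiating Theorem~\ref{ui:informal-recovery} (in its precise form, Theorem~\ref{ue:main}) with the NTRU LPRime data of Section~\ref{ui:families}. We take $h=n=p$, $\mathbf M=\mathbf M_G$ the multiplication-by-$G$ matrix on coefficient vectors of $R_q=\mathbb F_q[X]/(X^p-X-1)$, $\mathbf s=\mathbf a$, and $\mathbf e=\mathbf e_{\rm rnd}$. The plan has four steps.

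\textbf{Step 1: the model and its energy bounds.} Rounding the centered representative of $aG$ coefficientwise to the nearest multiple of three changes each coefficient by at most $1$. Hence $\mathbf y\equiv\mathbf M_G\mathbf a+\mathbf e_{\rm rnd}\pmod q$ with $\|\mathbf e_{\rm rnd}\|_\infty\le1$, so $\|\mathbf e_{\rm rnd}\|^2\le n$. Since $a$ is ternary of weight $w$, we have $\|\mathbf a\|^2=w$. This gives $V=1+w/n\in[1,2]$ and $B=1$ deterministically, so $p_E=0$.

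\textbf{Step 2: the depth.} From the definitions in~\eqref{ui:depth}, $R^2\le 2n+3$, so $U=O(1)$ and $1+\ln U=O(1)$. Since $q=\Theta(p)=\Theta(n)$, we get $Q=\ln(q/\sqrt n)=(\tfrac12+o(1))\ln n$. Therefore $D=\ln\ln n-O(1)\to\infty$, which gives $D\sim\ln\ln n$. The condition $\ln q\le(\ln n)^{c_q}$ holds trivially. The cost bound $2^{(1/2+o(1))n/D}$ then equals $T_{\rm lin}(n)$.

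\textbf{Step 3: the geometric event (the main obstacle).} We must show that, with probability $1-o(1)$ over uniform $G$, every scheduled prefix lattice $\Lambda_{m_i}(\mathbf M_G)$ satisfies $\eta_{\varepsilon_0}\le t_i/2$.

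1. Because $X^p-X-1$ is irreducible, $R_q=\mathbb F_{q^p}$. For $G\ne0$, multiplication by $G$ is a bijection, so $\mathbf M_G$ is invertible.
2. Restricting to the first $m$ coordinates, the map $\mathbf u\mapsto\mathbf M_m^T\mathbf u$ is surjective onto $\mathbb F_q^m$. Hence $\Lambda_m$ has determinant $q^m$ in dimension $n+m$.
3. For the smoothing bound we use the finite-field multiplier estimate from the prefix-geometry part (the one stated for extension fields). The usual route is a first-moment bound on the dual. Write
\[
\mathbb E_G\sum_{\mathbf w\in\Lambda_m^*\setminus\{0\}}\rho_{1/s}(\mathbf w)
\]
and note that for each fixed nonzero $(\mathbf u,\mathbf v)$ modulo $q$, the vector $G\cdot u$ is uniform in $\mathbb F_{q^p}$ when $u\ne0$. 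This is the key field-specific fact: pairwise uniformity comes from multiplication by a uniform element of a field.
4. Nonzero dual vectors of $\Lambda_m$ correspond to $q^{-1}(\mathbf x,\mathbf z)$ with $\mathbf z\equiv \mathbf M_m^T\mathbf x$. A given pair $(\mathbf x,\mathbf z)$ with $\mathbf x\not\equiv0$ satisfies this congruence with probability $q^{-m}$. The remaining pairs have $\mathbf x\equiv0$, hence $\mathbf z\equiv0$, and they contribute only via $q\mathbb Z^{n+m}$, which is negligible at the scheduled widths $t_i$.
5. Summing Gaussian mass, the expectation is at most $q^{-m}\rho_{q/s}(\mathbb Z^{n+m})+(\text{tail})$. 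The schedule's choice of $t_i$ is designed to make this at most $\varepsilon_0/\mathrm{poly}$.
6. Markov's inequality and a union bound over the polynomially many prefixes then give failure probability $\omega_n=o(1)$. The event $G=0$ has probability $q^{-p}$ and is absorbed into $\omega_n$.

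The delicate points are prefixes $m<n$ and whether the tolerance $\varepsilon_0$ survives the union bound. Since $q=\Theta(n)$ gives $\kappa=1$, which exceeds $1/2$, the schedule's widths should suffice.

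\textbf{Step 4: conclusion.} Theorem~\ref{ui:informal-recovery} recovers the integer vector $\mathbf a$, and hence $a$, with probability at least $1-0-\omega_n-o(1)=1-o(1)$ in expected time and space $T_{\rm lin}(n)$. The result holds uniformly over every ternary $a$, since it holds for all inputs on the geometric event.
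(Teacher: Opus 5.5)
Your Steps~1, 2 and~4 match the paper's argument (Corollary~\ref{ua:lprime-headline} via Corollary~\ref{un:recovery}): the certificate $V=1+w/n$, $B=1$ holds deterministically, $U=O(1)$, $D=\ln\ln n-O(1)$, and the conclusion follows from Theorem~\ref{ue:main}. The gap is in Step~3, which is the only nontrivial input.

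\textbf{The dual is parameterized incorrectly.} You describe $\Lambda_m^*$ as $q^{-1}\{(\mathbf x,\mathbf z):\mathbf z\equiv\mathbf M_m^T\mathbf x\}$. That set is $q^{-1}$ times a reflection of $\Lambda_m$ itself, and it has determinant $q^{-n}$. By your own item~2, however, $\det\Lambda_m^*=q^{-m}$, so the two agree only at $m=n$. The lattice $\Lambda_m$ is cut out by $\mathbf M_m^T\mathbf u+\mathbf v\equiv0$ with an identity block on $\mathbf v$. Its dual is therefore $\Lambda_m^*=\mathbb Z^{n+m}+q^{-1}(\mathbf M_m\mathbf z,\mathbf z)$ with $\mathbf z\in\mathbb Z^m$. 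The free label is the $m$-dimensional systematic part, and the $n$-dimensional part is determined by it.

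\textbf{Why the resulting bound fails.} With your parameterization, the first moment carries a factor $q^{-m}$ where it should carry $q^{-n}$. Your upper bound is then at least $q^{-m}(q/T)^{n+m}=q^nT^{-(n+m)}$, which is small only if $\ln T>\frac{n}{n+m}\ln q$. At the early scheduled levels, $m_i=o(n)$, $T_i=t_i/2=\Theta(\sqrt{L_0})=n^{1/2-o(1)}$, and $q=\Theta(n)$. There the quantity is $e^{\Theta(n\ln n)}$, so the claim that the schedule makes it at most $\varepsilon_0/\mathrm{poly}$ is false. The prefixes $m<n$ you flag as delicate are exactly where your argument breaks.

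\textbf{The repair, as in the paper.}
\begin{itemize}
\item Index the nonzero dual cosets by $\mathbf z\in\mathbb F_q^m\setminus\{0\}$. Let $z$ be the nonzero field element with coefficient vector $\mathbf E_m\mathbf z$. Multiplication by $z$ permutes $R_q$, so $\mathbf M_m\mathbf z=\operatorname{coeff}(Gz)$ is uniform in $\mathbb F_q^n$ (Lemma~\ref{un:alg:uniform}). The element held fixed must be this systematic label, not $G$ and not the $n$-dimensional part.
\item Lemma~\ref{ug:uniform-mass} then bounds the expected noninteger dual mass by $q^{-n}\theta(q/T)^n(\theta(q/T)^m-\theta(1/T)^m)\le e^{-H}$, where $H=n\ln T-m\ln(q/T)-(n+m)\ln\theta(1)$.
\item Getting $H\ge3L_0$ uniformly over the schedule does not follow from $\kappa>1/2$ alone. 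It needs the prefix-ratio property $\max_i m_i\ln(q/T_i)/(n\ln T_i)=o(1)$ in~\eqref{ue:geometry-scale}. That property comes from the allocation bound $m_ix_i=O(\ln N\ln t_i)$ in Lemma~\ref{ue:allocation}.
\item Corollary~\ref{ug:uniform-geometry} then gives failure probability at most $re^{-2L_0}/(2n)=o(1)$, with $\varepsilon_0=4ne^{-L_0}$. The condition $T_i\ge8\sqrt{L_0}$ handles the integer part $\theta(1/T_i)^{n+m_i}-1$.
\end{itemize}
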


\begin{corollary}\label{ui:mldsa-cor}
For the Dilithium/ML-DSA family, the original vector $\mathbf s_1$
is recovered from the compressed public key with probability $1-o(1)$,
in expected time and space $T_{\rm lin}(n)$, where $n=\ell m$.
\end{corollary}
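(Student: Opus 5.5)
The plan is to derive Corollary~\ref{ui:mldsa-cor} from Theorem~\ref{ui:informal-recovery} (formally Theorem~\ref{ue:main}). We apply it with $h=km\ge n=\ell m$ observations, or with only the first $\ell$ ring rows so that $h=n$ as the model~\eqref{ui:model} requires. Concretely, we take $\mathbf M$ to be the coefficient matrix of the first $\ell$ rows of $\mathbf A$ (an $\ell\times\ell$ block of ring elements). We take $\mathbf y=\operatorname{lift}_q^+(\operatorname{coeff}(\Delta\mathbf t_1))$ restricted to those rows, $\mathbf s=\operatorname{coeff}(\mathbf s_1)$, and $\mathbf e$ equal to the corresponding part of $\operatorname{coeff}(\mathbf e_R)$. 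The family definition then gives deterministic bounds $\|\mathbf s\|_\infty\le\eta=B$ and $\|\mathbf s\|^2+\|\mathbf e\|^2\le n(\eta^2+(\eta+\Delta/2)^2)=nV$, so $p_E=0$. Since the errors $\mathbf e_R=\mathbf s_2-\mathbf t_0$ depend on $\mathbf A$, we use the remark after the theorem that recovery holds on the geometric event for all inputs within the energy bound.

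Next we check the parameter hypotheses. We have $V,B$ polylogarithmic, so $U=\sqrt{V+3B^2/n}=(\ln n)^{O(1)}$ and $1+\ln U=O(\ln\ln n)$. Also $q\equiv1\pmod{2m}$ forces $q>2m\ge 2n/\ell$, and $q\le n^{C_q}$ gives $\ln q\le (\ln n)^{c_q}$. The difficulty is that $Q=\ln(q/\sqrt n)$ must be $\Theta(\ln n)$, which needs $q\ge n^{\kappa}$ with $\kappa>1/2$; this holds because $q>2m=2n/\ell$ with $\ell$ fixed, so $Q\ge \tfrac12\ln n-O(1)$. Hence $D=\ln Q-\ln(1+\ln U)=\ln\ln n-O(\ln\ln\ln n)\to\infty$ and $D\sim\ln\ln n$. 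The exponent $(1/2+o(1))n/D$ is then $(1/2+o(1))n/\ln\ln n$, which is $T_{\rm lin}(n)$.

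The main obstacle is the geometric event: showing that, with probability $1-o(1)$ over uniform $\mathbf A\in R_q^{k\times\ell}$, every scheduled prefix lattice $\Lambda_{m_i}(\mathbf M)$ has $\eta_{\varepsilon_0}\le t_i/2$. Since $q\equiv1\pmod{2m}$, $X^m+1$ splits completely, and I would invoke the prefix geometry estimate for prime cyclotomic modules from the contributions. The argument works through the CRT decomposition $R_q\cong\mathbb F_q^m$, under which $\mathbf A$ becomes $m$ independent uniform $\ell\times\ell$ matrices over $\mathbb F_q$. A bound on the smoothing parameter then follows from a first-moment count of short dual vectors. This count sums, over ideal supports, the probability that a short vector lies in the dual, which is $q^{-\text{rank}}$ per ideal component. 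Prefixes that end inside a polynomial block are handled by the paper's estimate for coefficient prefixes ending within a polynomial block. If $\ell<k$ makes using all rows preferable, one applies the same estimate with $h=km$; either way the failure probability $\omega_n=o(1)$ is supplied by that lemma.

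Combining the three steps, Theorem~\ref{ue:main} returns $\operatorname{coeff}(\mathbf s_1)$ with probability at least $1-0-\omega_n-o(1)=1-o(1)$ in expected time and space $T_{\rm lin}(n)$. Reassembling the coefficients gives $\mathbf s_1\in R^\ell$, which proves the corollary.
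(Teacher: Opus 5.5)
Your proposal follows the paper's proof of Theorem~\ref{umldsa:recovery}. You fix $\ell$ rows in advance so that $\mathbf A_J$ is a uniform square module matrix, and take the deterministic certificate $B=\eta$, $V=\eta^2+(\eta+\Delta/2)^2$ with the discarded bits in the error. You use $q\equiv1\pmod{2m}$ (splitting degree one) to invoke the prime-cyclotomic prefix geometry behind Theorem~\ref{um:recovery}, and check $D\sim\ln\ln n$ from $2m+1\le q\le n^{C_q}$.

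Drop the closing aside about using all $k$ rows with $h=km$. It lies outside Theorem~\ref{ue:main}, which requires $h\le n$, and outside Proposition~\ref{um:geometry}, which is proved only for square module operators. The row restriction is therefore what makes the cited estimates apply, not an optional choice.
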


The corollaries use the following geometric estimates.
FrodoKEM requires control over additive orders modulo a power of two.
Kyber/ML-KEM requires estimates
for proper ideals and prefixes that end inside a polynomial block.
SABER combines coefficient prefix analysis with the ideal structure
modulo a power of two. NTRU LPRime uses the uniform image of a
nonzero element under a random field multiplier.
For ML-DSA, fix $\ell$ rows before drawing $\mathbf A$ and restrict
the public relation to this square block. The secret and restricted
error each have $n=\ell m$ coefficients, giving joint energy at most
$nV$. With the discarded bits included in the error, the same module
estimate used for Kyber/ML-KEM applies.
Sections~\ref{up:section}--\ref{umldsa:section} prove these estimates and
give the complete application statements.

The method also applies to square ternary LWE over a uniform matrix
modulo a prime. This gives a direct comparison with May's
meet-in-the-middle attacks based on representations~\cite{May21}.
For ternary secret distributions with a fixed positive entropy rate,
May's heuristic cost estimates are exponential in $n$, whereas our recovery bound is
$T_{\rm lin}(n)=2^{o(n)}$. Both use the original $n$ equations in the
common square model. Section~\ref{ut:section} gives the precise
comparison and its parameter conditions.

\subsection{Short Pair Search from Public Quotients}

A quotient relation supplies a public lattice containing a short pair.
For a public $m\times m$ matrix $\mathbf H$, define
\[
 \Lambda_{\mathbf H}=
 \{(\mathbf u,\mathbf v)\in\mathbb Z^{2m}:
          \mathbf H\mathbf u=\mathbf v\pmod q\}.
\]
For a ring pair $(u,v)$ in either quotient family, the corresponding
coefficient vector is written
$\operatorname{coeff}(u,v)=(\mathbf u,\mathbf v)\in\Lambda_{\mathbf H_h}$,
where $\mathbf H_h$ is the coefficient matrix of multiplication by $h$.
The cyclic NTRU application instead uses $\mathbf H_{\widehat h}$.
The target is a short nonzero vector that
passes a public algebraic test.
Such search problems, specified by a ring, a public distribution, and a
length bound, have been studied as NTRU problems~\cite{PS21} and through
related module problems of rank two~\cite{FPS22}.

The algorithm chooses coordinates using Gaussian masses.
Fixing some coordinates of $\mathbf u$ leaves an affine slice of the
public lattice. Each value for the next coordinate defines a smaller
slice. The algorithm samples that value with weight proportional to
the slice's Gaussian mass, which gives greater weight to shorter
completions.

Poisson summation expresses each slice mass through a Gaussian average
on the dual of its direction lattice. Reusable dual Gaussian lists
estimate these averages throughout the search. Once $\mathbf u$ is
fixed, the centered representative of $\mathbf H\mathbf u\bmod q$
completes $\mathbf v$. Exact norm and algebraic checks select valid
short pairs.

A public bound on the lattice's total Gaussian mass gives a lower
bound on the probability of reaching each fixed valid short pair.
Repeating the search then gives high success probability.

Use the modulus window~\eqref{ui:quotient-window}, and fix the finite
schedule with
\[
 \begin{gathered}
 D_{\rm Q}\sim\ln\ln m,\qquad s_{\rm mass}^2\sim q/m^\beta,\\
 s_{\rm search}^2=c_{\rm search}qD_{\rm Q}/m,
 \qquad 1/2<\beta<1,\quad c_{\rm search}>0.
 \end{gathered}
\]
Here $s_{\rm mass}$ is the Gaussian mass width and $s_{\rm search}$
is the search width, both in the Gaussian width convention used below.
The public input supplies a rational mass bound $A_{\rm mass}$ satisfying
\[
 \begin{gathered}
 \rho_{s_{\rm mass}}(\Lambda_{\mathbf H})\le A_{\rm mass},\qquad
 \ln A_{\rm mass}=O(m/D_{\rm Q}),\\
 s_{\rm mass}^2\ge8s_{\rm search}^2.
 \end{gathered}
\]
Here $\rho_s(L)=\sum_{\mathbf z\in L}e^{-\pi\|\mathbf z\|^2/s^2}$.
A public radius $1\le B_{\rm Q}=O(\sqrt q)<q/2$ specifies the target
length. The two families take $B_{\rm Q}^2=C_{\rm len}^2q$ for a
fixed rational constant $C_{\rm len}$ depending on the family.
Numerical bounds have encodings of polynomial length.
The next statement is an informal form of
Theorem~\ref{uq:mass-search-framework}.

\begin{theorem}[Verifiable short pair search]
\label{ui:informal-quotients}
Suppose a polynomial-time verifier accepts some nonzero vector in
$\Lambda_{\mathbf H}$ of norm at most $B_{\rm Q}$.
Under the public mass bound above, a randomized algorithm finds
such a vector with probability at least $1-m^{-K}$ for every fixed
$K>0$, in time and space $2^{O(m/\ln\ln m)}$.
\end{theorem}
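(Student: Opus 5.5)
We would prove Theorem~\ref{ui:informal-quotients} by analysing one run of the slice-mass search and then amplifying by repetition. Fix a valid target $(\mathbf u^*,\mathbf v^*)\in\Lambda_{\mathbf H}$ of norm at most $B_{\rm Q}$ that the verifier accepts.

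A single run chooses the coordinates $u_1,\dots,u_m$ one at a time. At step $j$ the prefix $(u_1,\dots,u_{j-1})$ is already fixed, and the run samples $u_j=a$ with probability proportional to the Gaussian mass $\rho_{s_{\rm search}}(S_{j,a})$. Here $S_{j,a}$ is the affine slice of $\Lambda_{\mathbf H}$ with the first $j$ coordinates of $\mathbf u$ fixed. The vector $\mathbf v$ is then completed by the centered representative of $\mathbf H\mathbf u\bmod q$.

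\textbf{Telescoping bound (exact masses).} Suppose first that the masses are computed exactly. The probability of following the path of $(\mathbf u^*,\mathbf v^*)$ is a product of ratios $\rho(S_{j,u_j^*})/\rho(S_{j-1})$. This product telescopes to
\[
 \frac{\rho_{s_{\rm search}}(S_{m})}{\rho_{s_{\rm search}}(\Lambda_{\mathbf H})}.
\]
The final slice $S_m$ contains $(\mathbf u^*,\mathbf v^*)$, so its mass is at least $e^{-\pi B_{\rm Q}^2/s_{\rm search}^2}=e^{-\pi C_{\rm len}^2 m/(c_{\rm search}D_{\rm Q})}$. The denominator is at most $\rho_{s_{\rm mass}}(\Lambda_{\mathbf H})\le A_{\rm mass}$, since $s_{\rm search}\le s_{\rm mass}$. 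Because $\ln A_{\rm mass}=O(m/D_{\rm Q})$, the success probability of one run is $2^{-O(m/\ln\ln m)}$.

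\textbf{Amplification and output check.} Repeating the run $2^{O(m/\ln\ln m)}\cdot K\ln m$ times gives failure probability at most $m^{-K}$. Each output is checked by exact norm computation and the verifier, so no false acceptance is possible.

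\textbf{Estimating the masses.} The masses are not computed exactly, and handling this is the main technical work. By Poisson summation, the mass of an affine slice $\mathbf c+L$ equals $\det(L)^{-1}s^{\dim L}$ times the average of $e^{2\pi i\langle \mathbf w,\mathbf c\rangle}$ over $\mathbf w$ drawn from the Gaussian of width $1/s$ on $L^*$. Each direction lattice $L$ is a prefix-type lattice determined by $\mathbf H$. The plan is to build, once per prefix length, a list of dual Gaussian samples using the Wagner-style sampler of~\cite{DEL25,HGH}. This costs time and space $2^{O(m/\ln\ln m)}$ with depth $D_{\rm Q}\sim\ln\ln m$. The schedule $s_{\rm mass}^2\sim q/m^\beta$ with $\beta\in(1/2,1)$, together with the window~\eqref{ui:quotient-window}, is what lets the sampler's smoothing conditions be met. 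We then estimate each slice mass as an empirical average over this list.

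\textbf{Main obstacle: controlling estimation error along the path.} We must show that the sampled path probabilities are within a factor $e^{O(m/D_{\rm Q})}$ of the exact ones. Two problems arise.
\begin{enumerate}
\item A relative error of even $1/m$ per step is not enough, because slices on the target path may carry exponentially small mass relative to their siblings. The remedy is to compare masses at the larger width $s_{\rm mass}$, using $s_{\rm mass}^2\ge 8s_{\rm search}^2$. The target's slices then have mass at least $e^{-\pi B_{\rm Q}^2/s_{\rm mass}^2}$, which is close to one since $B_{\rm Q}^2/s_{\rm mass}^2=O(m^\beta)$. The ratio of search mass to mass width is handled by a Gaussian weight factor that telescopes exactly.
\item The Fourier estimates carry additive errors. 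These must be kept below the relevant masses by a list of size $2^{O(m/D_{\rm Q})}$, using Hoeffding-type bounds despite the dependencies among the samples, as in~\cite{HGH}.
\end{enumerate}
We expect the hardest step to be showing that the target path's slices stay above the additive error floor at every level, uniformly over $j$. We would first attempt this by a per-level union bound combined with a multiplicative error accounting: each step loses at most a factor $e^{O(1/D_{\rm Q})}$, and over $m$ steps this loss is absorbed into $2^{O(m/\ln\ln m)}$.
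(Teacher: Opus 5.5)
Your outer architecture matches the paper: the path probability telescopes to $\rho_{s_{\rm search}}(\mathbf z)/\rho_{s_{\rm search}}(\Lambda_{\mathbf H})\ge p_B$, slice weights are Poisson estimates from reusable dual Gaussian lists, and you finish with repetition and exact verification. However, the two steps you leave open carry the proof's real ideas, and what you propose for them does not work.

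\textbf{First gap: building the lists.} You assert that the schedule $s_{\rm mass}^2\sim q/m^\beta$ and the modulus window make the DEL/HGH smoothing conditions hold for the direction lattices. They do not; they govern only the block capacity, and hence the cost $2^{O(m/D_{\rm Q})}$. Here $\mathbf H$ is fixed, so there is no public randomness to average over as in the linear families. Take $K_S=\ker_q[\mathbf H_S^T\mid\mathbf I_{|S|}]$, whose scaled dual $qK_S^*$ is isometric to the prefix lattice $L_S$, and $L_S$ zero-pads into $\Lambda_{\mathbf H}$. The mass hypothesis then gives only $\rho_{s_{\rm mass}/q}(K_S^*\setminus\{\mathbf 0\})=\rho_{s_{\rm mass}}(L_S\setminus\{\mathbf 0\})\le A_{\rm mass}-1$. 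This may be $e^{\Theta(m/D_{\rm Q})}$, whereas lifting and pairing need it below a tolerance $e^{-\Omega(m/D_{\rm Q})}$, because the list similarity must be at most $\xi/16$.

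The missing idea is the persistent random mask of Theorem~\ref{uf:persistent-theorem}. Enlarge every $K_S$ by the single vector $\boldsymbol\gamma_S=(q/P_0)(\mathbf a_0,(\mathbf b_0)_S)$, where $\mathbf a_0\ne\mathbf 0$ and $\mathbf b_0$ are random over $\mathbb F_{P_0}$ for a prime $P_0=e^{O(m/D_{\rm Q})}$. Each relevant nonzero dual vector survives this mask with probability at most $1/P_0$. Markov's inequality then gives $\eta_{\varepsilon_\star}(\widetilde K_{[m]})\le q/s_{\rm mass}$, and since $\widetilde K_S^*$ embeds in $\widetilde K_{[m]}^*$, this one event covers every coordinate set. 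The lists are built in $\widetilde K_S$. The mask is removed by differences of predetermined pairs, which requires smoothing only of the superlattice $\widetilde K_S$ (Lemma~\ref{uf:fixed-pair-removal}), never of $K_S$ itself.

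\textbf{Second gap: error accounting.} Your remedy fails. Since $B_{\rm Q}^2/s_{\rm mass}^2=\Theta(m^\beta)\to\infty$, the quantity $e^{-\pi B_{\rm Q}^2/s_{\rm mass}^2}$ is not close to one. Moreover, no exact telescoping factor converts affine-slice masses between two widths. The difficulty you identify is also not real.

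The correct Poisson identity is $\rho_s(\mathbf c+L)=\rho_s(L)\,\mathbb E_{\mathbf w\sim D_{L^*,1/s}}[\cos(2\pi\langle\mathbf w,\mathbf c\rangle)]$ for $\mathbf c$ in the span of $L$; your version omits $\rho_{1/s}(L^*)$. An extra factor $e^{-\pi\|\mathbf c_\perp\|^2/s^2}$ accounts for the orthogonal part. So the list estimates the normalized weight $\rho_{s_{\rm search}}(A)/\rho_{s_{\rm search}}(L_S)$, whose denominator cancels among siblings. Every slice on the witness path contains $\mathbf z$, and $L_S$ embeds in $\Lambda_{\mathbf H}$. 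Hence each such normalized weight is at least $p_B=A_{\rm mass}^{-1}e^{-\pi B_{\rm Q}^2/s_{\rm search}^2}$, uniformly in the level.

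Truncate candidates to $[-H,H]$ with $H=\lfloor B_{\rm Q}\rfloor$; this only increases the ideal transition probability of the correct child. Then additive accuracy $\xi=p_B/[200m(2H+1)]$ distorts each transition by at most the factor $(1-\xi/p_B)/(1+(2H+1)\xi/p_B)=1-O(1/m)$. Over $m$ levels this is only a constant, however heavy the siblings are. This accuracy costs $N=p_B^{-2}\operatorname{poly}(m)$ samples.

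Because queries are chosen adaptively from a reused list, concentration must hold simultaneously for all of them. Since $L_S^*\subseteq q^{-1}\mathbb Z^{m+|S|}$, the phases depend only on the shift modulo $q$. So Lemma~\ref{up:concentration} with a union bound over at most $m^2q^{2m}$ phases suffices, as in Theorem~\ref{uf:candidate-table}.
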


The algorithm uses the public matrix, length and mass bounds, and verifier.
For Falcon, the key generation quality condition supplies the Gaussian
mass bound for every quality-qualified public key.
For cyclic NTRU, spectral estimates and a bound on the logarithmic
saturation index give the required mass bound with probability $1-o(1)$
over the accepted HPS and HRSS distributions.
These bounds and the algebraic checks give the following results.

\begin{corollary}\label{ui:falcon-cor}
For the Falcon family, the algorithm outputs
$u,v,F,G\in R$ such that
\[
 \begin{gathered}
 u\bmod q\in R_q^\times,\qquad vu^{-1}=h\bmod q,\\
 uG-vF=q,\qquad \gamma_q(u,v)^2\le C^2q,
 \end{gathered}
\]
with probability at least $1-n^{-K}$ for every fixed $K>0$ and
every quality-qualified public key, over its internal randomness.
Its time and space are $2^{O(n/\ln\ln n)}$.
\end{corollary}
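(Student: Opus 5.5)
Corollary~\ref{ui:falcon-cor} will follow from Theorem~\ref{ui:informal-quotients} (formally Theorem~\ref{uq:mass-search-framework}) with $m=n$, $\mathbf H=\mathbf H_h$, $B_{\rm Q}^2=C_{\rm len}^2q$, and a verifier that reconstructs a full Falcon key. This requires three things. First, a public mass bound $A_{\rm mass}$ with $\rho_{s_{\rm mass}}(\Lambda_{\mathbf H_h})\le A_{\rm mass}$ and $\ln A_{\rm mass}=O(n/D_{\rm Q})$, valid for every quality-qualified $h$. Second, a choice of $C_{\rm len}$ such that the secret pair $(f,g)$ is an accepted vector of norm at most $B_{\rm Q}$. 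Third, a polynomial-time verifier. It must accept a short $(\mathbf u,\mathbf v)$ exactly when there exist $F,G$ with $uG-vF=q$, $u$ invertible modulo $q$, $vu^{-1}=h$, and $\gamma_q(u,v)^2\le C^2q$, and it must then output such $F,G$.

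\textbf{Verifier and length bound.} The quality measure $\gamma_q(f,g)$ dominates $\|(f,g)\|$, since it is the maximum of that norm and the norm of the Gram--Schmidt vector $q(\bar f,\bar g)/(f\bar f+g\bar g)$. So the condition $\gamma_q^2\le C^2q$ gives $\|(\mathbf f,\mathbf g)\|^2\le C^2q$, and we may take $C_{\rm len}=C$. Since $C\ge1$ and $q$ grows, $B_{\rm Q}=O(\sqrt q)<q/2$ holds. The verifier takes a candidate $(\mathbf u,\mathbf v)$ and proceeds as follows.
\begin{itemize}
\item It checks $u\bmod q\in R_q^\times$ via the NTT, which exists because $q\equiv1\pmod{2n}$.
\item It checks $\gamma_q(u,v)^2\le C^2q$ in exact rational arithmetic, using FFT embeddings with controlled precision or resultants.
\item It runs the NTRUSolve tower of norms and lifts (Pornin--Prest) to find $F,G$ with $uG-vF=q$, or rejects if the resultant gcd condition fails.
\end{itemize}
The relation $vu^{-1}=h$ is automatic inside $\Lambda_{\mathbf H_h}$. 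Because $(f,g)$ itself passes, with $F_0,G_0$ as witnesses, NTRUSolve succeeds on it. For any accepted output, NTRUSolve is deterministic polynomial time and returns valid $F,G$. I will need to confirm that its bit complexity is polynomial for inputs of norm $O(\sqrt q)$, which is standard.

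\textbf{Mass bound: the main obstacle.} I plan to bound $\rho_s(\Lambda_{\mathbf H_h})$ using the basis $\mathbf B=\begin{pmatrix}f&F_0\\ g&G_0\end{pmatrix}$, whose columns generate $\Lambda_{\mathbf H_h}$ because the determinant is $q$. This lattice is the direct sum over the $n/2$ pairs of conjugate complex embeddings of rank-2 (complex) pieces. Gram--Schmidt orthogonalization over $K_{\mathbb R}$ gives first vectors $b_1=(f,g)$ and $b_2^*$. At each embedding $\sigma$, their squared lengths are $|\sigma f|^2+|\sigma g|^2$ and $q^2/(|\sigma f|^2+|\sigma g|^2)$, both at most $\gamma_q^2\le C^2q$. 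A product-over-Gram--Schmidt bound will then control the smoothing sum. The one-dimensional form is $\rho_s(L)\le\prod_i\max(1,\,c\,s/\|b_i^*\|)$ for Gram--Schmidt vectors $b_i^*$. I would prove it by induction on the rank, projecting and summing Gaussians over the fibers. Here we use the real $2n$-dimensional Gram--Schmidt basis of the module basis, each of whose vectors has length at least $q/(C\sqrt q)=\sqrt q/C$. This yields $\rho_s\le(1+c\,sC/\sqrt q)^{2n}$. With $s_{\rm mass}^2\sim q/n^\beta$ the factor is $1+O(C n^{-\beta/2})$, so $\ln\rho\le O(n^{1-\beta/2})=O(n/D_{\rm Q})$, which is more than enough.

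The delicate point is that the per-vector Gram--Schmidt bound must hold coordinatewise in the real $2n$-dimensional sense. This is where the structure is used: the real Gram--Schmidt of the coefficient embedding of the block basis, taken in the order $f$-multiples then $F_0$-multiples, is governed by the embedding-wise quantities above. The first block's Gram--Schmidt lengths are bounded below by $\min_\sigma$ over the same factor. The cleaner route, and the one I will try first, avoids Gram--Schmidt entirely. Every nonzero lattice vector $z=a b_1+b b_2$ has $\|z\|\ge\sqrt q/C$, since the dual of $\Lambda_{\mathbf H_h}$ is $q^{-1}$ times a rotated copy with the same quality. Then the standard packing bound gives $\rho_s(L)\le(1+O(s/\lambda_1))^{2n}$ in the regime $s\ll\lambda_1$. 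I will choose whichever of the two arguments gives the explicit rational $A_{\rm mass}$ computable from public $n,q,C$. The corollary then follows directly from the theorem, with success probability $1-n^{-K}$ and cost $2^{O(n/\ln\ln n)}$.
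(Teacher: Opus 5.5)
Your reduction to Theorem~\ref{uq:mass-search-framework}, with $B_{\rm Q}=C\sqrt q$ and an exact algebraic and quality verifier, is the paper's. The problem is the step you yourself flag as the main obstacle: neither of your routes proves the mass bound.

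\textbf{The Gram--Schmidt route.} You claim that at every embedding both $G_\sigma=|\sigma f|^2+|\sigma g|^2$ and $q^2/G_\sigma$ are at most $\gamma_q^2$. This is false. $\gamma_q$ is defined with coefficient norms, so by Parseval $\gamma_q(f,g)^2\le C^2q$ only bounds the \emph{averages} $\frac1n\sum_\sigma G_\sigma$ and $\frac1n\sum_\sigma q^2/G_\sigma$ by $C^2q$. It permits individual $G_\sigma$ of order $q/(nC^2)$. Two further problems follow:
\begin{itemize}
\item $\Lambda_{\mathbf H_h}$ is not a direct sum of per-embedding lattices; only the ambient space decomposes. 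Per-embedding data therefore cannot feed a product bound on a discrete Gaussian mass.
\item Your fallback lower bound $\min_\sigma\sqrt{G_\sigma}$ on the first-block lengths can be of order $\sqrt{q/n}/C$. At $s_{\rm mass}^2\approx q/n^\beta$ this gives theta factors of size $n^{(1-\beta)/2}$ and $\ln\rho=\Theta(n\ln n)$.
\end{itemize}
Separately, your one-dimensional form $\prod_i\max(1,c\,s/\|b_i^*\|)$ equals $1$ for small $s$, so it cannot bound $\rho_s(L)>1$. The correct factor is $\theta(s/\|b_i^*\|)$, as in Lemma~\ref{uf:gs-mass-product}.

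\textbf{The ``cleaner'' route.} This one is wrong in principle. A minimum-distance bound alone cannot give $\rho_s(L)\le(1+O(s/\lambda_1))^{2n}$ when $\lambda_1/s=n^{\beta/2}/C\ll\sqrt n$. Take a lattice with the same $\lambda_1$ but Minkowski--Hlawka density, $(\det L)^{1/(2n)}=\Theta(\lambda_1/\sqrt n)$. By Poisson summation it satisfies $\rho_s(L)\ge s^{2n}/\det L=\exp(\Theta((1-\beta)n\ln n))$. That is far above the budget $\exp(O(n/\ln\ln n))$.

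\textbf{The missing idea.} You need a basis of $\Lambda_h$ whose Gram--Schmidt lengths are all at least $\sqrt q/C$, and the paper gets these \emph{lower} bounds by duality.
\begin{itemize}
\item Lemma~\ref{uf:completion-quality} gives only upper bounds. The coefficient basis built from $(f,g)$ and $(F_0,G_0)$ has all Gram--Schmidt lengths $r_i\le\gamma_q(f,g)\le C\sqrt q$. This is because all shifts of the first row share one norm, and after projection the second block consists of shifts of a vector of norm $\|q(f^*,g^*)/(ff^*+gg^*)\|$.
\item The reversed dual basis has Gram--Schmidt lengths $1/r_{2n},\ldots,1/r_1$.
\item The isoduality $q\Lambda_h^*=\mathbf J\Lambda_h$ of Lemma~\ref{uf:public-isoduality} makes $q\mathbf J^{-1}$ applied to that reversed dual basis a basis of $\Lambda_h$ with lengths $q/r_i\ge\sqrt q/C$. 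This is where your $q/(C\sqrt q)$ must come from.
\item Lemma~\ref{uf:gs-mass-product} then gives $\rho_s(\Lambda_h)\le\theta(Cs/\sqrt q)^{2n}$, so $A_{\rm mass}=2$.
\end{itemize}
Once this lower bound is in hand, even your cruder $(1+Cs/\sqrt q)^{2n}$ would suffice.

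\textbf{A secondary gap in the verifier.} NTRUSolve returns a solution only when $\gcd(\operatorname{Res}(u,x^n+1),\operatorname{Res}(v,x^n+1))=1$. For $u$ a unit modulo $q$, solvability of $uG-vF=q$ is instead equivalent to $(u,v)=R$ (Lemma~\ref{uf:coprime}). The two conditions differ: if $u$ and $v$ lie in distinct prime ideals above one rational prime $p$, then $p$ divides both resultants, yet $(u,v)$ can still be $R$. The quality-qualified family only posits that $F_0,G_0$ exist, so your verifier may reject the hidden witness, contrary to your claim that NTRUSolve succeeds on it. The fix is to decide the integer linear system by Hermite or Smith normal form, as in Lemma~\ref{uf:polynomial-completion}.
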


The output has $\|(u,v)\|_2\le C\sqrt q$ and gives an integer NTRU
basis with maximum Gram--Schmidt norm at most $C\sqrt q$.
Theorem~\ref{uf:falcon-quality-recovery} derives this guarantee directly
from the key generation quality condition.

\begin{corollary}\label{ui:ntru-cor}
For the NTRU-HPS/HRSS family, a valid short pair and its
decryption inverses are recovered with probability $1-o(1)$ over
the accepted input law and independent algorithmic randomness,
in time and space
$2^{O(N/\ln\ln N)}$.
\end{corollary}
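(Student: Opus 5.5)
We obtain Corollary~\ref{ui:ntru-cor} by instantiating the short pair search of Theorem~\ref{ui:informal-quotients} with $m=N$ and the public matrix $\mathbf H=\mathbf H_{\widehat h}$, the coefficient matrix of multiplication by $\widehat h=3^{-1}h\bmod q$ in $R_N$. Since $\widehat h f=g\pmod q$, the secret pair $\operatorname{coeff}(f,g)$ lies in $\Lambda_{\mathbf H}$.

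\textbf{Length and verifier.} The HPS and HRSS laws give $\|f\|^2+\|g\|^2\le C_{\rm len}^2 q$ for a fixed rational $C_{\rm len}$. For HPS, $f$ is ternary of length $N-1$ and $g$ has weight $w=q/8-2$. For HRSS, $g=(x-1)g_0$, so $\|g\|^2\le 4\|g_0\|^2$, and $q=\Theta(N)$. We set $B_{\rm Q}^2=C_{\rm len}^2q$.

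The verifier accepts $(\mathbf u,\mathbf v)\in\Lambda_{\mathbf H}$ of norm at most $B_{\rm Q}$ when three conditions hold:
\begin{itemize}
 \item the image of $u$ in $S_N$ is invertible modulo $2$ and modulo $3$;
 \item the image of $v$ is invertible modulo $2$;
 \item the decryption inverses computed from $u,v$ satisfy the public consistency identities, namely $h u=3v\pmod q$ and $v(1)$ consistent with $h(1)=0$.
\end{itemize}
Each check is a polynomial-time inversion in $\mathbb F_2[x]/(\Phi_N)$, $\mathbb F_3[x]/(\Phi_N)$, or $R_q$. The hypothesis $\operatorname{ord}_N(2)=\operatorname{ord}_N(3)=N-1$ makes $\Phi_N$ irreducible modulo $2$ and $3$, so invertibility is simply nonvanishing of the reduction. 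The true pair passes the verifier because of the conditioning in the input law. Any accepted pair is, by construction, a valid short pair together with its decryption inverses, which is what the corollary outputs.

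\textbf{Mass bound.} This is the main obstacle. We must show that, with probability $1-o(1)$ over the accepted law, $\rho_{s_{\rm mass}}(\Lambda_{\mathbf H})\le A_{\rm mass}$ with $\ln A_{\rm mass}=O(N/D_{\rm Q})$ and $s_{\rm mass}^2\sim q/N^\beta$.

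\begin{itemize}
 \item \emph{Decomposition of the ring.} Split $R_q$-related arithmetic by $x^N-1=(x-1)\Phi_N$. The condition $h(1)=0$ controls the one-dimensional component.
 \item \emph{First moment.} Bound $\mathbb E\,\rho_{s_{\rm mass}}(\Lambda_{\mathbf H})$ by $1+\sum_{\mathbf z\ne0}\rho(\mathbf z)\Pr[\widehat h u=v]$. Because $q$ is a power of two, $\Pr[\widehat h u=v]$ depends on the $2$-adic saturation of $u$ in $\mathbb Z_q[x]/(\Phi_N)$. Since $\Phi_N$ is irreducible modulo $2$, $R_q$ restricted to $\Phi_N$ is a Galois ring, and the ideals are exactly $2^j$. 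A nonzero $u$ with $u\equiv0\pmod{2^j}$ forces $v\equiv0\pmod{2^j}$ as well, and such pairs have norm at least $2^j$.
 \item \emph{Summation.} Summing over the saturation index $j$ gives a geometric-type series in which the $j$-th term is roughly $(2^{-2j}\cdot s^2\text{-volume}/q^{N-1})$. The logarithmic window $\ln(q/N)\le(\ln N)^{1-c}$ keeps the total contribution at $2^{O(N/\ln\ln N)}$. Here $q=\Theta(N)$ trivially satisfies the window.
 \item \emph{Correcting for non-uniformity.} Since $\widehat h$ is not uniform, we transfer the bound using a spectral estimate for the distribution of $g f^{-1}$: the Fourier coefficients of the accepted law on $R_q$ are small away from low-saturation characters. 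Where this transfer is too lossy, we instead bound the logarithmic saturation index of the actual $h$ directly. A Markov inequality then yields the stated probability.
\end{itemize}

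\textbf{Conclusion.} With the mass bound, length bound and verifier established, Theorem~\ref{ui:informal-quotients} returns an accepted pair with probability $1-N^{-K}$ over the algorithm's randomness, in time and space $2^{O(N/\ln\ln N)}$. Averaging over the $1-o(1)$ event on which the mass bound holds gives overall success probability $1-o(1)$.
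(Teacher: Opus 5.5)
\textbf{Gap: the mass bound.} Your instantiation with $\mathbf H_{\widehat h}$, the length bound and the final averaging step are fine. However, the mass bound, which is the substance of this corollary, is not established, and the route you sketch cannot establish it.

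Your first-moment computation averages over a \emph{uniform} key: the Galois-ring ideal chain of $\mathbb Z_q[x]/(\Phi_N)$ and a geometric series over $2$-adic levels. At width $s_{\rm mass}^2=q/N^\beta$ this says only that a random lattice has mass $1+o(1)$. The actual $\widehat h=gf^{-1}$ is planted. Whenever $G_{\min}>0$, the lattice $\Lambda^{\rm cyc}_{\widehat h}$ deterministically contains $\{a(f,g):a\in R_N\}$. Its intersection with the real span $W$ of the pair is the possibly larger lattice $AT$, of index $\Delta_{\rm sat}=[R_N:fR_N+gR_N+kR_N]$. This is where extra mass can sit, and no uniform-key average sees it.

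Your proposed repair cannot work. The accepted law is supported on at most $3^{2(N-1)}$ keys, while $\{h:h(1)=0\}$ has $q^{N-1}$ elements with $q=\Theta(N)$. Hence its total variation distance from uniform tends to one and its density ratio is superexponential. Neither transfer of Proposition~\ref{ug:public-transfer} applies (compare Remark~\ref{ug:seed-support}). The claim that the Fourier coefficients are small ``away from low-saturation characters'' is unquantified and unproved, and it would carry the entire difficulty.

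The $2$-adic saturation you track is also the wrong invariant. $\Delta_{\rm sat}$ is governed by common prime-ideal factors of $\pi_N(f),\pi_N(g)$ over all rational primes and by $f(1)$. Your fallback to ``bound the logarithmic saturation index directly'' never says how that index enters the mass.

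\textbf{What is missing.} You need a deterministic bound in terms of the planted pair, namely Proposition~\ref{uf:ntru-cyclic-geometry}. It uses the isoduality $q(\Lambda^{\rm cyc}_{\widehat h})^*=\mathbf J\Lambda^{\rm cyc}_{\widehat h}$ with $\mathbf JW=W^\perp$. Fibering the lattice over its projection to $W^\perp$, which is isometric to $q(AT)^*$, gives
$\rho_s(\Lambda^{\rm cyc}_{\widehat h})\le\Delta_{\rm sat}\,\theta(s/\sqrt{G_{\min}})^N\,\theta(s\sqrt{G_{\max}}/q)^N$.

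The probabilistic work then concerns the secret pair, not $h$, and has three parts:
\begin{enumerate}
\item Small-ball estimates at every $N$th root of unity, for independent ternary, fixed-type and TernaryPlus polynomials, give $G_{\min}\ge q/N^\alpha$.
\item Hoeffding's inequality gives $G_{\max}\le C_{\rm sp}q\ln N$.
\item The bound $\Delta_{\rm sat}\le N|f(1)|\operatorname{Norm}(\pi_N(f),\pi_N(g))$, together with prime-ideal membership bounds on the expected log-norm and Markov's inequality, gives $\ln\Delta_{\rm sat}\le N/D_N$.
\end{enumerate}
These raw-law failure probabilities are then divided by the acceptance probability, $1-a^{N-1}$ or $(1-a^{N-1})^2$. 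Only then does Corollary~\ref{uf:ntru-weak-spectral-budget} supply a public budget $A_{\rm mass}=4^{\lceil 3N/D_N\rceil}$ fixed before the key is drawn, for use in Theorem~\ref{uq:mass-search-framework}.

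\textbf{Verifier.} Separately, your verifier checks only lattice membership, the norm bound and invertibility. An accepted pair therefore need not lie in the key space, and need not decrypt correctly. Add the exact membership tests of $\mathsf{Valid}_{\mathsf P}$ in Lemma~\ref{uf:ntru-short-check}; the true witness passes them.
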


This output is the short key core defined in
Lemma~\ref{uf:ntru-short-check}.
Theorems~\ref{uf:ntru-hrss-distributional} and~\ref{uf:ntru-hps-distributional},
and Corollary~\ref{uf:ntru-hps-tags-distributional}, give the
explicit failure bounds and their proofs.

\subsection{Consequences and Organization}

Small coefficients lower the recovery cost. For linear relations,
reducing the joint energy increases
$D$ and lowers the leading recovery exponent. Constant and
polylogarithmic coefficient scales both lead to $D\sim\ln\ln n$
in the families above with polynomial moduli.
For quotient relations, a short valid pair and a bound on the total
Gaussian mass make the pair accessible to the slice search.

The recovery exponent also measures the cost of composing a reduction
with our algorithm. If a problem with $N$ input variables is reduced
to a matching recovery law of low energy and dimension $n(N)$, the relevant
scale is $n(N)/D_{n(N)}$.
Appendix~\ref{ux:section} gives a short distributional composition
argument and, under randomized ETH, excludes matching reductions with
$n(N)/D_{n(N)}=o(N)$. It explains how lowering the target energy
restricts the allowable dimension growth.

The finite schedules give the route to concrete cost estimates for
the parameter sets considered here.
Appendix~\ref{uk:section} explains how geometric guarantees transfer
to the actual public distributions, including those produced by
seed expansion.

Section~\ref{uo:overview} develops the recovery ideas before the
formal proofs. Sections~\ref{ur:section} and~\ref{ue:section} prove
the coordinate test and construct its Gaussian list.
Sections~\ref{ut:section}--\ref{umldsa:section} verify the linear applications.
Section~\ref{uq:overview} gives the quotient search theorem and its
applications, with proofs in Appendix~\ref{uf:section}.
Appendix~\ref{ue:finite} supplies finite arithmetic and sampling
procedures.
Appendix~\ref{uc:parameters} tabulates parameter values, joint energy bounds,
and formal leading term values for five linear public families.
Section~\ref{uz:fhe} discusses the common RLWE public key relation in
homomorphic encryption.

\section{Technique Overview}
\label{uo:overview}

Our algorithm for noisy or rounded linear relations involves three tasks.
First, a coordinate guess must create a detectable difference in the
public data. Second, one Gaussian list must estimate all the resulting
scores accurately.
Third, the public lattice must support construction of that list at
subexponential cost. We explain these tasks in order, then describe
how Gaussian mass estimation leads to the quotient search.

\subsection{A coordinate guess reveals an energy gap}

Consider $\mathbf y=\mathbf M\mathbf s+\mathbf e\pmod q$.
A direct guess for part of $\mathbf s$ leaves the contribution of the
unguessed coordinates in the equation. We instead place the entire
secret into a short error vector and arrange for a guess to change just
one of its coordinates. This is possible because both $\mathbf s$
and $\mathbf e$ have small joint energy.

Append the identity matrix and define
\[
 \mathbf F=[\mathbf M^T\mid\mathbf I_n],\qquad
 \mathbf v_{i,c}=(\mathbf y,c\mathbf e_i),\qquad
 \mathbf E_{i,c}=(\mathbf e,c\mathbf e_i-\mathbf s).
\]
Here $\mathbf e_i$ is the $i$th standard basis vector.
The candidate label $\mathbf v_{i,c}$ is public, and
\[
 \mathbf v_{i,c}=\mathbf F^T\mathbf s+\mathbf E_{i,c}\pmod q.
\]
The hidden vector has the particularly simple energy
\[
 \|\mathbf E_{i,c}\|^2
 =\|\mathbf e\|^2+\sum_{j\ne i}s_j^2+(c-s_i)^2.
\]
Everything except the last term is independent of the candidate $c$.
Consequently,
\begin{equation}
 \|\mathbf E_{i,c}\|^2-\|\mathbf E_{i,s_i}\|^2=(c-s_i)^2.
 \label{uo:dual-energy-gap}
\end{equation}
The correct integer candidate uniquely minimizes the energy, and
every incorrect candidate increases it by at least one.

For example, if $s_i\in\{-1,0,1\}$, the three candidate energies are
a common unknown offset plus $(-1-s_i)^2$, $s_i^2$, and $(1-s_i)^2$.
Recovering $s_i$ therefore requires comparing three values of one
public statistic. The offset contains every other secret coordinate,
but is the same in all three comparisons.
The appended identity coordinates are chosen by the algorithm, so
this construction uses the original observations.

The energy bound controls all these candidates at once. If
$\|\mathbf s\|^2+\|\mathbf e\|^2\le nV$ and $|s_i|,|c|\le B$, then
$\|\mathbf E_{i,c}\|^2\le nV+3B^2=R^2$.
The next step is a public estimate of energy that is accurate
throughout this ball.

\subsection{From a Gaussian score to a shared list}

Set
\[
 \Lambda=\{\mathbf x\in\mathbb Z^{h+n}:
                   \mathbf F\mathbf x=0\pmod q\}.
\]
For $\mathbf x\in\Lambda$, the term containing the unknown secret
vanishes from the phase:
\[
 \cos\!\left(\frac{2\pi}{q}\langle\mathbf x,\mathbf v_{i,c}\rangle\right)
 =\cos\!\left(\frac{2\pi}{q}\langle\mathbf x,\mathbf E_{i,c}\rangle\right).
\]
Thus the public label and the hidden error vector give exactly the
same cosine. Averaging over a discrete Gaussian on $\Lambda$ turns
this equality into information about the error norm.

For $\mathbf X\sim D_{\Lambda,q/f}$, Poisson summation gives
\[
 \mathbb E\!\left[
   \cos\!\left(\frac{2\pi}{q}\langle\mathbf X,\mathbf v_{i,c}\rangle\right)
 \right]
 =\frac{\rho_f(\mathcal D-\mathbf E_{i,c})}{\rho_f(\mathcal D)},
 \qquad \mathcal D=q\Lambda^*.
\]
The zero vector in $\mathcal D$ contributes
$e^{-\pi\|\mathbf E_{i,c}\|^2/f^2}$ to the numerator.
A bound on the Gaussian mass of the nonzero vectors controls all
remaining terms. Theorem~\ref{ur:score} makes this approximation
uniform over $\|\mathbf E_{i,c}\|\le R$.
Combining it with~\eqref{uo:dual-energy-gap} gives a public score
whose maximum occurs at the correct coordinate value.

We estimate the score by averaging over a list of $N$ lattice vectors.
The list is built by lifting and combining earlier samples, so its
entries are dependent. The useful invariant is \emph{conditional
similarity}: even after fixing the other entries, each entry has a
distribution close to the target Gaussian.
A martingale concentration bound then controls the empirical cosine
average. This is the dependence analysis inherited from
DEL and HGH~\cite{DEL25,HGH}.

A single comparison would require concentration for one bounded
function. We instead take a union bound over all public labels and
coordinate candidates. Theorem~\ref{ur:shared-list} shows that one
list then answers every admissible coordinate test for the same
operator. This simultaneous event has two useful consequences.
First, it covers a rounding error that is determined by the
operator and the secret. Second, it allows the same list to serve
all secret columns of a FrodoKEM-type matrix relation.
The costly sampling is performed once, while each additional column
adds polynomial scoring work.

For a larger candidate interval, adjacent scores also give a binary
search. The identity
\[
 \|\mathbf E_{i,c+1}\|^2-\|\mathbf E_{i,c}\|^2=2(c-s_i)+1
\]
determines whether $s_i$ lies to the left or right of $c$.
The difference is a nonzero integer, so the same score accuracy
preserves its sign. Each coordinate requires $O(\log(2B+1))$
comparisons. The simultaneous event already includes these
adaptively selected candidates.

\paragraph{Comparison with provable dual attack.}
Pouly and Shen~\cite{PS24} generate a Gaussian list for a chosen block of secret coordinates and reuse it to score all candidates for that block. Their lattice vectors are orthogonal modulo \(q\) to the unguessed columns of the public matrix. This removes the contribution of the unguessed secret from the score without requiring it to be small. Moving to another block changes this lattice and requires a new list in their construction. In contrast, we exploit the small size of the entire secret by including it in the error vector. For each coordinate test, the other secret coordinates contribute a common energy term that is independent of the candidate. Our embedding uses one fixed lattice for all coordinates.

The same energy structure also makes the coordinate search efficient. The uniform score estimates preserve the ordering of adjacent candidates, so comparing \(c\) and \(c+1\) reveals whether \(s_i\le c\) or \(s_i\ge c+1\), even when both candidates are incorrect. This gives a binary search using \(O(\log(2B+1))\) comparisons per coordinate.

\subsection{Building the list via Wagner-Style Gaussian Sampling}

The list construction starts in $\mathbb Z^h$, where Gaussian sampling
is easy, and adds the $n$ secret coordinate constraints in blocks.
At each level, a temporary auxiliary lattice makes Gaussian lifting
possible. Pairing samples with the same quotient label then enforces
the new constraints. This is the Gaussian list construction of
DEL~\cite{DEL25}, with the random auxiliary lattices introduced by
HGH~\cite{HGH}.

Pairing multiplies the Gaussian width by $\sqrt2$.
Early levels therefore operate at small widths and handle relatively
few new coordinates. Later levels have larger widths and can handle
larger blocks. The central allocation question is whether these
capacities sum to $n$ before the width reaches the target value $q/f$.

At a threshold $t_i$, a list of length $N$ can handle approximately
\[
 b_i\approx\frac{\ln N}{\ln(q/t_i)}
\]
new coordinates. The thresholds grow geometrically, so the denominators
decrease in equal steps. Summing their reciprocals produces a logarithm.
More precisely, the allocation argument gives
\[
 \sum_i b_i\sim2\log_2 N\,W,\qquad
 W\sim D=\ln\frac{\ln(q/\sqrt n)}{1+\ln U},
 \qquad R=\sqrt n\,U.
\]
Taking $\log_2N\sim n/(2D)$ therefore provides enough capacity.
This accumulated capacity is the source of the depth factor in the
exponent. The number of public observations stays fixed throughout.

The choice of $f$ balances score accuracy and sampling cost.
The smallest leading score on the candidate ball is
$\exp(-\pi R^2/f^2)$, while the list has Gaussian width $q/f$.
The schedule takes $f^2$ proportional to $U^2D$ and chooses a list
long enough to resolve the coordinate gap.
The random auxiliary lattices permit the required accuracy at each
lifting step. The shifted sampler of Aggarwal, Dadush, and
Stephens-Davidowitz~\cite{ADS15} implements those steps, and the block
dimensions keep its extra cost below the leading exponent.
Section~\ref{ue:section} verifies the width and error conditions.

For $q=n^{\kappa+o(1)}$ with fixed $\kappa>1/2$,
$\ln(q/\sqrt n)=\Theta(\ln n)$.
If $U$ is polylogarithmic, then
\[
 D=\ln\ln n-O(\ln\ln\ln n)\sim\ln\ln n.
\]
The resulting time and space are
$2^{(1/2+o(1))n/\ln\ln n}$.
A larger error, including one produced by coarser rounding,
increases $U$ and reduces the available depth.
Appendix~\ref{ue:finite} gives finite arithmetic and a direct
auxiliary coset implementation as well.

\subsection{The geometry of the public operator}

The allocation argument determines which prefixes the sampler uses.
To carry out each lifting and pairing step, the corresponding prefix
lattice must have a sufficiently small smoothing parameter.
Proving this condition is where the public distribution matters.

For a uniform scalar matrix over a prime field, a fixed nonzero
label has a uniform image. Averaging the Gaussian mass over that
image gives a first moment bound for the dual lattice.
Multiplication by a uniform field element has the same property,
even though all entries of its coefficient matrix are related.
This observation gives the NTRU LPRime geometry directly.

For a prime cyclotomic module, a nonzero label may generate a proper
ideal. Its image is uniform within that ideal.
Large ideals have many elements, which lowers the probability
assigned to each possible image.
Small ideals have fewer elements, but their nonzero integer
coefficient representatives are long.
The proof combines these two estimates to bound their Gaussian
contributions. This is the main geometric argument for Kyber/ML-KEM.
It also applies to the square row block selected from the ML-DSA
public operator. For that family, restricting
$\mathbf y_R=\Delta\mathbf t_1$ and $\mathbf e_R=\mathbf s_2-\mathbf t_0$
to the predetermined rows and taking coefficients makes the discarded
public-key bits enter the same joint energy certificate.

Modulo a power of two, additive order plays the corresponding role
for a scalar matrix. Grouping labels by their orders gives an exact
theta series identity and the FrodoKEM smoothing bound.
For SABER's cyclotomic module, these groups are replaced by the
powers of the ramified ideal generated by $X+1$.
The order-two part requires a separate distance estimate, because
a nonzero systematic coordinate is at least $1/2$ from an integer.

The common task is to bound the expected nonzero dual mass, which
gives the required prefix smoothing estimates.
Uniform images provide this bound over fields, while ideals and
additive orders describe how labels spread in the structured rings.
All of these estimates are made in the original coefficient norm.
They also cover prefixes that stop partway through a polynomial.
For example, a sampling level may add several complete ring elements
and then a proper prefix of the next one.
Treating that partial block is essential to using the common
allocation without changing the recovery algorithm.
The family sections organize their proofs around these specific
geometric tasks and then substitute the energy bounds into
Theorem~\ref{ue:main}.

\subsection{From Gaussian mass to a short pair search}
The two remaining NIST third round families considered here, Falcon and NTRU HPS/HRSS, are based on public quotient relations. A quotient $h=gf^{-1}\pmod q$ determines a lattice containing the short pair $(f,g)$. Our aim is to find a short pair satisfying this relation and the corresponding public checks.

The linear method above relies on geometric conditions that make the solution unique within the prescribed energy bound. However, a quotient relation has a different homogeneous structure and admits both the zero pair and a valid nonzero pair. A norm ball centered at the origin and containing the valid pair also contains its negative, and may contain further short solutions. We therefore retain Gaussian mass estimation, but use it to guide a search through affine slices for a pair satisfying the public checks. A bound on the total Gaussian mass accounts for these additional lattice points and controls the probability of reaching each fixed valid pair.

Here each ring element has $m$ coefficients, so the full pair lattice
has dimension $2m$.
For a norm bound $B$, a width $s_{\rm search}$, and a total mass bound
$\rho_{s_{\rm search}}(\Lambda_h)\le A$, every fixed pair $\mathbf z$ of
norm at most $B$ has Gaussian probability at least
\[
 p_B=A^{-1}\exp(-\pi B^2/s_{\rm search}^2).
\]
In the parameter range of the quotient theorem,
$\ln A=O(m/D_{\rm Q})$, $B=O(\sqrt q)$, and
$s_{\rm search}^2=\Theta(qD_{\rm Q}/m)$.
Hence $p_B=\exp(-O(m/D_{\rm Q}))$.
This gives the scale of the search.

To reach a pair, the algorithm fixes coordinates successively.
At each stage, the current candidates form an affine slice of the
lattice. Poisson summation expresses its Gaussian mass as a cosine
average on the dual of its direction lattice.
A list for that direction therefore estimates the weights of the
next possible coordinate values.
Choosing according to these weights follows the Gaussian mass
toward a complete short pair.
A positive coordinate anchor excludes zero, and retaining both signs
covers either sign of a valid pair.

The sampling difficulty is that the search uses many direction
lattices. One persistent random extension serves all of them.
We first build lists in this enlarged lattice and then take
differences of predetermined disjoint pairs.
A difference is retained when it belongs to the original lattice.
Theorem~\ref{uf:persistent-theorem} proves that a public source mass
bound suffices for these lists.
Theorem~\ref{uf:candidate-table} then shows that the resulting search
contains each fixed short witness with high probability.
The proof combines simultaneous accuracy for all queried phases with
a lower bound on the probability of the witness's path.
Scanning the candidate table with a public verifier gives
Theorem~\ref{uq:mass-search-framework}.

For Falcon, the key generation quality condition gives a full
coefficient basis with Gram--Schmidt lengths at most $C\sqrt q$.
The identity $q\Lambda_h^*=\mathbf J\Lambda_h$, with $\mathbf J$
orthogonal, converts its reversed dual basis into a basis of
$\Lambda_h$ with Gram--Schmidt lengths at least $\sqrt q/C$.
Bounding each one-dimensional Gaussian sum in these coordinates gives
$\rho_s(\Lambda_h)\le\theta(Cs/\sqrt q)^{2n}$.
At the search width this is $1+n^{-\omega(1)}$, so a public budget
$A_{\rm mass}=2$ works uniformly for every quality-qualified key.
The verifier imposes the same exact quality threshold on the output.

Spectral bounds on a short pair provide another way to bound the
Gaussian mass of a negacyclic public lattice. They control lengths
in the pair's polynomial span and in the orthogonal projection.
Appendix~\ref{uf:spectral-proof} derives the corresponding recovery
guarantee.

For cyclic NTRU of degree $N$, the span and projection estimate also
includes a saturation index. We bound this index by
$N|f(1)|\operatorname{Norm}(f(\zeta_N),g(\zeta_N))$.
A bound on the expected logarithm of this ideal norm gives
$\ln\Delta_{\rm sat}\le N/D_N$ with probability $1-o(1)$,
where $D_N\sim\ln\ln N$.
The spectral estimates and the key generation acceptance
probabilities then supply a public mass budget for HPS and HRSS.
The slice search and exact verification give their recovery results.

\section{Preliminaries}
\label{ui:prelim}

\subsection{Notation and coefficient embeddings}
\label{up:notation}

Bold symbols denote vectors (including module vectors) or matrices;
Greek vector and matrix symbols are bold as well. Ordinary italic
symbols denote scalars, coefficients, and individual ring elements.
Thus $s_i,e_i$ are scalar coordinates, whereas $\mathbf e_i$ is a
standard basis vector when so declared.

Let $R$ be an integer ring with a specified coefficient basis of
length $d$, and let $R_q=R/qR$ for an integer $q\ge2$.
The map $\operatorname{coeff}$ lists coefficients in this basis and
concatenates the blocks of a ring vector in the stated order.
For example, for $\mathbf s_R\in R^k$ and
$\mathbf A\in R_q^{r\times k}$, let
\[
 \mathbf s=\operatorname{coeff}(\mathbf s_R)\in\mathbb Z^{kd},\qquad
 \mathbf M=\operatorname{Mat}_{\rm coeff}(\mathbf A)
       \in\mathbb Z_q^{rd\times kd},
\]
so that
\[
 \operatorname{coeff}(\mathbf A\mathbf s_R)
       =\mathbf M\mathbf s\pmod q.
\]
Here $r$ is the number of ring equations and $h=rd$ is the number of
coefficient observations. For a residue-ring element,
$\operatorname{coeff}$ first denotes its residue-class coefficient
vector. We write $\operatorname{lift}_q^+$ for the coordinatewise lift
to $\{0,\ldots,q-1\}$, and $\operatorname{lift}_q^{\rm cen}$ for the
lift to $[-\lfloor q/2\rfloor,\lceil q/2\rceil-1]$.
An integer lift is specified before using a norm or an integer
observation; scheme-specific rounding and tie rules are stated locally.
All norms and inner products use the Euclidean coefficient embedding.
In particular, norms of integer ring elements and ring vectors refer
to their coefficient vectors, not to residue classes without a lift.
Polynomial coefficients are indexed by $0,\ldots,d-1$, and abstract
vector coordinates by $[n]=\{1,\ldots,n\}$, unless otherwise stated;
thus $\mathbf e_0=\operatorname{coeff}(1)$ is compatible with this rule.
For a matrix $\mathbf M$ with $n$ columns, $\mathbf M_m$ denotes its first
$m$ columns, with $\mathbf M_0$ the empty column matrix.
The transpose of a coefficient matrix is its ordinary matrix transpose.
We write $\ln$ for the natural logarithm and $\operatorname{Unif}(S)$
for the uniform distribution on a finite set $S$, also denoted by
$U(S)$ in some cited specifications. The scalar $U$ below is instead
a normalized radius.

Rings and local parameters are redeclared in each scheme or theorem.
In particular, $R$ may denote a ring or a candidate radius, $h$ a
public ring element or the number of coefficient observations, and
$f$ a ring element or a scoring width, as explicitly defined locally.
The smoothing parameter $\eta_\varepsilon(L)$ is distinct from a
scheme's coefficient bound $\eta$.

\subsection{Lattices and discrete Gaussians}
\label{up:lattices}

A rank-$d$ lattice in $\mathbb R^k$ is a set
\[
 L=\mathcal L(\mathbf B)=\{\mathbf B\mathbf z:\mathbf z\in\mathbb Z^d\},
 \qquad \mathbf B=(\mathbf b_1,\ldots,\mathbf b_d)\in\mathbb R^{k\times d},
\]
where the columns of $\mathbf B$ are linearly independent. Its span is
$V=\operatorname{span}_{\mathbb R}(L)$, and its determinant is the
$d$-dimensional volume
$\det L=\sqrt{\det(\mathbf B^T\mathbf B)}$ of a fundamental parallelepiped.
For a full-rank lattice, this is $|\det \mathbf B|$.
The dual lattice and first minimum are
\[
 L^*=\{\mathbf u\in V:\langle\mathbf u,\mathbf x\rangle\in\mathbb Z
                       \text{ for every }\mathbf x\in L\},\qquad
 \lambda_1(L)=\min_{0\ne\mathbf x\in L}\|\mathbf x\|.
\]
Thus $(L^*)^*=L$ and $\det L^*=(\det L)^{-1}$, with the dual
always taken inside $V$. If $L\subseteq L'$ have the same rank, their
finite quotient has size $[L':L]=\det L/\det L'$.
A sublattice $S\subseteq L'$ is \emph{primitive} when
$S=L'\cap\operatorname{span}_{\mathbb R}(S)$. Its orthogonal
projection $\pi_{S^\perp}(L')$ is a lattice. Each fiber is a
translate of $S$. These fibers are used in Gaussian lifting below.

For $t>0$ and a discrete subset $T$ of Euclidean space, put
\[
 \rho_t(\mathbf x)=e^{-\pi\|\mathbf x\|^2/t^2},\qquad
 \rho_t(T)=\sum_{\mathbf x\in T}\rho_t(\mathbf x),\qquad
 \theta(t)=\rho_t(\mathbb Z).
\]
The discrete Gaussian $D_{L,t}$ assigns probability
$\rho_t(\mathbf x)/\rho_t(L)$ to $\mathbf x\in L$. More generally,
\[
 \Pr_{\mathbf X\sim D_{L,t,\mathbf c}}[\mathbf X=\mathbf x]
   =\frac{\rho_t(\mathbf x-\mathbf c)}{\rho_t(L-\mathbf c)}
   \quad(\mathbf x\in L).
\]
The parameter $t$ is a Gaussian width. For the corresponding
one-dimensional continuous Gaussian the standard deviation is
$t/\sqrt{2\pi}$. We also use the centered law $D_{L+\mathbf c,t}$ supported
on a coset. These conventions give
$\mathbf X-\mathbf c\sim D_{L-\mathbf c,t}$ when
$\mathbf X\sim D_{L,t,\mathbf c}$.

\subsection{Smoothing and Poisson summation}
\label{up:gaussian-smoothing}

For $0<\varepsilon<1$, the smoothing parameter is
\begin{equation}\label{up:smoothing-definition}
 \eta_\varepsilon(L)=\inf\{t>0:
       \rho_{1/t}(L^*\setminus\{0\})\le\varepsilon\}.
\end{equation}
It measures the width at which the nonzero dual frequencies have small
total Gaussian mass. Scaling gives
$\eta_\varepsilon(aL)=a\eta_\varepsilon(L)$ for $a>0$.
For lattices of the same rank and span, $L\subseteq L'$ implies
$\eta_\varepsilon(L')\le\eta_\varepsilon(L)$, because
$(L')^*\subseteq L^*$.

We use Poisson summation in the conventions of~\cite{MR07}.
For a rank-$d$ lattice $L$ and $\mathbf c\in V$, it gives
\begin{equation}\label{up:poisson-mass}
 \rho_t(L+\mathbf c)
 =\frac{t^d}{\det L}
   \sum_{\mathbf u\in L^*}e^{-\pi t^2\|\mathbf u\|^2}
                       e^{2\pi\mathrm i\langle\mathbf u,\mathbf c\rangle}.
\end{equation}
Taking $L=\mathbb Z$ and $\mathbf c=0$ yields
$\theta(t)=t\theta(1/t)$. The triangle inequality gives
$\rho_t(L+\mathbf c)\le\rho_t(L)$, and, if
$t\ge\eta_\varepsilon(L)$, separating the zero frequency gives
\begin{equation}\label{up:shift-mass}
 (1-\varepsilon)\frac{t^d}{\det L}
 \le\rho_t(L+\mathbf c)\le
 (1+\varepsilon)\frac{t^d}{\det L}.
\end{equation}
This uniformity over the center is the reason smoothing controls
Gaussian lifting and convolution.

The same formula, applied to a Gaussian times a character, gives
the exact Fourier identity
\begin{equation}\label{up:fourier-identity}
 \mathbb E_{\mathbf X\sim D_{L,t}}
    \left[e^{2\pi\mathrm i\langle\mathbf X,\mathbf w\rangle}\right]
 =\frac{\rho_{1/t}(L^*-\mathbf w)}{\rho_{1/t}(L^*)}
 \qquad(\mathbf w\in V).
\end{equation}
The right side is real, so it is also the expected cosine.
For a modular kernel, substituting $t=q/f$ and $\mathbf w=\mathbf v/q$ produces
the periodic Gaussian score used in Section~\ref{ur:section}.

\subsection{Modular kernels and the LWE relation}
\label{up:qary}

For $q\ge2$, write $\mathbb Z_q=\mathbb Z/q\mathbb Z$. When $q$ is
prime this is the field $\mathbb F_q$. For a modular matrix
$\mathbf F\in\mathbb Z_q^{a\times b}$, define
\[
 \ker_q(\mathbf F)=\Lambda_q^\perp(\mathbf F)
   =\{\mathbf x\in\mathbb Z^b:\mathbf F\mathbf x=0\pmod q\}.
\]
This lattice has rank $b$, since it contains $q\mathbb Z^b$,
regardless of the number of modular constraints. If
$\mathbf F=[\mathbf C\mid\mathbf I_a]$, then the map
$\mathbf x\mapsto\mathbf F\mathbf x$ is surjective, even
for composite $q$, and
\begin{equation}\label{up:systematic-identities}
 \det\ker_q(\mathbf F)=q^a,\qquad
 \ker_q(\mathbf F)^*=\mathbb Z^b+q^{-1}\mathbf F^T\mathbb Z^a.
\end{equation}
The determinant is the index of the kernel in $\mathbb Z^b$.
The displayed dual contains $\mathbb Z^b$, and its $q^a$ distinct
cosets are distinguished by the identity coordinates. This proves
equality, since the dual has determinant $q^{-a}$.

In the conventional LWE notation~\cite{Reg05}, the public input is
$\mathbf A\in\mathbb Z_q^{n\times m}$ together with
\[
 \mathbf b=\mathbf A^T\mathbf s+\mathbf e\pmod q.
\]
Search-LWE asks for the secret $\mathbf s$. Decision-LWE distinguishes this
input from an independent uniform $\mathbf b\in\mathbb Z_q^m$.
For every $\mathbf x\in\Lambda_q^\perp(\mathbf A)$,
$\langle\mathbf x,\mathbf b\rangle=\langle\mathbf x,\mathbf e\rangle\pmod q$,
so a short
kernel vector supplies a public statistic of the error. The DEL25
sampler constructs lists of such vectors, and HGH26 uses their
Fourier scores for decision-LWE. Our recovery problem instead writes
the public relation as $\mathbf y=\mathbf M\mathbf s+\mathbf e$ with
$\mathbf M\in\mathbb Z_q^{h\times n}$
and allows $h\le n$.

\subsection{Conditional similarity and concentration}
\label{up:similarity}

The analysis of dual-sieve attacks has highlighted the need to justify
score predictions for the actual sampling process~\cite{DP23,DP26}.
In our setting, correct one-sample marginals alone would not imply
concentration: identical copies of one Gaussian sample have the same
marginals but provide no averaging. Conditional similarity supplies
the additional control used below.

For discrete laws $P,Q$ on a countable set, their statistical distance
is $\operatorname{TV}(P,Q)=\tfrac12\sum_x|P(x)-Q(x)|$.
A random variable is $\gamma$-similar to a law $D$ if its point
probabilities lie between $e^{-\gamma}D(x)$ and $e^\gamma D(x)$.
A list $(\mathbf X_1,\ldots,\mathbf X_N)$ is \emph{strongly conditionally
$\gamma$-similar} to $D$, or to $D^N$, when
\begin{equation}\label{up:conditional-similarity}
 e^{-\gamma}D(\mathbf x)\le
 \Pr[\mathbf X_j=\mathbf x\mid\mathbf X_{-j}=\mathbf x_{-j}]
 \le e^\gamma D(\mathbf x)
\end{equation}
for every index $j$ and every value of the other entries having
positive probability. The notation $D^N$ specifies the reference
marginals, not independence of the actual list.
This is the conditional similarity notion of DEL25~\cite{DEL25},
following the Gaussian list framework of~\cite{ALS21}, and used
in HGH26~\cite{HGH}. The word ``strongly'' emphasizes conditioning
on all other entries. Averaging proves the same bounds when
conditioning only on a prefix.

The following concentration bound will be used for every score.
\begin{lemma}\label{up:concentration}
If $(\mathbf X_1,\ldots,\mathbf X_N)$ is strongly conditionally $\gamma$-similar
to $D$, where $0\le\gamma\le1$, then for every function
$H$ with values in $[-1,1]$ and every $u>0$,
\begin{equation}\label{up:concentration-bound}
 \Pr\left[\left|\frac1N\sum_{j=1}^N H(\mathbf X_j)
                  -\mathbb E_D[H]\right|>2\gamma+u\right]
 \le2e^{-Nu^2/2}.
\end{equation}
\end{lemma}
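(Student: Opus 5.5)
Write $\mu=\mathbb E_D[H]$ and form the Doob martingale of $S=\sum_j H(\mathbf X_j)$ with respect to the prefix filtration $\mathcal F_j=\sigma(\mathbf X_1,\ldots,\mathbf X_j)$. The plan has two parts: a drift bound for the conditional means, and Azuma--Hoeffding for the centred sum.

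\textbf{Drift bound.} First I would show that each conditional mean $m_j:=\mathbb E[H(\mathbf X_j)\mid\mathcal F_{j-1}]$ lies within $2\gamma$ of $\mu$. Averaging the strong conditional similarity bound~\eqref{up:conditional-similarity} over the entries $\mathbf X_{j+1},\ldots,\mathbf X_N$, as remarked after that definition, gives
\[
e^{-\gamma}D(\mathbf x)\le\Pr[\mathbf X_j=\mathbf x\mid\mathcal F_{j-1}]\le e^{\gamma}D(\mathbf x).
\]
Call this conditional law $P_j$. Then
\[
|m_j-\mu|\le\sum_{\mathbf x}|P_j(\mathbf x)-D(\mathbf x)|\,|H(\mathbf x)|\le\sum_{\mathbf x}D(\mathbf x)\max\{e^\gamma-1,\,1-e^{-\gamma}\}=e^\gamma-1.
\]
For $0\le\gamma\le1$, convexity of the exponential gives $e^\gamma-1\le(e-1)\gamma\le2\gamma$. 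Summing over $j$ yields
\[
\Bigl|\tfrac1N\sum_j m_j-\mu\Bigr|\le2\gamma.
\]

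\textbf{Concentration.} Next set $Y_j=H(\mathbf X_j)-m_j$. These are martingale differences with respect to $(\mathcal F_j)$. Conditionally on $\mathcal F_{j-1}$, each $Y_j$ takes values in the interval $[-1-m_j,\,1-m_j]$, which has length $2$. Hoeffding's lemma then gives $\mathbb E[e^{\lambda Y_j}\mid\mathcal F_{j-1}]\le e^{\lambda^2/2}$. Iterating this bound and applying Markov's inequality with the optimal choice $\lambda=u$ gives
\[
\Pr\Bigl[\bigl|\textstyle\sum_j Y_j\bigr|>Nu\Bigr]\le2e^{-Nu^2/2}.
\]
On the complementary event, the triangle inequality combined with the drift bound shows that $|\frac1N\sum_j H(\mathbf X_j)-\mu|\le2\gamma+u$, which is~\eqref{up:concentration-bound}.

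\textbf{Main obstacle.} The one delicate point is justifying the passage from strong conditional similarity to similarity conditioned only on the prefix. Conditioning on $\mathcal F_{j-1}$ is obtained by averaging~\eqref{up:conditional-similarity} over the future entries. This is legitimate because the two-sided bound holds pointwise for every positive-probability value of $\mathbf X_{-j}$, so it survives any convex combination over those values. Beyond this, the constant $2e^{-Nu^2/2}$ matches the range-$2$ Azuma--Hoeffding bound exactly, and the drift constant needs only the elementary inequality $e^\gamma-1\le2\gamma$.
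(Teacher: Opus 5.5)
Your proposal is correct and follows the paper's proof essentially step for step. The paper also passes to prefix-conditional similarity to get $|a_j-\mathbb E_D[H]|\le e^\gamma-1\le2\gamma$, then applies Hoeffding's lemma to the range-two martingale differences, iterates it, optimizes at $\lambda=u$ for each tail, and adds the drift bound.
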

\begin{proof}
Let $a_j=\mathbb E[H(\mathbf X_j)\mid\mathbf X_1,\ldots,\mathbf X_{j-1}]$.
The prefix version of~\eqref{up:conditional-similarity} gives
$|a_j-\mathbb E_D[H]|\le e^\gamma-1\le2\gamma$.
Conditional on the preceding entries, $H(\mathbf X_j)-a_j$ has mean zero
and lies in an interval of length two. Hoeffding's lemma~\cite{Hoe63}
therefore gives
\[
 \mathbb E[e^{\lambda(H(\mathbf X_j)-a_j)}\mid\mathbf X_1,\ldots,\mathbf X_{j-1}]
 \le e^{\lambda^2/2}.
\]
Iterating this bound and optimizing at $\lambda=u$ bounds either
tail of the sum by $e^{-Nu^2/2}$. Add the conditional mean
discrepancy to obtain~\eqref{up:concentration-bound}.
\end{proof}

Conditional similarity and implementation error play different roles.
The former permits concentration for a dependent reference list.
If an implemented list is within statistical distance $\xi$ of that
reference law, the probability of any event concerning the whole
list changes by at most $\xi$. In particular, after a union bound
over many scores under the reference law, the implementation cost
in probability is still a single additive $\xi$.

\subsection{Gaussian lifting and pairing}
\label{up:del}

To construct a Gaussian list on the final prefix lattice, we enlarge
each prefix by a small auxiliary quotient. Gaussian lifting extends
samples from the previous prefix to this enlarged lattice, and pairing
samples with the same quotient label places their differences in the
desired prefix lattice.
The smoothing conditions control the distortion introduced by both
operations.

For the systematic matrix $[\mathbf M^T\mid\mathbf I_n]$, where
$\mathbf M\in\mathbb Z_q^{h\times n}$, and a block
partition $n=b_1+\cdots+b_r$, put $m_i=\sum_{j\le i}b_j$ and
\[
 L_i=\{(\mathbf u,\mathbf v)\in\mathbb Z^{h+m_i}:
       \mathbf M_{m_i}^T\mathbf u+\mathbf v=0\pmod q\},
 \qquad L_0=\mathbb Z^h.
\]
An auxiliary lattice $\Gamma_i\supseteq q\mathbb Z^{b_i}$ gives
$L_i'=L_i+(0\oplus\Gamma_i)$ and
\begin{equation}\label{up:del-extension}
 0\longrightarrow\Gamma_i\longrightarrow L_i'
 \longrightarrow L_{i-1}\longrightarrow0,\qquad
 L_i'/L_i\simeq\Gamma_i/q\mathbb Z^{b_i}.
\end{equation}

Gaussian lifting uses DEL~\cite[Lemma~3.1]{DEL25},
with $0<\varepsilon\le1/2$.
For a primitive fiber $S\subseteq L'$ and
$P_0=\pi_{S^\perp}(L')$, shifted Gaussian lifting with fresh
randomness and $t\ge\eta_\varepsilon(S)$ preserves list length and gives
\begin{equation}\label{up:del-lift}
 (D_{P_0,t},\gamma)\longmapsto(D_{L',t},\gamma+3\varepsilon).
\end{equation}
Here each pair records the reference Gaussian and its strong
conditional similarity parameter.

Write \textsc{BucketAndCombine} for the ordered pairing operation in
Algorithm~\ref{alg:bucket-and-combine}. We use the cyclic quotient
version of HGH~\cite[Algorithm~2]{HGH}, which specializes
DEL~\cite[Algorithm~3]{DEL25}.
The two passes preserve the input order, and every input entry is used
at most once. In particular, the pairing indices and their output order
depend only on the original coset labels.

\begin{algorithm}[H]
\caption{\textsc{BucketAndCombine}: cyclic quotient pairing~\cite[Algorithm~2]{HGH}}
\label{alg:bucket-and-combine}
\KwIn{Full-rank lattices $L\subseteq L'\subset\mathbb R^d$ with
$L'/L\simeq\mathbb Z/P\mathbb Z$, and an ordered list
$\mathcal X=(\mathbf x_1,\ldots,\mathbf x_{M_0})$ in $L'$, with $M_0\ge3P$.}
\KwOut{An ordered list of $\lfloor M_0/3\rfloor$ vectors in $L$.}
Initialize an empty ordered bucket $B(c)$ for each coset $c\in L'/L$\;
\For{$a\gets1$ \KwTo $M_0$}{
  Record $c_a\gets\mathbf x_a\bmod L$\;
  Append $\mathbf x_a$ to $B(c_a)$\;
}
Initialize $\mathcal X_{\rm out}\gets()$\;
\For{$a\gets1$ \KwTo $M_0$}{
  \If{$|\mathcal X_{\rm out}|<\lfloor M_0/3\rfloor$ and $|B(c_a)|\ge2$}{
    Remove the first two entries $\mathbf x,\mathbf x'$ from $B(c_a)$,
    in that order\;
    Append $\mathbf x-\mathbf x'$ to $\mathcal X_{\rm out}$\;
  }
}
\Return{$\mathcal X_{\rm out}$}\;
\end{algorithm}

We record the correctness and distribution guarantee of
HGH~\cite[Lemma~3.5]{HGH} in our notation. Its smoothing condition
is on the target sublattice $L$.

\begin{lemma}[Correctness and distribution of cyclic pairing]
\label{up:ordered-pairing}
Let $L\subseteq L'$ be full-rank lattices in $\mathbb R^d$ with
$L'/L\simeq\mathbb Z/P\mathbb Z$, and let $M_0\ge3P$.
\begin{enumerate}
\item For every ordered input list of $M_0$ vectors in $L'$,
Algorithm~\ref{alg:bucket-and-combine} returns exactly
$\lfloor M_0/3\rfloor$ vectors in $L$, with no input entry reused.
\item If $\gamma\ge0$, $0<\varepsilon\le1/2$,
$t\ge\sqrt2\eta_\varepsilon(L)$, and the input list is strongly
conditionally $\gamma$-similar to $D_{L',t}$, then the output list is
strongly conditionally $(4\gamma+3\varepsilon)$-similar to
$D_{L,\sqrt2t}$. Thus ordered pairing gives
\begin{equation}\label{up:del-pair}
 (D_{L',t},\gamma)\longmapsto
 (D_{L,\sqrt2t},4\gamma+3\varepsilon),\qquad
 M_0\longmapsto\lfloor M_0/3\rfloor.
\end{equation}
\end{enumerate}
\end{lemma}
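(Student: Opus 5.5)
The plan has two parts: a deterministic combinatorial argument for correctness, and a conditional-probability computation for the distributional claim. Throughout, the pairing indices depend only on the coset labels $c_a=\mathbf x_a\bmod L$.

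\emph{Correctness.} Every returned vector is a difference $\mathbf x-\mathbf x'$ of two entries from the same bucket, so it lies in $L$. Each entry is appended once and removed at most once, so no entry is reused. The count needs a pigeonhole argument. There are $P$ buckets, and bucket $c$ yields $\lfloor|B(c)|/2\rfloor\ge(|B(c)|-1)/2$ pairs. Summing over buckets, the total is at least $(M_0-P)/2\ge(M_0-M_0/3)/2=M_0/3\ge\lfloor M_0/3\rfloor$. The second pass visits every index. Whenever a bucket still holds two entries and the output is not full, that bucket is visited again at a later index, since its remaining entries have indices not yet scanned. I would make this precise by showing that the second pass exhausts each bucket down to at most one entry unless the cap $\lfloor M_0/3\rfloor$ is reached first.

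\emph{Distribution.} I would first treat the ideal case $\gamma=0$, with i.i.d.\ inputs from $D_{L',t}$. Fix a realization of the labels $(c_a)$; this fixes the pairing map. Conditional on the labels, the entries are independent, and each is distributed as $D_{L+\mathbf c_a,t}$. For a pair in coset $c$, the difference $\mathbf x-\mathbf x'$ has the law
\[
\Pr[\mathbf x-\mathbf x'=\mathbf z]=\sum_{\mathbf x\in L+c}\frac{\rho_t(\mathbf x)\rho_t(\mathbf x-\mathbf z)}{\rho_t(L+c)^2}.
\]
By the standard identity $\|\mathbf x\|^2+\|\mathbf x-\mathbf z\|^2=2\|\mathbf x-\mathbf z/2\|^2+\|\mathbf z\|^2/2$, this equals
\[
\rho_{\sqrt2 t}(\mathbf z)\,\frac{\rho_{t/\sqrt2}(L+c-\mathbf z/2)}{\rho_t(L+c)^2}.
\]
The hypothesis $t\ge\sqrt2\,\eta_\varepsilon(L)$ gives $t/\sqrt2\ge\eta_\varepsilon(L)$, so by \eqref{up:shift-mass} the numerator mass lies within a factor $(1\pm\varepsilon)$ of $(t/\sqrt2)^d/\det L$, uniformly in the shift $c-\mathbf z/2$. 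Likewise each $\rho_t(L+c)$ lies within a factor $(1\pm\varepsilon)$ of a constant, since $\eta_\varepsilon(L)\le t$. The resulting ratio is therefore within $e^{\pm 3\varepsilon}$ of $D_{L,\sqrt2t}(\mathbf z)$, using $\varepsilon\le1/2$ to convert $(1+\varepsilon)/(1-\varepsilon)^2$ into an exponential. Because distinct output entries use disjoint input pairs, which are conditionally independent given the labels, conditioning on the other outputs (and on the labels) leaves each output $3\varepsilon$-similar. Averaging over the labels consistent with the conditioning event then preserves the two-sided bound.

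\emph{General $\gamma$.} Here the step I expect to be the main obstacle is that the inputs are only strongly conditionally $\gamma$-similar and are not independent. I would compare the joint law of the input list with the product law $D_{L',t}^{M_0}$ through the following chain. First, strong conditional similarity of each coordinate gives conditional similarity for any fixed set of two coordinates given the rest: writing the joint conditional as a product of two one-coordinate conditionals, we get a factor $e^{\pm2\gamma}$ relative to $D^{2}$. Second, to condition on the other outputs rather than on the other inputs, I would sum over the unobserved inputs and their labels; ratio bounds survive this marginalization. This is where an additional $e^{\pm 2\gamma}$ appears, because the labels of the target pair are also conditioned upon, and we must pass between the actual label law and the reference label law. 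Combining these gives $4\gamma$. Adding the $3\varepsilon$ from the Gaussian convolution step yields the stated parameter $4\gamma+3\varepsilon$. Since the lemma is quoted from HGH~\cite[Lemma~3.5]{HGH}, I would follow that bookkeeping. I would check in particular that the ordered, label-determined pairing map makes the event ``entries $a,a'$ form output $j$'' measurable with respect to the labels alone, which is what allows the conditioning to be carried out label by label.
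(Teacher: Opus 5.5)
Your correctness argument is the paper's: the DEL counting $\sum_c\lfloor M_c/2\rfloor\ge(M_0-P)/2\ge M_0/3$. Your remark that every entry still in a bucket has an unscanned index is the right way to see that the second pass realizes all these pairs before the cap.

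For the distributional claim the paper does no computation. It identifies the coset label with HGH's cyclic quotient label, so Algorithm~\ref{alg:bucket-and-combine} is literally the operation of \cite[Lemma~3.5]{HGH}, and it cites that lemma. Your reconstruction is a more self-contained route and uses the same mechanism as the paper's Lemma~\ref{uf:fixed-pair-removal}. To make your $4\gamma$ precise:
\begin{itemize}
\item Condition on the whole label vector together with all entries outside the pair feeding output $j$. This is a valid partition, since that pair is determined by the labels.
\item The pair is then $2\gamma$-similar to $D_{L',t}^{\otimes2}$.
\item Renormalizing on $\{\mathbf x,\mathbf x'\in L+c\}$ costs a second $2\gamma$.
\item Summing over $\mathbf x-\mathbf x'=\mathbf z$ transfers the bound to the difference.
\item Averaging handles conditioning on the other outputs, which are functions of the conditioning data.
\end{itemize}
Marginalization itself costs nothing; the second $2\gamma$ comes purely from the renormalization.

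The step that fails as written is the $3\varepsilon$ conversion. Bounding the numerator and the two factors $\rho_t(L+c)$ separately allows a ratio as large as $(1+\varepsilon)/(1-\varepsilon)^2$. But $\ln\frac{1+\varepsilon}{(1-\varepsilon)^2}=3\varepsilon+\varepsilon^2/2+O(\varepsilon^3)$ exceeds $3\varepsilon$ for every $\varepsilon\in(0,1/2]$; at $\varepsilon=1/2$ it is $\ln6>3/2$. Comparing with $D_{L,\sqrt2t}(\mathbf z)$ would also require approximating $\rho_{\sqrt2t}(L)$, which you do not account for.

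The fix is not to approximate the denominator at all. Since $\rho_t(L+c)^2$ is independent of $\mathbf z$, the difference law is proportional to $\rho_{\sqrt2t}(\mathbf z)\kappa(\mathbf z)$ with $\kappa(\mathbf z)=\rho_{t/\sqrt2}(L+c-\mathbf z/2)$. By \eqref{up:shift-mass}, $\kappa$ lies in an interval of endpoint ratio $(1+\varepsilon)/(1-\varepsilon)$, uniformly in $c$ and $\mathbf z$. Both this law and $D_{L,\sqrt2t}$ are normalized, so their pointwise ratio is $\kappa(\mathbf z)/\mathbb E_{D_{L,\sqrt2t}}[\kappa]\in[\frac{1-\varepsilon}{1+\varepsilon},\frac{1+\varepsilon}{1-\varepsilon}]$. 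Moreover $\ln\frac{1+\varepsilon}{1-\varepsilon}\le3\varepsilon$ on $(0,1/2]$, because its derivative $2/(1-\varepsilon^2)$ is at most $8/3$ there. This is exactly the $\log\frac{1+\varepsilon}{1-\varepsilon}$ term of \eqref{uf:fixed-pair-error}. With it your bookkeeping gives $4\gamma+3\varepsilon$, and you no longer need $t\ge\eta_\varepsilon(L)$ for the denominators.
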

\begin{proof}
Two vectors in the same coset modulo $L$ have difference in $L$.
If bucket $c$ initially contains $M_c$ entries, scanning all its original
labels makes $\lfloor M_c/2\rfloor$ disjoint pairs available. Since
\[
 \sum_c\lfloor M_c/2\rfloor\ge(M_0-P)/2\ge M_0/3,
\]
the algorithm reaches the prescribed output length. This is the counting
argument of DEL~\cite[Lemma~3.2]{DEL25}.

The distributional claim is exactly HGH~\cite[Lemma~3.5]{HGH}, with
the notation $(\Lambda_i,\Lambda_i',M,s,\delta)$ replaced by
$(L,L',M_0,t,\gamma)$. Its cyclic quotient label is the coset label
used by Algorithm~\ref{alg:bucket-and-combine}; hence its ordered,
disjoint pairing rule is the same operation.
\end{proof}

Starting from $\gamma_0=0$, lifting followed by
Lemma~\ref{up:ordered-pairing} gives
$\gamma_i=4(\gamma_{i-1}+3\varepsilon)+3\varepsilon$.
Consequently, combining the two operations at each level gives
\begin{equation}\label{up:del-recurrence}
 \gamma_r=5(4^r-1)\varepsilon,\qquad
 3^rN\longmapsto N,\qquad t_0\longmapsto2^{r/2}t_0,
\end{equation}
provided the two smoothing conditions and the quotient-size bound
hold at every level.

\subsection{Auxiliary lattices and shifted sampling}
\label{up:hgh}

\paragraph{Random auxiliary lattices.}
Choose a prime $P$ and a uniform nonzero
$\mathbf z\in\mathbb F_P^b$. Define
\begin{equation}\label{up:hgh-auxiliary}
 \Gamma=q\mathbb Z^b+(q/P)\mathbf z\mathbb Z,
 \qquad \Gamma/q\mathbb Z^b\simeq\mathbb Z/P\mathbb Z,
 \qquad \det\Gamma=q^b/P.
\end{equation}
The quotient has only $P$ classes, independent of the block dimension.
Its dual is
$\Gamma^*=q^{-1}\{\mathbf u\in\mathbb Z^b:
\langle\mathbf u,\mathbf z\rangle=0\pmod P\}$.
If $P^{1/b}\ge16$ and $\sqrt bP^{1/b}\le P$, then
\cite[Lemma~3.2]{HGH} gives, with probability at least $1-2^{1-b}$,
simultaneously for $0<\varepsilon<1$,
\begin{equation}\label{up:hgh-smoothing}
 \eta_\varepsilon(\Gamma)
 \le16qP^{-1/b}
       \sqrt{\max\{1,\ln(1/\varepsilon)/b\}}.
\end{equation}
\paragraph{Implementing the shifted sampler.}
To sample from the random auxiliary lattices, we use the
shifted Gaussian algorithm of Aggarwal, Dadush, and
Stephens-Davidowitz~\cite[Theorem~22]{ADS15}, which supports the
required widths. HGH26 uses the same algorithm in its lifting
steps. We record the specialization needed here.

\begin{lemma}[Shifted Gaussian sampling]\label{up:shifted-sampler}
Let $L\subset\mathbb R^b$ be a full-rank lattice given by a rational
basis, and let $\mathbf c\in\mathbb Q^b$. There is an absolute constant
$c_{\rm DGS}>0$ such that, whenever
\begin{equation}\label{up:hgh-shifted}
 t>\operatorname{dist}(\mathbf c,L)e^{-\sqrt b},
\end{equation}
one can sample within statistical distance $2^{-c_{\rm DGS}b^2}$
of $D_{L,t,\mathbf c}$ in time and space
$2^{b+O(\sqrt b\log b)}=2^{b+o(b)}$, up to polynomial input-bit
factors.
\end{lemma}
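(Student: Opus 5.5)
This lemma is a specialization of \cite[Theorem~22]{ADS15}, so the plan is to reduce to that theorem and check that its hypotheses and output match ours. First, we recall the form of Theorem~22. Given a basis of a rank-$b$ lattice $L$, a target $\mathbf c$, and a width $t$ with $t\ge\operatorname{dist}(\mathbf c,L)\cdot 2^{-\sqrt b}$ (or a similar subexponential factor), the ADS algorithm outputs a sample within statistical distance $2^{-\Omega(b^2)}$ of $D_{L,t,\mathbf c}$. Its time and space are $2^{b+O(\sqrt b\log b)}$ times a polynomial in the input size. Our hypothesis~\eqref{up:hgh-shifted} uses the factor $e^{-\sqrt b}$. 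Since $e^{-\sqrt b}\ge 2^{-\sqrt b\cdot\log_2 e}$, I will compare the two thresholds directly. If ADS states its condition with $2^{-\sqrt b}$, our condition is stronger whenever $e^{-\sqrt b}\ge 2^{-\sqrt b}$, which always holds. So \eqref{up:hgh-shifted} implies the ADS hypothesis, possibly after adjusting constants inside the $O(\sqrt b\log b)$ term.

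Second, I will handle the coset convention. ADS samples from $D_{L-\mathbf c,t}$, i.e.\ from a shifted lattice coset. Adding $\mathbf c$ back is an exact bijection between $L-\mathbf c$ and $L$. It preserves statistical distance, and by the conventions of Section~\ref{up:lattices} it produces exactly $D_{L,t,\mathbf c}$. The rational basis and rational $\mathbf c$ make all arithmetic exact. Their bit lengths enter only as the polynomial input-bit factors.

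Third, I will account for the error and the cost. The statistical distance $2^{-\Omega(b^2)}$ becomes $2^{-c_{\rm DGS}b^2}$ for an absolute constant $c_{\rm DGS}>0$ extracted from ADS. If ADS gives only $2^{-\Omega(b)}$ for a single sample, I would instead boost the accuracy by its built-in repetition parameter, which costs only a polynomial factor. The running time $2^{b+O(\sqrt b\log b)}$ is $2^{b+o(b)}$.

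The main obstacle is bookkeeping rather than mathematics. The precise form of the threshold and the error exponent in \cite{ADS15} must be matched to the constants stated here. If the exact ADS threshold uses a different subexponential factor, I would rescale $b$-dependent constants, or precompose with a polynomial-time reduction of $\mathbf c$ modulo an LLL basis. That reduction changes $\operatorname{dist}$ by at most a $2^{O(b)}$ factor only in the sampling step, not in the hypothesis. I would then confirm that the absorbed factor stays within $o(b)$ in the exponent.
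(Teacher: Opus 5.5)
Your overall route is the paper's: invoke \cite[Theorem~22]{ADS15}, keep one output sample from $D_{L-\mathbf c,t}$, and translate by $\mathbf c$. However, the step where you check the hypothesis fails. You assert that $e^{-\sqrt b}\ge 2^{-\sqrt b}$ ``always holds''. In fact $e>2$ gives $e^{-\sqrt b}<2^{-\sqrt b}$. So the lemma's condition $t>\operatorname{dist}(\mathbf c,L)e^{-\sqrt b}$ is \emph{weaker} than $t\ge\operatorname{dist}(\mathbf c,L)2^{-\sqrt b}$ and does not imply it: widths strictly between the two thresholds are allowed by the lemma but not by the threshold you attributed to ADS.

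None of your fallbacks repairs this. The discrepancy is the non-constant factor $(e/2)^{\sqrt b}$, so it cannot be absorbed by rescaling constants. Reducing $\mathbf c$ modulo an LLL basis moves $\mathbf c$ by a lattice vector, which leaves $\operatorname{dist}(\mathbf c,L)$, and hence the ratio $t/\operatorname{dist}(\mathbf c,L)$, unchanged. Likewise, repeating a sampler does not by itself lower its statistical distance, so the accuracy-boosting contingency is not a valid tool either, though it is not needed here.

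The missing point is that \cite[Theorem~22]{ADS15} is stated with a free slack function $F$: it accepts widths above $\operatorname{dist}(\mathbf c,L)/F(b)$, and its running time degrades with $F$. The paper simply instantiates $F(b)=e^{\sqrt b}$, so the hypothesis matches exactly and no comparison of thresholds is needed. It is precisely this choice that produces the $O(\sqrt b\log b)$ term in the exponent, while the $2^{-\Omega(b^2)}$ error is read off from the same theorem. In your write-up the $\sqrt b$ in the threshold and the $\sqrt b\log b$ in the cost appear as unrelated facts, whereas they are linked through the choice of $F$.
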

Apply~\cite[Theorem~22]{ADS15} with $F(b)=e^{\sqrt b}$.
It supplies at least one sample from $D_{L-\mathbf c,t}$. Retain the
first and translate by $\mathbf c$. The time bound becomes
$2^{b+O(\sqrt b\log b)}$, with error $2^{-\Omega(b^2)}$.

Since $q\mathbb Z^b\subseteq\Gamma$, every center is within
$q\sqrt b/2$ of $\Gamma$, so the width condition can be checked
before the adaptive centers are known.

\begin{lemma}\label{up:adaptive-error}
Suppose call $j$ uses fresh randomness and, for every possible common
history, its implemented law is within total variation distance
$\xi_j$ of the reference call. The joint output laws of at most $J$
adaptive calls are then within distance $\sum_{j=1}^J\xi_j$.
\end{lemma}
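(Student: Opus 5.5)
We prove Lemma~\ref{up:adaptive-error} by a standard hybrid argument over the $J$ calls. Fix the reference process, in which every call samples exactly from its reference law given the common history, and the implemented process, in which each call instead uses the implemented sampler. For $0\le k\le J$ let $H_k$ denote the hybrid process in which the first $k$ calls are implemented and the remaining calls follow the reference law. The calls are chosen adaptively, so the $j$th call's parameters, such as the centers and lattices, are deterministic functions of the history of previous outputs and of independent algorithmic randomness. Thus $H_0$ is the reference joint law and $H_J$ is the implemented joint law. The triangle inequality gives
\[
 \operatorname{TV}(H_0,H_J)\le\sum_{k=1}^J\operatorname{TV}(H_{k-1},H_k),
\]
so it suffices to show $\operatorname{TV}(H_{k-1},H_k)\le\xi_k$.

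Fix $k$. The processes $H_{k-1}$ and $H_k$ agree on the law of the history $\mathcal H$ before call $k$, since both implement calls $1,\ldots,k-1$. They also agree on the conditional law of everything after call $k$ given $(\mathcal H,\text{output of call }k)$, since both use the reference law for later calls and their randomness is fresh. They differ only in the conditional law of call $k$'s output given $\mathcal H$, and by hypothesis these two laws are within $\xi_k$ for every history $\mathcal H=\mathbf h$ of positive probability. We then use two facts about total variation distance. First, applying the same Markov kernel (the remaining calls) to two distributions does not increase their distance (data processing). Second, for joint laws $P(\mathbf h)K_1(\cdot\mid\mathbf h)$ and $P(\mathbf h)K_2(\cdot\mid\mathbf h)$ sharing the first marginal, the distance equals $\sum_{\mathbf h}P(\mathbf h)\operatorname{TV}(K_1(\cdot\mid\mathbf h),K_2(\cdot\mid\mathbf h))\le\xi_k$. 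Combining them gives $\operatorname{TV}(H_{k-1},H_k)\le\xi_k$. If fewer than $J$ calls occur on some history, we pad with trivial calls, which have $\xi_j$ contribution $0$ or at most $\xi_j$, so the bound is unaffected.

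The one point needing care is that the number of calls and their parameters may depend on earlier outputs. This is why the hypothesis is stated uniformly over every possible common history: the bound $\xi_k$ must hold for each conditioning value rather than only on average. All sets involved are countable, since outputs are lattice vectors with rational data, so the sums above are legitimate. We do not expect any step to be a genuine obstacle.
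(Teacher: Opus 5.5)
Your hybrid argument is correct: telescoping over $H_0,\dots,H_J$ with a shared history law before call $k$, the identical reference kernel afterwards (which is where fresh randomness is needed), and the uniform-over-histories hypothesis gives exactly $\operatorname{TV}(H_{k-1},H_k)\le\xi_k$. The paper gives no separate proof and simply invokes the sequential composition bound of \cite[Lemma~3.3]{HGH}, which is this standard argument (equivalently, a step-by-step coupling with a union bound), so your route matches it.
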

This is the sequential composition bound of~\cite[Lemma~3.3]{HGH}.
Section~\ref{ue:section} verifies the shifted sampling conditions and
error budget. Section~\ref{ue:common-sampling} gives finite sampling
tools for both recovery frameworks. Section~\ref{uimpl:linear}
specifies the finite parameters for linear recovery.

\subsection{Gaussian mass from uniform images}
\label{ug:adapters}

The prefix smoothing proof reduces to bounding the expected nonzero
dual mass. The relevant property of a public operator is how widely
it distributes the image of each nonzero label. Over a field, uniform
images give a direct mass calculation even when the matrix entries
are dependent. The structured rings considered later require a finer
decomposition by ideals or additive orders.

Let $q$ be prime and let
$\mathbf A_m\in\mathbb F_q^{n\times m}$ be random. Define the systematic prefix
\begin{equation}\label{ug:prefix}
 \begin{split}
 L(\mathbf A_m)&=\{(\mathbf u,\mathbf v)\in\mathbb Z^n\times\mathbb Z^m:
                    \mathbf A_m^T\mathbf u+\mathbf v=0\pmod q\},\\
 L(\mathbf A_m)^*&=\mathbb Z^{n+m}
       +q^{-1}\begin{pmatrix}\mathbf A_m\\\mathbf I_m\end{pmatrix}\mathbb Z^m.
 \end{split}
\end{equation}
The identity block proves both surjectivity and the displayed dual
decomposition. Call the family \emph{uniform-image} if, for every fixed
$0\ne\mathbf z\in\mathbb F_q^m$, the vector $\mathbf A_m\mathbf z$ is uniform in
$\mathbb F_q^n$. No joint independence between the images of different
labels is required.

\begin{lemma}\label{ug:uniform-mass}
If $\mathbf A_m$ has the uniform-image property, then for every $T>0$,
\begin{equation}\label{ug:uniform-mass-identity}
 \mathbb E\!\left[\rho_{1/T}
   \bigl(L(\mathbf A_m)^*\setminus\mathbb Z^{n+m}\bigr)\right]
 =q^{-n}\theta(q/T)^n
   \bigl(\theta(q/T)^m-\theta(1/T)^m\bigr).
\end{equation}
Let $I_m(T)=\theta(1/T)^{n+m}-1$. Whenever
$I_m(T)<\varepsilon$, one also has
\begin{equation}\label{ug:uniform-markov}
 \Pr\bigl[\eta_\varepsilon(L(\mathbf A_m))>T\bigr]
 \le
 \frac{q^{-n}\theta(q/T)^n
   (\theta(q/T)^m-\theta(1/T)^m)}
      {\varepsilon-I_m(T)}.
\end{equation}
\end{lemma}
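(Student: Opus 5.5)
Write the dual as a disjoint union of cosets indexed by labels. By the dual decomposition in \eqref{ug:prefix}, every dual vector has the form $q^{-1}(\mathbf A_m\mathbf z+q\mathbf a,\ \mathbf z+q\mathbf b)$ with $\mathbf z\in\{0,\ldots,q-1\}^m$ and $(\mathbf a,\mathbf b)\in\mathbb Z^{n+m}$. The identity block shows that the residue class of $\mathbf z$ modulo $q$ is determined by the vector. The label $\mathbf z=0$ gives exactly $\mathbb Z^{n+m}$. Hence $L(\mathbf A_m)^*\setminus\mathbb Z^{n+m}$ is the disjoint union, over nonzero $\mathbf z\in\mathbb F_q^m$, of the sets
\[
 q^{-1}\bigl(\mathbf w_{\mathbf z}+q\mathbb Z^n\bigr)\times q^{-1}\bigl(\mathbf z+q\mathbb Z^m\bigr),\qquad \mathbf w_{\mathbf z}=\mathbf A_m\mathbf z .
\]
The Gaussian $\rho_{1/T}$ factors across the two coordinate blocks and across individual coordinates. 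So the mass of each piece equals
\[
 \rho_{1/T}\bigl(q^{-1}(\mathbf w_{\mathbf z}+q\mathbb Z^n)\bigr)\cdot\prod_{j}\rho_{1/T}\bigl(q^{-1}(z_j+q\mathbb Z)\bigr).
\]

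Next take expectations. Linearity lets us average each label separately, so no joint independence is needed. For fixed $\mathbf z\ne0$, $\mathbf w_{\mathbf z}$ is uniform on $\mathbb F_q^n$ by the uniform-image assumption. Thus
\[
 \mathbb E\,\rho_{1/T}\bigl(q^{-1}(\mathbf w_{\mathbf z}+q\mathbb Z^n)\bigr)=q^{-n}\rho_{1/T}(q^{-1}\mathbb Z^n)=q^{-n}\rho_{q/T}(\mathbb Z)^n=q^{-n}\theta(q/T)^n,
\]
because the cosets of $q\mathbb Z^n$ tile $\mathbb Z^n$ and $\rho_{1/T}(\mathbf x/q)=\rho_{q/T}(\mathbf x)$. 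Summing the second factor over nonzero $\mathbf z$ gives all cosets of $q\mathbb Z^m$ in $\mathbb Z^m$ except the zero coset. That sum is
\[
 \rho_{q/T}(\mathbb Z^m)-\rho_{q/T}(q\mathbb Z^m)=\theta(q/T)^m-\rho_{1/T}(\mathbb Z^m)=\theta(q/T)^m-\theta(1/T)^m .
\]
Multiplying the two factors proves \eqref{ug:uniform-mass-identity}.

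For \eqref{ug:uniform-markov}, split the nonzero dual mass into two parts:
\[
 \rho_{1/T}\bigl(L^*\setminus\{0\}\bigr)=\rho_{1/T}\bigl(\mathbb Z^{n+m}\setminus\{0\}\bigr)+\rho_{1/T}\bigl(L^*\setminus\mathbb Z^{n+m}\bigr).
\]
The first part is deterministic and equals $I_m(T)$. By the definition \eqref{up:smoothing-definition}, the event $\eta_\varepsilon(L)>T$ implies $\rho_{1/T}(L^*\setminus\{0\})>\varepsilon$. The mass at width $1/T$ is monotone in the parameter, so the event also implies $\rho_{1/T}(L^*\setminus\{0\})\ge\varepsilon$. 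Either way, the random part must be at least $\varepsilon-I_m(T)>0$. Markov's inequality combined with the expectation identity then gives the stated bound.

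The computation is routine. The one point needing care is the coset bookkeeping: checking that the union over labels is disjoint and that each label's dual coset is exactly $\mathbb Z^{n+m}$ shifted by $q^{-1}(\mathbf A_m\mathbf z,\mathbf z)$. The identity block secures this, since distinct residues $\mathbf z$ give distinct $\mathbf v$-cosets. A smaller issue is the boundary case in the infimum definition of $\eta_\varepsilon$, which is why the argument uses the non-strict inequality ${}\ge\varepsilon$.
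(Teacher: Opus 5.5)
Your proof is correct and takes essentially the same route as the paper. You split the non-integer dual mass into label cosets indexed by nonzero $\mathbf z\in\mathbb F_q^m$ and factor each coset's Gaussian mass into two parts: an output factor, which the uniform-image property averages to $q^{-n}\theta(q/T)^n$, and a deterministic systematic factor, whose sum over nonzero labels is $\theta(q/T)^m-\theta(1/T)^m$; you then apply Markov's inequality after subtracting the deterministic integer mass $I_m(T)$. Your boundary-case detour through the non-strict inequality is unnecessary, since monotonicity and the infimum definition already give $\rho_{1/T}(L^*\setminus\{0\})>\varepsilon$ strictly, which is exactly what the paper uses.
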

\begin{proof}
The identity coordinates make the dual cosets indexed by
$\mathbf z\in\mathbb F_q^m$ disjoint.
For $\mathbf z\ne0$, their Gaussian masses factor as
\[
 \rho_{1/T}(\mathbb Z^n+\mathbf A_m\mathbf z/q)
 \rho_{1/T}(\mathbb Z^m+\mathbf z/q).
\]
The first factor has expectation
\[
 q^{-n}\sum_{\mathbf a\in\mathbb F_q^n}
 \rho_{1/T}(\mathbb Z^n+\mathbf a/q)
 =q^{-n}\theta(q/T)^n.
\]
The residue cosets partition $q^{-1}\mathbb Z^n$, which gives
the last equality.
For the systematic factor, the same partition in dimension $m$ gives
\[
 \sum_{\mathbf z\ne0}
 \rho_{1/T}(\mathbb Z^m+\mathbf z/q)
 =\theta(q/T)^m-\theta(1/T)^m.
\]
Multiplying these expressions and using linearity of expectation
proves~\eqref{ug:uniform-mass-identity}.
The argument uses the marginal uniformity of each fixed image.

The nonzero integer part of the dual mass is
$I_m(T)=\theta(1/T)^{n+m}-1$.
If $\eta_\varepsilon(L(\mathbf A_m))>T$, the remaining mass
exceeds $\varepsilon-I_m(T)$.
Markov's inequality gives~\eqref{ug:uniform-markov}.
\end{proof}
\begin{lemma}\label{up:mass-to-smoothing}
Let $\ln n=o(L_0)$, $\varepsilon_0=4ne^{-L_0}$, and let
$\Lambda_i\subseteq\mathbb Z^{a_i}$ be full-rank random lattices with
$a_i\le2n$. If the prescribed widths satisfy $T_i\ge8\sqrt{L_0}$ and
\[
 \mathbb E[\rho_{1/T_i}(\Lambda_i^*\setminus\mathbb Z^{a_i})]
 \le e^{-3L_0}\qquad(1\le i\le r),
\]
then, for all sufficiently large $n$,
\[
 \Pr[\exists i:\eta_{\varepsilon_0}(\Lambda_i)>T_i]
 \le \frac r{2n}e^{-2L_0}.
\]
\end{lemma}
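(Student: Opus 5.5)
The plan is to split the nonzero dual mass into the integer part and the non-integer part, bound the integer part deterministically, and control the non-integer part by Markov's inequality. Concretely, for each $i$, if $\eta_{\varepsilon_0}(\Lambda_i)>T_i$ then
\[
 \rho_{1/T_i}(\Lambda_i^*\setminus\{0\})>\varepsilon_0 .
\]
Since $\Lambda_i\subseteq\mathbb Z^{a_i}$ is full rank, we have $\mathbb Z^{a_i}\subseteq\Lambda_i^*$, and the nonzero dual mass equals
\[
 \bigl(\theta(1/T_i)^{a_i}-1\bigr)
 +\rho_{1/T_i}(\Lambda_i^*\setminus\mathbb Z^{a_i}).
\]
This mirrors the decomposition in Lemma~\ref{ug:uniform-mass}.

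\textbf{Integer part.} First I bound $I_i:=\theta(1/T_i)^{a_i}-1$. We have $\theta(1/T)=1+2\sum_{k\ge1}e^{-\pi k^2T^2}\le1+3e^{-\pi T^2}$ for $T\ge1$. With $T_i\ge8\sqrt{L_0}$ this gives $\pi T_i^2\ge64\pi L_0$. Since $a_i\le2n$, it follows that
\[
 I_i\le\exp\bigl(6ne^{-64\pi L_0}\bigr)-1\le12ne^{-64\pi L_0}\le\varepsilon_0/2 .
\]
This holds for large $n$, because $\ln n=o(L_0)$ makes $ne^{-64\pi L_0}$ tiny and $e^{-(64\pi-1)L_0}\le1/6$.

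\textbf{Non-integer part.} On the bad event the non-integer mass exceeds $\varepsilon_0-I_i\ge\varepsilon_0/2=2ne^{-L_0}$. Markov's inequality and the hypothesis then give
\[
 \Pr[\eta_{\varepsilon_0}(\Lambda_i)>T_i]\le\frac{e^{-3L_0}}{2ne^{-L_0}}=\frac{1}{2n}e^{-2L_0}.
\]
A union bound over $1\le i\le r$ yields the claimed $\frac{r}{2n}e^{-2L_0}$.

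The only delicate point is the integer-part estimate. One must check that the constant $8$ in $T_i\ge8\sqrt{L_0}$ gives enough room, which is easy since $64\pi\gg1$. One must also use $\ln n=o(L_0)$ to absorb the factor $a_i\le2n$, with "sufficiently large $n$" covering the elementary inequalities $e^x-1\le2x$ and $12ne^{-64\pi L_0}\le2ne^{-L_0}$. Everything else is Markov plus a union bound.
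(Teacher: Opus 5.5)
Your proof is correct and follows essentially the same route as the paper: you split the nonzero dual mass into the deterministic integer part $\theta(1/T_i)^{a_i}-1\le\varepsilon_0/2$ (using $T_i\ge8\sqrt{L_0}$, $a_i\le2n$, and $\ln n=o(L_0)$) and the non-integer part, then apply Markov's inequality to the latter and take a union bound over $i$. The only differences are cosmetic; for example, you use $\theta(1/T)-1\le3e^{-\pi T^2}$ where the paper uses the constant $4$.
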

\begin{proof}
Write
\[
 I_i=\rho_{1/T_i}(\mathbb Z^{a_i}\setminus\{0\}),
 \qquad
 S_i=\rho_{1/T_i}(\Lambda_i^*\setminus\mathbb Z^{a_i}).
\]
Since $\Lambda_i\subseteq\mathbb Z^{a_i}$, the total nonzero
dual mass is $I_i+S_i$.
The bound $\theta(1/T)-1\le4e^{-\pi T^2}$ for $T\ge1$ gives
\[
 I_i\le\exp(8ne^{-\pi T_i^2})-1
 \le2ne^{-L_0}=\varepsilon_0/2
\]
for all sufficiently large $n$.
Thus $\eta_{\varepsilon_0}(\Lambda_i)>T_i$ implies
$S_i>\varepsilon_0/2$.
By Markov's inequality,
\[
 \Pr[\eta_{\varepsilon_0}(\Lambda_i)>T_i]
 \le\frac{\mathbb E[S_i]}{\varepsilon_0/2}
 \le\frac{e^{-3L_0}}{2ne^{-L_0}}
 =\frac1{2n}e^{-2L_0}.
\]
A union bound over the prescribed indices proves the claim,
including when the lattices depend on the same public draw.
\end{proof}

\begin{corollary}
\label{ug:uniform-geometry}
Let $L_0\to\infty$, $L_0=o(n)$, and $\ln n=o(L_0)$. Prescribe
$(m_i,T_i)_{i=1}^r$ before the public draw, with
\begin{equation}\label{ug:uniform-schedule}
 1\le m_i\le n,\qquad 8\sqrt{L_0}\le T_i\le q,\qquad
 \max_i\frac{m_i\ln(q/T_i)}{n\ln T_i}=o(1),
\end{equation}
and assume every $\mathbf A_{m_i}$ has the uniform-image property under the
same distribution of the public matrix. For $\varepsilon_0=4ne^{-L_0}$ and all sufficiently
large $n$,
\begin{equation}\label{ug:uniform-failure}
 \Pr\bigl[\exists i:\eta_{\varepsilon_0}(L(\mathbf A_{m_i}))>T_i\bigr]
 \le \frac r{2n}e^{-2L_0}.
\end{equation}
\end{corollary}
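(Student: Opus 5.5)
The plan is to combine Lemma~\ref{ug:uniform-mass} with Lemma~\ref{up:mass-to-smoothing}, applied to the lattices $\Lambda_i=L(\mathbf A_{m_i})\subseteq\mathbb Z^{n+m_i}$. These have ambient dimension $a_i=n+m_i\le2n$ and widths $T_i\ge8\sqrt{L_0}$. Once I show
\[
 \mathbb E\bigl[\rho_{1/T_i}(L(\mathbf A_{m_i})^*\setminus\mathbb Z^{n+m_i})\bigr]\le e^{-3L_0}
\]
for every $i$ and all large $n$, Lemma~\ref{up:mass-to-smoothing} yields~\eqref{ug:uniform-failure} directly, including the union bound over scheduled indices on a common draw. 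Its hypothesis $\ln n=o(L_0)$ is assumed here.

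By~\eqref{ug:uniform-mass-identity}, the expectation equals $q^{-n}\theta(q/T_i)^n(\theta(q/T_i)^{m_i}-\theta(1/T_i)^{m_i})$, which is at most $q^{-n}\theta(q/T_i)^{n+m_i}$. Since $T_i\le q$, we have $q/T_i\ge1$, and the bound $\theta(t)=t\theta(1/t)\le t(1+4e^{-\pi t^2})\le 2t$ holds for $t\ge1$. Writing $t_i=q/T_i$, the expectation is therefore bounded by
\[
 q^{-n}(2t_i)^{n+m_i}=2^{n+m_i}T_i^{-n}t_i^{m_i}
 =\exp\bigl(-n\ln T_i+m_i\ln(q/T_i)+(n+m_i)\ln2\bigr).
\]
The schedule condition~\eqref{ug:uniform-schedule} makes $m_i\ln(q/T_i)=o(n\ln T_i)$. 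Also $\ln T_i\ge\ln(8\sqrt{L_0})\to\infty$, so $(n+m_i)\ln2\le2n\ln2=o(n\ln T_i)$. The exponent is thus at most $-(1-o(1))n\ln T_i\le-n/2$ for large $n$, uniformly in $i$, because the $o(1)$ in the schedule is a maximum over $i$.

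It remains to compare with $e^{-3L_0}$, which is where $L_0=o(n)$ enters: $n/2\ge3L_0$ eventually. This step is mostly bookkeeping. The one point to watch is uniformity of the $o(1)$ terms across the prescribed schedule, which holds since $\ln T_i$ is bounded below by a common quantity tending to infinity.

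Finally, I should confirm that the uniform-image hypothesis is used only through the marginal law of each $\mathbf A_{m_i}$. This is exactly what Lemma~\ref{ug:uniform-mass} requires, so sharing one public draw across prefixes causes no difficulty.
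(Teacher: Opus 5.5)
Your proposal is correct and follows the paper's proof essentially step for step. You bound the expected noninteger dual mass from Lemma~\ref{ug:uniform-mass} via Poisson summation on $\theta(q/T_i)$, absorb the prefix term and the constant term into $o(n\ln T_i)$ uniformly in $i$, compare with $3L_0$ using $L_0=o(n)$, and conclude with Lemma~\ref{up:mass-to-smoothing}; the only differences are cosmetic (you use $\theta(t)\le 2t$ where the paper uses $\theta(t)\le t\,\theta(1)$, and you compare via $n/2\ge 3L_0$ rather than the paper's ratio argument).
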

\begin{proof}
Put $c_\theta=\ln\theta(1)$.
For $T_i\le q$, Poisson summation gives
$\theta(q/T_i)=(q/T_i)\theta(T_i/q)\le(q/T_i)\theta(1)$.
Lemma~\ref{ug:uniform-mass} therefore bounds the expected
noninteger dual mass by $e^{-H_i}$, where
\[
 H_i=n\ln T_i-m_i\ln(q/T_i)-(n+m_i)c_\theta.
\]
The prefix ratio in~\eqref{ug:uniform-schedule} makes the second
term $o(n\ln T_i)$ uniformly in $i$.
Also $m_i\le n$ and $\ln T_i\to\infty$ uniformly, so the last term is
$o(n\ln T_i)$.
Consequently $H_i=(1-o(1))n\ln T_i$ uniformly.
Since $T_i\ge8\sqrt{L_0}$ and $L_0=o(n)$,
\[
 \frac{n\ln T_i}{L_0}
 \ge\frac n{L_0}\left(\ln8+\tfrac12\ln L_0\right)
 \longrightarrow\infty.
\]
Hence $H_i\ge3L_0$ for all prescribed indices and all sufficiently
large $n$.
Apply Lemma~\ref{up:mass-to-smoothing} to obtain~\eqref{ug:uniform-failure}.
\end{proof}

We will apply this calculation to independent prime-modulus matrices
and to multiplication by a random field element.

\section{Secret Recovery by Coordinate Scoring}
\label{ur:section}

Let $q\ge2$, let $1\le h\le n$, and let
$\mathbf M\in(\mathbb Z/q\mathbb Z)^{h\times n}$ be a fixed public matrix. The
observed relation is
\begin{equation}\label{ur:model}
 \mathbf y=\mathbf M\mathbf s+\mathbf e\pmod q,\qquad
 \mathbf s\in\mathbb Z^n,\quad \mathbf e\in\mathbb Z^h.
\end{equation}
A rounded observation enters this model through an integer lift, with
$\mathbf e$ equal to the resulting modular residue. The argument allows
$\mathbf s$ and $\mathbf e$ to depend on $\mathbf M$.

\subsection{Systematic coordinates and the joint energy bound}

To test a coordinate, append its proposed value to the public label.
The corresponding lattice therefore has one identity coordinate for
each secret coordinate. Let $\mathbf M_m\in\mathbb Z_q^{h\times m}$
consist of the first $m$ columns of $\mathbf M$, and define
\begin{equation}\label{ur:lattices}
 \begin{gathered}
 \mathbf F=[\mathbf M^T\mid\mathbf I_n],\qquad \Lambda=\Lambda_n,\\
 \Lambda_m=\{(\mathbf u,\mathbf v)\in\mathbb Z^h\times\mathbb Z^m:
                  \mathbf M_m^T\mathbf u+\mathbf v=0\pmod q\},\quad 0\le m\le n,\\
 \mathcal D=q\Lambda^*
       =q\mathbb Z^{h+n}+\mathbf F^T\mathbb Z^n.
 \end{gathered}
\end{equation}
The systematic lattice identities from Section~\ref{ui:prelim} give
$\Lambda_0=\mathbb Z^h$ and $\det\Lambda_m=q^m$, as well as the displayed
formula for $\mathcal D$. The appended coordinates are chosen by the
algorithm, so the observation count remains $h$.

Assume a known joint energy bound and coordinate bound
\begin{equation}\label{ur:certificate}
 \begin{gathered}
 \|\mathbf s\|^2+\|\mathbf e\|^2\le nV,\qquad |s_i|\le B,\qquad V\ge1,\\
 1\le B\le\lfloor\sqrt{nV}\rfloor,\qquad R^2=nV+3B^2.
 \end{gathered}
\end{equation}
Here $V$ is rational and $B$ is an integer. The energy bound itself
provides $B=\lfloor\sqrt{nV}\rfloor$. Any smaller valid bound reduces
the number of candidates. The case $B=0$ is immediate. If the energy
bound holds with probability $1-p_E$, we condition on it for now and
include $p_E$ in the final failure bound. Choose any integer lift of $\mathbf y$.
The scores depend only on its residue modulo $q$.

\begin{lemma}[Coordinate energy gap]\label{ur:energy-gap}
For $i\in[n]$ and a candidate $c\in[-B,B]\cap\mathbb Z$, define
\begin{equation}\label{ur:candidates}
 \mathbf v_{i,c}=(\mathbf y,c\mathbf e_i),\qquad
 \mathbf E_{i,c}=(\mathbf e,c\mathbf e_i-\mathbf s).
\end{equation}
Here $\mathbf e_i$ is the $i$th standard basis vector of $\mathbb Z^n$.
Then, simultaneously for all candidates,
\begin{align}
 \mathbf v_{i,c}&=\mathbf F^T\mathbf s+\mathbf E_{i,c}\pmod q,\nonumber\\
 \|\mathbf E_{i,c}\|^2&\le R^2,\label{ur:radius}\\
 \|\mathbf E_{i,c}\|^2-\|\mathbf E_{i,s_i}\|^2&=(c-s_i)^2.
 \label{ur:gap}
\end{align}
\end{lemma}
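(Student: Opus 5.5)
This is a direct computation, and I will verify the three displayed claims in order.

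\emph{Modular identity.} Since $\mathbf F=[\mathbf M^T\mid\mathbf I_n]$, we have $\mathbf F^T=\begin{pmatrix}\mathbf M\\ \mathbf I_n\end{pmatrix}$, so $\mathbf F^T\mathbf s=(\mathbf M\mathbf s,\mathbf s)$. Adding $\mathbf E_{i,c}=(\mathbf e,c\mathbf e_i-\mathbf s)$ gives
\[
 (\mathbf M\mathbf s+\mathbf e,\;c\mathbf e_i).
\]
By \eqref{ur:model}, the first block is congruent to $\mathbf y$ modulo $q$, for whatever integer lift of $\mathbf y$ was chosen. Hence this vector is $\mathbf v_{i,c}\pmod q$. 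The congruence holds in every block and for every $i,c$ simultaneously.

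\emph{Exact energy and the gap.} Expand the squared norm blockwise:
\[
 \|\mathbf E_{i,c}\|^2=\|\mathbf e\|^2+\sum_{j\ne i}s_j^2+(c-s_i)^2
 =\|\mathbf e\|^2+\|\mathbf s\|^2-s_i^2+(c-s_i)^2 .
\]
The first three terms do not depend on $c$. Setting $c=s_i$ and subtracting gives \eqref{ur:gap} exactly. This requires no bound on $s_i$, only that $\mathbf E_{i,s_i}$ is defined by the same formula.

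\emph{Radius bound.} From the same expansion,
\[
 \|\mathbf E_{i,c}\|^2\le\|\mathbf s\|^2+\|\mathbf e\|^2-s_i^2+(c-s_i)^2 .
\]
Use $(c-s_i)^2\le 2c^2+2s_i^2$. This gives $-s_i^2+(c-s_i)^2\le 2c^2+s_i^2\le 3B^2$, because $|c|\le B$ and $|s_i|\le B$ by \eqref{ur:certificate}. Together with $\|\mathbf s\|^2+\|\mathbf e\|^2\le nV$, this yields $\|\mathbf E_{i,c}\|^2\le nV+3B^2=R^2$.

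The only step needing any care is this radius bound. The point there is to keep the $-s_i^2$ term rather than discard it, which produces the constant $3$ instead of $4$.
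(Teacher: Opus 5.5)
Your proof is correct and follows essentially the paper's argument: the congruence comes from the identity block of $\mathbf F^T$, and the paper expands $\|\mathbf E_{i,c}\|^2=\|\mathbf e\|^2+\|\mathbf s\|^2+c^2-2cs_i$, bounds $c^2-2cs_i\le 3B^2$, and subtracts at $c=s_i$ for the gap. Since $-s_i^2+(c-s_i)^2=c^2-2cs_i$, your decomposition is the same expansion, rearranged so that the $c$-dependent term $(c-s_i)^2$ and hence the gap can be read off directly.
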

\begin{proof}
The first statement is the public congruence and the identity block.
Expansion gives
\[
 \|\mathbf E_{i,c}\|^2=\|\mathbf e\|^2+\|\mathbf s\|^2+c^2-2cs_i.
\]
The last two terms sum to at most $3B^2$. Substituting $c=s_i$ and
subtracting proves the exact gap.
\end{proof}

\subsection{A score estimate on the entire energy ball}

The distance gap in Lemma~\ref{ur:energy-gap} is hidden because $\mathbf s$ and
$\mathbf e$ are unknown. We reveal it by averaging a periodic score over short
lattice vectors. The periodic Gaussian estimator of Aharonov and
Regev~\cite{AR05}, also used in decoding with preprocessing~\cite{DRS14},
connects this average to Gaussian mass around the error vector.
Provable dual attacks analyze such scores through lattice geometry
\cite{PS24,QX26}. Here we need an estimate uniform over the candidate
ball, including its boundary, so that one list can test every label
consistent with the energy bound. For $f>0$, define the ideal score
\[
 \mu(\mathbf v)=\mathbb E_{\mathbf X\sim D_{\Lambda,q/f}}
 \left[\cos\!\left(\frac{2\pi}{q}\langle \mathbf X,\mathbf v\rangle\right)\right].
\]
The phase of $\mathbf F^T\mathbf a$ is an integer multiple of $2\pi$ on
$\Lambda$.
Thus Poisson summation in~\eqref{up:fourier-identity} gives, for every
$\mathbf a\in\mathbb Z^n$ and $\mathbf E\in\mathbb Z^{h+n}$,
\begin{equation}\label{ur:fourier}
 \mu(\mathbf F^T\mathbf a+\mathbf E)=
 \frac{\rho_f(\mathcal D-\mathbf E)}{\rho_f(\mathcal D)}.
\end{equation}
To measure the score and its coordinate gap, put
\begin{equation}\label{ur:signal}
 \tau=e^{-\pi R^2/f^2},\qquad
 g=\tau(1-e^{-\pi/f^2}).
\end{equation}
Here $\tau$ bounds the leading Gaussian score from below on the
candidate ball, while $g$ is a lower bound for the gap between the
leading scores of the correct and any incorrect candidate.

\begin{theorem}[Uniform periodic Gaussian score]\label{ur:score}
Suppose the fixed operator satisfies
\begin{equation}\label{ur:mass-contract}
 \rho_{\sqrt2 f}(\mathcal D\setminus\{0\})\le\zeta,
 \qquad 0<\zeta\le\tau g/64.
\end{equation}
For every integer vector $\mathbf a\in\mathbb Z^n$ and every integer vector
$\mathbf E\in\mathbb Z^{h+n}$ with $\|\mathbf E\|\le R$, simultaneously,
\begin{equation}
 \big|\mu(\mathbf F^T\mathbf a+\mathbf E)-
 e^{-\pi\|\mathbf E\|^2/f^2}\big|
 \le\zeta/\tau\le g/64.
 \label{ur:score-error}
\end{equation}
Each correct coordinate candidate has Gaussian score exceeding every
incorrect candidate's score by at least $31g/32$.
\end{theorem}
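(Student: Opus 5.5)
The plan is to work directly from the Poisson identity \eqref{ur:fourier}. It already writes the ideal score as a ratio of Gaussian masses of the dual lattice $\mathcal D$: a shifted mass $\rho_f(\mathcal D-\mathbf E)$ over an unshifted one $\rho_f(\mathcal D)$. I will split off the zero vector of $\mathcal D$ in both numerator and denominator:
\[
 \rho_f(\mathcal D-\mathbf E)=\rho_f(\mathbf E)+\epsilon_{\rm num},\qquad
 \rho_f(\mathcal D)=1+\delta,
\]
where $\epsilon_{\rm num}=\sum_{0\ne\mathbf d\in\mathcal D}\rho_f(\mathbf d-\mathbf E)\ge0$ and $\delta=\rho_f(\mathcal D\setminus\{0\})\ge0$. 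The first step is the easy bound $\delta\le\rho_{\sqrt2f}(\mathcal D\setminus\{0\})\le\zeta$, which holds because $\rho_f\le\rho_{\sqrt2 f}$ pointwise.

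The key step is to bound the shifted tail $\epsilon_{\rm num}$ uniformly over the ball $\|\mathbf E\|\le R$. For this I will use the parallelogram-type inequality $\|\mathbf d\|^2\le2\|\mathbf d-\mathbf E\|^2+2\|\mathbf E\|^2$, which rearranges to
\[
 \|\mathbf d-\mathbf E\|^2\ge\tfrac12\|\mathbf d\|^2-\|\mathbf E\|^2 .
\]
This gives $\rho_f(\mathbf d-\mathbf E)\le e^{\pi\|\mathbf E\|^2/f^2}\rho_{\sqrt2f}(\mathbf d)\le\tau^{-1}\rho_{\sqrt2f}(\mathbf d)$, using $\|\mathbf E\|\le R$. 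Summing over nonzero $\mathbf d$ and invoking \eqref{ur:mass-contract} yields $0\le\epsilon_{\rm num}\le\zeta/\tau$. This is the only place where the width $\sqrt2 f$ in the hypothesis enters. I expect it to be the main point of the argument, since the shift by $\mathbf E$ must be absorbed at the cost of exactly the factor $1/\tau$, with no dependence on dimension.

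Next I combine the two bounds:
\[
 \mu-\rho_f(\mathbf E)=\frac{\epsilon_{\rm num}-\rho_f(\mathbf E)\delta}{1+\delta}.
\]
The two terms in the numerator have opposite signs, and the denominator is at least $1$. Hence the absolute value is at most $\max\{\zeta/\tau,\ \zeta\}=\zeta/\tau$, since $\tau\le1$ and $\rho_f(\mathbf E)\le1$. The assumption $\zeta\le\tau g/64$ then gives the bound $\zeta/\tau\le g/64$. This proves \eqref{ur:score-error} simultaneously for all $\mathbf a$ and all $\mathbf E$ in the ball, because none of the estimates depends on $\mathbf a$.

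Finally, for the coordinate comparison I apply this with $\mathbf a=\mathbf s$ and $\mathbf E=\mathbf E_{i,c}$, which is legitimate by Lemma~\ref{ur:energy-gap}. Write $N=\|\mathbf E_{i,s_i}\|^2\le R^2$. For an incorrect candidate $c\ne s_i$, the exact gap \eqref{ur:gap} gives $\|\mathbf E_{i,c}\|^2=N+k$ with an integer $k\ge1$. The leading terms therefore differ by
\[
 e^{-\pi N/f^2}\bigl(1-e^{-\pi k/f^2}\bigr)\ge\tau\bigl(1-e^{-\pi/f^2}\bigr)=g .
\]
Subtracting the two approximation errors, each at most $g/64$, leaves a score separation of at least $g-g/32=31g/32$.
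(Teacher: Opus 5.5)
Your proposal is correct and follows the paper's proof essentially step for step. You split off the zero vector of $\mathcal D$, bound the denominator tail by monotonicity, and bound the shifted numerator tail via $\|\mathbf z-\mathbf E\|^2\ge\tfrac12\|\mathbf z\|^2-\|\mathbf E\|^2$; you then combine the two nonnegative errors through the max bound and subtract two errors of size $g/64$ from the leading-term gap $g$, exactly as the paper does. One small correction: the width $\sqrt2 f$ also enters, trivially, in your bound on $\delta$, so the numerator estimate is not literally the only place it is used.
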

\begin{proof}
Put $A_{\mathbf E}=\pi\|\mathbf E\|^2/f^2$ and
write numerator and denominator as $e^{-A_{\mathbf E}}+R_{\mathbf E}$ and
$1+r$.
Gaussian monotonicity gives $0\le r\le\zeta$. The inequality
\[
 \|\mathbf z-\mathbf E\|^2\ge\tfrac12\|\mathbf z\|^2-\|\mathbf E\|^2
\]
gives $0\le R_{\mathbf E}\le\zeta e^{A_{\mathbf E}}$, whence
\[
 \left|\frac{e^{-A_{\mathbf E}}+R_{\mathbf E}}{1+r}-e^{-A_{\mathbf E}}\right|
 =\frac{|R_{\mathbf E}-e^{-A_{\mathbf E}}r|}{1+r}
 \le\max\{R_{\mathbf E},e^{-A_{\mathbf E}}r\}
 \le\zeta e^{A_{\mathbf E}}\le\zeta/\tau.
\]
By~\eqref{ur:gap}, the difference between the two leading Gaussian terms is at least
$\tau(1-e^{-\pi/f^2})=g$. Subtracting their approximation
errors proves the score gap.
\end{proof}

The sampling construction will establish the sufficient condition
\begin{equation}\label{ur:smoothing-suffices}
 \eta_\zeta(\Lambda)\le q/(4f).
\end{equation}
Indeed, the definition of smoothing and $\mathcal D=q\Lambda^*$ give
$\rho_{4f}(\mathcal D\setminus\{0\})
=\rho_{4f/q}(\Lambda^*\setminus\{0\})\le\zeta$.
Gaussian monotonicity then implies~\eqref{ur:mass-contract}.

\begin{lemma}\label{ur:unique}
Under~\eqref{ur:mass-contract}, $\lambda_1(\mathcal D)>2R$, and each
coset of $\mathcal D$ contains at most one vector of norm at most $R$.
In particular, a fixed public label $\mathbf y$ determines at most one integer
secret satisfying~\eqref{ur:model} and~\eqref{ur:certificate}, and $q>2R$.
\end{lemma}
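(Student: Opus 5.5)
The plan is to extract a lower bound on $\lambda_1(\mathcal D)$ directly from the mass bound~\eqref{ur:mass-contract} and then read off the three consequences. The unique-decoding statement reduces to the bound $\lambda_1(\mathcal D)>2R$. For $\mathbf E,\mathbf E'$ of norm at most $R$ in the same coset of $\mathcal D$, the difference is a vector of $\mathcal D$ of norm at most $2R$, so it must be zero.

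\textbf{Minimum distance.} Suppose $0\ne\mathbf z\in\mathcal D$ has $\|\mathbf z\|\le2R$. Then
\[
 \rho_{\sqrt2f}(\mathcal D\setminus\{0\})\ge e^{-\pi\|\mathbf z\|^2/(2f^2)}\ge e^{-2\pi R^2/f^2}=\tau^2.
\]
On the other hand, $\zeta\le\tau g/64<\tau^2$, because $g=\tau(1-e^{-\pi/f^2})<\tau$. This contradicts~\eqref{ur:mass-contract}, so $\lambda_1(\mathcal D)>2R$ and the coset statement follows.

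\textbf{Uniqueness of the secret.} Let $(\mathbf s,\mathbf e)$ and $(\mathbf s',\mathbf e')$ both satisfy~\eqref{ur:model} and~\eqref{ur:certificate} for the same $\mathbf y$. I would use Lemma~\ref{ur:energy-gap} with a fixed coordinate, say $i=1$ and $c=0$, which is admissible since $0\in[-B,B]$. This gives $\mathbf v_{1,0}=\mathbf F^T\mathbf s+\mathbf E_{1,0}=\mathbf F^T\mathbf s'+\mathbf E'_{1,0}\pmod q$. Both $\mathbf E_{1,0}=(\mathbf e,-\mathbf s)$ and $\mathbf E'_{1,0}=(\mathbf e',-\mathbf s')$ have norm at most $\sqrt{nV}\le R$. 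They lie in the same coset of $\mathcal D=q\mathbb Z^{h+n}+\mathbf F^T\mathbb Z^n$, so by the coset statement they are equal. Hence $\mathbf s=\mathbf s'$ (and $\mathbf e=\mathbf e'$).

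\textbf{The bound $q>2R$.} Since $q\mathbb Z^{h+n}\subseteq\mathcal D$, the vector $q\mathbf e_1$ is a nonzero element of $\mathcal D$ of norm $q$. Therefore $q\ge\lambda_1(\mathcal D)>2R$.

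The only delicate point is the numeric comparison $\zeta<\tau^2$ at width $\sqrt2 f$ versus the radius $2R$. It works exactly because $(2R)^2/(\sqrt2 f)^2=2R^2/f^2$, which matches $\tau^2$, so no step is expected to be a real obstacle.
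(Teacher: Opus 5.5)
Your proposal is correct and follows essentially the same argument as the paper: a nonzero vector of $\mathcal D$ of norm at most $2R$ would contribute at least $\tau^2>\zeta$ to the mass in~\eqref{ur:mass-contract}, uniqueness comes from the resulting coset separation, and $q\mathbb Z^{h+n}\subseteq\mathcal D$ gives $q>2R$. The paper applies the bound directly to the difference $(\mathbf e_1-\mathbf e_2,\mathbf s_2-\mathbf s_1)$ rather than to the $c=0$ candidates $\mathbf E_{1,0},\mathbf E'_{1,0}$, which is only a cosmetic difference. Your justification of $\zeta\le\tau g/64<\tau^2$ via $g<\tau$ supplies a step the paper leaves implicit.
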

\begin{proof}
A nonzero $\mathbf z\in\mathcal D$ with $\|\mathbf z\|\le2R$ would contribute
at least $e^{-2\pi R^2/f^2}=\tau^2>\zeta$ to the mass in
\eqref{ur:mass-contract}. Two representatives in the same coset and ball
would have such a difference. If $(\mathbf s_1,\mathbf e_1)$ and
$(\mathbf s_2,\mathbf e_2)$ yield the
same label, then
\[
 (\mathbf e_1-\mathbf e_2,\mathbf s_2-\mathbf s_1)\in\mathcal D,
 \qquad \|(\mathbf e_1-\mathbf e_2,\mathbf s_2-\mathbf s_1)\|\le2\sqrt{nV}\le2R.
\]
Distinct secrets give a nonzero vector, which is impossible.
Since $q\mathbb Z^{h+n}\subseteq\mathcal D$, its shortest vector has
length at most $q$.
\end{proof}

\subsection{One list for all public labels}

We estimate all coordinate scores using the same list. Conditional
similarity, as defined in Section~\ref{ui:prelim}, supplies concentration
even when the list entries are dependent. Taking a union bound over
public labels also makes the recovery event valid for secrets and
residues chosen as functions of $\mathbf M$.

\begin{theorem}[One list for all bounded-energy inputs]
\label{ur:shared-list}
Fix an operator satisfying~\eqref{ur:mass-contract}, and let
$0<\delta<1$. Assume that:
\begin{enumerate}
\item a reference list of length $N$ is strongly conditionally
$\alpha$-similar to $D_{\Lambda,q/f}^N$, where
$\alpha\le g/64$.
\item the implemented list is within total variation distance $\nu$
of the reference law.
\item the list length satisfies
\begin{equation}\label{ur:list-demand}
 N\ge\frac{512}{g^2}
       \ln\frac{2n(2B+1)q^h}{\delta}.
\end{equation}
\end{enumerate}
Then, with probability at least $1-\delta-\nu$, that same list recovers
$\mathbf s$ from~\eqref{ur:model} for every pair $(\mathbf s,\mathbf e)$ satisfying
\eqref{ur:certificate}. The procedure \textnormal{\textsc{LinearScoreRecover}}
uses scores of the form
\begin{equation}\label{ur:empirical}
 \widehat\mu_{i,c}=\frac{1}{N}\sum_{j=1}^N
 \cos\!\left(\frac{2\pi}{q}\langle \mathbf X_j,(\mathbf y,c\mathbf e_i)\rangle\right),
\end{equation}
and locates each coordinate's unique maximizing candidate by
adjacent-score binary search.
Given the list and a public label, the procedure uses
\[
 O\!\left(N\bigl[h+n\log(2B+1)\bigr]\right)
\]
scalar operations and cosine evaluations.
Certified error at most $g/64$ in each evaluated score
preserves this conclusion.
\end{theorem}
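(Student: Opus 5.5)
The plan is to work under the reference list law and transfer to the implemented list at the end. By the remark after Lemma~\ref{up:concentration}, the events concerning the whole list change in probability by at most $\nu$.

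\textbf{Step 1: uniform accuracy of all relevant empirical scores.} The empirical score $\widehat\mu_{i,c}$ depends on the public label only through its residue class. So I index the relevant functions by $\mathbf y\in\mathbb Z_q^h$, $i\in[n]$ and $c\in[-B,B]\cap\mathbb Z$. That gives at most $n(2B+1)q^h$ functions $H_{\mathbf y,i,c}(\mathbf x)=\cos(\tfrac{2\pi}{q}\langle\mathbf x,(\mathbf y,c\mathbf e_i)\rangle)$, each with values in $[-1,1]$.

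I apply Lemma~\ref{up:concentration} with $\gamma=\alpha\le g/64$ and $u=g/16$. Each function then fails to stay within $2\alpha+u\le 3g/32$ of $\mu(\mathbf v_{i,c})$ with probability at most $2e^{-Ng^2/512}$. The demand~\eqref{ur:list-demand} makes the union bound over all these functions at most $\delta$. So with probability at least $1-\delta$, every empirical score in this finite family is within $3g/32$ of its ideal score. Since this event quantifies over all labels, it also covers pairs $(\mathbf s,\mathbf e)$ that depend on $\mathbf M$ and on the list's public operator. It likewise covers adaptively chosen candidates in the binary search. I expect this to be the step needing the most care. The numerical constants must close: any additional certified evaluation error of $g/64$ must still fit in the final margin. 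If $3g/32$ proves too loose, I will instead take $u$ slightly smaller, since $512$ versus $2/u^2$ leaves room.

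\textbf{Step 2: ideal scores are exact energies up to $g/64$.} On the event of Step~1, fix any $(\mathbf s,\mathbf e)$ satisfying~\eqref{ur:certificate}. By Lemma~\ref{ur:energy-gap}, $\mathbf v_{i,c}=\mathbf F^T\mathbf s+\mathbf E_{i,c}$ with $\|\mathbf E_{i,c}\|\le R$. Theorem~\ref{ur:score} then gives $|\mu(\mathbf v_{i,c})-e^{-\pi\|\mathbf E_{i,c}\|^2/f^2}|\le g/64$. Together with Step~1 and a certified evaluation error of at most $g/64$, each computed score is within $3g/32+g/64+g/64<g/2$ of $\phi(c):=e^{-\pi\|\mathbf E_{i,c}\|^2/f^2}$.

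\textbf{Step 3: binary search on the adjacent sign.} By~\eqref{ur:gap}, $\|\mathbf E_{i,c}\|^2=K+(c-s_i)^2$ for a constant $K$. So $\phi$ is unimodal in $c$ with peak at $s_i$. For adjacent $c,c+1$ in $[-B,B]$, the exponents differ by the nonzero integer $2(c-s_i)+1$. Both values lie in $[\tau,1]$ with $\tau=e^{-\pi R^2/f^2}$, so the mean value theorem gives $|\phi(c+1)-\phi(c)|\ge\tau(1-e^{-\pi/f^2})=g$. Hence the computed comparison $\widehat\mu_{i,c+1}>\widehat\mu_{i,c}$ holds exactly when $s_i\ge c+1$. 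A standard binary search over $[-B,B]$ with these comparisons returns $s_i$ after $O(\log(2B+1))$ comparisons. Consequently every coordinate is recovered, for every admissible pair simultaneously.

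\textbf{Step 4: cost and probability.} Each score costs $O(N)$ cosines of inner products. The term $\langle\mathbf X_j,(\mathbf y,0)\rangle$ is computed once per $j$ in $O(Nh)$ operations. Each candidate then adds $c\,X_{j,h+i}$ in $O(1)$, giving $O(N[h+n\log(2B+1)])$ in total. The failure probability is $\delta$ under the reference law plus $\nu$ for the implementation, which gives the claimed $1-\delta-\nu$.
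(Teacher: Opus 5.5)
Your proposal is correct and follows essentially the same route as the paper. Both apply Lemma~\ref{up:concentration} with $u=g/16$ and a union bound over all $n(2B+1)q^h$ label--candidate functions, then combine the resulting $6g/64$ stochastic error with the $g/64$ error from Theorem~\ref{ur:score} and the $g/64$ evaluation error, giving total $g/8$, which is below the adjacent gap $g$ needed for binary search; the event is then transferred to the implemented list at cost $\nu$. The paper additionally records that the correct candidate is the unique empirical maximum, and this already follows from your correct ordering of all adjacent comparisons.
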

\begin{proof}
For the proof, consider all labels $\mathbf y\in(\mathbb Z/q\mathbb Z)^h$
and all $n(2B+1)$ candidates. There are
$H=n(2B+1)q^h$ associated functions, all bounded in $[-1,1]$.
Apply Lemma~\ref{up:concentration} to each function and take a union
bound:
\[
 \Pr\!\left[\max_{a\le H}|\widehat\mu_a-\mu_a|
                  >2\alpha+t\right]\le2H e^{-Nt^2/2}.
\]
Use $t=g/16$ and $2\alpha\le g/32$.
Equation~\eqref{ur:list-demand} bounds the failure by $\delta$.
The stochastic score error is at most $6g/64$. Adding the
Gaussian error and the permitted evaluation error leaves at most
$8g/64$ error from each leading Gaussian term. Their separation is
at least $g$, so the correct candidate is the unique maximum.
For adjacent candidates,
\[
 \|\mathbf E_{i,c+1}\|^2-\|\mathbf E_{i,c}\|^2=2(c-s_i)+1.
\]
Its absolute value is at least one. The leading Gaussian scores
therefore differ by at least $g$, with the larger score at $c+1$
exactly when $s_i\ge c+1$. The same error bound preserves this sign,
so binary search retains the half-interval containing $s_i$.
The joint event is simultaneous over every admissible label. Transfer
this one event to the implemented list at cost $\nu$.
\end{proof}

\begin{algorithm}[H]
\caption{\textsc{LinearScoreRecover}: coordinatewise secret recovery}
\label{alg:linear-score-recover}
\KwIn{Modulus $q$, public label $\mathbf y\in(\mathbb Z/q\mathbb Z)^h$, candidate bound $B$, and a shared list $\mathbf X_1,\ldots,\mathbf X_N\in\mathbb Z^{h+n}$.}
\KwOut{The integer vector $\widehat{\mathbf s}\in[-B,B]^n$.}
Write $\mathbf X_j=(\mathbf u_j,\mathbf t_j)$ with $\mathbf u_j\in\mathbb Z^h$ and $\mathbf t_j\in\mathbb Z^n$\;
Compute $a_j\gets\langle \mathbf u_j,\mathbf y\rangle\bmod q$ for every $j$\;
\For{$i\gets1$ \KwTo $n$}{
  Set $\ell\gets-B$ and $r\gets B$\;
  \While{$\ell<r$}{
    Set $c\gets\lfloor(\ell+r)/2\rfloor$\;
    Evaluate $\widehat\mu_{i,c}$ and $\widehat\mu_{i,c+1}$ by~\eqref{ur:empirical}\;
    \eIf{$\widehat\mu_{i,c+1}>\widehat\mu_{i,c}$}{
      $\ell\gets c+1$\;
    }{
      $r\gets c$\;
    }
  }
  $\widehat s_i\gets\ell$\;
}
\Return{$\widehat{\mathbf s}$}\;
\end{algorithm}

For the given label, write $\mathbf X_j=(\mathbf u_j,\mathbf t_j)$ and compute
$\langle \mathbf u_j,\mathbf y\rangle\bmod q$ once per sample. The candidate
phases are then $\langle \mathbf u_j,\mathbf y\rangle+c(\mathbf t_j)_i\bmod q$,
giving the cost in Theorem~\ref{ur:shared-list}. If the observed alphabet
has size at most
$A_y$ per coordinate, $q^h$ in~\eqref{ur:list-demand} can instead be
replaced by $A_y^h$.

The concentration event still covers all $2B+1$ candidates, so it
also covers the adaptively chosen binary search queries.

\begin{corollary}\label{ur:dependence}
Suppose the public geometry and energy bound fail with probabilities
at most $\omega_n$ and $p_E$, respectively. If the list guarantee of
Theorem~\ref{ur:shared-list} holds for each good fixed operator and good
auxiliary choice, then recovery with fresh randomness fails with
probability at most
\[
 \omega_n+p_E+\delta+\nu,
\]
plus the probabilities of auxiliary or finite-execution failures.
This holds for any conditional law of $(\mathbf s,\mathbf e)$ given
$\mathbf M$ that satisfies
the energy bound.
\end{corollary}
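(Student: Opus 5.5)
The plan is to decompose the total failure event into disjoint bad events and bound each one separately using results already established. The sample space carries four independent sources of randomness: the public draw $\mathbf M$, the conditional law of $(\mathbf s,\mathbf e)$ given $\mathbf M$, the auxiliary choices (the random lattices $\Gamma_i$ of~\eqref{up:hgh-auxiliary}), and the sampling randomness used to build the list. I define $G_{\rm geo}$ as the event that $\mathbf M$ satisfies the prescribed geometric condition (in particular~\eqref{ur:mass-contract}), and $G_E$ as the event that~\eqref{ur:certificate} holds. By hypothesis $\Pr[G_{\rm geo}^c]\le\omega_n$ and $\Pr[G_E^c]\le p_E$. Any auxiliary or finite-execution failures are collected into one further event, whose probability is added at the end.

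Next I condition on a fixed good operator $\mathbf M\in G_{\rm geo}$ together with a fixed good auxiliary choice. The list is generated with fresh randomness, independent of $(\mathbf s,\mathbf e)$ once $\mathbf M$ is fixed. Theorem~\ref{ur:shared-list} therefore applies directly: outside an event $\mathcal B$ of probability at most $\delta+\nu$, the list recovers $\mathbf s$ for \emph{every} pair satisfying~\eqref{ur:certificate}. The key point is that $\mathcal B$ depends only on the list and on $\mathbf M$, not on the label. Theorem~\ref{ur:shared-list} obtains this through its union bound over all $q^h$ labels and all $n(2B+1)$ candidates, which also covers the adaptively chosen binary-search queries. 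Consequently, whatever the joint law of $(\mathbf s,\mathbf e,\text{list})$ given $\mathbf M$, on $G_E\cap\mathcal B^c$ the output is correct, and the conditional failure is at most $\Pr[G_E^c\mid\mathbf M]+\delta+\nu$.

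Finally I average over $\mathbf M$. Splitting on $G_{\rm geo}$ gives the bound
\[
 \Pr[G_{\rm geo}^c]+\mathbb E\bigl[\mathbf 1_{G_{\rm geo}}\bigl(\Pr[G_E^c\mid\mathbf M]+\delta+\nu\bigr)\bigr]\le\omega_n+p_E+\delta+\nu,
\]
to which the auxiliary and finite-execution failure probabilities are added through one more union bound. The implementation distance $\nu$ enters only once, because Theorem~\ref{ur:shared-list} transfers a single whole-list event, as noted after Lemma~\ref{up:concentration}.

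The main obstacle is the legitimacy of the conditioning step when $(\mathbf s,\mathbf e)$ may depend on $\mathbf M$, for example through deterministic rounding. A naive argument would apply concentration to a score whose label depends on the secret, and that secret could in principle correlate with the list. I will handle this by insisting that the list randomness be independent of $(\mathbf s,\mathbf e)$ given $\mathbf M$, so that the good-list event is defined before the label is revealed. The uniformity over labels in Theorem~\ref{ur:shared-list} then removes any remaining dependence.
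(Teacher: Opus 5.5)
Your proposal is correct and follows the paper's own argument. For each good fixed operator and auxiliary choice, the event of Theorem~\ref{ur:shared-list} is uniform over all admissible pairs; you intersect it with the energy event, average over the joint input law, and add the failures by a union bound. You do not need the extra independence step in your last paragraph: because that event quantifies over every label, the inclusion of the failure event in $G_E^c\cup\mathcal B$ (plus the geometry and auxiliary failures) already covers any dependence of $(\mathbf s,\mathbf e)$ on $\mathbf M$. Also, $(\mathbf s,\mathbf e)$ is not independent of $\mathbf M$, so the phrase ``four independent sources'' is inaccurate.
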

\begin{proof}
For each good fixed operator, Theorem~\ref{ur:shared-list} is uniform over
all admissible pairs. Intersect that event with the energy event and then
average over the joint law of the input, adding the failure probabilities
by a union bound.
\end{proof}

\subsection{Energy bounds for the secret and error}
\label{up:dist:energy-distributions}

The recovery theorem uses the sum $\|\mathbf s\|^2+\|\mathbf e\|^2$. The following
estimates let us supply this one quantity throughout the applications.
For example, separate bounds
$\Pr[\|\mathbf s\|^2>n\alpha_n^2]\le p_{s,n}$ and
$\Pr[\|\mathbf e\|^2>n\beta_n^2]\le p_{e,n}$ give
\begin{equation}
 \begin{gathered}
 V_n=\max\{1,\alpha_n^2+\beta_n^2\},\qquad p_n=p_{s,n}+p_{e,n},\\
 \Pr[\|\mathbf s\|^2+\|\mathbf e\|^2>nV_n]\le p_n.
 \end{gathered}
 \label{up:dist:energy-sum}
\end{equation}
This union bound allows dependence between the two vectors. Bounded
coefficients give a deterministic instance: if $h\le n$,
$\|\mathbf s\|_\infty\le H_s$, and $\|\mathbf e\|_\infty\le H_e$, take
$V_n=\max\{1,H_s^2+H_e^2\}$ and $p_n=0$.

\begin{lemma}
\label{up:dist:subgaussian}
Let $\mathbf Y=\boldsymbol\mu+\mathbf X\in\mathbb R^N$, with deterministic $\boldsymbol\mu$, and suppose
a positive semidefinite matrix $\boldsymbol\Sigma$ satisfies
\[
 \mathbb E[e^{\mathbf u^T\mathbf X}]\le e^{\mathbf u^T\boldsymbol\Sigma\mathbf u/2}
 \qquad(\mathbf u\in\mathbb R^N).
\]
For $t>0$, put
\[
 R_t=\operatorname{tr}\boldsymbol\Sigma+
       2\sqrt{t\operatorname{tr}(\boldsymbol\Sigma^2)}+
       2t\|\boldsymbol\Sigma\|_{\rm op}.
\]
Then
\[
 \Pr[\|\mathbf Y\|^2>(\|\boldsymbol\mu\|+\sqrt{R_t})^2]\le e^{-t}.
\]
In particular, for $\mathbf Y=(\mathbf s,\mathbf e)$, any rational $V_n\ge1$ with
$nV_n\ge(\|\boldsymbol\mu\|+\sqrt{R_t})^2$ supplies an energy bound with
failure probability $e^{-t}$.
\end{lemma}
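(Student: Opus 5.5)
The plan is the standard Hanson--Wright-type argument for subgaussian vectors, in the moment generating function form of Hsu, Kakade and Zhang. First reduce to the centered part. The triangle inequality gives $\|\mathbf Y\|\le\|\boldsymbol\mu\|+\|\mathbf X\|$. It therefore suffices to prove
\[
 \Pr[\|\mathbf X\|^2>R_t]\le e^{-t}.
\]
The ``in particular'' clause then follows by monotonicity: apply the bound with $\mathbf Y=(\mathbf s,\mathbf e)$ and note that $nV_n$ dominates $(\|\boldsymbol\mu\|+\sqrt{R_t})^2$.

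To control $\|\mathbf X\|^2$, I would linearize the quadratic by Gaussian randomization. Let $\mathbf g\sim N(0,\mathbf I_N)$ be independent of $\mathbf X$. For $0\le\lambda<1/(2\|\boldsymbol\Sigma\|_{\rm op})$,
\[
 \mathbb E\,e^{\lambda\|\mathbf X\|^2}
 =\mathbb E_{\mathbf X}\mathbb E_{\mathbf g}\,e^{\sqrt{2\lambda}\,\mathbf g^T\mathbf X}.
\]
Swap the expectations by Tonelli and apply the hypothesis with $\mathbf u=\sqrt{2\lambda}\,\mathbf g$. This gives the bound $\mathbb E_{\mathbf g}e^{\lambda\mathbf g^T\boldsymbol\Sigma\mathbf g}$. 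Diagonalize $\boldsymbol\Sigma$ with eigenvalues $\sigma_i\ge0$. The bound becomes
\[
 \prod_i(1-2\lambda\sigma_i)^{-1/2}.
\]
Now bound the logarithm using
\[
 -\tfrac12\ln(1-2x)\le x+\frac{x^2}{1-2x}\qquad(0\le x<1/2).
\]
With $x=\lambda\sigma_i$ and summing over $i$, this gives
\[
 \ln\mathbb E\,e^{\lambda\|\mathbf X\|^2}
 \le\lambda\operatorname{tr}\boldsymbol\Sigma
 +\frac{\lambda^2\operatorname{tr}(\boldsymbol\Sigma^2)}{1-2\lambda\|\boldsymbol\Sigma\|_{\rm op}}.
\]

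This is a sub-gamma bound for $\|\mathbf X\|^2-\operatorname{tr}\boldsymbol\Sigma$ with variance factor $v=2\operatorname{tr}(\boldsymbol\Sigma^2)$ and scale $c=2\|\boldsymbol\Sigma\|_{\rm op}$. The Chernoff bound, optimized over $\lambda$, gives deviation $\sqrt{2vt}+ct$ at probability $e^{-t}$. Here
\[
 \sqrt{2vt}=2\sqrt{t\operatorname{tr}(\boldsymbol\Sigma^2)},\qquad ct=2t\|\boldsymbol\Sigma\|_{\rm op},
\]
which are exactly the terms of $R_t$. I would either quote the standard sub-gamma tail lemma (Boucheron--Lugosi--Massart) or verify it directly. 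For the direct check, choose
\[
 \lambda=\frac{\sqrt{t}}{\sqrt{\operatorname{tr}(\boldsymbol\Sigma^2)}+2\sqrt{t}\,\|\boldsymbol\Sigma\|_{\rm op}}.
\]
This step is routine.

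The main obstacle is degenerate cases and the rigor of the exchange of expectations. If $\boldsymbol\Sigma=0$, the hypothesis forces $\mathbf X=0$ almost surely, and the claim is immediate. Otherwise Tonelli applies, since all integrands are nonnegative, and the admissible range of $\lambda$ is exactly where the Gaussian integral converges. Checking that the optimized $\lambda$ lies strictly below $1/(2\|\boldsymbol\Sigma\|_{\rm op})$ completes the argument.
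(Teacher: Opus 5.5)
Your proof is correct. It reaches the same bound by a slightly different route from the paper's.

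\textbf{The paper's route.} The paper treats $\Pr[\|\mathbf X\|^2>R_t]\le e^{-t}$ as a black box from Hsu--Kakade--Zhang, Theorem~2.1. That theorem assumes an isotropic subgaussian vector pushed through a linear map. To match the anisotropic hypothesis here, the paper first notes that the moment generating function bound forces $\mathbf X\in\operatorname{range}(\boldsymbol\Sigma)$ almost surely. It then whitens on that subspace, writing $\mathbf X=\boldsymbol\Sigma^{1/2}\mathbf Z$, and finishes with the triangle inequality as you do.

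\textbf{Your route.} You rerun the HKZ argument directly under the $\boldsymbol\Sigma$-hypothesis: Gaussian linearization, the $\chi^2$ moment generating function, the bound $-\tfrac12\ln(1-2x)\le x+x^2/(1-2x)$, and a sub-gamma Chernoff step. Substituting $\mathbf u=\sqrt{2\lambda}\,\mathbf g$ produces $\mathbf g^T\boldsymbol\Sigma\mathbf g$ at once. Zero eigenvalues then contribute factors of one. So the range-and-whitening reduction is unnecessary, and singular $\boldsymbol\Sigma$ needs no separate treatment beyond your $\boldsymbol\Sigma=0$ case.

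\textbf{Check of your $\lambda$.} Write $a=\sqrt{\operatorname{tr}(\boldsymbol\Sigma^2)}$ and $b=\|\boldsymbol\Sigma\|_{\rm op}$. Then $1-2b\lambda=a/(a+2b\sqrt t)>0$, so $\lambda$ lies in the admissible range. The Chernoff exponent comes out exactly $-t$.

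\textbf{Comparison.} The paper's proof is shorter because it cites HKZ. Yours is self-contained and avoids the whitening step.
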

\begin{proof}
The quadratic form inequality of Hsu, Kakade, and
Zhang~\cite[Theorem~2.1]{HKZ12} gives
$\Pr[\|\mathbf X\|^2>R_t]\le e^{-t}$. For a singular $\boldsymbol\Sigma$, the assumed
moment-generating function bound first implies that $\mathbf X$ lies in
$\operatorname{range}(\boldsymbol\Sigma)$ almost surely. Whitening on that
subspace reduces to their isotropic hypothesis. Finally apply
$\|\boldsymbol\mu+\mathbf X\|\le\|\boldsymbol\mu\|+\|\mathbf X\|$.
\end{proof}

For Gaussian and centered binomial coefficients, concentration gives
an explicit energy threshold. Write
$Z_\eta=\sum_{a=1}^{\eta}(B_a-B'_a)$ for the centered binomial
distribution, with independent fair Bernoulli variables.

\begin{corollary}
\label{up:dist:gaussian-cbd}
\label{um:cbd}
Let $Y_1,\ldots,Y_N$ be independent centered coordinates. For a
Gaussian coordinate $Y_i\sim D_{\mathbb Z,w_i}$ put
$v_i^2=w_i^2/(2\pi)$. For a centered binomial coordinate
$Y_i\sim Z_{\eta_i}$ put $v_i^2=\eta_i/2$. For every $t>0$,
\[
 \Pr\!\left[\sum_iY_i^2>
   \sum_i v_i^2+2\sqrt{t\sum_i v_i^4}+2t\max_i v_i^2\right]
 \le e^{-t}.
\]
In particular, for $n$ secret
coordinates of parameter $\eta_s$ and $n$ error coordinates of parameter
$\eta_e$, one may take
\begin{equation}
 V_n=\max\left\{1,\frac{\eta_s+\eta_e}{2}
       +\sqrt{\eta_s^2+\eta_e^2}+\max(\eta_s,\eta_e)\right\},
 \qquad p_n=e^{-n},
 \label{up:dist:cbd-energy}
\end{equation}
rounding $V_n$ upward to a rational number when needed. The deterministic
alternative is $V_n=\max\{1,\eta_s^2+\eta_e^2\}$ and $p_n=0$.
\end{corollary}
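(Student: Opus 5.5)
The plan is to apply Lemma~\ref{up:dist:subgaussian} with $\boldsymbol\mu=0$ and a diagonal $\boldsymbol\Sigma=\operatorname{diag}(v_1^2,\ldots,v_N^2)$. Everything then reduces to two tasks: checking the coordinatewise moment-generating-function bounds $\mathbb E[e^{uY_i}]\le e^{u^2v_i^2/2}$, and evaluating the three trace quantities. By independence the joint MGF factors, so the joint hypothesis $\mathbb E[e^{\mathbf u^T\mathbf Y}]\le e^{\mathbf u^T\boldsymbol\Sigma\mathbf u/2}$ follows from the one-dimensional bounds. With diagonal $\boldsymbol\Sigma$ we get $\operatorname{tr}\boldsymbol\Sigma=\sum v_i^2$, $\operatorname{tr}\boldsymbol\Sigma^2=\sum v_i^4$ and $\|\boldsymbol\Sigma\|_{\rm op}=\max v_i^2$. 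These give exactly the displayed threshold $R_t$, and $(\|\boldsymbol\mu\|+\sqrt{R_t})^2=R_t$.

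\textbf{Centered binomial coordinates.} Each $B_a-B'_a$ takes the value $0$ with probability $1/2$ and $\pm1$ with probability $1/4$ each. Its MGF is $\tfrac12+\tfrac12\cosh u=\cosh^2(u/2)\le e^{u^2/4}$. Multiplying over the $\eta_i$ independent terms gives $e^{u^2\eta_i/4}=e^{u^2v_i^2/2}$ with $v_i^2=\eta_i/2$.

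\textbf{Discrete Gaussian coordinates.} Here I complete the square: $\mathbb E[e^{uY}]=\rho_w(\mathbb Z-c)e^{\pi|c|^2/w^2}/\rho_w(\mathbb Z)$ with $c=uw^2/(2\pi)$, and $\pi c^2/w^2=u^2w^2/(4\pi)=u^2v_i^2/2$. The bound then follows from $\rho_w(\mathbb Z-c)\le\rho_w(\mathbb Z)$. This inequality is stated after~\eqref{up:shift-mass} as a consequence of Poisson summation~\eqref{up:poisson-mass}, since every Fourier coefficient of $\rho_w$ is nonnegative. This yields the claimed subgaussian constant $v_i^2=w_i^2/(2\pi)$. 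I expect this step to be the only point needing care: the variance proxy must be exactly $w^2/(2\pi)$ for every width, not only asymptotically, and the positivity of the dual Gaussian sum is what makes it hold.

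\textbf{The specialization.} Take $n$ coordinates with $v^2=\eta_s/2$, $n$ coordinates with $v^2=\eta_e/2$, and $t=n$. Then
\[
 \sum v_i^2=n(\eta_s+\eta_e)/2,\qquad
 2\sqrt{t\sum v_i^4}=2\sqrt{n^2(\eta_s^2+\eta_e^2)/4}=n\sqrt{\eta_s^2+\eta_e^2},
\]
and $2t\max v_i^2=n\max(\eta_s,\eta_e)$. Dividing by $n$ gives~\eqref{up:dist:cbd-energy} with $p_n=e^{-n}$. Taking the maximum with $1$ and rounding $V_n$ upward only weakens the event being bounded.

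\textbf{The deterministic alternative.} This is immediate from $|Z_\eta|\le\eta$, giving $\|\mathbf s\|^2+\|\mathbf e\|^2\le n(\eta_s^2+\eta_e^2)$ with $p_n=0$.
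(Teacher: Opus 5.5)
Your proposal is correct and follows the paper's proof essentially step for step. The paper likewise applies Lemma~\ref{up:dist:subgaussian} with $\boldsymbol\mu=0$ and $\boldsymbol\Sigma=\operatorname{diag}(v_i^2)$, bounds the discrete Gaussian moment-generating function by completing the square and using $\rho_w(\mathbb Z-c)\le\rho_w(\mathbb Z)$, bounds the centered binomial one by $\cosh^{2\eta}(u/2)\le e^{\eta u^2/4}$, and sets $t=n$; your explicit evaluation of the trace terms only spells out what the paper leaves implicit.
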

\begin{proof}
Completing the square and using the maximum of a shifted theta series
at an integer shift gives, for $Z\sim D_{\mathbb Z,w}$,
\[
 \mathbb E[e^{uZ}]
 =e^{w^2u^2/(4\pi)}
   \frac{\rho_w(\mathbb Z-w^2u/(2\pi))}{\rho_w(\mathbb Z)}
 \le e^{w^2u^2/(4\pi)}.
\]
For the centered binomial distribution,
$\mathbb E[e^{uZ_\eta}]=\cosh^{2\eta}(u/2)\le e^{\eta u^2/4}$.
Independence combines these bounds into the hypothesis of
Lemma~\ref{up:dist:subgaussian} with $\boldsymbol\mu=0$ and
$\boldsymbol\Sigma=\operatorname{diag}(v_i^2)$. This proves the stated tail
bound. Setting $t=n$ for the two centered binomial vectors
proves~\eqref{up:dist:cbd-energy}. The deterministic alternative uses
$|Z_\eta|\le\eta$.
\end{proof}
In every case, the energy threshold is fixed before the public matrix
is sampled and must satisfy the recovery theorem's parameter window.

\section{Constructing the Gaussian List}
\label{ue:section}

The recovery theorem reduces the algorithmic problem to producing a
Gaussian list at width $q/f$. We start from the easy lattice
$\mathbb Z^h$ and add the secret coordinates in blocks. At each step,
Gaussian lifting adds a block and pairing removes an auxiliary modulus.
The lifting and pairing method is due to DEL, with the quantitative
LWE construction developed by HGH~\cite{DEL25,HGH}. We use the
width and block-capacity organization of HGH~\cite[Sec.~3 and
Appendix~B]{HGH}, while checking the prefix conditions for the present
recovery setting.
Sections~\ref{up:del} and~\ref{up:hgh} recall these transformations
and their sampling guarantees in the notation used below.

The main choice is how many coordinates to add at each width. Early
steps have little capacity, while later steps can handle larger blocks. We
show that their combined capacity reaches $n$ while keeping the list
length $2^{(1/2+o(1))n/D}$. The public distribution enters through one
condition: smoothing at the selected coordinate prefixes.

\subsection{The unified recovery statement}

A sampling schedule specifies a tolerance $\varepsilon_0$, widths
$s_0,\ldots,s_r=q/f$, and consecutive prefixes
$0=m_0<m_1<\cdots<m_r=n$. At level $i$, the construction adds
$b_i=m_i-m_{i-1}$ coordinates and uses the geometric threshold
$t_i=s_{i-1}/\sqrt2$. These quantities depend only on $n,q,V,B$ and
are fixed before the public matrix is drawn. The next subsection
constructs the schedule. The theorem needs a smoothing bound at
these prescribed prefixes.

\begin{theorem}[Low-energy linear relation recovery]\label{ue:main}
Let $n\to\infty$, $1\le h\le n$, and $q\ge2$. Fix rational $V$ and
integer $B$ as in~\eqref{ur:certificate}, with polynomial-length binary
representations, before drawing $\mathbf M$. Put
\begin{equation}\label{ue:window}
 \begin{gathered}
 U=\sqrt{R^2/n}\ge1,\qquad Q=\ln(q/\sqrt n)>0,\\
 D=\ln\frac{Q}{1+\ln U}\longrightarrow\infty,
 \qquad \ln q\le(\ln n)^{c_q},
 \end{gathered}
\end{equation}
where $c_q$ is fixed, and use the prescribed schedule below. Assume that:
\begin{enumerate}
\item the input satisfies $\mathbf y=\mathbf M\mathbf s+\mathbf e\pmod q$, and the energy and candidate
bounds~\eqref{ur:certificate} fail with probability at most $p_E$.
\item the scheduled prefixes~\eqref{ur:lattices} satisfy
\begin{equation}\label{ue:geometry}
 \Pr_{\mathbf M}\left[\exists i\in[r]:
       \eta_{\varepsilon_0}(\Lambda_{m_i})>t_i/2\right]
 \le\omega_n=o(1).
\end{equation}
\end{enumerate}
Then there is a randomized algorithm that recovers the integer vector
$\mathbf s$ from the $h$ labels with probability at least
$1-p_E-\omega_n-o(1)$, in expected time and space
\begin{equation}\label{ue:refined-cost}
 2^{(1/2+o(1))n/D}.
\end{equation}
The algorithm constructs one Gaussian list and applies
\textnormal{\textsc{LinearScoreRecover}} to all coordinates using that
list. The joint law of $(\mathbf s,\mathbf e)$ may depend on
$\mathbf M$. The algorithm uses fresh internal randomness independent
of the input.
\end{theorem}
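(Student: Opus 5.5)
The proof combines Theorem~\ref{ur:shared-list} with the lifting and pairing recurrence~\eqref{up:del-recurrence}. Everything reduces to choosing a schedule, fixed before $\mathbf M$ is drawn, that meets four requirements at once: the final width is $q/f$, the similarity error stays below $g/64$, the list is long enough to satisfy~\eqref{ur:list-demand}, and the cost is $2^{(1/2+o(1))n/D}$.

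\textbf{Scoring parameters.} Take $f^2=cU^2D$ for a suitable constant $c$. Then
\[
 \tau=e^{-\pi R^2/f^2}=e^{-\pi n/(cD)},\qquad g\ge\tau\pi/(2f^2).
\]
Hence $\ln(1/g)=O(n/D)+O(\ln n)$. Since $\ln q\le(\ln n)^{c_q}$, the list demand~\eqref{ur:list-demand} becomes $N\ge 2^{O(n/(cD))+O(h\ln q)}$.

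This bound is too weak on its own, because $h$ may be as large as $n$. I would use the fact that the concentration event only needs to cover the labels actually queried. Conditioning on $\mathbf y$ and applying Lemma~\ref{up:concentration} to the $n(2B+1)$ candidate functions of that fixed label removes the $q^h$ factor. This is legitimate once the list randomness is independent of the input, which is a hypothesis of the theorem. Then $\log_2N=(2\pi/(c\ln2)+o(1))\,n/(2D)\cdot(\text{const})$. Choosing $c$ large makes this exponent $o(n/D)$ relative to the sampling cost.

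For the smoothing condition~\eqref{ur:smoothing-suffices}, I would take $\zeta=\tau g/64$. The final prefix must then satisfy $\eta_\zeta(\Lambda)\le q/(4f)$, which follows from the scheduled event at level $r$ because the geometric thresholds are chosen with slack.

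\textbf{The schedule.} Start at $s_0$ of order $\sqrt{\ln(1/\varepsilon_0)}\,\mathrm{polylog}$ on $\mathbb Z^h$; this is trivially smooth and can be sampled directly. Double the squared width at each level, so $s_i=2^{i/2}s_0$, with $r$ fixed by $s_r=q/f$.

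At level $i$, choose a prime $P_i$ of order $N$, around $N/3$, so that each bucket is populated and $M_0\ge3P_i$ holds. Set
\[
 b_i=\Big\lfloor \frac{\ln P_i}{\ln(q/t_i)+O(\ln\ln n)}\Big\rfloor .
\]
With this choice the auxiliary bound~\eqref{up:hgh-smoothing} gives $\eta_{\varepsilon_0}(\Gamma_i)\le t_i$ with probability $1-2^{1-b_i}$. I would treat small $b_i$ by forcing the early blocks to have size $\Theta(\ln n)$, or by adding a repetition argument for those levels. The shifted sampler of Lemma~\ref{up:shifted-sampler} then costs $2^{b_i+o(b_i)}$ per sample, which is negligible provided $b_i=o(n/D)$.

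\textbf{Capacity count (main obstacle).} The key step is showing that $\sum_i b_i\ge n$ when $\log_2N=(1/2+o(1))n/D$. Writing $\ln(q/t_i)=\ln(q/s_0)-(i/2)\ln2$, the terms $1/\ln(q/t_i)$ form a harmonic-type sum. It runs from $\ln(q/s_0)\approx Q$ down to $\ln f\approx\ln U+\tfrac12\ln D$, and it sums to $(2/\ln2)\ln\frac{Q}{1+\ln U}\,(1+o(1))$. Therefore
\[
 \sum_i b_i\approx 2\log_2 N\cdot D,
\]
matching the overview. The delicate parts are three: the additive $O(\ln\ln n)$ in the denominators, the floors, and the last levels where $\ln(q/t_i)$ is small. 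I would absorb these by stopping the harmonic sum where $\ln(q/t_i)\ge(1+\ln U)D^{1/2}$, or a similar cutoff, losing only a $(1-o(1))$ factor. This is where $D\to\infty$ is used. Any leftover coordinates go into the last block.

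\textbf{Errors and failure probability.} The input list length is $3^rN$ with $r=O(\ln q)=\mathrm{polylog}$, and $3^r=2^{o(n/D)}$ because $D\le\ln\ln q$ gives $\mathrm{polylog}\cdot D=o(n)$. The similarity parameter $\gamma_r=5\cdot4^r\varepsilon_0$ is at most $g/64$ if $\varepsilon_0\le 2^{-3r}g$, which is consistent with $\ln(1/\varepsilon_0)=O(n/D)$. The accumulated sampler error over at most $3^rN$ adaptive calls is $2^{-\Omega(b^2)}$ times the call count; by Lemma~\ref{up:adaptive-error} this is $o(1)$ once every $b_i\ge\omega(\sqrt{n/D})$, or else the small early blocks are sampled exactly. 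The auxiliary failure probabilities $\sum_i2^{1-b_i}$ are $o(1)$.

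Finally, Corollary~\ref{ur:dependence} combines $p_E$, $\omega_n$, $\delta$, $\nu$ and the auxiliary failures into the stated bound, with $\delta=o(1)$. For the expected rather than worst-case cost, I would truncate the sampler at a fixed multiple of its expected time and count that truncation as a failure.
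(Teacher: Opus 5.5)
Your architecture is the paper's. It uses a scoring width $f^2\propto U^2D$, the binary schedule $s_i=2^{i/2}s_0$ ending at $q/f$, and DEL/HGH lifting and cyclic pairing through random auxiliary lattices sampled with ADS15. It then allocates blocks harmonically so that $\sum_ib_i\approx2\log_2N\cdot D$.

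\textbf{The gap: the block allocation.} With a uniform correction $O(\ln\ln n)$ in the denominators you get
\[
 \sum_i\frac1{x_i+C\ln\ln n}\approx\frac2{\ln2}\ln\frac{Q}{\ln f+C\ln\ln n}.
\]
This is $(1-o(1))\frac2{\ln2}D$ only when $\ln\frac{\ln\ln n}{1+\ln U}=o(D)$. The theorem assumes only $D\to\infty$ and $\ln q\le(\ln n)^{c_q}$.

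For example, take $U=O(1)$ and $q\approx\sqrt n\,e^{(\ln\ln n)^2}$. Then $D=2\ln\ln\ln n+O(1)$, but your sum is about $\frac2{\ln2}\ln\ln\ln n$, half of what is needed. Reaching $\sum_ib_i\ge n$ then forces $\log_2N\approx n/D$, and the cost becomes $2^{(1+o(1))n/D}$.

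Your cutoff at $x_i\ge(1+\ln U)D^{1/2}$ does not repair this. Since $D\le c_q\ln\ln n$, the cutoff can lie far below $\ln\ln n$, where the correction dominates $x_i$. Leftover coordinates also cannot go into the last block, because its size is capped by \eqref{ue:capacity} and by the $2^{b_r}$ sampler cost.

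\textbf{The fix.} The correction is $O(\ln(1+x_i))$, not $O(\ln\ln n)$.
\begin{enumerate}
\item Both $\ln(1/\varepsilon_0)$ and $\ln P$ are $\Theta(n/D)$, so $\gamma=\ln(1/\varepsilon_0)/\ln P$ is bounded.
\item For $b\le\ln P/y$ this gives $\ln(1/\varepsilon_0)/b\le\gamma y$. The auxiliary threshold therefore holds once $y\ge x_i+\ln16+\tfrac12\ln(1+\gamma y)$.
\item The least such $y$ is $y_i=x_i+O(\ln(1+x_i))$, so $1/y_i=1/x_i-O(\ln(1+x_i)/x_i^2)$.
\item The $x_i\ge\ln2$ have spacing $\ln2/2$, so the total loss in the harmonic sum is $O(1)$. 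The paper absorbs this loss in $N=2^{\lceil n/[2(W-C_0-1)]\rceil}$ with $W=(1+o(1))D$.
\end{enumerate}
With this corrected correction term, your $D^{1/2}$ cutoff would also work.

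For the same reason, take $s_0^2=\Theta(L_0)$ with no polylogarithmic factor. Such a factor lowers $x_1$ by $\Theta(\ln\ln n)$, which the hypotheses do not make negligible relative to $Q$.

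\textbf{Two smaller points.}
\begin{itemize}
\item You misread \eqref{ur:list-demand}. There $q^h$ sits inside the logarithm, so covering every label costs only a factor $O(n\ln q)$ in $N$, which is $o(n/D)$ in $\ln N$. Your per-label conditioning is valid, since the internal coins are independent of the input, but it is unnecessary. The paper keeps the simultaneous event so that one list serves every admissible label.
\item For fixed $c$, the scoring demand $\log_2(1/g^2)=(2\pi/(c\ln2)+o(1))\,n/D$ is a constant fraction of $n/D$, not $o(n/D)$. It must stay strictly below the capacity-driven $\log_2N=(1/2+o(1))\,n/D$. That is exactly the paper's condition $K>4\pi/\ln2$, which also gives $\ln(1/\varepsilon_0)<\ln P$.
\end{itemize}
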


All references to the prescribed schedule use the finite values obtained
by the rounding rules in Section~\ref{uimpl:linear-schedule}. Those rules preserve
the bounds proved below, and their checks pass for every sufficiently
large member of the family.
The implementation attaining~\eqref{ue:refined-cost} uses
Lemma~\ref{up:shifted-sampler} for each Gaussian lift. The proof below
verifies its width condition and accumulated sampling error.
Direct enumeration of the auxiliary cosets gives a second finite
implementation with time and space
\begin{equation}\label{ue:finite-cost}
 2^{O(n/D)}.
\end{equation}
Section~\ref{uimpl:linear} specifies finite arithmetic and stopping
rules for both implementations. When $h=n-o(n)$, either complexity exponent
can also be expressed in terms of the observation dimension up to a
$1+o(1)$ factor.

\begin{remark}\label{ue:substitution}
If only $\|\mathbf s\|^2+\|\mathbf e\|^2\le nV$ is known, take
$B=\lfloor\sqrt{nV}\rfloor$. Then $nV\le R^2\le4nV$, so replacing
$U=R/\sqrt n$ by $\sqrt V$ changes $D$ by at most $\ln2$.
For $q=n^{\kappa+o(1)}$, fixed $\kappa>1/2$, and polylogarithmic
$V$, this gives $D\sim\ln\ln n$.
\end{remark}

\subsection{Widths and block sizes}

The construction starts at a small width on $\mathbb Z^h$, where
sampling is easy. At these early widths, a list can enforce only
small blocks of new coordinate constraints. Lifting and pairing increase
the width while preserving the constraints already imposed, so
later levels can add larger blocks.
The factor $D$ in the exponent comes from adding these capacities
across successive widths. At threshold $t_i$, a list of length $N$ can add
about $\ln N/\ln(q/t_i)$ coordinates. The thresholds grow
geometrically, so the denominators decrease in equal steps. Summing
their reciprocals produces a logarithmic factor, asymptotic to $D$.
Thus a list with $\log_2N\sim n/(2D)$ has enough total capacity.
The parameters below also ensure that its scores can be estimated
accurately.

Fix a rational constant
$K>4\pi/\ln2$ and first define the ideal parameters
\begin{equation}\label{ue:accuracy}
 \begin{gathered}
 f^2=K(R^2/n)D,\qquad \tau=e^{-\pi R^2/f^2},\\
 g=\tau(1-e^{-\pi/f^2}),\qquad
 \varepsilon=\frac{\tau g}{256q^4},\\
 \varepsilon_0=\varepsilon/5,\qquad
 L=\ln(4n/\varepsilon),\qquad L_0=L+\ln5.
 \end{gathered}
\end{equation}
For a sufficiently large absolute constant $C_0$, put
\begin{equation}\label{ue:list-size}
 W=\ln\ln(q/\sqrt L)-\ln(1+\ln f),\qquad
 N=2^{\lceil n/[2(W-C_0-1)]\rceil}.
\end{equation}
The denominator is eventually positive. Section~\ref{uimpl:linear-schedule}
rounds these parameters in directions that preserve the inequalities
below. All geometric hypotheses and the list construction use that
actual finite schedule. The ideal parameters give
\begin{align}
 L_0&=(2\pi/K+o(1))n/D,& W&=(1+o(1))D,
 \label{ue:asymptotics}\\
 \log_2N&=(1/2+o(1))n/D,&
 \gamma_{\rm cap}:=\frac{\ln(1/\varepsilon_0)}{\ln(N/2)}
 &=\frac{4\pi}{K\ln2}+o(1)<1.
 \label{ue:capacity-ratio}
\end{align}
Here $D=O(\ln\ln n)$ follows from $U\ge1$ and the modulus bound.
Moreover $\ln U=o(Q)$ and $Q\le(\ln n)^{c_q}$,
so $\ln f=O((\ln n)^{c_q}+\ln\ln n)=o(n/D)$. To check the
potentially slow divergence in~\eqref{ue:asymptotics}, write
$z=1+\ln U$ and $Q=ze^D$. Then
\[
 \ln f=\ln U+\tfrac12\ln D+O(1),\qquad
 \ln(q/\sqrt L)=Q+\tfrac12\ln D+O(1).
\]
Taking logarithms gives
$W=D+O(1+\ln(1+\ln D))=(1+o(1))D$ even if $D$ grows slowly.
The same calculations give
\begin{equation}\label{ue:margins}
 (\ln q)^3=o(\ln N),\qquad \ln N=o(n),\qquad
 q/(f\sqrt{L_0})\longrightarrow\infty.
\end{equation}

Use the binary schedule
\begin{equation}\label{ue:schedule}
 \begin{gathered}
 r=\left\lfloor2\log_2(q/f)-\log_2(512L_0)\right\rfloor,\\
 s_i=q/(f2^{(r-i)/2})\quad(0\le i\le r),\\
 t_i=s_{i-1}/\sqrt2,\qquad x_i=\ln(q/t_i)\quad(1\le i\le r).
 \end{gathered}
\end{equation}
Then $s_r=q/f$, $s_0^2\ge512L_0$, $t_1\ge16\sqrt{L_0}$,
$t_r=q/(2f)$, and the $x_i$ decrease with spacing $\ln2/2$.
Also $s_0^2<1024L_0$ and $r=O(\ln q)$.

\begin{lemma}[Block allocation and prefix bounds]\label{ue:allocation}
We follow the block-allocation argument of HGH~\cite[Lemma~B.1]{HGH},
verifying below the parameter conditions needed here.
For all sufficiently large $n$, there are a prime $P\in(N/2,N)$ and positive integers
$b_1,\ldots,b_r$ summing to $n$ such that
\begin{equation}\label{ue:capacity}
 \begin{gathered}
 P^{1/b_i}>16,\qquad \sqrt{b_i}P^{1/b_i}\le P,\\
 16qP^{-1/b_i}
 \sqrt{\max\{1,\ln(1/\varepsilon_0)/b_i\}}\le t_i.
 \end{gathered}
\end{equation}
The cumulative prefixes $m_i=\sum_{j\le i}b_j$ obey
\begin{equation}\label{ue:block-bounds}
 b_{\min}=\Omega(\ln N/\ln q),\qquad
 b_{\max}=O(\ln N/(1+\ln f)),\qquad
 m_ix_i=O(\ln N\ln t_i).
\end{equation}
The allocation can be fixed independently of $\mathbf M$.
\end{lemma}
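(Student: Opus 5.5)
The plan follows the capacity argument of HGH: at each level, choose the largest admissible block and then check that the total capacity reaches $n$. First I would fix $P$ by Bertrand's postulate (or a polynomial-time search) as a prime in $(N/2,N)$; this uses only $n,q,V,B$, so the whole allocation is independent of $\mathbf M$. Set $\ell=\ln P\in(\ln(N/2),\ln N)$. The third condition in~\eqref{ue:capacity} splits according to which term in the maximum is active.

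\emph{Case $b_i\ge\ln(1/\varepsilon_0)$.} The condition reads $P^{1/b_i}\ge16q/t_i$, that is, $b_i\le\ell/(x_i+\ln16)$.

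\emph{Case $b_i<\ln(1/\varepsilon_0)$.} The condition is $\ell/b_i\ge x_i+\ln16+\tfrac12\ln(\ln(1/\varepsilon_0)/b_i)$. Since $\ln(1/\varepsilon_0)/\ell=\gamma_{\rm cap}+o(1)<1$ by~\eqref{ue:capacity-ratio}, any block with $b_i\approx\ell/(x_i+C)$ satisfies $\ln(1/\varepsilon_0)/b_i\le(x_i+C)\gamma_{\rm cap}$. The extra term is therefore at most $\tfrac12\ln(x_i+C)$, which is dominated because $x_i\ge\ln(2f)$ is large.

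Accordingly, I will take the target capacity
\[
 c_i=\Bigl\lfloor\frac{\ell}{x_i+C_0}\Bigr\rfloor
\]
and verify all three inequalities of~\eqref{ue:capacity} for every block size $1\le b\le c_i$. The first two inequalities are the easy ones. From $b\le c_i$ we get $P^{1/b}\ge e^{x_i+C_0}>16$. Moreover, $\sqrt b\,P^{1/b}\le P$ holds because $b\ge b_{\min}=\Omega(\ln N/\ln q)$, which forces $P^{1/b}\le q^{O(1)}$, while $\sqrt b\le\sqrt n$ and $(\ln q)^3=o(\ln N)$ by~\eqref{ue:margins}. That margin is exactly what makes $P^{1-1/b}\ge\sqrt n$.

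The main step, which I expect to be the obstacle, is showing $\sum_i c_i\ge n$ with room to spare. The $x_i$ form an arithmetic progression with step $\ln2/2$, running from $x_r=\ln(2f)$ up to $x_1=\ln(q/t_1)\le\ln(q/(16\sqrt{L_0}))$. Comparing the sum with an integral gives
\[
 \sum_i\frac{\ell}{x_i+C_0}\ge\frac{2\ell}{\ln2}\Bigl[\ln(x_1+C_0)-\ln(x_r+C_0)\Bigr]-O(\ell/\ln f).
\]
The bracket is at least $\ln\ln(q/\sqrt L)-\ln(1+\ln f)-O(1)=W-O(1)$, after absorbing $C_0$ and the constants $\ln 16$ and $\ln2$ into the large absolute constant.

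Since $\ell\ge\ln(N/2)$ and $\log_2N\ge n/[2(W-C_0-1)]$, the right-hand side is at least $2\log_2(N/2)(W-C_0-1)\cdot(1+o(1))$. This exceeds $n$ once the $-1$ in the exponent and the floor losses are absorbed. The floor losses amount to $r=O(\ln q)$ units in total, which is $o(n/W)$. Making this bookkeeping with the constant $C_0$ come out in the right direction is the delicate point: I would choose $C_0$ first, large enough that the integral comparison, the $\ln 16$ shift, and the $\tfrac12\ln(x_i+C)$ correction are all absorbed.

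With the capacities in hand, I construct the blocks greedily. Set $b_i=c_i$ from the last level backward, or equivalently trim the sum, and then reduce some blocks so the total is exactly $n$, keeping every $b_i\ge1$. Reductions preserve the third condition, since it becomes easier for smaller $b$. To maintain the lower bound $b_{\min}=\Omega(\ln N/\ln q)$, I trim proportionally rather than zeroing out blocks. This works because every $c_i\ge\ell/(\ln q+C_0)$ and the total surplus is at most a constant fraction. That fraction can be guaranteed by tuning the $1+o(1)$ slack, or by noting that overshoot is impossible to avoid only up to a factor bounded via $W-C_0-1$ versus $W-O(1)$.

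The remaining bounds in~\eqref{ue:block-bounds} then follow directly:
\begin{itemize}
\item $b_{\max}\le c_r=O(\ln N/(1+\ln f))$;
\item $m_ix_i\le x_i\sum_{j\le i}\ell/(x_j+C_0)$, and for $j\le i$ we have $x_j\ge x_i$, so $m_ix_i\le\ell\cdot(\text{number of levels})$, which is too weak. Instead, use $x_j=x_i+(i-j)\ln2/2$, so the sum is $O(\ell\ln(x_1/x_i))$, and then $x_i\ln(x_1/x_i)\le x_i\cdot O(\ln\ln q)$. Comparing with $\ln t_i=\ln q-x_i$, I would split into the cases $x_i\le\ln q/2$ and $x_i>\ln q/2$. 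In the second case the index range is short, and I would bound the sum directly by $O(\ell\,(x_1-x_i)/x_i)$, since $x_1-x_i\le\ln t_i$.
\end{itemize}
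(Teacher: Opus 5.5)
Your plan follows the paper's architecture: harmonic capacities over the progression $x_i$, an integral comparison producing $\tfrac{2\ell}{\ln2}(W-O(1))$, and a guaranteed minimum block for $b_{\min}$. However, the capacity $c_i=\lfloor\ell/(x_i+C_0)\rfloor$ does not satisfy the third inequality in \eqref{ue:capacity}, and your verification of it is wrong.

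At $b=c_i$ you have $\ell/b=x_i+C_0+o(1)$ and $\ln(1/\varepsilon_0)/b\approx\gamma_{\rm cap}(x_i+C_0)>1$. The condition therefore becomes
\[
 C_0-\ln16+o(1)\ \ge\ \tfrac12\ln\bigl(\gamma_{\rm cap}(x_i+C_0)\bigr).
\]
Here $\gamma_{\rm cap}$ is a fixed positive constant and $x_i\ge x_r=\ln(2f)\to\infty$, with $x_1$ of order $\ln(q/\sqrt L)$ at the top level. So the right side is unbounded, and the inequality fails for large $n$ whatever fixed $C_0$ you pick. Saying the correction is ``dominated because $x_i$ is large'' compares it with the wrong quantity. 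At $b=c_i$ the only slack beyond $x_i+\ln16$ is the additive constant $C_0-\ln16$, and a larger $x_i$ makes the correction worse, not better.

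The repair is a level-dependent shift, which is what the paper does. Let $y_i$ solve $y_i=x_i+\ln16+\tfrac12\ln(1+\gamma_{\rm cap}y_i)$ and set $B_i=\lfloor\ln(N/2)/y_i\rfloor$. For $b\le B_i$ one has $z=\ln(N/2)/b\ge y_i$ and $\ln(1/\varepsilon_0)/b=\gamma_{\rm cap}z$. Monotonicity of $z-\tfrac12\ln(1+\gamma_{\rm cap}z)$ then gives the third condition for all smaller blocks at once. Since $y_i=x_i+O(\ln(1+x_i))$ and $\sum_i\ln(1+x_i)/x_i^2=O(1)$ along a progression of step $\ln2/2$, the capacity sum is still at least $2\log_2N\,(W-C_{\rm cap})-C_{\rm cap}r$ with an absolute constant $C_{\rm cap}$. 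Choosing $C_0\ge C_{\rm cap}+10$ afterwards removes the circularity you flagged. With your $C_0$-dependent shift, the loss $\ln(x_r+C_0)-\ln(1+\ln f)$ grows only like $\ln C_0$, so that bookkeeping would in fact close; it is not the real obstacle.

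Two smaller remarks:
\begin{itemize}
\item The condition $\sqrt b\,P^{1/b}\le P$ needs no appeal to $b_{\min}$. It is an equality at $b=1$, and for $b\ge2$ it follows from $P^{1/b}\le\sqrt P$ and $b\le n<P$.
\item The bound you discarded as too weak is exactly what is needed. Indeed $m_ix_i\le x_i\sum_{j\le i}\ell/x_j\le i\ell$ and $i=1+2\log_2(t_i/t_1)=O(\ln t_i)$; equivalently, $x_i\ln(x_1/x_i)\le x_1-x_i\le\ln t_i$ by $\ln u\le u-1$. Your intermediate estimate $x_i\ln(x_1/x_i)\le x_i\,O(\ln\ln q)$ is too lossy when $x_i$ is of order $\ln q$.
\end{itemize}
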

\begin{proof}
We first bound the capacity of each level, then sum these bounds and
choose the blocks. Let $U_0=N/2$, and let $y_i>1/2$ be the unique solution of
\[
 y_i=x_i+\tfrac12\ln(1+\gamma_{\rm cap}y_i)+\ln16,
 \qquad B_i=\lfloor\ln U_0/y_i\rfloor.
\]
The function $y-\tfrac12\ln(1+\gamma_{\rm cap}y)$ is strictly
increasing on this interval. If $1\le b\le B_i$, then
$z=\ln U_0/b\ge y_i$ and, for $P\ge U_0$,
\[
 16qP^{-1/b}\sqrt{\max\{1,\ln(1/\varepsilon_0)/b\}}
 \le16qe^{-z}\sqrt{1+\gamma_{\rm cap}z}\le t_i.
\]
For every $b\le B_i$ and all sufficiently large $n$,
\[
 b\le B_i\le\frac{\ln U_0}{y_i}
 <\frac{\ln U_0}{\ln16}<U_0<P.
\]
Thus, for $b\ge2$, $P^{1/b}\le\sqrt P$ and
$\sqrt b<\sqrt P$, proving the other conditions. For $b=1$ the second
condition is an equality.

To sum the capacities, use $B_i\ge\ln U_0/y_i-1$:
\[
 \sum_iB_i\ge \ln U_0\sum_i\frac1{y_i}-r.
\]
Uniformly $y_i=x_i+O(\ln(1+x_i))$, and therefore
\[
 \frac1{y_i}=\frac1{x_i}
   -O\!\left(\frac{\ln(1+x_i)}{x_i^2}\right).
\]
The $x_i$ have spacing $\ln2/2$, so the integrability of
$\ln(1+x)/x^2$ on $[\ln2,\infty)$ makes the accumulated error $O(1)$.
The same spacing and an integral comparison give
\[
 \sum_i\frac1{x_i}
  =\frac2{\ln2}\ln\frac{x_1}{x_r}+O(1),
 \qquad
 \ln\frac{x_1}{x_r}=W+O(1).
\]
Consequently there is an absolute constant $C_{\rm cap}$ such that,
for all sufficiently large $n$, the harmonic capacity satisfies
\[
 \sum_iB_i
 \ge2\log_2N\,(W-C_{\rm cap})-C_{\rm cap}r\ge n.
\]
Finally fix $C_0\ge C_{\rm cap}+10$ to absorb the summation and rounding
errors. First
assign $b_*:=\lfloor c_*\ln N/\ln q\rfloor$ coordinates to each
level, with sufficiently small fixed $c_*>0$. Each $B_i$ is at least
$2b_*$, and $rb_*=O(\ln N)=o(n)$. Assign the remaining coordinates
in increasing level order without exceeding $B_i$. This fixes positive
blocks with the first two bounds in~\eqref{ue:block-bounds}. Moreover,
\[
 \begin{aligned}
 m_ix_i&\le\ln U_0\left(1+\frac2{\ln2}x_i\ln(x_1/x_i)\right)\\
 &\le\ln U_0\left(1+\frac2{\ln2}(x_1-x_i)\right)
   =O(\ln N\ln t_i),
 \end{aligned}
\]
using $\ln u\le u-1$ and $x_1-x_i=\ln(t_i/t_1)$.
A prime in $(U_0,2U_0)$ can be found by deterministic primality tests
in $N\operatorname{poly}(\ln N)$ work. All choices depend only on the
specified numerical parameters.
\end{proof}

The subsequent geometric proofs need only the following consequences
of the schedule:
\begin{equation}\label{ue:geometry-scale}
 \begin{gathered}
 \varepsilon_0=4ne^{-L_0},\qquad
 L_0=\Theta(n/D),\qquad \ln n=o(L_0),\qquad L_0=o(n),\\
 T_i:=t_i/2\ge8\sqrt{L_0},\qquad T_i\le q,\qquad
 \max_i\frac{m_i\ln(q/T_i)}{n\ln T_i}=o(1).
 \end{gathered}
\end{equation}
For the last estimate, $\ln(q/T_i)=x_i+\ln2\le2x_i$ because
$x_i\ge\ln2$. Since $\ln T_i=\ln t_i-\ln2$,
\eqref{ue:block-bounds} and $\ln N=o(n)$ give
\[
 \frac{m_i\ln(q/T_i)}{n\ln T_i}
 \le\frac{2m_ix_i}{n\ln T_i}
 =O\!\left(\frac{\ln N}{n}\frac{\ln t_i}{\ln T_i}\right)=o(1)
\]
uniformly in the prescribed levels.
In particular
$\ln(n\ln q/L_0)=o(\ln L_0)$, when this additional entropy comparison
is convenient in a structured geometric argument. The remaining task for
each public family is to establish~\eqref{ue:geometry} at these scales.

\subsection{Lifting and pairing}

Fix an integer representative $\widehat{\mathbf M}\in\mathbb Z^{h\times n}$
of $\mathbf M$. For each level define the column index block
\[
 I_i=\{m_{i-1}+1,\ldots,m_i\},\qquad b_i=m_i-m_{i-1}.
\]
Let $\mathbf M^{(i)}\in\mathbb Z^{h\times b_i}$ consist of the columns
of $\widehat{\mathbf M}$ indexed by $I_i$, in increasing order.
This fixed lift supplies the integer values used in extension blocks
and shifted sampling centers.

To add a block, we first enlarge its lattice by a cyclic group of order
$P$. This makes Gaussian lifting possible. Pairing vectors with the
same quotient label then returns their differences to the desired
prefix lattice. Independently of $\mathbf M$, sample each $\mathbf z_i$ uniformly
from $\mathbb F_P^{b_i}\setminus\{0\}$, mutually independently over
$i\in[r]$, and define
\begin{equation}\label{ue:auxiliary}
 \Gamma_i=q\mathbb Z^{b_i}+(q/P)\mathbf z_i\mathbb Z,\qquad
 \Lambda_i'=\Lambda_{m_i}+(0\oplus\Gamma_i).
\end{equation}
Deletion of the newest coordinates gives
\begin{equation}\label{ue:extensions}
 0\longrightarrow\Gamma_i\longrightarrow\Lambda_i'
 \longrightarrow\Lambda_{m_{i-1}}\longrightarrow0,
 \qquad \Lambda_i'/\Lambda_{m_i}\simeq\mathbb Z/P\mathbb Z.
\end{equation}
Indeed a previous vector $(\mathbf u,\mathbf v)$ can be extended by
$-(\mathbf M^{(i)})^T\mathbf u$.
The kernel after deletion is $\Gamma_i$. The second quotient is the
class of $\Gamma_i$ modulo $q\mathbb Z^{b_i}$, of order exactly $P$
because $\mathbf z_i\ne0$. The construction works over every integer modulus.

For this fixed schedule, write
\begin{equation}\label{ue:good-operator}
 \mathcal G_n=\left\{\mathbf M:
   \eta_{\varepsilon_0}(\Lambda_{m_i}(\mathbf M))\le t_i/2
   \text{ for every }i\in[r]\right\}.
\end{equation}
Equation~\eqref{ue:geometry} bounds
$\Pr_{\mathbf M}[\mathbf M\notin\mathcal G_n]$ by $\omega_n$.

\begin{algorithm}[H]
\caption{\textsc{GaussianListEngine}: constructing the shared list}
\label{alg:gaussian-list-engine}
\KwIn{A public operator $\mathbf M$, the fixed prefixes and widths
$(m_i,s_i)_{i=0}^r$, the auxiliary prime $P$, and target length $N$.}
\KwOut{A list of $N$ vectors in the final lattice $\Lambda_n$.}
Initialize a list $L\gets$ $3^rN$ independent samples from
$D_{\mathbb Z^h,s_0}$\;
\For{$i\gets1$ \KwTo $r$}{
  Draw $\mathbf z_i$ uniformly from $\mathbb F_P^{b_i}\setminus\{0\}$,
  independently of the input and previous choices, and form
  $\Gamma_i,\Lambda_i'$ as in~\eqref{ue:auxiliary}\;
  Independently lift every vector in $L$ to $\Lambda_i'$ using the
  shifted Gaussian on $\Gamma_i$ of width $s_{i-1}$\;
  $L\gets\textsc{BucketAndCombine}(L,\Lambda_i',\Lambda_{m_i})$,
  using Algorithm~\ref{alg:bucket-and-combine}\;
}
\Return{$L$}\;
\end{algorithm}

\begin{theorem}[Gaussian list construction]\label{ue:list-engine}
Fix a public operator $\mathbf M\in\mathcal G_n$. With probability at least
$1-2r2^{-b_{\min}}$ over the independent uniform choices
$(\mathbf z_i)_{i=1}^r$, the following hold:
\begin{enumerate}
\item Exact shifted sampling produces exactly $N$ vectors in
$\Lambda$, strongly conditionally $\alpha$-similar to
$D_{\Lambda,q/f}$, with $\alpha\le g/256$.
\item Using the finite initialization in Section~\ref{uimpl:linear-initialization}
and Lemma~\ref{up:shifted-sampler} for the shifted draws produces a
list within total variation distance
$\nu_{\rm DGS}\le\exp[-\Omega(b_{\min}^2)]=o(1)$ of this reference
list. Its expected time and space satisfy~\eqref{ue:refined-cost}.
\end{enumerate}
The expected resource bound is uniform over all public inputs.
\end{theorem}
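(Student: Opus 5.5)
The proof runs by induction over the levels $i=0,\dots,r$ of Algorithm~\ref{alg:gaussian-list-engine}. The invariant is that after level $i$ the list has length $3^{r-i}N$, lies in $\Lambda_{m_i}$, and is strongly conditionally $\gamma_i$-similar to $D_{\Lambda_{m_i},s_i}$.

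At level $0$ we take $3^rN$ independent samples from $D_{\mathbb Z^h,s_0}$, so $\gamma_0=0$. At level $i$ there are two steps.

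\emph{Lifting.} We lift to $\Lambda_i'$ using the primitive fiber $0\oplus\Gamma_i$. Its projection is $\Lambda_{m_{i-1}}$ by the exact sequence~\eqref{ue:extensions}. By~\eqref{up:del-lift}, this step costs $3\varepsilon_0$ in similarity provided $s_{i-1}\ge\eta_{\varepsilon_0}(\Gamma_i)$.

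\emph{Pairing.} We apply Lemma~\ref{up:ordered-pairing} with $L=\Lambda_{m_i}$, $L'=\Lambda_i'$ and $P$ classes. This needs $s_{i-1}\ge\sqrt2\,\eta_{\varepsilon_0}(\Lambda_{m_i})$, which holds because $\mathbf M\in\mathcal G_n$ gives $\eta_{\varepsilon_0}(\Lambda_{m_i})\le t_i/2<t_i=s_{i-1}/\sqrt2$. It also needs $M_0\ge 3P$, which holds since $3^{r-i+1}N\ge 3N>3P$.

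For the auxiliary smoothing we use~\eqref{up:hgh-smoothing}. Its hypotheses $P^{1/b_i}\ge16$ and $\sqrt{b_i}P^{1/b_i}\le P$ come from~\eqref{ue:capacity}. It gives $\eta_{\varepsilon_0}(\Gamma_i)\le t_i\le s_{i-1}$, except with probability $2^{1-b_i}\le 2\cdot2^{-b_{\min}}$. A union bound over the $r$ levels yields the stated probability $1-2r2^{-b_{\min}}$.

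The recurrence~\eqref{up:del-recurrence} then gives final similarity $\alpha=\gamma_r=5(4^r-1)\varepsilon_0<4^r\varepsilon$, with exactly $N$ vectors in $\Lambda$ at width $s_r=q/f$.

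The main quantitative check is $4^r\varepsilon\le g/256$. By~\eqref{ue:schedule}, $2^r\le (q/f)^2/(512L_0)\le q^2$, so $4^r\le q^4$. Then $\varepsilon=\tau g/(256q^4)$ gives $\alpha\le \tau g/256\le g/256$. This is where the $q^4$ in~\eqref{ue:accuracy} is used.

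For part~2 we invoke Lemma~\ref{up:shifted-sampler} on each lift in dimension $b_i$. The width condition~\eqref{up:hgh-shifted} follows from $\operatorname{dist}(\mathbf c,\Gamma_i)\le q\sqrt{b_i}/2$: since $b_{\min}=\Omega(\ln N/\ln q)$ and $(\ln q)^3=o(\ln N)$ by~\eqref{ue:margins}, we have $\sqrt{b_i}\gg \ln q$, and hence $q\sqrt{b_i}\,e^{-\sqrt{b_i}}<s_0\le s_{i-1}$. Each call then errs by at most $2^{-c_{\rm DGS}b_i^2}$. By Lemma~\ref{up:adaptive-error}, summing over at most $r3^rN$ calls gives total variation $r3^rN2^{-\Omega(b_{\min}^2)}$. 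We need this to be $\exp(-\Omega(b_{\min}^2))$, i.e.\ $b_{\min}^2\gg r\ln 3+\ln N$. This holds because $b_{\min}^2=\Omega((\ln N)^2/(\ln q)^2)$ and $(\ln q)^3=o(\ln N)$. The finite initialization from Section~\ref{uimpl:linear-initialization} contributes its own negligible error.

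Time and space are dominated by the $3^rN$ list entries times $2^{b_{\max}+o(b_{\max})}$ per lift. We have $3^r=2^{O(\ln q)}=2^{o(n/D)}$. Also $b_{\max}=O(\ln N/(1+\ln f))$, and $\ln f\to\infty$ because $D\to\infty$ and $f^2=KU^2D$, so $b_{\max}=o(\ln N)$. Bucketing and primality search are $N\operatorname{poly}$, so the total is $N^{1+o(1)}=2^{(1/2+o(1))n/D}$ by~\eqref{ue:capacity-ratio}. This bound is uniform in the input, since only the sampler's running time is random and its expectation does not depend on $\mathbf M$.

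The step I expect to be most delicate is this error and cost bookkeeping. The sampler error must beat the number of calls, which forces $b_{\min}\gg\sqrt{\ln N}$, while the lifting cost must stay $o(\ln N)$, which forces $b_{\max}=o(\ln N)$. Both must come from~\eqref{ue:block-bounds} and~\eqref{ue:margins}.
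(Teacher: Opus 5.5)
Your proposal is correct and follows the paper's proof essentially step for step. It has the same auxiliary smoothing from \eqref{up:hgh-smoothing} and \eqref{ue:capacity} with a union bound over levels, the same lifting/pairing recurrence \eqref{up:del-recurrence} with $4^r<q^4$ absorbing the $q^4$ in $\varepsilon$, the same ADS15 width check via $\sqrt{b_{\min}}\gg\ln q$, the same error sum over $O(3^rN)$ adaptive calls, and the same cost bound from $b_{\max}=o(\ln N)$.

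The only detail the paper adds is to reduce each center $\mathbf c_i$ modulo $q\mathbb Z^{b_i}$ before the shifted draw, which is harmless since $q\mathbb Z^{b_i}\subseteq\Gamma_i$. This keeps input bit lengths polynomial, so the input-bit factors in Lemma~\ref{up:shifted-sampler} are absorbed uniformly over public inputs. That is a cleaner justification of uniformity than your remark that the sampler's expected running time does not depend on $\mathbf M$.
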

\begin{proof}
The random auxiliary lattice bound~\eqref{up:hgh-smoothing}, with
\eqref{ue:capacity}, gives
$\eta_{\varepsilon_0}(\Gamma_i)\le t_i$ for all $i$ except with
probability at most $2r2^{-b_{\min}}$. On the resulting
auxiliary-good event,
\[
 s_{i-1}>\eta_{\varepsilon_0}(\Gamma_i),\qquad
 s_{i-1}>\sqrt2\eta_{\varepsilon_0}(\Lambda_{m_i}).
\]
These are the smoothing conditions on the auxiliary and target lattices
required by the lifting transformation~\eqref{up:del-lift}
and Lemma~\ref{up:ordered-pairing}.

Concretely, for $(\mathbf u,\mathbf v)\in\Lambda_{m_{i-1}}$ put
$\mathbf c_i=(\mathbf M^{(i)})^T\mathbf u$, sample
$\mathbf z\leftarrow D_{\Gamma_i,s_{i-1},\mathbf c_i}$, and append
$\mathbf w=\mathbf z-\mathbf c_i$. Thus
$\mathbf w+(\mathbf M^{(i)})^T\mathbf u=\mathbf z\in\Gamma_i$.
For a coordinate $j$ with $z_{i,j}\ne0\pmod P$,
the quotient label of the appended block is
\[
 z_{i,j}^{-1}\frac{P(\mathbf w+(\mathbf M^{(i)})^T\mathbf u)_j}{q}\pmod P.
\]
The displayed numerator divided by $q$ is an integer, and equal labels
give differences in $\Lambda_{m_i}$. The input length
$M_i=3^{r-i+1}N\ge3N>3P$ meets the quotient-size condition of
Lemma~\ref{up:ordered-pairing}, which supplies the exact output length
and conditional similarity in~\eqref{up:del-pair}.
Initialization uses the product Gaussian on
$\mathbb Z^h$, whose width condition follows from
$s_0^2\ge512L_0\gg\ln(h+2)$. For the finite implementation, the
scalar procedure in Section~\ref{uimpl:linear-initialization} generates the initial
list with total statistical error at most $\nu_{\rm init}=2^{-n^2}$
and cost $N^{1+o(1)}$.

The recurrence~\eqref{up:del-recurrence} gives length $3^{r-i}N$
and width $s_i=2^{i/2}s_0$ after level $i$. At the final level,
\begin{equation}\label{ue:similarity}
 \alpha=5(4^r-1)\varepsilon_0
     <q^4\varepsilon\le\tau g/256\le g/256,
\end{equation}
since $4^r=(q/(fs_0))^4<q^4$.

For the implementation, reduce $\mathbf c_i$ modulo $q\mathbb Z^{b_i}$ first.
Translation back preserves the lift, since this product lattice is
contained in $\Gamma_i$. The reduced center is within distance
$q\sqrt{b_i}/2$ of $\Gamma_i$ for every history. The bounds
\eqref{ue:margins} and~\eqref{ue:block-bounds} imply
$\sqrt{b_{\min}}\gg\ln q$, so
\[
 s_{i-1}>q\sqrt{b_i}/(2e^{\sqrt{b_i}}).
\]
Thus Lemma~\ref{up:shifted-sampler} applies to every center encountered by the
algorithm. Summing the input lengths over all levels gives
$J=(3^{r+1}-3)N/2$ shifted calls, each using fresh random bits.
Lemma~\ref{up:adaptive-error} gives
\begin{equation}\label{ue:tv}
 \nu_{\rm DGS}\le\nu_{\rm init}+J2^{-c_{\rm DGS}b_{\min}^2}.
\end{equation}
In particular,
\[
 \nu_{\rm DGS}
 \le\exp\!\left[-\Omega\!\left(
             \frac{n^2}{D^2(\ln q)^2}\right)\right]=o(1),
\]
because $\ln J=O(n/D+\ln q)=o(b_{\min}^2)$ and
$\nu_{\rm init}=2^{-n^2}$ is smaller than the same exponential scale.

Each shifted call costs $2^{b_i+o(b_i)}$, by
Lemma~\ref{up:shifted-sampler}.
As $b_{\max}=O(\ln N/(1+\ln f))=o(\ln N)$, all lifting, pairing,
initialization, quotient arithmetic, and primality search have expected
cost at most
\[
 N^{1+O(1/(1+\ln f))+o(1)}
 \operatorname{poly}(n,\ln q,\ln(1/\varepsilon_0))
 =2^{(1/2+o(1))n/D}.
\]
The bound holds uniformly because the schedule and center-distance
estimate hold for every public input.
\end{proof}

Section~\ref{uimpl:linear-samplers} converts this uniform bound on expected cost
into a deterministic time limit by stopping after
$n^{K_t}\bar T_n$ steps, with at most $n^{-K_t}$ additional
failure probability. The truncated algorithm retains the exponent
in~\eqref{ue:refined-cost} and supplies the time bound used in
Section~\ref{ux:section}.

\subsection{From the list to recovery}

\begin{proof}[Proof of Theorem~\ref{ue:main}]
On the public good event, the final prefix of~\eqref{ue:geometry} gives
$\eta_{\varepsilon_0}(\Lambda)\le t_r/2=q/(4f)$ and hence
$\rho_{4f}(\mathcal D\setminus\{0\})\le\varepsilon_0$.
This implies the mass hypothesis of Theorem~\ref{ur:score}, since
$\varepsilon_0\le\tau g/(1280q^4)$. The list from
Theorem~\ref{ue:list-engine} has conditional similarity error at most
$g/256$.

The list is also long enough to estimate every score. Indeed,
\begin{equation}\label{ue:signal-margin}
 \ln(Ng^2)
 =\left(\frac{\ln2}{2}-\frac{2\pi}{K}+o(1)\right)n/D
 =\Omega(n/D),
\end{equation}
and, writing $H=n(2B+1)q^h$, we have
\[
 \ln H=O(n\ln q),\qquad
 \ln\ln H=O(\ln n+\ln\ln q)=o(n/D).
\]
Thus $Ng^2/\ln H\to\infty$, as required by
Theorem~\ref{ur:shared-list}. The estimate for $H$ uses $B\le R$ and
$\ln R=\tfrac12\ln n+\ln U=O(\ln q+\ln n)$, as follows from
\eqref{ue:window}. A conservative concentration failure bound is
$2n(2B+1)q^h\exp(-Ng^2/2048)=o(1)$.
Theorems~\ref{ur:score} and~\ref{ur:shared-list} therefore identify every
coordinate on the energy event. Lemma~\ref{ur:unique} identifies the
original integer representative.

Combining the public, auxiliary, implementation, and scoring events gives
the failure bound
\begin{equation}\label{ue:failure}
 \begin{split}
 p_E+\omega_n+2r2^{-b_{\min}}+\delta_{\rm impl}
   +2n(2B+1)q^h e^{-Ng^2/2048}.
 \end{split}
\end{equation}
For the ideal reference process $\delta_{\rm impl}=0$.
Section~\ref{uimpl:linear} proves $\delta_{\rm impl}=o(1)$ for both finite
implementations, including sampling, truncation, and stopping errors.
The score computation costs $O(N[h+n\log(2B+1)])$ operations. Since
$\ln B=O(\ln q+\ln n)=o(n/D)$, it preserves the exponent
in~\eqref{ue:refined-cost}.
The implementations in Section~\ref{uimpl:linear} give the claimed
resource bounds. Corollary~\ref{ur:dependence} allows the secret and
residue to depend on the public operator.
\end{proof}

\section{Ternary LWE: Provable Subexponential Recovery}
\label{ut:section}

We first demonstrate the power of our linear recovery framework on
unstructured LWE. In particular, we consider ternary LWE as studied by
May~\cite{May21}, who gave meet-in-the-middle attacks based on
representations with heuristic exponential complexity.
In independent and concurrent work, Chen, Jiang, Wang, and
Zhang~\cite{CJWZ26} give a provable subexponential-time decision
algorithm for narrow LWE, including ternary secrets and errors,
over odd prime polynomial moduli using sublinear samples.

We obtain provable subexponential
recovery in the common unstructured model
\begin{equation}\label{ut:square-model}
 \begin{gathered}
 \mathbf M\leftarrow\operatorname{Unif}(\mathbb F_q^{n\times n}),\qquad
 \mathbf y=\mathbf M\mathbf s+\mathbf e\pmod q,\qquad
 \mathbf s,\mathbf e\in\{-1,0,1\}^n,\\
 q=n^{\kappa+o(1)}\text{ prime},\qquad \kappa>1/2\text{ fixed}.
 \end{gathered}
\end{equation}
Here the entries of $\mathbf M$ are independent and uniform. For comparison
with May, we take $(\mathbf s,\mathbf e)$ independent of $\mathbf M$.
Coordinates within $(\mathbf s,\mathbf e)$ may be dependent, as for a
prescribed Hamming weight.
The recovery result below also allows dependence on $\mathbf M$.
May writes $\mathbf M\mathbf s=\mathbf y+\mathbf e$. Replacing his error by
its negative gives our convention. Both attacks use the original
$n$ equations~\cite[Sections~2.1--2.3]{May21}. 

Let $\mathbf E_m:\mathbb F_q^m\hookrightarrow\mathbb F_q^n$ append zero
coordinates, so $\mathbf M_m=\mathbf M\mathbf E_m$ consists of the first
$m$ columns of $\mathbf M$. The prefixes used by the algorithm are
\begin{equation}\label{ut:prefix}
 \Lambda_m(\mathbf M)=\{(\mathbf u,\mathbf v)\in\mathbb Z^{n+m}:
                    \mathbf M_m^T\mathbf u+\mathbf v=0\pmod q\}.
\end{equation}
For every fixed nonzero $\mathbf z\in\mathbb F_q^m$, the vector
$\mathbf M_m\mathbf z=\mathbf M\mathbf E_m\mathbf z$ is uniform in $\mathbb F_q^n$: each independent
uniform row of $\mathbf M$ takes
a uniform value on $\mathbf E_m\mathbf z$.
Lemma~\ref{ug:uniform-mass} and Corollary~\ref{ug:uniform-geometry}
therefore control the dual mass and smoothing of the prescribed prefixes.

Dual Gaussian mass is also central to smoothing analyses of random
orthogonal lattices~\cite{KNSW20} and of codes and
lattices~\cite{DADRT23}. Here the modular residues index dual cosets,
and their total nonzero Gaussian mass controls the prescribed prefixes.

\begin{corollary}
\label{ut:direct}
Let $q>2$ be prime, $\mathbf M\leftarrow\operatorname{Unif}(\mathbb F_q^{n\times n})$, and
\[
 \mathbf y=\mathbf M\mathbf s+\mathbf e\pmod q,\qquad
 \mathbf s\in\{-1,0,1\}^n,\quad \mathbf e\in\mathbb Z^n.
\]
Suppose a deterministic rational $V_n\ge1$ of polynomial bit length
satisfies
\begin{equation}\label{ut:energy}
 \Pr[\|\mathbf s\|^2+\|\mathbf e\|^2>nV_n]\le p_E.
\end{equation}
With $U=\sqrt{V_n+3/n}$, assume
\begin{equation}\label{ut:window}
 D=\ln\frac{\ln(q/\sqrt n)}{1+\ln U}\longrightarrow\infty,
 \qquad \ln q\le(\ln n)^{c_q}
\end{equation}
for a fixed constant $c_q$. Then the algorithm of
Theorem~\ref{ue:main} recovers $\mathbf s$ from the $n$ labels with probability
$1-p_E-o(1)$, in expected time and space
\begin{equation}\label{ut:cost}
 2^{(1/2+o(1))n/D}.
\end{equation}
The secret and error may depend on $\mathbf M$.
\end{corollary}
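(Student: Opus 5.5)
The plan is to apply Theorem~\ref{ue:main} with $h=n$ and the ternary coordinate bound $B=1$. First I check the certificate~\eqref{ur:certificate}. Since $\mathbf s\in\{-1,0,1\}^n$, the bound $|s_i|\le1=B$ always holds. The energy event is exactly~\eqref{ut:energy}, with failure probability $p_E$, and $1\le B\le\lfloor\sqrt{nV_n}\rfloor$ because $V_n\ge1$. Then $R^2=nV_n+3$, so $U=R/\sqrt n=\sqrt{V_n+3/n}\ge1$, and $Q$ and $D$ coincide with the quantities in~\eqref{ut:window}. The window assumptions~\eqref{ue:window} therefore hold verbatim, and the schedule of Lemma~\ref{ue:allocation} is fixed from $(n,q,V_n,B)$ before $\mathbf M$ is drawn.

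The remaining hypothesis is the geometric event~\eqref{ue:geometry} for the prescribed prefixes $\Lambda_{m_i}(\mathbf M)$ of~\eqref{ut:prefix}. These are exactly the systematic lattices $L(\mathbf A_m)$ of~\eqref{ug:prefix} with $\mathbf A_m=\mathbf M_m$ and ambient dimension $n+m$. Next I verify the uniform-image property. For fixed nonzero $\mathbf z\in\mathbb F_q^m$, each row of $\mathbf M$ is independent and uniform, so its inner product with the nonzero vector $\mathbf E_m\mathbf z$ is uniform in $\mathbb F_q$. Independence of the rows then makes $\mathbf M_m\mathbf z$ uniform in $\mathbb F_q^n$. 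This holds for every prefix under the single draw of $\mathbf M$.

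I then invoke Corollary~\ref{ug:uniform-geometry} with $T_i=t_i/2$ and $L_0$ as in~\eqref{ue:accuracy}. The required schedule conditions are recorded in~\eqref{ue:geometry-scale}:
\begin{itemize}
\item $\varepsilon_0=4ne^{-L_0}$;
\item $\ln n=o(L_0)$ and $L_0=o(n)$ with $L_0\to\infty$;
\item $8\sqrt{L_0}\le T_i\le q$;
\item the prefix ratio $\max_i m_i\ln(q/T_i)/(n\ln T_i)=o(1)$.
\end{itemize}
The corollary then gives $\omega_n\le\frac r{2n}e^{-2L_0}$. Since $r=O(\ln q)$, this is $o(1)$.

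Theorem~\ref{ue:main} now yields recovery with probability $1-p_E-\omega_n-o(1)=1-p_E-o(1)$, in expected time and space $2^{(1/2+o(1))n/D}$. It also allows $(\mathbf s,\mathbf e)$ to depend on $\mathbf M$, by Corollary~\ref{ur:dependence}. The only step needing care, and the main obstacle, is confirming that~\eqref{ue:geometry-scale} holds for this schedule, in particular the uniform prefix ratio. That bound was derived after Lemma~\ref{ue:allocation} from~\eqref{ue:block-bounds} and $\ln N=o(n)$, so I would cite that derivation rather than redo it. I would also note that $q>2$ prime is needed only for the field structure behind the uniform images.
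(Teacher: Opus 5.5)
Your proposal is correct and follows essentially the same route as the paper's proof. You set $h=n$, $V=V_n$, $B=1$, and verify the uniform-image property from the independent uniform rows of $\mathbf M$. You then apply Corollary~\ref{ug:uniform-geometry} with $T_i=t_i/2$ via the schedule bounds~\eqref{ue:geometry-scale}, obtain the failure bound $\frac r{2n}e^{-2L_0}=o(1)$, and conclude with Theorem~\ref{ue:main} and a union bound with the energy event.
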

\begin{proof}
Take $h=n$, $V=V_n$, and $B=1$, so that
$R^2=nV_n+3$ and $U=\sqrt{V_n+3/n}$.
For each prescribed prefix and each fixed nonzero
$\mathbf z\in\mathbb F_q^{m_i}$, the independent uniform rows of
$\mathbf M$ make $\mathbf M_{m_i}\mathbf z$ uniform in $\mathbb F_q^n$.
Lemma~\ref{ug:uniform-mass} therefore gives the expected dual mass
in~\eqref{ug:uniform-mass-identity}.
The actual finite schedule satisfies~\eqref{ue:geometry-scale},
so Corollary~\ref{ug:uniform-geometry} applies with
$T_i=t_i/2$ and $\varepsilon_0=4ne^{-L_0}$, giving
\[
 \Pr_{\mathbf M}\!\left[\exists i\in[r]:
   \eta_{\varepsilon_0}(\Lambda_{m_i})>t_i/2\right]
 \le \frac r{2n}e^{-2L_0}=o(1).
\]
This is the geometry condition~\eqref{ue:geometry}.
The energy bound~\eqref{ut:energy} and $|s_j|\le1$ give the
certificate~\eqref{ur:certificate} with failure probability at most $p_E$.
Combine these failure events by a union bound under the joint input law.
Together with~\eqref{ut:window}, Theorem~\ref{ue:main} gives the stated
success probability and resource bounds.
\end{proof}

In the square ternary model~\eqref{ut:square-model},
$\|\mathbf s\|^2+\|\mathbf e\|^2\le2n$ deterministically. Thus
Corollary~\ref{ut:direct} applies with
\begin{equation}\label{ut:ternary-parameters}
 \begin{gathered}
 V_n=2,\qquad B=1,\qquad p_E=0,\qquad U=\sqrt{2+3/n},\\
 D=\ln\frac{\ln(q/\sqrt n)}{1+\ln U}
   =\ln\ln n+O(1).
 \end{gathered}
\end{equation}
Indeed, $\ln(q/\sqrt n)=(\kappa-1/2+o(1))\ln n$, and the
growth condition on $\ln q$ in~\eqref{ut:window} also holds.
The original secret is therefore recovered from the given $n$ equations
with probability $1-o(1)$, in expected time and space
\begin{equation}\label{ut:ternary-cost}
 2^{(1/2+o(1))n/\ln\ln n}=2^{o(n)}.
\end{equation}
This energy bound holds for every ternary pair $(\mathbf s,\mathbf e)$.

For a prescribed ternary type of fixed positive entropy rate
$\mathcal H>0$, the secret space has size
\begin{equation}\label{ut:search-space}
 S=2^{\mathcal H n+o(n)}.
\end{equation}
May's representation-based attacks have heuristic cost
$S^{c+o(1)}$ for fixed $c>0$ (roughly $S^{0.25}$ in the summary
of~\cite{May21}). In the common square model with
$(\mathbf s,\mathbf e)$ independent of $\mathbf M$, these are
exponential estimates, whereas~\eqref{ut:ternary-cost} is a proved
subexponential upper bound.

The energy corollary also allows larger errors. If
$q=n^{\kappa+o(1)}$ with $\kappa>1/2$ and
$\|\mathbf e\|_\infty\le B_n=(\ln n)^{O(1)}$, choose $V_n=1+B_n^2$ and
$p_E=0$. Then $D=(1+o(1))\ln\ln n$. A squared-norm bound
$\Pr[\|\mathbf e\|^2>R_n]\le\nu_n$ works just as directly, with
$V_n=1+R_n/n$ and $p_E=\nu_n$.

\section{FrodoKEM: Recovery over Power-of-Two Moduli}
\label{up:section}

Let $q=2^{b_q}$ and write $R_q=\mathbb Z/q\mathbb Z$. A nonzero vector
over this ring may have additive order smaller than $q$, so its image
under a uniform matrix can be confined to a proper subgroup.
Grouping dual vectors by their additive order restores an exact
Gaussian mass calculation. For $q=n^{\kappa+o(1)}$, $\kappa>1/2$,
and polylogarithmically bounded coefficients, this yields recovery
in expected time and space $2^{(1/2+o(1))n/\ln\ln n}$.
We prove the smoothing bound, then apply it to the matrix relation
used by FrodoKEM.

\subsection{An exact additive order identity}

For a uniform scalar matrix over a field, every nonzero label has a
uniform image. For a power of two modulus, the image instead depends
on the additive order of the label. Grouping labels by this order
separates the output mass from the systematic mass and gives an exact
theta series identity for the nonzero dual mass.

As before, $\theta(t)=\rho_t(\mathbb Z)$ and lengths are measured in the
ordinary coefficient norm.

\begin{lemma}\label{up:theta}
Let $h,d\ge1$, let $a=h+d$, and sample
$\mathbf U\leftarrow(\mathbb Z/q\mathbb Z)^{d\times h}$ uniformly.
For $\Lambda_{\mathbf U}=\Lambda_q^\perp([\mathbf U\mid\mathbf I_d])$ and $T>0$,
\begin{equation}\label{up:theta-identity}
 \mathbb E_{\mathbf U}\!\left[\rho_{1/T}(\Lambda_{\mathbf U}^*\setminus\mathbb Z^a)\right]
 =\sum_{\substack{Q=2^j\\1\le j\le b_q}}
 Q^{-h}\theta(Q/T)^h
 \bigl(\theta(Q/T)^d-\theta(Q/(2T))^d\bigr).
\end{equation}
\end{lemma}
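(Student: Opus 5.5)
We mirror the proof of Lemma~\ref{ug:uniform-mass}, replacing uniformity over a field by uniformity within the subgroup generated by a label's additive order. By the systematic identities~\eqref{up:systematic-identities},
\[
 \Lambda_{\mathbf U}^*=\mathbb Z^a+q^{-1}\begin{pmatrix}\mathbf U^T\\ \mathbf I_d\end{pmatrix}\mathbb Z^d ,
\]
so the dual decomposes into disjoint cosets indexed by $\mathbf z\in(\mathbb Z/q\mathbb Z)^d$. The coset for $\mathbf z$ is $(\mathbb Z^h+\mathbf U^T\mathbf z/q)\times(\mathbb Z^d+\mathbf z/q)$. The noninteger part of the dual consists exactly of the cosets with $\mathbf z\ne0$, and each coset's Gaussian mass factors as
\[
 \rho_{1/T}(\mathbb Z^h+\mathbf U^T\mathbf z/q)\,\rho_{1/T}(\mathbb Z^d+\mathbf z/q).
\]
We then group the nonzero $\mathbf z$ by additive order $Q=2^j$, $1\le j\le b_q$. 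Order $Q$ means $\mathbf z\in(q/Q)(\mathbb Z/q\mathbb Z)^d$ but $\mathbf z\notin(2q/Q)(\mathbb Z/q\mathbb Z)^d$.

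\textbf{Output factor.} Fix $\mathbf z$ of order $Q$. Some coordinate $z_k$ has order exactly $Q$, so $z_k=(q/Q)w$ with $w$ odd, hence a unit modulo $Q$. Each coordinate of $\mathbf U^T\mathbf z$ is $\sum_k U_{k\ell}z_k$, which depends on the row entries through $(q/Q)(\mathbb Z/q\mathbb Z)$ only. Since $U_{k\ell}z_k$ with $U_{k\ell}$ uniform is uniform on the subgroup $(q/Q)\mathbb Z/q\mathbb Z$, and the other terms lie in that subgroup, $\mathbf U^T\mathbf z$ is uniform on $((q/Q)\mathbb Z/q\mathbb Z)^h$. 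The $h$ coordinates use disjoint columns of $\mathbf U$ and are therefore independent. Thus $\mathbf U^T\mathbf z/q$ is uniform over $(Q^{-1}\mathbb Z/\mathbb Z)^h$. Averaging $\rho_{1/T}(\mathbb Z^h+\cdot)$ over these $Q^h$ cosets, which partition $Q^{-1}\mathbb Z^h$, gives
\[
 Q^{-h}\rho_{1/T}(Q^{-1}\mathbb Z)^h=Q^{-h}\theta(Q/T)^h,
\]
using $\rho_{1/T}(Q^{-1}\mathbb Z)=\rho_{Q/T}(\mathbb Z)$. This expectation depends only on $Q$.

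\textbf{Systematic factor.} The union over $\mathbf z$ of order dividing $Q$ of the cosets $\mathbb Z^d+\mathbf z/q$ is $Q^{-1}\mathbb Z^d$, with mass $\theta(Q/T)^d$. Removing the order-dividing-$Q/2$ part, $(Q/2)^{-1}\mathbb Z^d$, of mass $\theta(Q/(2T))^d$, leaves the order-exactly-$Q$ contribution
\[
 \theta(Q/T)^d-\theta(Q/(2T))^d .
\]
For $Q=2$ the removed part is $\mathbb Z^d$, consistent with excluding $\mathbf z=0$. Linearity of expectation then multiplies the output factor by this sum for each $Q$, and summing over $j$ gives~\eqref{up:theta-identity}. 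All sums are of nonnegative terms, so exchanging them is justified.

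The main obstacle is the uniformity claim for the output factor. Independence across the $h$ coordinates and exact uniformity on the order-$Q$ subgroup, rather than merely on a smaller subgroup, both hinge on choosing a coordinate $z_k$ whose odd part is a unit modulo $Q$. We therefore check this step carefully: the additive order of $\mathbf z$ is the maximum of the coordinate orders, so such a $k$ exists, and multiplication by an odd $w$ permutes $(q/Q)\mathbb Z/q\mathbb Z$.
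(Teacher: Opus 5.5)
Your proof is correct and follows essentially the same route as the paper. Both arguments index the dual cosets by $\mathbf z\in(\mathbb Z/q\mathbb Z)^d$ and group them by additive order $Q=2^j$. Each independent column of $\mathbf U$ then sends an order-$Q$ label uniformly onto $Q^{-1}\mathbb Z/\mathbb Z$, which gives $Q^{-h}\theta(Q/T)^h$. Removing the labels of order dividing $Q/2$ gives $\theta(Q/T)^d-\theta(Q/(2T))^d$. Your explicit check that an odd coordinate makes the image fill the whole order-$Q$ subgroup is the step the paper leaves implicit.
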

\begin{proof}
The systematic coordinates identify distinct dual cosets with
$\mathbf z\in(\mathbb Z/q\mathbb Z)^d$. A nonzero $\mathbf z$ has additive order
$Q=2^j$ precisely when $\mathbf z=(q/Q)\mathbf u$, with
$\mathbf u\in(\mathbb Z/Q\mathbb Z)^d$ having an odd coordinate.
A uniform column of $\mathbf U$ sends $\mathbf z/q$ uniformly to
$Q^{-1}\mathbb Z/\mathbb Z$, and the $h$ columns are independent.
For each such output coordinate, averaging the Gaussian coset mass gives
$Q^{-1}\theta(Q/T)$. Summing the systematic coordinate mass over $\mathbf u$
with an odd coordinate gives the difference
$\theta(Q/T)^d-\theta(Q/(2T))^d$. The subtracted vectors are exactly
the all-even vectors. Multiplying these contributions and summing over
all additive orders proves the formula.
\end{proof}

\begin{lemma}[Rectangular dyadic prefix smoothing]\label{up:smoothing}
Suppose $1\le d\le3h/2$, $a=h+d$, $0<\zeta\le1$, and
$\ln(1/\zeta)\le h/8$. If
\begin{equation}\label{up:smoothing-width}
 T\ge8\max\left\{\sqrt{\ln(2a/\zeta)},\,
                         q^{d/a}\zeta^{-1/a}\right\},
\end{equation}
then $\Pr_{\mathbf U}[\eta_\zeta(\Lambda_{\mathbf U})>T]\le8e^{-h/4}$.
\end{lemma}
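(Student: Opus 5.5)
The plan is to feed the exact identity of Lemma~\ref{up:theta} into a Markov argument of the same form as Lemma~\ref{up:mass-to-smoothing}. The nonzero dual mass of $\Lambda_{\mathbf U}$ splits as $I+S$, with
\[
 I=\rho_{1/T}(\mathbb Z^a\setminus\{0\}),\qquad
 S=\rho_{1/T}(\Lambda_{\mathbf U}^*\setminus\mathbb Z^a).
\]
The bound $\theta(1/T)-1\le4e^{-\pi T^2}$ together with $T\ge8\sqrt{\ln(2a/\zeta)}$ gives
\[
 I\le\exp(4ae^{-\pi T^2})-1\le\zeta/2 .
\]
So $\eta_\zeta(\Lambda_{\mathbf U})>T$ forces $S>\zeta/2$, and Markov reduces the claim to showing
\[
 \mathbb E_{\mathbf U}[S]\le4e^{-h/4}\zeta .
\]

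To bound $\mathbb E_{\mathbf U}[S]$, I evaluate the right side of~\eqref{up:theta-identity} term by term. The tool is the crude bound $\theta(x)\le\theta(1)\max\{1,x\}$, which follows from monotonicity for $x\le1$ and from Poisson summation $\theta(x)=x\theta(1/x)$ for $x\ge1$. I split the dyadic orders $Q=2^j$ at $Q=T$.

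\emph{Large orders, $Q>T$.} I drop the subtracted theta term and write
\[
 Q^{-h}\theta(Q/T)^{h+d}\le\theta(1)^a Q^{d}T^{-a}.
\]
This increases geometrically in $Q$, so the sum over $Q\le q$ is at most $2\theta(1)^aq^dT^{-a}$. The second branch of~\eqref{up:smoothing-width} gives $T^{a}\ge8^aq^d/\zeta$, so this part is at most $2(\theta(1)/8)^a\zeta$. Since $a\ge h$, this is far below $e^{-h/4}\zeta$.

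\emph{Small orders, $2\le Q\le T$.} Here I bound the bracket by $\theta(Q/T)^d\le\theta(1)^d$, so each term is at most $\theta(1)^aQ^{-h}$. Summing the geometric series in $Q$ gives at most $2\theta(1)^a2^{-h}$. Now use $a\le5h/2$ and $\ln\theta(1)<0.083$, so that $\theta(1)^a\le e^{0.21h}$. Also $2^{-h}=e^{-0.693h}$. The hypothesis $\ln(1/\zeta)\le h/8$ means $e^{-h/8}\le\zeta$. Combining these, the small-order part is at most
\[
 2e^{0.21h-0.693h}\le2e^{-h/4}\,e^{-h/8}\,e^{-0.11h}\le2e^{-h/4}\zeta .
\]

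Adding the two regimes gives $\mathbb E_{\mathbf U}[S]\le4e^{-h/4}\zeta$ once $h$ is beyond a small constant; small $h$ can be absorbed by the factor $8$ in the width. Markov then yields
\[
 \Pr_{\mathbf U}\bigl[\eta_\zeta(\Lambda_{\mathbf U})>T\bigr]
 \le\frac{\mathbb E_{\mathbf U}[S]}{\zeta/2}
 \le8e^{-h/4}.
\]

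The main obstacle I expect is the small-order regime. There, neither the factor $8$ in $T$ nor the integer-mass smoothing helps directly. The bound must come purely from the $Q^{-h}$ decay traded against $\theta(1)^a$ and $\zeta$. This is exactly where both hypotheses $d\le3h/2$ and $\ln(1/\zeta)\le h/8$ are consumed, so the numerical constants have to be checked carefully. If the crude bound $\theta(1)^d$ on the bracket turns out too lossy, I would instead bound the bracket by $d\,\theta(1)^{d}\bigl(\theta(Q/T)-\theta(Q/(2T))\bigr)$, which is small when $Q\ll T$.
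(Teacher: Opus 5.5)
Your proposal is correct and is essentially the paper's proof. The matching steps are the split of the dyadic orders at $Q=T$, Poisson summation for $Q>T$ against the determinant threshold, the deterministic integer mass bounded by $\zeta/2$, and Markov's inequality at threshold $\zeta/2$; where the paper uses the cruder bound $e^{1/8}$ you use $\theta(1)$, which changes nothing.

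There is one cosmetic fix. Your displayed inequality $e^{0.21h-0.693h}\le e^{-h/4}e^{-h/8}e^{-0.11h}$ is false as written ($0.483<0.485$). The factor $e^{-0.11h}$ is unnecessary, since $0.483>3/8$ already suffices, so simply drop it. Also, every bound in your argument holds for all $h\ge1$, so the remark about absorbing small $h$ into the factor $8$ can be deleted.
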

\begin{proof}
For $0<u\le1$, direct Gaussian series bounds give
$\theta(u)\le1+3e^{-\pi/u^2}$ and $\theta(u)\le e^{1/8}$.
Poisson summation gives $\theta(u)=u\theta(1/u)$.
Let $F_Q$ denote a summand in~\eqref{up:theta-identity}. For $Q\le T$,
\[
 \sum_{Q\le T}F_Q
 \le\frac{e^{a/8}2^{-h}}{1-2^{-h}}
 \le\frac{e^{-(\ln2-5/16)h}}{1-2^{-h}}.
\]
For $Q>T$, transform all theta factors and use the determinant threshold:
\[
 \sum_{Q>T}F_Q
 \le e^{a/8}T^{-a}\frac{q^d}{1-2^{-d}}
 \le\frac{\zeta e^{-(\ln8-1/8)a}}{1-2^{-d}}.
\]
The integer part is deterministic and satisfies
\[
 \rho_{1/T}(\mathbb Z^a\setminus\{0\})
 =\theta(1/T)^a-1
 \le e^{3ae^{-\pi T^2}}-1\le\zeta/2.
\]
Markov's inequality bounds the probability that the remaining mass is
larger than $\zeta/2$ by
\[
 \frac{2e^{-(\ln2-7/16)h}}{1-2^{-h}}
 +\frac{2e^{-(\ln8-1/8)a}}{1-2^{-d}}
 \le8e^{-h/4}.
\]
Here $\ln2-7/16>1/4$, $a\ge h+1$, and both denominators are at least
$1/2$. The prefix $d=0$ is the integer lattice used to initialize the
algorithm.
\end{proof}

\subsection{Recovery from a square or nearly square system}

For $U\ge1$, write
\begin{equation}\label{up:window}
 Q_n=\ln(q/\sqrt n),\qquad D_n(U)=\ln\frac{Q_n}{1+\ln U}.
\end{equation}
Throughout this section $Q_n>0$ and $\ln q\le(\ln n)^{c_q}$ for a
fixed $c_q$. The recovery schedule is that of
Theorem~\ref{ue:main}, fixed before drawing the public matrix.

\begin{proposition}
\label{up:prefix-interface}
Let $h=n-o(n)$ and $\mathbf M\leftarrow\operatorname{Unif}(R_q^{h\times n})$.
Let $\mathbf M_m$ consist of the first $m$ columns of $\mathbf M$.
For the scheduled prefixes
$m_i=\sum_{j\le i}b_j$ and
\[
 \Lambda_{m_i}=\Lambda_q^\perp([\mathbf M_{m_i}^T\mid \mathbf I_{m_i}]),
\]
the geometric condition~\eqref{ue:geometry} fails with probability at
most $8r e^{-h/4}=o(1)$.
\end{proposition}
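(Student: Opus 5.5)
The plan is to apply Lemma~\ref{up:smoothing} once per scheduled level and then take a union bound over the $r$ levels. For level $i$, the lattice $\Lambda_{m_i}=\Lambda_q^\perp([\mathbf M_{m_i}^T\mid\mathbf I_{m_i}])$ has exactly the form of $\Lambda_{\mathbf U}$ in Lemma~\ref{up:theta}, with $\mathbf U=\mathbf M_{m_i}^T\in R_q^{m_i\times h}$ and $d=m_i$. The entries of $\mathbf M$ are independent and uniform, so $\mathbf U$ is uniform, even though the different levels share one draw. The union bound does not need independence across levels. Each level then fails with probability at most $8e^{-h/4}$, which gives the total $8re^{-h/4}$. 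Since $r=O(\ln q)=O((\ln n)^{c_q})$ and $h=n-o(n)$, this total is $o(1)$.

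It remains to check the hypotheses of Lemma~\ref{up:smoothing} with $\zeta=\varepsilon_0$, $T=T_i=t_i/2$, $d=m_i$, and $a=h+m_i$.

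\textbf{Dimension and tolerance conditions.} The first requirement is $1\le m_i\le 3h/2$, which holds because $m_i\le n\le 3h/2$ once $h=n-o(n)$. Next, $\ln(1/\varepsilon_0)\le h/8$ holds because $\ln(1/\varepsilon_0)=L_0-\ln(4n)=\Theta(n/D)=o(n)$ by~\eqref{ue:geometry-scale}. The first term of the width condition is also easy: $\ln(2a/\varepsilon_0)=L_0+O(\ln n)=(1+o(1))L_0$, so $8\sqrt{\ln(2a/\varepsilon_0)}\le 8\sqrt{(1+o(1))L_0}$. The schedule gives $t_1\ge16\sqrt{L_0}$, hence $T_i\ge8\sqrt{L_0}$, which covers this term up to the $1+o(1)$ factor. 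To absorb that factor I would use the margin $s_0^2\ge512L_0$, which gives some room, or adjust constants slightly.

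\textbf{Determinant term (main step).} The remaining requirement is $T_i\ge 8q^{m_i/a}\varepsilon_0^{-1/a}$. Taking logarithms, it suffices to show
\[
 \frac{m_i\ln q+\ln(1/\varepsilon_0)}{h+m_i}+\ln 8\le\ln T_i .
\]
Write $\ln q=\ln(q/T_i)+\ln T_i$. The left side is then at most
\[
 \frac{m_i}{h+m_i}\ln T_i+\frac{m_i\ln(q/T_i)}{h}+\frac{L_0}{h}+\ln8 .
\]
Subtracting $\frac{m_i}{h+m_i}\ln T_i$ from both sides, it suffices that
\[
 \frac{h}{h+m_i}\ln T_i\ge\frac{m_i\ln(q/T_i)}{h}+\frac{L_0}{h}+\ln8 .
\]
Each term on the right is controlled as follows.
\begin{itemize}
\item The left side is at least $\tfrac{1}{2}(1-o(1))\ln T_i$, since $m_i\le n$ and $h\sim n$.
\item The first term on the right is $o(\ln T_i)$, by the prefix ratio $\max_i m_i\ln(q/T_i)/(n\ln T_i)=o(1)$ in~\eqref{ue:geometry-scale}.
\item The second term satisfies $L_0/h=O(1/D)=o(1)$.
\item The constant $\ln 8$ is dominated because $\ln T_i\ge\ln(8\sqrt{L_0})\to\infty$.
\end{itemize}
So the inequality holds uniformly in $i$ for all large $n$.

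I expect this determinant term to be the only delicate point. The obstacle is making it uniform over levels, in particular at the early levels where $T_i$ is small and $\ln(q/T_i)$ is large. That case is exactly what the prefix bound $m_ix_i=O(\ln N\ln t_i)$ from Lemma~\ref{ue:allocation} controls. One additional check is needed: the lemma's conclusion, $\eta_\zeta\le T$, must match the required bound $\eta_{\varepsilon_0}(\Lambda_{m_i})\le t_i/2$, and with $T=T_i=t_i/2$ it does.
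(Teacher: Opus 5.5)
Your proposal matches the paper's proof: apply Lemma~\ref{up:smoothing} to each scheduled prefix with $\mathbf U=\mathbf M_{m_i}^T$, $d=m_i$, $T=t_i/2$, $\zeta=\varepsilon_0$; check the determinant threshold using the allocation bound $m_ix_i=O(\ln N\ln t_i)$, or equivalently the prefix ratio in~\eqref{ue:geometry-scale}; and take a union bound over the $r=O(\ln q)$ levels. The only loose spot is the first width threshold, where no $1+o(1)$ slack is needed---and $s_0^2\ge512L_0$ would not supply any, since it gives exactly $T_1\ge8\sqrt{L_0}$---because $a_i=h+m_i\le2n$ gives $\ln(2a_i/\varepsilon_0)=L_0+\ln(a_i/2n)\le L_0$ exactly, which is how the paper closes that condition.
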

\begin{proof}
Put $T_i=t_i/2$, $\zeta=\varepsilon_0=4ne^{-L_0}$, and $a_i=h+m_i$.
The schedule gives $t_i\ge16\sqrt{L_0}$ and
$\ln(2a_i/\zeta)\le L_0$, establishing the first threshold
in~\eqref{up:smoothing-width}. The second is equivalent to
\[
 a_i\ln t_i-m_i\ln q\ge\ln(1/\zeta)+a_i\ln16.
\]
Write $x_i=\ln(q/t_i)$. After subtracting $a_i\ln16$, the left side is
\[
 h(\ln t_i-\ln16)-m_i(x_i+\ln16).
\]
The allocation bound~\eqref{ue:block-bounds} gives
$m_ix_i=O(\ln N\ln t_i)$, with $x_i\ge\ln2$ and $\ln N=o(h)$.
Also $\ln t_i\ge(1/2+o(1))\ln n$ and
$\ln(1/\zeta)=O(n/D)=o(n)$. Hence this expression is eventually at
least $(h/4)\ln t_i\gg\ln(1/\zeta)$. Finally, $m_i\le n\le3h/2$
and $\ln(1/\zeta)\le h/8$ eventually. Apply
Lemma~\ref{up:smoothing} to each prefix and sum its failure bound over
the $r=O(\ln q)$ prescribed prefixes.
\end{proof}

\begin{theorem}[Dyadic recovery]
\label{up:energy}
Let $h=n-o(n)$, $\mathbf M\leftarrow\operatorname{Unif}(R_q^{h\times n})$, and
$\mathbf y=\mathbf M\mathbf s+\mathbf e\pmod q$. Suppose a deterministic rational $V_n\ge1$ of
polynomial bit length satisfies
\[
 \Pr[\|\mathbf s\|^2+\|\mathbf e\|^2>nV_n]\le p_n,
 \qquad D_n(\sqrt{V_n})\longrightarrow\infty.
\]
Then Theorem~\ref{ue:main} recovers the original integer vector $\mathbf s$
from the $h$ labels with probability $1-p_n-o(1)$, in expected time
and space
\begin{equation}\label{up:energy-cost}
 2^{(1/2+o(1))n/D_n(\sqrt{V_n})}.
\end{equation}
The joint law of $\mathbf s,\mathbf e$ may depend on $\mathbf M$.
\end{theorem}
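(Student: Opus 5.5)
The plan is to apply Theorem~\ref{ue:main} directly, with Proposition~\ref{up:prefix-interface} supplying its geometric hypothesis. Set $V=V_n$ and $B=\lfloor\sqrt{nV_n}\rfloor$, which is the choice made when only an energy bound is given. Remark~\ref{ue:substitution} then gives $nV_n\le R^2\le4nV_n$. Consequently $U=R/\sqrt n$ satisfies $\sqrt{V_n}\le U\le2\sqrt{V_n}$, so $1+\ln U$ and $1+\ln\sqrt{V_n}$ differ by at most $\ln2$.

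\emph{Translating the depth hypothesis.} Since $1+\ln\sqrt{V_n}\ge1$, the denominators $1+\ln U$ and $1+\ln\sqrt{V_n}$ are within a factor $1+\ln 2$ of each other. Hence the depth $D$ of~\eqref{ue:window} equals $D_n(\sqrt{V_n})+O(1)$. From $D_n(\sqrt{V_n})\to\infty$ we get $D\to\infty$, and the two cost exponents agree: $n/D=(1+o(1))\,n/D_n(\sqrt{V_n})$. The remaining window conditions, $Q_n>0$ and $\ln q\le(\ln n)^{c_q}$, are standing assumptions of this section. The bound $B\le\lfloor\sqrt{nV}\rfloor$ holds by construction, and $V_n$ has polynomial bit length, so $B$ does too.

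\emph{Verifying the hypotheses of Theorem~\ref{ue:main}.} The energy and candidate bounds~\eqref{ur:certificate} fail only when the energy bound fails, since $|s_i|\le\|\mathbf s\|\le\sqrt{nV_n}$ and $s_i$ is an integer, so $|s_i|\le B$. This gives $p_E\le p_n$. For the geometric condition~\eqref{ue:geometry}, I invoke Proposition~\ref{up:prefix-interface} with the schedule built from these $(n,q,V,B)$. That schedule is fixed before $\mathbf M$ is drawn, so the proposition gives $\omega_n\le8re^{-h/4}$. This is $o(1)$ because $r=O(\ln q)=O((\ln n)^{c_q})$ and $h=n-o(n)$. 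The main point to check is that the proposition's proof uses only the consequences~\eqref{ue:geometry-scale} and~\eqref{ue:block-bounds} of the schedule, together with $\ln t_i\ge(1/2+o(1))\ln n$. The latter needs $q/f$ to be at least $n^{1/2+o(1)}$. I expect this to follow from $t_i\ge t_1\ge16\sqrt{L_0}$ with $L_0=\Theta(n/D)$, which gives $\ln t_i\ge\tfrac12\ln n-O(\ln\ln n)$. This is the step I regard as the main (if mild) obstacle, and I would verify it against the schedule~\eqref{ue:schedule}.

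\emph{Conclusion.} Theorem~\ref{ue:main} then recovers the integer vector $\mathbf s$ from the $h$ labels with probability at least $1-p_n-o(1)$. Its expected time and space are $2^{(1/2+o(1))n/D}$, which equals~\eqref{up:energy-cost} by the depth comparison above. Dependence of $(\mathbf s,\mathbf e)$ on $\mathbf M$ is allowed by Theorem~\ref{ue:main}, via Corollary~\ref{ur:dependence}, because the recovery event is uniform over all admissible inputs on the good-operator event.
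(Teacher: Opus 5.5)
Your proposal is correct and follows the paper's own proof essentially step for step. It uses the same certificate $V=V_n$, $B=\lfloor\sqrt{nV_n}\rfloor$; the comparison $\sqrt{V_n}\le U\le2\sqrt{V_n}$ giving $D=D_n(\sqrt{V_n})+O(1)$; Proposition~\ref{up:prefix-interface} for the geometry event with failure $8re^{-h/4}=o(1)$; and a union bound with the energy failure feeding into Theorem~\ref{ue:main}. Your extra check that $\ln t_i\ge\ln t_1\ge\ln16+\tfrac12\ln L_0=(1/2+o(1))\ln n$ (via $L_0=\Theta(n/D)$ and $D=O(\ln\ln n)$) is valid, and it makes explicit a step that the proof of that proposition asserts without justification.
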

\begin{proof}
Set $V=V_n$ and $B=\lfloor\sqrt{nV_n}\rfloor$.
On the stated energy event, every secret coordinate satisfies
$|s_j|\le B$, so the certificate~\eqref{ur:certificate} holds.
The candidate radius satisfies
\[
 nV_n\le R^2=nV_n+3B^2\le4nV_n,
 \qquad \sqrt{V_n}\le U=R/\sqrt n\le2\sqrt{V_n}.
\]
As in Remark~\ref{ue:substitution}, this gives
$D_n(U)=D_n(\sqrt{V_n})+O(1)\sim D_n(\sqrt{V_n})$.
Proposition~\ref{up:prefix-interface} verifies the geometry
condition~\eqref{ue:geometry} for the prescribed finite schedule,
with failure probability at most $8r e^{-h/4}=o(1)$.
A union bound with the energy failure event, followed by
Theorem~\ref{ue:main}, gives the stated recovery probability and
resource bounds for the joint law of $\mathbf M,\mathbf s,\mathbf e$.
\end{proof}

\subsection{Several columns with the same public matrix}
\label{ua:frodo}

FrodoKEM uses the relation $\mathbf B=\mathbf A\mathbf S+\mathbf E\pmod q$, with a scalar matrix $\mathbf A$
and several short columns~\cite[Sections~2.2.2, 2.2.4, 2.2.5, and~2.2.9,
Algorithm~12]{FrodoSpec20}. In our asymptotic family,
$\mathbf A\leftarrow\operatorname{Unif}(R_q^{n\times n})$ and $\mathbf S,\mathbf E\in\mathbb Z^{n\times t_n}$,
where $1\le t_n\le n^c$ for a fixed constant $c$. Thus
\begin{equation}\label{ua:frodo-relation}
 \mathbf B=\mathbf A\mathbf S+\mathbf E\pmod q
\end{equation}
is a collection of relations with one shared operator. The list can
therefore be reused.

\begin{corollary}
\label{ua:frodo-corollary}
Suppose a deterministic rational $V_n\ge1$ of polynomial bit length
satisfies
\begin{equation}\label{ua:frodo-energy}
 \begin{gathered}
 \Pr\!\left[\max_{j\le t_n}
   \bigl(\|\mathbf s^{(j)}\|^2+\|\mathbf e^{(j)}\|^2\bigr)>nV_n\right]\le p_n,\\
 D_n(\sqrt{V_n})\longrightarrow\infty.
 \end{gathered}
\end{equation}
One Gaussian list recovers every column of $\mathbf S$ with probability
$1-p_n-o(1)$, with the same asymptotic time and space as
Theorem~\ref{up:energy}, normalized by the dimension $n$ of one column.
Dependence among the columns and on $\mathbf A$ is allowed.
\end{corollary}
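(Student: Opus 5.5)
The plan is to run the proof of Theorem~\ref{up:energy} once, with $h=n$ and the shared operator $\mathbf M=\mathbf A$, and then apply \textsc{LinearScoreRecover} to each column label $\mathbf b^{(j)}$ using the same list. Take $V=V_n$ and $B=\lfloor\sqrt{nV_n}\rfloor$, and fix the schedule of Theorem~\ref{ue:main} before drawing $\mathbf A$. Proposition~\ref{up:prefix-interface} with $h=n$ then gives the geometry condition~\eqref{ue:geometry} with failure probability $8re^{-n/4}=o(1)$. On the good event $\mathcal G_n$, Theorem~\ref{ue:list-engine} produces a single list of length $N$. That list is strongly conditionally $\alpha$-similar to $D_{\Lambda,q/f}$, and it lies within total variation distance $\nu=o(1)$ of this reference law. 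Auxiliary failures cost at most $2r2^{-b_{\min}}$.

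The key observation is that Theorem~\ref{ur:shared-list} already quantifies over all labels $\mathbf y\in(\mathbb Z/q\mathbb Z)^h$ and over all pairs $(\mathbf s,\mathbf e)$ satisfying~\eqref{ur:certificate}. Its success event therefore depends only on the list and the operator, not on any one column. On the intersection of the list-good event with the energy event from~\eqref{ua:frodo-energy}, every column $(\mathbf s^{(j)},\mathbf e^{(j)})$ is admissible: each column has energy at most $nV_n$, and hence $|s^{(j)}_i|\le B$. The same list then recovers every $\mathbf s^{(j)}$ at once. No union bound over $j$ is needed, so the number of columns $t_n$ does not enter the failure probability. This is also why dependence among the columns, and on $\mathbf A$, is harmless, exactly as in Corollary~\ref{ur:dependence}. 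Lemma~\ref{ur:unique} then identifies each original integer column.

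The failure probability is $p_n+\omega_n+2r2^{-b_{\min}}+\nu+\delta$, which is $p_n+o(1)$ by the estimates in the proof of Theorem~\ref{ue:main}. The cost is one list construction, $2^{(1/2+o(1))n/D}$ with $D\sim D_n(\sqrt{V_n})$ as in Remark~\ref{ue:substitution}, plus $t_n\cdot O(N[n+n\log(2B+1)])$ scoring work. Because $t_n\le n^c$ and $\ln B=O(\ln n+\ln q)=o(n/D)$, the scoring term only multiplies the cost by a factor $2^{o(n/D)}$ and leaves the exponent unchanged. The space is the list plus the outputs, which is of the same order.

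The only step I expect to need care is making sure the simultaneity in Theorem~\ref{ur:shared-list} covers labels of columns that are chosen adversarially as functions of $\mathbf A$ and of one another. This holds because the theorem's event ranges over all $q^h$ labels and all candidates, and it is fixed before any label is examined. The proof should spell this out rather than introduce a union bound over columns. Everything else follows by substituting $h=n$ into Theorem~\ref{up:energy}.
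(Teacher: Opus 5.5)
Your proposal is correct and matches the paper's proof essentially step for step. You use the common certificate $V=V_n$, $B=\lfloor\sqrt{nV_n}\rfloor$, obtain the geometry event from Proposition~\ref{up:prefix-interface}, and build one list via Theorem~\ref{ue:list-engine}. You then intersect the single label-uniform event of Theorem~\ref{ur:shared-list} with the simultaneous energy event in~\eqref{ua:frodo-energy}, so no union bound over columns is needed, and the only effect of $t_n\le n^c$ is a polynomial factor in scoring time.
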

\begin{proof}
Use the common certificate $V=V_n$ and
$B=\lfloor\sqrt{nV_n}\rfloor$ for every column.
As in the proof of Theorem~\ref{up:energy},
Proposition~\ref{up:prefix-interface} supplies the geometry event.
On this event, Theorem~\ref{ue:list-engine} constructs one Gaussian
list with the required conditional similarity and implementation error.
The final prefix gives~\eqref{ur:mass-contract}, and
\eqref{ue:signal-margin} ensures~\eqref{ur:list-demand} for some
$\delta=o(1)$, as verified in the proof of Theorem~\ref{ue:main}.
The construction and its error bounds depend on the common operator
and certificate.
Theorem~\ref{ur:shared-list} therefore gives one event on which this
list recovers every label whose secret and error satisfy that certificate.
Intersect this event with the simultaneous energy event
in~\eqref{ua:frodo-energy} and apply a union bound, including the
geometry, auxiliary, and implementation errors.
The resulting success probability is $1-p_n-o(1)$ under the stated
joint law of the columns and $\mathbf A$.
Scoring the columns successively reuses the same storage and
multiplies the scoring time by at most $t_n$.
Since $t_n$ is polynomial in $n$, the stated exponent is preserved.
\end{proof}

For a concrete sequence, take
\begin{equation}\label{ua:frodo-q}
 b_{q,n}=\lceil\kappa\log_2n\rceil,\qquad q=2^{b_{q,n}},
 \qquad \kappa>1/2.
\end{equation}
If the entries of $\mathbf S$ and $\mathbf E$ are bounded in absolute
value by public positive integers $H_s,H_e=(\ln n)^{O(1)}$, respectively, use
\[
 B=H_s,\qquad V=\max\{1,H_s^2+H_e^2\},\qquad V_n=V,\qquad p_n=0.
\]
These bounds hold simultaneously for all columns. Using $B=H_s$ in
Theorem~\ref{ue:main}, the normalized candidate radius is
$U=\sqrt{V+3B^2/n}=\sqrt V(1+O(1/n))$, since $B^2\le V$.
Thus $D_n(U)\sim D_n(\sqrt V)\sim\ln\ln n$, and the expected time and space are
\begin{equation}\label{ua:frodo-polylog}
 2^{(1/2+o(1))n/\ln\ln n}.
\end{equation}
This covers any prescribed family of coefficient tables with such
support. The same complexity also holds for centered subgaussian
coefficients of polylogarithmic width: the energy estimates of
Section~\ref{up:dist:energy-distributions}, followed by a union bound
over columns, give a polylogarithmic $V_n$ and $p_n=o(1)$.
Packing and unpacking the public matrix preserve the residues in
\eqref{ua:frodo-relation}.

\begin{corollary}[FrodoKEM family specialization]
\label{ua:frodo-headline}
Let $q=2^{\lceil\kappa\log_2n\rceil}$ with fixed $\kappa>1/2$,
and let $t_n\le n^c$ be the prescribed number of columns. Suppose a
public polylogarithmic $V_n\ge1$ gives the simultaneous joint energy
bound $nV_n$ with failure probability $o(1)$.
For uniform $\mathbf A\in(\mathbb Z/q\mathbb Z)^{n\times n}$,
one Gaussian list recovers all columns of $\mathbf S$ from
$\mathbf B=\mathbf A\mathbf S+\mathbf E$ with probability $1-o(1)$,
in expected time and space $2^{(1/2+o(1))n/\ln\ln n}$.
\end{corollary}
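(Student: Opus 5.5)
The plan is to derive this specialization directly from Corollary~\ref{ua:frodo-corollary}, taking $h=n$, $V=V_n$ and $B=\lfloor\sqrt{nV_n}\rfloor$. If the coefficients come with public support bounds, I would instead use the sharper $B=H_s$. That corollary already gives a single list recovering every column, allows dependence on $\mathbf A$ and among columns, and has failure probability $1-p_n-o(1)$ with cost $2^{(1/2+o(1))n/D_n(\sqrt{V_n})}$. Two things remain to check: the hypotheses of the dyadic framework hold for this modulus sequence, and the depth satisfies $D_n(\sqrt{V_n})\sim\ln\ln n$.

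First come the modulus conditions. For $q=2^{\lceil\kappa\log_2 n\rceil}$ we have $n^\kappa\le q<2n^\kappa$, so $Q_n=\ln(q/\sqrt n)=(\kappa-1/2)\ln n+O(1)>0$ eventually, because $\kappa>1/2$. Also $\ln q=O(\ln n)$, so $\ln q\le(\ln n)^{c_q}$ holds with $c_q=2$. The number of columns satisfies $t_n\le n^c$, so scoring all columns costs only a polynomial factor, which Corollary~\ref{ua:frodo-corollary} already absorbs.

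Next comes the depth. Since $V_n$ is polylogarithmic, $1\le V_n\le(\ln n)^{A}$ for some fixed $A$, and so $1+\ln\sqrt{V_n}=O(\ln\ln n)$. This gives
\[
 D_n(\sqrt{V_n})=\ln Q_n-\ln(1+\ln\sqrt{V_n})
 =\ln\ln n-O(\ln\ln\ln n)\sim\ln\ln n,
\]
which in particular tends to infinity, as Corollary~\ref{ua:frodo-corollary} requires. Substituting into its cost gives $2^{(1/2+o(1))n/\ln\ln n}$. The energy failure probability $p_n=o(1)$ is the hypothesis of the statement, so the overall success probability is $1-o(1)$.

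The only delicate point is bookkeeping. The problem is that $V_n$ must be a deterministic rational of polynomial bit length, fixed before $\mathbf A$ is drawn, and the schedule of Theorem~\ref{ue:main} must be computed from it. Since $V_n$ is public and chosen in advance, this holds, and Proposition~\ref{up:prefix-interface} with $h=n$ supplies the geometry event with failure probability $8re^{-n/4}=o(1)$. Nothing else needs new argument; the main obstacle is simply confirming that $D\sim\ln\ln n$ absorbs the $O(1)$ change from replacing $U$ by $\sqrt{V_n}$ (Remark~\ref{ue:substitution}) without affecting the leading exponent.
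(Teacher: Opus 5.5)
Your proposal is correct and matches the paper's proof. Like the paper, you reduce to Corollary~\ref{ua:frodo-corollary}, use $\ln(q/\sqrt n)=(\kappa-1/2)\ln n+O(1)$ and a polylogarithmic $V_n$ to get $D_n\sim\ln\ln n$, and note that the polynomially many columns leave the exponent and list space unchanged. One wording slip: $1-p_n-o(1)$ is that corollary's success probability, not its failure probability.
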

\begin{proof}
The candidate-energy substitution of Remark~\ref{ue:substitution}
gives $U_n=\Theta(\sqrt{V_n})$ and $1\le U_n=(\ln n)^{O(1)}$.
Since $\ln(q/\sqrt n)=(\kappa-1/2)\ln n+O(1)$,
\[
 D_n=\ln\frac{\ln(q/\sqrt n)}{1+\ln U_n}
     =\ln\ln n+O(\ln\ln\ln n)\sim\ln\ln n.
\]
Corollary~\ref{ua:frodo-corollary} gives simultaneous recovery on
the single energy event. The polynomial number of columns changes
neither the leading exponent nor the reusable list space.
This proves Corollary~\ref{ui:frodo-cor}.
\end{proof}

\section{Kyber/ML-KEM: Recovery over Prime Cyclotomic Modules}
\label{um:section}

Multiplication in a polynomial ring creates dependencies among matrix
entries. A nonzero vector can now have an image supported on a proper
ideal, so the uniform-image calculation of Section~\ref{ut:section} no
longer applies directly. We group the dual vectors by their image ideal.
Large ideals give enough randomness. Small ideals force long integer
representatives. These two estimates establish the same prefix condition
as before. For fixed module rank and splitting degree, polynomial moduli
and polylogarithmically bounded coefficients then give recovery in
expected time and space $2^{(1/2+o(1))n/\ln\ln n}$.

\subsection{The public operator and the modulus restriction}

\begin{definition}[Cyclotomic modules over a prime modulus]
\label{ua:mlkem-definition}
Fix a positive integer $k$, let $d$ tend to infinity through powers of two,
and put $n=kd$.  For an odd prime $q$, write
\[
 O=\mathbb Z[X]/(X^d+1),\qquad R_q=O/qO,\qquad
 \delta=\operatorname{ord}_{2d}(q).
\]
Choose a family with constant $\delta$ and
$\ln q\leq(\ln n)^c$ for a fixed $c$. Sample
$\mathbf A\leftarrow\operatorname{Unif}(R_q^{k\times k})$. For module vectors
$\mathbf s_R,\mathbf e_R\in O^k$ and $\mathbf y_R\in R_q^k$, write
$\mathbf s=\operatorname{coeff}(\mathbf s_R)$,
$\mathbf e=\operatorname{coeff}(\mathbf e_R)$, and
$\mathbf y=\operatorname{lift}_q^+(\operatorname{coeff}(\mathbf y_R))$,
concatenating the
coefficients of the $k$ polynomial components. Let $\mathbf M$ be the
$n\times n$ coefficient matrix of multiplication by $\mathbf A$.
Thus the $k$ ring equations
\begin{equation}
 \mathbf y_R=\mathbf A\mathbf s_R+\mathbf e_R\pmod q
 \label{um:model}
\end{equation}
are precisely $\mathbf y=\mathbf M\mathbf s+\mathbf e\pmod q$ in coefficient coordinates.
Order the columns of $\mathbf M$ first by secret polynomial and then by coefficient degree.
All norms are Euclidean norms of these integer coefficient vectors.
\end{definition}
For $k=1$, this is the algebraic form of the ring-LWE
relation~\cite{LPR13}. Ring vectors and matrices give the module
form~\cite{LS15}. The same public relation
underlies Kyber/ML-KEM~\cite[Section~5.1, Algorithm~13]{FIPS203}. Its bounded
coefficient distributions will enter after the geometric argument.
Fixing the splitting degree forces $q=\Omega(n)$. We prove this
restriction in Lemma~\ref{um:modulus}, where it determines the
complexity of the bounded-coefficient case.

\subsection{Ideal images in the coefficient basis}

A coefficient prefix can end inside a polynomial block. Its image under
ring multiplication is controlled by an ideal, but its allowed inputs
need not form a submodule. The mass bound must therefore track both the
image ideal and the dimension of a partial block. We use the ring
decomposition to count ideals and the original coefficient basis to
measure lengths.

The Gaussian mass calculation is related to ring
regularity~\cite[Theorem~4.1]{LPR13T}, cyclotomic leftover
hashing~\cite{BL26}, and its Gaussian module counterpart~\cite{AFLLW26}.
Here we need the estimate for every prescribed coefficient prefix,
including those with an incomplete polynomial block.

For $0\leq m\leq n$, let $\mathbf M_m$ consist of the first $m$ columns
of $\mathbf M$, and define the systematic prefix lattice
\begin{equation}
 \Lambda_m=\{(\mathbf u,\mathbf v)\in\mathbb Z^n\times\mathbb Z^m:
                 \mathbf M_m^T\mathbf u+\mathbf v=0\pmod q\}.
 \label{um:prefix}
\end{equation}
In particular, $\Lambda_0=\mathbb Z^n$.  The identity block makes the
row map surjective and gives the disjoint dual coset decomposition
\begin{equation}
 \Lambda_m^*=
 \bigcup_{\mathbf z\in\mathbb F_q^m}
 \left(\mathbb Z^{n+m}
       +\frac1q(\mathbf M_m\mathbf z,\mathbf z)\right).
 \label{um:dual}
\end{equation}
For $\mathbf z\in\mathbb F_q^m$, let
$\widetilde{\mathbf z}\in R_q^k$ be the module vector obtained by
padding $\mathbf z$ with zeros and grouping the coefficients into
polynomial blocks. The first component $\mathbf M_m\mathbf z$ is
$\operatorname{coeff}(\mathbf A\widetilde{\mathbf z})$.

For $T>0$, put
\[
 a_T=\rho_{1/T}(\mathbb Z),\qquad
 b_T=\rho_{q/T}(\mathbb Z).
\]
If $V\leq\mathbb F_q^D$ is a subspace of dimension $v$, then
\begin{equation}
 \sum_{\mathbf z\in V}\rho_{1/T}(\mathbb Z^D+\mathbf z/q)
       \leq a_T^{D-v}b_T^v.
 \label{um:subspace}
\end{equation}
Indeed, select $v$ information coordinates on which projection is a
bijection.  Every remaining coordinate contributes a shifted Gaussian mass
at most $a_T$, by one-dimensional Poisson summation and the triangle
inequality.  Summing the information coordinates over all their residues
then gives $b_T^v$.  We shall also use $b_T\leq1+q/T\leq2q/T$ for $T\leq q$.

\begin{lemma}
\label{um:ideal-norm}
Let $I\lhd R_q$ be an ideal of dimension $w$ over $\mathbb F_q$.  If
$z\in\mathbb Z[X]$ is nonzero, has degree less than $d$, and satisfies
$z\bmod q\in I$, then
\begin{equation}
 \|\operatorname{coeff}(z)\|\geq q^{1-w/d}.
 \label{um:norm}
\end{equation}
\end{lemma}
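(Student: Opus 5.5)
The plan is a multiplicative counting argument: compare the ideal generated by $z$ with the ideal $I$, and combine an algebraic norm identity with the arithmetic–geometric mean inequality. Write $\bar z=z\bmod q\in I$. If $\bar z=0$, then every coefficient of $z$ is divisible by $q$ and some coefficient is nonzero, so $\|\operatorname{coeff}(z)\|\ge q\ge q^{1-w/d}$. Assume $\bar z\ne0$. The principal ideal $\bar zR_q$ lies in $I$, so its $\mathbb F_q$-dimension is at most $w$. Hence the multiplication map $x\mapsto\bar zx$ on $R_q$ has kernel of dimension at least $d-w$.

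The key step is to pass to characteristic zero through the algebraic norm. Let $N(z)=\prod_{j}z(\zeta_j)\in\mathbb Z$, where $\zeta_j$ runs over the primitive $2d$-th roots of unity; this is the determinant of multiplication by $z$ on $O$. Reduction modulo $q$ sends this matrix to multiplication by $\bar z$ on $R_q$, and that reduced matrix has nullity at least $d-w$. A matrix over $\mathbb Z$ whose reduction modulo $q$ has rank at most $w$ has all $(w+1)$-minors divisible by $q$. Expanding the determinant via Smith normal form shows $q^{d-w}\mid\det$, so $q^{d-w}\mid N(z)$. Since $X^d+1$ is irreducible, $z\ne0$ in $O$ gives $N(z)\ne0$, and therefore $|N(z)|\ge q^{d-w}$.

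To finish, bound $|N(z)|$ by the coefficient norm. The canonical embedding of the power-of-two cyclotomic is $\sqrt d$ times an orthogonal map on coefficients, so $\sum_j|z(\zeta_j)|^2=d\|\operatorname{coeff}(z)\|^2$. The AM–GM inequality then gives
\[
 |N(z)|^{2/d}\le\frac1d\sum_j|z(\zeta_j)|^2=\|\operatorname{coeff}(z)\|^2 .
\]
Combining the two bounds yields $\|\operatorname{coeff}(z)\|\ge|N(z)|^{1/d}\ge q^{(d-w)/d}=q^{1-w/d}$.

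The main obstacle is the divisibility $q^{d-w}\mid N(z)$. I would prove it through the Smith normal form of the integer multiplication matrix. The number of invariant factors divisible by $q$ equals the nullity of the reduction modulo $q$, which is at least $d-w$. Hence the determinant, being the product of the invariant factors up to sign, is divisible by $q^{d-w}$. This argument needs neither the splitting structure nor $\delta$.
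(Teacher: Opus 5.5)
Your proof is correct and follows the paper's argument: Smith normal form gives $q^{d-w}\mid\det\mathbf M_z$, irreducibility of $X^d+1$ makes the determinant nonzero, and the bound $|\det\mathbf M_z|\le\|\operatorname{coeff}(z)\|^d$ finishes the proof. The only difference is in that last bound: the paper obtains it from Hadamard's inequality, since each column of $\mathbf M_z$ is a signed permutation of $\operatorname{coeff}(z)$, whereas you use Parseval for the negacyclic embedding together with AM--GM, which gives the same inequality. Your separate treatment of $\bar z=0$ is harmless, though the general rank argument already covers that case.
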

\begin{proof}
Let $\mathbf M_z$ be the integral negacyclic multiplication matrix of $z$. Its
image modulo $q$ lies in $I$ because $I$ is an ideal, so its rank modulo
$q$ is at most $w$.
Smith normal form implies $q^{d-w}\mid\det\mathbf M_z$. As $d$ is a power of
two, $X^d+1$ is irreducible over $\mathbb Q$, hence $\det\mathbf M_z\ne0$.
Each column of $\mathbf M_z$ is a signed permutation of the coefficient vector
of $z$.  Hadamard's inequality yields
$q^{d-w}\leq|\det\mathbf M_z|\leq\|\operatorname{coeff}(z)\|^d$.
\end{proof}

Write $m=jd+s$, with $0\leq s<d$, using $j=k,s=0$ at $m=n$.
For a nonzero prefix $\mathbf z\in\mathbb F_q^m$ with
$\widetilde{\mathbf z}=(z_1,\ldots,z_k)\in R_q^k$, define
\[
 g=\gcd(X^d+1,z_1,\ldots,z_k),\qquad
 w=d-\deg g,
 \qquad h_g=jw+\max\{0,s-d+w\},
\]
where the gcd is monic.  A complete polynomial block in the ideal $(g)$
has $w$ free coefficients.  A partial block of degree less than $s$ is an
ordinary multiple of $g$ of degree less than $s$, and contributes
$\max\{0,s-\deg g\}$ dimensions.  Thus $h_g$ is exactly the dimension
of the subspace of prefixes all of whose blocks lie in $(g)$, and
\begin{equation}
 h_g\leq mw/d,\qquad h_g\leq kw.
 \label{um:prefix-dimension}
\end{equation}

For any fixed prefix with this exact gcd, each ring component of
$\mathbf A\widetilde{\mathbf z}$
is uniform in $(g)$, and the $k$ components are independent.  In fact,
the map
\[
 R_q^k\longrightarrow R_q,\qquad
 (a_1,\ldots,a_k)\longmapsto\sum_{t=1}^k a_tz_t
\]
has image $(g)$ and equal-size fibres. Independence follows from the
independent rows of the ring matrix $\mathbf A$.

Let $\mathcal E_g$ be the exact-gcd class of nonzero prefixes, and write
\begin{equation}
 D_g(T)=\sum_{\mathbf z\in\mathcal E_g}
  \mathbb E_{\mathbf A}\!\left[
    \rho_{1/T}(\mathbb Z^n+\mathbf M_m\mathbf z/q)
    \rho_{1/T}(\mathbb Z^m+\mathbf z/q)\right].
 \label{um:class-mass}
\end{equation}
The uniform-image statement and~\eqref{um:subspace} bound the expected
first factor by $q^{-kw}a_T^{n-kw}b_T^{kw}$.  Enlarge the systematic
prefix class to its containing subspace of dimension $h_g$ and apply
\eqref{um:subspace} once more.  For $T\leq q$ this proves
\begin{align}
 D_g(T)
 &\leq q^{-kw}a_T^{n+m-h_g-kw}b_T^{h_g+kw}
 \notag\\
 &\leq q^{h_g}a_T^{n+m-h_g-kw}(2/T)^{h_g+kw}.
 \label{um:dimension-mass}
\end{align}
The first estimate uses the dimension of the image ideal to quantify
the averaging over uniform outputs. Larger ideals provide more
randomness and will be controlled by this bound. Small ideals instead
force long nonzero integer representatives, giving a complementary
geometric estimate.

For any point of a nonzero systematic coset, multiplying by $q$ gives a
nonzero integer polynomial block whose residue lies in $(g)$.
Lemma~\ref{um:ideal-norm} therefore
bounds the norm of this point below by $q^{-w/d}$.  Split the Gaussian
exponent into two equal parts, use this norm bound on one part, and bound
the other shifted mass by the unshifted mass.  Each first-component
shifted mass is at most $a_T^n$, and there are at most $q^{h_g}$ prefixes.
Consequently,
\begin{equation}
 D_g(T)\leq q^{h_g}a_T^n a_{T/\sqrt2}^m
          \exp\left(-\frac\pi2 T^2q^{-2w/d}\right).
 \label{um:norm-mass}
\end{equation}

\subsection{Smoothing every prescribed prefix}

We now combine the two mass estimates. The logarithmic condition below
ensures that the count of large ideals is absorbed by their Gaussian
decay. The recovery schedule will satisfy it automatically.

\begin{proposition}
\label{um:geometry}
Assume the public model above.  Let $L_0$ satisfy
\begin{equation}
 \ln n=o(L_0),\qquad L_0=o(n),\qquad
 \ln\left(\frac{n\ln q}{L_0}\right)=o(\ln L_0),
 \qquad \varepsilon_0=4n e^{-L_0}.
 \label{um:geometry-scale}
\end{equation}
Choose deterministic pairs $(m_i,T_i)_{i=1}^r$, independent of $\mathbf A$, with
$1\leq m_i\leq n$, $8\sqrt{L_0}\leq T_i\leq q$, and
\begin{equation}
 \max_i\frac{m_i\ln(q/T_i)}{n\ln T_i}=o(1).
 \label{um:prefix-ratio}
\end{equation}
Then, for all sufficiently large $n$,
\begin{equation}
 \Pr_{\mathbf A}\!\left[\exists i:
       \eta_{\varepsilon_0}(\Lambda_{m_i})>T_i\right]
       \leq\frac{r}{2n}e^{-2L_0}.
 \label{um:geometry-probability}
\end{equation}
\end{proposition}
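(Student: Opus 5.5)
The plan is to reduce to Lemma~\ref{up:mass-to-smoothing}: the lattices $\Lambda_{m_i}\subseteq\mathbb Z^{n+m_i}$ have $a_i=n+m_i\le2n$, and $T_i\ge8\sqrt{L_0}$, so it suffices to show, uniformly in $i$,
\[
 \mathbb E_{\mathbf A}\bigl[\rho_{1/T_i}(\Lambda_{m_i}^*\setminus\mathbb Z^{n+m_i})\bigr]
 =\sum_g D_g(T_i)\le e^{-3L_0}.
\]
By~\eqref{um:dual} the noninteger dual mass is the sum over nonzero $\mathbf z$, and grouping these by the exact gcd $g$ gives the displayed sum over monic divisors $g\ne X^d+1$ of $X^d+1$ over $\mathbb F_q$. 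Since $\delta$ is constant, $X^d+1$ factors into $d/\delta$ distinct irreducibles of degree $\delta$. Hence the number of divisors $g$ with $\deg g=d-w$ is at most $\binom{d/\delta}{w/\delta}\le (ed)^{w/\delta}\le e^{w(1+\ln d)}$, and $w$ runs over multiples of $\delta$ in $[\delta,d]$.

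For each $w$, take the minimum of~\eqref{um:dimension-mass} and~\eqref{um:norm-mass}. Because $T\ge8\sqrt{L_0}\to\infty$, we have $a_T^{n+m}\le\exp(O(ne^{-\pi T^2}))=1+o(1)$, and similarly for $a_{T/\sqrt2}$, so the theta factors are harmless. There are two regimes.

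\emph{Large ideals}, $w\ge w_*$ with $q^{-2w_*/d}T^2$ of order $\ln(n\ln q/\ldots)$. Use~\eqref{um:dimension-mass}, whose logarithm is at most
\[
 h_g\ln q-(h_g+kw)\ln(T/2).
\]
With $h_g\le m w/d$ this is at most
\[
 -\,kw\ln(T/2)+\frac{mw}{d}\ln\frac{2q}{T}.
\]
Since $kw=nw/d$, the prefix ratio~\eqref{um:prefix-ratio} gives the bound $-(1-o(1))(nw/d)\ln T$. The $e^{w(1+\ln d)}$ counting factor is $o$ of this because $(n/d)\ln T=k\ln T\gg1+\ln d$. \emph{This needs care.} $k$ is fixed, so it requires $\ln T\gg\ln d$, which fails in general since $\ln T\approx\tfrac12\ln L_0$ may be below $\ln d$. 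I would therefore bound the count more sharply in this regime by $e^{O(w\ln(d/w))}$ when $w$ is small, and by $2^{d/\delta}$ overall. For $w\ge d/\ln$-scale, the term $(nw/d)\ln T$ is of order $n\ln T\gg d$, and it also exceeds $3L_0$ because $L_0=o(n)$.

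\emph{Small ideals}, $w\le w_*$. Use~\eqref{um:norm-mass}, whose logarithm is
\[
 h_g\ln q-\frac{\pi}{2}T^2q^{-2w/d}+o(1),\qquad h_g\le kw.
\]
Here the Gaussian term dominates as long as $T^2q^{-2w/d}\gg kw\ln q+w\ln d+L_0$. This is the point where the entropy hypothesis $\ln(n\ln q/L_0)=o(\ln L_0)$ enters: choose the crossover $w_*$ so that $q^{2w_*/d}\approx T^2/(C L_0)$, i.e. $w_*\approx d\ln(T^2/L_0)/(2\ln q)$. At $w=w_*$ the large-ideal exponent is of order $n\ln(T^2/L_0)\ln T/\ln q$. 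Because $T^2\ge64L_0$ this is at least order $n\ln T/\ln q$, and the hypothesis makes this exceed $L_0$ by a factor tending to infinity (this is $\ln(n\ln q/L_0)=o(\ln L_0)\le o(\ln T^2)$). Below $w_*$, the Gaussian decay $\exp(-\Omega(L_0 e^{2(w_*-w)\ln q/d}))$ overwhelms $e^{O(w\ln(nq))}$ with $w\le w_*$, provided $w_*\ln(nq)=o(L_0)$. This again follows from $w_*\le d\ln T/\ln q$ and the entropy condition.

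Summing over the at most $d$ admissible $w$ gives total mass $\le e^{-3L_0}$ for large $n$, uniformly over the $r$ prefixes. Lemma~\ref{up:mass-to-smoothing} then yields~\eqref{um:geometry-probability}. The main obstacle is the bookkeeping at the crossover: matching the two regimes so that the divisor count and the $q^{h_g}$ prefix count are absorbed, using exactly the three conditions in~\eqref{um:geometry-scale} and~\eqref{um:prefix-ratio}. The partial-block dimension $h_g$ is handled only through $h_g\le\min\{mw/d,kw\}$.
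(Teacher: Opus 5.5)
Your reduction to Lemma~\ref{up:mass-to-smoothing}, and the split into large image ideals (via \eqref{um:dimension-mass}) and small ones (via \eqref{um:norm-mass}), match the paper. The gap is where you put the crossover: with your choice, the small-ideal estimate fails.

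With $q^{2w_*/d}\approx T^2/(CL_0)$, the Gaussian factor in \eqref{um:norm-mass} is, by construction, only $e^{-\Theta(L_0)}$ for $w$ just below $w_*$. The prefix factor there is $q^{h_g}$, and $h_g=kw$ for the full prefix $m=n$. Moreover $kw_*\ln q=\tfrac n2\ln(T^2/(CL_0))$, which is at least of order $n\gg L_0$ once $T^2/L_0$ exceeds a large constant, for instance at the final prefix $T\approx q/(4f)$. So the bound is $e^{\Omega(n)}$, not small.

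Your justification that $w_*\ln(nq)=o(L_0)$ follows from $w_*\le d\ln T/\ln q$ is wrong. That inequality only gives $w_*\ln q\le d\ln T$, which is far larger than $L_0=o(n)$.

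Your large-ideal claim that $n\ln T/(L_0\ln q)\to\infty$ follows from the entropy hypothesis is also false in general. Take $L_0=n/\ln\ln n$, $\ln q=(\ln n)^2$, $T=8\sqrt{L_0}$: this satisfies \eqref{um:geometry-scale}, yet the ratio tends to $0$. The missing observation is that a nonempty exact-gcd class has $h_g>0$, which forces $m>d-w$. Hence classes with $w<d/2$ occur only when $m>d/2\ge n/(2k)$. Then \eqref{um:prefix-ratio} gives $\ln(q/T)=o(\ln T)$, that is, $\ln q=(1+o(1))\ln T$.

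The repair, which is the paper's argument, places the crossover where the dimension bound just reaches $e^{-\Theta(L_0)}$: set $w_*=8L_0/(k\ln T)=o(d)$.

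\emph{Large ideals, $w\ge w_*$.} By \eqref{um:dimension-decay}, each class contributes at most $e^{-(1-o(1))kw\ln T}$, with $kw\ln T\ge8L_0$. The ideal count satisfies $\ln\binom dw\le w\ln(ed/w_*)=w\ln\bigl(en\ln T/(8L_0)\bigr)=o(w\ln T)$, using \eqref{um:geometry-scale} and $\ln T\ge\tfrac12\ln L_0$. This is exactly where the entropy hypothesis enters, and it makes your vague $e^{O(w\ln(d/w))}$ count bound precise.

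\emph{Small ideals, $0<w<w_*$.} The observation above gives $q^{-2w/d}\ge\exp\bigl(-16L_0\ln q/(n\ln T)\bigr)=1-o(1)$. So \eqref{um:norm-mass} decays like $e^{-(32\pi-o(1))L_0}$, because $T^2\ge64L_0$. Meanwhile the ideal count $(d/\delta)^{w/\delta}\le q^w$ and the prefix count $q^{h_g}\le q^{kw}$ together contribute at most $e^{(k+1)w_*\ln q}=e^{(8(k+1)/k+o(1))L_0}\le e^{(16+o(1))L_0}$.

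Summing both ranges over at most $d$ dimensions gives the required $e^{-3L_0}$.
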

\begin{proof}
Fix one of the prescribed pairs and abbreviate it by $(m,T)$.
Because $T^2\geq64L_0$ and $\ln n=o(L_0)$, every integer theta factor
in~\eqref{um:dimension-mass} and~\eqref{um:norm-mass} is $1+o(1)$,
uniformly in the ideal and the pair.  For example,
$a_{T/\sqrt2}=1+O(e^{-\pi T^2/2})$, and its exponent is at most $2n$.
Equations~\eqref{um:dimension-mass}, \eqref{um:prefix-dimension}, and
\eqref{um:prefix-ratio} imply
\begin{align}
 \ln D_g(T)
 &\leq o(1)+h_g\ln(q/T)-kw\ln T+(h_g+kw)\ln2
 \notag\\
 &\leq o(1)-(1-o(1))kw\ln T.
 \label{um:dimension-decay}
\end{align}
All error terms in this proof are uniform in the prescribed pairs.

First consider large image ideals. Set $w_*=8L_0/(k\ln T)=o(d)$.
The polynomial $X^d+1$ is squarefree
modulo $q$, and every irreducible factor has degree $\delta$.  The number
of dimension-$w$ ideals is therefore
$\binom{d/\delta}{w/\delta}$ when $\delta\mid w$, and zero otherwise.
For $w\geq w_*$, it is at most $\binom dw$, and
\[
 \ln\binom dw\leq w\ln(ed/w_*)
 \leq w\ln\left(\frac{e n\ln q}{L_0}\right)
 =o(w\ln T).
\]
Here the last equality follows from~\eqref{um:geometry-scale} and
$\ln T\geq\tfrac12\ln L_0+\ln8$.  Summing
\eqref{um:dimension-decay} over ideals and at most $d$ dimensions yields
\begin{equation}
 \sum_{w\geq w_*}\ \sum_{d-\deg g=w}D_g(T)\leq e^{-6L_0}.
 \label{um:high-dimension}
\end{equation}

For the small ideals, suppose $0<w<w_*$. A nonempty exact-gcd class has $h_g>0$.
The prefix dimension formula then forces $m>d-w>d/2$ eventually.
Using~\eqref{um:prefix-ratio} and fixed $k$ gives
\[
 \ln(q/T)=o(\ln T),\qquad \ln q=(1+o(1))\ln T,
\]
and hence
\begin{equation}
 q^{-2w/d}\geq
 \exp\left(-\frac{16L_0}{n}\frac{\ln q}{\ln T}\right)=1-o(1).
 \label{um:low-norm}
\end{equation}
Since $2d\mid q^\delta-1$, the number of dimension-$w$ ideals is at most
$(d/\delta)^{w/\delta}\leq q^w$.  Along with $h_g\leq kw$, this bounds
the logarithm of the product of the ideal count and the number of prefixes
by
\[
 (k+1)w_*\ln q
 =\left(\frac{8(k+1)}k+o(1)\right)L_0
 \leq(16+o(1))L_0.
\]
The negative exponent in~\eqref{um:norm-mass} is at most
$-(32\pi-o(1))L_0$, by~\eqref{um:low-norm}.  After summing over the
possible dimensions, we obtain
\begin{equation}
 \sum_{0<w<w_*}\ \sum_{d-\deg g=w}D_g(T)\leq e^{-80L_0}.
 \label{um:low-dimension}
\end{equation}
Together, \eqref{um:dual}, \eqref{um:high-dimension}, and
\eqref{um:low-dimension} give
\[
 \mathbb E_{\mathbf A}\!\left[\rho_{1/T}
       (\Lambda_m^*\setminus\mathbb Z^{n+m})\right]\leq e^{-3L_0}.
\]
Lemma~\ref{up:mass-to-smoothing} now proves~\eqref{um:geometry-probability}.
\end{proof}

\subsection{Recovery and the Kyber/ML-KEM relation}
\label{ua:mlkem}

The geometry now connects to the recovery theorem in one step. We state
the consequence for arbitrary low-energy inputs, then specialize the
parameters and coefficient distributions.

\begin{theorem}[Recovery over a prime modulus]
\label{um:recovery}
\label{ua:mlkem-corollary}
In Definition~\ref{ua:mlkem-definition}, let a deterministic rational
$V\ge1$ of polynomial bit length satisfy
\begin{equation}
 \Pr[\|\mathbf s\|^2+\|\mathbf e\|^2>nV]\le p_E.
 \label{um:energy}
\end{equation}
Set $U=\sqrt V$ and assume
\begin{equation}
 D=\ln\frac{\ln(q/\sqrt n)}{1+\ln U}\longrightarrow\infty.
 \label{um:window}
\end{equation}
The algorithm of Theorem~\ref{ue:main} recovers the original integer
vector $\mathbf s$ from the $k$ ring equations with probability $1-p_E-o(1)$.
Its expected time and space are
\begin{equation}
 2^{(1/2+o(1))n/D}.
 \label{um:complexity}
\end{equation}
The joint law of $\mathbf s,\mathbf e$ may depend on $\mathbf A$.
\end{theorem}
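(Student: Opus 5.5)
The plan is to reduce Theorem~\ref{um:recovery} to Theorem~\ref{ue:main} in the same way Theorem~\ref{up:energy} was obtained for the dyadic case. The work splits into three parts: choosing the certificate, checking the window, and transferring the geometry from Proposition~\ref{um:geometry}.

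First I will fix the certificate. Take $h=n$, $V$ as given, and $B=\lfloor\sqrt{nV}\rfloor$. On the energy event~\eqref{um:energy}, every coordinate satisfies $|s_j|\le B$, so~\eqref{ur:certificate} holds with failure probability at most $p_E$. Remark~\ref{ue:substitution} gives $\sqrt V\le U_{\rm cand}=R/\sqrt n\le2\sqrt V$. Hence the depth of Theorem~\ref{ue:main} differs from $D$ in~\eqref{um:window} by $O(1)$, and the two are asymptotically equal because $D\to\infty$.

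Next I will check the remaining hypotheses in~\eqref{ue:window}. The condition $Q=\ln(q/\sqrt n)>0$ is implied by $D\to\infty$. The bound $\ln q\le(\ln n)^{c}$ is assumed in Definition~\ref{ua:mlkem-definition}. Since the schedule of Theorem~\ref{ue:main} depends only on $(n,q,V,B)$, it is fixed before $\mathbf A$ is drawn.

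The main step is the geometry condition~\eqref{ue:geometry}. I will apply Proposition~\ref{um:geometry} with the scheduled pairs $(m_i,T_i)=(m_i,t_i/2)$ and $L_0,\varepsilon_0=4ne^{-L_0}$ from~\eqref{ue:accuracy}. The schedule consequences~\eqref{ue:geometry-scale} supply the required inputs: $\ln n=o(L_0)$, $L_0=o(n)$, the bounds $8\sqrt{L_0}\le T_i\le q$, and the prefix ratio~\eqref{um:prefix-ratio}. The hypothesis that needs the most care is the entropy comparison $\ln(n\ln q/L_0)=o(\ln L_0)$ in~\eqref{um:geometry-scale}. I expect to derive it from $L_0=\Theta(n/D)$ together with $D=O(\ln\ln n)$ and $\ln q\le(\ln n)^c$. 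These give $n\ln q/L_0=O(D(\ln n)^c)=(\ln n)^{O(1)}$, whose logarithm is $O(\ln\ln n)=o(\ln n)=o(\ln L_0)$. This matches the remark after~\eqref{ue:geometry-scale}. I also need $1\le m_i$, which holds since the allocation makes every block positive, and that the prefixes of~\eqref{um:prefix} coincide with the $\Lambda_{m_i}$ of~\eqref{ur:lattices}. Both use the column ordering of Definition~\ref{ua:mlkem-definition}. Proposition~\ref{um:geometry} then bounds the failure probability by $\frac{r}{2n}e^{-2L_0}=o(1)$, because $r=O(\ln q)$.

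Finally, Theorem~\ref{ue:main} applies with $\omega_n=o(1)$. It gives success probability $1-p_E-o(1)$ and cost $2^{(1/2+o(1))n/D}$, after replacing the candidate depth by $D$ using the $O(1)$ comparison above. Corollary~\ref{ur:dependence} allows $(\mathbf s,\mathbf e)$ to depend on $\mathbf A$. I expect no real obstacle beyond checking the entropy hypothesis and that it holds uniformly along the finite rounded schedule.
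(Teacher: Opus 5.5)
Your proposal is correct and follows the paper's own proof. You invoke Remark~\ref{ue:substitution} for the certificate and depth comparison, and use the schedule bounds \eqref{ue:geometry-scale} together with $D=O(\ln\ln n)$ and $\ln q\le(\ln n)^c$ to verify the entropy hypothesis of Proposition~\ref{um:geometry}. You then conclude with Theorem~\ref{ue:main}; your write-up only makes explicit some routine checks, such as positive blocks, the matching $\varepsilon_0$, and $Q>0$, that the paper leaves implicit.
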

\begin{proof}
Remark~\ref{ue:substitution} gives the candidate bound and the same
asymptotic depth.

For the schedule of Theorem~\ref{ue:main}, the bounds
from~\eqref{ue:accuracy}--\eqref{ue:geometry-scale} give
$L_0=\Theta(n/D)$, $T_i=t_i/2\ge8\sqrt{L_0}$, and
\eqref{um:prefix-ratio}. Since $\ln q\le(\ln n)^c$ and $U\ge1$,
we have $D=O(\ln\ln n)$ and
\[
 \ln\left(\frac{n\ln q}{L_0}\right)=O(\ln\ln n)=o(\ln L_0).
\]
Thus all hypotheses of Proposition~\ref{um:geometry} hold for this
schedule. Its failure probability is at most $r e^{-2L_0}/(2n)=o(1)$.
Theorem~\ref{ue:main} gives recovery uniformly over the stated
energy event.
\end{proof}

To specialize the exponent, we first make explicit the restriction
imposed by the fixed splitting degree.

\begin{lemma}
\label{um:modulus}
In this family, $q=\Omega(n)$. Consequently, if
\begin{equation}
 \ln(q/\sqrt n)=(\ln n)^{\beta+o(1)}
 \label{um:beta}
\end{equation}
for a fixed $\beta>0$, then $\beta\geq1$.
\end{lemma}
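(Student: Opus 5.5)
The plan is to use the fixed splitting degree $\delta=\operatorname{ord}_{2d}(q)$. By definition, $2d\mid q^\delta-1$, so $q^\delta\ge 2d+1>2d$. This gives $q>(2d)^{1/\delta}$, which is only polynomial in $d$ with exponent $1/\delta$ and falls short of $q=\Omega(n)$ when $\delta>1$. A sharper input is needed.

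The first attempt will be to show that the family forces $\delta$ to be small, via the structure of $(\mathbb Z/2d\mathbb Z)^\times\cong \mathbb Z/2\times\mathbb Z/(d/2)$. Since $d$ is a power of two, the order of any unit is a power of two, so $\delta=2^a$ for a fixed $a$. Write $q^{\delta}\equiv 1\pmod{2d}$ and apply the lifting-the-exponent lemma at the prime $2$: for odd $q$ and even $\delta=2^a$ with $a\ge1$, $v_2(q^{2^a}-1)=v_2(q-1)+v_2(q+1)+a-1$. Hence $v_2(q^2-1)\ge v_2(2d)-a+1$, and therefore $q^2-1\ge 2^{-(a-1)}\cdot 2d$. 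Since $a$ is fixed, $q^2\ge c\,d$, and this only yields $q=\Omega(\sqrt d)$.

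So the next step is to refine this with the fact that exactly one of $q-1,q+1$ is $2\bmod 4$. Then $\max(v_2(q-1),v_2(q+1))\ge v_2(2d)-a$, so one of $q\pm1$ is divisible by $2d/2^a$. This gives $q\ge 2d/2^{a}-1=\Omega(d)$. The case $\delta=1$ is direct, since then $2d\mid q-1$. Because $k$ is fixed and $n=kd$, it follows that $q=\Omega(n)$. This LTE bookkeeping, which tracks that the $2$-adic valuation concentrates in one of $q\pm1$, is the main step I expect to need care.

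For the consequence, $q=\Omega(n)$ gives $\ln(q/\sqrt n)\ge \tfrac12\ln n-O(1)=(\ln n)^{1+o(1)}$. Comparing with \eqref{um:beta}, we get $(\ln n)^{\beta+o(1)}\ge(\ln n)^{1+o(1)}$, and letting $n\to\infty$ forces $\beta\ge1$.
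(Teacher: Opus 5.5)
Your proof is correct and follows the paper's argument essentially step for step. You note that $\delta$ is a power of two, and the valuation identity $v_2(q^{2^a}-1)=v_2(q-1)+v_2(q+1)+a-1$, combined with the fact that exactly one of $q\pm1$ has $2$-adic valuation one, gives $q\equiv\pm1\pmod{2d/\delta}$; hence $q\ge 2d/\delta-1=\Omega(n)$, and $\ln(q/\sqrt n)\ge\tfrac12\ln n-O(1)$ excludes $\beta<1$.
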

\begin{proof}
Write $2d=2^u$. The order $\delta$ is a power of two. If $\delta=1$,
then $q\equiv1\pmod{2d}$, and $q\geq2d+1$. Otherwise write
$\delta=2^a$, with $a\geq1$. The elementary valuation identity gives
\[
 u\leq v_2(q^{2^a}-1)
   =v_2(q-1)+v_2(q+1)+a-1.
\]
Exactly one of $v_2(q-1)$ and $v_2(q+1)$ equals one. It follows that
$q\equiv\pm1\pmod{2d/\delta}$ and $q\geq2d/\delta-1$. Both $k$ and
$\delta$ are fixed, so $q=\Omega(n)$. Therefore
$\ln(q/\sqrt n)\geq\tfrac12\ln n-O(1)$, which excludes $\beta<1$.
\end{proof}

For fixed coefficient bounds and
$\ln(q/\sqrt n)=(\ln n)^{\beta+o(1)}$, the expected time and space are
\begin{equation}
 2^{(1/(2\beta)+o(1))n/\ln\ln n}
 \label{um:bounded-complexity}
\end{equation}
with $\beta\ge1$ by Lemma~\ref{um:modulus}.
A useful broader case has $\ln(q/\sqrt n)=\Theta(\ln n)$ and
$\|\mathbf s\|_\infty+\|\mathbf e\|_\infty=(\ln n)^{O(1)}$. The bounded-coefficient
estimate in Section~\ref{up:dist:energy-distributions} gives
$D=(1+o(1))\ln\ln n$, and hence expected time and space
\begin{equation}
 2^{(1/2+o(1))n/\ln\ln n}.
 \label{ua:mlkem-polylog}
\end{equation}
\begin{corollary}
\label{ua:mlkem-existence}
Fix $k$. For every power of two $d\ge2$, let $q_d$ be the least prime
satisfying
\begin{equation}
 q_d\equiv d+1\pmod{2d}.
 \label{ua:mlkem-quadratic}
\end{equation}
Then Definition~\ref{ua:mlkem-definition} holds with $\delta=2$.
Polylogarithmically bounded coefficients give the expected time and space
in~\eqref{ua:mlkem-polylog}.
\end{corollary}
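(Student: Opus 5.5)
The corollary has three parts. First, the least prime $q_d$ exists. Second, it satisfies the hypotheses of Definition~\ref{ua:mlkem-definition} with $\delta=\operatorname{ord}_{2d}(q_d)=2$. Third, it meets the growth bound $\ln q_d\le(\ln n)^c$ and gives $\ln(q_d/\sqrt n)=\Theta(\ln n)$, so that the polylogarithmic specialization~\eqref{ua:mlkem-polylog} applies through Theorem~\ref{um:recovery}.

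\textbf{Existence and size of $q_d$.} The residue $d+1$ is odd and hence coprime to $2d$, since $d$ is a power of two. Dirichlet's theorem therefore gives a prime in this class, so $q_d$ is well defined and odd. For the upper bound I would use Linnik's theorem: the least prime in a reduced class modulo $2d$ is at most $(2d)^{L}$ for an absolute constant $L$. This gives $q_d\le(2d)^{L}=n^{O(1)}$ for fixed $k$, so $\ln q_d=O(\ln n)\le(\ln n)^{c}$ for, say, $c=2$ and large $n$. For the lower bound, $q_d\ge d+1$. Lemma~\ref{um:modulus} in any case gives $q_d=\Omega(n)$. Thus $\ln(q_d/\sqrt n)\ge\tfrac12\ln n-O(1)$, while the Linnik bound gives the matching upper bound $O(\ln n)$.

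\textbf{Splitting degree.} For $d\ge2$ we have $q_d\not\equiv1\pmod{2d}$, because $d+1\ne1$ modulo $2d$. Hence $\delta\ne1$. Next compute
\[
 q_d^2-1=(q_d-1)(q_d+1)\equiv d(d+2)=d^2+2d\equiv0\pmod{2d},
\]
where $d^2\equiv0\pmod{2d}$ because $d$ is even. Therefore $\delta=2$, which is constant, so Definition~\ref{ua:mlkem-definition} holds.

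\textbf{Complexity.} With polylogarithmically bounded coefficients, the deterministic bound $V=\max\{1,\eta_s^2+\eta_e^2\}$ gives $U=\sqrt V=(\ln n)^{O(1)}$ with $p_E=0$. Then
\[
 D=\ln\frac{\Theta(\ln n)}{1+O(\ln\ln n)}=\ln\ln n-O(\ln\ln\ln n)\to\infty .
\]
Theorem~\ref{um:recovery} then yields expected time and space $2^{(1/2+o(1))n/\ln\ln n}$.

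\textbf{Main obstacle.} The one delicate point is the polynomial upper bound on $q_d$. An unconditional Linnik-type bound is what I would cite; any polynomial bound suffices. If a citation is unwanted, the weaker requirement $\ln q_d\le(\ln n)^c$ needs only some bound $q_d\le\exp((\ln d)^c)$. However, keeping $\ln(q_d/\sqrt n)=\Theta(\ln n)$, and so obtaining the constant $1/2$ rather than $1/(2\beta)$ as in~\eqref{um:bounded-complexity}, genuinely requires $q_d=n^{O(1)}$, hence Linnik.
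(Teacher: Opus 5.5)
Your proposal is correct and follows essentially the same route as the paper. Both use a Linnik-type least-prime bound (the paper cites Xylouris's explicit form $q_d\le C(2d)^{5.18}$) together with $q_d\ge d+1$ to get $\ln(q_d/\sqrt n)=\Theta(\ln n)$, use $(d+1)^2\equiv1$ and $d+1\not\equiv1\pmod{2d}$ to get $\operatorname{ord}_{2d}(q_d)=2$, and then apply the polylogarithmic specialization through Theorem~\ref{um:recovery} to obtain~\eqref{ua:mlkem-polylog}; the paper's only addition is a remark that the same polynomial bound lets one construct $q_d$ in polynomial time by enumeration and primality testing, which the statement itself does not require.
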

\begin{proof}
The residue class is coprime to $2d$, and the least-prime theorem
of~\cite[Theorem~1.1]{Xyl11} gives
$d+1\le q_d\le C(2d)^{5.18}$ for an effective absolute constant $C$.
Moreover $(d+1)^2\equiv1\pmod{2d}$ and $d+1\not\equiv1\pmod{2d}$,
so $\operatorname{ord}_{2d}(q_d)=2$ and all irreducible factors have
degree two. For $n=kd$, these bounds give
$\ln(q_d/\sqrt n)=\Theta(\ln n)$. The preceding specialization applies.
Enumerating candidates up to the polynomial upper bound and testing
primality also constructs this modulus sequence in polynomial time.
\end{proof}

For centered binomial secret and error coefficients with public positive
integer parameters $\eta_s,\eta_e=(\ln n)^{O(1)}$, the respective supports
are $[-\eta_s,\eta_s]$ and $[-\eta_e,\eta_e]$. Take
\[
 B=\eta_s,\qquad V=\max\{1,\eta_s^2+\eta_e^2\}.
\]
The joint secret and error energy is then deterministically at most $nV$.
The normalized candidate radius in~\eqref{ue:window} is
$U=\sqrt{V+3B^2/n}=\sqrt V(1+O(1/n))$.
When $\ln(q/\sqrt n)=\Theta(\ln n)$, we obtain
$D=(1+o(1))\ln\ln n$ and~\eqref{ua:mlkem-polylog}.
For independent coordinates, Corollary~\ref{up:dist:gaussian-cbd} also
gives a probabilistic energy bound.
Kyber/ML-KEM key generation uses $\eta_s=\eta_e=\eta_1$.
The parameter $\eta_2$ is used for encryption errors.

Public data in an invertible transform representation are
first returned to coefficient form, as in
\cite[Sections~2.4.5 and~4.3]{FIPS203}. This preserves the uniform
distribution of the public matrix and provides the coefficient
representation used throughout the recovery argument.

\begin{corollary}[Kyber/ML-KEM family specialization]
\label{ua:kyber-headline}
Let $n=kd$, with fixed $k$ and $d$ growing through powers of two.
Take prime $q=n^{\kappa+o(1)}$, fixed $\kappa\ge1$, and fixed
$\operatorname{ord}_{2d}(q)$. For uniform
$\mathbf A\in R_q^{k\times k}$, suppose a public polylogarithmic
$V\ge1$ gives a joint secret and error energy bound $nV$ with probability
$1-o(1)$. Then the original coefficient secret is recovered from
$\mathbf t=\mathbf A\mathbf s_R+\mathbf e_R\pmod q$ with probability $1-o(1)$,
in expected time and space $2^{(1/2+o(1))n/\ln\ln n}$.
\end{corollary}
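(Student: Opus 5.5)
This is a direct specialization of Theorem~\ref{um:recovery}. The plan is to check its hypotheses for the stated parameter family and then compute the depth $D$.

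First, the family fits Definition~\ref{ua:mlkem-definition}. The rank $k$ is fixed, $d$ grows through powers of two with $n=kd$, and $\operatorname{ord}_{2d}(q)$ is fixed. Since $q$ is prime with $q=n^{\kappa+o(1)}$, we get $\ln q=(\kappa+o(1))\ln n\le(\ln n)^{c}$ for, say, $c=2$ and all large $n$. The matrix $\mathbf A$ is uniform in $R_q^{k\times k}$, and $q$ is odd once $n$ is large.

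Second, the energy hypothesis~\eqref{um:energy} holds with the given public $V$ and $p_E=o(1)$. If $V$ is not already rational of polynomial bit length, round it up, say to $\lceil V\rceil$; this keeps it polylogarithmic and the bound still holds.

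Third, compute the depth. Set $U=\sqrt V$. Since $V=(\ln n)^{O(1)}$ and $V\ge1$, we have $1+\ln U=O(\ln\ln n)$ and $1+\ln U\ge1$. Since $\kappa\ge1>1/2$,
\[
 \ln(q/\sqrt n)=(\kappa-1/2+o(1))\ln n=\Theta(\ln n).
\]
Therefore
\[
 D=\ln\frac{\ln(q/\sqrt n)}{1+\ln U}=\ln\ln n-O(\ln\ln\ln n)\sim\ln\ln n,
\]
so in particular $D\to\infty$, which is~\eqref{um:window}.

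Theorem~\ref{um:recovery} then recovers $\mathbf s=\operatorname{coeff}(\mathbf s_R)$ from the $k$ ring equations with probability $1-p_E-o(1)=1-o(1)$. Its cost is $2^{(1/2+o(1))n/D}=2^{(1/2+o(1))n/\ln\ln n}$; replacing $D$ by $\ln\ln n$ is absorbed into the $o(1)$ because $D/\ln\ln n\to1$. Public data given in NTT form is converted to coefficient form first; this is an invertible transform and preserves uniformity.

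I see no real obstacle here. The only points needing a line of care are the modulus-growth hypothesis $\ln q\le(\ln n)^c$ and the requirement that $V$ have polynomial bit length.
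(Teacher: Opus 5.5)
Your proposal is correct and follows the paper's proof. Both arguments identify the fixed order with the fixed splitting degree of Definition~\ref{ua:mlkem-definition}, compute $D=\ln\ln n+O(\ln\ln\ln n)\sim\ln\ln n$ from $U=\sqrt V$ and $\ln(q/\sqrt n)=(\kappa-1/2+o(1))\ln n$, and invoke Theorem~\ref{um:recovery} with $p_E=o(1)$. Your extra checks ($\ln q\le(\ln n)^2$, $q$ odd, and rounding $V$ up to a rational of polynomial bit length) only make explicit hypotheses that the paper leaves implicit.
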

\begin{proof}
The fixed order is the fixed splitting degree in
Definition~\ref{ua:mlkem-definition}. With $U=\sqrt V$,
$1+\ln U=O(\ln\ln n)$ and
$\ln(q/\sqrt n)=(\kappa-1/2+o(1))\ln n$. Hence
\[
 D=\ln\frac{\ln(q/\sqrt n)}{1+\ln U}
   =\ln\ln n+O(\ln\ln\ln n)\sim\ln\ln n.
\]
Theorem~\ref{um:recovery} applies with $p_E=o(1)$ to the $k$ ring
equations, which supply exactly $n$ coefficient observations.
This proves Corollary~\ref{ui:kyber-cor}.
\end{proof}

\section{SABER: Recovery from Rounded Module Products}
\label{ud:section}
\label{ua:saber}

Rounding a module product gives a bounded integer residue, so the common
recovery theorem applies once we prove smoothing for its coefficient
prefixes. The new issue is the modulus: modulo two,
$X^d+1=(X+1)^d$, and the ideals form a single nested chain. We use this
chain to count prefix vectors and bound their Gaussian mass directly in
coefficient coordinates.

\subsection{The rounded relation and its energy}

Fix a positive module rank $k$, let $d=2^\nu$ tend to infinity, and put
$n=kd$. Set
\[
 O=\mathbb Z[x]/(x^d+1),\qquad R_q=O/qO,\qquad
 q=2^{b_q},\qquad \mathbf A\leftarrow\operatorname{Unif}(R_q^{k\times k}).
\]
For the secret module vector $\mathbf s_R\in O^k$, write
$\mathbf s=\operatorname{coeff}(\mathbf s_R)\in\mathbb Z^n$,
concatenating the coefficients of its $k$ polynomial components.
Let $\mathbf M$ be the coefficient matrix of multiplication by
$\mathbf A^T$, so that
$\operatorname{coeff}(\mathbf A^T\mathbf s_R)=\mathbf M\mathbf s\pmod q$.
Coordinates are ordered by polynomial block and then coefficient degree.
All norms below use this basis.

For $p=2^{b_p}$ with $0\le b_p\le b_q$, set $\Delta=q/p$ and
$h_{\rm rnd}=\lfloor\Delta/2\rfloor$. The public module vector is
$\mathbf b\in R_p^k$, with $R_p=O/pO$. Write
$\mathbf b^+=\operatorname{lift}_p^+(\operatorname{coeff}(\mathbf b))\in\mathbb Z^n$.
Writing $[z]_q$ for the least nonnegative residue, define its coefficients by
\begin{equation}
 b_j^+=\left\lfloor\frac{[(\mathbf M\mathbf s)_j+h_{\rm rnd}]_q}{\Delta}\right\rfloor,
 \qquad 0\le b_j^+<p.
 \label{ud:rnd:map}
\end{equation}
For $\Delta\ge2$, this is the public rounding map in the SABER
specification~\cite[Section~2.3 and Algorithm~1]{SaberSpec}.
The case $\Delta=1$ is the unrounded extension.

\begin{lemma}
\label{ud:rnd:lift}
The computable integer lift $\mathbf y=\Delta\mathbf b^+$ satisfies
\begin{equation}
 \mathbf y=\mathbf M\mathbf s+\mathbf e_{\rm rnd}\pmod q,\qquad
 e_{{\rm rnd},j}=h_{\rm rnd}-[(\mathbf M\mathbf s)_j+h_{\rm rnd}]_\Delta.
 \label{ud:rnd:residue}
\end{equation}
For $\Delta\ge2$,
\begin{equation}
 e_{{\rm rnd},j}\in\{1-\Delta/2,\ldots,\Delta/2\},\qquad
 \|\mathbf e_{\rm rnd}\|^2\le n\Delta^2/4.
 \label{ud:rnd:energy-rounding}
\end{equation}
For $\Delta=1$, the residue is zero. More generally, an offset
$0\le h_{\rm rnd}<\Delta$ gives
$h_{\rm rnd}-\Delta+1\le e_{{\rm rnd},j}\le h_{\rm rnd}$.
\end{lemma}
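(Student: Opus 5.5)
The plan is a direct coordinatewise computation with the floor-division identity; no earlier result is needed beyond the definition~\eqref{ud:rnd:map}. Fix a coordinate $j$ and write $x_j=[(\mathbf M\mathbf s)_j+h_{\rm rnd}]_q\in\{0,\ldots,q-1\}$. Since $\Delta\mid q$, the residue of $x_j$ modulo $\Delta$ equals $[(\mathbf M\mathbf s)_j+h_{\rm rnd}]_\Delta$: reduction modulo $q$ changes the integer by a multiple of $q$, hence of $\Delta$. Euclidean division gives
\[
 x_j=\Delta\lfloor x_j/\Delta\rfloor+[x_j]_\Delta
    =\Delta b_j^++[(\mathbf M\mathbf s)_j+h_{\rm rnd}]_\Delta .
\]
Also $0\le b_j^+<p$ because $x_j<q=p\Delta$, so $b_j^+$ is its own nonnegative lift and $y_j=\Delta b_j^+$ is computable from $\mathbf b$.

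Next I reduce modulo $q$. Since $x_j\equiv(\mathbf M\mathbf s)_j+h_{\rm rnd}\pmod q$,
\[
 y_j\equiv(\mathbf M\mathbf s)_j+h_{\rm rnd}-[(\mathbf M\mathbf s)_j+h_{\rm rnd}]_\Delta\pmod q,
\]
which is exactly~\eqref{ud:rnd:residue}. This identity uses only $0\le[\cdot]_\Delta<\Delta$, so it holds for every offset $h_{\rm rnd}$.

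For the range, the residue $[\cdot]_\Delta$ lies in $\{0,\ldots,\Delta-1\}$, so $e_{{\rm rnd},j}\in\{h_{\rm rnd}-\Delta+1,\ldots,h_{\rm rnd}\}$. This is the general-offset statement. For $\Delta=1$ we have $h_{\rm rnd}=0$ and the residue is identically zero. For $\Delta\ge2$, $\Delta$ is a power of two and hence even, so $h_{\rm rnd}=\Delta/2$, and the range becomes $\{1-\Delta/2,\ldots,\Delta/2\}$. Every element of this set has absolute value at most $\Delta/2$, so $e_{{\rm rnd},j}^2\le\Delta^2/4$. Summing over the $n$ coordinates gives $\|\mathbf e_{\rm rnd}\|^2\le n\Delta^2/4$.

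There is no real obstacle here. The one point to state explicitly is the compatibility of reduction modulo $q$ with reduction modulo $\Delta$, which uses $\Delta\mid q$ and keeps the error independent of the chosen representative of $(\mathbf M\mathbf s)_j$.
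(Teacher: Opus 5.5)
Your proposal is correct and follows the paper's own proof: Euclidean division of $[(\mathbf M\mathbf s)_j+h_{\rm rnd}]_q$ by $\Delta$, the observation that $\Delta\mid q$ identifies the remainder with $[(\mathbf M\mathbf s)_j+h_{\rm rnd}]_\Delta$, and rearranging to obtain the identity and the range. Your explicit remarks that $0\le b_j^+<p$ and that $\Delta\ge2$ is even, so $h_{\rm rnd}=\Delta/2$, fill in details the paper leaves implicit.
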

\begin{proof}
Euclidean division gives
$[(\mathbf M\mathbf s)_j+h_{\rm rnd}]_q=\Delta b_j^++r_j$ with
$0\le r_j<\Delta$. Since $\Delta\mid q$, its remainder is
$r_j=[(\mathbf M\mathbf s)_j+h_{\rm rnd}]_\Delta$.
Rearranging proves the identity and bounds.
\end{proof}

Thus, if $|s_j|\le B$ and $\|\mathbf s\|^2\le nV_s$, the common recovery
construction uses
\begin{equation}
 V=\max\{1,V_s+r_{\rm rnd}^2\},\qquad R^2=nV+3B^2,
 \qquad r_{\rm rnd}=\lfloor\Delta/2\rfloor.
 \label{ud:rnd:energy}
\end{equation}
The residue bound holds for every matrix and secret. In particular,
the specification's $\operatorname{CBD}_\eta$ secret has $B=\eta$
(its parameter $\mu=2\eta$), so $V_s=\eta^2$ is a deterministic
choice. Independent coefficients also admit the energy bounds of
Corollary~\ref{up:dist:gaussian-cbd}.

The resulting expected recovery time and space are
$2^{(1/2+o(1))n/\ln\ln n}$ when $q=n^{\kappa+o(1)}$,
$\kappa>1/2$, and both the secret bound and rounding step are
polylogarithmic. The remaining task is to prove prefix smoothing
for this public matrix distribution.

\subsection{Ideals and coefficient prefixes}
\label{ud:alg:section}

The element $1-x$ generates the unique maximal ideal of $R_q$.
We first record its ideal powers and the dimensions seen by a partial
polynomial block. The underlying ring structure is
classical~\cite{Dinh05,LM17}. The prefix counts will determine the
Gaussian estimate.

\begin{lemma}
\label{ud:alg:chain}
There is a unit $u\in O^\times$ with $2=u(1-x)^d$. The ideals of
$R_q$ are exactly
\[
 I_j=(1-x)^jR_q\quad(0\le j\le b_q d),\qquad
 |I_j|=2^{b_q d-j},\qquad I_{b_q d}=0.
\]
Every nonzero element has a unique valuation $j<b_q d$, meaning that
it is $(1-x)^j$ times a unit. For $j=ad+r$ with $0\le r<d$,
\begin{equation}
 (1-x)^jO=2^a(1-x)^rO,\qquad I_j=2^a(1-x)^rR_q.
 \label{ud:alg:valuation}
\end{equation}
\end{lemma}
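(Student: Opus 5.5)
The proof splits into four steps: an integral identity for $2$, the local structure of $R_q$, the classification of its ideals, and the explicit form~\eqref{ud:alg:valuation}.

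\textbf{Step 1: the unit identity.} Put $\pi=1-x$. In $O$ we have $x=1-\pi$ and $x^d=-1$, so
\[
 0=x^d+1=(1-\pi)^d+1=2+\sum_{i=1}^{d}\binom di(-\pi)^i .
\]
Since $d=2^\nu$, Kummer's theorem gives $v_2\bigl(\binom di\bigr)=\nu-v_2(i)\ge1$ for $0<i<d$, and the top coefficient is $(-1)^d=1$. Hence $2=-\pi^d-\sum_{0<i<d}\binom di(-\pi)^i$. I will write each middle term as $2c_i\pi^i$ with $c_i\in\mathbb Z$ and solve: $2\bigl(1+\sum_i c_i\pi^i\bigr)=-\pi^d$.

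To conclude $2=u\pi^d$ with $u\in O^\times$, I need $w:=1+\sum_i c_i\pi^i$ to be a unit in $O$ itself, not merely in a completion. Here I use cyclotomic theory: $O=\mathbb Z[\zeta_{2d}]$, and $\pi=1-\zeta$ satisfies $N(1-\zeta)=\Phi_{2d}(1)=2$. So $(2)=(\pi)^{\varphi(2d)}=(\pi)^d$ as ideals, since $2$ is totally ramified. Therefore $2/\pi^d$ is a unit, which gives $u$ directly; the displayed computation serves only as motivation.

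\textbf{Step 2: $R_q$ is a local ring with maximal ideal $(\pi)$.} Modulo $2$, $x^d+1=(x+1)^d$, so $R_q/\pi R_q\cong\mathbb F_2$. By Step 1, $\pi^{b_qd}=u^{-b_q}2^{b_q}$ is $0$ in $R_q$, so $\pi$ is nilpotent. An element outside $(\pi)$ maps to $1\in\mathbb F_2$, so it has the form $1+\pi t$, which is a unit because $\pi t$ is nilpotent. Thus $(\pi)$ is the unique maximal ideal. For any nonzero $z$, let $j$ be the largest index with $z\in\pi^jR_q$; such $j$ exists because $\pi^{b_qd}=0$. Then $z=\pi^j w$, and $w\notin(\pi)$ by maximality of $j$, so $w$ is a unit. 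For uniqueness, if $\pi^j w=\pi^{j'}w'$ with $j<j'$, then $\pi^j(w-\pi^{j'-j}w')=0$, and the factor in parentheses is a unit, forcing $\pi^j=0$. That contradicts the size count in Step 3.

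\textbf{Step 3: the ideal chain and sizes.} Every nonzero ideal $I$ equals $\pi^jR_q$, where $j$ is the minimal valuation of its elements, because each element is $\pi^{j'}\cdot$unit with $j'\ge j$. For sizes, $I_j/I_{j+1}$ is a module over $R_q/\pi\cong\mathbb F_2$ generated by one element, so it has size at most $2$. Together with $|R_q|=2^{b_qd}$ and $I_{b_qd}=0$, the $b_qd$ quotients in the chain must each have size exactly $2$. This yields $|I_j|=2^{b_qd-j}$, shows the $I_j$ are distinct, and closes the uniqueness gap left in Step 2 (it shows $\pi^j\ne0$ for $j<b_qd$).

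\textbf{Step 4: the formula~\eqref{ud:alg:valuation}.} For $j=ad+r$, Step 1 gives $\pi^j=\pi^{ad}\pi^r=u^{-a}2^a\pi^r$. Since $u$ is a unit, $\pi^jO=2^a\pi^rO$, and reducing modulo $q$ gives $I_j=2^a(1-x)^rR_q$.

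\textbf{Main obstacle.} The only delicate point is that $u$ must be a unit of $O$ and not merely of $O\otimes\mathbb Z_2$. I handle it through the norm $N(1-\zeta_{2d})=2$ and total ramification of $2$ in $\mathbb Z[\zeta_{2d}]$. An alternative is to check $w\equiv1\pmod\pi$ and invoke the fact that $\pi$ divides all primes over $2$; either route works.
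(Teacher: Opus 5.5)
Your proof is correct and follows essentially the same route as the paper: the unit $u$ comes from $\Phi_{2d}(1)=2$, and the locality of $R_q$ with nilpotent $1-x$ gives the valuation and the ideal chain. For the unit, the paper checks directly that each $(1-\zeta^b)/(1-\zeta)$ is a unit instead of citing total ramification.

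The only real variation is the counting step. You bound each $I_j/I_{j+1}$ by $\mathbb F_2$ and compare with $|R_q|=2^{b_qd}$, whereas the paper computes $[O:(1-x)^jO]=2^j$ from the determinant of multiplication by $1-x$. Also, uniqueness of the valuation does not need Step 3: $z=\pi^jw\ne0$ already rules out $\pi^j=0$.
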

\begin{proof}
If $\zeta$ is the image of $x$ in the cyclotomic field, then
\[
 2=\Phi_{2d}(1)=\prod_{\substack{1\le b<2d\\b\text{ odd}}}(1-\zeta^b).
\]
For odd $b$, $(1-\zeta^b)/(1-\zeta)$ is an integral geometric sum
of norm one, hence a unit. This proves $2=u(1-x)^d$.
In $R_q$, $1-x$ is nilpotent and $R_q/(1-x)=\mathbb F_2$.
Every element outside $(1-x)R_q$ has the form $1+(1-x)b$ and is
invertible by a finite geometric series. Repeated division by $1-x$
therefore writes every nonzero element as a power of $1-x$ times
a unit. An element of least valuation generates any nonzero ideal.
Multiplication by $1-x$ on $O$ has determinant of absolute value
two, so $[O:(1-x)^jO]=2^j$. Together with $qO=(1-x)^{b_q d}O$,
this proves the cardinalities and uniqueness of the valuation.
\end{proof}

The quotient and remainder $a,r$ in~\eqref{ud:alg:valuation} give a
convenient description of this single chain. Define the binary code
\[
 C_r=(1+x)^r\mathbb F_2[x]/((1+x)^d)\subseteq\mathbb F_2^d.
\]
We identify ideals and codes with their coefficient embeddings when
viewing them as lattices or subsets of coefficient spaces. For
$0\le r\le d$, the code has dimension $d-r$ and
\begin{equation}
 (1-x)^rO=\{\mathbf z\in\mathbb Z^d:\mathbf z\bmod2\in C_r\}.
 \label{ud:alg:binary-preimage}
\end{equation}
Indeed, both lattices contain $2O$, have the same binary image, and
have index $2^r$ in $O$.

\begin{lemma}
\label{ud:alg:prefix}
Let $\operatorname{pr}_t$ select the first $t$ coefficients, and identify
$\mathbb F_2^t$ with vectors whose last $d-t$ coordinates are zero.
For $0\le r<d$ and $0\le t\le d$,
\begin{equation}
 \dim\operatorname{pr}_t C_r=\min(t,d-r),\qquad
 \dim(C_r\cap\mathbb F_2^t)=\max(0,t-r).
 \label{ud:alg:binary-prefix}
\end{equation}
Consequently, for $j=ad+r<b_q d$,
\begin{equation}
 \begin{split}
 |\operatorname{pr}_t I_j|&=2^{(b_q-a)t-\max(0,t-d+r)},\\
 |I_j\cap(\mathbb Z/q\mathbb Z)^t|&=2^{(b_q-a)t-\min(t,r)}.
 \end{split}
 \label{ud:alg:ideal-prefix}
\end{equation}
\end{lemma}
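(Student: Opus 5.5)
The plan is to work first over $\mathbb F_2$ with the polynomial description of $C_r$, and then lift to $R_q$ using the valuation description~\eqref{ud:alg:valuation} together with the binary preimage~\eqref{ud:alg:binary-preimage}.

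\emph{Binary prefix dimensions.} Write $C_r$ as the set of polynomials $(1+x)^r a(x)$ with $\deg a<d-r$. This is a bijection, since $(1+x)^d=0$ and $\dim C_r=d-r$.
\begin{itemize}
\item For the intersection: a codeword lies in $\mathbb F_2^t$ exactly when its degree is less than $t$. That happens exactly when $\deg a<t-r$. So the intersection is spanned by $(1+x)^{r+i}$ for $0\le i<t-r$, which gives dimension $\max(0,t-r)$.
\item For the projection: take the basis $(1+x)^{r+i}$, $0\le i<d-r$. It is triangular with leading monomials $x^{r+i}$, but it is more convenient to use a basis with distinct \emph{lowest} terms. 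Every element $(1+x)^rb$ with $b(0)=1$ has constant term $1$, and since $C_r$ is an ideal in $\mathbb F_2[x]/((1+x)^d)$, multiplying by $x^i$ keeps us inside $C_r$.
\item Concretely, $x^i(1+x)^r$ for $0\le i<d-r$ lies in $C_r$, is nonzero, and has lowest-degree term $x^i$. These $d-r$ vectors have distinct lowest terms, so they are independent and form a basis.
\item Their projections to the first $t$ coordinates have lowest terms $x^i$ for $i<\min(t,d-r)$. This gives rank at least $\min(t,d-r)$. The reverse inequality is trivial, so $\dim\operatorname{pr}_tC_r=\min(t,d-r)$.
\end{itemize}

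\emph{Lifting to $R_q$.} By~\eqref{ud:alg:valuation}, $I_j=2^a J_r$, where $J_r=(1-x)^rR_q$ is the image of $(1-x)^rO$ modulo $q$. By~\eqref{ud:alg:binary-preimage}, $J_r$ is the set of residues mod $q$ whose reduction mod $2$ lies in $C_r$. Projection to $t$ coordinates commutes with reduction, so $\operatorname{pr}_tJ_r$ is the set of vectors in $(\mathbb Z/q)^t$ whose binary reduction lies in $\operatorname{pr}_tC_r$. Hence
\[
|\operatorname{pr}_tJ_r|=(q/2)^t\,2^{\min(t,d-r)}=2^{b_qt-\max(0,t-d+r)}.
\]
Similarly, $J_r\cap(\mathbb Z/q)^t$ is the preimage of $C_r\cap\mathbb F_2^t$, so it has size $2^{(b_q-1)t+\max(0,t-r)}=2^{b_qt-\min(t,r)}$.

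\emph{Multiplying by $2^a$.} Multiplication by $2^a$ maps $(\mathbb Z/q)^t$ onto $2^a(\mathbb Z/q)^t$. Its kernel is $(q/2^a)(\mathbb Z/q)^t$, and this kernel lies inside both $\operatorname{pr}_tJ_r$ and $J_r\cap(\mathbb Z/q)^t$: it consists of even vectors, whose reduction is $0$, which is in the code, and $a\ge1$ is needed here. The case $a=0$ is immediate. Each set therefore shrinks by a factor of exactly $2^{at}$. Using $\operatorname{pr}_t(2^aJ_r)=2^a\operatorname{pr}_tJ_r$ and $(2^aJ_r)\cap(\mathbb Z/q)^t=2^a(J_r\cap(\mathbb Z/q)^t)$, we obtain~\eqref{ud:alg:ideal-prefix}.

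The main obstacle is the second equality. The inclusion $(2^aJ_r)\cap(\mathbb Z/q)^t\supseteq 2^a(J_r\cap(\mathbb Z/q)^t)$ is clear. For the reverse, suppose $2^a z\in(\mathbb Z/q)^t$ with $z\in J_r$. Then the tail coordinates of $z$ are divisible by $q/2^a$. Subtracting the kernel element that carries those tails, which is again inside $J_r$, replaces $z$ by an element of $J_r\cap(\mathbb Z/q)^t$ with the same image.
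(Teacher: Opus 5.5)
Your proof is correct and follows essentially the paper's route: the same basis $x^i(1+x)^r$ with distinct lowest terms for the projection rank, the same degree count for the intersection, and the lift through~\eqref{ud:alg:binary-preimage}. Your kernel-of-$2^a$ bookkeeping, including the check that $(2^aJ_r)\cap(\mathbb Z/q\mathbb Z)^t=2^a(J_r\cap(\mathbb Z/q\mathbb Z)^t)$, is a more explicit form of the paper's count (lowest $a$ bits zero, next bit in $C_r$, higher bits free). One small correction: the kernel $2^{b_q-a}(\mathbb Z/q\mathbb Z)^t$ consists of even vectors because $a<b_q$, which follows from $j<b_qd$, not because $a\ge1$.
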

\begin{proof}
The polynomials $(1+x)^rx^b$, $0\le b<d-r$, form a basis of $C_r$.
For $t\le d-r$, their first $t$ coefficient rows contain a triangular
$t\times t$ submatrix with diagonal entries one. This gives the
projection rank, which stays $d-r$ for larger $t$. A polynomial of
degree less than $t$ belongs to $C_r$ exactly when it is
$(1+x)^rh$ with $\deg h<t-r$, giving the intersection rank.
By~\eqref{ud:alg:binary-preimage}, a vector of $I_j$ has coefficients
divisible by $2^a$, its next bit belongs to $C_r$, and its remaining
bits are free. Each allowed pattern on $t$ coordinates has
$2^{(b_q-a-1)t}$ higher-bit lifts, proving the cardinalities.
\end{proof}

Let $V_m\subseteq R_q^k$ consist of vectors supported on the first
$m$ coefficients. Write $m=hd+t$, $0\le t<d$, with $h=k,t=0$ when
$m=n$. For $j=ad+r<b_q d$, set
\begin{equation}
 \begin{split}
 v_j(m)&=h(d-r)+\max(0,t-r),\\
 c_j(m)&=m-v_j(m)=hr+\min(t,r),\\
 H_j(m)&=(b_q-a)m-c_j(m).
 \end{split}
 \label{ud:alg:prefix-dimensions}
\end{equation}
Here $v_j(m)$ counts the free bits in the lowest allowed binary
layer of the prefix, $c_j(m)$ counts its independent binary
constraints, and $H_j(m)$ counts all free bits, including the
unrestricted higher layers. The preceding lemma gives
\begin{equation}
 \begin{gathered}
 |I_j^k\cap V_m|=2^{H_j(m)},\\
 v_j(m)\le \frac{m(d-r)}d\le k(d-r),\qquad
 H_j(m)\le m(b_q-j/d).
 \end{gathered}
 \label{ud:alg:module-prefix}
\end{equation}
The two inequalities follow from
$\max(0,t-r)\le t(d-r)/d$ and $\min(t,r)\ge tr/d$.
These counts include prefixes ending inside a polynomial.

\begin{lemma}
\label{ud:alg:image}
If $\mathbf z_R=(z_1,\ldots,z_k)\in R_q^k$ has minimum component
valuation $j$, then the $k$ components of $\mathbf A^T\mathbf z_R$
are independent and uniform in $I_j$.
\end{lemma}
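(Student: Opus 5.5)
The plan follows the Kyber argument: show that the linear map $\phi:R_q^k\to R_q$, $\mathbf a\mapsto\sum_t a_tz_t$, has image exactly $I_j$ with equal-size fibres. Then use that the rows of $\mathbf A^T$ are independent uniform elements of $R_q^k$.

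First, I will describe the component structure. Component $i$ of $\mathbf A^T\mathbf z_R$ is $\sum_t A_{t,i}z_t$, where the $i$th column of $\mathbf A$ is $(A_{1,i},\ldots,A_{k,i})$. The $k$ columns of $\mathbf A$ are independent and uniform in $R_q^k$, since all $k^2$ entries are independent and uniform in $R_q$. Hence the components are the images of $k$ independent uniform vectors under the same fixed map $\phi$. Independence of the components is then immediate, and it remains to show that $\phi(\mathbf a)$ is uniform on $I_j$ for uniform $\mathbf a$.

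Second, I will identify the image. By Lemma~\ref{ud:alg:chain}, each nonzero $z_t$ equals $(1-x)^{j_t}u_t$ with $u_t$ a unit, and by hypothesis $j=\min_t j_t$. Every $z_t$ lies in $I_{j_t}\subseteq I_j$, because the ideals form a decreasing chain. Since $I_j$ is an ideal, $\phi(R_q^k)\subseteq I_j$. Conversely, pick $t_0$ with $j_{t_0}=j$. Then $a\mapsto az_{t_0}=a(1-x)^ju_{t_0}$ already maps $R_q$ onto $(1-x)^jR_q=I_j$. So the image is exactly $I_j$.

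Third, I will get uniformity. The map $\phi$ is an additive group homomorphism $R_q^k\to I_j$ and is surjective, so every fibre is a coset of $\ker\phi$ and has size $|R_q|^k/|I_j|$. The pushforward of the uniform law is therefore uniform on $I_j$. Applying this to each column and using their independence proves the lemma.

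I do not expect a real obstacle. The only points needing care are the transpose bookkeeping, that is, that the $i$th output component uses the $i$th column of $\mathbf A$ rather than a row, and the use of the unique valuation and chain structure from Lemma~\ref{ud:alg:chain} to guarantee that the image ideal is $I_j$ and not something smaller.
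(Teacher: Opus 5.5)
Your proposal is correct and matches the paper's proof: the map $\mathbf a\mapsto\sum_t a_tz_t$ has image $I_j$ with equal-size fibres, and it is applied to the independent uniform columns of $\mathbf A$. Your use of a component $z_{t_0}=(1-x)^ju_{t_0}$, with $u_{t_0}$ a unit, to show surjectivity onto $I_j$ spells out a step the paper states in one line.
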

\begin{proof}
The map $\mathbf a=(a_1,\ldots,a_k)\mapsto\sum_ba_bz_b$ has image equal to the ideal generated
by the $z_b$, namely $I_j$, and has equally sized fibres. Apply this
to the independent columns of $\mathbf A$.
\end{proof}

\subsection{Gaussian mass of the prescribed prefixes}
\label{ud:geo:section}

Let $\mathbf M_m$ consist of the first $m$ columns of $\mathbf M$.
The prefix lattice and its dual are
\begin{equation}
 \begin{split}
 \Lambda_m&=\{(\mathbf u,\mathbf v)\in\mathbb Z^n\times\mathbb Z^m:
                    \mathbf M_m^T\mathbf u+\mathbf v=0\pmod q\},\\
 \Lambda_m^*&=\mathbb Z^{n+m}
       +q^{-1}\begin{pmatrix}\mathbf M_m\\\mathbf I_m\end{pmatrix}\mathbb Z^m.
 \end{split}
 \label{ud:geo:lattices}
\end{equation}
For a coefficient prefix $\mathbf z\in(\mathbb Z/q\mathbb Z)^m$, let
$\widetilde{\mathbf z}\in V_m$ be the module vector obtained by padding
with zeros and grouping the coefficients into polynomial blocks.
The systematic coordinates distinguish the dual cosets, and
$\mathbf M_m\mathbf z=\operatorname{coeff}(\mathbf A^T\widetilde{\mathbf z})$.
Thus the preceding ideal image lemma applies directly. Write
$\operatorname{coeff}_m$ for the first $m$ coordinates of the coefficient
embedding, so $\operatorname{coeff}_m(\widetilde{\mathbf z})=\mathbf z$.

Write $\theta(s)=\rho_s(\mathbb Z)$. We separate the Gaussian mass
of the prefix vectors from the average mass of one output polynomial:
\[
 \begin{aligned}
 G_{j,m}(T)&=\sum_{\mathbf z_R\in I_j^k\cap V_m}
                  \rho_{1/T}(\mathbb Z^m+\operatorname{coeff}_m(\mathbf z_R)/q),\\
 A_j(T)&=|I_j|^{-1}\rho_{q/T}((1-x)^jO)\quad(j<b_q d).
 \end{aligned}
\]

\begin{lemma}
\label{ud:alg:theta}
For $T>0$,
\begin{equation}
 \mathbb E_{\mathbf A}\!\left[\rho_{1/T}(\Lambda_m^*\setminus\mathbb Z^{n+m})\right]
 =\sum_{j=0}^{b_q d-1}A_j(T)^k
          \bigl(G_{j,m}(T)-G_{j+1,m}(T)\bigr).
 \label{ud:alg:theta-identity}
\end{equation}
Here $G_{b_q d,m}(T)=\theta(1/T)^m$. For $j=ad+r<b_q d$ and
$Q=2^{b_q-a}$,
\begin{equation}
 \begin{split}
 A_j(T)^k&\le2^{kr}Q^{-n}
       \theta(Q/T)^{n-kr}\theta(Q/(2T))^{kr},\\
 G_{j,m}(T)&\le\theta(Q/T)^{v_j(m)}
                   \theta(Q/(2T))^{c_j(m)}.
 \end{split}
 \label{ud:alg:theta-bounds}
\end{equation}
\end{lemma}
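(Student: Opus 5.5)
We follow the pattern of Lemma~\ref{up:theta} and Lemma~\ref{ug:uniform-mass}: split the dual by its systematic label, group labels by valuation, and average each group's output factor using the uniform-image property.

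\textbf{Identity.} By~\eqref{ud:geo:lattices}, the nonintegral part of $\Lambda_m^*$ is the disjoint union over nonzero $\mathbf z\in(\mathbb Z/q\mathbb Z)^m$ of the cosets $\mathbb Z^{n+m}+q^{-1}(\mathbf M_m\mathbf z,\mathbf z)$. Each coset's mass factors as
\[
 \rho_{1/T}(\mathbb Z^n+\mathbf M_m\mathbf z/q)\,\rho_{1/T}(\mathbb Z^m+\mathbf z/q).
\]
Group the labels by the minimum component valuation $j$ of $\widetilde{\mathbf z}$. The labels with valuation exactly $j$ form $(I_j^k\cap V_m)\setminus(I_{j+1}^k\cap V_m)$. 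Summing the systematic factor over this set therefore gives $G_{j,m}-G_{j+1,m}$. At $j+1=b_qd$ only the zero label remains, which contributes $\theta(1/T)^m$ and is excluded.

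By Lemma~\ref{ud:alg:image}, $\mathbf A^T\widetilde{\mathbf z}$ has $k$ independent components, each uniform in $I_j$. Hence the expected output factor is $\bigl(|I_j|^{-1}\sum_{a\in I_j}\rho_{1/T}(\mathbb Z^d+\operatorname{coeff}(a)/q)\bigr)^k$. Lifting the residues $a\in I_j$ to their preimages in $(1-x)^jO\supseteq qO$, the cosets $\mathbb Z^d+\operatorname{coeff}(a)/q$ exactly partition $q^{-1}(1-x)^jO$. Scaling by $q$ turns the inner sum into $\rho_{q/T}((1-x)^jO)$, so the factor is $A_j(T)^k$. Linearity of expectation then gives~\eqref{ud:alg:theta-identity}.

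\textbf{Bound on $A_j$.} By~\eqref{ud:alg:valuation} and~\eqref{ud:alg:binary-preimage},
\[
 (1-x)^jO=2^a\{\mathbf w\in\mathbb Z^d:\mathbf w\bmod2\in C_r\}.
\]
This lattice is contained in $2^a\mathbb Z^d$, so $\rho_{q/T}$ of it is at most $\theta(Q/T)^d$ in the scaled variables, because $q/(T2^a)=Q/T$. For the sharper bound, use the triangular basis of $C_r$ from the proof of Lemma~\ref{ud:alg:prefix}. Choose $d-r$ information coordinates where the binary pattern is free; the remaining $r$ coordinates are then determined mod $2$ by the others. Write the sum as a sum over the free coordinates, each in $\mathbb Z$, of a product of shifted sums $\rho_{Q/T}(2\mathbb Z+c)$ over the determined coordinates. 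Each shifted sum is at most $\rho_{Q/T}(2\mathbb Z)=\theta(Q/(2T))$.

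This gives $\rho\le\theta(Q/T)^{d-r}\theta(Q/(2T))^r$. Dividing by $|I_j|=2^{b_qd-j}=Q^d2^{-r}$ and raising to the $k$th power yields the first line of~\eqref{ud:alg:theta-bounds}.

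\textbf{Bound on $G$.} By Lemma~\ref{ud:alg:prefix}, the prefix set $I_j^k\cap V_m$ consists of vectors divisible by $2^a$ whose lowest layer lies in a binary space with $v_j(m)$ free coordinates and $c_j(m)$ constrained ones, with all higher bits free. Its cosets $\mathbb Z^m+\mathbf z/q$ therefore tile $q^{-1}\cdot 2^a\{\mathbf w\in\mathbb Z^m:\mathbf w\bmod 2\in C'\}$, where $C'$ has dimension $v_j(m)$. We argue with information coordinates exactly as for $A_j$: free coordinates contribute $\theta(Q/T)$ and constrained ones contribute at most $\theta(Q/(2T))$. This gives the second line.

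The main obstacle is the bound on $G$ when the prefix ends partway through a polynomial block. There one must check that the constraint structure really admits $v_j(m)$ information coordinates with the rest determined mod $2$, which comes from the triangular form in Lemma~\ref{ud:alg:prefix}. One must also verify that the tiling is exact, using that $2^{b_q}$ divides everything.
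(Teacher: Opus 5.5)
Your proposal is correct and follows the paper's proof essentially step for step. For the identity, you use the same grouping by exact minimum valuation, the uniform-image lemma, and the exact partition of $q^{-1}(1-x)^jO$ into cosets. For the bounds, you use the same information-set argument: free coordinates contribute $\theta(Q/T)$, and each determined coordinate is bounded by the unshifted $\theta(Q/(2T))$. This is applied to $C_r$ and to the prefix code $C_r^{\,h}\times(C_r\cap\mathbb F_2^t)$ of dimension $v_j(m)$. The partial-block issue you flag as the main obstacle is routine: every binary linear code has an information set of size equal to its dimension, and the tiling is exact because $(1-x)^jO\supseteq qO$.
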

\begin{proof}
For prefix vectors of exact valuation $j$, the expected Gaussian mass of
each output polynomial is
\[
 \begin{gathered}
 |I_j|^{-1}\sum_{y\in I_j}\rho_{1/T}(\mathbb Z^d+\operatorname{coeff}(y)/q)\\
 =|I_j|^{-1}\rho_{q/T}((1-x)^jO)=A_j(T).
 \end{gathered}
\]
The $k$ output polynomials are independent. The systematic mass on
these vectors is $G_{j,m}-G_{j+1,m}$. Summing proves the identity.

For the bounds, write a coefficient vector in $(1-x)^jO$ as
$2^a\mathbf w$ with $\mathbf w\bmod2\in C_r$.
Its even and odd coordinate masses are
\[
 \alpha=\theta(Q/(2T)),\qquad
 \gamma=\theta(Q/T)-\alpha\le\alpha,
\]
where the inequality is the maximum property of a shifted Gaussian
theta series. Choose $d-r$ coordinates on which the binary code
projects bijectively. Summing over them contributes
$(\alpha+\gamma)^{d-r}$. Each remaining coordinate contributes at
most $\alpha$. Hence
\[
 \rho_{q/T}((1-x)^jO)
 \le\theta(Q/T)^{d-r}\theta(Q/(2T))^r.
\]
Divide by $|I_j|=Q^d/2^r$ and raise to the $k$th power for the first
bound. Apply the same argument to the binary prefix code
$C_r^{\,h}\times(C_r\cap\mathbb F_2^t)$, of dimension $v_j(m)$,
for the second. All these masses use the original coefficient norm.
\end{proof}

For a prefix vector of exact valuation $j=ad+r$, its additive order is
$Q=2^{b_q-a}$: after division by $2^a$, at least one coefficient
is odd. Denote the corresponding summand in
\eqref{ud:alg:theta-identity} by $D_j(m,T)$. Dropping the subtraction
and using~\eqref{ud:alg:theta-bounds} gives
\begin{equation}
 D_j(m,T)\le2^{kr}Q^{-n}
   \theta(Q/T)^{n+m-kr-c_j(m)}
   \theta(Q/(2T))^{kr+c_j(m)}.
 \label{ud:geo:stratum-bound}
\end{equation}
We next sum over the ideal chain, separating the layers into three
additive order ranges. The order two range is exceptional. A nonzero
systematic label of order two has a coordinate equal to $q/2$, so its
normalized prefix is at distance at least $1/2$ from $\mathbb Z^m$.
This distance gives Gaussian decay directly, while the intermediate
and large orders are controlled by~\eqref{ud:geo:stratum-bound}.

\begin{theorem}[Smoothing for dyadic module prefixes]
\label{ud:geo:smoothing}
Let $k$ be fixed, $n=kd\to\infty$ with $d$ a power of two, and
$q=2^{b_q}$. Choose $L_0$ with
\begin{equation}
 L_0\to\infty,\qquad L_0=o(n),\qquad \ln(nb_q)=o(L_0).
 \label{ud:geo:L-hypotheses}
\end{equation}
Before drawing $\mathbf A$, prescribe pairs $(m_i,T_i)_{i=1}^r$ satisfying
\begin{equation}
 1\le m_i\le n,\qquad 8\sqrt{L_0}\le T_i\le q,\qquad
 \max_i\frac{m_i\ln(q/T_i)}{n\ln T_i}=o(1).
 \label{ud:geo:width-hypotheses}
\end{equation}
For all sufficiently large $n$, uniformly in these pairs,
\begin{equation}
 \mathbb E_{\mathbf A}\!\left[\rho_{1/T_i}
   (\Lambda_{m_i}^*\setminus\mathbb Z^{n+m_i})\right]\le e^{-3L_0}.
 \label{ud:geo:expected-mass}
\end{equation}
Consequently, with $\varepsilon_0=4ne^{-L_0}$,
\begin{equation}
 \Pr_{\mathbf A}[\exists i:\eta_{\varepsilon_0}(\Lambda_{m_i})>T_i]
 \le\frac r{2n}e^{-2L_0}.
 \label{ud:geo:failure}
\end{equation}
\end{theorem}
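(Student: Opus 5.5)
The plan is to start from the exact identity \eqref{ud:alg:theta-identity} of Lemma~\ref{ud:alg:theta}, split the chain $j=ad+r$ according to the additive order $Q=2^{b_q-a}$, and show that the total contribution is at most $e^{-3L_0}$ uniformly in the prescribed pairs. The failure bound \eqref{ud:geo:failure} then follows directly from Lemma~\ref{up:mass-to-smoothing}, since $a_i=n+m_i\le2n$, $T_i\ge8\sqrt{L_0}$, and $\ln n=o(L_0)$ by \eqref{ud:geo:L-hypotheses}. Fix one pair $(m,T)$ and use $\theta(u)\le1+3e^{-\pi/u^2}$ for $u\le1$, together with $\theta(u)=u\theta(1/u)\le u\,e^{1/8}$ for $u\ge1$. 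There are three ranges.

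\emph{Large orders, $Q>T$.} Apply \eqref{ud:geo:stratum-bound} with every theta factor transformed, $\theta(Q/T)\le e^{1/8}Q/T$ and $\theta(Q/(2T))\le e^{1/8}\max\{1,Q/(2T)\}$. The factor $2^{kr}$ is cancelled by the $kr$ half-width factors. The bound becomes roughly $e^{(n+m)/8}\,T^{-n}\,(Q/T)^{m-c_j(m)}\,2^{-c_j(m)}$. Since $m-c_j(m)=v_j(m)\le m$ and $Q\le q$, this is at most $\exp\bigl(-n\ln T+m\ln(q/T)+O(n)\bigr)$. By \eqref{ud:geo:width-hypotheses} the middle term is $o(n\ln T)$, and $\ln T\to\infty$ absorbs the $O(n)$. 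The number of layers is $b_qd\le nb_q$, and $\ln(nb_q)=o(L_0)=o(n\ln T)$, so this range contributes $e^{-(1-o(1))n\ln T}\le e^{-4L_0}$.

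\emph{Intermediate orders, $4\le Q\le T$.} Here all arguments are at most $1$, so every theta factor is $1+o(1)$ raised to a power at most $2n$. Because $T^2\ge64L_0$ and $\ln n=o(L_0)$, this is $e^{o(1)}$. What remains is $2^{kr}Q^{-n}\le2^{kd}4^{-n}=2^{-n}$, using $r<d$ and $Q\ge4$. Summing over at most $nb_q$ layers gives $e^{-(\ln2-o(1))n}\le e^{-4L_0}$.

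\emph{Order two, $a=b_q-1$.} This is the exceptional case. The bound \eqref{ud:geo:stratum-bound} would give $2^{kr}2^{-n}$, which is not small when $r$ is close to $d$. So I would instead go back to the summand $A_j(T)^k(G_{j,m}-G_{j+1,m})$ and bound $A_j(T)^k\le\theta(1/T)^{n}$ trivially, since it is a shifted mass averaged over cosets. The systematic part is handled by distance. Every prefix of order exactly two has a coordinate equal to $q/2$, so its coset $\mathbb Z^m+\mathbf z/q$ has a coordinate in $\tfrac12+\mathbb Z$. That coordinate contributes at most $2e^{-\pi T^2/4}$, and the rest contributes $\theta(1/T)^m$ or its half-shifted analogue, both at most $e^{o(1)}$.

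The number of order-two prefixes is at most $2^{m}$ by \eqref{ud:alg:module-prefix}. Hence the total order-two contribution is at most $2^{m}\cdot2e^{-\pi T^2/4}e^{o(1)}$. This is the step I expect to be the main obstacle. With only $T^2\ge64L_0$ we get $e^{-16\pi L_0}$, but the count $2^m$ can be $2^n$, and $n\gg L_0$. I would first try to refine the counting by Hamming weight. A prefix with $w$ coordinates equal to $q/2$ costs $(2e^{-\pi T^2/4})^w$, which gives a sum of $\bigl(1+2e^{-\pi T^2/4}\bigr)^m-1\le 2me^{-\pi T^2/4}e^{o(1)}$. Since $\ln m=o(L_0)$, this is at most $e^{-10L_0}$.

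Adding the three ranges gives at most $3e^{-4L_0}\le e^{-3L_0}$ for large $n$, uniformly in $i$, which proves \eqref{ud:geo:expected-mass}. Lemma~\ref{up:mass-to-smoothing} then gives \eqref{ud:geo:failure}.
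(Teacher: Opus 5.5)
Your argument is correct in substance. For the intermediate and large orders it follows the paper: drop the subtraction and use \eqref{ud:geo:stratum-bound}. The order-two stratum is where you take a genuinely different route.

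\emph{The paper's order-two argument.} The paper keeps the output factor, $A_j(T)^k\le 2^{-k(d-\rho)}(1+o(1))$ for $Q=2$. It uses this factor to cancel the count $2^{v_j(m)}$ of exact-valuation prefixes, via $v_j(m)\le k(d-\rho)$. It then applies the per-vector distance bound $\rho_{1/T}(\mathbb Z^m+\mathbf w/2)\le e^{-\pi T^2/8}\theta(\sqrt2/T)^m$. The result is $(1+o(1))e^{-\pi T^2/8}$ per layer.

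\emph{Your order-two argument.} You bound the output factor trivially by $\theta(1/T)^n$ and telescope $\sum_j(G_{j,m}-G_{j+1,m})$ over the $d$ order-two layers. You then sum the systematic masses exactly over the binary cube:
$\sum_{\mathbf w\in\{0,1\}^m\setminus\{0\}}\rho_{1/T}(\mathbb Z^m+\mathbf w/2)=\theta(2/T)^m-\theta(1/T)^m\le\exp(3me^{-\pi T^2/4})-1$.
Since $T^2\ge64L_0$ and $\ln m=o(L_0)$, this is $e^{-(16\pi-o(1))L_0}$.

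\emph{Comparison.} Your route is valid and somewhat simpler. It needs neither the ideal-chain counting nor any randomness of $\mathbf A$ on this stratum, so the order-two dual mass is small for every public matrix. The paper's per-layer bound instead keeps all three ranges in the same format. Your Hamming-weight sum is exactly the identity above. Writing it that way also absorbs the small tail you ignored when bounding each half-integer coordinate by $2e^{-\pi T^2/4}$.

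\emph{A step to correct.} In the intermediate range the theta factors are not $1+o(1)$. Their arguments $Q/T$ and $Q/(2T)$ can be of constant size; for $Q=T$ you get $\theta(1)^{\Theta(n)}=e^{\Theta(n)}$. The hypothesis $T^2\ge64L_0$ only controls $\theta(c/T)$ for fixed $c$. Use $\theta(s)\le e^{1/8}$ for $0<s\le1$, as the paper does, to get $D_j(m,T)\le 2^{k\rho}Q^{-n}e^{(n+m)/8}\le e^{-(\ln2-1/4)n}$. This still dominates the $b_qd$ layers and is below $e^{-4L_0}$, because $L_0=o(n)$ and $\ln(nb_q)=o(L_0)$.

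\emph{Two harmless slips in the large range.} First, the cancellation of $2^{kr}$ by the half-width factors only happens when $Q\ge2T$. Second, the intermediate expression should contain $(Q/T)^m$, not $(Q/T)^{m-c_j(m)}$. Both are absorbed into your final bound $\exp(-n\ln T+m\ln(q/T)+O(n))$, so the conclusion stands.
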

\begin{proof}
Fix $(m,T)$. Write each index as $j=ad+\rho$, with $0\le\rho<d$,
and set $Q=2^{b_q-a}$.
There are at most $b_q d$ nonzero layers, which we divide into three ranges.

\emph{Order two.} For $Q=2$, put $u=d-\rho$ and $v=v_j(m)$.
The output factor in~\eqref{ud:alg:theta-bounds} is
\[
 2^{-ku}\theta(2/T)^{ku}\theta(1/T)^{k\rho}=2^{-ku}(1+o(1)).
\]
A nonzero coefficient prefix has the form $\mathbf z=(q/2)\mathbf w$
with a nonzero binary vector $\mathbf w$.
Every point of $\mathbb Z^m+\mathbf w/2$ has norm at least $1/2$.
Splitting its Gaussian exponent into equal halves yields
\begin{equation}
 \rho_{1/T}(\mathbb Z^m+\mathbf w/2)
 \le e^{-\pi T^2/8}\theta(\sqrt2/T)^m.
 \label{ud:geo:order-two-norm}
\end{equation}
The containing prefix code has $2^v$ vectors, and
$v\le ku$ by~\eqref{ud:alg:module-prefix}. Their count
cancels the factor $2^{-ku}$, giving
\begin{equation}
 D_j(m,T)\le(1+o(1))e^{-\pi T^2/8}
            \le(1+o(1))e^{-8\pi L_0}.
 \label{ud:geo:order-two-bound}
\end{equation}
The factors $1+o(1)$ are uniform, since
$\theta(c/T)=1+O(e^{-\pi T^2/c^2})$ for fixed $c>0$,
$T^2\ge64L_0$, and $\ln n=o(L_0)$.

\emph{Intermediate orders.} For $4\le Q\le T$, use
$\theta(s)\le e^{1/8}$ for $0<s\le1$ in
\eqref{ud:geo:stratum-bound} to obtain
\begin{equation}
 D_j(m,T)\le2^{k\rho}Q^{-n}e^{(n+m)/8}
             \le e^{-(\ln2-1/4)n}=e^{-\omega(L_0)}.
 \label{ud:geo:intermediate-bound}
\end{equation}
Here $k\rho\le n$, $m\le n$, and $L_0=o(n)$.

\emph{Large orders.} For $Q>T$, Poisson summation gives
$\theta(Q/T)\le2Q/T$, and monotonicity bounds the smaller theta
factor by the larger one. Thus
\begin{align}
 D_j(m,T)&\le2^{3n}Q^mT^{-(n+m)},\nonumber\\
 \ln D_j(m,T)&\le3n\ln2+m\ln(q/T)-n\ln T
              =-(1-o(1))n\ln T.
 \label{ud:geo:large-bound}
\end{align}
This is $e^{-\omega(L_0)}$ uniformly by the prescribed-pair conditions.

Summing over at most $b_q d$ layers proves the expected mass bound,
since $\ln(b_q d)=o(L_0)$. Lemma~\ref{up:mass-to-smoothing}
gives the simultaneous smoothing bound.
\end{proof}

\begin{corollary}
\label{ud:geo:schedule}
Assume~\eqref{ud:geo:L-hypotheses}. A deterministic schedule satisfying
\[
 \begin{gathered}
 m_i=\sum_{j\le i}b_j\le n,\qquad
 16\sqrt{L_0}\le t_i\le q,\\
 m_i\ln(q/t_i)=O(\ln N\ln t_i),\qquad \ln N=o(n)
 \end{gathered}
\]
obeys~\eqref{ud:geo:failure} with $T_i=t_i/2$.
\end{corollary}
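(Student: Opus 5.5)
The plan is to reduce the corollary directly to Theorem~\ref{ud:geo:smoothing} by checking the three conditions in~\eqref{ud:geo:width-hypotheses} for the pairs $(m_i,T_i)=(m_i,t_i/2)$. Since the schedule is deterministic and fixed before $\mathbf A$ is drawn, the theorem applies verbatim once these conditions hold, and~\eqref{ud:geo:failure} is exactly its conclusion.

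The first two conditions are immediate. We have $1\le m_i\le n$ by assumption, with $m_i\ge1$ because the blocks are positive. From $t_i\ge16\sqrt{L_0}$ we get $T_i\ge8\sqrt{L_0}$, and from $t_i\le q$ we get $T_i\le q/2\le q$.

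The substantive check is the prefix ratio. We have $\ln(q/T_i)=\ln(q/t_i)+\ln2$, and we will use $\ln(q/t_i)\ge\ln2$. This holds in the intended schedule, where $t_i\le q/2$, and otherwise we handle it as below. Under it, $\ln(q/T_i)\le2\ln(q/t_i)$, so $m_i\ln(q/T_i)=O(\ln N\ln t_i)$. Next, $\ln T_i=\ln t_i-\ln2$ and $\ln t_i\ge\ln16+\tfrac12\ln L_0\to\infty$, so $\ln t_i/\ln T_i=1+o(1)$ uniformly. Therefore
\[
 \max_i\frac{m_i\ln(q/T_i)}{n\ln T_i}=O\!\left(\frac{\ln N}{n}\right)=o(1).
\]
Only the hypothesis $\ln N=o(n)$ is needed to conclude, together with uniformity of the implied constant over $i$.

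The main obstacle is a level with $t_i$ close to $q$, where $\ln(q/t_i)$ could be tiny or zero and the stated bound $m_i\ln(q/t_i)=O(\ln N\ln t_i)$ would not control the extra $m_i\ln2$. For such a level I would bound the ratio directly: $m_i\ln(q/T_i)\le n\ln(2q/t_i)\le n(\ln2+\ln(q/t_i))$. Dividing by $n\ln T_i$, the $\ln2$ term is $o(1)$ because $\ln T_i\to\infty$. The $\ln(q/t_i)$ term is $O(\ln N\ln t_i/(m_i\ln T_i))$, which is $o(1)$ whenever $m_i\gg\ln N$. When $m_i=O(\ln N)$, the bound $m_i\ln(q/T_i)\le m_i\ln q\le m_i\ln(2T_i)$ gives ratio $O(m_i/n)=o(1)$. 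In either case the ratio is $o(1)$ uniformly, and applying Theorem~\ref{ud:geo:smoothing} finishes the proof.
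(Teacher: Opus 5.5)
Your proof is correct and follows the paper's route: check the hypotheses of Theorem~\ref{ud:geo:smoothing} for $T_i=t_i/2$ and apply it. The paper needs no case split: it writes $m_i\ln(q/T_i)=m_i\ln(q/t_i)+m_i\ln2$ and uses $m_i\le n$ on the second term, which bounds the ratio by $O(\ln N/n)\,\ln t_i/\ln T_i+\ln2/\ln T_i=o(1)$ at once. This also avoids your slip $\ln q\le\ln(2T_i)$, which fails in general since $2T_i=t_i\le q$; it does hold with $4T_i$ when $t_i>q/2$, so the slip is harmless.
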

\begin{proof}
The width bound gives $T_i\ge8\sqrt{L_0}$, and
\[
 \frac{m_i\ln(q/T_i)}{n\ln T_i}
 \le O(\ln N/n)\frac{\ln t_i}{\ln T_i}
       +\frac{\ln2}{\ln T_i}=o(1).
\]
Apply Theorem~\ref{ud:geo:smoothing}.
\end{proof}

\subsection{Recovery and the effect of rounding}

\begin{corollary}
\label{ud:recovery}
\label{ua:saber-recovery}
Consider the rounded relation~\eqref{ud:rnd:map}, with fixed module
rank $k$, $d\to\infty$ through powers of two, and
$\ln q\le(\ln n)^c$ for a fixed $c$. Fix a rational $V\ge1$ and an
integer $1\le B\le\lfloor\sqrt{nV}\rfloor$ of polynomial bit length
before drawing $\mathbf A$, and assume
\[
 \Pr[\|\mathbf s\|_\infty>B\ \text{or}\
       \|\mathbf s\|^2+\|\mathbf e_{\rm rnd}\|^2>nV]\le p_E.
\]
Assume also that
\begin{equation}
 R^2=nV+3B^2,\qquad U=\sqrt{R^2/n},\qquad
 D=\ln\frac{\ln(q/\sqrt n)}{1+\ln U}\longrightarrow\infty.
 \label{ud:depth}
\end{equation}
Then the original integer secret is recovered from exactly $k$ rounded
ring equations with probability at least $1-p_E-o(1)$, in expected time
and space $2^{(1/2+o(1))n/D}$. The secret may depend on $\mathbf A$.
\end{corollary}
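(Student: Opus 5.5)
The plan is to reduce the corollary directly to Theorem~\ref{ue:main}, in the same way Theorem~\ref{um:recovery} handles the Kyber setting. There are three ingredients: the linear form of the rounded relation, the energy certificate, and the geometry event~\eqref{ue:geometry}.

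\textbf{Linear form and certificate.} Lemma~\ref{ud:rnd:lift} writes the public lift $\mathbf y=\Delta\mathbf b^+$ as $\mathbf y=\mathbf M\mathbf s+\mathbf e_{\rm rnd}\pmod q$. This has the form~\eqref{ur:model} with $h=n$ and $\mathbf e=\mathbf e_{\rm rnd}$. The residue $\mathbf e_{\rm rnd}$ is a deterministic function of $\mathbf A$ and $\mathbf s$. Theorem~\ref{ue:main} explicitly allows the joint law of $(\mathbf s,\mathbf e)$ to depend on $\mathbf M$, and Corollary~\ref{ur:dependence} handles that dependence, so nothing extra is needed. By hypothesis, the certificate~\eqref{ur:certificate} with the given $V,B$ fails with probability at most $p_E$. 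The window conditions~\eqref{ue:window} also hold: $U\ge1$ since $V\ge1$, $D\to\infty$ is assumed in~\eqref{ud:depth}, and $\ln q\le(\ln n)^c$ is assumed.

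\textbf{Geometry.} This is the only step that needs the SABER structure. I will apply Theorem~\ref{ud:geo:smoothing} through Corollary~\ref{ud:geo:schedule}, taking the finite schedule of Theorem~\ref{ue:main} with $L_0=\Theta(n/D)$ and $\varepsilon_0=4ne^{-L_0}$. The required facts are all in~\eqref{ue:geometry-scale} and Lemma~\ref{ue:allocation}:
\begin{itemize}
\item $t_i\ge16\sqrt{L_0}$ and $t_i\le q$;
\item $m_ix_i=O(\ln N\ln t_i)$, together with $\ln N=o(n)$.
\end{itemize}
Hypothesis~\eqref{ud:geo:L-hypotheses} needs a short check. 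Since $D=O(\ln\ln n)$ and $D\to\infty$, we have $L_0=\Theta(n/D)\to\infty$ and $L_0=o(n)$. For the remaining condition, $\ln(nb_q)=O(\ln n+\ln\ln q)=O(\ln n)$, which is $o(n/\ln\ln n)=o(L_0)$. Corollary~\ref{ud:geo:schedule} then bounds the probability that some $\eta_{\varepsilon_0}(\Lambda_{m_i})>t_i/2$ by $\frac r{2n}e^{-2L_0}$. Because $r=O(\ln q)$, this is $o(1)$, so we may take $\omega_n=o(1)$ in~\eqref{ue:geometry}.

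\textbf{Main obstacle.} The step I expect to need care is matching the lattices. Corollary~\ref{ud:geo:schedule} is stated for $\Lambda_m$ built from the coefficient matrix of $\mathbf A^T$, in the block-then-degree column order. Theorem~\ref{ue:main} uses prefixes of whatever matrix appears in the linear relation. The remedy is to fix $\mathbf M$ as in Section~\ref{ud:section}, so that $\operatorname{coeff}(\mathbf A^T\mathbf s_R)=\mathbf M\mathbf s$. Then the prefix lattices in~\eqref{ur:lattices} and~\eqref{ud:geo:lattices} coincide, and this includes prefixes that end inside a polynomial block, which the counts~\eqref{ud:alg:module-prefix} already cover. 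The schedule also depends only on $(n,q,V,B)$, which are fixed before $\mathbf A$ is drawn.

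Once these are in place, Theorem~\ref{ue:main} gives recovery with probability at least $1-p_E-o(1)$ in expected time and space $2^{(1/2+o(1))n/D}$. Finally, Lemma~\ref{ur:unique} says the recovered vector is the original integer secret.
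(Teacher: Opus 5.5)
Your proposal is correct and follows the paper's proof. Lemma~\ref{ud:rnd:lift} supplies the integer relation, and the schedule bounds in~\eqref{ue:geometry-scale} give $L_0=\Theta(n/D)$, $L_0=o(n)$ and $\ln(nb_q)=o(L_0)$, so Corollary~\ref{ud:geo:schedule} yields prefix smoothing with failure probability $re^{-2L_0}/(2n)=o(1)$ and Theorem~\ref{ue:main} concludes; your additional checks (identifying the prefix lattices, including partial polynomial blocks, and verifying the window conditions) only spell out what the paper leaves implicit.
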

\begin{proof}
Lemma~\ref{ud:rnd:lift} supplies the integer relation. For the
schedule of Theorem~\ref{ue:main},
the bounds in~\eqref{ue:geometry-scale} give
$L_0=\Theta(n/D)$, $L_0=o(n)$, and $\ln(nb_q)=o(L_0)$.
Corollary~\ref{ud:geo:schedule} therefore supplies its prefix smoothing
with failure probability at most $r e^{-2L_0}/(2n)=o(1)$.
Theorem~\ref{ue:main} now gives
recovery uniformly over the stated energy event.
\end{proof}

For the SABER-type support $|s_j|\le\eta$ and $\Delta\ge2$, take
$B=\eta$, $V=\eta^2+\Delta^2/4$, and $p_E=0$. In particular,
\begin{equation}
 \begin{gathered}
 q=n^{\kappa+o(1)},\quad \kappa>1/2,\quad
 \eta+\Delta=(\ln n)^{O(1)}\\
 \Longrightarrow\quad D=(1+o(1))\ln\ln n.
 \end{gathered}
 \label{ud:rounding-windows}
\end{equation}
These families are nonempty: choose
$q=2^{\lceil\kappa\log_2(kd)\rceil}$ and a fixed or polylogarithmic
power-of-two divisor $\Delta\ge2$ of $q$. The resulting expected time and
space are $2^{(1/2+o(1))n/\ln\ln n}$.
To quantify coarser rounding, keep $q=n^{\kappa+o(1)}$ with fixed
$\kappa>1/2$, and assume $\eta=O(\Delta)$. Then $U=\Theta(\Delta)$.
If $\ln\Delta=(\ln n)^{a+o(1)}$ for fixed $0<a<1$, the depth is
$D=(1-a+o(1))\ln\ln n$. If instead $\Delta=n^{a+o(1)}$ for fixed
$a>0$, then
\[
 D=\ln\frac{\kappa-1/2}{a}+o(1).
\]

\begin{corollary}[SABER family specialization]
\label{ua:saber-headline}
Let $n=kd$, with fixed $k$ and power-of-two $d\to\infty$, and
take power-of-two moduli $p\mid q$ with
$q=n^{\kappa+o(1)}$, fixed $\kappa>1/2$.
If $|s_j|\le\eta$ and $\Delta=q/p\ge2$ are polylogarithmic,
then the original integer secret is recovered from the $k$ rounded
ring equations for uniform $\mathbf A\in R_q^{k\times k}$ with
probability $1-o(1)$, in expected time and space
$2^{(1/2+o(1))n/\ln\ln n}$.
\end{corollary}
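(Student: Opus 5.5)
The plan is to obtain Corollary~\ref{ua:saber-headline} as a direct specialization of Corollary~\ref{ud:recovery}, checking its hypotheses in three steps: the energy certificate, the modulus growth, and the depth asymptotics.

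\textbf{Energy certificate.} Lemma~\ref{ud:rnd:lift} writes the public lift $\mathbf y=\Delta\mathbf b^+$ as $\mathbf M\mathbf s+\mathbf e_{\rm rnd}\pmod q$. For $\Delta\ge2$ it gives $\|\mathbf e_{\rm rnd}\|^2\le n\Delta^2/4$ for every matrix and secret. With $|s_j|\le\eta$, take $B=\eta$ and $V=\eta^2+\Delta^2/4\ge1$. Both are rational, of polynomial bit length, and fixed before drawing $\mathbf A$. This gives $p_E=0$, and $B\le\lfloor\sqrt{nV}\rfloor$ holds trivially. The secret may depend on $\mathbf A$, which Corollary~\ref{ud:recovery} already permits.

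\textbf{Modulus growth.} From $q=n^{\kappa+o(1)}$ we get $\ln q=O(\ln n)\le(\ln n)^{c}$ for any $c>1$ eventually. This is the growth hypothesis of Corollary~\ref{ud:recovery}.

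\textbf{Depth asymptotics.} We have $R^2=nV+3B^2$, so $U=\sqrt{V+3\eta^2/n}=\sqrt V(1+O(1/n))$. Since $\eta$ and $\Delta$ are polylogarithmic, $1\le U=(\ln n)^{O(1)}$ and hence $1+\ln U=O(\ln\ln n)$. Because $\kappa>1/2$, we have $\ln(q/\sqrt n)=(\kappa-1/2+o(1))\ln n$, which is positive and of order $\ln n$. Therefore
\[
 D=\ln\ln n+\ln(\kappa-1/2+o(1))-\ln(1+\ln U)=\ln\ln n-O(\ln\ln\ln n)\sim\ln\ln n\to\infty .
\]
This is exactly \eqref{ud:rounding-windows}.

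Corollary~\ref{ud:recovery} then yields recovery with probability $1-p_E-o(1)=1-o(1)$, in expected time and space $2^{(1/2+o(1))n/D}=2^{(1/2+o(1))n/\ln\ln n}$. The only point needing care is the depth asymptotics: the polylogarithmic $\Delta$ enters $U$, and it contributes only a $\ln\ln\ln n$ correction that does not change the leading term. The geometric input, prefix smoothing over the dyadic ring, is already supplied inside Corollary~\ref{ud:recovery} by Theorem~\ref{ud:geo:smoothing}, so no new geometry is needed. This proves Corollary~\ref{ui:saber-cor}.
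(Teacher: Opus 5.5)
Your proposal is correct and follows essentially the same route as the paper. Take the deterministic certificate $B=\eta$, $V=\eta^2+\Delta^2/4$ from Lemma~\ref{ud:rnd:lift}, deduce $D\sim\ln\ln n$ from the polylogarithmic $U$ and $\ln(q/\sqrt n)=(\kappa-1/2+o(1))\ln n$, and invoke Corollary~\ref{ud:recovery}. Your explicit checks of $V\ge1$, $B\le\lfloor\sqrt{nV}\rfloor$, and $\ln q\le(\ln n)^c$ spell out routine hypotheses that the paper leaves implicit.
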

\begin{proof}
The public lift has an error satisfying
$\|\mathbf e_{\rm rnd}\|^2\le n\Delta^2/4$.
Thus $B=\eta$, $V=\eta^2+\Delta^2/4$, and
$U^2=V+3\eta^2/n$ satisfy the energy condition deterministically.
Here $1\le U=(\ln n)^{O(1)}$ and
$\ln(q/\sqrt n)=(\kappa-1/2+o(1))\ln n$, so
$D=\ln\ln n+O(\ln\ln\ln n)\sim\ln\ln n$.
Corollary~\ref{ud:recovery} supplies the result, including the
dependence of the rounding error on the public matrix and secret.
This proves Corollary~\ref{ui:saber-cor}.
\end{proof}

\section{NTRU LPRime: Recovery from Rounded Field Products}
\label{un:section}
\label{ua:lprime}

We use $p$ for the ring degree and $n=p$ for the recovery dimension.
A single ring product contains $n$ coefficient observations, but all
entries of its multiplication matrix depend on the same public
multiplier. In this case the dependence is easy to handle: over a
field, multiplying any fixed nonzero element by a uniform element
gives a uniform result. This is exactly the property required by
Lemma~\ref{ug:uniform-mass}.

\subsection{The public multiplier and its Gaussian mass}
\label{un:alg:section}
\label{un:geo:section}

Let $p\ge1$ be an integer, let $q\ge5$ be prime, and assume that
$X^p-X-1$ is irreducible modulo
$q$. Set $O=\mathbb Z[X]/(X^p-X-1)$ and
$R_q=O/qO=\mathbb F_q[X]/(X^p-X-1)$, and draw the ring element $G$
uniformly in $R_q$. In the basis $(1,X,\ldots,X^{p-1})$, let
$\mathbf M_G$ be the coefficient matrix of multiplication by $G$.
Thus $\operatorname{coeff}(Gc)=\mathbf M_G\operatorname{coeff}(c)$
for $c\in R_q$. If $\mathbf E_m$ appends zeros to a vector of length
$m$, let $\mathbf M_m$ consist of the first $m$ columns of $\mathbf M_G$.
Then
\begin{equation}\label{un:alg:actual-prefix}
 \mathbf M_m=\mathbf M_G\mathbf E_m,\qquad
 \mathbf M_m\mathbf z=\mathbf M_G\mathbf E_m\mathbf z.
\end{equation}
Thus the prefix lattice is
\begin{equation}\label{un:geo:lattices}
 \Lambda_m(G)=
 \{(\mathbf u,\mathbf v)\in\mathbb Z^n\times\mathbb Z^m:
       \mathbf M_m^T\mathbf u+\mathbf v=0\pmod q\}.
\end{equation}

\begin{lemma}
\label{un:alg:uniform}
For each prescribed $1\le m\le n$ and each fixed
$0\ne\mathbf z\in\mathbb F_q^m$, the vector
$\mathbf M_m\mathbf z$ is uniform in
$\mathbb F_q^n$.
\end{lemma}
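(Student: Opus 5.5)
The plan is to reduce the statement to one fact: in a field, multiplication by a fixed nonzero element is a bijection.

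First, fix $1\le m\le n$ and $0\ne\mathbf z\in\mathbb F_q^m$, and set $c\in R_q$ to be the ring element with $\operatorname{coeff}(c)=\mathbf E_m\mathbf z$. Since $\mathbf E_m$ only appends zero coordinates, $\mathbf E_m\mathbf z\neq 0$, so $c\neq 0$. By \eqref{un:alg:actual-prefix} and the defining property of $\mathbf M_G$,
\[
 \mathbf M_m\mathbf z=\mathbf M_G\mathbf E_m\mathbf z=\operatorname{coeff}(Gc).
\]

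Second, use the irreducibility of $X^p-X-1$ modulo the prime $q$. It makes $R_q=\mathbb F_q[X]/(X^p-X-1)$ a field with $q^p$ elements, so $c$ is invertible. Hence the map $G\mapsto Gc$ is a bijection of $R_q$, with inverse $y\mapsto yc^{-1}$. Because $G$ is uniform on $R_q$, the product $Gc$ is also uniform on $R_q$.

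Third, note that $\operatorname{coeff}:R_q\to\mathbb F_q^p$ is the coordinate isomorphism for the basis $(1,X,\ldots,X^{p-1})$. It therefore carries the uniform law on $R_q$ to the uniform law on $\mathbb F_q^p=\mathbb F_q^n$, using $n=p$. This proves the statement.

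There is no real obstacle in this argument. The only points that need care are the two checks above: that $c\neq0$ follows from $\mathbf z\neq0$, which holds because $\mathbf E_m$ is injective, and that the field property comes from the standing irreducibility assumption. Joint independence across different labels is not needed, since the uniform-image definition before Lemma~\ref{ug:uniform-mass} asks only for marginal uniformity for each fixed $\mathbf z$. This is what allows Corollary~\ref{ug:uniform-geometry} to be applied afterwards to every prescribed prefix simultaneously, with all prefixes sharing the single draw of $G$.
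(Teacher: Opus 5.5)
Your proof is correct and follows the paper's argument: the nonzero label $\mathbf E_m\mathbf z$ represents a nonzero element $c$ of the field $R_q$, and multiplication by $c$ permutes $R_q$, so $\mathbf M_m\mathbf z=\operatorname{coeff}(Gc)$ is uniform. You also spell out two steps that the paper's short proof leaves implicit, namely the injectivity of $\mathbf E_m$ and the fact that $\operatorname{coeff}$ is a bijection onto $\mathbb F_q^n$.
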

\begin{proof}
The polynomial represented by $\mathbf E_m\mathbf z$ has degree less than $p$ and is
nonzero. It is therefore a nonzero element of the field $R_q$.
Multiplication by it permutes $R_q$.
\end{proof}

Applying Lemma~\ref{ug:uniform-mass} with
$\mathbf A_m=\mathbf M_m$ gives
\begin{equation}\label{un:geo:exact-mass}
 \mathbb E_G\!\left[\rho_{1/T}
   (\Lambda_m(G)^*\setminus\mathbb Z^{n+m})\right]
 =q^{-n}\theta(q/T)^n
       \bigl(\theta(q/T)^m-\theta(1/T)^m\bigr).
\end{equation}
Corollary~\ref{ug:uniform-geometry} now supplies the smoothing
bound for the sampler's prescribed prefixes. The argument takes place
entirely in coefficient coordinates and applies to any irreducible
defining polynomial.

\subsection{Rounding gives a small error}

The NTRU LPRime Core relation~\cite[Sections~2.1 and~2.4.2]{NTRUPrimeSpec}
uses a ternary secret of weight $w$, with $0\le w\le p$:
\begin{equation}\label{un:rnd:secret-set}
 \mathcal S_w=\{a\in O:\operatorname{coeff}(a)\in\{-1,0,1\}^n,
                  \ \|\operatorname{coeff}(a)\|_0=w\}.
\end{equation}
For $a\in\mathcal S_w$, write $\mathbf a=\operatorname{coeff}(a)$.
Products with $G$ use the reduction of $a$ modulo $q$.
For an integer $z$, let $\operatorname{cen}_q(z)$ denote its
representative in $[-(q-1)/2,(q-1)/2]$. Define nearest rounding to a
multiple of three by
\begin{equation}\label{un:rnd:scalar-map}
 \operatorname{rnd}_q(z)=
 3\left\lfloor\frac{\operatorname{cen}_q(z)+1}{3}\right\rfloor.
\end{equation}
For a ring element $c\in R_q$, define $\operatorname{Round}(c)\in O$
by applying $\operatorname{rnd}_q$ to each coefficient. Thus
$\operatorname{Round}:R_q\to O$ has an integer ring output.
The public ring element and its integer coefficient observation are
\begin{equation}\label{un:rnd:public-map}
 A=\operatorname{Round}(aG)\in O,\qquad
 \mathbf y=\operatorname{coeff}(A)\in\mathbb Z^n.
\end{equation}

\begin{lemma}\label{un:rnd:lift}
Let $x_j=\operatorname{cen}_q((\mathbf M_G\mathbf a)_j)$ and let $[\,\cdot\,]_3$
denote the least nonnegative residue modulo three. Then
\begin{equation}\label{un:rnd:residue}
 \mathbf y=\mathbf x+\mathbf e=\mathbf M_G\mathbf a+\mathbf e\pmod q,\qquad
 e_j=1-[x_j+1]_3\in\{-1,0,1\}.
\end{equation}
In particular,
\begin{equation}\label{un:rnd:joint-energy}
 \|\mathbf a\|^2+\|\mathbf e\|^2\le w+n.
\end{equation}
\end{lemma}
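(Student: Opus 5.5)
The plan is a direct coefficientwise computation in the same style as Lemma~\ref{ud:rnd:lift}: Euclidean division by three. Fix a coordinate $j$ and put $x_j=\operatorname{cen}_q((\mathbf M_G\mathbf a)_j)$. By definition~\eqref{un:rnd:scalar-map}, $y_j=\operatorname{rnd}_q(x_j)=3\lfloor(x_j+1)/3\rfloor$. Euclidean division gives $x_j+1=3\lfloor(x_j+1)/3\rfloor+[x_j+1]_3$, and hence
\[
 y_j=x_j+1-[x_j+1]_3=x_j+e_j,\qquad e_j=1-[x_j+1]_3.
\]
Since $[x_j+1]_3\in\{0,1,2\}$, it follows that $e_j\in\{-1,0,1\}$.

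Next comes the congruence. The centered representative satisfies $x_j\equiv(\mathbf M_G\mathbf a)_j\pmod q$, and $\operatorname{coeff}(aG)=\mathbf M_G\mathbf a\bmod q$ because $\mathbf M_G$ is the coefficient matrix of multiplication by $G$ and $a$ is reduced modulo $q$ before the product. Together with the first step, this gives $\mathbf y=\mathbf x+\mathbf e\equiv\mathbf M_G\mathbf a+\mathbf e\pmod q$. One point deserves a check: $\operatorname{Round}$ returns an integer ring element of $O$ that is not reduced again modulo $q$, so the integer identity $\mathbf y=\mathbf x+\mathbf e$ holds exactly.

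For the energy bound, $\mathbf a\in\{-1,0,1\}^n$ has exactly $w$ nonzero entries, so $\|\mathbf a\|^2=w$. Each $|e_j|\le1$, so $\|\mathbf e\|^2\le n$. Adding these gives~\eqref{un:rnd:joint-energy}.

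No step is a genuine obstacle. The only care needed is in the bookkeeping of the rounding convention: the tie direction, the use of the centered representative with $q$ odd, and the fact that the rounded output is an integer rather than a residue. The bounds on $e_j$ hold regardless of whether $|y_j|$ might exceed $(q-1)/2$, because the argument never needs $\mathbf y$ to be reduced.
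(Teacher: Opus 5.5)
Your proposal is correct and matches the paper's proof. Both write $x_j+1=3t_j+[x_j+1]_3$, read off $y_j=3t_j$ so that $e_j=y_j-x_j=1-[x_j+1]_3\in\{-1,0,1\}$, and obtain the energy bound by summing squares with $\|\mathbf a\|^2=w$ and $\|\mathbf e\|^2\le n$.
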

\begin{proof}
Write $x_j+1=3t_j+[x_j+1]_3$. Rounding returns $y_j=3t_j$,
so $y_j-x_j=1-[x_j+1]_3$. Summing squares proves the bound.
\end{proof}

We retain the rounded integer output when forming $\mathbf e$. For
$q\equiv5\pmod6$, its extreme values can lie one unit outside the
centered interval. Reducing $\mathbf y$ modulo $q$ leaves the same bounded
error valid in the modular equation.

\subsection{Recovery from the given product}

The energy bound supplies the common theorem with
\begin{equation}\label{un:candidate-radius}
 V=1+w/n,\qquad B=1,\qquad R^2=n+w+3.
\end{equation}

\begin{corollary}
\label{un:recovery}
\label{ua:lprime-recovery}
Fix a parameter sequence $(p,q,w)$ as above, with $n\to\infty$,
before drawing $G$, and assume
\[
 D_0=\ln\ln(q/\sqrt n)\longrightarrow\infty,\qquad
 \ln q\le(\ln n)^c
\]
for a fixed $c$. Given $G$ and $A=\operatorname{Round}(aG)$,
the algorithm recovers the integer coefficient vector $\mathbf a$, and
hence the ring element $a$, with probability
$1-o(1)$, in expected time and space $2^{(1/2+o(1))n/D_0}$.
The secret may depend on $G$, provided $a\in\mathcal S_w$.
\end{corollary}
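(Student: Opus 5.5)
The plan is to apply Theorem~\ref{ue:main} with $h=n=p$, $\mathbf M=\mathbf M_G$, $V=1+w/n$, $B=1$, and $p_E=0$.

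\textbf{Energy certificate.} Lemma~\ref{un:rnd:lift} gives $\mathbf y=\mathbf M_G\mathbf a+\mathbf e\pmod q$ with $\|\mathbf a\|^2+\|\mathbf e\|^2\le w+n=nV$. It also gives $|a_j|\le1=B$ and $1\le B\le\lfloor\sqrt{nV}\rfloor$. This holds for every $a\in\mathcal S_w$ and every $G$, so the certificate~\eqref{ur:certificate} holds deterministically, even when $a$ depends on $G$.

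\textbf{Depth.} Since $w\le n$, we have $1\le V\le2$. Hence $U^2=V+3/n\in[1,3]$ and $1+\ln U\in[1,1+\tfrac12\ln3]$. Therefore
\[
 D=\ln\ln(q/\sqrt n)-\ln(1+\ln U)=D_0-O(1)\sim D_0\to\infty,
\]
which meets~\eqref{ue:window} together with the assumed bound $\ln q\le(\ln n)^c$. The cost $2^{(1/2+o(1))n/D}$ then equals $2^{(1/2+o(1))n/D_0}$.

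\textbf{Geometry.} Lemma~\ref{un:alg:uniform} shows that each prefix $\mathbf M_{m_i}$ has the uniform-image property under the single draw of $G$. The finite schedule is fixed before $G$ is drawn and satisfies~\eqref{ue:geometry-scale}. Those are exactly the hypotheses of Corollary~\ref{ug:uniform-geometry} with $T_i=t_i/2$ and $\varepsilon_0=4ne^{-L_0}$. The corollary bounds the failure probability of~\eqref{ue:geometry} by $r e^{-2L_0}/(2n)=o(1)$.

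\textbf{Main obstacle.} The delicate point is checking that the prefix-ratio condition in~\eqref{ug:uniform-schedule} holds for this family, and in particular that $T_i\le q$ and $\ln T_i\to\infty$ uniformly. The schedule bounds already cover both: $t_r=q/(2f)\le q$, and $t_1\ge16\sqrt{L_0}$ with $L_0=\Theta(n/D)\to\infty$. So nothing specific to this family is needed beyond $q>2$ prime and $D_0\to\infty$, the latter guaranteeing $Q>0$.

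\textbf{Conclusion.} Theorem~\ref{ue:main} now recovers the integer vector $\mathbf a$ with probability $1-\omega_n-o(1)=1-o(1)$. Lemma~\ref{ur:unique} confirms it is the original secret, and $a$ is read off from its coefficients. The rounding residue depends on $G$ and $a$, but Corollary~\ref{ur:dependence} allows this.
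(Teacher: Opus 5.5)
Your proposal is correct and follows the paper's proof essentially step for step: the deterministic certificate $V=1+w/n$, $B=1$ from Lemma~\ref{un:rnd:lift}, the depth comparison $D=D_0-O(1)\sim D_0$, the uniform-image geometry via Lemma~\ref{un:alg:uniform} and Corollary~\ref{ug:uniform-geometry} with $T_i=t_i/2$, and then Theorem~\ref{ue:main}. Two small points: the coordinate bound $|a_j|\le1$ comes from $a\in\mathcal S_w$ being ternary rather than from Lemma~\ref{un:rnd:lift}, and the prefix-ratio condition is already part of the generic schedule bounds \eqref{ue:geometry-scale}, so your ``main obstacle'' paragraph adds nothing that needs family-specific checking.
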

\begin{proof}
Lemma~\ref{un:rnd:lift} gives the lifted relation with $h=n$
and the deterministic certificate $V=1+w/n$, $B=1$.
Thus $R^2=n+w+3$ and $U^2=1+w/n+3/n=O(1)$, giving
$D=D_0-O(1)\sim D_0$.
For every prescribed prefix, Lemma~\ref{un:alg:uniform} gives
a uniform image for each fixed nonzero label.
The schedule bounds~\eqref{ue:geometry-scale} therefore allow
Corollary~\ref{ug:uniform-geometry} to be applied with $T_i=t_i/2$.
Its geometry failure probability is at most $r e^{-2L_0}/(2n)=o(1)$.
Theorem~\ref{ue:main} gives the stated recovery probability and
resource bounds, including secrets that depend on $G$.
\end{proof}

In particular, $q=\Theta(n)$ gives $D_0\sim\ln\ln n$.

For the LPRime family, the supplied degree $p$ is also prime and the
parameters satisfy the specification's arithmetic requirements.
For prime $p$, irreducibility of $f=X^p-X-1$ is checked by
$f\mid X^{q^p}-X$ and $\gcd(f,X^q-X)=1$, using modular polynomial
arithmetic.

\begin{corollary}[NTRU LPRime family specialization]
\label{ua:lprime-headline}
Let $(p,q,w)$ be a supplied growing family satisfying the LPRime
arithmetic requirements, with primes $p,q$, $q=\Theta(p)$,
$X^p-X-1$ irreducible modulo $q$, and $0\le w\le p$.
For uniform $G\in R_q$ and $a\in\mathcal S_w$, the original
ternary secret is recovered from $A=\operatorname{Round}(aG)$
with probability $1-o(1)$, in expected time and space
$2^{(1/2+o(1))n/\ln\ln n}$, where $n=p$.
\end{corollary}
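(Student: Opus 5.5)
The plan is to derive the statement as a direct specialization of Corollary~\ref{un:recovery}. That corollary already gives recovery of the ternary secret in expected time and space $2^{(1/2+o(1))n/D_0}$ with $D_0=\ln\ln(q/\sqrt n)$, under the hypotheses $D_0\to\infty$ and $\ln q\le(\ln n)^c$. What remains is to check those hypotheses for the supplied LPRime family and to show $D_0\sim\ln\ln n$.

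First, the algebraic setting of Section~\ref{un:alg:section} has to match the family. The ring degree $p$ and the modulus $q$ are primes, $X^p-X-1$ is irreducible modulo $q$, and $q=\Theta(p)$ forces $q\ge5$ for large $n$. So $R_q$ is the field $\mathbb F_{q^p}$, and the uniform-image property of Lemma~\ref{un:alg:uniform} holds for each prescribed prefix. The secret is $a\in\mathcal S_w$ with $0\le w\le p$, and the parameter sequence $(p,q,w)$ is fixed before $G$ is drawn, exactly as Corollary~\ref{un:recovery} requires. Lemma~\ref{un:rnd:lift} supplies the joint energy bound $w+n\le2n$, which gives the deterministic certificate $V=1+w/n\le2$, $B=1$, with $p_E=0$.

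Next comes the asymptotic check. With $n=p$ and $q=\Theta(n)$ we have $\ln q=\ln n+O(1)\le(\ln n)^{2}$ for large $n$, so $c=2$ works. Also $\ln(q/\sqrt n)=\tfrac12\ln n+O(1)$, hence
\[
 D_0=\ln\bigl(\tfrac12\ln n+O(1)\bigr)=\ln\ln n-\ln2+o(1)\sim\ln\ln n\to\infty .
\]
Substituting into the cost of Corollary~\ref{un:recovery} gives $2^{(1/2+o(1))n/D_0}=2^{(1/2+o(1))n/\ln\ln n}$, since the $(1+o(1))$ ratio between $D_0$ and $\ln\ln n$ is absorbed into the $o(1)$ in the exponent. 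The success probability $1-o(1)$ carries over unchanged, and together these give the statement and Corollary~\ref{ui:lprime-cor}.

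There is no real obstacle here; the only point needing care is that $q=\Theta(p)$ places $q$ just above $\sqrt n$ by a polynomial factor, which is what keeps $\ln(q/\sqrt n)=\Theta(\ln n)$ rather than degenerate. One should also confirm that no hidden constraint, such as $U$ growing, lowers $D$. This is already covered: $U^2=1+(w+3)/n\le 2+3/n=O(1)$, as noted in the proof of Corollary~\ref{un:recovery}.
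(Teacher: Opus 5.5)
Your proposal is correct and follows the paper's own proof: both specialize Corollary~\ref{un:recovery}, taking the deterministic energy bound from Lemma~\ref{un:rnd:lift}, noting $U^2=O(1)$, and using $q=\Theta(n)$ to get $D_0=\ln\ln n+O(1)\sim\ln\ln n$. Your explicit checks of $\ln q\le(\ln n)^c$ and of $q\ge5$ for large $n$ are details the paper leaves implicit.
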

\begin{proof}
Lemma~\ref{un:rnd:lift} gives deterministic joint energy $w+n$.
The candidate radius in~\eqref{un:candidate-radius} has
$U^2=1+w/n+3/n=O(1)$. Since $q=\Theta(n)$,
\[
 D_0=\ln\ln(q/\sqrt n)=\ln\ln n+O(1),\qquad
 D=D_0-O(1)\sim\ln\ln n.
\]
Corollary~\ref{un:recovery} applies to the $n$ coefficient
observations in this one rounded product. It recovers the integer
coefficient vector, hence $a$ itself.
This proves Corollary~\ref{ui:lprime-cor}.
\end{proof}

\section{Dilithium/ML-DSA: Recovery from Compressed Public Keys}
\label{umldsa:section}

The ML-DSA public key contains the high bits of a noisy module
product. Its discarded low bits become part of the error in the lifted relation,
so the same prime-module geometry used for Kyber/ML-KEM applies.
Selecting a square block of public equations recovers the original
vector $\mathbf s_1$; the remaining short algebraic components can
then be completed coefficientwise.

\subsection{The public relation and its integer lift}

We use the key generation relation and the \texttt{Power2Round} map
of~\cite[Algorithms~6 and~35]{FIPS204}. Write $m$ for the ring degree
and $d$ for the number of discarded public-key bits, and set
\[
 R=\mathbb Z[X]/(X^m+1),\qquad R_q=R/qR,\qquad \Delta=2^d.
\]
In the specified parameter sets, $m=256$, $d=13$, and the module
dimensions $(k,\ell)$ are $(4,4),(6,5),(8,7)$.
The growing family below keeps the module dimensions fixed while the
ring degree and modulus grow.

\begin{definition}[ML-DSA public family]
\label{umldsa:family}
Fix integers $k\ge\ell\ge1$. Let $m\to\infty$ through powers of
two, put $n=\ell m$, and take a supplied prime modulus
$q\equiv1\pmod{2m}$ with $q\le n^{C_q}$ for fixed $C_q$.
Choose public integers $\eta\ge1$, $d\ge1$, with
$\eta,\Delta=2^d=(\ln n)^{O(1)}$, before the public draw.
The bit count $d$ may be fixed; if it grows, it satisfies
$d=O(\ln\ln n)$.
Draw $\mathbf A\leftarrow\operatorname{Unif}(R_q^{k\times\ell})$ and let
$\mathbf s_1\in R^\ell$, $\mathbf s_2\in R^k$ have coefficient
absolute values at most $\eta$. They may depend on $\mathbf A$.
The public data are $(\mathbf A,\mathbf t_1)$, where
\begin{equation}
 \mathbf t=\mathbf A\mathbf s_1+\mathbf s_2\pmod q,
 \qquad (\mathbf t_1,\mathbf t_0)
          =\operatorname{Power2Round}_{d}(\mathbf t)
 \label{umldsa:public-relation}
\end{equation}
coefficientwise, with $\mathbf t\in R_q^k$ and integer outputs
$\mathbf t_1,\mathbf t_0\in R^k$. The map
$\operatorname{Power2Round}_d$ is the standard function parameterized
by its discarded-bit count $d$, with the scalar convention below.
\end{definition}

For $r\in\mathbb Z_q$, let $r^+\in\{0,\ldots,q-1\}$ be its
least nonnegative representative and define
\begin{equation}
 r_1=\left\lfloor\frac{r^++\Delta/2-1}{\Delta}\right\rfloor,
 \qquad r_0=r^+-\Delta r_1,
 \qquad \operatorname{Power2Round}_{d}(r)=(r_1,r_0).
 \label{umldsa:power2round}
\end{equation}
Thus $-\Delta/2<r_0\le\Delta/2$, equivalently
$-2^{d-1}<r_0\le2^{d-1}$, and
$r=\Delta r_1+r_0\pmod q$, including the tie convention in the
specification. The first output is stored as an integer.
In the encoded public key, $\mathbf A$ is reconstructed from its
public seed; an NTT representation is returned to coefficient form.

\begin{lemma}[Lifted error bound]
\label{umldsa:lift}
Define
$\mathbf y_R:=2^d\mathbf t_1$ and
$\mathbf e_R:=\mathbf s_2-\mathbf t_0$.
Then
\begin{equation}
 \mathbf y_R=\mathbf A\mathbf s_1+\mathbf e_R\pmod q,
 \qquad \|\mathbf e_R\|_\infty\le\eta+\Delta/2.
 \label{umldsa:lifted-relation}
\end{equation}
Fix any set $J\subseteq[k]$ of $\ell$ rows before drawing
$\mathbf A$, and let $\mathbf A_J$ denote the corresponding restriction.
Set
$\mathbf y:=\operatorname{coeff}((\mathbf y_R)_J)$,
$\mathbf e:=\operatorname{coeff}((\mathbf e_R)_J)$, and
$\mathbf s:=\operatorname{coeff}(\mathbf s_1)$.
Then $\mathbf A_J$ is uniform in
$R_q^{\ell\times\ell}$, and its coefficient matrix $\mathbf M_J$
gives $n=\ell m$ observations with deterministic certificate
\begin{equation}
 \begin{gathered}
 \mathbf y=\mathbf M_J\mathbf s+\mathbf e\pmod q,\\
 \|\mathbf s\|_2^2+\|\mathbf e\|_2^2\le nV,
 \qquad V=\eta^2+(\eta+\Delta/2)^2,
 \qquad B=\eta.
 \end{gathered}
 \label{umldsa:energy}
\end{equation}
\end{lemma}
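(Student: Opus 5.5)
The plan is to prove the lemma in three parts: the scalar identity from \texttt{Power2Round}, coefficientwise lifting to the module relation, and then restriction to the fixed row set $J$ together with the uniformity and energy claims.

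\emph{Lifted relation.} By~\eqref{umldsa:power2round}, every coefficient $r$ of $\mathbf t$ satisfies $r^+=\Delta r_1+r_0$ as integers and $-\Delta/2<r_0\le\Delta/2$. Applying this coefficientwise gives $\mathbf t=\Delta\mathbf t_1+\mathbf t_0\pmod q$ in $R_q^k$. Substituting $\mathbf t=\mathbf A\mathbf s_1+\mathbf s_2$ and rearranging gives
\[
 \Delta\mathbf t_1=\mathbf A\mathbf s_1+(\mathbf s_2-\mathbf t_0)\pmod q,
\]
which is $\mathbf y_R=\mathbf A\mathbf s_1+\mathbf e_R$. The triangle inequality, together with $\|\mathbf s_2\|_\infty\le\eta$ and $|r_0|\le\Delta/2$, bounds every coefficient of $\mathbf e_R$ by $\eta+\Delta/2$.

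\emph{Restriction to $J$.} Because $J$ is fixed before the draw, the rows of $\mathbf A$ indexed by $J$ form $\ell\times\ell$ independent uniform entries of $R_q$. Hence $\mathbf A_J$ is uniform in $R_q^{\ell\times\ell}$. Restricting the module relation to these rows and taking coefficients, as in Section~\ref{up:notation}, gives $\operatorname{coeff}(\mathbf A_J\mathbf s_1)=\mathbf M_J\mathbf s\pmod q$. This yields the relation in~\eqref{umldsa:energy} with $h=\ell m=n$ observations. The secret and error may depend on $\mathbf A$, and nothing in this step uses independence between them and $\mathbf A$.

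\emph{Energy certificate.} The vector $\mathbf s$ has $n$ coefficients, each bounded by $\eta$, so $\|\mathbf s\|^2\le n\eta^2$. The vector $\mathbf e$ has $\ell m=n$ coefficients, each bounded by $\eta+\Delta/2$, so $\|\mathbf e\|^2\le n(\eta+\Delta/2)^2$. Adding these gives joint energy at most $nV$. The choice $B=\eta$ bounds each secret coordinate.

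Three side conditions of~\eqref{ur:certificate} remain to check. First, $V\ge1$ holds since $\eta\ge1$. Second, $V$ is rational, in fact an integer, because $\Delta=2^d$ is even for $d\ge1$. Third, $B=\eta\le\lfloor\sqrt{nV}\rfloor$ holds because $nV\ge\eta^2$.

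No step is a real obstacle. The only care needed is the integer identity $r^+=\Delta r_1+r_0$ under the tie convention, and this follows directly from the floor definition in~\eqref{umldsa:power2round}.
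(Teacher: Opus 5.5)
Your proposal is correct and follows essentially the same route as the paper's proof: the integer identity $r^+=\Delta r_1+r_0$ from \texttt{Power2Round} and the triangle inequality give the lifted relation and $\|\mathbf e_R\|_\infty\le\eta+\Delta/2$; a row restriction fixed in advance is uniform; and summing the $n$ squared coefficient bounds of $\mathbf s$ and $\mathbf e$ gives $nV$. Your explicit check of the side conditions $V\ge1$, $V$ rational, and $1\le B\le\lfloor\sqrt{nV}\rfloor$ is a small addition that the paper leaves implicit.
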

\begin{proof}
Subtract $\mathbf t_0$ from~\eqref{umldsa:public-relation} and use
\eqref{umldsa:power2round}. The coefficient bound on $\mathbf s_2$
and $|r_0|\le\Delta/2$ give~\eqref{umldsa:lifted-relation}.
A predetermined restriction to $\ell$ rows of a uniform ring matrix
is a uniform square ring matrix. Both $\mathbf s_1$ and the
restricted error have $n$ coefficients, so summing their squared
bounds gives~\eqref{umldsa:energy}. These statements hold for every
realized secret and rounding output.
\end{proof}

\subsection{Recovery of the original secret component}

\begin{theorem}[ML-DSA secret recovery]
\label{umldsa:recovery}
For the family in Definition~\ref{umldsa:family}, a randomized
algorithm recovers the original integer coefficient vector of
$\mathbf s_1$ from $(\mathbf A,\mathbf t_1)$ with probability
$1-o(1)$ over the public matrix and independent algorithmic
randomness. Its expected time and space are
\begin{equation}
 2^{(1/2+o(1))n/\ln\ln n},\qquad n=\ell m.
 \label{umldsa:complexity}
\end{equation}
\end{theorem}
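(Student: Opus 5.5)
The plan is to reduce directly to Theorem~\ref{um:recovery} (the prime cyclotomic module recovery) applied to the square row block of Lemma~\ref{umldsa:lift}. First fix, before drawing $\mathbf A$, the row set $J=\{1,\ldots,\ell\}$. By Lemma~\ref{umldsa:lift}, $\mathbf A_J$ is uniform in $R_q^{\ell\times\ell}$, and the public data give $\mathbf y=\operatorname{coeff}(\Delta(\mathbf t_1)_J)$ with $\mathbf y=\mathbf M_J\mathbf s+\mathbf e\pmod q$. This is exactly the model of Definition~\ref{ua:mlkem-definition} with module rank $k$ replaced by $\ell$, degree $m$, and $n=\ell m$ observations. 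It comes with the deterministic certificate $B=\eta$, $V=\eta^2+(\eta+\Delta/2)^2$, so $p_E=0$. The rest of $\mathbf t_1$ is discarded; its only role would be completing $\mathbf s_2$ and $\mathbf t_0$ afterwards, which the statement does not require.

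Second, check the hypotheses of Definition~\ref{ua:mlkem-definition}. We have $q\equiv1\pmod{2m}$, so the splitting degree is $\operatorname{ord}_{2m}(q)=1$, a constant. We also have $\ln q\le C_q\ln n$, which is at most $(\ln n)^{c}$ for any fixed $c>1$. Since $\delta=1$ forces $q\ge2m+1$, Lemma~\ref{um:modulus} gives $q=\Omega(n)$. I need $\ln(q/\sqrt n)=\Theta(\ln n)$, so I use the lower bound $q\ge 2m+1=2n/\ell+1$, which gives $\ln(q/\sqrt n)\ge\tfrac12\ln n-O(1)$. For the upper bound, $\ln(q/\sqrt n)\le C_q\ln n$.

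Third, compute the depth. The coefficient bounds give $U^2=V+3B^2/n=(\ln n)^{O(1)}$, since $\eta$ and $\Delta$ are polylogarithmic. Hence $1+\ln U=O(\ln\ln n)$, and
\[
 D=\ln\frac{\ln(q/\sqrt n)}{1+\ln U}=\ln\ln n-O(\ln\ln\ln n)\sim\ln\ln n .
\]
In particular $D\to\infty$. Theorem~\ref{um:recovery}, applied with $U=\sqrt V$ via Remark~\ref{ue:substitution}, then recovers $\mathbf s=\operatorname{coeff}(\mathbf s_1)$ with probability $1-o(1)$. This holds uniformly over secrets and rounding residues that may depend on $\mathbf A$, because the certificate is deterministic and the geometry event concerns only $\mathbf A_J$. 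The expected time and space are $2^{(1/2+o(1))n/\ln\ln n}$.

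The main obstacle is making sure that the geometry of the restricted block is truly that of Proposition~\ref{um:geometry}. That proposition needs uniformity of $\mathbf A_J$ and independence of its rows, which holds because $J$ is fixed in advance. It also needs the restricted error $\mathbf e$, which depends on all of $\mathbf A$ through $\mathbf t_0$, to be covered. It is covered because Theorem~\ref{ue:main} allows arbitrary dependence of $(\mathbf s,\mathbf e)$ on the operator, provided the energy certificate holds. The modulus window $\ln(q/\sqrt n)=\Theta(\ln n)$ from the lower bound $q\ge2m+1$ is the one other point to verify carefully, since it is what keeps the constant $1/2$ in the exponent.
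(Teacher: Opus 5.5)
Your proposal is correct and follows the paper's proof essentially step for step. You restrict to the predetermined rows $J=\{1,\ldots,\ell\}$ via Lemma~\ref{umldsa:lift}, note that $\operatorname{ord}_{2m}(q)=1$ so Theorem~\ref{um:recovery} applies with module rank $\ell$ and degree $m$, and use $2m+1\le q\le n^{C_q}$ with polylogarithmic $U$ to obtain $D\sim\ln\ln n$. The paper's proof leaves two of your checks implicit, namely the condition $\ln q\le(\ln n)^{c}$ and the coverage of a restricted error that depends on all of $\mathbf A$; both go through as you describe.
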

\begin{proof}
Fix $J=\{1,\ldots,\ell\}$, so the algorithm uses $\ell$ of the
available $k$ ring equations. Lemma~\ref{umldsa:lift} supplies the
uniform square module operator and its energy certificate with
$p_E=0$. The modulus congruence gives splitting degree
$\operatorname{ord}_{2m}(q)=1$, so
Theorem~\ref{um:recovery} applies with its degree parameter $d$
set to the present $m$ and its module rank set to $\ell$.
The dependence of $\mathbf e$ on the public matrix and the secret
is included in that theorem.

For the depth in Theorem~\ref{um:recovery}, put
\[
 U=\sqrt V,\qquad
 D=\ln\frac{\ln(q/\sqrt n)}{1+\ln U}.
\]
The bounds $q\ge2m+1$ and $q\le n^{C_q}$ imply
$\ln(q/\sqrt n)=\Theta(\ln n)$, since $\ell$ is fixed.
Moreover $1\le U=(\ln n)^{O(1)}$, whence
\[
 D=\ln\ln n+O(\ln\ln\ln n)\sim\ln\ln n.
\]
The theorem therefore recovers the original $\mathbf s_1$ with the
stated probability and resource bound. Forming the public lift,
selecting the rows, and reading the remaining public data cost only
polynomial time. This proves Corollary~\ref{ui:mldsa-cor}.
\end{proof}

\subsection{Completion of the short algebraic components}

The recovered vector also determines a compatible choice of the
other short algebraic components of the public-key relation.

\begin{proposition}[Coefficientwise completion]
\label{umldsa:completion}
Given $(\mathbf A,\mathbf t_1)$ and the recovered original
$\mathbf s_1$, a deterministic polynomial-time algorithm finds
$\mathbf s_2',\mathbf t_0'\in R^k$ such that
\begin{equation}
 \begin{gathered}
 \|\mathbf s_2'\|_\infty\le\eta,\qquad
 -\Delta/2<(t_0')_j\le\Delta/2\quad\text{for every coefficient }j,\\
 \operatorname{Power2Round}_{d}(\mathbf A\mathbf s_1+\mathbf s_2')
       =(\mathbf t_1,\mathbf t_0'),\\
 \mathbf A\mathbf s_1+\mathbf s_2'
       =\Delta\mathbf t_1+\mathbf t_0'\pmod q.
 \end{gathered}
 \label{umldsa:completion-output}
\end{equation}
Thus the tuple $(\mathbf s_1,\mathbf s_2',\mathbf t_0')$ has the
specified coefficient bounds and reproduces the same compressed
public key. Its construction and all checks use polynomial bit complexity.
\end{proposition}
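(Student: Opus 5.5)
The plan is to compute the integer vector $\mathbf w=\operatorname{coeff}(\mathbf A\mathbf s_1)$ exactly modulo $q$, as least nonnegative residues, for all $k$ ring components. Since $\mathbf s_1$ is known and $\mathbf A$ is public, this costs polynomially many ring multiplications. The target identity is then coefficientwise. For each coefficient index $j$, we need an integer $s_j'\in[-\eta,\eta]$ such that
\[
 \operatorname{Power2Round}_d\bigl((w_j+s_j')\bmod q\bigr)=(t_{1,j},\,r_{0}),
\]
and we set $t_{0,j}':=r_0$. The search is a finite scan over the $2\eta+1=(\ln n)^{O(1)}$ candidates: for each candidate, form $r=(w_j+s_j')\bmod q$, evaluate \eqref{umldsa:power2round}, and accept the first candidate whose high part equals the published $t_{1,j}$. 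This gives $O(km\eta)$ scalar operations on $O(\log q)$-bit integers, so the whole procedure is deterministic and polynomial time.

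Existence is the step that needs an argument. Since $\mathbf s_1$ is the original secret, the original $\mathbf s_2$ satisfies $\mathbf t=\mathbf A\mathbf s_1+\mathbf s_2\pmod q$ and $\operatorname{Power2Round}_d(\mathbf t)=(\mathbf t_1,\mathbf t_0)$. Hence for each $j$ the original coefficient $s_{2,j}$ is an admissible candidate, and the scan succeeds. The scan may return a different value $s_j'$ than $s_{2,j}$; this is fine, because only the stated properties are claimed, not equality with the original. This is why the recovery of the \emph{original} $\mathbf s_1$ from Theorem~\ref{umldsa:recovery} matters: with a different $\mathbf s_1$, no bounded completion need exist.

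The remaining properties follow directly from the accepted candidate. The coefficient bound $\|\mathbf s_2'\|_\infty\le\eta$ holds by the range of the scan. The bound $-\Delta/2<t_{0,j}'\le\Delta/2$ holds by the remark after \eqref{umldsa:power2round}, which also gives $r=\Delta r_1+r_0\pmod q$. Combined with $r_1=t_{1,j}$, this yields the last congruence in \eqref{umldsa:completion-output} coefficientwise, and so as module vectors. Because $\operatorname{Power2Round}_d$ is applied coefficientwise to the residue of $\mathbf A\mathbf s_1+\mathbf s_2'$, the coefficientwise acceptances assemble into the module identity. The tuple therefore reproduces $\mathbf t_1$, and the checks can be rerun in polynomial time.

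The only point I expect to need care is the tie and representative convention: $\operatorname{Power2Round}_d$ must be applied to the least nonnegative representative of $w_j+s_j'$ modulo $q$, not to the integer sum. Otherwise the output could differ near $0$ or $q$, so I will reduce modulo $q$ before evaluating \eqref{umldsa:power2round}. Everything else is routine.
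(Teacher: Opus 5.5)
Your proposal is correct and follows the paper's proof essentially step for step. Like the paper, you compute $\mathbf A\mathbf s_1 \bmod q$ and scan $z\in\{-\eta,\ldots,\eta\}$ at each coefficient for a matching high part. The original $(s_2)_j$ guarantees that the scan succeeds, and the coefficient bounds and the congruence then follow from \eqref{umldsa:power2round} with $O(km\eta)$ polylogarithmic-size scalar work.
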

\begin{proof}
Compute $\mathbf a=\mathbf A\mathbf s_1\pmod q$. For each
coefficient position $j$, enumerate $z\in\{-\eta,\ldots,\eta\}$
and retain the first one for which the first component of
$\operatorname{Power2Round}_{d}(a_j+z)$ equals $(t_1)_j$.
The original coefficient $(s_2)_j$ is a valid choice, so this search
always succeeds. Set $(s_2')_j=z$ and let $(t_0')_j$ be the second
component of the same rounding operation. Equation~\eqref{umldsa:power2round}
gives all the asserted bounds and identities. There are $km$
positions and $2\eta+1$ trials per position, with $O(\ln q+\ln\Delta)$-bit
rounding arithmetic. Since $\eta$ is polylogarithmic, the work is
polynomial. Recomputing the rounded product verifies the output exactly.
\end{proof}

\section{Recovering Short Pairs from Public Quotients}
\label{uq:overview}

For a public matrix $\mathbf H\in(\mathbb Z/q\mathbb Z)^{m\times m}$,
define
\[
 \Lambda_{\mathbf H}=\ker_q[\mathbf H\mid-\mathbf I_m]
 =\{(\mathbf u,\mathbf v)\in\mathbb Z^{2m}:
                    \mathbf H\mathbf u=\mathbf v\pmod q\}.
\]
The Falcon parameter family uses the multiplication matrix of $h$.
Cyclic NTRU uses that of $\widehat h=3^{-1}h\bmod q$.
In both cases, recovery seeks a nonzero short pair passing public
algebraic checks.

\paragraph{Algorithmic picture.}
The algorithm fills the coordinates of $\mathbf u$ one at a time,
completes $\mathbf v$ from the public relation, and checks the resulting
pair. Dual Gaussian lists estimate the weights used to choose each
coordinate. The public mass bound sets the required accuracy and
number of repetitions.

An anchor is a fixed positive coordinate $u_i=a$, with
$1\le a\le\lfloor B\rfloor$. Since $B<q/2$, every nonzero pair of
norm at most $B$ has $\mathbf u\ne\mathbf0$, so either it or its
negative agrees with some anchor. For each position $i$, fix an order
for the other coordinates. These $m$ orders use at most $m$ sets of
free coordinates each, hence at most $m^2$ sets in total. Each set has
its own dual Gaussian list, reused across coordinate values and
repetitions.

Suppose the current free set is $S$ and the next coordinate is $u_j$.
For a candidate value $b$, let $A_b$ contain all lattice pairs that
agree with the coordinates already fixed and with $u_j=b$. Its
Gaussian mass sums the weights of all possible completions. The
ideal choice is
\[
 \Pr[u_j=b\mid\text{fixed coordinates}]
 \propto\rho_{s_{\rm search}}(A_b),
\]
over integers $b\in[-\lfloor B\rfloor,\lfloor B\rfloor]$ whose fixed
coordinates have squared norm at most $B^2$. All these child slices
have the direction lattice for $S\setminus\{j\}$. Their estimates use
the same list with different shifts in the cosine average. The
common mass denominator cancels when the weights are normalized.
The algorithm samples one value from the estimated weights and
continues along that path.

Once $\mathbf u$ is complete, set
$\mathbf v=\operatorname{lift}_q^{\rm cen}(\mathbf H\mathbf u)$.
Retain the pair and its negative if the lattice, nonzero, and norm
checks pass. Repeat the search $T$ times for every anchor using the
same lists and fresh choices. Scan the resulting table with the
public verifier to obtain a pair satisfying the application's
algebraic conditions.

For a fixed short witness, the ideal transition probabilities are
bounded below by ratios of successive slice masses. Their product
telescopes to a lower bound from the witness's Gaussian weight and
the anchor mass. The estimates preserve this bound up to a constant
factor, and repetition gives high success probability.
Algorithm~\ref{alg:affine-slice-search} gives the search procedure.
Sections~\ref{uf:persistent-lists} and~\ref{uf:affine-recovery}
of Appendix~\ref{uf:section} give the list construction and search
analysis.

\subsection{From Gaussian mass to verifiable search}

The following criterion combines the public mass bound with the list
construction, search, and verification steps above.

Fix computable rational constants
$0<c<1$, $1/2<\beta<1$, and $A_q,A_0,C_B,c_{\rm search}>0$.
Use the finite schedule of Theorem~\ref{uf:persistent-theorem}, with
\[
 \begin{gathered}
 0\le\ln\ln m-D_{\rm Q}\le2^{-m^2},\qquad
 (1-2^{-m^2})q/m^\beta\le s_{\rm mass}^2\le q/m^\beta,\\
 s_{\rm search}^2=c_{\rm search} qD_{\rm Q}/m.
 \end{gathered}
\]
The quantities $D_{\rm Q},s_{\rm mass}^2$ are positive dyadic rationals.
The widths $s_{\rm mass}$ and $s_{\rm search}$ control the mass bound
and the search distribution, respectively. The constant $c_{\rm search}>0$ is a
fixed scheduling constant.

The supplied public bound $A_{\rm mass}$ on
$\rho_{s_{\rm mass}}(\Lambda_{\mathbf H})$ sets the list accuracy
and number of search repetitions.

\begin{theorem}[Mass criterion]
\label{uq:mass-search-framework}
Let $m\to\infty$ and
\[
 2m+1\le q\le m^{A_q},\qquad
 \ln(q/m)\le(\ln m)^{1-c}.
\]
Given $\mathbf H$, public rational $A_{\rm mass}\ge1$ and $B_{\rm Q}^2$
of polynomial bit length, and a polynomial-time verifier $\mathsf V$,
suppose
\[
 \begin{gathered}
 \rho_{s_{\rm mass}}(\Lambda_{\mathbf H})\le A_{\rm mass},\qquad
 \ln A_{\rm mass}\le A_0m/D_{\rm Q},\\
 s_{\rm mass}^2\ge8s_{\rm search}^2,
 \end{gathered}
\]
and $1\le B_{\rm Q}\le C_B\sqrt q<q/2$.
If $\mathsf V$ accepts some nonzero
$\mathbf z\in\Lambda_{\mathbf H}$ with $\|\mathbf z\|\le B_{\rm Q}$,
then for every fixed $K>0$ a finite randomized algorithm outputs such
a vector with probability at least $1-m^{-K}$, in time and space
$2^{O(m/D_{\rm Q})}=2^{O(m/\ln\ln m)}$.
\end{theorem}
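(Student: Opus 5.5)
The plan is to assemble the theorem from three ingredients: a witness-path lower bound for the ideal slice search, uniformly accurate slice-mass estimates from reusable dual Gaussian lists, and repetition followed by a verifier scan. Fix an accepted witness $\mathbf z=(\mathbf u,\mathbf v)$ with $\|\mathbf z\|\le B_{\rm Q}$. Since $B_{\rm Q}<q/2$, the vector $\mathbf u$ is nonzero, so either $\mathbf z$ or $-\mathbf z$ has a positive coordinate $u_i=a\le\lfloor B_{\rm Q}\rfloor$. Fix that anchor and the associated coordinate order.

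For the ideal search, each transition picks $u_j=b$ with probability $\rho_{s_{\rm search}}(A_b)/\sum_{b'}\rho_{s_{\rm search}}(A_{b'})$. The denominator is at most the mass of the parent slice, and the candidate range only removes completions. The product along the witness's path therefore telescopes to at least $\rho_{s_{\rm search}}(\mathbf z)/\rho_{s_{\rm search}}(\text{anchor slice})$. I would bound the anchor slice mass by $\rho_{s_{\rm search}}(\Lambda_{\mathbf H})$. Since $s_{\rm search}\le s_{\rm mass}$, monotonicity bounds this by $\rho_{s_{\rm mass}}(\Lambda_{\mathbf H})\le A_{\rm mass}$. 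The final step, which sets $\mathbf v=\operatorname{lift}^{\rm cen}_q(\mathbf H\mathbf u)$, recovers the witness's $\mathbf v$ because $\|\mathbf v\|_\infty<q/2$. The path probability is thus at least
\[
 p_B=A_{\rm mass}^{-1}e^{-\pi B_{\rm Q}^2/s_{\rm search}^2}
 \ge\exp\bigl(-A_0m/D_{\rm Q}-\pi C_B^2m/(c_{\rm search}D_{\rm Q})\bigr)
 =2^{-O(m/D_{\rm Q})}.
\]

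Next, the estimated weights must reproduce each ideal transition ratio up to a factor $1\pm1/m$. The product over at most $m$ steps then loses only a constant factor. By Poisson summation~\eqref{up:poisson-mass}, each slice mass equals $s^{d}/\det$ times a cosine average over the dual of the direction lattice $L_S$ (coordinates in $S$ free). Because all children share that direction lattice, the prefactor cancels. It therefore suffices to estimate the average $\mathbb E_{D_{L_S^*,1/s}}\cos(2\pi\langle\mathbf X,\mathbf c\rangle)$ to additive accuracy $\ll p_B/m$ relative to the zero-frequency term. I would obtain this from Theorem~\ref{uf:persistent-theorem}, which builds, for all $\le m^2$ sets $S$ at once, strongly conditionally similar lists from one persistent random extension. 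These lists are formed by differencing predetermined pairs and keeping the differences lying in the original lattice.

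Lemma~\ref{up:concentration}, combined with a union bound over all shifts ever queried, gives simultaneous accuracy. There are at most $m^2(2B_{\rm Q}+1)^{m}$ shifts, which is $e^{O(m\ln q)}$, so lists of length $2^{O(m/D_{\rm Q})}$ suffice provided the accuracy target $p_B/m$ is only $2^{-O(m/D_{\rm Q})}$. This is where the condition $s_{\rm mass}^2\ge8s_{\rm search}^2$ enters. I would use it to bound the nonzero dual contributions, or the estimator variance, by a source mass $\rho_{s_{\rm mass}}$ via $\|\mathbf x-\mathbf c\|^2\ge\frac12\|\mathbf x\|^2-\|\mathbf c\|^2$, as in Theorem~\ref{ur:score}. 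The list error then stays a fixed fraction of the leading term for every slice touched along a path of weight at least $p_B$. This is Theorem~\ref{uf:candidate-table}.

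Finally, run $T=\lceil(K+1)\ln m/p_B'\rceil=2^{O(m/D_{\rm Q})}$ independent searches per anchor, where $p_B'=\Omega(p_B)$ is the estimated path probability. The fixed witness (or its negative) then appears in the table except with probability $m^{-K-1}$. Adding the list-failure and implementation probabilities keeps the total failure below $m^{-K}$. Scanning the table with $\mathsf V$ and the exact lattice, nonzero, and norm checks returns an accepted vector. The total cost is the number of lists times the list size times $T$ times $\mathrm{poly}(m,\ln q)$, which is $2^{O(m/D_{\rm Q})}$; the modulus window gives $D_{\rm Q}\sim\ln\ln m$.

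The main obstacle I expect is the list construction for the many direction duals with a single error budget, that is, verifying the smoothing hypotheses of the persistent extension from the source mass bound alone. I would lean entirely on Theorem~\ref{uf:persistent-theorem} for this and check only that its accuracy and length parameters match $p_B$ above.
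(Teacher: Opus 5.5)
Your proposal follows the paper's proof, which simply instantiates witness containment (Theorem~\ref{uf:candidate-table}: anchor, telescoping slice masses giving $p_B$, additive accuracy $\xi=p_B/[200m(2H+1)]$, and $T\asymp p_B^{-1}$ repetitions) with the persistent lists of Theorem~\ref{uf:persistent-theorem} at $\delta_{\rm out}\le\xi/16$, and then scans the table with $\mathsf V$. One correction: $s_{\rm mass}^2\ge8s_{\rm search}^2$ plays no role in the estimator, which is unbiased for the exact ratio~\eqref{uf:slice-poisson} and needs no Theorem~\ref{ur:score}-style leading-term control (the hypotheses give none, since $\Lambda_{\mathbf H}$ contains $\pm\mathbf z$ and its mass is bounded only by $A_{\rm mass}$); instead, the absolute bound $\rho_{s_{\rm search}}(A_j)/\rho_{s_{\rm search}}(L_{S'})\ge p_B$ for the correct child follows from $\mathbf z\in A_j$ and $L_{S'}\hookrightarrow\Lambda_{\mathbf H}$, and the width condition is consumed inside the list construction, where it gives $r\ge1$ and the quotient-removal smoothing $\eta_{\varepsilon_\star}(\widetilde K_S)\le q/s_{\rm mass}\le q/(2s_{\rm search})$.
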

\begin{proof}
Monotonicity gives $\rho_{s_{\rm search}}(\Lambda_{\mathbf H})\le A_{\rm mass}$.
For the search of Theorem~\ref{uf:candidate-table}, take
$A=A_{\rm mass}$, $B=B_{\rm Q}$, and
\[
 p_B=A_{\rm mass}^{-1}e^{-\pi B_{\rm Q}^2/s_{\rm search}^2}
 \ge \exp\!\left(-(A_0+\pi C_B^2/c_{\rm search})m/D_{\rm Q}\right).
\]
Allocate $m^{-(K+2)}$ to each of preprocessing, list implementation,
categorical sampling, query error, and missed-witness error.
With $H=\lfloor B_{\rm Q}\rfloor$, set
$\xi=p_B/[200m(2H+1)]$.
Equations~\eqref{uf:search-repetitions} and~\eqref{uf:query-list-size}
allow a list length $N=p_B^{-2}\operatorname{poly}(m)$ and a repetition
count $T=p_B^{-1}\operatorname{poly}(m)$ at these error budgets.
Each slice query scans $N$ samples. There are at most
$m^2H(2H+1)T$ queries, giving total query cost
$NT\operatorname{poly}(m)$, including finite precision.
Since $\ln(1/p_B)=O(m/D_{\rm Q})$ and $H$ is polynomial in $m$,
this cost is $\exp(O(m/D_{\rm Q}))$, and
$\ln(1/\xi)=O(m/D_{\rm Q})$.
Choose the constants $a,A_N$ in
Theorem~\ref{uf:persistent-theorem} so that its lists meet these
bounds and $\delta_{\rm out}\le\xi/16$.
Section~\ref{uf:affine-recovery} implements these bounds by rounding
accuracy thresholds downward and list sizes and repetition counts
upward, with fixed constant margins, using the finite sampling tools
of Section~\ref{ue:common-sampling}.

Fix one accepted witness for the analysis.
Theorem~\ref{uf:candidate-table} puts it and its negative in the
candidate table except with the sum of the five budgets, at most
$m^{-K}$ for large $m$. Scan the table and return the first entry
accepted by $\mathsf V$. Lattice membership, nonzeroness, and the norm
bound are checked exactly. Retaining both signs permits an arbitrary
verifier. The table and all checks cost $\exp(O(m/D_{\rm Q}))$.
\end{proof}

Algorithm~\ref{alg:affine-slice-search} uses the prepared dual lists
to produce the candidate table.

\begin{algorithm}[H]
\caption{\textsc{AffineSliceSearch}: short pair candidate table}
\label{alg:affine-slice-search}
\KwIn{A public matrix $\mathbf H\in(\mathbb Z/q\mathbb Z)^{m\times m}$, dual Gaussian lists, a norm bound $1\le B<q/2$, and repetition count $T$.}
\KwOut{A table $\mathcal C(\mathbf H)$ of nonzero short pairs in $\Lambda_{\mathbf H}$.}
Initialize $\mathcal C(\mathbf H)\gets\varnothing$\;
\ForEach{coordinate $i\in[m]$ and positive anchor $a\in[1,\lfloor B\rfloor]$}{
  \For{$t\gets1$ \KwTo $T$}{
    Set $u_i=a$, leave the other $\mathbf u$ coordinates free, and use fresh randomness for this run\;
    \While{some $\mathbf u$ coordinate remains free}{
      Evaluate the affine slice masses for next values in
      $[-\lfloor B\rfloor,\lfloor B\rfloor]$ whose fixed coordinates have squared norm at most $B^2$\;
      \If{the total nonnegative certified mass is zero}{
        Declare this run failed and continue with the next repetition\;
      }
      Sample the next value proportionally to the nonnegative certified masses\;
    }
    Set $\mathbf v\gets\operatorname{lift}_q^{\rm cen}(\mathbf H\mathbf u)$\;
    \If{$(\mathbf u,\mathbf v)\in\Lambda_{\mathbf H}\setminus\{\mathbf0\}$ and $\|(\mathbf u,\mathbf v)\|\le B$, checked exactly}{
      Add $(\mathbf u,\mathbf v)$ and its negative to $\mathcal C(\mathbf H)$\;
    }
  }
}
\Return{$\mathcal C(\mathbf H)$}\;
\end{algorithm}

\subsection{Falcon: recovery from the key generation quality bound}

In this subsection, the Falcon parameter family uses
$R=\mathbb Z[x]/(x^n+1)$ for a power of two $n$ and a prime
$q\equiv1\pmod{2n}$. A quality-qualified public key admits the
algebraic and quality witness in~\eqref{ui:falcon-quality-family}.
The quality quantity $\gamma_q$ is defined in~\eqref{uf:falcon-quality}.

\begin{theorem}[Falcon recovery]
\label{uf:falcon-quality-recovery}
Fix rational constants $C\ge1$ and $0<c<1$. Let $n=2^k\to\infty$
and let $q\equiv1\pmod{2n}$ be prime, with
\[
 \ln(q/n)\le(\ln n)^{1-c}.
\]
For every fixed $K>0$, there is a classical randomized
algorithm with the following guarantee for all sufficiently large $n$.
Given only $(n,q,h)$, $C$, and the prescribed public parameters,
suppose $h\in R_q$ admits $f,g,F_0,G_0\in R$ satisfying
\[
 \begin{gathered}
 f\bmod q\in R_q^\times,\qquad h=gf^{-1}\pmod q,\\
 fG_0-gF_0=q,\qquad \gamma_q(f,g)^2\le C^2q.
 \end{gathered}
\]
The algorithm returns $u,v,F,G\in R$ such that
\begin{equation}
 \begin{gathered}
 u\bmod q\in R_q^\times,\qquad vu^{-1}=h\pmod q,\\
 uG-vF=q,\qquad \gamma_q(u,v)^2\le C^2q,
 \end{gathered}
 \label{uf:quality-recovery-output}
\end{equation}
with probability at least $1-n^{-K}$ over its own randomness,
uniformly for every such public key. In particular,
$\|(u,v)\|_2\le C\sqrt q$, and the completed coefficient basis has
maximum Gram--Schmidt norm at most $C\sqrt q$.
Its time and space are $2^{O(n/\ln\ln n)}$.
The hidden constants may depend on $C,c,K$ and the fixed public
schedule constants.
\end{theorem}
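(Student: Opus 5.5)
The plan is to apply Theorem~\ref{uq:mass-search-framework} with $m=n$ and $\mathbf H=\mathbf H_h$, the coefficient matrix of multiplication by $h$. The length bound is $B_{\rm Q}^2=C^2q$, so $C_B=C$. The verifier $\mathsf V$ takes a candidate $(\mathbf u,\mathbf v)$ and performs four checks.
\begin{itemize}
\item It checks that $u\bmod q$ is invertible in $R_q$, which is a polynomial-time NTT test since $q\equiv1\pmod{2n}$.
\item It computes $\gamma_q(u,v)$ exactly from its definition~\eqref{uf:falcon-quality} with rational arithmetic, and tests $\gamma_q(u,v)^2\le C^2q$.
\item It runs the standard NTRUSolve completion over the tower of subfields to find $F,G$ with $uG-vF=q$, accepting only if the returned pair satisfies the equation exactly.
\item It returns $(u,v,F,G)$ on success.
\end{itemize}
The hidden witness $(f,g)$ is a nonzero lattice vector accepted by $\mathsf V$. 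Indeed $\|(f,g)\|\le\gamma_q(f,g)\le C\sqrt q$, since the quality measure dominates the norm of the pair by definition, and $F_0,G_0$ certify solvability. This gives the hypothesis "$\mathsf V$ accepts some nonzero $\mathbf z$" in Theorem~\ref{uq:mass-search-framework}. The window $\ln(q/n)\le(\ln n)^{1-c}$, together with $q\equiv1\pmod{2n}$, gives $2n+1\le q\le n^{A_q}$. We fix $c_{\rm search}$ small enough that $s_{\rm mass}^2\ge8s_{\rm search}^2$, which holds eventually because $q/n^\beta\gg qD_{\rm Q}/n$.

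The main step is the public mass bound $\rho_{s_{\rm mass}}(\Lambda_h)\le A_{\rm mass}=2$, uniformly over quality-qualified keys. I would proceed in four steps.

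\emph{Step 1.} From $fG_0-gF_0=q$, the NTRU matrix $\begin{pmatrix}f&F_0\\ g&G_0\end{pmatrix}$ in coefficient form is a basis of $\Lambda_h$. The quality condition is the statement that its Gram--Schmidt lengths, after size-reducing $(F_0,G_0)$ against $(f,g)$, are at most $C\sqrt q$. This is exactly what Falcon's $\gamma_q$, the max of $\|(f,g)\|$ and $\|(q\bar f/(f\bar f+g\bar g),q\bar g/(\cdots))\|$, encodes.

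\emph{Step 2.} Use the identity $q\Lambda_h^*=\mathbf J\Lambda_h$ with $\mathbf J$ orthogonal; this is the symplectic self-duality of NTRU lattices. Reversing a dual basis of a basis with Gram--Schmidt lengths $\le C\sqrt q$ then gives a basis $\mathbf b_1,\dots,\mathbf b_{2n}$ of $\Lambda_h$ with Gram--Schmidt lengths $\|\mathbf b_i^*\|\ge q/(C\sqrt q)=\sqrt q/C$.

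\emph{Step 3.} Iterate over fibres in these coordinates, using that a shifted one-dimensional Gaussian sum is at most the centered one. This gives $\rho_s(\Lambda_h)\le\prod_i\theta(s/\|\mathbf b_i^*\|)\le\theta(Cs/\sqrt q)^{2n}$.

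\emph{Step 4.} At $s=s_{\rm mass}$ we have $Cs/\sqrt q=Cn^{-\beta/2}$, and $\theta(x)\le1+3e^{-\pi/x^2}$ for $x\le1$. Hence the product is $\exp(6ne^{-\pi n^\beta/C^2})=1+n^{-\omega(1)}\le2$, and $\ln A_{\rm mass}=\ln2=O(n/D_{\rm Q})$ trivially.

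Applying Theorem~\ref{uq:mass-search-framework} then yields, with probability $1-n^{-K}$ over the algorithm's randomness, a table entry accepted by $\mathsf V$, hence outputs satisfying~\eqref{uf:quality-recovery-output}, in time $2^{O(n/\ln\ln n)}$. The norm and Gram--Schmidt claims follow from the quality bound via Step~1 applied to $(u,v,F,G)$ instead of $(f,g,F_0,G_0)$.

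I expect two points to be the main obstacles. The first is Step~2: stating precisely how $\gamma_q$ controls both Gram--Schmidt lengths, and the reversed-dual conversion, in the coefficient embedding rather than the FFT embedding. The negacyclic embedding is an isometry up to scaling, so I would verify the scaling carefully. The second is making $\mathsf V$ provably polynomial time. Here I would rely on the known polynomial-time NTRUSolve for power-of-two cyclotomics when a solution exists, and on the fact that any output is checked exactly, so a failure only causes rejection of that candidate. If this is delicate, the fallback is to accept based only on invertibility and quality, and to argue that solvability follows because $\Lambda_h$ has the NTRU module structure with $(u,v)$ primitive.
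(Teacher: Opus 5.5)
Your mass bound is the paper's argument: the quality basis, the isoduality $q\Lambda_h^*=\mathbf J\Lambda_h$, the reversed dual basis with Gram--Schmidt lengths at least $\sqrt q/C$, the product of one-dimensional theta sums, and the budget $A_{\rm mass}=2$ at $s_{\rm mass}^2\le q/n^\beta$. Your application of Theorem~\ref{uq:mass-search-framework} with $B_{\rm Q}^2=C^2q$ also matches; any fixed $c_{\rm search}$ works, and the paper takes $1$.

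\textbf{The gap is in the completion step of your verifier.} NTRUSolve is not a complete solver for $uG-vF=q$.
\begin{itemize}
\item The field-norm tower only searches for solutions of the lifted form $F=F'(x^2)v(-x)$, $G=G'(x^2)u(-x)$. It succeeds only if $q\in(N(u),N(v))$ at every level. At the bottom this means $\gcd(\operatorname{Res}(u,x^n+1),\operatorname{Res}(v,x^n+1))=1$.
\item When $u\bmod q$ is a unit, solvability is equivalent only to $(u,v)=R$ (Lemma~\ref{uf:coprime}). This is strictly weaker. In $\mathbb Z[i]$, $2+i$ and $2-i$ generate the unit ideal, yet both have norm $5$, so the bottom equation $5G'-5F'=q$ has no solution. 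More generally, elements lying in the two distinct primes above a split prime of the subfield have non-coprime relative norms but may still be coprime in $R$.
\item The theorem assumes only that some $F_0,G_0$ exist, not that NTRUSolve succeeds. Its hypothesis therefore does not exclude witnesses with non-coprime resultants. All unit multiples of the witness share its resultants up to sign. Your argument thus does not establish that $\mathsf V$ accepts any table entry, and the uniform guarantee over all quality-qualified keys is not proved.
\end{itemize}

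\textbf{Your fallback is unsound.} Invertibility together with the quality bound does not force $(u,v)=R$. One has $\gamma_q(2f,2g)=\max\{2\|(f,g)\|,\tfrac12\|(qf^*/d,qg^*/d)\|\}$ with $d=ff^*+gg^*$. So the non-primitive pair $(2f,2g)$ can satisfy $\gamma_q\le C\sqrt q$, while $q\notin(2f,2g)\subseteq 2R$. In that case no $F,G$ exist, and the required output cannot be produced.

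\textbf{The fix} is Lemma~\ref{uf:polynomial-completion}. Treat $uG-vF=q$ as the integer linear system $[\mathbf M_u\mid-\mathbf M_v](\operatorname{coeff}(G),\operatorname{coeff}(F))^T=q\mathbf e_0$. Decide it with a polynomial-time Hermite or Smith normal form computation, with transformation matrices. This decides solvability exactly and returns a solution of polynomial bit length whenever one exists. With this completion step and your exact quality predicate, the hidden witness provably passes every check, and the rest of your proof goes through.
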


\begin{proof}
The quality witness supplies a full basis with Gram--Schmidt lengths
at most $C\sqrt q$. The public lattice satisfies
$q\Lambda_h^*=\mathbf J\Lambda_h$ for an orthogonal map $\mathbf J$.
Its reversed dual basis therefore supplies another basis of $\Lambda_h$
with Gram--Schmidt lengths at least $\sqrt q/C$. Successive
one-dimensional Gaussian sums give
\[
 \rho_s(\Lambda_h)\le\theta(Cs/\sqrt q)^{2n},\qquad
 \theta(t)=\sum_{a\in\mathbb Z}e^{-\pi a^2/t^2}.
\]
At $s_{\rm mass}^2\le q/n^\beta$, this gives the public mass budget
$A_{\rm mass}=2$ for all sufficiently large $n$.
Theorem~\ref{uq:mass-search-framework}, with exact algebraic and
quality verification, then proves recovery.
Appendix~\ref{uf:quality-proof} gives the full proof. Its shared
ingredients are the lattice duality in
Lemma~\ref{uf:public-isoduality} and the integer basis completion
procedure in Appendix~\ref{uf:completion}.
Since $D_{\rm Q}\sim\ln\ln n$, this is the probability and
resource bound of Corollary~\ref{ui:falcon-cor}, with the same
algebraic and quality guarantees on the output.
\end{proof}
Every distribution supported on quality-qualified public keys
inherits the same success probability and complexity. This includes
KeyGen outputs satisfying the exact algebraic and quality checks above.

\subsection{Cyclic NTRU: distributional short key recovery}
\label{uq:ntru-distributional}

Use the NTRU-HPS/HRSS families in Section~\ref{ui:families}, with
the raw and accepted laws of Definition~\ref{uf:ntru-proposal-laws}.
The full cyclic lattice may contain additional points in the rational
span of the short pair. Their contribution is measured by
\[
 \Delta_{\rm sat}=[R_N:fR_N+gR_N+kR_N],\qquad k=(\widehat hf-g)/q.
\]
Proposition~\ref{uf:ntru-cyclic-geometry} gives
\[
 \rho_{s_{\rm mass}}(\Lambda_{\widehat h}^{\rm cyc})
 \le\Delta_{\rm sat}\,
 \theta(s_{\rm mass}/\sqrt{G_{\min}})^N
 \theta(s_{\rm mass}\sqrt{G_{\max}}/q)^N.
\]
Here $G_{\min},G_{\max}$ are scalar spectral extrema of
$|f(\zeta)|^2+|g(\zeta)|^2$ over $\zeta^N=1$, not ring elements.
The search needs only a total logarithmic budget $O(N/D_N)$.
Corollaries~\ref{uf:ntru-weak-mass-budget}
and~\ref{uf:ntru-weak-spectral-budget} make this pointwise criterion
explicit. For instance, it is enough that
\[
 G_{\min}\ge(q/N^\beta)\ln D_N,\qquad
 G_{\max}\le qN^\beta/\ln D_N,\qquad
 \ln\Delta_{\rm sat}\le C_{\rm sat} N/D_N.
\]

For the HPS and HRSS samplers, the analysis establishes
\[
 G_{\min}\ge q/N^\alpha,\qquad
 G_{\max}\le C_{\rm sp}q\ln N,\qquad
 \ln\Delta_{\rm sat}\le N/D_N,
 \qquad 2/3<\alpha<\beta<1.
\]
These bounds satisfy the sufficient conditions above for all
sufficiently large $N$ and give a fixed public mass budget.
The proof combines the spectral estimates with a bound on the
logarithm of the saturation index, then conditions on key generation
acceptance. Appendix~\ref{uf:ntru-accepted-bounds} collects this
joint event and its failure probability.
Fix rational $2/3<\alpha<\beta<1$, positive integers $K,K_{\rm sp}$,
and write
\begin{align}
 r_{\rm sp}(N)&=C\left(N^{1-3\alpha/2}+N^{-\alpha/2}
                                  +N^{-K_{\rm sp}}\right),\nonumber\\
 r_{\rm sat}(N)&=64D_N\ln(3N)e^{-\sqrt{\ln N}/4}+12N^{-1/2},
 \qquad a=86/256.
 \label{uq:ntru-input-failure}
\end{align}
Here $D_N$ is the prescribed positive dyadic approximation to
$\ln\ln N$. The effective constant $C$ depends only on the fixed
modulus window and the chosen spectral parameters.

\begin{theorem}[HRSS recovery]
\label{uf:ntru-hrss-distributional}
Let $N$ run along the supplied arithmetic sequence
\eqref{uf:ntru-arithmetic}, and use its HRSS modulus and correctness
conditions. Generate $h$ from the accepted law $\mu_N^{\rm HRSS}$.
For all sufficiently large $N$, a randomized algorithm using
only $(N,q,h)$ and the fixed public parameters returns a pair
accepted by $\mathsf{Valid}_{\mathsf{HRSS}}$ and its short key core
with probability at least
\begin{equation}
 1-\frac{r_{\rm sp}(N)+r_{\rm sat}(N)}{(1-a^{N-1})^2}-N^{-K}
 =1-o(1).
 \label{uf:ntru-hrss-success}
\end{equation}
Its time and space are $2^{O(N/\ln\ln N)}$.
\end{theorem}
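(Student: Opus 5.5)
The proof instantiates the mass criterion of Theorem~\ref{uq:mass-search-framework} with $m=N$ and $\mathbf H=\mathbf H_{\widehat h}$ on the cyclic lattice $\Lambda^{\rm cyc}_{\widehat h}$. The verifier is $\mathsf{Valid}_{\mathsf{HRSS}}$ composed with the short-key-core check of Lemma~\ref{uf:ntru-short-check}. The target length is $B_{\rm Q}^2=C_{\rm len}^2q$, chosen so that every HRSS pair $(f,g)$ with $g=(x-1)g_0$ has $\|(f,g)\|\le B_{\rm Q}$; this is deterministic, since TernaryPlus coefficients are ternary and $\|g\|^2\le 4\|g_0\|^2\le 4N$, while $q=\Theta(N)$. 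The work then has three parts.

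\textbf{Step 1: a public mass budget on a good event.} First I would fix the constants $\alpha<\beta$ and the mass width $s_{\rm mass}^2\sim q/N^\beta$. I would then define a good event $\mathcal G$ on the pair $(f,g)$ by three conditions:
\[
 G_{\min}\ge q/N^\alpha,\qquad G_{\max}\le C_{\rm sp}q\ln N,\qquad \ln\Delta_{\rm sat}\le N/D_N .
\]
On $\mathcal G$, Proposition~\ref{uf:ntru-cyclic-geometry} gives
\[
 \rho_{s_{\rm mass}}\le \Delta_{\rm sat}\,\theta(s_{\rm mass}/\sqrt{G_{\min}})^N\theta(s_{\rm mass}\sqrt{G_{\max}}/q)^N .
\]
Since $s_{\rm mass}^2/G_{\min}\le N^{\alpha-\beta}\to0$ and $s_{\rm mass}^2G_{\max}/q^2=O(N^{-\beta}\ln N)\to0$, both theta factors are $1+N\cdot e^{-N^{\Omega(1)}}\le2$ overall. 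Therefore the fixed public choice
\[
 \ln A_{\rm mass}=N/D_N+1=O(N/D_{\rm Q})
\]
works; this is Corollary~\ref{uf:ntru-weak-mass-budget}. I would take $c_{\rm search}$ small enough that $s_{\rm mass}^2\ge 8s_{\rm search}^2$, and check that the modulus window $q=\Theta(N)$ with $q\ge2N+1$ satisfies the framework hypotheses.

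\textbf{Step 2: bounding the probability of the bad event.} I would bound the failure of $\mathcal G$ first under the raw law, as the sum of a spectral part and a saturation part. The spectral part $r_{\rm sp}(N)$ comes from the spectral estimates: a lower tail for $|f(\zeta)|^2+|g(\zeta)|^2$, obtained by anti-concentration at each $N$th root of unity and a union bound, and an upper tail from subgaussian sums. The saturation part $r_{\rm sat}(N)$ comes from Markov's inequality applied to the bound
\[
 \ln\Delta_{\rm sat}\le\ln\bigl(N|f(1)|\operatorname{Norm}(f(\zeta_N),g(\zeta_N))\bigr),
\]
using the estimate on the expected logarithm of the ideal norm; I take both estimates as given from the appendix. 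The accepted law is the raw law conditioned on invertibility of $f$ modulo $2$ and $3$ and of $g$ modulo $2$. The order conditions $\operatorname{ord}_N(2)=\operatorname{ord}_N(3)=N-1$ make $\Phi_N$ irreducible modulo $2$ and $3$, so each invertibility condition fails only when the polynomial is divisible by $\Phi_N$. That has probability at most $a^{N-1}$, with $a=86/256$ dominating the TernaryPlus point masses. Because $f$ and $g_0$ are independent, the acceptance probability is at least $(1-a^{N-1})^2$. Conditioning can therefore multiply the raw bad probability by at most $(1-a^{N-1})^{-2}$, which gives the fraction in~\eqref{uf:ntru-hrss-success}.

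\textbf{Step 3: running the search.} On $\mathcal G$, Theorem~\ref{uq:mass-search-framework} applies with the fixed public $A_{\rm mass}$, and the true pair (or its negative) is accepted by the verifier. The algorithm therefore outputs an accepted short pair, with failure probability at most $N^{-K}$ over its internal randomness, in time $2^{O(N/\ln\ln N)}$. From the output pair, Lemma~\ref{uf:ntru-short-check} computes the decryption inverses, giving the short key core in polynomial time. A union bound over the input event and the algorithm's failure gives~\eqref{uf:ntru-hrss-success}.

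\textbf{Main obstacle.} I expect the hardest step to be the saturation bound: showing $\ln\Delta_{\rm sat}\le N/D_N$ with the stated failure $r_{\rm sat}$. This requires controlling the expected logarithm of $\operatorname{Norm}(f(\zeta_N),g(\zeta_N))$ for the structured $g=(x-1)g_0$ and handling the factor $f(1)$. I would try to bound the norm by the norm of $f(\zeta_N)$ alone, restricted to the primes dividing $g$. Failing that, I would fall back to a direct estimate of $\mathbb E[\ln|\mathrm{Res}(f,\Phi_N)|]$ via the averages of $\ln|f(\zeta)|$ over roots of unity.
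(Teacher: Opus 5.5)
Your proposal is correct and is essentially the paper's proof. The good event is that of Proposition~\ref{uf:ntru-accepted-geometry}: raw spectral and saturation failure, transferred to the accepted law through the acceptance probability $(1-a^{N-1})^2$ of Lemma~\ref{uf:ntru-algebraic-acceptance}. The public budget comes from Proposition~\ref{uf:ntru-cyclic-geometry}; the paper routes this through Corollary~\ref{uf:ntru-weak-spectral-budget} with $A_{\rm mass}=4^{\lceil3N/D_N\rceil}$. Theorem~\ref{uq:mass-search-framework} with the verifier of Lemma~\ref{uf:ntru-short-check} then finishes.

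Two small precisions. First, the saturation event must be stated for the raw proxy $\Xi_N$, since $\Delta_{\rm sat}$ needs $h$, which exists only after acceptance. Second, in your deferred saturation estimate, your first idea (restricting to prime ideals dividing $\pi_N(g)$) is exactly Lemma~\ref{uf:ntru-log-ideal-norm}. Your fallback through $\mathbb E[\ln|\operatorname{Res}(f,\Phi_N)|]$ would fail, because that quantity is of order $N\ln N$, far above $N/D_N$.
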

\begin{proof}
Proposition~\ref{uf:ntru-accepted-geometry} proves that, except with
the first failure term in~\eqref{uf:ntru-hrss-success}, the generating
pair satisfies
$G_{\min}\ge q/N^\alpha$, $G_{\max}\le C_{\rm sp}q\ln N$,
and $\ln\Delta_{\rm sat}\le N/D_N$.
Because $\alpha<\beta$, these imply the weak spectral conditions for all sufficiently large $N$:
$N^{\beta-\alpha}\ge\ln D_N$ and
$C_{\rm sp}\ln N\ln D_N\le N^\beta$.
Corollary~\ref{uf:ntru-weak-spectral-budget} therefore supplies the
public integer budget
\begin{equation}
 A_{\rm mass}=4^{\lceil3N/D_N\rceil},\qquad
 \ln A_{\rm mass}\le6N/D_N.
 \label{uf:ntru-distributional-budget}
\end{equation}
It has $O(N/D_N)$ bits and is fixed before the key is sampled.

Apply Theorem~\ref{uq:mass-search-framework} with $m=N$, $c_{\rm search}=1$,
$A_0=6$, $B_{\rm Q}^2=C_{\rm len}^2q$, and the actual dyadic width
$s_{\rm mass}^2$ from the finite schedule. The modulus $q=\Theta(N)$
meets its window, and
\[
 \frac{s_{\rm mass}^2}{s_{\rm search}^2}
 \ge(1-2^{-N^2})\frac{N^{1-\beta}}{D_N}\longrightarrow\infty.
\]
Lemma~\ref{uf:ntru-short-check} supplies a fixed rational
$C_{\rm len}$, a valid nonzero witness, and the exact verifier.
The radius satisfies $C_{\rm len}\sqrt q<q/2$ eventually. For each fixed witness
on the good event, fresh algorithmic randomness succeeds with
probability at least $1-N^{-K}$. Averaging over the accepted input
law proves~\eqref{uf:ntru-hrss-success}. The algorithm uses the public budget, radius, and quotient.
\end{proof}

\begin{theorem}[HPS recovery]
\label{uf:ntru-hps-distributional}
Use the supplied arithmetic sequence~\eqref{uf:ntru-arithmetic}
with the HPS weight conditions, and generate $h$ from the ideal
accepted law $\mu_N^{\rm HPS}$ based on uniform sampling of fixed
type vectors. A randomized algorithm with
the same public inputs returns a pair accepted by
$\mathsf{Valid}_{\mathsf{HPS}}$ and its short key core with
probability at least
\begin{equation}
 1-\frac{r_{\rm sp}(N)+r_{\rm sat}(N)}{1-a^{N-1}}-N^{-K}
 =1-o(1),
 \label{uf:ntru-hps-success}
\end{equation}
in time and space $2^{O(N/\ln\ln N)}$.
\end{theorem}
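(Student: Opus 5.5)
The HPS statement will be proved by the same route as Theorem~\ref{uf:ntru-hrss-distributional}, with the accepted law $\mu_N^{\rm HPS}$ in place of $\mu_N^{\rm HRSS}$. The one structural difference is the acceptance conditioning, which explains the single factor $(1-a^{N-1})$ in the denominator instead of its square.

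\textbf{Step 1: good event under the raw law.} First I bound the probability of the good event under the raw HPS law. This law has independent ternary coefficients for $f$ and a uniform fixed-type $g$ with $w/2$ coefficients of each sign. Proposition~\ref{uf:ntru-accepted-geometry}, in its HPS form, should give $G_{\min}\ge q/N^\alpha$, $G_{\max}\le C_{\rm sp}q\ln N$ and $\ln\Delta_{\rm sat}\le N/D_N$, except with raw probability $r_{\rm sp}(N)+r_{\rm sat}(N)$. The ingredients are as follows.
\begin{itemize}
\item The spectral lower bound comes from anticoncentration of $|f(\zeta)|^2+|g(\zeta)|^2$ at each $N$th root of unity, with a union bound over the $N$ roots. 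This yields the $N^{1-3\alpha/2}$ and $N^{-\alpha/2}$ terms.
\item The upper bound $G_{\max}$ comes from subgaussian tails of the exponential sums. For fixed type $g$, I would use negative association, or a comparison with sampling without replacement, to recover Hoeffding-type tails. This is the $N^{-K_{\rm sp}}$ term.
\item The saturation index is handled by the bound $N|f(1)|\operatorname{Norm}(f(\zeta_N),g(\zeta_N))$ together with an estimate on the expected logarithm of the ideal norm, followed by Markov's inequality. This gives $r_{\rm sat}$.
\end{itemize}

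\textbf{Step 2: pass to the accepted law.} HPS conditions on three events: $f$ invertible modulo $2$ and modulo $3$ in $S_N$, and $g$ invertible modulo $2$. The only randomness that matters is in $f$. Since $g$ has fixed type, its invertibility modulo $2$ should be either deterministic under the arithmetic conditions $\operatorname{ord}_N(2)=N-1$ or absorbed elsewhere; I would verify this from Definition~\ref{uf:ntru-proposal-laws}.

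With $\operatorname{ord}_N(2)=\operatorname{ord}_N(3)=N-1$, the polynomial $\Phi_N$ is irreducible modulo $2$ and modulo $3$. A ternary $f$ therefore fails invertibility only if its image lies in one specific proper ideal. That probability is at most $a^{N-1}$, with $a=86/256$ as fixed earlier, so the single acceptance factor is at least $1-a^{N-1}$. Conditioning then inflates the raw failure probability by at most $(1-a^{N-1})^{-1}$.

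\textbf{Step 3: mass budget.} On the good event, $\alpha<\beta$ gives $N^{\beta-\alpha}\ge\ln D_N$ and $C_{\rm sp}\ln N\ln D_N\le N^\beta$ for large $N$. Corollary~\ref{uf:ntru-weak-spectral-budget} then supplies the public budget $A_{\rm mass}=4^{\lceil3N/D_N\rceil}$, with $\ln A_{\rm mass}\le 6N/D_N$, fixed before the key is drawn.

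\textbf{Step 4: apply the search theorem.} Apply Theorem~\ref{uq:mass-search-framework} with $m=N$, $c_{\rm search}=1$, $A_0=6$ and $B_{\rm Q}^2=C_{\rm len}^2q$. The window conditions are checked exactly as in the HRSS proof, since $q=\Theta(N)$ and $s_{\rm mass}^2/s_{\rm search}^2\to\infty$. Lemma~\ref{uf:ntru-short-check}, in its HPS version, provides $C_{\rm len}$, a nonzero witness $(f,g)$ in $\Lambda_{\widehat h}$, and the verifier $\mathsf{Valid}_{\mathsf{HPS}}$. For each fixed good key the algorithm succeeds with probability $1-N^{-K}$. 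Averaging over $\mu_N^{\rm HPS}$ gives~\eqref{uf:ntru-hps-success}, and the time and space are $2^{O(N/\ln\ln N)}$.

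\textbf{Main obstacle.} The hardest part is Step 1 for the fixed-type $g$. Its coefficients are dependent, so the spectral anticoncentration and the expected-log-norm estimate for the saturation index cannot use independence directly. I would first try conditioning on the support of $g$, or coupling it with a negatively associated signed sampling, so that the HRSS-style estimates go through with the same $r_{\rm sp}$ and $r_{\rm sat}$ rates.
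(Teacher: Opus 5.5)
Your proposal is correct and follows essentially the same route as the paper. You bound the raw spectral and saturation failure by $r_{\rm sp}+r_{\rm sat}$, and transfer it to the accepted law via the single factor $1-a^{N-1}$; acceptance is exactly $\mathbf X\ne0$, since a nonzero fixed-type $g$ is automatically a unit modulo $2$. You then take the budget $A_{\rm mass}=4^{\lceil3N/D_N\rceil}$ from Corollary~\ref{uf:ntru-weak-spectral-budget} and apply Theorem~\ref{uq:mass-search-framework} with the verifier of Lemma~\ref{uf:ntru-short-check}, as the paper does.

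The obstacle you flag is already resolved in the cited results. Proposition~\ref{uf:ntru-log-saturation} conditions on $g$ and uses only the independent coefficients of $f$, so the dependence in $g$ never enters the saturation bound. Lemma~\ref{uf:ntru-fixed-type-small-ball} obtains the fixed-type small-ball and tail estimates by randomly pairing positions and attaching independent fair signs, which makes the summands conditionally independent.
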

\begin{proof}
The independent ternary and uniform fixed type blocks satisfy
the spectral bound of Proposition~\ref{uf:ntru-sampler-spectrum}
and the saturation bound of Proposition~\ref{uf:ntru-log-saturation}.
Lemma~\ref{uf:ntru-algebraic-acceptance} gives acceptance
probability $1-a^{N-1}$.
Proposition~\ref{uf:ntru-accepted-geometry} therefore bounds
the input failure probability by
$(r_{\rm sp}(N)+r_{\rm sat}(N))/(1-a^{N-1})$.

On the resulting event, the spectral and saturation bounds give
the public budget~\eqref{uf:ntru-distributional-budget},
as in the proof of Theorem~\ref{uf:ntru-hrss-distributional}.
Apply Theorem~\ref{uq:mass-search-framework} with $m=N$,
$c_{\rm search}=1$, $A_0=6$, and $B_{\rm Q}^2=C_{\rm len}^2q$.
The modulus $q=\Theta(N)$ meets its window, and the finite widths satisfy
$s_{\rm mass}^2/s_{\rm search}^2\to\infty$.
Lemma~\ref{uf:ntru-short-check} supplies the HPS witness and its exact
verifier, with $C_{\rm len}\sqrt q<q/2$ eventually.
For every fixed input on this event, the search succeeds with
probability at least $1-N^{-K}$.
Averaging over the accepted HPS law and adding the input failure
proves~\eqref{uf:ntru-hps-success} and the stated resource bounds.
\end{proof}

\begin{corollary}[HPS recovery with sorting labels]
\label{uf:ntru-hps-tags-distributional}
For the same HPS parameter sequence, generate $h$ using the sorting
sampler with label length $b_N$ and accepted law
$\mu_{N,b_N}^{\rm HPS}$. The algorithm succeeds with probability
at least
\begin{equation}
 1-\frac{r_{\rm sp}(N)+r_{\rm sat}(N)+\varepsilon_{\rm FT}(N)}
              {1-a^{N-1}}-N^{-K}.
 \label{uf:ntru-hps-tag-success}
\end{equation}
For $b_N=\lceil(K_{\rm tag}+3)\log_2N\rceil$, with any fixed
positive integer $K_{\rm tag}$, this is $1-o(1)$ and the time and
space bound remains $2^{O(N/\ln\ln N)}$.
\end{corollary}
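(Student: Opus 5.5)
The proof will mirror that of Theorem~\ref{uf:ntru-hps-distributional}; the only new ingredient is a total-variation comparison between the sorting sampler and the ideal uniform fixed-type sampler. Write $\nu_{N,b_N}^{\rm raw}$ for the raw law of $(f,g)$ produced by the sorting sampler with label length $b_N$, and $\nu_N^{\rm raw}$ for the ideal raw law in which $g$ is uniform among fixed-type vectors. I would first establish, or take from the definition of $\varepsilon_{\rm FT}(N)$ in Definition~\ref{uf:ntru-proposal-laws}, that
\[
 \operatorname{TV}(\nu_{N,b_N}^{\rm raw},\nu_N^{\rm raw})\le\varepsilon_{\rm FT}(N).
\]
The standard argument is that, conditioned on all $N-1$ random labels being distinct, sorting by labels produces a uniformly random permutation of the fixed multiset, and hence a uniform fixed-type vector. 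A union bound over pairs then gives collision probability at most $\binom{N-1}{2}2^{-b_N}\le N^22^{-b_N}$, so $\varepsilon_{\rm FT}(N)\le N^{2}2^{-b_N}$. Ties broken deterministically change the output only on the collision event.

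Next, let $E$ be the raw good event in which the spectral bounds and the bound $\ln\Delta_{\rm sat}\le N/D_N$ hold and the key is accepted. This event depends only on $(f,g)$. Under the ideal raw law, Propositions~\ref{uf:ntru-sampler-spectrum} and~\ref{uf:ntru-log-saturation} bound the spectral and saturation failures by $r_{\rm sp}(N)+r_{\rm sat}(N)$. Transferring to the sorting law costs an additional $\varepsilon_{\rm FT}(N)$. Acceptance involves only invertibility of $f$ modulo $2$ and $3$ and of $g$ modulo $2$. The $f$ block is unchanged by sorting, and $g$ has a fixed type, so the sorted $g$ is a permutation of the same multiset. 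I expect Lemma~\ref{uf:ntru-algebraic-acceptance} to give acceptance probability at least $1-a^{N-1}$ again, either because its proof depends only on the type or because the TV bound can absorb any difference. Conditioning on acceptance, as in Proposition~\ref{uf:ntru-accepted-geometry}, divides the raw failure probability by the acceptance probability. This yields the input failure term in \eqref{uf:ntru-hps-tag-success}.

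On the good event, the rest of the argument is identical to the HPS proof. The budget \eqref{uf:ntru-distributional-budget} holds, Lemma~\ref{uf:ntru-short-check} supplies the witness and verifier, and Theorem~\ref{uq:mass-search-framework} succeeds with probability at least $1-N^{-K}$ for each fixed input. Averaging over the accepted law gives \eqref{uf:ntru-hps-tag-success}. The time bound is unchanged, because the algorithm never uses the labels.

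For the asymptotic claim, take $b_N=\lceil(K_{\rm tag}+3)\log_2N\rceil$. Then $\varepsilon_{\rm FT}(N)\le N^{-K_{\rm tag}-1}=o(1)$. We also have $r_{\rm sp}(N)=o(1)$ because $\alpha>2/3$, $r_{\rm sat}(N)=o(1)$ because $D_N\ln N=e^{o(\sqrt{\ln N})}$, and $a^{N-1}\to0$. I expect the main obstacle to be verifying that the acceptance and saturation estimates are stated for the raw law in a form that transfers cleanly under total variation, rather than only for the conditioned law. My first attempt would be to phrase everything as bounds on raw-event probabilities and only then divide by the acceptance probability.
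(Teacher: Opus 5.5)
Your proposal is correct and follows the paper's proof. You pay $\varepsilon_{\rm FT}$ once for the whole raw good event, which should be stated with the proxy $\Xi_N$ rather than $\Delta_{\rm sat}$, since $\Delta_{\rm sat}$ is only defined after acceptance. You keep the denominator $1-a^{N-1}$ because HPS acceptance is exactly the event $\mathbf X\ne\mathbf 0$; your first reason is the right one, since absorbing a difference through total variation would change the denominator. Then you divide as in Proposition~\ref{uf:ntru-accepted-geometry} and rerun the HPS search argument.

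The one adjustment concerns the total variation bound. Your all-labels-distinct argument only gives $\binom{N-1}{2}2^{-b_N}$, which is larger than $\varepsilon_{\rm FT}(N)$. For the displayed bound you should cite Lemma~\ref{uf:ntru-fixed-type-tv}, whose coupling charges only collisions between entries of different categories.
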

\begin{proof}
Lemma~\ref{uf:ntru-fixed-type-tv} pays $\varepsilon_{\rm FT}$ once
for the entire raw good event. Algebraic acceptance still depends
only on the nonzero $f$ block and has probability $1-a^{N-1}$.
Thus Proposition~\ref{uf:ntru-accepted-geometry} gives exactly the
input failure term displayed above. The same public search budget
and verifier give the remaining $N^{-K}$ term. The chosen labels
have $O(\log N)$ bits and $\varepsilon_{\rm FT}=o(1)$.

For all three accepted laws, $r_{\rm sp}+r_{\rm sat}=o(1)$ and
$D_N\sim\ln\ln N$. The two theorems and this corollary therefore
prove Corollary~\ref{ui:ntru-cor} for the NTRU-HPS/HRSS family.
\end{proof}

\section{Conclusion and Further Discussion}
\label{uz:conclusion}

We developed two frameworks for recovery in growing families associated
with the seven lattice candidates in NIST's third round. Gaussian
estimates play a central role in both frameworks. For noisy or rounded
linear relations, they reveal the energy difference created by a
coordinate guess. The joint energy of the secret and error determines
the sampling depth and the leading recovery exponent. For quotient
relations, estimates of the mass in affine slices guide the search for
a short pair satisfying the public checks. A bound on the total
Gaussian mass controls the probability of reaching such a pair.

The common approach separates the recovery procedure from the geometry
of the public operator. The applications supply this geometry through
uniform images, ideals, additive orders, key quality, or spectral
estimates. This separation gives a way to study further public
relations by identifying the energy and mass bounds they require.
Appendix~\ref{ux:section} also examines how dimension and energy together
determine the cost of composing a reduction with the recovery algorithm.
We next describe an application to homomorphic encryption and several
directions for further work.

\subsection{Further Applications to Homomorphic Encryption}
\label{uz:fhe}

The framework for linear relations also applies to RLWE public keys in
homomorphic encryption. BGV, BFV, and CKKS use closely related public
relations~\cite{BGV12,FV12,CKKS17}.
Let $N$ be a power of two, and write
\[
 R=\mathbb Z[X]/(X^N+1),\qquad R_Q=R/QR.
\]
Here $Q$ is the modulus of the public key relation. It corresponds to
$q_L$ in the CKKS key generation described in~\cite[Section~3.4]{CKKS17}.
In a residue number system representation, write $Q=\prod_j q_j$ with
distinct primes $q_j$. These primes are individual components, whereas
the CKKS notation $q_\ell$ denotes the full modulus at level $\ell$.
More generally, the argument below uses any known prime divisor
$q_i\mid Q$. Write $R_{q_i}=R/q_iR$ for the selected component.

Table~\ref{uz:fhe-table} gives the three public relations and their
normalization at the selected prime. In the table, $a,b\in R_Q$ retain
their public key meanings, and $a$ is uniform in $R_Q$. The parameter
$t$ is the plaintext modulus of BGV. We use $\tilde a$ for the normalized
multiplier and $y$ for the normalized public observation.

\begin{table}[!htbp]
\centering
\papertablesetup
\caption{Public key relations in homomorphic encryption and their
normalization at a selected prime.}
\label{uz:fhe-table}
\begin{tabularx}{\linewidth}{@{}P{.11\linewidth}P{.25\linewidth}Y@{}}
\toprule
Scheme & Public relation modulo $Q$ & Normalization modulo $q_i$\\
\midrule
BGV & $b=as+te$
& Set $\tilde a=t^{-1}(a\bmod q_i)$ and
$y=t^{-1}(b\bmod q_i)$ when $t$ is invertible modulo $q_i$.\\
\addlinespace[4pt]
BFV & $b=-(as+e)$
& Set $\tilde a=a\bmod q_i$ and $y=-b\bmod q_i$.\\
\addlinespace[4pt]
CKKS & $b=-as+e$
& Set $\tilde a=-a\bmod q_i$ and $y=b\bmod q_i$.\\
\bottomrule
\end{tabularx}
\end{table}
\FloatBarrier

After reduction and normalization, each relation takes the form
\[
 y=\tilde a s+e\pmod{q_i}.
\]
These transformations make $\tilde a$ uniform in $R_{q_i}$ and retain
the same integer secret and error. Reduction to a prime component
introduces no additional rounding error. This gives a module instance
of rank one to which the recovery theorem can be applied, with ring
degree and secret coefficient dimension both equal to $N$.
One ring equation supplies all $N$ coefficient observations.

Let $N$ grow through powers of two, take $q_i\equiv1\pmod{2N}$, and
assume $\ln q_i\le(\ln N)^c$ for a fixed $c$.
Choose a public rational energy bound $V\ge1$ of polynomial bit length
before the public draw, with
\[
 \Pr[\|s\|_2^2+\|e\|_2^2>NV]\le p_E,\qquad
 D_i=\ln\frac{\ln(q_i/\sqrt N)}{1+\tfrac12\ln V}.
\]
When $D_i\to\infty$, Theorem~\ref{um:recovery} gives recovery of the
original integer secret with probability $1-p_E-o(1)$ in expected
time and space
\[
 2^{(1/2+o(1))N/D_i}.
\]
For $q_i=N^{\kappa+o(1)}$ with fixed $\kappa\ge1$ and polylogarithmic
$V$, this becomes $2^{(1/2+o(1))N/\ln\ln N}$.
The selected prime determines the recovery depth. Reading the full
public input adds polynomial work when the number of components and
their bit lengths are polynomial in $N$.

In the ring LWE version of BGV's basic key generation, $t=2$
\cite[Section~3.1]{BGV12}. The division by $t$ in the table is performed
modulo the selected prime, so the energy bound uses $e$.
When this division is unavailable, set $\tilde a=a\bmod q_i$ and
$y=b\bmod q_i$. The resulting relation has effective error $te$, and
the energy certificate uses $\|s\|_2^2+\|te\|_2^2$.
For CKKS, the energy comes from the secret and the public key error in
KeyGen~\cite[Section~3.4]{CKKS17}. The plaintext encoding scale and
later rescaling operations do not enter this public relation.
Recovery identifies the secret of the selected public key, which can
then be used with the corresponding decryption algorithm.

\subsection{Further Directions}
\label{uz:directions}

\paragraph{Implementation and concrete costs.}
Our algorithms use conservative parameter choices to ensure provable
recovery. A natural next step is to implement these algorithms with suitable parameters and
measure their performance on small instances.
For linear recovery, the experiments can measure the score gap between
adjacent candidates. For quotient recovery, they can compare estimated
slice masses with direct calculations and record the number of trials
needed to find a valid pair.
These measurements would support a cost model for tuning block sizes,
Gaussian widths, list lengths, and search repetitions in practice.

\paragraph{Heuristic improvements.}
Gaussian peak estimation could accelerate linear recovery by locating
each secret coordinate from a few score evaluations, followed by a
local search around the predicted value.

Extreme pruning could reduce the search in quotient recovery through
tighter norm budgets at intermediate depths. Estimated Gaussian
masses could then guide which of the remaining partial assignments
to retain. Varying the pruning thresholds and randomizing branch
selection across restarts could explore alternative paths, with
every completed pair checked by the full public verifier.

\paragraph{Other public distributions and rings.}
Further geometric estimates could extend the frameworks to public
operators generated by seed expansion, other coefficient distributions,
and reducible polynomial rings beyond the cyclotomic cases treated here.
Appendix~\ref{uk:section} gives conditions for transferring a geometric
guarantee between public distributions. Another route is to estimate the
Gaussian mass directly under the new distribution. For linear relations,
the main task is to control the required coefficient prefixes. For
quotient relations, it is to obtain a useful mass bound and an efficient
test for the desired output.

\paragraph{Gram matrix public keys.}
HAWK~\cite{DPPvW22} provides a contrast to the seven NIST lattice
families considered here. Its public key has the form
$\mathbf Q=\mathbf B^*\mathbf B$, where
$\mathbf B\in\operatorname{SL}_2(R)$ is a secret basis over a cyclotomic
ring $R$ whose degree is a power of two, and $*$ denotes conjugate
transpose. Recovery seeks a basis $\mathbf B'$ with the same Gram
matrix. The linear framework requires a noisy or rounded linear
relation with a public operator, while HAWK publishes the quadratic
Gram relation above. The quotient framework requires a formulation
as a search for a short pair in a public modular lattice, together
with a suitable Gaussian mass bound.

HAWK lies outside the scope of our two frameworks, but not beyond the
reach of lattice cryptanalysis. Straznickas and Weis reduce HAWK key
recovery in degree $n$ to polynomially many calls to exact SVP in
dimension at most $n/2+1$~\cite[Theorem~6.1]{SW26}. Following this
attack, HAWK was withdrawn from NIST's additional signature
standardization process on 29 July 2026. It remains interesting to determine
whether Gaussian mass estimates can also guide recovery from this
quadratic public relation and what geometric information such a search
would require.

Together, these directions connect the geometric framework with the
concrete behavior of recovery algorithms and with further forms of
public information.

\clearpage
\phantomsection

\clearpage
\appendix
\phantomsection
\addcontentsline{toc}{part}{Appendices}
\section{Finite Implementation}
\label{ue:finite}

We give finite sampling tools for both recovery frameworks, followed
by the schedule and numerical precision for linear recovery.
Sections~\ref{uf:persistent-lists} and~\ref{uf:affine-recovery} specify
the calls and budgets for quotient recovery.

\subsection{Common sampling tools}
\label{ue:common-sampling}

\paragraph{Finite random choices.}
To choose uniformly from $\{0,\ldots,S-1\}$, draw
$\lceil\log_2S\rceil$ unbiased bits and reject values at least $S$.
Each trial succeeds with probability at least $1/2$. For at most $J\ge1$
calls and $0<\nu_{\rm bits}<1$, a cap of
$\lceil\log_2(J/\nu_{\rm bits})\rceil$ trials per
call gives total abort probability at most $\nu_{\rm bits}$.
Successful calls are exactly uniform. For $S=1$, return zero.
For $K$ nonnegative rational weights with positive sum, partition
$[0,1)$ by their normalized cumulative sums and locate $U/2^a$
for a uniform integer $0\le U<2^a$. This uses $a$ random bits and
has total variation error at most $K2^{-a}$. A zero sum is a failure.

\begin{lemma}[Scalar Gaussian masses and samples]
\label{uimpl:scalar-coset}
For rational $\lambda,a,c,s^2,\zeta$ with $\lambda,s^2>0$ and
$0<\zeta<1/2$, put $\sigma=s/\lambda$ and
$Z=\rho_s(a+\lambda\mathbb Z-c)$. One can compute a positive dyadic
interval $[Z_-,Z_+]$ containing $Z$ with
$Z_+-Z_-\le\zeta Z_-$, and sample within total variation distance
$\zeta$ of $D_{a+\lambda\mathbb Z,s,c}$, in
\[
 \operatorname{poly}\bigl(L_{\rm enc},1+\sigma+\sigma^{-1},
                             \ln(1/\zeta)\bigr)
\]
bit operations. Here $L_{\rm enc}$ bounds the input encoding lengths.
\end{lemma}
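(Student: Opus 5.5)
The plan is to reduce everything to one-dimensional Poisson summation on the integer lattice and to work with rational (dyadic) arithmetic throughout. Write $a+\lambda\mathbb Z-c=\lambda(\mathbb Z+x)$ with $x=(a-c)/\lambda$ rational; reducing $x$ into $[0,1)$ is exact. Then $Z=\rho_\sigma(\mathbb Z+x)$ with $\sigma=s/\lambda$, so $\sigma^2=s^2/\lambda^2$ is rational. I will never need $\sigma$ itself, only $\sigma^2$ and bounds on it.

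\emph{Computing $Z$.} I split into two regimes. If $\sigma\ge1$, apply \eqref{up:poisson-mass} in dimension one:
\[
Z=\sigma\sum_{k\in\mathbb Z}e^{-\pi\sigma^2k^2}\cos(2\pi kx).
\]
The $k=0$ term is $\sigma$, and the tail beyond $|k|\ge1$ is at most $4\sigma e^{-\pi\sigma^2}$, so $Z\ge\sigma(1-4e^{-\pi})>0$. Truncating at $|k|\le K_1$ with $K_1=O(\sqrt{\ln(1/\zeta)})$ leaves relative error below $\zeta/4$. If $\sigma<1$, use the direct sum over $j$ with $|j+x|\le\sigma\sqrt{\ln(4/\zeta)}+1$. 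The nearest point contributes at least $e^{-\pi/(4\sigma^2)}$, and a geometric tail bound against this nearest term gives relative error below $\zeta/4$ with $O(1+\sigma\sqrt{\ln(1/\zeta)})$ terms.

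In both regimes I evaluate each exponential and cosine by interval arithmetic. I use Taylor series with explicit remainder, after first reducing $\pi$ and the arguments to dyadic intervals. The working precision is $O(\ln(1/\zeta)+\ln(1+\sigma+\sigma^{-1})+\sigma^{-2}+L_{\rm enc})$ bits, so the rounding width is a further $\zeta/4$ fraction of a certified lower bound on $Z$. The factor $\sigma$ in the Poisson regime is handled as $\sqrt{\sigma^2}$ by dyadic interval square root. Adding the truncation and rounding widths gives $[Z_-,Z_+]$ with $Z_+-Z_-\le\zeta Z_-$. The dependence on $\sigma^{-1}$ enters through the precision needed for $e^{-\pi/(4\sigma^2)}$, which is polynomial in $\sigma^{-1}$.

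\emph{Sampling.} I sample $j$ from the truncated direct window $|j+x|\le\sigma\sqrt{\ln(4/\zeta)}+1$, which contains $O(1+\sigma\sqrt{\ln(1/\zeta)})$ integers. This is polynomial in $\sigma$, and that is why $\sigma$ enters the bound. Each weight $\rho_\sigma(j+x)$ is computed to relative accuracy $\zeta/8$ as above, and the weights are rounded to dyadic rationals. I then use the cumulative-sum categorical sampler from the finite random choices paragraph with $a=\lceil\log_2(8K/\zeta)\rceil$ bits. The total variation budget splits into three parts: the truncated mass, at most $\zeta/4$; the relative weight errors, which change the normalized law by at most about $\zeta/4$; and the sampler discretisation $K2^{-a}\le\zeta/8$. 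Finally, output $\lambda j+\lambda x+c\in a+\lambda\mathbb Z$, which is exact since translation is a bijection.

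\emph{Main obstacle.} The delicate step is certifying the bounds as \emph{relative} errors in the small-$\sigma$ regime. There $Z$ can be as small as $e^{-\pi/(4\sigma^2)}$, so absolute error bounds on the exponentials do not suffice. I will handle this by factoring out the nearest term $e^{-\pi x_0^2/\sigma^2}$, with $x_0=\min(x,1-x)$, and computing the remaining ratios $e^{-\pi((j+x)^2-x_0^2)/\sigma^2}\le1$. Each ratio then needs only absolute accuracy $\zeta/(8K)$, and the common factor is carried symbolically as a dyadic interval of its own. Since everything is now positive-term or has a dominant positive leading term, the final bit-operation bound is a routine polynomial count.
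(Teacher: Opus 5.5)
Your proof is correct, but it computes the mass differently from the paper. The paper never passes to the dual side. It uses the single direct sum $Z=\sum_j e^{-\pi(j-\delta)^2/\sigma^2}$ for every $\sigma$, truncated at $|j|\le T=O(1+\sigma\sqrt{A+\ln((1+\sigma)/\zeta)})$.

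It handles the relative-accuracy issue you flag as the main obstacle by an a priori dyadic lower bound, $Z\ge e^{-\pi/(4\sigma^2)}\ge\ell=2^{-2A}$ with $A=\lceil\sigma^{-2}\rceil$. All truncation and rounding errors are then kept absolute, with total at most a small multiple of $\zeta\ell$, at precision $O(A+\ln(T+1)+\ln(1/\zeta))$ bits. The sampler reuses the lower endpoints of these same intervals as its categorical weights. It also records the distance bound $\lambda(T+1/2)$ from $c$, which is used for later cutoffs; your window gives the analogous bound.

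Your Poisson branch for $\sigma\ge1$ computes the mass with only $O(\sqrt{\ln(1/\zeta)})$ terms, independent of $\sigma$. Your nearest-term factoring gives a cleaner relative certificate. Neither improves the stated bound, because your sampler still enumerates a direct window of $O(1+\sigma\sqrt{\ln(1/\zeta)})$ points, so the cost stays polynomial in $\sigma$. The price is two separate computations (a dual sum for $[Z_-,Z_+]$ and direct weights for sampling) plus one extra relative-accuracy evaluation of the factored-out term. The paper's single uniform sum yields both outputs at once.

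Once your working precision includes the $\sigma^{-2}$ term, the factoring is redundant: absolute accuracy measured against $e^{-\pi/(4\sigma^2)}$ already suffices, and that is exactly the paper's device.

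One small slip: in the factored form your weights have absolute accuracy $\zeta/(8K)$, not the relative accuracy $\zeta/8$ you state earlier. This still gives the needed total variation bound, because the ratio sum is at least $1$.
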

\begin{proof}
Write $(c-a)/\lambda=k+\delta$, with $k\in\mathbb Z$ and
$|\delta|\le1/2$. Then $Z=\sum_j e^{-\pi(j-\delta)^2/\sigma^2}$.
Set $A=\lceil\sigma^{-2}\rceil$ and $\ell=2^{-2A}$.
The nearest term and a Gaussian integral bound give, for $T\ge1$,
\[
 Z\ge e^{-\pi/(4\sigma^2)}\ge\ell,\qquad
 \sum_{|j|>T}e^{-\pi(j-\delta)^2/\sigma^2}
 \le(1+4\sigma)e^{-\pi T^2/(8\sigma^2)}.
\]
For the tail bound, use $|j-\delta|\ge|j|/2$ and split the exponent.
Choose an integer
$T=O(1+\sigma\sqrt{A+\ln((1+\sigma)/\zeta)})$
so the tail is at most $\zeta\ell/100$. Evaluate the retained terms
in nonnegative dyadic intervals with a common denominator and total
width at most $\zeta\ell/100$. Sum lower endpoints for $Z_-$ and
upper endpoints plus a tail bound of at most $2\zeta\ell/100$ for
$Z_+$. Then $Z_-\ge(1-2\zeta/100)\ell>0$ and the required relative
width follows.

Use the lower endpoints as sampling weights and allocate
$\zeta/100$ to their finite random choice. Truncation and weight
approximation contribute at most $3\zeta/100$, giving total error
less than $\zeta$. Return $a+\lambda(k+j)$, whose distance from $c$
is at most $\lambda(T+1/2)$. All bounds use outward rounding.
The precision is $O(A+\ln(T+1)+\ln(1/\zeta))$ bits. Certified
exponential evaluation and the bound on $T$ give the stated cost.
\end{proof}

\begin{lemma}[Auxiliary Gaussian masses and samples]
\label{uimpl:product-cosets}
\label{ue:direct-sampling}
Let $q\ge2$ be an integer, $P$ a prime, and
$\Gamma=q\mathbb Z^b+(q/P)\mathbf z\mathbb Z$ for a nonzero
$\mathbf z\in\mathbb F_P^b$. For rational $\mathbf c,s^2,\xi$ with
$s^2>0$ and $0<\xi<1/2$, one can enclose
$\rho_s(\Gamma-\mathbf c)$ in a positive dyadic interval of relative
width at most $\xi$, and sample within total variation distance $\xi$
of $D_{\Gamma,s,\mathbf c}$, in
\[
 P\operatorname{poly}\bigl(b,L_{\rm enc},1+s/q+q/s,
                              \ln P,\ln(1/\xi)\bigr)
\]
bit operations, where $L_{\rm enc}$ bounds the input encoding lengths.
\end{lemma}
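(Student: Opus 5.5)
The plan is to reduce everything to a one-dimensional problem along the line $(q/P)\mathbf z\mathbb Z$, using the coset decomposition
\[
 \Gamma=\bigcup_{j=0}^{P-1}\bigl(q\mathbb Z^b+j(q/P)\mathbf z\bigr).
\]
Here $\mathbf z$ is identified with an integer representative in $\{0,\ldots,P-1\}^b$. This union is disjoint because $\Gamma/q\mathbb Z^b\simeq\mathbb Z/P\mathbb Z$ and $\mathbf z\ne0$. Each coset is a product set $\prod_{k=1}^b\bigl(jqz_k/P+q\mathbb Z\bigr)$. The shifted Gaussian mass therefore factors exactly:
\[
 \rho_s(\Gamma-\mathbf c)=\sum_{j=0}^{P-1}\prod_{k=1}^b
 \rho_s\bigl(jqz_k/P+q\mathbb Z-c_k\bigr).
\]
Every factor is a scalar coset mass of the form treated in Lemma~\ref{uimpl:scalar-coset}, with $\lambda=q$, $a=jqz_k/P$, and the given $s^2$ and $c_k$. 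The ratio $\sigma=s/q$ then gives exactly the $1+s/q+q/s$ dependence in the stated cost.

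\textbf{Mass enclosure.} I will call Lemma~\ref{uimpl:scalar-coset} on each of the $Pb$ factors with relative accuracy $\zeta=\xi/(4b)$. This gives positive dyadic intervals $[Z_{jk}^-,Z_{jk}^+]$ with $Z_{jk}^+\le(1+\zeta)Z_{jk}^-$. Products of lower endpoints and products of upper endpoints then enclose each coset mass with relative width at most $(1+\zeta)^b-1\le e^{\xi/4}-1<\xi/2$. Summing the $P$ nonnegative lower endpoints and the $P$ upper endpoints preserves this relative width. This yields the enclosure of $\rho_s(\Gamma-\mathbf c)$. The total work is $Pb$ scalar calls plus exact dyadic multiplications. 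The bit lengths of those products grow only linearly in $b$ and in $\ln(1/\xi)$, so the cost stays within the stated bound.

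\textbf{Sampling.} The reference law $D_{\Gamma,s,\mathbf c}$ is a mixture over cosets. First I draw $j$ with probability proportional to the coset mass $W_j$. Conditionally on $j$, the coordinates are independent, and coordinate $k$ has law $D_{jqz_k/P+q\mathbb Z,s,c_k}$, by the product structure. The implementation uses the certified lower products $W_j^-$ as categorical weights, with $a$ random bits chosen so that the discretization error $P2^{-a}\le\xi/8$. It then calls the scalar sampler for each coordinate with error $\xi/(8b)$.

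To bound the total variation error, note that $W_j^-/W_j\in[(1+\xi/2)^{-1},1]$ uniformly in $j$. Normalizing weights that all agree with the true masses to within a common factor $1+\xi/2$ changes the categorical law by at most about $\xi/2$ in total variation. The binary discretization adds $\xi/8$, and the $b$ conditional coordinate samplers add $\xi/8$ in total by a hybrid argument. The sum is below $\xi$.

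\textbf{Main obstacle.} The delicate step is the precision bookkeeping. The error accumulated in products of $b$ approximate factors must stay relative rather than absolute, because individual coset masses can be exponentially small. This is why I will use the relative enclosure guarantee of Lemma~\ref{uimpl:scalar-coset} and never rely on additive truncation. I also need a check that the dyadic bit lengths of the products stay polynomial: each factor has $\operatorname{poly}(L_{\rm enc},\sigma+\sigma^{-1},\ln(b/\xi))$ bits, so a $b$-fold product is polynomial as well. The factor $P$ in the cost arises only from enumerating the $P$ cosets.
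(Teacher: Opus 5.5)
Your proposal is correct and follows essentially the same route as the paper: split $\Gamma$ into its $P$ product cosets, enclose each scalar factor with Lemma~\ref{uimpl:scalar-coset}, multiply and sum the endpoints, then choose a coset from the lower weights and sample its coordinates independently. The only difference is bookkeeping: the paper uses per-factor accuracy $\xi/(100Pb)$, whereas your $P$-free choice $\xi/(4b)$ also suffices, because summing nonnegative terms preserves relative width.
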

\begin{proof}
The nonzero vector $\mathbf z$ has order $P$, so $\Gamma$ is the
disjoint union of the cosets $qk\mathbf z/P+q\mathbb Z^b$.
Their masses are
\begin{equation}\label{uf:explicit-shifted-dgs}
 W_k(\mathbf c)=\prod_{j=1}^b
      \rho_s(q\mathbb Z+qkz_j/P-c_j),\qquad 0\le k<P.
\end{equation}
Evaluate each factor by Lemma~\ref{uimpl:scalar-coset} at accuracy
$\zeta=\xi/(100Pb)$. Products of the interval endpoints enclose each
$W_k$ with relative width at most $(1+\zeta)^b-1\le2b\zeta$.
Their sums give the mass interval. Select a coset using the lower
weights, then sample its coordinates independently at accuracy
$\zeta$. With $\xi/10$ for the finite coset choice, the total error
is at most $2b\zeta+\xi/10+b\zeta<\xi$.
There are $Pb$ scalar mass calculations and $b$ samples. A common
binary precision keeps products and cumulative weights within the
stated cost.
\end{proof}

Both sampling bounds hold for every rational center and include
truncation and finite random choices. Lemma~\ref{up:adaptive-error}
therefore bounds the error of adaptive calls with fresh randomness
by the sum of their assigned accuracies.

\subsection{Finite implementation for linear recovery}
\label{uimpl:linear}

\paragraph{A deterministic finite schedule.}
\label{uimpl:linear-schedule}
All calculations precede the public draw and use certified intervals.
A failed check reports that the parameters are outside the proved range.
Choose an integer $f_{\rm sq}\in[K(R^2/n)D,K(R^2/n)D+2]$ by
rounding the upper endpoint of an interval of width less than one.
Check $f_{\rm sq}\ge4$, set $f^2=f_{\rm sq}$, and choose dyadic
$\bar\tau\in[\tau/4,\tau]$, where $\tau=e^{-\pi R^2/f_{\rm sq}}$.
Put
\[
 \bar g=\bar\tau/f_{\rm sq},\qquad
 \varepsilon=\frac{\bar\tau^2}{256f_{\rm sq}q^4},\qquad
 \varepsilon_0=\varepsilon/5.
\]
Since $f_{\rm sq}^{-1}\le1-e^{-\pi/f_{\rm sq}}\le4f_{\rm sq}^{-1}$,
we have $g/16\le\bar g\le g$.
Set $L=\ln(4n/\varepsilon)$ and $L_0=L+\ln5$, and check
$L>0$ and $q>\sqrt L$ before evaluating $W$.
Recompute the schedule with these tolerances, to which the prefix
condition~\eqref{ue:geometry} refers. For $r$, floor a certified lower
bound within $1/4$ of~\eqref{ue:schedule}. This can decrease $r$ by
one and gives $512L_0\le s_0^2<2048L_0$.
Take a lower bound $\widetilde W$ within $1/4$ of $W$, check
$\widetilde W>C_0+1$ and $1\le r\le n$, and set
\[
 N=2^{\lceil n/[2(\widetilde W-C_0-1)]\rceil},\qquad U_0=N/2.
\]
Check $N\ge4$. Compute rational bounds and capacities
\[
 \begin{gathered}
 0\le\widetilde y_i-y_i\le n^{-3},\qquad
 0\le\ln U_0-\widetilde\ell\le n^{-3},\qquad
 \widetilde B_i=\left\lfloor\frac{\widetilde\ell}{\widetilde y_i}\right\rfloor,\\
 \widetilde b_*=\min\left\{
   \left\lfloor\tfrac12\min_i\widetilde B_i\right\rfloor,
   \left\lfloor n/r\right\rfloor\right\}.
 \end{gathered}
\]
Check $\widetilde b_*\ge1$ and $\sum_i\widetilde B_i\ge n$.
Assign $\widetilde b_*$ coordinates to each level and fill the rest
without exceeding $\widetilde B_i$. The rounding directions preserve
the width inequalities, and $r\widetilde b_*\le n$ and
$\widetilde B_i\ge2\widetilde b_*$ hold by construction.
The continuous approximation loses at most
$O(rn^{-3}\ln U_0)=o(1)$ total capacity and the floors lose at most
$r$. The margin $C_0\ge C_{\rm cap}+10$ ensures that all checks pass
for sufficiently large members of the family. Lemma~\ref{ue:allocation}
gives $\widetilde b_*=\Omega(\ln N/\ln q)$.
All squared widths $s_i^2=q^2 2^{i-r}/f_{\rm sq}$ are rational.

\paragraph{Coordinate size and score precision.}
Reduce centers modulo $q$ and obtain a rational basis of $\Gamma_i$
from the Hermite normal form of $[P\mathbf I_{b_i}\mid\mathbf z_i]$,
scaled by $q/P$. Abort if an initial or centered shifted sample has
norm exceeding $R_{\rm cap}=2^{n^2}$. The inclusions
$q\mathbb Z^b\subseteq\Gamma_i\subseteq(q/P)\mathbb Z^b$ give the
uniform ideal tail bound
\[
 e^{-\pi R_{\rm cap}^2/(2s^2)}
 (1+\sqrt2sP/q)^b e^{\pi bq^2/(4s^2)}.
\]
For the denominator, use a point of $q\mathbb Z^b$ within
$q\sqrt b/2$ of the center. For the numerator, split the exponent
and sum over the containing fine product lattice. Poisson summation
and $\theta(v)\le1+v$ bound a shifted scalar mass at width to spacing
ratio $v$ by $1+v$.
The initial product Gaussian obeys the same type of bound.
In the modulus window, all cutoff events together cost at most
$e^{-n^2}$. Applying the same cutoff to both processes preserves
their total variation bound. Retained coordinates have polynomial
bit length through $r=O(\ln q)$ levels.

Compute phases modulo $q$ exactly and each cosine to error
$\bar g/64$ with a common binary denominator.
Lemma~\ref{up:concentration}, with $u=g/16$, gives failure at most
$2H e^{-Ng^2/512}$ for $H=n(2B+1)q^h$.
The errors from concentration, Gaussian approximation, conditional
means, and cosine evaluation have total at most
\[
 g/16+\bar g/1280+\bar g/128+\bar g/64<g/8.
\]
Thus two score errors leave a strict margin below the adjacent gap
$g$, with precision $O(\ln(1/\bar g))=O(n/D)$ bits.

\paragraph{Auxiliary vectors and initialization.}
\label{uimpl:linear-initialization}
Choose a nonzero vector in $\mathbb F_P^{b_i}$ by expanding $U+1$
in base $P$, with $U$ uniform in $\{0,\ldots,P^{b_i}-2\}$.
The rule of Section~\ref{ue:common-sampling}, with independent random
bits and total budget $\nu_z=n^{-10}$ over $r$ levels, gives
independent uniform vectors conditional on no abort.
For the $M_{\rm init}=3^rhN$ initial coordinates, use
Lemma~\ref{uimpl:scalar-coset} with $\lambda=1$, $a=c=0$, and
$\xi_{\rm init}=2^{-n^2}/(10M_{\rm init})$.
The total error is at most $\nu_{\rm init}=2^{-n^2}$.
Since $s_0=O(\sqrt{n/D})$ and
$\ln(1/\xi_{\rm init})=O(n^2)$, initialization costs
$M_{\rm init}\operatorname{poly}(n)=N^{1+o(1)}$.

\paragraph{The two implementations.}
\label{uimpl:linear-samplers}
Theorem~\ref{ue:list-engine} verifies the ADS15 width condition for
every center and gives the error~\eqref{ue:tv}. The polynomial bit
lengths above absorb the input factors in
Lemma~\ref{up:shifted-sampler}, preserving~\eqref{ue:refined-cost}.
For its computable uniform expected cost bound $\bar T_n$, stop after
$n^{K_t}\bar T_n$ steps. Markov's inequality adds at most $n^{-K_t}$
failure and preserves the exponent.

For direct sampling, apply Lemma~\ref{uimpl:product-cosets} to each
$\Gamma_i$ at width $s_{i-1}$. There are
$J=(3^{r+1}-3)N/2$ shifted calls. Give these calls and each initial
scalar draw accuracy $\xi=\nu/[10(J+M_{\rm init})]$, with
$\nu=n^{-10}$. Lemma~\ref{up:adaptive-error} gives total error at
most $\nu$. Since $1\le s_{i-1}\le q$, each shifted call costs
$P\operatorname{poly}(n,q,\ln P,\ln(1/\xi))$.
The numerical factor is absorbed by
$\ln q=\operatorname{polylog}(n)=o(n/D)$, and enumeration contributes
the factor $P$, giving~\eqref{ue:finite-cost}.
The complete implementation budgets in~\eqref{ue:failure} are
\[
 \delta_{\rm impl}\le
 \begin{cases}
  \nu_z+\nu_{\rm DGS}+e^{-n^2}+n^{-K_t},&\text{ADS15 sampler},\\
  \nu_z+\nu+e^{-n^2},&\text{coset enumeration}.
 \end{cases}
\]

\section{Proofs for Quotient Recovery}
\label{uf:section}

We first construct dual Gaussian lists and use them to search affine
slices of a public lattice. We then bound its Gaussian mass and verify
the output for the Falcon parameter family and cyclic NTRU. These are the three steps
of the recovery method in Section~\ref{uq:overview}.

In the generic construction and search proof below, $n$ denotes the
block length $m$ of Section~\ref{uq:overview}; the full pair lattice has
dimension $2n$. We use $\mathbf H_h$ and $\mathcal C(h)$ as local
notations for that section's public matrix $\mathbf H$ and candidate
table $\mathcal C(\mathbf H)$, even for an arbitrary matrix.
In the family applications, the matrix is coefficient multiplication
by $h$ for Falcon or by $\widehat h$ for cyclic NTRU.

\subsection{Persistent Gaussian lists}
\label{uf:persistent-lists}

Let $\mathbf H_h$ be an integral representative of any public $n\times n$
matrix modulo $q$, and set
$\Lambda_h=\ker_q[\mathbf H_h\mid-\mathbf I_n]$.
One random rank-one extension will serve every coordinate prefix.
This shared extension lets a single smoothing event control all lists.
We remove it at the end by taking differences in predetermined pairs.
The construction applies to both multiplication matrices used later.

For $S\subseteq[n]$, let $\mathbf H_{h,S}$ contain the columns of $\mathbf H_h$
indexed by $S$, and put
\[
 L_S=\{(\mathbf u_S,\mathbf v)\in\mathbb Z^{|S|+n}:
             \mathbf H_{h,S}\mathbf u_S-\mathbf v=\mathbf0\pmod q\}.
\]
Let
$\mathbf U_S:\mathbb R^{|S|}\oplus\mathbb R^n\to
\mathbb R^n\oplus\mathbb R^{|S|}$ be the orthogonal block swap
$\mathbf U_S(\mathbf u_S,\mathbf v)=(\mathbf v,\mathbf u_S)$, and define
\begin{equation}
 K_S:=\mathbf U_S(qL_S^*)
   =\ker_q[\mathbf H_{h,S}^T\mid\mathbf I_{|S|}]
   \subseteq\mathbb Z^{n+|S|}.
 \label{uf:KS}
\end{equation}
Dualizing gives
\begin{equation}
 qK_S^*=\mathbf U_SL_S,
 \qquad \rho_{q/u}(qK_S^*)\le\rho_{q/u}(\Lambda_h).
 \label{uf:dual-interface}
\end{equation}
The inequality uses the zero-padded inclusion $L_S\subseteq\Lambda_h$.

To remove an extension $\Lambda'\supseteq\Lambda$ of index
$[\Lambda':\Lambda]=P$, use the following fixed-pair procedure.
Given a list $(\mathbf X_1,\ldots,\mathbf X_{2M})$, pair entries
$(\mathbf X_{2j-1},\mathbf X_{2j})$. A pair succeeds when its difference
lies in $\Lambda$. Scan pairs in an order fixed independently of their
values and return the first $N$ successful differences. If fewer than
$N$ pairs succeed, declare failure.

For the analysis, put $\mu=D_{\Lambda',s}$ and let $K$ be the law of
$\mathbf X-\mathbf X'$ conditioned on membership in $\Lambda$, for
independent $\mathbf X,\mathbf X'\leftarrow\mu$.
Complete a deficient output with independent $K$-samples to define a
reference list of length $N$. This completion describes the reference
law on the failure event.

\begin{lemma}[Quotient removal]
\label{uf:fixed-pair-removal}
Suppose the input list is strongly conditionally $\delta$-similar to
$\mu^{2M}$ and
\[
 \eta_\varepsilon(\Lambda')\le s/\sqrt2,
 \qquad 0<\varepsilon\le\tfrac12.
\]
Then the completed reference list is strongly conditionally
\begin{equation}
 4\delta+\log\frac{1+\varepsilon}{1-\varepsilon}
 \label{uf:fixed-pair-error}
\end{equation}
similar to $D_{\Lambda,\sqrt2s}^{N}$.  Moreover, for $0<\zeta<1$,
\begin{equation}
 M\ge8e^{2\delta}P\bigl(N+\ln(1/\zeta)\bigr)
 \label{uf:fixed-pair-count}
\end{equation}
implies that the finite procedure and completed reference disagree with
probability at most $\zeta$.
The similarity conclusion also holds after conditioning on any
positive-probability event determined by the vector of pair-success
indicators, in particular on obtaining at least $N$ successes.
\end{lemma}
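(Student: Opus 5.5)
The proof has three parts: a distributional computation for a single ideal pair, a transfer to the actual dependent list via strong conditional similarity, and a Chernoff count of successes.

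\emph{Ideal pair.} Let $\mathbf X,\mathbf X'\leftarrow\mu=D_{\Lambda',s}$ be independent, and fix $\mathbf z\in\Lambda$. Then
\[
 \Pr[\mathbf X-\mathbf X'=\mathbf z]\propto\sum_{\mathbf x\in\Lambda'}\rho_s(\mathbf x)\rho_s(\mathbf x-\mathbf z)
 =\rho_{\sqrt2 s}(\mathbf z)\,\rho_{s/\sqrt2}(\Lambda'-\mathbf z/2).
\]
This is the standard identity obtained by completing the square. Since $\eta_\varepsilon(\Lambda')\le s/\sqrt2$, the shift-mass bound~\eqref{up:shift-mass} puts the last factor within a factor $1\pm\varepsilon$ of $(s/\sqrt2)^{2n}/\det\Lambda'$, uniformly in $\mathbf z$. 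Conditioning on membership in $\Lambda$ therefore gives a law $K$ whose point probabilities lie within ratio $(1+\varepsilon)/(1-\varepsilon)$ of $D_{\Lambda,\sqrt2 s}$. The same computation shows that the success probability of one pair is $\Pr[\mathbf X-\mathbf X'\in\Lambda]\ge\frac{1-\varepsilon}{1+\varepsilon}P^{-1}$: sum over the $P$ cosets, use that each coset has mass about $1/P$ by smoothing, and apply Cauchy--Schwarz, or equivalently the same uniformity.

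\emph{Dependent list.} Now pass to the given list. For a fixed pair index $j$, condition on all other input entries. By strong conditional $\delta$-similarity applied twice (to $\mathbf X_{2j-1}$ given everything else, then to $\mathbf X_{2j}$ given everything else including $\mathbf X_{2j-1}$), the joint conditional law of the pair is within $e^{\pm2\delta}$ of $\mu\otimes\mu$ pointwise. Conditioning that joint law on success, an event of the pair alone, changes ratios by at most $e^{\pm4\delta}$, because numerator and denominator each move by $e^{\pm2\delta}$. The output of a successful pair is therefore $(4\delta+\log\frac{1+\varepsilon}{1-\varepsilon})$-similar to $D_{\Lambda,\sqrt2 s}$ given the other pairs. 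The other outputs are functions of the other pairs together with the success-indicator vector, and the selection rule is fixed and depends only on these indicators. Hence conditioning on the other outputs and on any event in the indicator vector preserves the bound; formally, one conditions on the full configuration of the other pairs and averages. Completion entries are independent $K$-samples, and $K$ satisfies the same bound with $\delta=0$. This yields~\eqref{uf:fixed-pair-error} together with the stated conditioning clause.

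\emph{Counting successes.} The conditional success probability of each pair, given all others, is at least $e^{-2\delta}\frac{1-\varepsilon}{1+\varepsilon}P^{-1}\ge e^{-2\delta}/(3P)$ because $\varepsilon\le1/2$. The success indicators therefore stochastically dominate independent Bernoulli variables with this parameter. A multiplicative Chernoff lower-tail bound with mean $\ge M e^{-2\delta}/(3P)\ge\frac83(N+\ln(1/\zeta))$ gives fewer than $N$ successes with probability at most $\zeta$. The finite procedure coincides with the completed reference outside this event.

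The main obstacle is the conditioning bookkeeping in the second step. One must check that, given the other pairs and any indicator event, the selected pair's conditional law is the pair law conditioned on success, up to $e^{\pm4\delta}$ with no further loss. This relies on the scan order and selection depending only on indicators, never on values.
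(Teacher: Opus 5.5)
Your proposal is correct and follows the paper's proof essentially step for step: completing the square plus smoothing of $\Lambda'$ to compare $K$ with $D_{\Lambda,\sqrt2 s}$, the $2\delta$-to-$4\delta$ loss from conditioning a pair on success, conditioning on the full indicator vector and the other pairs' entries so that the selection rule is fixed, and a conditional Chernoff count. The only difference is the success bound. The paper gets $\sum_{\mathbf c}p_{\mathbf c}^2\ge1/P$ from Cauchy--Schwarz, with no smoothing at all. Your alternative justification, that each coset of $\Lambda$ has mass about $1/P$, would need smoothing of $\Lambda$ rather than $\Lambda'$ and is not available here. Your weaker bound $\frac{1-\varepsilon}{1+\varepsilon}P^{-1}$ still follows from Cauchy--Schwarz, and your Chernoff step still closes with mean at least $\frac83(N+\ln(1/\zeta))$.
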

\begin{proof}
Write $p_{\mathbf c}=\Pr[\mathbf X\in\Lambda+\mathbf c]$ for each quotient
class represented by $\mathbf c$. The ideal success probability is
$C=\sum_{\mathbf c}p_{\mathbf c}^2\ge1/P$.
For $\mathbf z\in\Lambda$, completing the square gives
\[
 K(\mathbf z)\ \propto\
 \rho_{\sqrt2s}(\mathbf z)\rho_{s/\sqrt2}(\Lambda'-\mathbf z/2).
\]
Poisson summation and the smoothing hypothesis place the second factor,
uniformly in $\mathbf z$, in one multiplicative interval of ratio
$(1+\varepsilon)/(1-\varepsilon)$.  Thus $K$ has the stated pointwise
distance from $D_{\Lambda,\sqrt2s}$.

Given all other original entries, each fixed pair is
$2\delta$-similar to $\mu^{\otimes2}$.  Conditioning that pair on
success pays the same normalization factor once more, so its successful
difference is $4\delta$-similar to $K$. To pass to the output list,
condition first on the complete vector of success indicators and on
the original entries of all other pairs. The selected pair positions
are now fixed, and conditioning the current pair on success gives
exactly the bound just proved. Averaging over these conditions gives
the same bound conditional on the other output differences. Padding
entries are independent $K$-samples. An additional event determined
by the success indicators is already fixed under this conditioning,
so averaging over its allowed indicator vectors preserves the same
bound. This proves~\eqref{uf:fixed-pair-error} and its conditional form.
Finally, each
pair has conditional success probability at least $e^{-2\delta}/P$.
The conditional Chernoff bound and~\eqref{uf:fixed-pair-count} bound the
shortage event by $\zeta$.
\end{proof}

For the list construction, fix the following finite widths and
accuracy parameters.
Fix computable rational constants $0<c<1$, $1/2<\beta<1$,
and positive $a,K,A_q,A_0,A_N,A_{\mathcal S},c_{\rm search}$.
Let $n,q$ satisfy
\[
 2n+1\le q\le n^{A_q},\qquad
 \ln(q/n)\le(\ln n)^{1-c}.
\]
Choose a positive dyadic rational $D$ with
$0\le\ln\ln n-D\le2^{-n^2}$ and a positive dyadic rational
$s_{\rm mass}^2$ with
\[
 (1-2^{-n^2})q/n^\beta\le s_{\rm mass}^2\le q/n^\beta,
\]
and put
\[
 s_{\rm search}^2=c_{\rm search} qD/n,\qquad t=q/s_{\rm search}.
\]
\begin{samepage}
\begin{theorem}[Persistent lists]
\label{uf:persistent-theorem}
Use the parameters above. For a public matrix $\mathbf H_h$, suppose a
computable rational upper bound $A_{\rm mass}$ is supplied with
\begin{equation}
 \rho_{s_{\rm mass}}(\Lambda_h)\le A_{\rm mass},
 \qquad \ln A_{\rm mass}\le A_0n/D,
 \qquad s_{\rm mass}^2\ge8s_{\rm search}^2.
 \label{uf:list-geometry}
\end{equation}
Let $\mathcal S$ contain at most $n^{A_{\mathcal S}}$ coordinate sets,
fixed before the sampler randomness, and let $N_{\rm out}\ge1$ be an
integer with $\ln N_{\rm out}\le A_Nn/D$.
For all sufficiently large $n$, a finite randomized procedure produces
an $N_{\rm out}$-entry list in each $L_S^*$ with the following guarantees.
\begin{itemize}
\item \emph{List law.} Conditional on successful reference preprocessing,
each reference list is strongly conditionally
\[
 \delta_{\rm out}\le n^{-K}e^{-an/D}
\]
similar to $D_{L_S^*,1/s_{\rm search}}^{N_{\rm out}}$. Different lists may be
dependent. Reference lists use the analytical completion in
Lemma~\ref{uf:fixed-pair-removal}.
\item \emph{Implementation.} Preprocessing failure and total variation
error for the complete construction are each at most $n^{-K}$.
Total time and space over all prescribed sets are $\exp(O(n/D))$.
\end{itemize}
\end{theorem}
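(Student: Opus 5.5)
Work with the scaled lattices $K_S=\mathbf U_S(qL_S^*)=\ker_q[\mathbf H_{h,S}^T\mid\mathbf I_{|S|}]$ of~\eqref{uf:KS}. A Gaussian list on $L_S^*$ at width $1/s_{\rm search}$ is then, after scaling by $q$ and applying $\mathbf U_S$, a list on $K_S$ at width $t=q/s_{\rm search}$. Each $K_S$ is a systematic kernel of exactly the form treated by the Wagner-style engine of Section~\ref{ue:section}. The construction adds the $|S|$ systematic coordinates in blocks, starting from $\mathbb Z^n$ at small width, with lifting~\eqref{up:del-lift} and pairing (Lemma~\ref{up:ordered-pairing}).

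The first step replaces the per-prefix auxiliary lattices by one persistent random extension. Draw a single random rank-one auxiliary datum per level, independent of $S$, and let every $K_S$ and each of its scheduled prefixes use the same extension $K_S'\supseteq K_S$ of index $P$. Build the list in $K_S'$ and then remove the quotient at the end with Lemma~\ref{uf:fixed-pair-removal}. This is why that lemma is proved with fixed pairs and with conditioning on success indicators.

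The geometric input is smoothing. For every prefix lattice of every $K_S$, the dual mass satisfies
\[
 \rho_{q/u}(qK_S^*)\le\rho_{q/u}(\Lambda_h)\le A_{\rm mass}
\]
by~\eqref{uf:dual-interface} and monotonicity, for $q/u\le s_{\rm mass}$. A prefix of $K_S$ is again of the form $K_{S'}$ with $S'\subseteq S$, so the same inequality holds for it. Rewriting this for the smoothing parameter of $K_S$ at width $u$, I would bound $\rho_{1/u}(K_S^*\setminus\{0\})$ by separating the nonzero mass. The vectors of $qK_S^*$ of norm below a threshold contribute at most $A_{\rm mass}$ total. Passing from width $s_{\rm mass}$ to the much smaller effective width $q/u$ (the ratio $s_{\rm mass}^2 u^2/q^2$ grows like a power of $n$ for the relevant $u$) multiplies each nonzero term by a power, giving $\rho\le A_{\rm mass}^{\text{small}}\cdot e^{-\Omega(\cdot)}$ via the standard inequality $\rho_{s/k}(L\setminus0)\le\rho_s(L\setminus0)^{k^2}$-type scaling. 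With $\ln A_{\rm mass}\le A_0n/D$ and the polynomial gap $n^{1-\beta}/D$, this yields $\eta_\varepsilon(K_{S'})$ below the level thresholds with $\varepsilon\le e^{-a'n/D}$ for any fixed $a'$, deterministically. Only the auxiliary smoothing~\eqref{up:hgh-smoothing} is then random. Its failure, $2^{1-b}$ per level, is union bounded over the $O(\ln q)$ levels, and it is shared across all $S$, so there is no union over the $n^{A_{\mathcal S}}$ sets.

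Next comes the schedule. I would reuse Lemma~\ref{ue:allocation} with $\varepsilon_0=e^{-a'n/D}n^{-K}$ and list length $\exp(O(n/D))$. Since $\ln(q/\sqrt L)$ is now only $\Theta(\ln n)$ up to the window, capacity $\sim\log N\cdot D$ covers $|S|\le n$ with $\log N=O(n/D)$. The similarity recurrence~\eqref{up:del-recurrence} gives $\gamma\le 5\cdot4^r\varepsilon_0$ with $4^r\le q^4$, and the final removal adds $4\gamma+O(\varepsilon)$, which yields $\delta_{\rm out}\le n^{-K}e^{-an/D}$ once $a'$ is large. Taking $M$ as in~\eqref{uf:fixed-pair-count} with $\zeta=n^{-K-A_{\mathcal S}}$ and $P<N$ keeps lengths at $\exp(O(n/D))$. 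Implementation uses Lemma~\ref{uimpl:product-cosets} (cost factor $P$) with the errors summed by Lemma~\ref{up:adaptive-error}, over at most $n^{A_{\mathcal S}}\exp(O(n/D))$ calls.

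The main obstacle is the smoothing transfer in the second step. It converts a single total-mass bound at width $s_{\rm mass}$ into $\eta_\varepsilon$ bounds at every intermediate width $t_i$ and for every prefix. The concern is the early levels with small $t_i$, where $q/t_i$ may exceed $s_{\rm mass}$, so the monotone bound does not apply directly. For those levels I would instead use the trivial integer/systematic structure: a nonzero dual coset point has a coordinate at least $1/q$-spaced, so the prefix ratio argument of Corollary~\ref{ug:uniform-geometry} applies. I would then check that only levels with $t_i\ge q/s_{\rm mass}$ need the mass bound.
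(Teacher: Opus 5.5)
\paragraph{Your smoothing step is false.} Your second step is exactly what the theorem hinges on, and it fails. By \eqref{uf:dual-interface}, $\rho_{1/u}(K_S^*\setminus\{0\})=\rho_{q/u}(L_S\setminus\{0\})$. So smoothing of $K_S$ itself at a level width $u$ is a statement about the nonzero Gaussian mass of $L_S\subseteq\Lambda_h$ at width $q/u$.

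\begin{itemize}
\item At $u=q/s_{\rm mass}$, hypothesis \eqref{uf:list-geometry} bounds this mass only by $A_{\rm mass}-1$, which may be $e^{\Theta(n/D)}$. Your rescaling gains nothing there, since the width ratio is $O(1)$. The inequality $\rho_{s/k}(L\setminus0)\le\rho_s(L\setminus0)^{k^2}$ is vacuous when the right side exceeds one.
\item At the last pairing you have $q/u\ge2s_{\rm search}$. Any nonzero $\mathbf w\in L_S$ with $\|\mathbf w\|\le B_{\rm Q}$ then contributes at least $e^{-\pi B_{\rm Q}^2/(4s_{\rm search}^2)}$. Nothing in the hypotheses excludes such $\mathbf w$. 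The theorem allows $S=[n]$, where $L_S=\Lambda_h$, and in every application this contains the witness being sought.
\item The smoothing accuracy at that level must be of order $\delta_{\rm out}\le n^{-K}e^{-an/D}$ for every fixed $a$. Theorem~\ref{uq:mass-search-framework} uses this with $\delta_{\rm out}\le\xi/16\lesssim p_B\le e^{-\pi B_{\rm Q}^2/s_{\rm search}^2}$, which is exponentially smaller than that contribution.
\end{itemize}

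So $\eta_\varepsilon(K_{S'})$ is not below the level thresholds with $\varepsilon\le e^{-a'n/D}$ for arbitrary $a'$. Lemma~\ref{up:ordered-pairing}, and the removal lemma, whose hypothesis concerns the extension, cannot be invoked as you propose. Your fallback for the early levels also fails: Corollary~\ref{ug:uniform-geometry} needs a random operator with the uniform-image property, whereas $\mathbf H_h$ is an arbitrary fixed matrix.

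\paragraph{The missing idea.} The persistent extension is what \emph{creates} the smoothing; it is not a relabelled per-level auxiliary lattice.

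\begin{itemize}
\item The paper draws one mask $(\mathbf a_0,\mathbf b_0)$ over $\mathbb F_{P_0}$, with $P_0=\exp(O(n/D))$, and sets $\widetilde K_S=K_S+\mathbb Z\tfrac q{P_0}(\mathbf a_0,(\mathbf b_0)_S)$.
\item The dual $\widetilde K_{[n]}^*$ retains each nonzero $\mathbf y\in K_{[n]}^*$ with $q\mathbf y\not\equiv\mathbf0\pmod{P_0}$ with probability at most $1/P_0$. With $u_0=q/s_{\rm mass}$ this gives $\mathbb E[\rho_{1/u_0}(\widetilde K_{[n]}^*\setminus\{0\})]\le(A_{\rm mass}-1)/P_0+\theta(s_{\rm mass}/P_0)^{2n}-1$.
\item Markov's inequality then gives $\eta_{\varepsilon_\star}(\widetilde K_{[n]})\le u_0$. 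Zero padding of duals transfers this single event to every $\widetilde K_S$, with no union bound over $\mathcal S$.
\item All lifting and pairing take place in the $\widetilde K_{S_i}$. Separate per-set auxiliary lattices $\Gamma_i$ are certified, retried, and removed level by level.
\item Lemma~\ref{uf:fixed-pair-removal} removes the mask at the end using smoothing of $\Lambda'=\widetilde K_S$ only, at the price of $M\gtrsim P_0N_{\rm out}$ pairs. The short vectors of $L_S$ never enter a smoothing condition.
\end{itemize}

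\paragraph{The schedule must change accordingly.} It cannot be taken from Lemma~\ref{ue:allocation}. It must start at $s_0\in[\sqrt2u_0,2u_0)$ on $\widetilde K_\varnothing=\mathbb Z^n+(q/P_0)\mathbb Z\mathbf a_0$, so that every level width satisfies $u_i\ge u_0$. The block capacity must then be recomputed over widths from $u_0$ to $t$; the modulus window still gives a harmonic sum of order $D$ there.
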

\end{samepage}
\begin{proof}
\emph{One mask for all prefixes.}
Write $u_0=q/s_{\rm mass}$. Choose an auxiliary prime $P_0>q$, draw
independently
\[
 \mathbf a_0\leftarrow\mathbb F_{P_0}^n\setminus\{\mathbf0\},
 \qquad \mathbf b_0\leftarrow\mathbb F_{P_0}^n,
\]
and, using fixed integer representatives, set
\[
 \boldsymbol\gamma_S=\frac q{P_0}(\mathbf a_0,(\mathbf b_0)_S),
 \qquad \widetilde K_S=K_S+\mathbb Z\boldsymbol\gamma_S.
\]
The quotient $\widetilde K_S/K_S$ has order exactly $P_0$.
If $S_{i-1}\subseteq S_i$, deletion of the new coordinates maps
$\widetilde K_{S_i}$ onto $\widetilde K_{S_{i-1}}$, with kernel
$q\mathbb Z^{|S_i\setminus S_{i-1}|}$. Indeed, surjectivity follows by
extending the integral $K_{S_{i-1}}$ part while retaining the same
multiple of $\boldsymbol\gamma$. If an element lies in the kernel, its first $n$
coordinates are an integer vector plus $kq\mathbf a_0/P_0$. Since $\mathbf a_0\ne\mathbf0$,
this forces $k=0\pmod{P_0}$, after which the systematic equation gives
the stated kernel.

Fix $j$ with $(\mathbf a_0)_j\ne0\pmod{P_0}$. For
$\mathbf x=(\mathbf x_0,\mathbf x_S)\in\widetilde K_S$, its common quotient label is
\[
 \ell_S(\mathbf x)=(q(\mathbf a_0)_j)^{-1}P_0(\mathbf x_0)_j\pmod{P_0}.
\]
For the full set, direct duality gives
\[
 \widetilde K_{[n]}^*
 =\{\mathbf y\in K_{[n]}^*:
       \langle q\mathbf y,(\mathbf a_0,\mathbf b_0)\rangle=0\pmod{P_0}\}.
\]
If $q\mathbf y\not\equiv\mathbf0\pmod{P_0}$, its survival probability is at most
$1/P_0$: use uniform $\mathbf b_0$ if the last block is nonzero and otherwise
uniform nonzero $\mathbf a_0$, for which the probability is
$(P_0^{n-1}-1)/(P_0^n-1)<1/P_0$. Since $qK_{[n]}^*$ is orthogonally
equal to $\Lambda_h$,
\begin{equation}
 \mathbb E\!\left[\rho_{1/u_0}
     (\widetilde K_{[n]}^*\setminus\{0\})\right]
 \le\frac{\rho_{s_{\rm mass}}(\Lambda_h)-1}{P_0}
       +\theta(s_{\rm mass}/P_0)^{2n}-1.
 \label{uf:mask-expectation}
\end{equation}

Choose a positive dyadic $\delta$ and set $\varepsilon_\star$ by
\begin{equation}
 \begin{gathered}
 \tfrac12 n^{-K-10}e^{-(a+10)n/D}
 \le\delta\le n^{-K-10}e^{-(a+10)n/D},\\
 \varepsilon_\star=\frac{\delta}{10^4 4^r},
 \end{gathered}
 \label{uf:fine-parameters}
\end{equation}
where $r$ is fixed below. Choose $P_0=\exp(O(n/D))$ large enough that
each term on the right of~\eqref{uf:mask-expectation} is at most
$\varepsilon_\star n^{-K-10}/2$.  This is possible because
$\ln A_{\rm mass}\le A_0n/D$. A prime in the required constant-factor
interval exists and may be chosen above the polynomially bounded $q$.
Markov's inequality gives
\begin{equation}
 \eta_{\varepsilon_\star}(\widetilde K_{[n]})\le u_0
 \label{uf:global-smoothing}
\end{equation}
except with probability $n^{-K-10}$. Projection of the full persistent
lattice onto every $\widetilde K_S$ is surjective, so zero-padding embeds
$\widetilde K_S^*$ into $\widetilde K_{[n]}^*$. The single event
\eqref{uf:global-smoothing} therefore implies
\begin{equation}
 \eta_{\varepsilon_\star}(\widetilde K_S)\le u_0
 \quad\text{for every }S\subseteq[n],
 \label{uf:all-smoothing}
\end{equation}
without a union bound over coordinate sets.

\smallskip\noindent
\emph{Widths and block capacity.}
Choose the width schedule
\[
 r=\left\lfloor\log_2\frac{t^2}{4u_0^2}\right\rfloor,
 \quad s_0=\frac{t}{\sqrt2\,2^{r/2}},
 \quad s_i=2^{i/2}s_0,
 \quad u_i=s_{i-1}/\sqrt2.
\]
The final condition in~\eqref{uf:list-geometry} makes $r\ge1$.
Then $s_0\in[\sqrt2u_0,2u_0)$, $s_r=t/\sqrt2$, and
\[
 r=(1-\beta)\log_2n-\log_2D+O(1),
 \qquad u_i\ge u_0.
\]

For a compact capacity calculation put
\[
 x_i=\ln(q/u_i),\qquad x_i\le X_i\le x_i+2,
 \qquad H_0=\sum_{i=1}^r\frac1{X_i+4},
 \qquad \ell=\ln(4/\varepsilon_\star).
\]
Put
\[
 L_P=\max\{\ell,4(n+r)/H_0,2\ln n+20\},
 \qquad X_P=\lceil e^{L_P}\rceil.
\]
By Bertrand's postulate, choose a prime $P$ with
$X_P<P<2X_P$. Then $\ln P\ge L_P$ and
$\ln P=L_P+O(1)$. Set
\[
 B_i=\left\lfloor\frac{\ln P}{2(X_i+4)}\right\rfloor.
\]
Then $\sum_iB_i\ge n$. Moreover, for every $1\le d\le B_i$,
writing $y=(\ln P)/d$ gives $y\ge2(x_i+4)$ and
$\ell/d\le y$, whence
\begin{equation}
 16qP^{-1/d}\sqrt{\max\{1,\ell/d\}}\le u_i,
 \qquad P^{1/d}>16,
 \qquad \sqrt d\,P^{1/d}\le P.
 \label{uf:capacity-condition}
\end{equation}
The first inequality follows by taking logarithms. The bounds $y\ge8$ and
$d\le n$ give the other two. The arithmetic progression $x_i$ has endpoints
\[
 x_1=\tfrac12(\ln(q/n)+(1-\beta)\ln n)+O(1),\qquad
 x_r=\tfrac12(\ln(q/n)+\ln D)+O(1).
\]
Integral comparison and the modulus window give
\begin{equation}
 H_0=\Theta\!\left(
 \ln\frac{1+\ln(q/n)+(1-\beta)\ln n}
          {1+\ln(q/n)+\ln D}\right)=\Theta(D).
 \label{uf:harmonic-capacity}
\end{equation}
Since $\ell=O(n/D)$, this proves $\ln P=O(n/D)$.

\smallskip\noindent
\emph{Certified auxiliary lifts.}
For each $S$, choose a chain
$\varnothing=S_0\subseteq S_1\subseteq\cdots\subseteq S_r=S$ with
$d_i=|S_i\setminus S_{i-1}|\le B_i$, allowing $d_i=0$.
Such a chain exists by the capacity bound.
For $d_i>0$, independently and uniformly draw
$\mathbf z_i\leftarrow\mathbb F_P^{d_i}\setminus\{\mathbf0\}$ and define
\[
 \Gamma_i=q\mathbb Z^{d_i}+\frac qP\mathbb Z\mathbf z_i,
 \qquad K_i'=\widetilde K_{S_i}+(0\oplus\Gamma_i).
\]
There are exact sequences
\begin{equation}
 0\longrightarrow\Gamma_i\longrightarrow K_i'
 \longrightarrow\widetilde K_{S_{i-1}}\longrightarrow0,
 \qquad K_i'/\widetilde K_{S_i}\simeq\mathbb Z/P\mathbb Z.
 \label{uf:temporary-extension}
\end{equation}
The first map deletes the new block. The second assertion follows from
$\Gamma_i/q\mathbb Z^{d_i}\simeq\mathbb Z/P\mathbb Z$.

Lemma~\ref{uimpl:product-cosets} implements each shifted draw on
$\Gamma_i$ at width $s_{i-1}$. Reduce the center modulo $q$ and
translate the output back. To certify smoothing, use Poisson summation:
\begin{equation}
 F_i:=\rho_{1/u_i}(\Gamma_i^*\setminus\{0\})
 =\frac{q^{d_i}}{Pu_i^{d_i}}\rho_{u_i}(\Gamma_i)-1.
 \label{uf:aux-certificate}
\end{equation}
Each shifted scalar mass is at most the centered mass, so
$\theta(v)\le1+v$ gives $F_i+1\le(1+q/u_i)^{d_i}$.
Take a rational upper bound $B_{\rm cert}$ on this expression with
$\ln B_{\rm cert}=O(n\ln n)$. Enclose the mass and its scale factor
in positive intervals of relative width at most
$\varepsilon_\star/(64B_{\rm cert})$, using
Lemma~\ref{uimpl:product-cosets} for the mass. Outward rounding gives
an interval for $F_i$ of width at most $\varepsilon_\star/4$.
Accept when its upper endpoint is at most $\varepsilon_\star$.
Every accepted lattice satisfies $\eta_{\varepsilon_\star}(\Gamma_i)\le u_i$,
and every lattice with $F_i\le\varepsilon_\star/2$ passes.

For $d_i\ge2$, the capacity condition~\eqref{uf:capacity-condition}
and~\eqref{up:hgh-smoothing} give
$\eta_{\varepsilon_\star/2}(\Gamma_i)\le u_i$ with probability at
least $1-2^{1-d_i}\ge1/2$. For $d_i=1$,
$\Gamma_i=(q/P)\mathbb Z$. Put $v=u_iP/q$.
The capacity condition gives $v\ge16\sqrt{\ell}$, so
\[
 \rho_{1/u_i}(\Gamma_i^*\setminus\{0\})
 =2\sum_{j\ge1}e^{-\pi v^2j^2}
 \le\frac{2e^{-\pi v^2}}{1-e^{-\pi v^2}}
 \le\varepsilon_\star/2.
\]
Thus every such draw passes. A cap of $R_{\rm aux}=O(\ln n)$
independent trials, chosen so that
$|\mathcal S|r2^{-R_{\rm aux}}\le n^{-K-10}$, bounds total retry
failure by $n^{-K-10}$.
For $d_i=0$, skip the lift and retain the prescribed number of
differences in consecutive pairs. The quotient has size one.

\smallskip\noindent
\emph{List recursion and quotient removal.}
The initial persistent lattice is
$J=\widetilde K_\varnothing
=\mathbb Z^n+(q/P_0)\mathbb Z\mathbf a_0$. Choose $k$ uniformly modulo $P_0$
and then sample the coordinates independently from
$D_{\mathbb Z+qk(\mathbf a_0)_j/P_0,s_0}$, using the uniform rule
and Lemma~\ref{uimpl:scalar-coset} with spacing one.
For the reference mixture, Poisson summation gives, uniformly in $k$,
\[
 \rho_{s_0}(\mathbb Z^n+qk\mathbf a_0/P_0)
 =s_0^n\bigl(1+O(ne^{-\pi s_0^2})\bigr).
\]
Thus one reference vector is pointwise
$\delta_0=O(ne^{-\pi s_0^2})$-similar to $D_{J,s_0}$. Independent
repetition is strongly conditionally $\delta_0$-similar.

Put
\[
 \zeta=n^{-K-A_{\mathcal S}-10},\qquad
 \overline\delta_i=4^i\delta_0+5(4^i-1)\varepsilon_\star.
\]
The choices in~\eqref{uf:fine-parameters} give
\(\overline\delta_r<\delta/16\) for all sufficiently large \(n\).
Choose an integer $M_{\rm pair}$ satisfying
\begin{equation}
 M_{\rm pair}\ge
 \max\left\{
 P,\;16P_0\bigl(N_{\rm out}+\ln(1/\zeta)\bigr)
 \right\}.
 \label{uf:working-list-size}
\end{equation}
Round a certified upper bound on the right side upward, with total
additive error at most two.
Start with \(2\cdot3^rM_{\rm pair}\) initial vectors.  At level \(i\),
use the primitive Gaussian lift along the first exact sequence in
\eqref{uf:temporary-extension}, with
\eqref{uf:explicit-shifted-dgs}, and then use exactly the ordered
quotient combination rule of Theorem~\ref{ue:list-engine} for
\(K_i'/\widetilde K_{S_i}\).  Its hypotheses hold because
\[
 \eta_{\varepsilon_\star}(\Gamma_i)\le u_i<s_{i-1},
 \qquad
 \eta_{\varepsilon_\star}(\widetilde K_{S_i})
       \le u_0\le u_i=s_{i-1}/\sqrt2.
\]
The lifting and ordered pairing bounds
\eqref{up:del-lift}--\eqref{up:del-pair} give, after level
\(i\), a list of length \(2\cdot3^{r-i}M_{\rm pair}\), width \(s_i\),
and strong conditional error at most \(\overline\delta_i\).  The same
estimate covers \(d_i=0\).  Hence the recursion produces
\(2M_{\rm pair}\) vectors in \(\widetilde K_S\) at width
\(s=t/\sqrt2\).

Apply Lemma~\ref{uf:fixed-pair-removal} with
\[
 \Lambda=K_S,\quad \Lambda'=\widetilde K_S,\quad
 P=P_0,\quad s=t/\sqrt2,\quad
 N=N_{\rm out},\quad \varepsilon=\varepsilon_\star.
\]
The explicit condition \(s_{\rm mass}^2\ge8s_{\rm search}^2\) gives
\[
 \eta_{\varepsilon_\star}(\widetilde K_S)
 \le u_0=q/s_{\rm mass}\le t/2=s/\sqrt2.
\]
Since \(\overline\delta_r<1/4\), the choice
\eqref{uf:working-list-size} implies
\eqref{uf:fixed-pair-count}.  The completed reference list is therefore
strongly conditionally
\[
 4\overline\delta_r+
 \log\frac{1+\varepsilon_\star}{1-\varepsilon_\star}
 <\delta
\]
similar to \(D_{K_S,t}^{N_{\rm out}}\), and its mismatch with the finite
list is at most \(\zeta\).  A union bound over
\(|\mathcal S|\le n^{A_{\mathcal S}}\) charges all such shortages to the
preprocessing failure.  Applying \(q^{-1}\mathbf U_S^{-1}\) and
\eqref{uf:KS} gives the reference list in \(L_S^*\) at width
\(1/s_{\rm search}\).

Successful reference preprocessing means the intersection of the
global smoothing event~\eqref{uf:global-smoothing}, successful
certified auxiliary selection within every retry cap, and the absence
of a final fixed-pair shortage for every prescribed set. First fix
the common mask and all selected auxiliary lattices. Conditional on
these data, different list constructions use fresh independent
randomness. The conditional form of
Lemma~\ref{uf:fixed-pair-removal} preserves each list's strong
conditional bound when its shortage event is excluded. Conditioning
on the other lists' success has no further effect at fixed data.
Finally, mixing over these data preserves the uniform pointwise
bound, since the target $D_{L_S^*,1/s_{\rm search}}$ is independent of them.

\smallskip\noindent
\emph{Finite precision and cost.}
Rational comparisons with powers of two determine $r$ exactly,
so all squared widths are rational. Bound $X_i$ upward and choose
$X_P$ within two above $e^{L_P}$. Replace $\ln P$ in $B_i$ by a
rational lower bound within $n^{-3}$. This loses at most one per
block from total capacity at least $2n+r$ and preserves
\eqref{uf:capacity-condition}.

The mask choices, auxiliary trials, initial coset and scalar choices,
and complete shifted draws together number at most
\[
 J_{\rm calls}=2+|\mathcal S|
       \bigl(rR_{\rm aux}+2(n+r+1)3^rM_{\rm pair}\bigr).
\]
Choose dyadic $\nu\in[n^{-K-10}/2,n^{-K-10}]$. Give each Gaussian
call accuracy $\nu/(2J_{\rm calls})$ and use total uniform choice
budget $\nu/2$ in Section~\ref{ue:common-sampling}.
Lemma~\ref{up:adaptive-error} gives total implementation error at
most $\nu$. The bad mask, exhausted retries, and pair shortages
have total reference probability at most $3n^{-K-10}$.

Initialization has $s_0^2\in[2q^2/s_{\rm mass}^2,4q^2/s_{\rm mass}^2)$
and spacing one. Lifting and certification have spacing $q$ and
widths $s_{i-1}$ and $u_i$. These width ratios and their inverses
have polynomial magnitude. Also
$\ln P_0,\ln P,\ln M_{\rm pair},\ln(1/\varepsilon_\star),
\ln(J_{\rm calls}/\nu)=O(n/D)$.
The scalar cutoffs and rational lattice denominators give polynomial
bit lengths through $r=O(\ln n)$ levels. The extra certification
precision has polynomial length as well. The total work is at most
\[
 2(1+|\mathcal S|)r3^rM_{\rm pair}P\,\operatorname{poly}
 (n,\ln q,\ln P,\ln P_0,\ln(1/\varepsilon_\star)),
\]
and the final pair scan is linear in its input list. This proves the theorem.
\end{proof}

\subsection{Affine slice search and witness containment}
\label{uf:affine-recovery}

For each anchor position $i\in[n]$, fix an order on
$[n]\setminus\{i\}$. Let $\mathcal S$ collect the sets of coordinates
left free along these orders. Each order has at most $n$ stages,
so $|\mathcal S|\le n^2$. For each $S$, use a dual Gaussian list
$\mathbf Y_1,\ldots,\mathbf Y_N\in L_S^*$ from
Theorem~\ref{uf:persistent-theorem}.

Embed $L_S$ in the subspace $\mathcal U_S\subseteq\mathbb R^{2n}$
of the free coordinates. Fixing the other coordinates gives a slice
$A=\mathbf s+L_S$. Decompose
$\mathbf s=\mathbf s_U+\mathbf s_\perp$ along
$\mathcal U_S\oplus\mathcal U_S^\perp$. Poisson summation gives
\begin{equation}
 \frac{\rho_{s_{\rm search}}(A)}{\rho_{s_{\rm search}}(L_S)}
 =e^{-\pi\|\mathbf s_\perp\|^2/s_{\rm search}^2}
   \mathbb E_{\mathbf Y\leftarrow D_{L_S^*,1/s_{\rm search}}}
      \left[\cos(2\pi\langle\mathbf Y,\mathbf s_U\rangle)\right].
 \label{uf:slice-poisson}
\end{equation}
The corresponding sample estimate is
\[
 \widehat w(A)=
 e^{-\pi\|\mathbf s_\perp\|^2/s_{\rm search}^2}
 \frac1N\sum_{k=1}^N
 \cos(2\pi\langle\mathbf Y_k,\mathbf s_U\rangle).
\]
For fixed $S$, changing the assigned coordinate values changes
$\mathbf s$ while reusing the same samples. Candidate children share
the denominator $\rho_{s_{\rm search}}(L_S)$, which cancels when
their weights are normalized. The procedure below evaluates this
estimate with certified numerical error and clamps it at zero.

Since $q\mathbb Z^{n+|S|}\subseteq L_S$, we have
$L_S^*\subseteq q^{-1}\mathbb Z^{n+|S|}$. Thus the cosine phase
depends only on $\mathbf s_U\bmod q$. There are at most $n^2q^{2n}$
phases across all lists. Simultaneous concentration uses the logarithm
of this count to bound the required list length. The search evaluates
the queries encountered on its sampled paths.

Define
\[
 \mathcal{W}_B(h)=\{\mathbf0\ne\mathbf z=(\mathbf u,\mathbf v)\in\Lambda_h:
                          \|\mathbf z\|\le B\}.
\]

\paragraph{The search procedure.}
Fix $h$, $s_{\rm search}>0$, $A\ge1$, $B>0$, and
$0<\varepsilon_{\rm qry},\varepsilon_{\rm hit}<1$ such that
\[
 1\le B<q/2,
 \qquad \rho_{s_{\rm search}}(\Lambda_h)\le A,
\]
and put
\begin{equation}
 H=\lfloor B\rfloor,
 \qquad p_B=A^{-1}e^{-\pi B^2/s_{\rm search}^2},
 \qquad \xi=\frac{p_B}{200n(2H+1)}.
 \label{uf:witness-mass}
\end{equation}
Let $0<\nu_{\rm list},\nu_{\rm cat}<1$ and put
$\nu=\nu_{\rm list}+\nu_{\rm cat}<1$. Given an $N$-entry list for
each $S\in\mathcal S$, construct the table $\mathcal C(h)$ as follows.
\begin{enumerate}
\item Evaluate every use of~\eqref{uf:slice-poisson} with certified numerical
error at most $\xi/4$, compute rational estimates with that certified
absolute error, and clamp them at zero.

\item For each $i\in[n]$ and $a\in[1,H]$, run the affine slice search from
the anchor $u_i=a$. At every later coordinate retain only values in
$[-H,H]$ consistent with the partial energy bound. Choose the next value
with probability proportional to the resulting nonnegative rational
weights. A zero total weight is a declared failure.
\item Run every anchored branch independently
\begin{equation}
 T=\left\lceil C_3p_B^{-1}
                    \ln(1/\varepsilon_{\rm hit})\right\rceil
 \label{uf:search-repetitions}
\end{equation}
times, completing all runs even if one has already found a verified output.
There are at most $J_{\rm cat}=n^2HT$ adaptive choices from the rational
weights. Use the finite weight rule in Section~\ref{ue:common-sampling}
with accuracy $\xi_{\rm cat}=\nu_{\rm cat}/J_{\rm cat}$ for each choice.
Lemma~\ref{up:adaptive-error} gives total error at most $\nu_{\rm cat}$.
\item Once all $\mathbf u$ coordinates are fixed, set
$\mathbf v=\operatorname{lift}_q^{\rm cen}(\mathbf H_h\mathbf u)$.
Verify lattice
membership, nonzeroness, and the norm bound exactly. Retain every
verified pair and also its negative. Call the resulting table
$\mathcal C(h)$.
\end{enumerate}

\begin{theorem}[Witness containment]
\label{uf:candidate-table}
Take $s_{\rm search}^2,A,B^2$ and the error budgets to be positive rationals
given in binary. Let $L_{\rm enc}$ bound their bit lengths and those
of the list coordinates.
There are absolute constants $C_2,C_3>0$ for which the procedure above
has the following guarantee. Suppose that:
\begin{enumerate}
\item the lists have preprocessing failure at most $\varepsilon_{\rm pre}$.
\item their reference laws are strongly conditionally $\delta$-similar
to the corresponding $D_{L_S^*,1/s_{\rm search}}^N$, and their joint
implementation has total variation error at most $\nu_{\rm list}$.
\item the accuracy and list length satisfy
\begin{equation}
 \delta\le\xi/16,\qquad
 N\ge C_2\xi^{-2}
       \left(2n\ln q+\ln\frac{2n^2}{\varepsilon_{\rm qry}}\right).
 \label{uf:query-list-size}
\end{equation}
\end{enumerate}
Then, for every fixed $\mathbf z\in\mathcal{W}_B(h)$,
\begin{equation}
 \Pr[\mathcal C(h)\cap\{\mathbf z,-\mathbf z\}=\varnothing]
 \le \varepsilon_{\rm pre}+\nu+
       \varepsilon_{\rm qry}+\varepsilon_{\rm hit}.
 \label{uf:fixed-witness}
\end{equation}
Moreover,
\begin{equation}
 |\mathcal C(h)|\le2nHT.
 \label{uf:table-size}
\end{equation}
Writing $T_{\rm list}$ for the cost of constructing all the prescribed
lists, the list construction plus all slice queries takes
\[
 T_{\rm list}+\operatorname{poly}
 (n,H,\ln q,L_{\rm enc},\ln T,\ln(1/\xi),\ln(1/\nu_{\rm cat}))NT
\]
bit operations. Its space is bounded by the list storage, the bit
storage of the table, and polynomial working storage in these
parameters.
\end{theorem}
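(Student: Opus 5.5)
The plan has four steps. First I bound the failure events of the lists, then I prove a path lower bound for the ideal sampler, then I show the estimated weights preserve that bound up to a constant, and finally I amplify by repetition. The cost and table-size claims are pure bookkeeping and are handled at the end.

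\emph{Failure events and query concentration.} Move from the implemented lists and the implemented categorical choices to the reference ones. By Lemma~\ref{up:adaptive-error} and the stated total variation budgets, this costs $\nu=\nu_{\rm list}+\nu_{\rm cat}$ in probability. Preprocessing failure adds $\varepsilon_{\rm pre}$. On the reference event, each list is strongly conditionally $\delta$-similar to $D_{L_S^*,1/s_{\rm search}}^N$. Because $L_S^*\subseteq q^{-1}\mathbb Z^{n+|S|}$, every cosine query is determined by $\mathbf s_U\bmod q$. There are therefore at most $q^{2n}$ distinct functions per list, and at most $n^2q^{2n}$ in total. Apply Lemma~\ref{up:concentration} to each function with $u$ a small constant multiple of $\xi$, and take a union bound. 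The list length \eqref{uf:query-list-size} and $\delta\le\xi/16$ then give a single event, of failure probability at most $\varepsilon_{\rm qry}$, on which every empirical cosine average is within $\xi/4$ of its mean. This event is simultaneous over all phases, so it covers the adaptively chosen queries. Multiplying by $e^{-\pi\|\mathbf s_\perp\|^2/s^2}\le1$ and adding the certified numerical error $\xi/4$ shows the following. Every estimated weight, normalized by $\rho_s(L_S)$, is within $\xi/2$ of the true ratio $\rho_s(A)/\rho_s(L_S)$ from \eqref{uf:slice-poisson}, and clamping does not increase this error.

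\emph{Ideal path probability.} Fix $\mathbf z\in\mathcal W_B(h)$. Since $B<q/2$, we have $\mathbf u\ne0$, so, replacing $\mathbf z$ by $-\mathbf z$ if needed, some anchor $(i,a)$ with $u_i=a\in[1,H]$ agrees with $\mathbf z$. Along the fixed order for $i$, let $A_0\supseteq A_1\supseteq\cdots$ denote the slices consistent with $\mathbf z$'s successive prefixes. At each step, the ideal transition probability to $\mathbf z$'s value is $\rho_s(A_{k+1})/\sum_b\rho_s(A_b^{\rm child})\ge\rho_s(A_{k+1})/\rho_s(A_k)$, because the retained children are disjoint subsets of $A_k$. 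The product telescopes to $\rho_s(A_{\rm final})/\rho_s(A_0)$. The final slice contains $\mathbf z$, since $\mathbf v$ is then determined as the centered lift, and this gives $\rho_s(A_{\rm final})\ge e^{-\pi B^2/s^2}$. The initial slice satisfies $\rho_s(A_0)\le\rho_s(\Lambda_h)\le A$. Hence the ideal path probability is at least $p_B$. I must also check that $\mathbf z$'s prefixes always pass the partial energy filter and the range $[-H,H]$, which holds because $\|\mathbf z\|\le B$.

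\emph{Main obstacle: perturbation of the transition probabilities.} Additive errors of $\xi/2$ per weight could destroy small ratios. I would handle this with a two-case argument at each step. Let $w_k=\rho_s(A_{k+1})/\rho_s(L_S)$ and let $W_k$ be the total over the at most $2H+1$ children. The estimated probability is at least $(w_k-\xi/2)/(W_k+(2H+1)\xi/2)$. The path product of the true ratios $w_k/W_k$ dominates $p_B$, and every $w_k\ge e^{-\pi B^2/s^2}\rho_s(L_S)^{-1}\cdots$. So I will instead normalize consistently and show $w_k\ge p_B\cdot(\text{prefix mass})$. Concretely, the ratio of the true child weight to the slice mass along the path never drops below $p_B$ times the root. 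With $\xi=p_B/[200n(2H+1)]$, each of the at most $n$ factors is then multiplied by at least $1-1/(100n)$, so the whole product is at least $p_B/2$. Making this uniform bound on $w_k$ relative to $\xi$ precise, in particular choosing the right normalizing mass per list so that the additive error is comparable to $p_B$ times the current parent weight, is the delicate point. I would try to normalize each list's estimates by $\rho_s(L_S)$ and use $\rho_s(A_k)/\rho_s(L_S)\le1$ together with the telescoped lower bound $\rho_s(A_{k+1})/\rho_s(L_S)\ge$ the remaining product.

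\emph{Repetition and bookkeeping.} The runs use fresh randomness, so the probability that all $T$ runs miss $\mathbf z$ is at most $(1-p_B/2)^T\le\varepsilon_{\rm hit}$ once $C_3\ge2$ in \eqref{uf:search-repetitions}. A run that reaches $\mathbf z$'s $\mathbf u$ completes it to $\mathbf z$, which passes the exact checks, and the table stores both signs. Summing the failure contributions gives \eqref{uf:fixed-witness}. The table-size bound \eqref{uf:table-size} holds because there are $nHT$ runs and each adds at most two entries. Each of the at most $n\cdot(2H+1)$ queries per run scans $N$ samples with polynomial-precision arithmetic, which gives the stated time bound, and the space is the list storage plus the table plus working storage.
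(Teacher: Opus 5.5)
Your architecture is the paper's: one concentration event over the at most $n^2q^{2n}$ phase functions, the telescoping bound $\prod_k P_k\ge\rho_{s_{\rm search}}(\mathbf z)/\rho_{s_{\rm search}}(\Lambda_h)\ge p_B$ for the ideal transitions, a per-level multiplicative correction for the estimated weights, and repetition. The gap is the step you flag as delicate and leave unresolved.

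The additive error is absolute in the normalized units $\rho(A)/\rho(L_{S'})$. So at every level you need an absolute bound $w_k\ge p_B$ on the correct child's normalized weight. A bound of the form $w_k\ge p_B\cdot(\text{prefix mass})$, or one built from the remaining telescoped product, is weaker by a factor that can be small, and then it does not control $\xi/w_k$. Your suggestion to use $\rho_s(A_k)/\rho_s(L_S)\le1$ also points the wrong way: it is an upper bound, and no upper bound on the sibling total is needed.

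You were one step from the fix, since you wrote $w_k\ge e^{-\pi B^2/s^2}\rho_s(L_S)^{-1}$. The correct child slice $A_{k+1}$ contains $\mathbf z$, so $\rho_{s_{\rm search}}(A_{k+1})\ge e^{-\pi B^2/s_{\rm search}^2}$. Its direction lattice $L_{S'}$, with $S'=S\setminus\{j\}$ common to all siblings, embeds in $\Lambda_h$ by zero padding, so $\rho_{s_{\rm search}}(L_{S'})\le\rho_{s_{\rm search}}(\Lambda_h)\le A$. This is the paper's observation that ``the correct child's normalized mass is at least $p_B$'' because its direction lattice embeds in $\Lambda_h$.

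Hence $w_k\ge p_B$ at every level, independently of the path, and trivially $W_k\ge w_k\ge p_B$. Each estimate has error at most $\xi$, and clamping only helps since true weights are nonnegative. The estimated transition probability is therefore at least
\[
 \frac{w_k-\xi}{W_k+(2H+1)\xi}
 \ge\frac{w_k}{W_k}\cdot\frac{1-\xi/p_B}{1+(2H+1)\xi/p_B}.
\]
With $\xi=p_B/[200n(2H+1)]$, the product of these correction factors over at most $n$ levels is bounded below by an absolute constant. Combined with your telescoped ideal bound, this gives one-run success probability $c_2p_B$. Your repetition, table-size, and cost bookkeeping then go through as written.
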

\begin{proof}
\emph{Uniform query accuracy.}
Certified lower bounds for the accuracy thresholds and upper bounds
for the required $N,T$ give finite choices with fixed constant margins.
In the public search
schedule, $L_{\rm enc}$ and $\ln(1/\xi)$ are polynomial in $n$.
For a reference list, Lemma~\ref{up:concentration}, followed by a union bound
over at most $n^2q^{2n}$ phases shows that
\eqref{uf:query-list-size} makes every empirical cosine accurate within
$3\xi/4$, except with probability $\varepsilon_{\rm qry}$. Certified
finite-precision evaluation uses the remaining $\xi/4$. The
total variation bound $\nu$ transfers the complete reference analysis
below to the implemented procedure.

\smallskip\noindent
\emph{A fixed witness path.}
Fix $\mathbf z=(\mathbf u,\mathbf v)\in\mathcal{W}_B(h)$.
If $\mathbf u=\mathbf0$, then
$\mathbf v\in q\mathbb Z^n$. Since $\|\mathbf v\|<q/2$,
this would force $\mathbf v=\mathbf0$, a contradiction.
Choose $i$ with $u_i\ne0$, and replace $\mathbf z$ by $-\mathbf z$ if necessary so that
$a=u_i\in[1,H]$. The anchored branch $(i,a)$ is therefore present.

Every prefix slice on the path to $\mathbf z$ contains $\mathbf z$. Since its direction
lattice embeds in $\Lambda_h$, the correct child's normalized mass is
at least
\[
 \frac{\rho_{s_{\rm search}}(\mathbf z)}{\rho_{s_{\rm search}}(\Lambda_h)}\ge p_B.
\]
If a level has at most $2H+1$ children, additive error at most $\xi$
changes the correct transition probability by a multiplicative factor
at least
\[
 \frac{1-\xi/p_B}{1+(2H+1)\xi/p_B}.
\]
The product of these factors over at most $n$ levels is bounded below
by an absolute positive constant. To account explicitly for the
retained candidates, let $C_j$ be their set at level $j$, let
$A_{j-1}$ be the parent slice, and let $A_j$ be the correct child.
The child slices are disjoint subsets of the parent, so the ideal
transition probability satisfies
\[
 P_j=\frac{\rho_{s_{\rm search}}(A_j)}
             {\sum_{b\in C_j}\rho_{s_{\rm search}}(A_{j,b})}
 \ge\frac{\rho_{s_{\rm search}}(A_j)}{\rho_{s_{\rm search}}(A_{j-1})}.
\]
Multiplying these lower bounds gives
\[
 \prod_jP_j\ge
 \frac{\rho_{s_{\rm search}}(A_{\rm leaf})}{\rho_{s_{\rm search}}(A_{\rm anchor})}
 \ge\frac{\rho_{s_{\rm search}}(\mathbf z)}{\rho_{s_{\rm search}}(\Lambda_h)}\ge p_B.
\]
Here the leaf contains the witness, while the anchor is a subset of
$\Lambda_h$. Once all $\mathbf u$ coordinates are fixed, the centered
representative completes $\mathbf v$ uniquely because $B<q/2$.
Hence one
run of the correct branch outputs $\mathbf z$ with probability at least
$c_2p_B$ for an absolute $c_2>0$.

\smallskip\noindent
\emph{Repetition and verification.}
The probability that all $T$ repetitions miss this fixed witness is at
most $(1-c_2p_B)^T\le e^{-c_2p_BT}\le\varepsilon_{\rm hit}$ after
choosing $C_3$ sufficiently large. Each coordinate of $\mathbf v$ lies in
$(-q/2,q/2)$, so the centered representative of
$\mathbf H_h\mathbf u\bmod q$ is exactly
$\mathbf v$ and the witness passes the final verification. Every anchored run
contributes at most one raw pair. Retaining both signs proves
\eqref{uf:table-size}.
\end{proof}

For a random witness, condition on the realized pair $(h,\mathbf z)$, apply
\eqref{uf:fixed-witness}, and average. The uniform bound for each fixed
witness gives the same guarantee under any distribution on such pairs.

\subsection{Falcon: quality, Gaussian mass, and basis completion}
\label{uf:regular-generator}

This subsection gives two ways to obtain the Gaussian mass bound
needed for recovery in the negacyclic ring. We first record the
shared lattice duality and basis completion facts. We then prove
Theorem~\ref{uf:falcon-quality-recovery} from Falcon's key generation
quality condition in Section~\ref{uf:quality-proof}. The final part
derives a separate recovery guarantee from spectral bounds on a
short pair.

Let
\[
 n=2^k,\qquad R=\mathbb Z[x]/(x^n+1),\qquad R_q=R/qR,
\]
where $q\equiv1\pmod{2n}$ is prime. We use the coefficient embedding
$\operatorname{coeff}:R\to\mathbb Z^n$ and its induced Euclidean
norm. For $h\in R_q$, let $\mathbf H_h$ be its negacyclic multiplication matrix,
so that $\operatorname{coeff}(hu)=\mathbf H_h\operatorname{coeff}(u)\pmod q$,
and put
\begin{equation}
 \Lambda_h
 =\operatorname{coeff}\{(u,v)\in R^2:hu-v=0\pmod q\}
 =\ker_q[\mathbf H_h\mid-\mathbf I_n].
 \label{uf:lattice}
\end{equation}
Here and below $\operatorname{coeff}$ acts blockwise on ring pairs and
sets. We identify a ring submodule with its coefficient lattice when
taking intersections, orthogonal projections, or Gaussian masses.

\begin{lemma}
\label{uf:public-isoduality}
Write $a^*=a(x^{-1})$ for coefficient conjugation. For every
$h\in R_q$, the orthogonal map
\begin{equation}
 \mathbf J\operatorname{coeff}(u,v)=\operatorname{coeff}(v^*,-u^*)
 \quad\text{satisfies}\quad q\Lambda_h^*=\mathbf J\Lambda_h.
 \label{uf:public-duality}
\end{equation}
\end{lemma}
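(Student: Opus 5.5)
The plan is to compute the dual lattice explicitly and then check that $\mathbf J$ carries $\Lambda_h$ onto $q\Lambda_h^*$. By the systematic identity~\eqref{up:systematic-identities}, applied to $\mathbf F=[\mathbf H_h\mid-\mathbf I_n]$, the dual is
\[
 \Lambda_h^*=\mathbb Z^{2n}+q^{-1}\begin{pmatrix}\mathbf H_h^T\\-\mathbf I_n\end{pmatrix}\mathbb Z^n .
\]
The minus sign on the identity block changes nothing, since $\ker_q[\mathbf H_h\mid-\mathbf I_n]=\ker_q[-\mathbf H_h\mid\mathbf I_n]$. Consequently
\[
 q\Lambda_h^*=q\mathbb Z^{2n}+\{(\mathbf H_h^T\mathbf w,-\mathbf w):\mathbf w\in\mathbb Z^n\}.
\]

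The key algebraic input is that for the negacyclic ring the transpose of a multiplication matrix is multiplication by the conjugate: $\mathbf H_h^T=\mathbf H_{h^*}$. The reason is that $\langle\operatorname{coeff}(a),\operatorname{coeff}(b)\rangle$ is the constant coefficient of $ab^*$, so $\langle ha,b\rangle=\langle a,h^*b\rangle$. The coefficient map $a\mapsto a^*$ sends $x^i\mapsto -x^{n-i}$ for $0<i<n$ and fixes $1$, so it is a signed permutation and therefore orthogonal. This makes $\mathbf J$ orthogonal and integral, and it commutes with reduction modulo $q$.

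For the inclusion $\mathbf J\Lambda_h\subseteq q\Lambda_h^*$, take $(u,v)$ with $hu\equiv v$, and set $w=u^*$. Then $\mathbf J(u,v)=(v^*,-u^*)$. Since $v^*\equiv h^*u^*=h^*w\pmod q$, the first block equals $\mathbf H_{h^*}\operatorname{coeff}(w)$ plus a vector of $q\mathbb Z^n$, and the second block is $-\operatorname{coeff}(w)$. So the image lies in $q\Lambda_h^*$.

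For equality, I would compare determinants instead of proving the reverse inclusion directly. The lattice $\Lambda_h$ has determinant $q^n$, so $q\Lambda_h^*$ has determinant $q^{2n}q^{-n}=q^n$. The orthogonal map $\mathbf J$ preserves determinants, so the inclusion of two full-rank lattices with equal determinant is an equality. Alternatively, the reverse inclusion is direct: given $\mathbf w$, the pair $(u,v)=(w^*,h w^*\bmod q)$ lies in $\Lambda_h$ and maps to $(h^*w,-w)$ up to $q\mathbb Z^{2n}$, while $q\mathbb Z^{2n}\subseteq\Lambda_h$ maps onto $q\mathbb Z^{2n}$.

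The only real obstacle is the sign bookkeeping in the conjugation and transpose identity, so I would verify $\mathbf H_h^T=\mathbf H_{h^*}$ carefully on monomials $h=x^i$ before concluding.
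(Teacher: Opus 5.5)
Your proposal is correct and follows essentially the same route as the paper: compute $q\Lambda_h^*$ from the systematic form, use the transpose--conjugation identity to get $\mathbf J\Lambda_h\subseteq q\Lambda_h^*$, and conclude because both lattices have determinant $q^n$. The paper states that identity as $\mathbf H^T\mathbf S=\mathbf S\mathbf H$ for the signed permutation $\mathbf S$ of conjugation, which is equivalent to your $\mathbf H_h^T=\mathbf H_{h^*}$ combined with $(hu)^*=h^*u^*$; your derivation via the constant coefficient of $ab^*$ supplies a verification the paper leaves implicit.
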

\begin{proof}
Let $\mathbf H$ be an integer negacyclic multiplication matrix for $h$
and let $\mathbf S$ be the signed permutation matrix of conjugation.
Then $\mathbf H^T\mathbf S=\mathbf S\mathbf H$, and the systematic
form of~\eqref{uf:lattice} gives
\[
 q\Lambda_h^*=\{(\mathbf a,\mathbf b)\in\mathbb Z^{2n}:
               \mathbf a+\mathbf H^T\mathbf b=0\pmod q\}.
\]
For $(\mathbf u,\mathbf v)\in\Lambda_h$,
$\mathbf S\mathbf v-\mathbf H^T\mathbf S\mathbf u
=\mathbf S(\mathbf v-\mathbf H\mathbf u)=0\pmod q$.
Thus $\mathbf J\Lambda_h\subseteq q\Lambda_h^*$.
Both lattices have determinant $q^n$, proving equality.
\end{proof}

\subsubsection{Basis completion}
\label{uf:completion}

For the conventional row basis orientation, put
\begin{equation}
 \Lambda_h^{\rm std}=\operatorname{coeff}\{(a,b)\in R^2:a+bh=0\pmod q\}.
 \label{uf:standard-lattice}
\end{equation}
It is the image of~\eqref{uf:lattice} under the signed coordinate
permutation $\operatorname{coeff}(u,v)\mapsto\operatorname{coeff}(v,-u)$.
For $u,v,F,G\in R$, define
\begin{equation}
 \mathbf B(u,v,F,G)=
 \begin{pmatrix}v&-u\\G&-F\end{pmatrix}.
 \label{uf:ntru-basis}
\end{equation}
The equation
\begin{equation}
 uG-vF=q
 \label{uf:ntru-equation}
\end{equation}
makes its determinant equal to $q$. If $u$ is invertible modulo $q$ and
$vu^{-1}=h$, the coefficient vectors of both rows of~\eqref{uf:ntru-basis} lie in
$\Lambda_h^{\rm std}$. Their coefficient row lattice has determinant
$q^n$, equal to the index of~\eqref{uf:standard-lattice}, so they form
an NTRU basis of that lattice.

\begin{lemma}
\label{uf:polynomial-completion}
Given $q,u,v$, one can decide in time polynomial in their coefficient
bit length and in $n$ whether~\eqref{uf:ntru-equation} has an integer
solution, and return a solution of polynomial bit length whenever it
exists. If $u\bmod q$ is a unit and $vu^{-1}=h$, this solution gives
a basis of $\Lambda_h^{\rm std}$.
\end{lemma}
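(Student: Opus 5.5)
The plan is to reduce equation~\eqref{uf:ntru-equation} to a linear system over $\mathbb Z$. In coefficient coordinates, multiplication by a fixed element of $R$ is an integer $n\times n$ negacyclic matrix. Writing $\mathbf M_u,\mathbf M_v$ for the matrices of multiplication by $u,v$, the relation $uG-vF=q$ becomes
\[
 [\mathbf M_u\mid-\mathbf M_v]\begin{pmatrix}\operatorname{coeff}(G)\\ \operatorname{coeff}(F)\end{pmatrix}=q\,\mathbf e_0,
\]
an integer system with $n$ equations and $2n$ unknowns. Its entries have bit length bounded by that of $u,v$.

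To decide solvability and produce a solution, compute the Hermite or Smith normal form of the $n\times 2n$ matrix $\mathbf C=[\mathbf M_u\mid-\mathbf M_v]$ using a polynomial-time algorithm with unimodular transforms of polynomial bit length, such as Kannan--Bachem or Storjohann. With $\mathbf C=\mathbf U\mathbf S\mathbf V$, the system $\mathbf C\mathbf x=q\mathbf e_0$ is solvable over $\mathbb Z$ if and only if $\mathbf U^{-1}q\mathbf e_0$ is divisible coordinatewise by the diagonal of $\mathbf S$ and vanishes outside its rank. When it is solvable, $\mathbf x=\mathbf V^{-1}\mathbf y$ gives a solution of polynomial bit length. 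Equivalently, one could test whether $q\mathbf e_0$ lies in the integer column lattice of $\mathbf C$ by reducing with its HNF basis. The main obstacle is only the bit-length bookkeeping, and the cited normal-form algorithms settle it: their transforms and outputs have size polynomial in $n$ and the entry lengths, and so does the solution.

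For the second claim, suppose $u\bmod q$ is a unit and $vu^{-1}=h$. Then, as already argued before the lemma, both rows of $\mathbf B(u,v,F,G)$ lie in $\Lambda_h^{\rm std}$. The first row lies there because $v-uh=0\pmod q$. For the second row, $uG-vF=q$ gives $u(G-Fh)=q+F(v-uh)\equiv0\pmod q$, and invertibility of $u$ yields $G-Fh\equiv0$. The coefficient row lattice has determinant $|\operatorname{Norm}(q)|=q^n$ coming from the ring determinant $uG-vF=q$, which equals the index $[\mathbb Z^{2n}:\Lambda_h^{\rm std}]=q^n$. A full-rank sublattice of equal determinant is the whole lattice, so the rows form a basis.
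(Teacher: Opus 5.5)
Your proposal is correct and follows the paper's proof essentially step for step. The paper also rewrites $uG-vF=q$ as the integer system $[\mathbf M_u\mid-\mathbf M_v](\operatorname{coeff}(G),\operatorname{coeff}(F))^T=q\mathbf e_0$ and invokes polynomial-time Smith/Hermite normal form algorithms with transformation matrices (Kannan--Bachem) to decide solvability and return a solution of polynomial bit length. It then obtains the basis claim from membership of both rows together with the determinant $q^n$ matching the index of $\Lambda_h^{\rm std}$. Your explicit verification that the second row lies in the lattice, via $u(G-Fh)\equiv F(v-uh)\equiv0\pmod q$ and invertibility of $u$, fills in a step the paper only asserts.
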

\begin{proof}
Let $\mathbf M_u,\mathbf M_v$ be the integer coefficient multiplication matrices,
defined by $\mathbf M_u\operatorname{coeff}(a)=\operatorname{coeff}(ua)$
and likewise for $v$.
The equation is the integer linear system
\[
 [\mathbf M_u\mid-\mathbf M_v]
 \begin{pmatrix}\operatorname{coeff}(G)\\
                 \operatorname{coeff}(F)\end{pmatrix}=q\mathbf e_0.
\]
Here $\mathbf e_0=\operatorname{coeff}(1)$.
Polynomial-time Smith or Hermite normal form algorithms, with their
transformation matrices, decide solvability and produce a solution
of polynomial bit length~\cite{KB79}. The determinant and public
relation argument above proves the basis claim.
\end{proof}

Define $\mathsf{CompleteAndCheck}_{q,B}(h,u,v)$ to
perform the unit, public relation, and norm checks below, solve the
integer system using Lemma~\ref{uf:polynomial-completion}, and reject
if a check fails or no solution exists. It accepts only after verifying
\begin{equation}
 \begin{gathered}
 u\bmod q\in R_q^\times,\qquad
 vu^{-1}=h\pmod q,
 \qquad \|(u,v)\|\le B,\\
 uG-vF=q\ \text{in }R.
 \end{gathered}
 \label{uf:completion-checks}
\end{equation}
All checks use exact integer or rational arithmetic. The procedure
decides the displayed conditions and returns an integer completion
for every accepted pair. For $B=O(\sqrt q)$, each candidate is
processed in $T_{\rm comp}(n,q)=\operatorname{poly}(n,\log q)$ time.

\subsubsection{Recovery from the key quality condition}
\label{uf:quality-proof}

For $a^*=a(x^{-1})$, define
\begin{equation}
 \gamma_q(u,v)=\max\left\{\|(v,-u)\|_2,
 \left\|\left(\frac{qu^*}{uu^*+vv^*},
                   \frac{qv^*}{uu^*+vv^*}\right)\right\|_2\right\}.
 \label{uf:falcon-quality}
\end{equation}
Compute the inverse in $R\otimes\mathbb Q$ and set
$\gamma_q(u,v)=+\infty$ if it does not exist. All norms are
Euclidean coefficient norms. Fix a rational constant $C\ge1$,
independent of $n$, and impose the exact quality predicate
$\gamma_q(u,v)^2\le C^2q$.
Falcon uses $C=117/100$ in this predicate
\cite[Equation~(3.28) and Algorithm~5]{FalconSpec}.
A quality-qualified key is one admitting the algebraic and quality
witness in~\eqref{ui:falcon-quality-family}.

\begin{lemma}
\label{uf:completion-quality}
Suppose $u\bmod q$ is a unit, $vu^{-1}=h\pmod q$, and
$uG-vF=q$ in $R$. Order the $2n$ coefficient rows as the $n$
shifts of $(v,-u)$ followed by the $n$ shifts of $(G,-F)$.
They form a basis of $\Lambda_h^{\rm std}$ with maximum
Gram--Schmidt norm at most $\gamma_q(u,v)$, for every integer
completion $(F,G)$. The quality predicate is decidable exactly
in polynomial bit complexity.
\end{lemma}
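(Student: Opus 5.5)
The plan is to compute the Gram--Schmidt orthogonalization of the ordered rows block by block, exploiting that negacyclic multiplication by $x$ acts orthogonally on coefficient vectors. The basis claim itself is already available: the determinant argument after~\eqref{uf:ntru-equation} (equivalently Lemma~\ref{uf:polynomial-completion}) shows that the $2n$ rows form a basis of $\Lambda_h^{\rm std}$ whenever $u\bmod q$ is a unit, $vu^{-1}=h$, and $uG-vF=q$. What remains is the Gram--Schmidt bound and the exact decidability.

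\emph{First block.} The map $(a,b)\mapsto(xa,xb)$ is a signed coordinate permutation of $\mathbb R^{2n}$, hence orthogonal. So every shift $x^j(v,-u)$ has norm $\|(v,-u)\|$. Each Gram--Schmidt vector is the projection of an original row, so its norm is at most that row's norm. Hence the first $n$ Gram--Schmidt norms are at most $\|(v,-u)\|\le\gamma_q(u,v)$.

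\emph{Second block.} Let $W$ be the real span of the shifts of $(v,-u)$; it is the $R\otimes\mathbb R$-module generated by $(v,-u)$. Because $W$ is invariant under the orthogonal map $x$, the orthogonal projection $\pi$ onto $W^\perp$ commutes with $x$. I will identify $\pi(G,-F)$ explicitly. Note that $uu^*+vv^*$ is invertible in $R\otimes\mathbb Q$ when $\gamma_q(u,v)<\infty$, which is the only case where the bound has content. Set $k=(Gv^*+Fu^*)/(uu^*+vv^*)$ and subtract $k\cdot(v,-u)$. The identity $uG-vF=q$ gives
\[
 (G,-F)-k(v,-u)=\left(\frac{qu^*}{uu^*+vv^*},\frac{qv^*}{uu^*+vv^*}\right).
\]
This difference is orthogonal to $W$, because the $R$-valued inner product $Gv^*+Fu^*-k(vv^*+uu^*)$ vanishes, and vanishing of this product means orthogonality to every shift. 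Therefore $\pi(x^j(G,-F))=x^j\pi(G,-F)$ has norm equal to the second term in~\eqref{uf:falcon-quality}.

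The Gram--Schmidt vectors of the second block are the Gram--Schmidt vectors of these projected shifts taken within $W^\perp$. Each one is a further projection of a projected shift, so each has norm at most $\|\pi(G,-F)\|\le\gamma_q(u,v)$. This bound holds for every integer completion $(F,G)$, since the projected vector depends only on $q,u,v$.

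\emph{Decidability.} Compute $uu^*+vv^*$ exactly and solve the rational linear system for its inverse in $\mathbb Q[x]/(x^n+1)$; if the system is singular, return $\gamma_q=+\infty$ and reject. Otherwise form both vectors in~\eqref{uf:falcon-quality} and compare their squared norms, as exact rationals, with $C^2q$. Cramer's rule bounds the bit lengths polynomially.

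The main obstacle is the second-block step. The identity for the projection and the orthogonality of the residual to $W$ must be checked carefully in the coefficient embedding. In particular, the adjoint of multiplication by $a$ is multiplication by $a^*$, which is what makes the ring-valued inner product $Gv^*+Fu^*$ encode all real inner products with the shifts of $(v,-u)$. The sign conventions of the orientation $(v,-u),(G,-F)$ must also match this.
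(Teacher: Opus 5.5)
Your proposal is correct and follows essentially the same route as the paper: the explicit identity $(G,-F)-\mu(v,-u)=(qu^*/d,\,qv^*/d)$ with $d=uu^*+vv^*$, orthogonality to all shifts via the adjoint $a\mapsto a^*$, shift-invariance of norms in both blocks, and exact rational linear algebra for the predicate. The only difference is in the singular case. The paper notes that $d$ is always invertible here, since $uG-vF=q$ rules out a common zero of $u,v$ at the roots of $x^n+1$. You instead dispose of that case by observing that the bound is vacuous when $\gamma_q=+\infty$, and either argument works.
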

\begin{proof}
The ring matrix~\eqref{uf:ntru-basis} has determinant $q$, so its
full coefficient matrix has determinant of absolute value $q^n$.
Its rows lie in $\Lambda_h^{\rm std}$, whose determinant is also
$q^n$, and hence form a full basis.
Put $d=uu^*+vv^*$, $b_1=(v,-u)$, $b_2=(G,-F)$, and
$\mu=(Gv^*+Fu^*)/d$. The NTRU equation ensures that $u,v$ have
no common complex zero, so $d$ is invertible. It also gives
\begin{equation}
 b_2-\mu b_1=\left(\frac{qu^*}{d},\frac{qv^*}{d}\right).
 \label{uf:completion-projection}
\end{equation}
The right-hand side is orthogonal to every coefficient shift of $b_1$,
and $\mu b_1$ lies in their real span. All shifts of $b_1$ have
norm $\|b_1\|_2$, so every Gram--Schmidt length in the first block
is at most this value. After projecting away that block, the second
block consists of shifts of the right-hand side of
\eqref{uf:completion-projection}. Further orthogonalization cannot
increase their common norm. This proves the quality bound for any
completion.

For exact verification, form the integer multiplication matrix of
$d$. Rational linear algebra either detects singularity or computes
$d^{-1}$ with polynomial coefficient bit length. Exact squared
coefficient norms and comparison with the rational $C^2q$ therefore
take polynomial bit complexity.
\end{proof}

\begin{lemma}
\label{uf:gs-mass-product}
Let $L$ be a full-rank lattice of dimension $d$, with basis
$(b_1,\ldots,b_d)$ and Gram--Schmidt lengths
$r_i=\|\widetilde b_i\|_2>0$. For every $s>0$,
\[
 \rho_s(L)\le\prod_{i=1}^d\theta(s/r_i).
\]
\end{lemma}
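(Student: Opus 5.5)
I would prove this by induction on the dimension $d$, peeling off the last Gram--Schmidt direction. Write each lattice point as $\mathbf x=\sum_i z_ib_i$ with $z\in\mathbb Z^d$, and decompose $\mathbf x$ orthogonally into its component in $\operatorname{span}(b_1,\ldots,b_{d-1})$ and its component along $\widetilde b_d$. The latter equals $z_d\widetilde b_d$, so
\[
 \|\mathbf x\|^2=\|\mathbf y+z_d\pi(b_d)\|^2+z_d^2r_d^2,
\]
where $\mathbf y\in L'=\mathcal L(b_1,\ldots,b_{d-1})$ and $\pi$ denotes orthogonal projection onto the span of $L'$. The Gaussian therefore factors as
\[
 \rho_s(L)=\sum_{z_d\in\mathbb Z}e^{-\pi z_d^2r_d^2/s^2}\,
 \rho_s\bigl(L'+z_d\pi(b_d)\bigr).
\]

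The key step is the shifted-mass bound $\rho_s(L'+\mathbf c)\le\rho_s(L')$ for every $\mathbf c$ in the span of $L'$. This is the triangle inequality consequence of Poisson summation~\eqref{up:poisson-mass} already recorded in Section~\ref{up:gaussian-smoothing}, applied to the rank-$(d-1)$ lattice $L'$ inside its own span, where all dual Fourier coefficients are nonnegative. Using it, the sum is at most $\rho_s(L')\sum_{z}e^{-\pi z^2r_d^2/s^2}=\rho_s(L')\,\theta(s/r_d)$.

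The induction hypothesis then applies to $L'$ with basis $(b_1,\ldots,b_{d-1})$, whose Gram--Schmidt lengths are unchanged, giving $\rho_s(L')\le\prod_{i<d}\theta(s/r_i)$. The base case $d=1$ is the identity $\rho_s(r_1\mathbb Z)=\theta(s/r_1)$.

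The only point needing care is that Poisson summation is used for a lattice of lower rank than the ambient space. The paper's convention (the dual taken inside the span $V$, with the center $\mathbf c\in V$) covers exactly this situation, so no real obstacle is expected. The rest is bookkeeping: checking that the projected shift $z_d\pi(b_d)$ indeed lies in $V$, and that the orthogonal decomposition of the squared norm is exact.
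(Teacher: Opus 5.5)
Your argument is correct and is essentially the paper's: both expand along the Gram--Schmidt directions and bound each shifted Gaussian sum by its centered value. The paper sums out $z_1,\ldots,z_d$ in turn using only the one-dimensional bound $\sum_{z}e^{-\pi r_i^2(z+c)^2/s^2}\le\theta(s/r_i)$, whereas you peel off $z_d$ and apply the shifted-mass bound to the whole sublattice $\mathcal L(b_1,\ldots,b_{d-1})$ before inducting. For completeness, state the induction for lattices that are full-rank in their own span (or invoke isometry invariance of $\rho_s$), and note that nonnegativity of all terms justifies rearranging the sums.
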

\begin{proof}
For real Gram--Schmidt coefficients $\mu_{ji}$,
\[
 \left\|\sum_{i=1}^d z_ib_i\right\|_2^2
 =\sum_{i=1}^d r_i^2
       \left(z_i+\sum_{j>i}\mu_{ji}z_j\right)^2.
\]
Fixing $z_2,\ldots,z_d$, sum first over $z_1$. The shifted
one-dimensional Gaussian mass is at most $\theta(s/r_1)$ by
Poisson summation. Repeating for $z_2,\ldots,z_d$ gives the
product bound. Nonnegativity justifies the iterated sums.
\end{proof}

\begin{lemma}
\label{uf:isodual-quality-mass}
Let $L$ be a full-rank $d$-dimensional lattice with
$qL^*=JL$ for an orthogonal map $J$. If $L$ has a basis whose
Gram--Schmidt lengths are at most $C\sqrt q$, then for every $s>0$,
\[
 \rho_s(L)\le\theta(Cs/\sqrt q)^d.
\]
\end{lemma}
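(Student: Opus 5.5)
The plan is to combine Lemma~\ref{uf:gs-mass-product} with a dual basis built from the given one. Let $(b_1,\ldots,b_d)$ be a basis of $L$ with Gram--Schmidt lengths $r_i\le C\sqrt q$. Let $(b_1^\vee,\ldots,b_d^\vee)$ be its dual basis of $L^*$, so $\langle b_i,b_j^\vee\rangle=\delta_{ij}$. The standard fact is that the reversed dual basis $(b_d^\vee,\ldots,b_1^\vee)$ has Gram--Schmidt lengths $1/r_d,\ldots,1/r_1$. I would verify this quickly: the last Gram--Schmidt vector of the forward basis is orthogonal to $b_1,\ldots,b_{d-1}$ and pairs to one with $b_d$, so it equals $\widetilde b_d/r_d^2$, and induction handles the rest. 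This gives a basis of $L^*$ whose Gram--Schmidt lengths are all at least $1/(C\sqrt q)$.

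Next I transport this basis back to $L$ using the hypothesis $qL^*=JL$, which gives $L=J^{-1}(qL^*)$. Scaling by $q$ multiplies Gram--Schmidt lengths by $q$, and the orthogonal map $J^{-1}$ preserves them, since Gram--Schmidt orthogonalization commutes with orthogonal maps. The vectors $qJ^{-1}b_i^\vee$, in reversed order, therefore form a basis of $L$ with Gram--Schmidt lengths $q/r_i\ge q/(C\sqrt q)=\sqrt q/C$.

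Finally I apply Lemma~\ref{uf:gs-mass-product} to this basis:
\[
 \rho_s(L)\le\prod_i\theta\bigl(s r_i/q\bigr)\le\theta(Cs/\sqrt q)^d .
\]
The second inequality uses that $\theta$ is increasing, together with $sr_i/q\le sC\sqrt q/q=Cs/\sqrt q$.

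The only step needing care is the reversed dual Gram--Schmidt identity, and it is classical. So I expect no real obstacle beyond stating it precisely and checking that reversing the order is compatible with the product bound. It is, because Lemma~\ref{uf:gs-mass-product} holds for any basis order.
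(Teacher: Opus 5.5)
Your proof is correct and follows the paper's own argument: reverse the dual basis to obtain Gram--Schmidt lengths $1/r_d,\ldots,1/r_1$, transport it to $L$ via $qJ^{-1}$ to get lengths at least $\sqrt q/C$, and conclude with Lemma~\ref{uf:gs-mass-product} and the monotonicity of $\theta$. One wording slip in your verification: the vector that is orthogonal to $b_1,\ldots,b_{d-1}$ and pairs to one with $b_d$ is $b_d^\vee$, the first Gram--Schmidt vector of the reversed dual basis, and it is this vector that equals $\widetilde b_d/r_d^2$.
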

\begin{proof}
Let $(b_1^\vee,\ldots,b_d^\vee)$ be the dual basis, so that
$\langle b_i,b_j^\vee\rangle=\delta_{ij}$. The reversed dual
basis $(b_d^\vee,\ldots,b_1^\vee)$ has Gram--Schmidt lengths
$1/r_d,\ldots,1/r_1$. This follows by inverse transposition of
a QR decomposition and reversal of the orthogonalization order.
Hence $qJ^{-1}(b_d^\vee,\ldots,b_1^\vee)$ is a basis of $L$
with Gram--Schmidt lengths at least $\sqrt q/C$.
Lemma~\ref{uf:gs-mass-product} and monotonicity of $\theta$ prove
the claim.
\end{proof}

\begin{proposition}
\label{uf:falcon-quality-mass}
For every quality-qualified public key and every $s>0$,
\begin{equation}
 \rho_s(\Lambda_h)\le\theta(Cs/\sqrt q)^{2n}.
 \label{uf:quality-mass-bound}
\end{equation}
In particular, for fixed $1/2<\beta<1$ and the finite rational
width square $s_{\rm mass}^2\le q/n^\beta$, for all sufficiently large $n$,
\begin{equation}
 \begin{aligned}
 \rho_{s_{\rm mass}}(\Lambda_h)
 &\le\theta(Cn^{-\beta/2})^{2n}\\
 &\le\exp\!\left(6n\exp(-\pi n^\beta/C^2)\right)
 =1+n^{-\omega(1)}.
 \end{aligned}
 \label{uf:quality-search-mass}
\end{equation}
Thus the public mass budget $A_{\rm mass}=2$ works uniformly for these keys.
\end{proposition}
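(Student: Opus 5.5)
The plan is to chain together three earlier results: Lemma~\ref{uf:completion-quality}, Lemma~\ref{uf:public-isoduality}, and Lemma~\ref{uf:isodual-quality-mass}. Take a quality-qualified key with witness $(f,g,F_0,G_0)$. Lemma~\ref{uf:completion-quality}, applied with $(u,v,F,G)=(f,g,F_0,G_0)$, gives a basis of $\Lambda_h^{\rm std}$ whose Gram--Schmidt lengths are at most $\gamma_q(f,g)\le C\sqrt q$.

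The next step is to move this basis to $\Lambda_h$. By definition, $\Lambda_h$ is the image of $\Lambda_h^{\rm std}$ under the inverse of the signed coordinate permutation $\operatorname{coeff}(u,v)\mapsto\operatorname{coeff}(v,-u)$. That map is orthogonal, so the transported basis of $\Lambda_h$ has the same Gram--Schmidt lengths.

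Lemma~\ref{uf:public-isoduality} supplies the isoduality $q\Lambda_h^*=\mathbf J\Lambda_h$ with $\mathbf J$ orthogonal. Lemma~\ref{uf:isodual-quality-mass}, with $d=2n$, then gives~\eqref{uf:quality-mass-bound} for every $s>0$.

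For the quantitative bound, monotonicity of $\theta$ together with $s_{\rm mass}\le\sqrt{q}\,n^{-\beta/2}$ gives $\theta(Cs_{\rm mass}/\sqrt q)\le\theta(Cn^{-\beta/2})$. Write $u=Cn^{-\beta/2}$; then $u\le1$ for large $n$. The bound $\theta(u)\le1+3e^{-\pi/u^2}$ for $0<u\le1$ (used in the proof of Lemma~\ref{up:smoothing}) and $1+x\le e^x$ give
\[
 \theta(u)^{2n}\le\exp\bigl(6n\,e^{-\pi n^\beta/C^2}\bigr).
\]
Since $\beta>0$ and $C$ is fixed, the exponent $6n\,e^{-\pi n^\beta/C^2}$ is $n^{-\omega(1)}$. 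Hence the right-hand side is $1+n^{-\omega(1)}$, which is at most $2$ for large $n$. This justifies $A_{\rm mass}=2$ uniformly, because nothing in the estimate depends on the particular key beyond the quality bound.

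The main obstacle I expect is bookkeeping rather than analysis. One must make sure the Gram--Schmidt bound from Lemma~\ref{uf:completion-quality}, which is stated for $\Lambda_h^{\rm std}$ with the row ordering given there, transfers to $\Lambda_h$ exactly through an orthogonal coordinate map. One must also check that the isoduality is used on $\Lambda_h$ itself and not on $\Lambda_h^{\rm std}$. If that fails, the fallback is to note that $\Lambda_h^{\rm std}$ is likewise isodual, since it is the orthogonal image of an isodual lattice, and to apply Lemma~\ref{uf:isodual-quality-mass} there. Gaussian mass is invariant under orthogonal maps, so either route gives the same bound.
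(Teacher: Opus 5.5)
Your proposal is correct and follows the paper's proof essentially step for step. You transport the quality basis of Lemma~\ref{uf:completion-quality} from $\Lambda_h^{\rm std}$ to $\Lambda_h$ by the orthogonal map $(u,v)\mapsto(v,-u)$, combine Lemmas~\ref{uf:public-isoduality} and~\ref{uf:isodual-quality-mass} in dimension $2n$, and finish with monotonicity of $\theta$ and $\theta(t)\le1+3e^{-\pi/t^2}$ for $0<t\le1$; the fallback you mention is unnecessary, since the isoduality is stated directly for $\Lambda_h$.
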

\begin{proof}
The orthogonal transformation $P(u,v)=(v,-u)$ maps $\Lambda_h$
to $\Lambda_h^{\rm std}$. Apply $P^{-1}$ to the full quality basis
from Lemma~\ref{uf:completion-quality}. Its Gram--Schmidt lengths
remain at most $C\sqrt q$.
Lemmas~\ref{uf:public-isoduality} and~\ref{uf:isodual-quality-mass}
give~\eqref{uf:quality-mass-bound} in dimension $2n$.
Eventually $Cn^{-\beta/2}\le1$. Using
$\theta(t)\le1+3e^{-\pi/t^2}$ for $0<t\le1$ then
proves~\eqref{uf:quality-search-mass}.
\end{proof}

\begin{proof}[Proof of Theorem~\ref{uf:falcon-quality-recovery}]
Apply Theorem~\ref{uq:mass-search-framework} with $m=n$, fixed rational
$1/2<\beta<1$, and $c_{\rm search}=1$. Use its finite schedule
\[
 \begin{gathered}
 0\le\ln\ln n-D_{\rm Q}\le2^{-n^2},\qquad
 (1-2^{-n^2})q/n^\beta\le s_{\rm mass}^2\le q/n^\beta,\\
 s_{\rm search}^2=qD_{\rm Q}/n.
 \end{gathered}
\]
The congruence on $q$ gives $q\ge2n+1$, and the modulus window
implies $q\le n^2$ eventually. Proposition~\ref{uf:falcon-quality-mass}
supplies $A_{\rm mass}=2$, with $\ln A_{\rm mass}=O(n/D_{\rm Q})$.
For finite arithmetic, the conditions $C^2\le n^\beta$ and
$6n\exp(-\pi n^\beta/C^2)\le\ln2$ can be certified by one-sided
rational approximations as in Appendix~\ref{ue:finite}. They use
only public constants and parameters.
Moreover,
\[
 \frac{s_{\rm mass}^2}{s_{\rm search}^2}
 \ge(1-2^{-n^2})\frac{n^{1-\beta}}{D_{\rm Q}}\longrightarrow\infty,
\]
so $s_{\rm mass}^2\ge8s_{\rm search}^2$ for all sufficiently large $n$.
Choose the public rational radius square $B_{\rm Q}^2=C^2q$.
Then $1\le B_{\rm Q}=C\sqrt q<q/2$ eventually.

The verifier first checks nonzeroness, modular invertibility of $u$,
$hu=v\pmod q$, and $\|(u,v)\|_2^2\le C^2q$. It then uses
Lemma~\ref{uf:polynomial-completion} to solve $uG-vF=q$ over $R$,
rejecting when no integer solution exists, and applies the exact
quality predicate $\gamma_q(u,v)^2\le C^2q$.
All candidate coefficients have absolute value at most $C\sqrt q$.
The completion and quality checks therefore have polynomial bit
complexity and return $F,G$ of polynomial bit length.

The witness in the hypothesis passes every check, so the common
search theorem returns an accepted candidate with probability at
least $1-n^{-K}$ in time and space $2^{O(n/\ln\ln n)}$.
Lemma~\ref{uf:completion-quality} gives the stated quality of its
full coefficient basis. The quality basis gives the mass bound,
and the algorithm uses the public budget $A_{\rm mass}=2$.
Its probability bound is uniform over all qualified public keys.
\end{proof}

\subsubsection{Recovery from spectral bounds}
\label{uf:spectral-proof}

We now derive a mass bound from the Fourier values of a short pair.
These values control lengths in the pair's polynomial span and in
the orthogonal projection. The resulting bound gives the recovery
theorem below and its distributional corollary.

\begin{definition}
\label{uf:falcon-witness}
Fix public constants $1/2<\alpha<\beta<1$, $C_{\rm sp}>0$, and
rational $C_{\rm len}>0$. For a fixed public quotient $h\in R_q$,
an admissible witness is a pair $(f,g)\in R^2$ for which there exist
$F_{\!N},G_{\!N}\in R$ satisfying
\begin{equation}
 \begin{gathered}
 f\bmod q\in R_q^\times,\qquad h=gf^{-1}\pmod q,\qquad
 fG_{\!N}-gF_{\!N}=q,\\
 \|(f,g)\|_2\le C_{\rm len}\sqrt q,
 \end{gathered}
 \label{uf:gq1}
\end{equation}
and whose spectrum satisfies
\begin{equation}
 \frac q{n^\alpha}
 \le G_{\min}:=\min_{\zeta^n=-1}
       (|f(\zeta)|^2+|g(\zeta)|^2)
 \le G_{\max}:=\max_{\zeta^n=-1}
       (|f(\zeta)|^2+|g(\zeta)|^2)
 \le C_{\rm sp}q\ln n.
 \label{uf:gq2}
\end{equation}
The extrema $G_{\min},G_{\max}$ range over the complex roots
of $x^n+1$.
The pair and its completion establish the existence of a valid target.
The algorithm takes the public quotient and bounds as input.
\end{definition}

The spectral bounds control the Gaussian mass in the span of the short
pair and in its orthogonal projection. The NTRU equation makes the
short pair sublattice primitive, so these two estimates together bound
the mass of the full public lattice.

\begin{lemma}
\label{uf:coprime}
If $f\bmod q$ is a unit, then $(f,g)=R$ if and only if there
exist $F_{\!N},G_{\!N}\in R$ with $fG_{\!N}-gF_{\!N}=q$.
In particular, every pair satisfying~\eqref{uf:gq1} has $(f,g)=R$.
\end{lemma}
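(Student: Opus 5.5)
The plan is to prove the two implications separately, working in $R=\mathbb Z[x]/(x^n+1)$ with $q$ prime.

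\emph{Easy direction.} Assume $fG_{\!N}-gF_{\!N}=q$. Then $q$ lies in $(f,g)$. Since $f\bmod q$ is a unit, there are $a,b\in R$ with $af=1+qb$. This gives
\[
1=af-bq=af-b(fG_{\!N}-gF_{\!N})\in(f,g),
\]
so $(f,g)=R$.

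\emph{Converse.} Assume $(f,g)=R$ and write $1=af+bg$ with $a,b\in R$. Multiplying by $q$ gives $f(qa)-g(-qb)=q$, so $G_{\!N}=qa$ and $F_{\!N}=-qb$ is a solution.

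This direction does not need the unit hypothesis. The unit hypothesis is used only in the first direction, to lift the relation $q\in(f,g)$ to $1\in(f,g)$. Concretely, from $uf\equiv1\pmod q$ we get $1\in(f)+qR\subseteq(f,g)$.

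\emph{The final sentence.} A pair satisfying~\eqref{uf:gq1} has $f\bmod q\in R_q^\times$ and comes with an explicit solution of $fG_{\!N}-gF_{\!N}=q$. The first direction then yields $(f,g)=R$.

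None of these steps is a real obstacle. The only care needed is to note that the ideal $(f,g)$ means the ideal of $R$, not of $R_q$, so that the lifting argument is carried out over the integers as written.
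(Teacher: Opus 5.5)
Your proposal is correct and follows the paper's proof essentially verbatim: you use the unit hypothesis to write $af=1+qb$ and substitute $q=fG_{\!N}-gF_{\!N}$ to get $1\in(f,g)$, and for the converse you scale a B\'ezout relation $af+bg=1$ by $q$. The only immaterial difference is that you assume $q$ prime, which neither direction of the argument uses.
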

\begin{proof}
Since $f$ is a unit modulo $q$, there are $a,k\in R$ with
$af+qk=1$.  Substituting $q=fG_{\!N}-gF_{\!N}$ gives
\[
 1=f(a+kG_{\!N})-g(kF_{\!N}),
\]
so $1\in(f,g)$.
Conversely, if $rf+tg=1$, take $G_{\!N}=qr$ and
$F_{\!N}=-qt$.
\end{proof}

With the constants of Definition~\ref{uf:falcon-witness}, put
\begin{equation}
 s_{\rm mass}^2=\frac q{n^\beta}.
 \label{uf:fstar}
\end{equation}

\begin{proposition}
\label{uf:source-geometry}
If~\eqref{uf:gq1}--\eqref{uf:gq2} hold, then, for every $s>0$,
\begin{equation}
 1\le \rho_s(\Lambda_h)
 \le \theta\!\left(\frac{s}{\sqrt{G_{\min}}}\right)^n
       \theta\!\left(\frac{s\sqrt{G_{\max}}}{q}\right)^n.
 \label{uf:master-mass}
\end{equation}
Consequently, for every fixed $c_0>0$,
\begin{equation}
 \rho_{c_0s_{\rm mass}}(\Lambda_h)=1+n^{-\omega(1)}.
 \label{uf:fstar-mass}
\end{equation}
If $0<D=D(n)\le1+\ln\ln n$ and
$s_{\rm search}^2=c_{\rm search} qD/n$ for fixed $c_{\rm search}>0$, then also
\begin{equation}
 \rho_{\sqrt2s_{\rm search}}(\Lambda_h)=1+n^{-\omega(1)}.
 \label{uf:coarse-mass}
\end{equation}
\end{proposition}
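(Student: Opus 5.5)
The plan is to split the lattice along the primitive rank-$n$ sublattice $S=\operatorname{coeff}\{(af,ag):a\in R\}$ spanned by the short pair, and to bound $\rho_s(\Lambda_h)$ by (mass of $S$) times (mass of the projection $\pi_{S^\perp}(\Lambda_h)$). For the projection factor, the only route used is the fibre decomposition of Section~\ref{up:lattices}: each fibre over a projected point is a translate of $S$, so $\rho_s(\Lambda_h)=\sum_{\mathbf p}\rho_s(\mathbf p)\rho_s(S+\mathbf c_{\mathbf p})\le\rho_s(S)\rho_s(\pi_{S^\perp}\Lambda_h)$. Primitivity of $S$ follows from Lemma~\ref{uf:coprime}. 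The relation $(f,g)=R$ means that if $(af,ag)$ is integral for some $a\in R\otimes\mathbb Q$, then $a=a(rf+tg)\in R$. Together with $S\subseteq\Lambda_h$, which holds because $hf=g$, this gives $S=\Lambda_h\cap\operatorname{span}(S)$.

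\textbf{Mass of $S$.} Work in the Fourier (canonical) basis over the roots $\zeta^n=-1$. The coefficient map of $R$ is $\sqrt n$ times a unitary matrix of the canonical embedding, so $\|(af,ag)\|^2=\frac1n\sum_\zeta|a(\zeta)|^2(|f(\zeta)|^2+|g(\zeta)|^2)$. The quantity $\frac1n\sum_\zeta|a(\zeta)|^2$ is exactly $\|a\|^2$, which is $\ge G_{\min}\|a\|^2$ after weighting. Hence $S$ is the image of $\mathbb Z^n$ under a linear map whose singular values are all at least $\sqrt{G_{\min}}$. Factor that map as an orthogonal map composed with a positive-definite map dominating $\sqrt{G_{\min}}\,\mathbf I$. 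Gaussian mass is monotone under such expansions in the sense $\rho_s(T\mathbb Z^n)\le\rho_{s/\sqrt{G_{\min}}}(\mathbb Z^n)=\theta(s/\sqrt{G_{\min}})^n$, since $\|T\mathbf z\|\ge\sqrt{G_{\min}}\|\mathbf z\|$ pointwise. So no Poisson step is needed here.

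\textbf{Mass of the projection.} I would compute the projected lattice via duality. Lemma~\ref{uf:public-isoduality} gives $q\Lambda_h^*=\mathbf J\Lambda_h$. Projection onto $S^\perp$ is dual to the sublattice $\Lambda_h^*\cap S^\perp$, so $\pi_{S^\perp}(\Lambda_h)=(\Lambda_h^*\cap S^\perp)^*$. Because $\mathbf J(af,ag)=(ag^*,\ldots)$-type vectors, $\mathbf J$ maps $\operatorname{span}S$ to an orthogonal complement. Concretely, $\langle (u,v),\mathbf J(x,y)\rangle$ relates to $\operatorname{coeff}$ of $uy^*-vx^*$-like expressions, and $S^\perp=\mathbf J\operatorname{span}(S)$ after checking that $\langle(af,ag),(bg^*{}^*,\dots)\rangle$ vanishes, which is an identity of the form $fg-gf=0$. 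It follows that $\Lambda_h^*\cap S^\perp=q^{-1}\mathbf J S$, because $\mathbf J S$ is primitive in $\Lambda_h$ by the same argument. Therefore $\pi_{S^\perp}\Lambda_h=(q^{-1}\mathbf J S)^*=q\,\mathbf J(S^*)$ within the span. Again in the Fourier basis, $S^*$ has singular values $1/\sqrt{G(\zeta)}\ge 1/\sqrt{G_{\max}}$, up to the same normalization, so the projected lattice is an expansion of $\mathbb Z^n$ with all singular values $\ge q/\sqrt{G_{\max}}$. The same monotonicity gives the factor $\theta(s\sqrt{G_{\max}}/q)^n$. The lower bound $1\le\rho_s$ is trivial from the zero vector.

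\textbf{Consequences and the main obstacle.} For the two consequences, substitute into $\theta(t)\le1+3e^{-\pi/t^2}$ for $t\le1$. At $s=c_0s_{\rm mass}$, we have $s^2/G_{\min}\le c_0^2n^{\alpha-\beta}$ and $s^2G_{\max}/q^2\le c_0^2C_{\rm sp}n^{-\beta}\ln n$. Both are $n^{-\Omega(1)}$, so the product is $\exp(6n e^{-n^{\Omega(1)}})=1+n^{-\omega(1)}$. At $\sqrt2 s_{\rm search}$, with $D\le1+\ln\ln n$, the ratios are $O(n^{\alpha-1}\ln\ln n)$ and $O(n^{-1}\ln n\ln\ln n)$, which gives the same conclusion. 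I expect the main obstacle to be the duality step identifying $\pi_{S^\perp}(\Lambda_h)$ exactly as $q\mathbf J S^*$. This requires tracking the conjugations in $\mathbf J$ and showing that $\mathbf JS$ is primitive and orthogonal to $S$. I would verify orthogonality in the Fourier domain, where $\mathbf J$ acts diagonally by $(x,y)\mapsto(\bar y,-\bar x)$ per root.
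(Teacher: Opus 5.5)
Your proposal is correct and follows the paper's proof essentially step for step: you split along $W=\operatorname{span}S$ with $S=AR$, obtain $\Lambda_h\cap W=AR$ from Lemma~\ref{uf:coprime}, bound $\rho_s(AR)$ by $G_{\min}$, identify $\pi_{W^\perp}\Lambda_h\simeq q(AR)^*$ through $q\Lambda_h^*=\mathbf J\Lambda_h$ and $\mathbf JW=W^\perp$, bound $\rho_s(q(AR)^*)$ by $G_{\max}$, and combine the two with the fibre decomposition. The only cosmetic difference is that you substitute directly at width $\sqrt2 s_{\rm search}$, whereas the paper deduces \eqref{uf:coarse-mass} from \eqref{uf:fstar-mass} by monotonicity, using $2s_{\rm search}^2/s_{\rm mass}^2=o(1)$. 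One small slip: $\mathbf JS$ does not lie in $\Lambda_h$, so it is not ``primitive in $\Lambda_h$''; instead, the identity $\Lambda_h^*\cap W^\perp=q^{-1}\mathbf JS$ follows by applying $\mathbf J$ to $S=\Lambda_h\cap W$, namely $q^{-1}\mathbf J\Lambda_h\cap\mathbf JW=q^{-1}\mathbf J(\Lambda_h\cap W)$.
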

\begin{proof}
Let $K=\mathbb Q[x]/(x^n+1)$, let
$W=K\mathbin{\cdot}(f,g)\subset K^2$, and define $A:K\to K^2$ by
$A(a)=a(f,g)$. Write $\mathbf A$ for its coefficient matrix, so that
$\operatorname{coeff}(A(a))=\mathbf A\operatorname{coeff}(a)$.
The negacyclic Fourier transform is unitary in
coefficient norm, so the eigenvalues of $\mathbf A^T\mathbf A$ are precisely
$|f(\zeta)|^2+|g(\zeta)|^2$.

Lemma~\ref{uf:coprime} implies
\begin{equation}
 \Lambda_h\cap W=AR.
 \label{uf:intersection}
\end{equation}
Indeed, $AR\subseteq\Lambda_h$.  Conversely, if
$a(f,g)\in R^2$ and $rf+tg=1$ with $r,t\in R$, then
$a=r(af)+t(ag)\in R$.

Lemma~\ref{uf:public-isoduality} gives
$q\Lambda_h^*=\mathbf J\Lambda_h$, and $\mathbf JW=W^\perp$.
If $Q=\pi_{W^\perp}\Lambda_h$, duality and
\eqref{uf:intersection} give
\begin{equation}
 Q^*=\Lambda_h^*\cap W^\perp=q^{-1}\mathbf J(AR),
 \qquad Q\simeq q(AR)^*.
 \label{uf:orthogonal-quotient}
\end{equation}

For $a\in R$,
$\|A(a)\|^2\ge G_{\min}\|a\|^2$, and hence
\[
 \rho_s(AR)\le\theta(s/\sqrt{G_{\min}})^n.
\]
Writing $(AR)^*=\mathbf A(\mathbf A^T\mathbf A)^{-1}\mathbb Z^n$
shows that the vector indexed by $\mathbf z\in\mathbb Z^n$ has squared norm
$\mathbf z^T(\mathbf A^T\mathbf A)^{-1}\mathbf z\ge\|\mathbf z\|^2/G_{\max}$.
Equation
\eqref{uf:orthogonal-quotient} therefore gives
\[
 \rho_s(Q)\le\theta(s\sqrt{G_{\max}}/q)^n.
\]
Decompose $\Lambda_h$ into the fibres of its orthogonal projection onto
$Q$.  Every fibre is a translate of $AR$, and
$\rho_s(AR+\mathbf c)\le\rho_s(AR)$ by the Gaussian shift inequality recalled
in Section~\ref{ui:prelim}.  This proves~\eqref{uf:master-mass}.

For $0<x\le1$,
\[
 \theta(x)-1\le3e^{-\pi/x^2},\qquad
 \theta(x)^n\le\exp(3ne^{-\pi/x^2}).
\]
At $s=c_0s_{\rm mass}$, the squared theta arguments in
\eqref{uf:master-mass} are at most
\[
 c_0^2n^{\alpha-\beta},
 \qquad C_{\rm sp}c_0^2\frac{\ln n}{n^\beta},
\]
so both theta contributions equal $1+n^{-\omega(1)}$.  This proves
\eqref{uf:fstar-mass}.  Finally,
$2s_{\rm search}^2/s_{\rm mass}^2=2c_{\rm search} Dn^{\beta-1}=o(1)$. Monotonicity of
Gaussian mass and~\eqref{uf:fstar-mass} prove
\eqref{uf:coarse-mass}.
\end{proof}

\begin{theorem}[Negacyclic recovery]
\label{uf:falcon-pointwise}
Fix the public constants of Definition~\ref{uf:falcon-witness}. Along any
given sequence of powers of two $n$ and primes $q\equiv1\pmod{2n}$ with
\[
 \ln(q/n)\le(\ln n)^{1-c}
\]
for a fixed $c\in(0,1)$, let $h$ be any fixed public quotient admitting
an admissible witness. From $h$, the public parameters, and the
prescribed bounds alone, a randomized procedure returns
$u,v,F,G\in R$ satisfying
\[
 \begin{gathered}
 u\bmod q\in R_q^\times,\qquad vu^{-1}=h\pmod q,\\
 \|(u,v)\|_2\le C_{\rm len}\sqrt q,\qquad uG-vF=q,
 \end{gathered}
\]
and hence a certified NTRU basis of $\Lambda_h^{\rm std}$,
with probability $1-o(1)$ over its internal
randomness, uniformly over these instances.
Its time and space are
\begin{equation}
 \exp(O(n/\ln\ln n)).
 \label{uf:falcon-cost}
\end{equation}
\end{theorem}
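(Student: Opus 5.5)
The plan is to apply Theorem~\ref{uq:mass-search-framework} with $m=n$, $\mathbf H=\mathbf H_h$, and $c_{\rm search}=1$, taking the public mass budget from Proposition~\ref{uf:source-geometry}. The verifier will be $\mathsf{CompleteAndCheck}_{q,B}$ with $B_{\rm Q}^2=C_{\rm len}^2q$. This mirrors the proof of Theorem~\ref{uf:falcon-quality-recovery}, but the mass bound now comes from the spectral witness rather than from key quality.

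First I will fix the finite schedule $D_{\rm Q}$, $s_{\rm mass}^2\le q/n^\beta$, and $s_{\rm search}^2=qD_{\rm Q}/n$, and check the modulus window. The congruence gives $q\ge2n+1$, and $\ln(q/n)\le(\ln n)^{1-c}$ gives $q\le n^2$ eventually. Next comes the mass budget. Since the finite dyadic $s_{\rm mass}$ is at most the ideal width in~\eqref{uf:fstar}, monotonicity and \eqref{uf:fstar-mass} with $c_0=1$ give $\rho_{s_{\rm mass}}(\Lambda_h)\le2$ for all sufficiently large $n$, uniformly over instances. The reason is that the bound \eqref{uf:master-mass} depends only on the public constants $\alpha,\beta,C_{\rm sp}$, through the explicit estimate $\exp(3ne^{-\pi/x^2})$ with $x^2\le n^{\alpha-\beta}$ or $x^2\le C_{\rm sp}\ln n/n^\beta$. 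So $A_{\rm mass}=2$ is a public budget with $\ln A_{\rm mass}=O(n/D_{\rm Q})$. The ratio condition holds because $s_{\rm mass}^2/s_{\rm search}^2\ge(1-2^{-n^2})n^{1-\beta}/D_{\rm Q}\to\infty$. Finally, $C_{\rm len}\sqrt q<q/2$ eventually.

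For the verifier, I will use the exact checks of~\eqref{uf:completion-checks}: nonzeroness, that $u$ is a unit modulo $q$, $hu=v\pmod q$, the norm bound, and solvability of $uG-vF=q$ via Lemma~\ref{uf:polynomial-completion}. The witness $(f,g)$ with $(F_{\!N},G_{\!N})$ lies in $\Lambda_h$, is nonzero, and passes every check. Theorem~\ref{uq:mass-search-framework} then outputs an accepted vector with probability at least $1-n^{-K}$ in time $2^{O(n/D_{\rm Q})}=\exp(O(n/\ln\ln n))$. Lemma~\ref{uf:polynomial-completion} supplies $(F,G)$ of polynomial bit length, and by the determinant argument in~\eqref{uf:ntru-basis} this gives a certified NTRU basis of $\Lambda_h^{\rm std}$.

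The main obstacle is the uniformity of the mass budget. The witness is hidden, so the algorithm cannot compute $G_{\min}$ or $G_{\max}$. The point is that the threshold on $n$ beyond which $\rho_{s_{\rm mass}}\le2$ depends only on the public constants, and the one-sided certification of $6n\exp(-\pi n^{\beta-\alpha})\le\ln2$ and its $C_{\rm sp}$ analogue can be carried out from public data alone, as in the Falcon proof. A secondary point is to confirm that $\Lambda_h$ in Proposition~\ref{uf:source-geometry} is exactly the lattice $\Lambda_{\mathbf H}$ searched, with the coefficient orientation~\eqref{uf:lattice}.
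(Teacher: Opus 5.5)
Your proposal is correct and takes essentially the same route as the paper. Both apply Theorem~\ref{uq:mass-search-framework} with $m=n$ and $B_{\rm Q}=C_{\rm len}\sqrt q$, take the budget $A_{\rm mass}=2$ from Proposition~\ref{uf:source-geometry} together with monotonicity in the width, and use $\mathsf{CompleteAndCheck}$ as the verifier, with Lemma~\ref{uf:polynomial-completion} supplying the completion and the NTRU basis. Your remark that the threshold for $\rho_{s_{\rm mass}}(\Lambda_h)\le2$ depends only on the public constants $\alpha,\beta,C_{\rm sp}$ spells out the uniformity that the paper's short proof leaves implicit.
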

\begin{proof}
Take the finite schedule of Theorem~\ref{uq:mass-search-framework}
with $m=n$ and $B_{\rm Q}=C_{\rm len}\sqrt q$.
The modulus window gives $q\le n^2$ eventually, and
$s_{\rm mass}^2/s_{\rm search}^2\to\infty$.
Proposition~\ref{uf:source-geometry} and monotonicity in the width
supply the public mass bound $A_{\rm mass}=2$ at these finite parameters.
Use $\mathsf{CompleteAndCheck}$ as the verifier:
Lemma~\ref{uf:polynomial-completion} proves that it runs in polynomial
time and accepts the admissible witness. The common theorem gives
the short pair, its completion, and the stated cost.
\end{proof}

\begin{corollary}
\label{uf:generator-corollary}
Under the parameter conditions of Theorem~\ref{uf:falcon-pointwise},
let a distribution on $(h,f,g)$ satisfy~\eqref{uf:gq1} for every
output and~\eqref{uf:gq2} except with probability $\omega_n$.
Then recovery of a short pair and certified integer NTRU basis succeeds with
probability at least $1-\omega_n-o(1)$ over the key and independent
algorithmic randomness, in time and space $2^{O(n/\ln\ln n)}$.
\end{corollary}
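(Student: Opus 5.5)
The argument conditions on the key and reuses Theorem~\ref{uf:falcon-pointwise} pointwise, then averages. Let $\mathcal G_n$ be the event that the sampled triple $(h,f,g)$ also satisfies the spectral bounds~\eqref{uf:gq2}. By hypothesis every output satisfies~\eqref{uf:gq1}, so on $\mathcal G_n$ the pair $(f,g)$ is an admissible witness for $h$ in the sense of Definition~\ref{uf:falcon-witness}, and $\Pr[\mathcal G_n^c]\le\omega_n$.

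On $\mathcal G_n$, Theorem~\ref{uf:falcon-pointwise} applies to the realized public quotient. The algorithm uses only $h$, the public constants and the prescribed bounds, namely $A_{\rm mass}=2$, $B_{\rm Q}=C_{\rm len}\sqrt q$ and the verifier $\mathsf{CompleteAndCheck}$. None of these depend on the hidden $(f,g)$, so the same procedure runs on every input.

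The key point is uniformity. Theorem~\ref{uf:falcon-pointwise} gives success probability $1-o(1)$ over internal randomness uniformly over admissible instances. To see this concretely, Proposition~\ref{uf:source-geometry} gives $\rho_{s_{\rm mass}}(\Lambda_h)\le2$ for every admissible witness once $n$ is large, with a threshold that depends only on $\alpha,\beta,C_{\rm sp},c_0$. Theorem~\ref{uq:mass-search-framework} then gives failure at most $n^{-K}$ for every fixed admissible $h$. Call this bound $\epsilon_n=o(1)$; it does not depend on the instance.

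Since the algorithmic randomness is independent of the key, condition on the realized $(h,f,g)$ and use Fubini:
\[
 \Pr[\text{fail}]\le\Pr[\mathcal G_n^c]+\mathbb E\bigl[\mathbf 1_{\mathcal G_n}\Pr[\text{fail}\mid h,f,g]\bigr]\le\omega_n+\epsilon_n .
\]
The output is then checked exactly by $\mathsf{CompleteAndCheck}$, so any returned pair satisfies the algebraic conditions together with $\|(u,v)\|_2\le C_{\rm len}\sqrt q$. Its completion is a certified NTRU basis by Lemma~\ref{uf:polynomial-completion} and the determinant argument preceding it.

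For the cost, the time and space bound $\exp(O(n/\ln\ln n))$ is deterministic and holds for every public input, because the schedule is fixed in advance and the finite procedure is capped. Off $\mathcal G_n$ the algorithm may fail, but its running time is still bounded.

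The step needing the most care is checking that the $o(1)$ in Theorem~\ref{uf:falcon-pointwise} really is uniform over instances, rather than merely pointwise in $h$. If it were only pointwise, averaging over the key distribution would not give a vanishing bound. The Falcon search framework is built to give this uniformity, since both the mass budget and the eventual-largeness threshold depend only on public constants.
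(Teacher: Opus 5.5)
Your proposal is correct and follows the same route as the paper: on the spectral event the realized $(f,g)$ is an admissible witness, Theorem~\ref{uf:falcon-pointwise} gives a failure bound over the internal coins that is uniform in the instance, and averaging over the key distribution while charging $\omega_n$ for the complement yields success probability at least $1-\omega_n-o(1)$. Your explicit checks that the $o(1)$ is uniform over instances and that the resource bound also holds off the good event are exactly the points the paper's one-line proof leaves implicit.
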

\begin{proof}
Apply the uniform guarantee of Theorem~\ref{uf:falcon-pointwise} to
each key on the spectral event and average.
\end{proof}

More generally, if a proposal law $\lambda$ is conditioned on an
acceptance event $\mathcal A$ of positive probability, then any
required event $\mathcal E$ satisfies
\begin{equation}
 \Pr_\lambda[\mathcal E^c\mid\mathcal A]
 \le\frac{\Pr_\lambda[\mathcal E^c]}{\Pr_\lambda[\mathcal A]}.
 \label{uf:generic-acceptance-transfer}
\end{equation}
Here $\mathcal A$ is the full acceptance event. A direct estimate
under the accepted law also supplies the corollary's hypothesis.

\subsection{Cyclic NTRU: saturation and recovery}
\label{uf:cyclic-ntru}

We first express the Gaussian mass of the cyclic public lattice in
terms of spectral bounds and a saturation index. We then give
conditions under which this mass bound yields recovery. For the HPS
and HRSS key distributions, the remaining parts establish the
spectral and saturation bounds and combine them after key generation
acceptance.

Let $N$ be an odd prime, put
\[
 R_N=\mathbb Z[x]/(x^N-1),\qquad
 S_N=\mathbb Z[x]/(\Phi_N),\qquad q=2^{b_q},
\]
and use the standard coefficient norm on $R_N\simeq\mathbb Z^N$.
The specification gives
\begin{equation}
 hf=3g\pmod{(q,x^N-1)}.
 \label{uf:ntru-public-relation}
\end{equation}
Since $3$ is a unit modulo $q$, the public scaling
\begin{equation}
 \widehat h=3^{-1}h\pmod q
 \label{uf:ntru-public-scaling}
\end{equation}
turns~\eqref{uf:ntru-public-relation} into
$\widehat hf=g\pmod q$. Let $\mathbf H_{\widehat h}^{\rm cyc}$ be the cyclic
multiplication matrix, acting on coefficient vectors, and set
\begin{equation}
 \Lambda_{\widehat h}^{\rm cyc}
 =\operatorname{coeff}\{(u,v)\in R_N^2:\widehat hu-v\in qR_N\}
 =\ker_q[\mathbf H_{\widehat h}^{\rm cyc}\mid-\mathbf I_N].
 \label{uf:ntru-cyclic-lattice}
\end{equation}
Write $\mathcal T$ for the nonzero ternary representatives of degree at
most $N-2$, $\mathcal T^+$ for those with nonnegative adjacent
correlation, and $\mathcal T(d)$ for those with exactly $d/2$ coefficients
of each sign.  For $\mathsf P\in\{\mathsf{HPS},\mathsf{HRSS}\}$, define
$\mathsf{Valid}_{\mathsf P}(h;u,v)$ by the following exact checks:
\begin{itemize}
\item for HPS, $u\in\mathcal T$, $v\in\mathcal T(q/8-2)$, and
      $\widehat hu=v\pmod q$.
\item for HRSS, $u\in\mathcal T^+$ and
      $v=(x-1)v_0$ as an equality between the standard integer
      representatives of degree less than $N$, for some
      $v_0\in\mathcal T^+$,
      together with $\widehat hu=v\pmod q$.
\item in either profile, $h(1)=0\pmod q$, and the required
      inverses of $u$ in $S_N/3S_N$ and $S_N/qS_N$, and the public inverse
      of $h$ in $S_N/qS_N$, exist.
\end{itemize}
The condition $h(1)=0\pmod q$ is equivalent to $h$ lying in the image
of the canonical $R_{q,0}$ public key encoding.
Every test is polynomial-time and exact.

For the profile interpretation in the next lemma, inherit
$\operatorname{ord}_N(2)=\operatorname{ord}_N(3)=N-1$ and the respective
HPS weight or HRSS correctness conditions from
\cite[Sections~1.3.2--1.3.3]{NTRUSpec}, stated explicitly
in~\eqref{uf:ntru-arithmetic}. The geometric statements below use
only the hypotheses displayed in each result.

\begin{lemma}
\label{uf:ntru-short-check}
For an HPS witness,
\[
 \|(f,g)\|_2^2\le (N-1)+(q/8-2),
\]
and for an HRSS witness, writing $g=(x-1)g_0$,
\[
 \|(f,g)\|_2^2\le3(N-1).
\]
Thus, along families with $q=\Theta(N)$, a public constant
$C_{\rm len}$ gives $\|(f,g)\|_2\le C_{\rm len}\sqrt q$, and this
radius is below $q/2$ for all sufficiently large $N$.  Any pair accepted
by $\mathsf{Valid}_{\mathsf P}$ yields the tuple
\[
 \bigl(u,f_p,h_q\bigr)
 =\bigl(u,\operatorname{S3\_inverse}(u),
           \operatorname{Sq\_inverse}(h)\bigr),
\]
consisting of the secret polynomial and the inverses used by DPKE
decryption and by KEM decapsulation on valid inputs. We call this tuple
the \emph{short key core}. Sampling seeds and KEM fallback strings are
separate data.
\end{lemma}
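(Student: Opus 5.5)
The norm bounds come from counting nonzero entries, since every coefficient of a ternary polynomial has absolute value at most one.

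\emph{HPS.} Here $f\in\mathcal T$ has degree at most $N-2$, so it has at most $N-1$ nonzero entries and $\|f\|^2\le N-1$. The polynomial $g\in\mathcal T(q/8-2)$ has exactly $q/8-2$ entries equal to $\pm1$, so $\|g\|^2=q/8-2$. Adding gives the stated bound.

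\emph{HRSS.} Here $f$ and $g_0$ are ternary of degree at most $N-2$. Write $g=(x-1)g_0$; its coefficients are $g_{0,j-1}-g_{0,j}$ for $0\le j\le N-1$, using the standard representative of degree less than $N$, so no cyclic wraparound occurs. Expanding the square,
\[
 \|g\|^2=2\|g_0\|^2-2\sum_j g_{0,j}g_{0,j+1}.
\]
The TernaryPlus condition says the adjacent correlation $\sum_j g_{0,j}g_{0,j+1}$ is nonnegative. Hence $\|g\|^2\le2\|g_0\|^2\le2(N-1)$. Together with $\|f\|^2\le N-1$ this gives $3(N-1)$.

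\emph{Choice of $C_{\rm len}$.} In both profiles $\|(f,g)\|^2=O(N)$. Since $q=\Theta(N)$, this ratio to $q$ is bounded, so a public rational $C_{\rm len}$ can be fixed from the fixed family constants. For example, $q\ge2N+1$ gives $N-1\le q/2$, so HRSS works with $C_{\rm len}^2=3/2$ and HPS with $C_{\rm len}^2=1/2+1/8$. Then $C_{\rm len}\sqrt q<q/2$ is equivalent to $q>4C_{\rm len}^2$, which holds eventually.

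\emph{The short key core.} I read the remaining claim as a consequence of the definitions. $\mathsf{Valid}_{\mathsf P}$ explicitly checks that the inverse of $u$ exists in $S_N/3S_N$ and that the inverse of $h$ exists in $S_N/qS_N$. Computing these inverses by polynomial-time linear algebra yields $\operatorname{S3\_inverse}(u)$ and $\operatorname{Sq\_inverse}(h)$. DPKE decryption and KEM decapsulation on valid inputs use only $f$, $f_p$ and $h_q$, and seeds and fallback strings are separate data, so the tuple is well defined for any accepted pair.

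The only step I expect to need care is the HRSS bound. It requires confirming that $(x-1)g_0$ is taken with its standard degree-$<N$ representative, so the expansion has no cyclic boundary term, and that the sign convention of "nonnegative adjacent correlation" is the one that gives $-2\sum_j g_{0,j}g_{0,j+1}\le0$.
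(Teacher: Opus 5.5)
Your norm bounds follow the paper's proof, including the HRSS expansion $\|(x-1)g_0\|_2^2=2\|g_0\|_2^2-2\sum_i(g_0)_i(g_0)_{i+1}$ with the $\mathcal T^+$ sign condition. Your explicit choice of $C_{\rm len}$ from $q\ge2N+1$ is also correct.

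The gap is in the short key core, which you treat as definitional. You only show that $\operatorname{S3\_inverse}(u)$ and $\operatorname{Sq\_inverse}(h)$ exist. But the first depends only on $u$ and the second only on $h$, and your argument never uses the checked congruence $\widehat hu=v\pmod q$ or the check $h(1)\equiv0\pmod q$. It would therefore apply verbatim to any ternary $u$ that is invertible modulo $3$ and $q$ but unrelated to $h$. So it cannot show that $(u,f_p,h_q)$ is private key data for this particular public key.

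The missing step is the key generation consistency computation that the paper's proof supplies.
\begin{itemize}
\item Put $G=3v$. The accepted congruence gives $hu=G$ in $S_N/qS_N$. $G$ is a unit there because $h$ and $u$ are both checked to be units.
\item The specification's key generation formulas then give $(Gu)^{-1}G^2=h$ and $(Gu)^{-1}u^2=h^{-1}$. So key generation run on the secret $u$ with $G=3v$ reproduces the cyclotomic component of $h$, and its $h_q$ equals $\operatorname{Sq\_inverse}(h)$.
\item Since $N$ is odd, $x-1$ and $\Phi_N$ are coprime modulo $q$. The checked condition $h(1)\equiv0\pmod q$ therefore makes the regenerated public key equal to the given canonical representative.
\end{itemize}
Only with this identification can you invoke the specification's decryption and decapsulation guarantees on valid inputs for the recovered tuple. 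You also need the profile membership and unit tests that $\mathsf{Valid}_{\mathsf P}$ shares with key generation.
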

\begin{proof}
For HPS, $f$ is ternary and $g$ has exactly $q/8-2$ nonzero
coefficients.  For HRSS, $f,g_0\in\mathcal T^+$ and
\[
 \|(x-1)g_0\|_2^2
 =2\|g_0\|_2^2-2\sum_i(g_0)_i(g_0)_{i+1}
 \le2(N-1).
\]
For the output statement put $G=3v$.  In $S_N/qS_N$, the checked
congruence gives $hu=G$, and hence
$(Gu)^{-1}G^2=h$ and $(Gu)^{-1}u^2=h^{-1}$.  The checked
$R_{q,0}$ condition makes the first identity agree with the canonical
public representative, so the public-key computation reproduces the
same $h$.  The remaining claims follow from the exact membership and
unit tests in the specification~\cite[Sections~1.2, 1.8.3--1.8.4,
1.10.1, 1.11.1--1.11.2]{NTRUSpec}.
\end{proof}

\subsubsection{Saturation and Gaussian mass}

The short pair sublattice can have a nontrivial saturation index.
Fix an integer lift of $\widehat h$ and put
\begin{equation}
 k=\frac{\widehat hf-g}{q}\in R_N,
 \qquad K_N=\mathbb Q[x]/(x^N-1).
 \label{uf:ntru-k}
\end{equation}
Define the rational map $A(a)=a(f,g)$, its real image
\[
 W=A(K_N\otimes_{\mathbb Q}\mathbb R)\subset\mathbb R^{2N},
\]
and the fractional $R_N$-module
\[
 T=\{a\in K_N:af,ag,ak\in R_N\}.
\]
If the integer lift of $\widehat h$ is changed to $\widehat h+qb$ with
$b\in R_N$,
then $k$ changes to $k+bf$. Consequently both $T$ and the ideal
$fR_N+gR_N+kR_N$ are independent of the chosen lift.  Put
\[
 \Delta_{\rm sat}=[R_N:fR_N+gR_N+kR_N]\in\mathbb N\cup\{\infty\},
\]
with value $\infty$ when this ideal is not full rank.
For the $N$th roots of unity define the scalar spectral extrema
\begin{equation}
 G_{\min}=\min_{\zeta^N=1}(|f(\zeta)|^2+|g(\zeta)|^2),
 \qquad
 G_{\max}=\max_{\zeta^N=1}(|f(\zeta)|^2+|g(\zeta)|^2).
 \label{uf:ntru-spectrum}
\end{equation}

\begin{proposition}
\label{uf:ntru-cyclic-geometry}
If $G_{\min}>0$, then $R_N\subseteq T$ is a finite-index inclusion and
\begin{equation}
 \Lambda_{\widehat h}^{\rm cyc}\cap W=AT,
 \qquad
 \Delta_{\rm sat}=[T:R_N]=[R_N:fR_N+gR_N+kR_N].
 \label{uf:ntru-saturation}
\end{equation}
Given $f,g,k$, the integer $\Delta_{\rm sat}$ is computable from the rectangular
Smith form of the stacked multiplication matrices of $f,g,k$.
For every $s>0$,
\begin{equation}
 1\le\rho_s(\Lambda_{\widehat h}^{\rm cyc})
 \le\Delta_{\rm sat}\,
 \theta\!\left(\frac{s}{\sqrt{G_{\min}}}\right)^N
 \theta\!\left(\frac{s\sqrt{G_{\max}}}{q}\right)^N.
 \label{uf:ntru-master-mass}
\end{equation}
\end{proposition}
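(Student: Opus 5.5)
I will follow the negacyclic template of Proposition~\ref{uf:source-geometry}, with two changes: the ring $R_N$ is not a domain, and the short pair $(f,g)$ need not generate a saturated sublattice. The plan has four steps. First I show that $T$ contains $R_N$ with finite index. Next I identify $\Lambda^{\rm cyc}_{\widehat h}\cap W$ with $AT$ and compute $[T:R_N]$. Then I bound the Gaussian mass on that intersection and on the orthogonal projection separately. Finally I combine the two bounds through the fibre decomposition.

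\emph{Step 1: $T$ is a finite-index overmodule of $R_N$.} The hypothesis $G_{\min}>0$ means $f,g$ have no common zero among the $N$th roots of unity. Hence the real-linear map $A$ is injective, and $\mathbf A^T\mathbf A$ has eigenvalues $|f(\zeta)|^2+|g(\zeta)|^2\ge G_{\min}$, because the cyclic DFT is unitary. The ideal $I=fR_N+gR_N+kR_N$ then has full rank, since $f,g$ have no common root in $K_N\otimes\mathbb C$. Clearly $R_N\subseteq T$. Moreover $T\cdot I\subseteq R_N$ by definition of $T$, so $T$ lies in the fractional inverse of $I$ and is a lattice containing $R_N$ with finite index.

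\emph{Step 2: the intersection is $AT$, and $[T:R_N]=[R_N:I]$.} For the intersection, a vector of $\Lambda^{\rm cyc}_{\widehat h}\cap W$ has the form $(af,ag)$ with $a\in K_N$ and $af,ag\in R_N$. The lattice condition requires $\widehat h af-ag\in qR_N$, which is $a(\widehat hf-g)=aqk\in qR_N$, i.e.\ $ak\in R_N$. This is exactly $a\in T$. Note the identity $\widehat hf-g=qk$ holds in $R_N$ itself, not only modulo $q$.

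For the index equality, $T=(R_N:I)$ is the colon module. I would show $[T:R_N]=[R_N:I]$ by decomposing $K_N\cong\prod_{d\mid N}\mathbb Q(\zeta_d)$. The order $R_N$ is not maximal, so this is not purely componentwise. Instead I would use Smith normal form of the $3N\times N$ stacked multiplication matrix $[\mathbf M_f;\mathbf M_g;\mathbf M_k]$. Its row lattice has the same index in $\mathbb Z^N$ as the column image $I$, namely the product of the invariant factors. Using $\mathbf M_a^T=\mathbf M_{a^*}$, the set $T$ is exactly $\{x:\mathbf M x\in\mathbb Z^{3N}\}$ with $\mathbf M$ the stacked matrix, so $T/R_N\cong\bigoplus\mathbb Z/d_i$ with the same invariant factors. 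This also gives computability. \emph{I expect this step to be the main obstacle:} the transpose relation must be handled carefully, since the colon ideal involves multiplication matrices acting on the other side. I would first try to prove $[T:R_N]=\prod d_i$ directly from the Smith form $U\mathbf M V=\operatorname{diag}(d_i)$.

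\emph{Step 3: mass on $AT$ and on the projection.} Because $AT\supseteq AR_N$ with index $\Delta_{\rm sat}$, I decompose $AT$ into $\Delta_{\rm sat}$ cosets of $AR_N$. Each coset has mass at most $\rho_s(AR_N)\le\theta(s/\sqrt{G_{\min}})^N$, since $\|A(a)\|^2\ge G_{\min}\|a\|^2$. This gives
\[
\rho_s(AT)\le\Delta_{\rm sat}\,\theta(s/\sqrt{G_{\min}})^N .
\]
For the projection $Q=\pi_{W^\perp}\Lambda^{\rm cyc}_{\widehat h}$, I use the cyclic isoduality $q(\Lambda^{\rm cyc})^*=\mathbf J\Lambda^{\rm cyc}$, proved exactly as in Lemma~\ref{uf:public-isoduality} with cyclic conjugation. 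Together with $\mathbf JW=W^\perp$, this gives $Q^*=q^{-1}\mathbf J(AT)$, so $Q\simeq q(AT)^*\subseteq q(AR_N)^*$. The dual $(AR_N)^*=\mathbf A(\mathbf A^T\mathbf A)^{-1}\mathbb Z^N$ has vectors of squared norm at least $\|\mathbf z\|^2/G_{\max}$. Therefore
\[
\rho_s(Q)\le\theta(s\sqrt{G_{\max}}/q)^N ,
\]
where monotonicity under inclusion of a sublattice makes the smaller lattice $Q$ have smaller mass.

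\emph{Step 4: combine.} The lattice $\Lambda^{\rm cyc}_{\widehat h}$ is a union of fibres over $Q$. Each fibre is a translate of $AT$, and a translate has mass at most the centred mass. Multiplying the two bounds gives \eqref{uf:ntru-master-mass}. The lower bound $1\le\rho_s$ comes from the zero vector.
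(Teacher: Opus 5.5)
Your proposal is correct and follows the paper's proof essentially step for step. The steps are the same: the identification $\Lambda^{\rm cyc}_{\widehat h}\cap W=AT$ via $\widehat hf-g=qk$; the index identity through $\mathbf M_a^T=\mathbf M_{\bar a}$; and the fibre bound $\rho_s(\Lambda^{\rm cyc}_{\widehat h})\le\rho_s(AT)\,\rho_s(q(AT)^*)$ with $(AT)^*\subseteq(AR_N)^*$. Your Smith-form computation is the matrix form of the paper's $T^*=\bar fR_N+\bar gR_N+\bar kR_N$ together with self-duality of $\mathbb Z^N$. It also settles the transpose issue you flagged: the row lattice of the stacked matrix is $\bar fR_N+\bar gR_N+\bar kR_N$, a coordinate permutation of $fR_N+gR_N+kR_N$, so the two indices agree.

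The only thing to add is a one-line justification that an integer point $A(a)\in W$ has $a\in K_N$ rather than merely $a\in K_N\otimes\mathbb R$. For instance, $fR_N+gR_N$ contains some nonzero integer $d=rf+tg$, so $da=r(af)+t(ag)\in R_N$.
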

\begin{proof}
The spectral assumption makes $A$ injective.  If $A(a)=(af,ag)$ is an
integer point, then
$\widehat h(af)-ag=qak$, so it belongs to
$\Lambda_{\widehat h}^{\rm cyc}$ exactly when $a\in T$.  The unique
preimage of an integer point lies in $K_N$ because $A$ is rational and
full rank on its image, proving the first equality in
\eqref{uf:ntru-saturation}.

The full cyclic coefficient lattice is self-dual and multiplication
matrices satisfy $\mathbf M_a^T=\mathbf M_{\bar a}$. Therefore
\[
 T^*=\bar fR_N+\bar gR_N+\bar kR_N.
\]
Coefficient conjugation is an integral orthogonal permutation, which
proves the index identity and the asserted finite-index inclusion.
Equivalently, $\Delta_{\rm sat}$ is the gcd of the full-rank minors of the
rectangular matrix $[\mathbf M_f\ \mathbf M_g\ \mathbf M_k]$. This computation uses the witness
$(f,g,k)$. Recovery uses the public upper bound derived below.

Cyclic conjugation is orthogonal and satisfies $\mathbf M_a^T=\mathbf M_{\bar a}$,
so the argument of Proposition~\ref{uf:source-geometry} gives
$q(\Lambda_{\widehat h}^{\rm cyc})^*=\mathbf J\Lambda_{\widehat h}^{\rm cyc}$
and $\mathbf JW=W^\perp$ for
$\mathbf J\operatorname{coeff}(u,v)=\operatorname{coeff}(\bar v,-\bar u)$.
Thus, with $M=AT$ and $Q=\pi_{W^\perp}\Lambda_{\widehat h}^{\rm cyc}$,
we have $Q\simeq qM^*$ and
\[
 \rho_s(\Lambda_{\widehat h}^{\rm cyc})
 \le\rho_s(M)\rho_s(qM^*).
\]
The first factor is at most $\Delta_{\rm sat}\rho_s(AR_N)$, whereas
$M^*\subseteq(AR_N)^*$ bounds the second by $\rho_s(q(AR_N)^*)$.
This gives a single factor $\Delta_{\rm sat}$. The unitary
full cyclic Fourier transform bounds the two remaining masses by the
theta factors in~\eqref{uf:ntru-master-mass}.
\end{proof}

The mass estimate leads to three useful forms of the recovery
condition. The first bounds the total logarithmic mass directly.
The second gives sufficient bounds on the two spectral extrema and
the saturation index. The third uses the bounds established by the
sampling analysis below and supplies the budget for the general
generator theorem.

\begin{corollary}[Recovery from a total mass bound]
\label{uf:ntru-weak-mass-budget}
Use the public finite schedule $D_N\sim\ln\ln N$,
$(1-2^{-N^2})q/N^\beta\le s_{\rm mass}^2\le q/N^\beta$, and
$s_{\rm search}^2=qD_N/N$, for fixed rational $1/2<\beta<1$.
Suppose $G_{\min}>0$ and a fixed public rational $C_{\rm mass}\ge0$
satisfies
\begin{equation}
 \ln\Delta_{\rm sat}+
 N\ln\theta\!\left(\frac{s_{\rm mass}}{\sqrt{G_{\min}}}\right)+
 N\ln\theta\!\left(\frac{s_{\rm mass}\sqrt{G_{\max}}}{q}\right)
 \le C_{\rm mass}N/D_N.
 \label{uf:ntru-total-theta-budget}
\end{equation}
Then the public integer
$A_{\rm mass}=4^{\lceil(C_{\rm mass}+1)N/D_N\rceil}$ bounds
$\rho_{s_{\rm mass}}(\Lambda_{\widehat h}^{\rm cyc})$ and has bit length
$O(N/D_N)$. With the modulus window, witness, and radius conditions
of Theorem~\ref{uq:mass-search-framework}, this supplies recovery
with internal success probability $1-N^{-K}$ and time and
space $2^{O(N/\ln\ln N)}$.
\end{corollary}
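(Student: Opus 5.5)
The plan has two steps. First, exponentiate the hypothesis to turn it into the stated integer mass budget. Second, substitute that budget into Theorem~\ref{uq:mass-search-framework}, checking its hypotheses one at a time.

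For the budget, Proposition~\ref{uf:ntru-cyclic-geometry} applies because $G_{\min}>0$. Taking logarithms of \eqref{uf:ntru-master-mass} at $s=s_{\rm mass}$ shows that $\ln\rho_{s_{\rm mass}}(\Lambda_{\widehat h}^{\rm cyc})$ is at most the left side of \eqref{uf:ntru-total-theta-budget}. That quantity is at most $C_{\rm mass}N/D_N$. Since $\ln4>1$, we get
\[
 \rho_{s_{\rm mass}}(\Lambda_{\widehat h}^{\rm cyc})\le e^{C_{\rm mass}N/D_N}\le4^{\lceil(C_{\rm mass}+1)N/D_N\rceil}=A_{\rm mass}.
\]
The exponent is an integer of size $O(N/D_N)$, so $A_{\rm mass}$ has bit length $O(N/D_N)$. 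It also satisfies $\ln A_{\rm mass}\le2\ln4\,(C_{\rm mass}+1)N/D_N$ for large $N$, which gives a fixed constant $A_0$. The budget uses only public constants and the dyadic $D_N$, so it is fixed before the key is seen. The witness quantities appear only in the analysis.

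Next I would check the remaining hypotheses of Theorem~\ref{uq:mass-search-framework} with $m=N$ and $c_{\rm search}=1$. The modulus window and the witness and radius conditions ($B_{\rm Q}=C_{\rm len}\sqrt q<q/2$, together with a polynomial-time verifier accepting some nonzero short lattice vector) are assumed in the statement. For cyclic NTRU they are supplied by Lemma~\ref{uf:ntru-short-check}. The width condition $s_{\rm mass}^2\ge8s_{\rm search}^2$ follows from the ratio
\[
 \frac{s_{\rm mass}^2}{s_{\rm search}^2}\ge(1-2^{-N^2})\frac{N^{1-\beta}}{D_N}\to\infty,
\]
since $\beta<1$ and $D_N\sim\ln\ln N$. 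The schedule requirements of the theorem ($0\le\ln\ln N-D_{\rm Q}\le2^{-N^2}$ and the dyadic $s_{\rm mass}^2$) are exactly the public finite schedule described in the statement. The theorem then gives success probability $1-N^{-K}$ and time and space $2^{O(N/D_N)}=2^{O(N/\ln\ln N)}$.

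The only delicate point is bookkeeping. Theorem~\ref{uq:mass-search-framework} is stated with a negacyclic or general matrix $\mathbf H$, and here it must be applied to the cyclic matrix $\mathbf H^{\rm cyc}_{\widehat h}$. This is allowed because that theorem, through Theorems~\ref{uf:persistent-theorem} and~\ref{uf:candidate-table}, uses only $\Lambda_{\mathbf H}=\ker_q[\mathbf H\mid-\mathbf I]$ for an arbitrary public matrix, together with the public mass bound. No further property of the matrix is required. The finite-arithmetic certification of $A_{\rm mass}$ is trivial, since it is an explicit power of four.
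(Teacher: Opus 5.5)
Your proposal is correct and matches the paper's proof. Both arguments take logarithms in \eqref{uf:ntru-master-mass} at $s=s_{\rm mass}$ and use the budget together with $\ln 4>1$ to obtain the integer bound $A_{\rm mass}$ with $\ln A_{\rm mass}=O(N/D_N)$. They then check that $s_{\rm mass}^2/s_{\rm search}^2\ge(1-2^{-N^2})N^{1-\beta}/D_N\to\infty$ and apply Theorem~\ref{uq:mass-search-framework} to the general public matrix $\mathbf H_{\widehat h}^{\rm cyc}$. One cosmetic point: that theorem asks for a rational $A_0$, so take for example $A_0=3(C_{\rm mass}+1)$ rather than $2\ln4\,(C_{\rm mass}+1)$.
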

\begin{proof}
Take logarithms in~\eqref{uf:ntru-master-mass}. The displayed budget
bounds the resulting expression, and $\ln4>1$ makes the public
integer an upper bound. Its logarithm is $O(N/D_N)$ with a constant
fixed before the instance is drawn. Moreover,
$s_{\rm mass}^2/s_{\rm search}^2\ge(1-2^{-N^2})N^{1-\beta}/D_N\to\infty$.
Theorem~\ref{uq:mass-search-framework} applies for large $N$.
The algorithm uses this public value of $A_{\rm mass}$.
\end{proof}

\begin{corollary}[Recovery from spectral and saturation bounds]
\label{uf:ntru-weak-spectral-budget}
In the schedule of Corollary~\ref{uf:ntru-weak-mass-budget}, the conditions
\begin{equation}
 G_{\min}\ge\frac q{N^\beta}\ln D_N,\qquad
 G_{\max}\le\frac{qN^\beta}{\ln D_N},\qquad
 \ln\Delta_{\rm sat}\le C_{\rm sat} N/D_N
 \label{uf:ntru-weak-extremes}
\end{equation}
with fixed public rational $C_{\rm sat}\ge0$ suffice, for large $N$,
with $C_{\rm mass}=C_{\rm sat}+1$ and
\begin{equation}
 A_{\rm mass}=4^{\lceil(C_{\rm sat}+2)N/D_N\rceil},\qquad
 \ln A_{\rm mass}\le2(C_{\rm sat}+3)N/D_N.
 \label{uf:ntru-weak-public-budget}
\end{equation}
\end{corollary}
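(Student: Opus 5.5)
The plan is to reduce everything to Corollary~\ref{uf:ntru-weak-mass-budget} with $C_{\rm mass}=C_{\rm sat}+1$. Two estimates are needed. The first is that each of the two theta terms in \eqref{uf:ntru-total-theta-budget} contributes at most $N/D_N$, in fact much less. The second is that the integer $A_{\rm mass}=4^{\lceil(C_{\rm sat}+2)N/D_N\rceil}$ agrees with the prescription $4^{\lceil(C_{\rm mass}+1)N/D_N\rceil}$ of that corollary and obeys the stated logarithmic bound. The saturation term already satisfies $\ln\Delta_{\rm sat}\le C_{\rm sat}N/D_N$ by hypothesis. Also $G_{\min}>0$ follows from the first condition in \eqref{uf:ntru-weak-extremes}, so the corollary's standing assumption holds.

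\textbf{First theta term.} We use $s_{\rm mass}^2\le q/N^\beta$ together with $G_{\min}\ge(q/N^\beta)\ln D_N$. These give $x^2:=s_{\rm mass}^2/G_{\min}\le1/\ln D_N$, which is at most $1$ for large $N$. By the bound $\theta(x)\le1+3e^{-\pi/x^2}$ for $0<x\le1$ recorded in the proof of Proposition~\ref{uf:source-geometry}, we get
\[
 N\ln\theta(x)\le3Ne^{-\pi\ln D_N}=3N D_N^{-\pi}.
\]
Since $\pi>1$, this is $o(N/D_N)$ and hence at most $\tfrac12N/D_N$ for large $N$.

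\textbf{Second theta term.} The argument is the same. We have $y^2:=s_{\rm mass}^2G_{\max}/q^2\le (q/N^\beta)(qN^\beta/\ln D_N)/q^2=1/\ln D_N$, so again $N\ln\theta(y)\le3ND_N^{-\pi}\le\tfrac12N/D_N$ eventually. Adding the three contributions gives the budget \eqref{uf:ntru-total-theta-budget} with $C_{\rm mass}=C_{\rm sat}+1$.

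\textbf{The public integer.} Substituting $C_{\rm mass}=C_{\rm sat}+1$ into the corollary's prescription gives exactly $A_{\rm mass}=4^{\lceil(C_{\rm sat}+2)N/D_N\rceil}$. Its logarithm is at most $\ln4\,\bigl((C_{\rm sat}+2)N/D_N+1\bigr)$. Since $\ln4<2$, and the additive $\ln 4$ is absorbed by $2N/D_N$ once $N/D_N\ge1$, this is at most $2(C_{\rm sat}+3)N/D_N$, which is \eqref{uf:ntru-weak-public-budget}.

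The one place that needs care is the theta estimate. Each term is controlled only through the slowly growing factor $\ln D_N\sim\ln\ln\ln N$, and the argument depends on $e^{-\pi\ln D_N}=D_N^{-\pi}$ decaying faster than $1/D_N$. This works only because the exponent $\pi$ exceeds $1$; with a weaker constant the margin would fail. One should also check that the finite dyadic rounding of $s_{\rm mass}^2$ and $D_N$ only lowers the widths used, and this holds because $s_{\rm mass}^2\le q/N^\beta$ is exact. Everything else is bookkeeping inherited from Corollary~\ref{uf:ntru-weak-mass-budget} and Theorem~\ref{uq:mass-search-framework}.
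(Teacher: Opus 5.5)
Your proposal is correct and follows the paper's proof essentially verbatim: you bound both squared theta arguments by $1/\ln D_N$, use $\ln\theta(t)\le 3e^{-\pi/t^2}$ to get a total excess $6ND_N^{-\pi}\le N/D_N$, and then apply Corollary~\ref{uf:ntru-weak-mass-budget} with $C_{\rm mass}=C_{\rm sat}+1$. Your explicit verification that $\ln A_{\rm mass}\le 2(C_{\rm sat}+3)N/D_N$, using $\ln 4<2$, supplies a step the paper leaves implicit.
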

\begin{proof}
Both squared theta arguments in~\eqref{uf:ntru-master-mass} are at
most $1/\ln D_N$. Eventually these arguments are at most one, and
$\ln\theta(t)\le3e^{-\pi/t^2}$ gives
\[
 \ln\rho_{s_{\rm mass}}(\Lambda_{\widehat h}^{\rm cyc})
 \le C_{\rm sat} N/D_N+6ND_N^{-\pi}
 \le(C_{\rm sat}+1)N/D_N.
\]
Corollary~\ref{uf:ntru-weak-mass-budget} gives the budget and recovery claim.
\end{proof}

Section~\ref{uq:ntru-distributional} uses
Corollary~\ref{uf:ntru-weak-spectral-budget} to obtain its public budget.
The next corollary applies the sampling estimates directly and supplies
the budget for Theorem~\ref{uf:ntru-generator-theorem}.

\begin{corollary}[Mass bound for the sampling estimates]
\label{uf:ntru-mass-budget}
Fix computable rational constants
$C_{\rm sp}>0$, $C_{\rm sat}\ge0$, and $2/3<\alpha<\beta<1$.
Let $D=D(N)$ be a public positive dyadic rational of polynomial bit
length with $D\le1+\ln\ln N$.  Put
$s_{\rm ref}^2=q/N^\beta$, and let $s_{\rm mass}^2$ be a positive dyadic rational
satisfying
\[
 (1-2^{-N^2})s_{\rm ref}^2\le s_{\rm mass}^2\le s_{\rm ref}^2,
\]
and suppose
\[
 G_{\min}\ge q/N^\alpha,\qquad
 G_{\max}\le C_{\rm sp}q\ln N,
 \qquad \ln\Delta_{\rm sat}\le C_{\rm sat} N/D.
\]
Then
\[
 \ln\rho_{s_{\rm ref}}(\Lambda_{\widehat h}^{\rm cyc})
 \le C_{\rm sat} N/D+o(1).
\]
Put $s_{\rm search}^2=qD/N$.  For all sufficiently large $N$,
$s_{\rm mass}^2\ge8s_{\rm search}^2$ and
\begin{equation}
 \rho_{s_{\rm search}}(\Lambda_{\widehat h}^{\rm cyc})
 \le \rho_{s_{\rm mass}}(\Lambda_{\widehat h}^{\rm cyc})
 \le \rho_{s_{\rm ref}}(\Lambda_{\widehat h}^{\rm cyc})
 \le \overline A_N,
 \qquad
 \overline A_N:=4^{\left\lceil(C_{\rm sat}+1)N/D\right\rceil},
 \label{uf:ntru-public-mass-bound}
\end{equation}
where $\overline A_N$ is an explicitly computable integer and
$\ln\overline A_N=O(N/D)$.
\end{corollary}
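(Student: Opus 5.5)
The plan is to start from the master mass bound~\eqref{uf:ntru-master-mass} of Proposition~\ref{uf:ntru-cyclic-geometry}, evaluated at the reference width $s=s_{\rm ref}$. Taking logarithms gives
\[
 \ln\rho_{s_{\rm ref}}(\Lambda_{\widehat h}^{\rm cyc})
 \le\ln\Delta_{\rm sat}
 +N\ln\theta\!\left(\frac{s_{\rm ref}}{\sqrt{G_{\min}}}\right)
 +N\ln\theta\!\left(\frac{s_{\rm ref}\sqrt{G_{\max}}}{q}\right).
\]
The saturation hypothesis bounds the first term by $C_{\rm sat}N/D$. It then remains to show that each theta contribution is $o(1)$, which yields the stated bound $C_{\rm sat}N/D+o(1)$.

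For the two theta terms, I first compute the squared arguments. The first is $s_{\rm ref}^2/G_{\min}\le(q/N^\beta)(N^\alpha/q)=N^{\alpha-\beta}\to0$, using $\alpha<\beta$. The second is $s_{\rm ref}^2G_{\max}/q^2\le C_{\rm sp}\ln N/N^\beta\to0$. Both arguments are eventually at most one, so I apply the bound $\ln\theta(t)\le\theta(t)-1\le3e^{-\pi/t^2}$ for $0<t\le1$, already used in the proof of Proposition~\ref{uf:source-geometry}. This gives
\[
 3N\exp(-\pi N^{\beta-\alpha})+3N\exp\!\bigl(-\pi N^\beta/(C_{\rm sp}\ln N)\bigr)=N^{-\omega(1)}=o(1).
\]
The hypothesis $\alpha>2/3$ is not needed here; only $\alpha<\beta$ matters.

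Next come the width comparisons. Monotonicity of $\rho_s$ in $s$ gives $\rho_{s_{\rm mass}}\le\rho_{s_{\rm ref}}$ because $s_{\rm mass}\le s_{\rm ref}$. For the search width, note that $8s_{\rm search}^2=8qD/N$. Since $D\le1+\ln\ln N$ and $\beta<1$, we have $8D N^{\beta-1}\to0<1-2^{-N^2}$. Hence $s_{\rm mass}^2\ge(1-2^{-N^2})q/N^\beta\ge8s_{\rm search}^2$ for large $N$. In particular $s_{\rm search}\le s_{\rm mass}$, which gives the first inequality in~\eqref{uf:ntru-public-mass-bound}.

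Finally, for the explicit budget, note that $\overline A_N=4^{\lceil(C_{\rm sat}+1)N/D\rceil}$ satisfies
\[
 \ln\overline A_N\ge(C_{\rm sat}+1)(N/D)\ln4>C_{\rm sat}N/D+N/D .
\]
The right side exceeds $C_{\rm sat}N/D+o(1)$, since $N/D\to\infty$ (because $D=O(\ln\ln N)$). The budget is an explicit integer computable from public rationals, with $\ln\overline A_N=O(N/D)$ by construction.

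The main point to check carefully is the uniformity of the eventual statements: the thresholds on $N$ may depend only on the fixed constants $C_{\rm sp},C_{\rm sat},\alpha,\beta$, and not on the key. This holds because every estimate above is expressed in terms of these constants and $N$ alone.
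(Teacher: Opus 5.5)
Your proposal is correct and matches the paper's own proof. Like the paper, you evaluate the master mass bound of Proposition~\ref{uf:ntru-cyclic-geometry} at $s_{\rm ref}$, bound the two squared theta arguments by $N^{\alpha-\beta}$ and $C_{\rm sp}\ln N/N^\beta$, apply the tail bound $\theta(t)-1\le3e^{-\pi/t^2}$, and conclude by monotonicity in the width together with $s_{\rm mass}^2/s_{\rm search}^2\ge(1-2^{-N^2})N^{1-\beta}/D\to\infty$. Your explicit check that $\ln4>1$ and $N/D\to\infty$ make $\overline A_N$ dominate $\exp(C_{\rm sat}N/D+o(1))$ fills in a step the paper leaves implicit.
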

\begin{proof}
Evaluate the upper bound in~\eqref{uf:ntru-master-mass} at
$s=s_{\rm ref}$.  The squared arguments
of the two theta factors are then at most
$N^{\alpha-\beta}$ and $C_{\rm sp}(\ln N)/N^\beta$.
The standard bound
$\theta(x)^N\le\exp(3Ne^{-\pi/x^2})$ for $0<x\le1$
makes both contributions $1+N^{-\omega(1)}$.  The final statements
follow from monotonicity, the dyadic lower approximation, and
$s_{\rm mass}^2/s_{\rm search}^2\ge(1-2^{-N^2})N^{1-\beta}/D\to\infty$.
\end{proof}

\subsubsection{Recovery from geometric bounds}

The following theorem turns joint spectral and saturation bounds
into recovery for a key generator. The next parts verify these
bounds for the HPS and HRSS distributions and give explicit failure
probabilities.
Take a sequence of primes $N$ and power-of-two $q$ satisfying the profile's
arithmetic constraints and the modulus window below.

\begin{definition}[Conditions for recovery]
\label{uf:ntru-regular-generator}
Fix a profile $\mathsf P\in\{\mathsf{HPS},\mathsf{HRSS}\}$ and public
computable rational constants
\[
 2/3<\alpha<\beta<1,\qquad
 C_{\rm len},C_{\rm sp},C_{\rm sat},A_q>0,\qquad 0<c<1,
\]
and use the public positive dyadic schedule $D_N$ of polynomial bit length
satisfying
\[
 0\le\ln\ln N-D_N\le2^{-N^2}.
\]
A generator $\mathsf{CycNTRUGen}_{\mathsf P,N}$ is a distribution on
$(h,f,g,\mathsf{aux})$. It satisfies the recovery conditions when
every output satisfies
\[
 \mathsf{Valid}_{\mathsf P}(h;f,g)=1,\qquad
 \|(f,g)\|_2\le C_{\rm len}\sqrt q,
\]
and the single joint event
\begin{equation}
 \mathcal E_N=\left\{
 G_{\min}\ge q/N^\alpha,\quad
 G_{\max}\le C_{\rm sp}q\ln N,\quad
 \ln\Delta_{\rm sat}\le C_{\rm sat} N/D_N
 \right\}
 \label{uf:ntru-joint-event}
\end{equation}
has probability at least $1-\omega_N$, where $\omega_N=o(1)$ and
$\Delta_{\rm sat}$ is defined from the generated witness by
\eqref{uf:ntru-k} and the extended-index convention preceding
Proposition~\ref{uf:ntru-cyclic-geometry}.  The constants and the bound
$C_{\rm sat}$ are fixed before the generator is sampled.
\end{definition}

\begin{theorem}[Cyclic NTRU recovery]
\label{uf:ntru-generator-theorem}
Suppose
\[
 q=2^{b_q},\qquad 2N+1\le q\le N^{A_q},\qquad
 \ln(q/N)\le(\ln N)^{1-c},
\]
and let $\mathsf{CycNTRUGen}_{\mathsf P,N}$ satisfy
Definition~\ref{uf:ntru-regular-generator}.  For every fixed $K>0$ and
all sufficiently large $N$, there is one randomized algorithm,
depending only on $K$, the public profile, parameter sequence, and
specified constants, such that on public input $(\mathsf P,N,q,h)$ it
returns a valid short pair and the associated decryption tuple of
Lemma~\ref{uf:ntru-short-check}, with probability at least
\begin{equation}
 1-\omega_N-N^{-K}.
 \label{uf:ntru-generator-success}
\end{equation}
Its time and space are
\begin{equation}
 \exp\!\left(O\!\left(\frac{N}{\ln\ln N}\right)\right).
 \label{uf:ntru-generator-cost}
\end{equation}
The hidden constants depend only on $K$ and the specified constants.
More precisely, for every fixed generator output that satisfies
$\mathsf{Valid}_{\mathsf P}(h;f,g)=1$, the displayed norm bound, and
$\mathcal E_N$, the same algorithm has conditional success probability
at least $1-N^{-K}$ over independent internal coins.
\end{theorem}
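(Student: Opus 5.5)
The proof conditions on the good event and then applies the quotient search framework. Fix a generator output $(h,f,g,\mathsf{aux})$ with $\mathsf{Valid}_{\mathsf P}(h;f,g)=1$, $\|(f,g)\|_2\le C_{\rm len}\sqrt q$, and $\mathcal E_N$. The conditional claim is proved for each such output. The unconditional bound~\eqref{uf:ntru-generator-success} then follows by averaging over the generator law, since $\mathcal E_N$ fails with probability at most $\omega_N$ and the algorithm's coins are independent of the key.

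For a fixed good output, the first step is the public mass budget. Take $D=D_N$, which satisfies $D_N\le1+\ln\ln N$ and has polynomial bit length. The hypotheses $G_{\min}\ge q/N^\alpha$, $G_{\max}\le C_{\rm sp}q\ln N$ and $\ln\Delta_{\rm sat}\le C_{\rm sat}N/D_N$ are exactly those of Corollary~\ref{uf:ntru-mass-budget}. Note $G_{\min}>0$, so Proposition~\ref{uf:ntru-cyclic-geometry} applies. Corollary~\ref{uf:ntru-mass-budget} gives $\rho_{s_{\rm mass}}(\Lambda^{\rm cyc}_{\widehat h})\le\overline A_N=4^{\lceil(C_{\rm sat}+1)N/D_N\rceil}$ and $s_{\rm mass}^2\ge8s_{\rm search}^2$ for large $N$. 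This budget is computable from public constants alone, before any key is seen, so the algorithm can use $A_{\rm mass}=\overline A_N$, with $\ln A_{\rm mass}\le A_0N/D_N$ for a fixed $A_0$.

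The second step invokes Theorem~\ref{uq:mass-search-framework}. Take $m=N$ and $\mathbf H=\mathbf H^{\rm cyc}_{\widehat h}$, computed publicly from $h$ via~\eqref{uf:ntru-public-scaling}. Set $c_{\rm search}=1$ and $B_{\rm Q}^2=C_{\rm len}^2q$, and use the verifier $\mathsf{Valid}_{\mathsf P}(h;\cdot,\cdot)$, which is exact and polynomial-time. The modulus window in the statement matches the theorem's window. The radius condition $1\le C_{\rm len}\sqrt q<q/2$ holds eventually because $q\ge2N+1$.

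It remains to check that the verifier accepts a nonzero witness in the lattice. The pair $(f,g)$ is nonzero because $f$ is invertible modulo 3. It satisfies $\widehat hf=g\pmod q$, so $(f,g)\in\Lambda^{\rm cyc}_{\widehat h}$, and it is accepted by assumption. The theorem then returns an accepted pair with probability at least $1-N^{-K}$, in time and space $2^{O(N/D_N)}=2^{O(N/\ln\ln N)}$. Lemma~\ref{uf:ntru-short-check} converts any accepted pair into the short key core in polynomial time, using the exact inverses computed during verification.

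The main subtlety is uniformity of the algorithm. The budget $\overline A_N$ and the schedule must depend only on the public constants $C_{\rm sat},\alpha,\beta$ and on $D_N$, and not on the hidden witness. Corollary~\ref{uf:ntru-mass-budget} provides exactly this: the key-dependent quantities enter only through the inequalities guaranteed by $\mathcal E_N$. A second point is that Theorem~\ref{uq:mass-search-framework} is stated for an arbitrary public matrix. The cyclic multiplication matrix, which is not negacyclic, is therefore admissible without further geometric input.
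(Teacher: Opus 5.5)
Your proposal is correct and follows the same route as the paper. On $\mathcal E_N$, Corollary~\ref{uf:ntru-mass-budget} supplies the witness-independent public budget $A_{\rm mass}=\overline A_N$, and Theorem~\ref{uq:mass-search-framework} is applied to $\mathbf H^{\rm cyc}_{\widehat h}$ with $m=N$, $c_{\rm search}=1$, $B_{\rm Q}=C_{\rm len}\sqrt q$ and verifier $\mathsf{Valid}_{\mathsf P}$; averaging over the generator charges $\omega_N$. Your explicit checks of nonzeroness, lattice membership, the radius condition and uniformity spell out what the paper's short proof leaves implicit.
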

\begin{proof}
On the joint event $\mathcal E_N$, Corollary~\ref{uf:ntru-mass-budget}
supplies the public rational bound
$A_{\rm mass}=\overline A_N$ in~\eqref{uf:ntru-public-mass-bound}, with
$\ln A_{\rm mass}\le2(C_{\rm sat}+2)N/D_N$ for large $N$.
Apply Theorem~\ref{uq:mass-search-framework} to the multiplication
matrix of $\widehat h$, with $m=N$, $c_{\rm search}=1$, and
$B_{\rm Q}=C_{\rm len}\sqrt q$.
Its width condition follows from $s_{\rm mass}^2/s_{\rm search}^2\to\infty$.
Lemma~\ref{uf:ntru-short-check} supplies the polynomial-time verifier
$\mathsf{Valid}_{\mathsf P}$ and the decryption inverses.
The given valid pair is an accepted witness, so the pointwise success
probability is at least $1-N^{-K}$, with the stated cost.
Averaging over the generator and charging
$\Pr[\mathcal E_N^c]\le\omega_N$ proves
\eqref{uf:ntru-generator-success}.
\end{proof}

\subsubsection{Arithmetic parameters and accepted sampling laws}
\label{uf:ntru-sampling}

For the distributional applications, take a supplied growing sequence
of primes satisfying
\begin{equation}
 N\ge5,\qquad \operatorname{ord}_N(2)=\operatorname{ord}_N(3)=N-1,
 \qquad q=2^{b_q}=\Theta(N),\qquad q\ge2N+1.
 \label{uf:ntru-arithmetic}
\end{equation}
These are the common arithmetic requirements of the two profiles
in~\cite[Sections~1.3.2--1.3.3]{NTRUSpec}. For HPS put
$w=q/8-2$, $m=w/2=q/16-1$, and require $w\le2N/3$.
For HRSS retain the correctness condition
$q>8\sqrt2(N-1)$; the specification's choice
$q=2^{\lceil7/2+\log_2N\rceil}$ satisfies it.
For each sufficiently large supplied HPS degree,
$q=2^{\lceil\log_2(2N+1)\rceil}$ meets its modulus and weight
conditions. The parameter sequence is part of the family definition.

Put $L=N-1$, $a=86/256$, and $b=85/256$. Let
$\mathbf X=(X_0,\ldots,X_{L-1})$ have independent coordinates with
\begin{equation}
 \Pr[X_j=0]=a,\qquad \Pr[X_j=1]=\Pr[X_j=-1]=b,
 \qquad P_{\mathbf X}(x)=\sum_{j=0}^{L-1}X_jx^j.
 \label{uf:ntru-raw-coefficients}
\end{equation}
This is the ternary law produced by independent uniform input bytes.
Write $U_{\mathbf X}=\operatorname{TernaryPlus}(\mathbf X)$: if the
adjacent correlation is negative, reverse the signs of the even-indexed
coefficients, and otherwise keep the coefficients. In particular,
\begin{equation}
 U_{\mathbf X}(z)=P_{\mathbf X}(z)
 \quad\text{or}\quad U_{\mathbf X}(z)=-P_{\mathbf X}(-z).
 \label{uf:ntru-plus-branches}
\end{equation}
Zero input blocks are included at this stage.

\begin{definition}[HPS and HRSS key distributions]
\label{uf:ntru-proposal-laws}
The two raw profiles use independent blocks and are summarized as
follows:
\[
\begin{array}{c|c|c|c}
\text{profile} & f & g & \text{profile condition}\\ \hline
\mathsf{HPS} & P_{\mathbf X} & \text{uniform fixed type}
  & w=q/8-2\le2N/3\\
\mathsf{HRSS} & U_{\mathbf X} & (x-1)U_{\mathbf Y}
  & q>8\sqrt2(N-1)
\end{array}
\]
For HPS, the fixed type has $m=w/2$ coefficients of each sign and
$L-2m$ zeros. The optional sorting law attaches independent uniform
$b_N$-bit labels before sorting by label and category.

Let $\pi_N:R_N\to S_N$ be the cyclotomic projection and define
\begin{equation}
 \mathcal A_N=\{\pi_N(f)\bmod3\in(S_N/3S_N)^\times,
 \quad \pi_N(f),\pi_N(g)\bmod2\in(S_N/2S_N)^\times\}.
 \label{uf:ntru-acceptance-event}
\end{equation}
For each raw law, the accepted law is $\mu=\nu\mid\mathcal A_N$.
Its public key is the unique representative modulo $x^N-1,q$ with
$h(1)=0$ and cyclotomic component $3\pi_N(g)\pi_N(f)^{-1}$.
\end{definition}

\begin{lemma}
\label{uf:ntru-algebraic-acceptance}
Under~\eqref{uf:ntru-arithmetic}, the acceptance probabilities are
\begin{equation}
 a_N^{\rm HPS}=1-a^{N-1},\qquad
 a_N^{\rm HRSS}=(1-a^{N-1})^2.
 \label{uf:ntru-acceptance-probabilities}
\end{equation}
The HPS formula holds for both the ideal law and every sorting label
length. Every accepted pair passes the corresponding profile verifier.
For every event $E$ defined on the raw polynomials,
\begin{equation}
 \Pr_{\mu}[E^c]\le\Pr_{\nu}[E^c]/a_N^{\mathsf P}.
 \label{uf:ntru-conditioning}
\end{equation}
\end{lemma}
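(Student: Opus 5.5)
The plan is to reduce each acceptance condition to the nonvanishing of one raw block, using the fact that the residue rings are fields. By~\eqref{uf:ntru-arithmetic}, $\operatorname{ord}_N(2)=\operatorname{ord}_N(3)=N-1$. The degree of every irreducible factor of $\Phi_N$ modulo a prime $p\nmid N$ is $\operatorname{ord}_N(p)$, so $\Phi_N$ is irreducible modulo $2$ and modulo $3$. Hence $S_N/2S_N$ and $S_N/3S_N$ are the fields $\mathbb F_{2^{N-1}}$ and $\mathbb F_{3^{N-1}}$, and an element is a unit exactly when it is nonzero. Every raw block $P_{\mathbf X}$, $U_{\mathbf X}$, or fixed-type polynomial has degree at most $L-1=N-2<\deg\Phi_N$. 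Reduction modulo $(p,\Phi_N)$ is therefore injective on such polynomials with coefficients reduced modulo $p$. A ternary polynomial of this degree thus vanishes modulo $2$ or modulo $3$ only if all its coefficients vanish, that is, only if it is zero.

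Next I would evaluate the individual conditions. For $f$, the events ``$\pi_N(f)\bmod3$ is a unit'' and ``$\pi_N(f)\bmod2$ is a unit'' both coincide with $f\ne0$. Since TernaryPlus only flips signs, $U_{\mathbf X}=0$ iff $\mathbf X=\mathbf 0$, and this has probability $a^{L}=a^{N-1}$. For HPS, $g$ has exactly $w=2m$ nonzero ternary coefficients. I will check that $m\ge1$ for the supplied sequence: since $q\ge2N+1$ is a growing power of two, $q\ge32$ eventually. So $g\bmod2$ is always a unit and $a_N^{\rm HPS}=1-a^{N-1}$. The sorting variant changes only the arrangement inside the $g$ block, and $f$ is an independent block, so the same formula holds for every label length. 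For HRSS, $g=(x-1)U_{\mathbf Y}$, and $\pi_N(x-1)\bmod2=x+1$ is nonzero in the field because $N-1\ge4>1$, so it is a unit. Thus the $g$ condition is equivalent to $\mathbf Y\ne\mathbf 0$, and independence of the two blocks gives $(1-a^{N-1})^2$.

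For the verifier claim, I would go through the items of $\mathsf{Valid}_{\mathsf P}$ in turn.
\begin{itemize}
\item Accepted $f$ is nonzero ternary of degree at most $N-2$. For HRSS, the TernaryPlus output has nonnegative adjacent correlation, and $g=(x-1)g_0$ holds as an equality of integer representatives because $\deg g\le N-1$.
\item $g$ has the required type.
\item $h(1)=0$ holds by the definition of the public representative.
\item $\widehat hf=g\pmod q$ in $R_N$ follows because both sides agree in the cyclotomic component and both vanish at $x=1$ modulo $q$; for the latter, $f(1)\widehat h(1)\equiv0$ while I would need $g(1)\equiv0$. This is the step I expect to be delicate: it needs the specification's convention relating $h$ and $g$ (for HRSS, $g(1)=0$ automatically; for HPS, $g(1)=0$ by the balanced type), and I would verify it via the CRT-type decomposition $R_N\otimes\mathbb Q\simeq\mathbb Q\times(S_N\otimes\mathbb Q)$ together with the spec's $R_{q,0}$ encoding.
\item The inverse of $u=f$ in $S_N/3S_N$ exists by acceptance. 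Its inverse in $S_N/qS_N$ exists because a unit modulo $2$ lifts to a unit modulo $2^{b_q}$ by a Hensel/geometric-series argument.
\item The inverse of $h=3\pi_N(g)\pi_N(f)^{-1}$ in $S_N/qS_N$ exists for the same reason, since $g$ is a unit modulo $2$.
\end{itemize}

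Finally,~\eqref{uf:ntru-conditioning} is the elementary bound~\eqref{uf:generic-acceptance-transfer}: $\Pr_\mu[E^c]=\Pr_\nu[E^c\cap\mathcal A_N]/\Pr_\nu[\mathcal A_N]\le\Pr_\nu[E^c]/a_N^{\mathsf P}$.
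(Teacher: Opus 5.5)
Your proposal follows the paper's proof almost step for step. Irreducibility of $\Phi_N$ modulo $2$ and $3$ reduces acceptance to $\mathbf X\ne\mathbf 0$ (and $\mathbf Y\ne\mathbf 0$ for HRSS). Units modulo $2$ lift to units modulo $2^{b_q}$, and \eqref{uf:ntru-conditioning} is plain conditioning. Two small differences from the paper are harmless: your degree argument that $x+1$ is nonzero in $\mathbb F_2[x]/(\Phi_N)$ replaces the paper's observation that $\Phi_N(1)=N$ is odd, and your explicit check $m\ge1$ for large $N$ makes precise the ``positive fixed weight'' the paper assumes.

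The step you left open, $\widehat hf=g\pmod q$ in $R_N$, is not closed by the rational decomposition $R_N\otimes\mathbb Q\simeq\mathbb Q\times(S_N\otimes\mathbb Q)$ you cite. $R_N$ has index $N$ in $\mathbb Z\oplus O_N$. Agreement of the two components modulo $q$ only places the difference in $R_N\cap q(\mathbb Z\oplus O_N)$, and this equals $qR_N$ only because $\gcd(N,q)=1$.

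What you need, as in the paper, is the modular splitting $R_N/qR_N\cong\mathbb Z/q\mathbb Z\times S_N/qS_N$. It holds because $\operatorname{Res}(x-1,\Phi_N)=\Phi_N(1)=N$ is odd and hence a unit modulo $q=2^{b_q}$. Your observations then finish the argument: both sides vanish at $x=1$ modulo $q$ (using $g(1)=0$ in both profiles) and agree in the cyclotomic component. The same splitting also gives existence and uniqueness of the public representative with $h(1)=0$, which you took as given from the definition.
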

\begin{proof}
For $p=2,3$, the order condition makes $\Phi_N$ irreducible of degree
$N-1$ over $\mathbb F_p$. A ternary polynomial of degree at most
$N-2$ is zero modulo $p$ exactly when all its coefficients are zero;
otherwise its cyclotomic image is a unit. TernaryPlus preserves the
zero-block event. Also $x-1$ is a unit modulo $\Phi_N,2$, since
$\Phi_N(1)=N$ is odd. Thus HPS acceptance is precisely
$\mathbf X\ne0$, since $g$ has positive fixed weight, and HRSS
acceptance is precisely $\mathbf X,\mathbf Y\ne0$. Independence gives
\eqref{uf:ntru-acceptance-probabilities}.

A cyclotomic element is a unit modulo $2^{b_q}$ exactly when it is a
unit modulo two. Since $N$ is odd, $x-1$ and $\Phi_N$ are coprime
modulo $q$, so the Chinese remainder theorem supplies the asserted
public representative. As $g(1)=0$, it satisfies $hf=3g$ in the full
cyclic ring. The sampler profiles and the accepted units are exactly
the checks in Lemma~\ref{uf:ntru-short-check}. Finally, conditional
probability gives~\eqref{uf:ntru-conditioning}.
\end{proof}

\begin{lemma}
\label{uf:ntru-fixed-type-tv}
For $L=N-1$ and $m=q/16-1$, the raw HPS laws satisfy
\begin{equation}
 \operatorname{TV}(\nu_{N,b_N}^{\rm HPS},\nu_N^{\rm HPS})
 \le\varepsilon_{\rm FT}(N):=
 \min\left\{1,\frac{m^2+2m(L-2m)}{2^{b_N}}\right\}.
 \label{uf:ntru-fixed-type-transfer}
\end{equation}
In particular, $b_N\ge(2+\delta)\log_2N$, with fixed $\delta>0$,
gives $\varepsilon_{\rm FT}=o(1)$, and
$b_N=\lceil(K_{\rm tag}+3)\log_2N\rceil$ gives
$\varepsilon_{\rm FT}=O(N^{-K_{\rm tag}-1})$.
\end{lemma}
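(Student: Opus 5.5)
We bound the total variation distance by a collision argument. The sorting sampler attaches independent uniform $b_N$-bit labels to the $L$ positions and sorts by (label, category). When all labels relevant to the ordering are distinct, the output is a uniformly random arrangement of the fixed multiset. The two laws are therefore coupled so that they can differ only on a label collision event. Concretely, I would realize the sorting law as follows: draw labels $\ell_1,\dots,\ell_L$ and assign $m$ entries $+1$, $m$ entries $-1$, and $L-2m$ zeros to the positions in the prescribed input order, then sort. On the event that the labels are pairwise distinct, the sorted permutation is uniform over $S_L$. Applying a uniform permutation to a fixed type vector gives the uniform fixed type law, which is $\nu_N^{\rm HPS}$. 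Moreover, $f$ is drawn independently in both profiles, so the comparison reduces to the $g$ block.

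Next I would refine the event: a collision only matters if it can change the output polynomial. If two positions with equal labels carry the same coefficient value, ties are broken by category (or by any fixed rule), and swapping them leaves $g$ unchanged. The bad event is thus that some pair of positions carrying \emph{different} values shares a label. I would also check that, conditioned on the labels being distinct among differently valued pairs, the output is still uniform over fixed type vectors. This works by grouping positions into blocks of equal values: the relative order among same-valued entries is irrelevant, and the induced arrangement of values is uniform by symmetry of i.i.d.\ labels. The argument is cleanest if one conditions on the multiset of labels and uses exchangeability.

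Counting the differently valued pairs gives $m\cdot m$ pairs of type $(+1,-1)$ and $2m(L-2m)$ pairs of type ($\pm1$, $0$). Each such pair collides with probability $2^{-b_N}$, so a union bound gives $\Pr[\text{bad}]\le (m^2+2m(L-2m))2^{-b_N}$. The coupling lemma then yields the TV bound, and the bound by $1$ is trivial. For the asymptotics, note that $m,L=O(N)$, so the numerator is $O(N^2)$. Hence $b_N\ge(2+\delta)\log_2N$ gives $O(N^{-\delta})$, and $b_N=\lceil(K_{\rm tag}+3)\log_2N\rceil$ gives $O(N^{-K_{\rm tag}-1})$.

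The main obstacle is the middle step: showing that conditioning on ``no collision among differently valued pairs'' still leaves the ideal uniform law. This does not follow from coupling on the stronger all-distinct event, since that event has a larger failure probability, about $L^2/2^{b_N}$. My first attempt would be a direct exchangeability argument: the map from labels to the value arrangement is invariant under permuting labels among positions, and the conditioning event is symmetric under such permutations.
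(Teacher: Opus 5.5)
You have the right bad event, the right pair count $m^2+2m(L-2m)$, the right union bound and the right asymptotics. But the step you flag as the main obstacle cannot be carried out, because it is false.

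Let $E$ be the event that no two positions carrying different values share a label. Conditional on $E$, the output is not the uniform fixed-type law. A label tie between equal-valued entries forces them to be adjacent in the sorted output, so conditioning on $E$ over-weights such words. Take $L=4$, $m=1$, values $0,0,+1,-1$, and $K=2^{b_N}\ge4$ labels. Then $E$ is the disjoint union of two events:
\begin{itemize}
\item all four labels are distinct, on which the word is uniform and the zeros are adjacent with probability $1/2$;
\item the two zero labels coincide and differ from the other two, on which the zeros are always adjacent.
\end{itemize}
Hence
\[
 \Pr[\text{zeros adjacent}\mid E]
 =\frac{\tfrac12K(K-1)(K-2)(K-3)+K(K-1)(K-2)}
       {K(K-1)(K-2)(K-3)+K(K-1)(K-2)}
 =\frac{K-1}{2(K-2)}>\frac12 .
\]
Your symmetry heuristic fails for the same reason: transposing the labels of a zero and of the $+1$ entry does not preserve $E$. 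Your fallback coupling on the event that all labels are distinct is valid. It gives only $\binom L2 2^{-b_N}$, which suffices for the two asymptotic specializations but not for the displayed $\varepsilon_{\rm FT}(N)$.

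The missing idea is to couple two processes driven by the same labels, instead of computing a conditional law. This is what the paper does.
\begin{itemize}
\item Let the reference process sort by label and break ties within each equal-label group by an independent uniform ordering, for instance by i.i.d.\ continuous keys.
\item The pairs (label, key) are i.i.d.\ and almost surely distinct. So this yields an exactly uniform permutation unconditionally, and its category word has the ideal fixed-type law. Together with the independent $f$ block, the pair has law $\nu_N^{\rm HPS}$.
\item The actual sampler breaks ties by category instead. Both processes order the label groups identically.
\item A group containing a single category contributes the same subword whatever its internal order. So the two category words agree on $E$.
\end{itemize}
The coupling inequality then gives $\operatorname{TV}\le\Pr[E^c]\le(m^2+2m(L-2m))2^{-b_N}$, and the rest of your argument goes through unchanged.
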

\begin{proof}
Couple the specified ordering to one using an independent uniform
ordering within each equal-label group. Exchangeability makes the
latter ordering a uniform permutation, whose category word has the
ideal fixed-type law. The category words can differ only if two
entries of different categories share a label. There are $m^2$ pairs
between the nonzero categories and $2m(L-2m)$ pairs involving zero;
each collision has probability $2^{-b_N}$. The union bound proves
the estimate and its two asymptotic specializations.
\end{proof}

For the specification's fixed 30-bit labels, the same formula is a
finite distributional error bound. For the asymptotic sorting result
we use growing $b_N=O(\log N)$, so both the random-bit cost and sorting
cost remain polynomial.

\subsubsection{Logarithmic saturation from common cyclotomic ideals}
\label{uf:ntru-distributional-saturation}

The lemmas in this subsection hold for general odd prime $N$.
Put $\zeta_N=e^{2\pi i/N}$,
$E_N=\mathbb Q(\zeta_N)$, and $O_N=\mathbb Z[\zeta_N]\simeq S_N$;
write $\pi_N(f)=f(\zeta_N)$. This notation distinguishes the
cyclotomic projection from coefficient conjugation.
We use the standard facts that $O_N$ is the full ring of integers,
its nonzero ideals have unique prime ideal factorization, the only
ramified rational prime is $N$, and a prime above $p\ne N$ has
residue degree $\operatorname{ord}_N(p)$~\cite{Was97}.

\begin{lemma}
\label{uf:ntru-index-bridge}
Let $f,g\in R_N$, with $g(1)=0$, $f(1)\ne0$, and $\pi_N(g)\ne0$.
For any public relation defining $k=(\widehat hf-g)/q\in R_N$, put
$I=(\pi_N(f),\pi_N(g))\subseteq O_N$. Then
\begin{equation}
 \Delta_{\rm sat}=[R_N:(f,g,k)]\le N|f(1)|\operatorname{Norm}(I).
 \label{uf:ntru-index-bound}
\end{equation}
\end{lemma}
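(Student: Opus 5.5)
We bound the index by splitting $R_N\otimes\mathbb Q\simeq\mathbb Q\times E_N$ via the Chinese remainder map $a\mapsto(a(1),\pi_N(a))$. The image of $R_N$ in $\mathbb Z\times O_N$ is the sublattice $\{(x,y):x\equiv y\pmod{(1-\zeta_N)}\}$, whose index in $\mathbb Z\times O_N$ equals $|O_N/(1-\zeta_N)|=N$. Let $J=fR_N+gR_N+kR_N$. The plan is to compare $J$ with the product ideal $J_1\times J_2$, where $J_1=(f(1),g(1),k(1))\subseteq\mathbb Z$ and $J_2=(\pi_N(f),\pi_N(g),\pi_N(k))\subseteq O_N$. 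We then bound $[R_N:J]$ by $[\mathbb Z\times O_N:\iota(J)]$ times nothing worse than the factor $N$ lost in the CRT gluing.

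First, we have $[R_N:J]=[\iota(R_N):\iota(J)]\le[\mathbb Z\times O_N:\iota(J)]$, since $\iota(J)\subseteq\iota(R_N)\subseteq\mathbb Z\times O_N$. Second, the ideal $\iota(J)$ contains $(N)\iota(J_{\rm prod})$-type elements. Concretely, the idempotent-like element $e=\Phi_N(x)\in R_N$ maps to $(N,0)$, and $1-\Phi_N(x)/N$ is not integral, so we argue directly instead.

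The direct argument uses $Ne_1=(N,0)$ and $Ne_2=(0,N)$, which both lie in $\iota(R_N)$ because $N\mathbb Z\times NO_N\subseteq\iota(R_N)$. Hence $\iota(J)\supseteq N\cdot(J_1\times J_2)$ fails to be sharp; a cleaner route is to use $\iota(J)\supseteq\iota(R_N)\cdot\iota(J)\supseteq(\iota(R_N)\cap(\mathbb Z\times0))\,J_1\times0+\dots$. The planned chain is therefore:
\begin{itemize}
\item $\iota(R_N)\cap(\mathbb Z\times 0)=N\mathbb Z\times0$, so $\iota(J)\supseteq NJ_1\times0$;
\item $\iota(R_N)\cap(0\times O_N)=0\times(1-\zeta_N)O_N$, so $\iota(J)\supseteq0\times(1-\zeta_N)J_2$.
\end{itemize}
This gives index at most $N\cdot|f(1)|$-type factors times $N\operatorname{Norm}(J_2)$. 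That is too lossy, namely $N^2$ and not $N$, so we must be more careful and combine the two directions. Take the sublattice $\iota(J)\cap(\mathbb Z\times0)\supseteq f(1)\cdot(N,0)/\ldots$; more precisely, multiply $f$ by $\Phi_N(x)$ to get $(Nf(1),0)\in\iota(J)$. Then the fibre structure gives
\[
[\mathbb Z\times O_N:\iota(J)]\le[\mathbb Z:Nf(1)\mathbb Z]\cdot[O_N:\mathrm{pr}_2\iota(J)]=N|f(1)|\operatorname{Norm}(J_2),
\]
using the projection exact sequence onto the second factor with kernel containing $Nf(1)\mathbb Z\times0$. Since $\mathrm{pr}_2\iota(J)=J_2$ (the image of an ideal under a surjective ring map is generated by images), the bound on $[R_N:J]$ follows from the first step without the extra $N$. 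Finally, $J_2\supseteq I=(\pi_N(f),\pi_N(g))$, so $\operatorname{Norm}(J_2)\le\operatorname{Norm}(I)$. The hypotheses $f(1)\ne0$ and $\pi_N(g)\ne0$ (together with $\pi_N(f)\ne0$, which follows from acceptance, and otherwise $I$ still contains $\pi_N(g)$) ensure that everything is finite. The condition $g(1)=0$ is not needed for the bound, though it explains why $f(1)$ is the relevant factor.

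The main obstacle is the step $(Nf(1),0)\in\iota(J)$. Here $\Phi_N(x)f\in J$ has image $(Nf(1),\Phi_N(\zeta_N)\pi_N(f))=(Nf(1),0)$. This is exact, so the final check is only the fibre-index inequality for a subgroup containing a kernel lattice, which is routine.
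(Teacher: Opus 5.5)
Your proof is correct, but it takes a different route from the paper's.

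\textbf{The paper's route.} It works with $J=fR_N+gR_N$ and the overring $A=\mathbb Z\oplus O_N$.
\begin{itemize}
\item It computes the extended ideal exactly as $JA=f(1)\mathbb Z\oplus I$. This is where $g(1)=0$ is used. Hence $[A:JA]=|f(1)|\operatorname{Norm}(I)$.
\item It bounds $[JA:J]\le N^2$ through the surjection $(A/R_N)^2\to JA/J$, $([a_1],[a_2])\mapsto[fa_1+ga_2]$.
\item It divides by $[A:R_N]=N$, and handles $k$ at the end by monotonicity.
\end{itemize}

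\textbf{Your route.} You never form an extended ideal. You apply the exact sequence
$0\to\mathbb Z/H_1\to(\mathbb Z\times O_N)/H\to O_N/\operatorname{pr}_2H\to0$ directly to $H=\iota(J)$, where $H_1=\{x:(x,0)\in H\}$.
\begin{itemize}
\item The first factor is controlled by the single element $\iota(\Phi_N(x)f)=(Nf(1),0)$.
\item The second factor is identified as $\pi_N(J)\supseteq I$, using surjectivity of $\pi_N$.
\end{itemize}
As you note, this does not need $g(1)=0$, and it treats the three-generator ideal $(f,g,k)$ directly.

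\textbf{Your argument is sharper than you state.} Since $\iota$ is injective and $[\mathbb Z\times O_N:\iota(R_N)]=N$, your first step is really the equality $[\mathbb Z\times O_N:\iota(J)]=N\,[R_N:J]$, not just an inequality. Combined with your fibre bound, this gives $[R_N:J]\le|f(1)|\operatorname{Norm}(I)$, a factor $N$ better than the lemma. Equivalently, it shows the paper's estimate $[JA:J]\le N^2$ could be replaced by $[JA:J]\le N$. The gain is irrelevant downstream, since only $\ln\Delta_{\rm sat}\le N/D_N$ is used there.

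\textbf{Clean-up.} In a write-up, delete the abandoned attempts: the remark about $1-\Phi_N(x)/N$, the two bullet inclusions, and the ``too lossy'' discussion. Then state the final chain in one pass:
\[
N[R_N:J]=[\mathbb Z\times O_N:\iota(J)]\le N|f(1)|\,[O_N:\pi_N(J)]\le N|f(1)|\operatorname{Norm}(I).
\]
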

\begin{proof}
Embed $R_N$ in $A=\mathbb Z\oplus O_N$ by evaluation at $1,\zeta_N$.
The resultant of $x-1$ and $\Phi_N$ is $\Phi_N(1)=N$, so
$[A:R_N]=N$. For $J=fR_N+gR_N$ its extension is
$JA=f(1)\mathbb Z\oplus I$, and hence
$[A:JA]=|f(1)|\operatorname{Norm}(I)$.
The well-defined map of finite abelian groups
\[
 (A/R_N)^2\longrightarrow JA/J,\qquad
 ([a_1],[a_2])\longmapsto[fa_1+ga_2]
\]
is surjective. Thus $[JA:J]\le N^2$. The nonzero hypotheses make
$JA$, and therefore $J$, full rank. Multiplying indices gives
\[
 [R_N:J]=\frac{[A:JA][JA:J]}{[A:R_N]}
 \le N|f(1)|\operatorname{Norm}(I).
\]
Since $(f,g,k)\supseteq J$, its index can only be smaller.
\end{proof}

\begin{lemma}
\label{uf:ntru-modular-anticoncentration}
Let $X_1,\ldots,X_s$ be independent with
law~\eqref{uf:ntru-raw-coefficients}. For every prime $p$, all
$c_i\in\mathbb F_p^\times$, and every $t\in\mathbb F_p$,
\begin{equation}
 \Pr\left[\sum_{i=1}^s c_iX_i=t\right]
 \le\frac1p+\frac2{\sqrt s}.
 \label{uf:ntru-modular-small-ball}
\end{equation}
\end{lemma}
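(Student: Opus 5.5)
We use the standard Fourier route: write the point probability through additive characters of $\mathbb F_p$ and bound the characteristic function of the ternary law coordinatewise. Put $\psi(x)=e^{2\pi\mathrm i x/p}$. By orthogonality of characters and independence,
\[
 \Pr\Big[\sum_i c_iX_i=t\Big]
 =\frac1p\sum_{r\in\mathbb F_p}\psi(-rt)\prod_{i=1}^s\phi(rc_i),
 \qquad \phi(y)=\mathbb E[\psi(yX)]=a+2b\cos(2\pi y/p).
\]
The term $r=0$ contributes exactly $1/p$. It therefore suffices to show that
\[
 \frac1p\sum_{r\ne0}\prod_i|\phi(rc_i)|\le\frac2{\sqrt s}.
\]

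To control the product, we first apply Hölder's inequality (or AM–GM on $|\cdot|^s$), which gives
\[
 \prod_i|\phi(rc_i)|\le\frac1s\sum_i|\phi(rc_i)|^s.
\]
Each $c_i$ is nonzero, so the map $r\mapsto rc_i$ permutes $\mathbb F_p^\times$. Hence
\[
 \frac1p\sum_{r\ne0}\prod_i|\phi(rc_i)|\le\frac1p\sum_{y\ne0}|\phi(y)|^s.
\]
This removes the dependence on the coefficients $c_i$ entirely.

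The main step is a uniform bound on $|\phi|$. Since $a=86/256$ and $b=85/256$, we have $a\le 2b$. We also have $\phi(y)=1-2b(1-\cos\theta)$ with $\theta=2\pi y/p$, so $\phi\ge a-2b\ge-1/3$. Using $1-\cos\theta\ge 2\theta^2/\pi^2$ for $|\theta|\le\pi$, with $y$ represented in $(-p/2,p/2)$, we get
\[
 |\phi(y)|\le\max\{1-cy^2/p^2,\ 1/3\}\le e^{-c'y^2/p^2}
\]
for absolute constants $c,c'$ (roughly $c=8\cdot 2b\approx5.3$).

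We then compare with a Gaussian sum:
\[
 \frac1p\sum_{y\ne0}e^{-c's y^2/p^2}\le\frac2p\sum_{y\ge1}e^{-c'sy^2/p^2}\le\frac2p\int_0^\infty e^{-c'sx^2/p^2}\,dx=\sqrt{\pi/(c's)}.
\]
This is at most $2/\sqrt s$ once $c'\ge\pi/4$. The bound holds for every $p$, including $p=2,3$, where $|\phi|$ at the nonzero points is at most $|a-2b|$ or $|a-b|$, both small.

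The obstacle is pinning the constant to exactly $2$. The lower bound for $1-\cos\theta$ and the negative branch (where $|\phi|\le 1/3$ only) must both be checked against $e^{-c'y^2/p^2}$ with $c'\ge\pi/4$. On $|y|\le p/2$ we have $y^2/p^2\le1/4$, so the $1/3$ branch needs $e^{-c'/4}\ge1/3$, which is fine for small $c'$, while the quadratic branch needs $1-cx\le e^{-c'x}$. I would first try $c'=\pi/4$ directly. If that fails, I would fall back to the cruder bound $|\phi(y)|\le e^{-2b(1-\cos\theta)}$ and Bessel-type estimates, or accept the slack, since $\sqrt{\pi/c'}$ only needs to be at most $2$.
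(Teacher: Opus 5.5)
Your proposal is correct and follows essentially the paper's route: Fourier inversion, Hölder (in your case pointwise AM--GM) together with the permutation $r\mapsto rc_i$ to eliminate the coefficients, a Gaussian bound on the ternary characteristic function, and comparison with a Gaussian integral giving $p^{-1}+\sqrt{\pi/(c's)}$. The only difference is that the paper obtains $|\psi(u)|\le e^{-16ab\|u\|^2}$ from $1-|\psi|^2\ge 4ab(1-\cos 2\pi u)$ rather than from your sign case split. The constant check you flag as the obstacle is immediate: $c'=\pi/4$ satisfies both $c'\le 16b$ (quadratic branch, via $1-x\le e^{-x}$) and $e^{-c'/4}\ge 1/3$ (negative branch), so no fallback is needed.
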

\begin{proof}
Let $\psi(u)=a+2b\cos(2\pi u)$. The independent-difference
characteristic function gives
\[
 1-|\psi(u)|^2\ge4ab(1-\cos(2\pi u))
 \ge32ab\|u\|_{\mathbb R/\mathbb Z}^2,
\]
and therefore $|\psi(u)|\le e^{-c\|u\|^2}$ with $c=16ab>1$.
Fourier inversion and H\"older's inequality for uniform measure on
$\mathbb F_p$ yield
\begin{align*}
 \Pr\left[\sum c_iX_i=t\right]
 &\le\frac1p\sum_v\prod_i|\psi(vc_i/p)|\\
 &\le\prod_i\left(\frac1p\sum_v|\psi(vc_i/p)|^s\right)^{1/s}
 =\frac1p\sum_v|\psi(v/p)|^s.
\end{align*}
Each nonzero $c_i$ permutes $\mathbb F_p$. Centering the last sum
and bounding its nonzero terms by a Gaussian integral gives
$p^{-1}+\sqrt{\pi/(cs)}\le p^{-1}+2/\sqrt s$.
\end{proof}

\begin{lemma}
\label{uf:ntru-prime-ideal-membership}
Let $F=P_{\mathbf X}$ or $F=U_{\mathbf X}$. For all sufficiently
large odd primes $N$ and every nonzero prime ideal
$\mathfrak p\subset O_N$,
\begin{equation}
 \Pr[\pi_N(F)\in\mathfrak p]
 \le\epsilon_N:=32\exp\!\left(-\tfrac14\sqrt{\ln N}\right).
 \label{uf:ntru-prime-ideal-bound}
\end{equation}
\end{lemma}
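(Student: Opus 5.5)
We bound the membership probability by a small-ball estimate modulo $p$ that uses only a few coordinates of $\mathbf X$, with a case split on the residue degree of $\mathfrak p$.

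\textbf{Setup.} Let $\mathfrak p$ lie over the rational prime $p$, and let $f_p$ be its residue degree. Since $O_N/\mathfrak p\simeq\mathbb F_{p^{f_p}}$, membership $\pi_N(F)\in\mathfrak p$ says that $F(\bar\zeta)=0$ in $\mathbb F_{p^{f_p}}$, where $\bar\zeta$ is the image of $\zeta_N$.

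\textbf{TernaryPlus.} For $F=U_{\mathbf X}$, \eqref{uf:ntru-plus-branches} gives $U_{\mathbf X}(\zeta_N)\in\{P_{\mathbf X}(\zeta_N),\,-P_{\mathbf X}(-\zeta_N)\}$. Hence
\[
 \Pr[\pi_N(U_{\mathbf X})\in\mathfrak p]\le\Pr[P_{\mathbf X}(\zeta_N)\in\mathfrak p]+\Pr[P_{\mathbf X}(-\zeta_N)\in\mathfrak p].
\]
The second event concerns $-\zeta_N$, a primitive $2N$th root of unity generating the same field $E_N$. Both events are therefore linear equations $\sum_j\epsilon_jX_j\bar\zeta^j=0$ in $\mathbb F_{p^{f_p}}$ with signs $\epsilon_j=\pm1$. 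Handling this class uniformly covers both cases and costs a factor at most $2$.

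\textbf{Case $\mathfrak p$ above $N$.} Here $\bar\zeta=1$, and the condition is $\sum_j\pm X_j\equiv0\pmod N$. Lemma~\ref{uf:ntru-modular-anticoncentration} with $s=L$ gives $1/N+2/\sqrt{N-1}$, which is far below $\epsilon_N$.

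\textbf{Case $p\ne N$, reduction to $\mathbb F_p$.} The elements $\bar\zeta^j$ span $\mathbb F_{p^{f_p}}$ over $\mathbb F_p$. Choose an $\mathbb F_p$-linear functional $\lambda$ and apply it. The event $\sum\epsilon_jX_j\bar\zeta^j=0$ then implies $\sum_jc_jX_j=0$ in $\mathbb F_p$, where $c_j=\lambda(\epsilon_j\bar\zeta^j)$. Condition on the coordinates with $c_j=0$; Lemma~\ref{uf:ntru-modular-anticoncentration} then applies to the remaining $s$ coordinates and gives $1/p+2/\sqrt s$.

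\textbf{Main obstacle.} We must pick $\lambda$ so that at least $s\gtrsim e^{\sqrt{\ln N}/2}$ coefficients $c_j$ are nonzero, while also handling small $p$, where $1/p$ is not small.

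For the count of nonzero coefficients, I would use that $j\mapsto\bar\zeta^j$ has order $N$ and that the map $a\mapsto(\lambda(a\bar\zeta^j))_{j<N}$ is injective in $a$. Averaging over uniform nonzero $\lambda$, each fixed $j$ has $\Pr[c_j=0]\le 1/p$. So some $\lambda$ has at least $L(1-1/p)\ge L/2$ nonzero coefficients. That suffices whenever $p$ is large.

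For small $p$, the single-prime bound $1/p$ is useless. Instead, pick $r$ coordinate functionals and use that the vectors $(\bar\zeta^j)$ are spread over $\mathbb F_{p^{f_p}}$. I would bound the probability via the characteristic-function argument of Lemma~\ref{uf:ntru-modular-anticoncentration}, applied directly on the additive group of $\mathbb F_{p^{f_p}}$. This gives
\[
 p^{-f_p}+\frac1{p^{f_p}}\sum_{\chi\ne1}\prod_j|\psi(\chi(\bar\zeta^j))|.
\]

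Splitting the nontrivial characters by how many $j$ have $\chi(\bar\zeta^j)\ne1$ is the crux. For this I would use the fact that the trace character $\operatorname{Tr}(a\bar\zeta^j)$ vanishes for at most a $1/p+\sqrt{\cdot}$ fraction of $j$ (a Weil/subgroup-sum estimate). This holds when $f_p\ge\sqrt{\ln N}$-ish. In the complementary regime, where $f_p=\operatorname{ord}_N(p)$ is small, $p^{f_p}\ge N$ already forces $p\ge N^{1/f_p}$.

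Balancing the two regimes at $f_p\approx\sqrt{\ln N}$ is what produces the $\exp(-\tfrac14\sqrt{\ln N})$ rate. The constant $32$ absorbs the factor $2$ from TernaryPlus and the summation constants.
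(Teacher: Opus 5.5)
\textbf{Verdict: there is a genuine gap, in the regime of large residue degree.} Your handling of TernaryPlus, of the ramified prime, and of small residue degree is sound and matches the paper. For $f_p<\sqrt{\ln N}$, the bound $p\ge N^{1/f_p}>e^{\sqrt{\ln N}}$ makes the $1/p$ term negligible. Your averaged functional, or any nonzero functional by the block-basis argument, leaves $\gtrsim N/f_p$ nonzero weights.

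The complementary regime contains the small primes and cannot be avoided. Under the NTRU arithmetic conditions, $p=2,3$ have $f_p=\operatorname{ord}_N(p)=N-1$. The estimate you invoke there is that, for every nonzero $a$, the functional $x\mapsto\operatorname{Tr}(ax)$ vanishes on at most a $1/p+o(1)$ fraction of the powers $\bar\zeta^j$. This fails in this regime.
\begin{itemize}
\item The Gauss-period bound for the order-$N$ subgroup $\langle\bar\zeta\rangle\subset\mathbb F_{p^{f_p}}^\times$ is $\sqrt{p^{f_p}}$. This beats the trivial bound $N$ only when $p^{f_p}<N^2$, which is essentially the small-$f_p$ regime you already handle.
\item The uniform statement is false. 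Take $p=2$ and $f_p=N-1$. Then $\Phi_N(\bar\zeta)=0$ gives $\bar\zeta^{N-1}=1+\bar\zeta+\cdots+\bar\zeta^{N-2}$. The functional with value $1$ at $1$ and $0$ at $\bar\zeta,\ldots,\bar\zeta^{N-2}$ is nonzero. It vanishes at $N-2$ of the $N$ powers, so its character contributes the constant factor $84/256$.
\end{itemize}
So no per-character bound can make your character sum small. It has to be controlled on average over $a$, and your proposal gives no mechanism for that. The constant $\tfrac14$ in the exponent is also not derived from anything in your argument.

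\textbf{The missing idea needs no exponential sums.} Since $\bar\zeta$ generates the residue field $\mathbb F_{p^{f_p}}$, the powers $1,\bar\zeta,\ldots,\bar\zeta^{f_p-1}$ form an $\mathbb F_p$-basis. The same holds for $-\bar\zeta$, which covers the TernaryPlus branch.
\begin{enumerate}
\item Condition on all coefficients except $X_0,\ldots,X_{f_p-1}$. The equation $P_{\mathbf X}(\bar\zeta)=0$ then prescribes their residues modulo $p$ uniquely.
\item Each residue has probability at most $170/256$; this is attained at $p=2$, where the two nonzero signs merge.
\item Hence the membership probability is at most $(170/256)^{f_p}\le e^{-f_p/4}$, since $\ln(256/170)>1/4$.
\item For $f_p\ge\sqrt{\ln N}$ this is exactly the rate $e^{-\sqrt{\ln N}/4}$ in $\epsilon_N$. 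Together with your small-$f_p$ bound, this proves the lemma.
\end{enumerate}
The same conclusion can be reached inside your Fourier formula. Discard every factor with $j\ge f_p$, and use that $a\mapsto(\operatorname{Tr}(a\bar\zeta^j))_{j<f_p}$ is a bijection onto $\mathbb F_p^{f_p}$. This averages over characters instead of bounding each one, and yields a bound of the form $c^{f_p}$ with $c<1$ uniformly in $p$. As written, your argument does not cover the small primes, which are precisely the cases the NTRU parameters force.
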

\begin{proof}
Suppose first that $\mathfrak p$ lies above $p\ne N$, has residue
degree $d$, and write $\xi=\zeta_N\bmod\mathfrak p$. Its order is
$N$, so $p^d\ge N+1$. The powers $1,\xi,\ldots,\xi^{d-1}$ form
an $\mathbb F_p$-basis. Condition on all other coefficients of
$P_{\mathbf X}$. The equation $P_{\mathbf X}(\xi)=0$ prescribes
the residues of these $d$ coefficients uniquely. Each residue has
probability at most $\eta=170/256$, including $p=2$, where the two
nonzero signs merge. Thus the probability is at most $\eta^d$.

For a second bound, fix any nonzero linear functional
$\lambda:\mathbb F_{p^d}\to\mathbb F_p$. Each block of $d$
consecutive powers of $\xi$ is a basis, so at least
$s=\lfloor(N-1)/d\rfloor$ of the weights $\lambda(\xi^j)$
are nonzero. Apply Lemma~\ref{uf:ntru-modular-anticoncentration}
to these coefficients, conditioning on the rest. Consequently
\[
 \Pr[P_{\mathbf X}(\xi)=0]
 \le\min\{\eta^d,\ p^{-1}+2/\sqrt{\lfloor(N-1)/d\rfloor}\}.
\]
For $d\ge\sqrt{\ln N}$ the first term is at most
$e^{-\sqrt{\ln N}/4}$, since $-\ln\eta>1/4$.
For $d<\sqrt{\ln N}$, the second is at most
$e^{-\sqrt{\ln N}}+4(\ln N)^{1/4}/\sqrt N$ for large $N$.
These explicit bounds give~\eqref{uf:ntru-prime-ideal-bound}.

The sole prime ideal above $N$ is $(1-\zeta_N)$, with residue
field $\mathbb F_N$. The membership condition becomes
$P_{\mathbf X}(1)=0\bmod N$, and the same scalar lemma gives
$N^{-1}+2/\sqrt{N-1}$.

For TernaryPlus, membership implies that either
$P_{\mathbf X}(\xi)=0$ or $P_{\mathbf X}(-\xi)=0$.
The element $-\xi$ generates the same residue field and its
consecutive powers are again bases. The preceding two bounds apply
unchanged; at the ramified prime use weights $1$ or alternating
signs. A union bound costs a factor of two, already allowed by
the stated constant. All comparisons above have effective thresholds.
For example, the displayed estimates apply once $\ln N\ge16$.
\end{proof}

\begin{lemma}
\label{uf:ntru-log-ideal-norm}
Let $f$ have either ternary law of the preceding lemma and let $g$
be independent of $f$. For each fixed $g$ with $\pi_N(g)\ne0$,
\begin{equation}
 \mathbb E_f[\ln\operatorname{Norm}(\pi_N(f),\pi_N(g))]
 \le\epsilon_N\ln|\operatorname{Norm}_{E_N/\mathbb Q}(\pi_N(g))|.
 \label{uf:ntru-ideal-expectation}
\end{equation}
If the cyclic coefficients of $g$ have absolute value at most two,
the right side is at most $\epsilon_N N\ln(3N)$.
\end{lemma}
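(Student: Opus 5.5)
We will use the prime ideal factorization of $J=(\pi_N(g))\subseteq O_N$, which is nonzero because $\pi_N(g)\ne0$.

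\emph{Step 1: bound the random ideal norm pointwise.} Write $J=\prod_{\mathfrak p}\mathfrak p^{e_{\mathfrak p}}$. The ideal $I=(\pi_N(f),\pi_N(g))$ contains $J$, so $I$ divides $J$. Hence $I=\prod\mathfrak p^{c_{\mathfrak p}}$ with $0\le c_{\mathfrak p}\le e_{\mathfrak p}$, and $c_{\mathfrak p}>0$ only when $\pi_N(f)\in\mathfrak p$. Consequently, for every $f$,
\[
 \ln\operatorname{Norm}(I)\le\sum_{\mathfrak p\mid J}
 e_{\mathfrak p}\ln\operatorname{Norm}(\mathfrak p)\,
 \mathbf 1[\pi_N(f)\in\mathfrak p].
\]
If $\pi_N(f)=0$, then $I=J$ and the bound still holds, because every indicator equals one. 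The inequality is also valid under the convention $\operatorname{Norm}(0)$ excluded, since $J\ne0$ forces $I\ne0$.

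\emph{Step 2: take expectations.} The vector $g$ is fixed and independent of $f$, so linearity of expectation together with Lemma~\ref{uf:ntru-prime-ideal-membership} bounds each indicator's expectation by $\epsilon_N$. This gives
\[
 \mathbb E_f[\ln\operatorname{Norm}(I)]\le\epsilon_N
 \sum_{\mathfrak p}e_{\mathfrak p}\ln\operatorname{Norm}(\mathfrak p)
 =\epsilon_N\ln\operatorname{Norm}(J)
 =\epsilon_N\ln|\operatorname{Norm}_{E_N/\mathbb Q}(\pi_N(g))|,
\]
which is \eqref{uf:ntru-ideal-expectation}. The only point to check is that Lemma~\ref{uf:ntru-prime-ideal-membership} is uniform over every nonzero prime ideal, including the ramified prime above $N$; it is. The sum is finite, so no convergence issue arises.

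\emph{Step 3: the explicit bound.} The field norm is the product of the $N-1$ Galois conjugates $g(\zeta_N^j)$, $1\le j\le N-1$. If the coefficients of $g$ have absolute value at most two, then $|g(\zeta_N^j)|\le2N\le3N$. Therefore $\ln|\operatorname{Norm}|\le(N-1)\ln(2N)\le N\ln(3N)$, and multiplying by $\epsilon_N$ gives the stated bound. Nothing here is a serious obstacle. The one step needing care is Step 1: justifying that $c_{\mathfrak p}\le e_{\mathfrak p}$ and $c_{\mathfrak p}=0$ unless $\mathfrak p$ contains $\pi_N(f)$. Both follow from unique factorization in the Dedekind domain $O_N$, since $I$ is the ideal gcd of $(\pi_N(f))$ and $J$.
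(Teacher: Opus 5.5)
Your argument is correct and is essentially the paper's proof: the pointwise bound $v_{\mathfrak p}(I)\le e_{\mathfrak p}\mathbf 1[\pi_N(f)\in\mathfrak p]$ over the primes dividing $(\pi_N(g))$, valid even when $f=0$, followed by linearity of expectation and Lemma~\ref{uf:ntru-prime-ideal-membership}. The only difference is the final estimate: the paper uses Parseval and the arithmetic--geometric mean to get $\prod_j|g(\zeta_N^j)|\le(N\|g\|_2^2/(N-1))^{(N-1)/2}$, whereas your cruder triangle-inequality bound $|g(\zeta_N^j)|\le2N$, giving $(N-1)\ln(2N)$, also suffices for the stated $N\ln(3N)$.
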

\begin{proof}
For a prime ideal dividing $(\pi_N(g))$, put
$e_{\mathfrak p}=v_{\mathfrak p}(\pi_N(g))$. Even when $f=0$,
\[
 v_{\mathfrak p}(\pi_N(f),\pi_N(g))
 \le e_{\mathfrak p}\mathbf1_{\{\pi_N(f)\in\mathfrak p\}}.
\]
Take expectations in the finite sum over these prime ideals, weighted
by $\ln\operatorname{Norm}(\mathfrak p)$, and apply
Lemma~\ref{uf:ntru-prime-ideal-membership}. This proves
\eqref{uf:ntru-ideal-expectation}, including all prime ideal powers.
Finally, full cyclic Parseval and the arithmetic--geometric mean give
\[
 |\operatorname{Norm}_{E_N/\mathbb Q}(\pi_N(g))|
 =\prod_{j=1}^{N-1}|g(\zeta_N^j)|
 \le\left(\frac{N\|g\|_2^2}{N-1}\right)^{(N-1)/2}.
\]
Using $\|g\|_2^2\le4N$ bounds its logarithm by $N\ln(3N)$.
\end{proof}

Define the raw-polynomial proxy
\begin{equation}
 \Xi_N(f,g)=
 \begin{cases}
 N|f(1)|\operatorname{Norm}(\pi_N(f),\pi_N(g)),
       &f(1)\ne0,\ \pi_N(g)\ne0,\\
 \infty,&\text{otherwise}.
 \end{cases}
 \label{uf:ntru-saturation-proxy}
\end{equation}
It is defined before acceptance or public-key construction. On accepted
inputs with finite proxy, Lemma~\ref{uf:ntru-index-bridge} gives
$\Delta_{\rm sat}\le\Xi_N$.

\begin{proposition}[Saturation bound]
\label{uf:ntru-log-saturation}
Use any of the raw laws in Definition~\ref{uf:ntru-proposal-laws},
and the public dyadic $D_N\sim\ln\ln N$. For sufficiently large $N$,
\begin{align}
 \Pr[\ln\Xi_N>N/D_N]&\le r_{\rm sat}(N),\nonumber\\
 r_{\rm sat}(N)&:=64D_N\ln(3N)e^{-\sqrt{\ln N}/4}
                         +12N^{-1/2}=o(1).
 \label{uf:ntru-saturation-tail}
\end{align}
In particular the fixed saturation constant is $C_{\rm sat}=1$.
\end{proposition}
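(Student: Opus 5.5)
The bound will follow from Markov's inequality applied to the logarithm of the proxy, after separating off the event where the proxy is infinite. Since $\ln\Xi_N=\ln N+\ln|f(1)|+\ln\operatorname{Norm}(\pi_N(f),\pi_N(g))$ on the finite event, and $|f(1)|\le N$ for ternary $f$, the first two terms contribute at most $2\ln N=o(N/D_N)$. It therefore suffices to bound two probabilities:
\[
 \Pr[\Xi_N=\infty]
 \qquad\text{and}\qquad
 \Pr\bigl[\ln\operatorname{Norm}(\pi_N(f),\pi_N(g))>N/(2D_N)\bigr].
\]

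\emph{The infinite event.} The proxy is infinite exactly when $f(1)=0$ or $\pi_N(g)=0$.
\begin{itemize}
\item For $f=P_{\mathbf X}$, we have $f(1)=\sum X_j$. Lemma~\ref{uf:ntru-modular-anticoncentration}, applied with any prime $p>2N$ (so that vanishing modulo $p$ is the same as vanishing over $\mathbb Z$) and $s=N-1$, gives $\Pr[f(1)=0]\le 1/p+2/\sqrt{N-1}\le 3N^{-1/2}$.
\item For TernaryPlus, $f(1)=\pm P_{\mathbf X}(\pm1)$. A union bound over the two branches in~\eqref{uf:ntru-plus-branches}, using the alternating-sign weights, doubles this bound.
\item The event $\pi_N(g)=0$ means $g\in(\Phi_N)$. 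For HPS, $g$ has degree at most $N-2$ and positive weight, so $\pi_N(g)\ne0$. For HRSS, $\pi_N(g)=0$ forces $\pi_N(g_0)=0$, since $\zeta_N-1\ne0$, and hence $g_0=0$. That event has probability $a^{N-1}$, which is absorbed in $N^{-1/2}$.
\end{itemize}
These contributions together account for the $12N^{-1/2}$ term, with room to spare.

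\emph{The ideal-norm term.} Condition on $g$ with $\pi_N(g)\ne0$; the blocks are independent, so $f$ keeps its raw law. In both profiles the cyclic coefficients of $g$ are bounded by two in absolute value: they are ternary for HPS, and $(x-1)g_0$ has coefficients in $[-2,2]$ for HRSS. Lemma~\ref{uf:ntru-log-ideal-norm} then gives
\[
 \mathbb E_f[\ln\operatorname{Norm}(\pi_N(f),\pi_N(g))]\le\epsilon_N N\ln(3N).
\]
Because the ideal contains the nonzero $\pi_N(g)$, its norm is a positive integer and the logarithm is nonnegative, so Markov's inequality applies. At threshold $N/(2D_N)$ it yields
\[
 2D_N\epsilon_N\ln(3N)=64D_N\ln(3N)e^{-\sqrt{\ln N}/4},
\]
using $\epsilon_N=32e^{-\sqrt{\ln N}/4}$. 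This is exactly the first term of $r_{\rm sat}$. Averaging over $g$ preserves the bound.

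\emph{Combining.} On the complement of both bad events,
\[
 \ln\Xi_N\le 2\ln N+N/(2D_N)\le N/D_N
\]
for all large $N$. Since $D_N\sim\ln\ln N$ and $e^{-\sqrt{\ln N}/4}$ decays faster than any power of $\ln N$, $r_{\rm sat}(N)=o(1)$. The sorting-label HPS law has the same $f$ block, and its $g$ block still has weight $w$ and coefficients bounded by one, so the same argument covers it.

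The step I expect to need the most care is the anticoncentration for $f(1)=0$ under TernaryPlus, where the branch choice depends on $\mathbf X$ itself. The union bound over the two branches handles this, but the applicability of Lemma~\ref{uf:ntru-modular-anticoncentration} to alternating weights must be checked.
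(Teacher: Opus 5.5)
Your proposal is correct and is essentially the paper's proof. It makes the same separation of the exceptional events $f(1)=0$, bounded by Lemma~\ref{uf:ntru-modular-anticoncentration} and doubled over the two TernaryPlus branches with weights $1$ and alternating signs (which are units modulo every prime), and $\pi_N(g)=0$, followed by Markov's inequality with Lemma~\ref{uf:ntru-log-ideal-norm} at threshold $N/(2D_N)$. The only cosmetic difference is that you apply the anticoncentration lemma at a prime $p>2N$, where the paper uses $p=N$; either choice bounds $\Pr[f(1)=0]$ well within the $12N^{-1/2}$ budget.
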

\begin{proof}
At $p=N$, Lemma~\ref{uf:ntru-modular-anticoncentration} bounds
$\Pr[P_{\mathbf X}(1)=0]$ by $N^{-1}+2/\sqrt{N-1}$.
For TernaryPlus use the two weights $1$ and alternating signs and
\eqref{uf:ntru-plus-branches}, giving at most twice this bound.
In HPS, $\pi_N(g)\ne0$ follows from its positive weight and degree
less than $N-1$. In HRSS it vanishes exactly when $\mathbf Y=0$,
an event of probability $a^{N-1}$. These exceptional probabilities
sum to at most $12/\sqrt N$ for large $N$.

Otherwise $|f(1)|\le N-1$, so
$\ln\Xi_N\le2\ln N+\ln\operatorname{Norm}(I)$.
For large $N$, $N/D_N-2\ln N\ge N/(2D_N)$.
For each nonexceptional $g$, Markov's inequality and
Lemma~\ref{uf:ntru-log-ideal-norm} bound the remaining probability
by $2\epsilon_ND_N\ln(3N)$. Average over $g$ and add the
exceptional probabilities. This gives the displayed formula.
The proof uses only independence, bounded coefficients, and
nonzero fixed weight for HPS $g$, so it also covers its sorting law.
\end{proof}

\subsubsection{Spectra of the raw samplers}

The following bounds hold before key generation rejection, uniformly
over the stated frequencies and small-ball centres.

\begin{lemma}
\label{uf:ntru-independent-small-ball}
Let $N\ge3$ be odd, $\zeta=e^{2\pi i/N}$, and
$P_{\mathbf X}(z)=\sum_{k=0}^{N-2}X_kz^k$, where
$\mathbf X=(X_0,\ldots,X_{N-2})$ has independent coordinates with
probabilities $86/256,85/256,85/256$ on $0,1,-1$.
There are absolute constants $C,c>0$ such that, for every $r\ge0$,
\begin{align}
 \sup_{w\in\mathbb C}
 \Pr[|P_{\mathbf X}(z)-w|\le r]
 &\le \frac{C(r+1)^2}{N},
 &&z\in\{\zeta^j,-\zeta^j:1\le j<N\},
 \label{uf:ntru-iid-complex-small-ball}\\
 \sup_{w\in\mathbb R}
 \Pr[|P_{\mathbf X}(z)-w|\le r]
 &\le \frac{C(r+1)}{\sqrt N},
 &&z\in\{1,-1\}.
 \label{uf:ntru-iid-real-small-ball}
\end{align}
Moreover, for every $|z|=1$ and $t\ge0$,
\[
 \Pr[|P_{\mathbf X}(z)|>t]\le4e^{-ct^2/N}.
\]
\end{lemma}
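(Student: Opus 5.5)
The plan is to handle the two parts separately: Fourier small-ball (Esseen) inequalities for \eqref{uf:ntru-iid-complex-small-ball}--\eqref{uf:ntru-iid-real-small-ball}, and a sub-Gaussian moment bound for the tail.

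\emph{Tail bound.} Each $X_k$ is centered, bounded by one and symmetric, so $\mathbb E[e^{\lambda X_k}]\le e^{\lambda^2/2}$. For $|z|=1$, the quantities $\operatorname{Re}P_{\mathbf X}(z)=\sum_kX_k\operatorname{Re}z^k$ and the corresponding imaginary part are sums of independent terms with variance proxy at most $N-1$. Hoeffding's inequality then gives $\Pr[|\operatorname{Re}P|>t/\sqrt2]\le2e^{-t^2/(4N)}$, and similarly for $\operatorname{Im}P$. A union bound gives the claim with $c=1/4$.

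\emph{Small ball via Esseen.} For the real case $z=\pm1$, $P_{\mathbf X}(z)$ is a sum of $N-1$ independent copies of $\pm X_k$. Esseen's inequality bounds $\sup_w\Pr[|P-w|\le r]$ by
\[
C(r+1)\int_{|u|\le1}\prod_k|\psi(u)|\,du,\qquad \psi(u)=a+2b\cos(2\pi u).
\]
Here $|\psi(u)|\le e^{-c'u^2}$ for $|u|\le1/2$, as in the proof of Lemma~\ref{uf:ntru-modular-anticoncentration}. The integral is therefore $O(1/\sqrt N)$. Since $\psi$ has period one, the range $|u|\le1$ is covered by $u\in[-1/2,1/2]$ and its unit translates.

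For $z=\pm\zeta^j$ with $1\le j<N$, identify $\mathbb C$ with $\mathbb R^2$ and use the two-dimensional Esseen inequality. This bounds the probability by
\[
C(r+1)^2\int_{\|\mathbf u\|\le1}\prod_k\bigl|\psi(\langle\mathbf u,\mathbf w_k\rangle)\bigr|\,d\mathbf u,
\]
with $\mathbf w_k=(\operatorname{Re}z^k,\operatorname{Im}z^k)$ unit vectors. The bound $|\psi(x)|\le e^{-c'\|x\|_{\mathbb R/\mathbb Z}^2}$ holds for all real $x$. For $\|\mathbf u\|\le1$ we have $|\langle\mathbf u,\mathbf w_k\rangle|\le1$, so we use $|\psi(x)|\le e^{-c''x^2}$ for $|x|\le1/2$. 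For $1/2<|x|\le1$ we only need $|\psi(x)|\le1$, because those $k$ will be handled by counting. The integrand is then at most $\exp(-c''\sum_k\langle\mathbf u,\mathbf w_k\rangle^2\mathbf 1_{|\cdot|\le1/2})$.

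\emph{Equidistribution step.} Since $N$ is prime and $j\ne0$, the angles $2\pi jk/N$ for $k=0,\dots,N-2$ (with an extra shift by $\pi k$ in the $-\zeta^j$ case) run over all but one of the $N$-th roots of unity, or their negatives. This gives the exact identity
\[
\sum_{k=0}^{N-2}\langle\mathbf u,\mathbf w_k\rangle^2=\tfrac{N}{2}\|\mathbf u\|^2-O(\|\mathbf u\|^2).
\]
The identity uses $\sum_{\text{all }N}z^{2k}=0$, valid because $2j\not\equiv0\pmod N$ for odd $N$. We then need a positive proportion of $k$ with $|\langle\mathbf u,\mathbf w_k\rangle|\le1/2$. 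For $\|\mathbf u\|\le1$, the number of angles $\theta$ with $|\cos(\theta-\phi)|\|\mathbf u\|>1/2$ is at most about $2N/3$. Among the remaining indices, the sum of squared projections is still $\gtrsim N\|\mathbf u\|^2$: restrict to the roughly $N/3$ angles where $|\cos|\in[1/4,1/2]/\|\mathbf u\|$, or, for small $\|\mathbf u\|$, use all indices. The integrand is then at most $e^{-c_3N\|\mathbf u\|^2}$, and the integral is $O(1/N)$.

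The main obstacle is making this uniform proportion argument clean for $\|\mathbf u\|$ near one. If it fails, I will shrink the Esseen window to $\|\mathbf u\|\le1/2$, at the cost of a constant factor in $(r+1)^2$. On that window all projections satisfy $|x|\le1/2$, so the full quadratic identity applies directly. This also settles the real case the same way.
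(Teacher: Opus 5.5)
Your proposal is correct and follows the paper's proof. Both use a Fourier smoothing (Esseen) bound, the frame identity $\sum_{k\le N-2}\mathbf w_k\mathbf w_k^T=\tfrac N2\mathbf I_2-\mathbf w_{N-1}\mathbf w_{N-1}^T$ coming from $z^2\neq1$, a quadratic decay bound on the coefficient characteristic function, and Hoeffding applied to the real and imaginary parts for the tail. Your small-window fallback is exactly the paper's argument: its window $[-c_0/R,c_0/R]^2$ with $R=r+1$ lies inside the unit ball of angular frequencies, where $|\phi_{\rm coef}(u)|\le e^{-cu^2}$ holds for every index. Make that fallback your primary route. Your large-window counting step assumes $N$ prime, but the lemma assumes only $N$ odd, and that step breaks for composite odd $N$: if $\zeta^j$ is a primitive cube root of unity and $\mathbf u\perp\mathbf w_0$ with $\|\mathbf u\|=1$, the only projections of size at most $1/2$ are zero.
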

\begin{proof}
We first record the elementary Fourier smoothing inequality used here.
For an $s$-dimensional random vector $\mathbf Z$, $s\in\{1,2\}$, write
$\phi_{\mathbf Z}(\mathbf u)=\mathbb E[e^{i\langle\mathbf u,\mathbf Z\rangle}]$
and $R=r+1$.
For the fixed choice $c_0=1/16$,
\begin{equation}
 \Pr[\|\mathbf Z-\mathbf a\|_\infty\le r]
 \le C_sR^s\int_{[-c_0/R,c_0/R]^s}
             |\phi_{\mathbf Z}(\mathbf u)|\,d\mathbf u.
 \label{uf:ntru-fourier-smoothing}
\end{equation}
Indeed, the nonnegative function
$\prod_{\ell=1}^s\operatorname{sinc}^2(c_0(x_\ell-a_\ell)/(2R))$,
where $\operatorname{sinc}(v)=\sin(v)/v$ and
$\operatorname{sinc}(0)=1$, is bounded below by a positive constant
on the indicated cube.  Its Fourier transform is supported on
$[-c_0/R,c_0/R]^s$ and has absolute value at most $C_sR^s$.
Fourier inversion and expectation prove
\eqref{uf:ntru-fourier-smoothing}.

For $z$ in the first line of the lemma, put
$\mathbf v_k=(\operatorname{Re}z^k,\operatorname{Im}z^k)^T$.
Since $z^2$ is a nontrivial $N$th root,
\[
 \sum_{k=0}^{N-1}\mathbf v_k\mathbf v_k^T=\frac N2\mathbf I_2,
 \qquad
 \sum_{k=0}^{N-2}\mathbf v_k\mathbf v_k^T
 =\frac N2\mathbf I_2-\mathbf v_{N-1}\mathbf v_{N-1}^T
 \succeq\frac N6\mathbf I_2.
\]
This holds for both signs of $z$; it does not require $z^N=1$.
The common scalar coefficient characteristic function is
$\phi_{\rm coef}(u)=(86+170\cos u)/256$, and
$|\phi_{\rm coef}(u)|\le e^{-c u^2}$ for $|u|\le1$.
Consequently the characteristic function of
$\sum_{k=0}^{N-2}X_k\mathbf v_k$ has absolute value at most
$e^{-c'N\|\mathbf u\|_2^2}$ whenever $\|\mathbf u\|_2\le1$.
The integration cube in
\eqref{uf:ntru-fourier-smoothing} lies in this ball. The Gaussian
integral is $O(N^{-1})$, proving
\eqref{uf:ntru-iid-complex-small-ball}.
At $z=\pm1$ the same argument is one-dimensional with
$\sum_{k=0}^{N-2}(z^k)^2=N-1$, and the integral is $O(N^{-1/2})$.
Finally, the real and imaginary parts are sums of independent centred
bounded random variables; applying Hoeffding's inequality to both
parts proves the last assertion.
\end{proof}

\begin{lemma}
\label{uf:ntru-fixed-type-small-ball}
Fix $\eta>0$.  Let $N$ be an odd prime and let $m$ be an integer with
$\eta N\le m\le(N-1)/2$.  Let $Q$ be uniform among polynomials of
degree at most $N-2$ having $m$ coefficients of each sign and all other
coefficients zero.  There are constants $C,c>0$, depending only on
$\eta$, such that for every $1\le j<N$ and $r\ge0$,
\begin{equation}
 \sup_{w\in\mathbb C}\Pr[|Q(\zeta^j)-w|\le r]
 \le\frac{C(r+1)}{\sqrt N}+e^{-cN}.
 \label{uf:ntru-fixed-type-small-ball-bound}
\end{equation}
Also, $Q(1)=0$ and, uniformly for $|z|=1$ and $t\ge0$,
\[
 \Pr[|Q(z)|>t]\le4e^{-ct^2/N}.
\]
\end{lemma}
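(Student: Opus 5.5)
The plan is to follow the template of Lemma~\ref{uf:ntru-independent-small-ball}: use the Fourier smoothing inequality~\eqref{uf:ntru-fourier-smoothing} in dimension two, and control the characteristic function of $Q(\zeta^j)$ by exposing independent randomness. The difference is that the fixed-type law has no independent coordinates. I will manufacture independence with a standard pairing (swap) decomposition, which should cost one dimension and explain the weaker rate $(r+1)/\sqrt N$ in place of $(r+1)^2/N$. The tail statement and $Q(1)=0$ are separate and easy. $Q(1)=m-m=0$ by definition. The tail bound follows from concentration for sampling without replacement: the real and imaginary parts of $Q(z)$ are linear statistics of a uniform permutation of a fixed multiset of values in $\{-1,0,1\}$. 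Hoeffding's comparison with sampling with replacement, or the Maurey/McDiarmid bound for uniform permutations with bounded differences $O(1)$, gives $4e^{-ct^2/N}$.

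For the small-ball bound, first realize $Q$ as follows. Choose a uniform random partition of a fixed subset of positions into $m'=\lfloor \eta N/2\rfloor$ disjoint pairs $\{k_\ell,k'_\ell\}$, one position of each pair carrying $+1$ and the other $-1$. Outside these pairs, fill the remaining positions with the remaining fixed multiset in uniform random order. Independently for each pair, flip a fair coin $\varepsilon_\ell\in\{\pm1\}$ deciding which position gets $+1$. By exchangeability this generates exactly the uniform fixed-type law. Conditional on the pairing and the rest,
\[
 Q(\zeta^j)=w_0+\sum_{\ell}\varepsilon_\ell(\zeta^{jk_\ell}-\zeta^{jk'_\ell}),
\]
a Rademacher sum with complex weights $d_\ell$. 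Conditioning on everything except the coins, it suffices to prove a one-dimensional anti-concentration. Project onto a suitable real direction $\theta$, so that $\Pr[|Q-w|\le r]\le\Pr[|\langle d,\varepsilon\rangle_\theta - w'|\le r]$. Then apply the $s=1$ case of~\eqref{uf:ntru-fourier-smoothing} with characteristic function $\prod_\ell|\cos(u\operatorname{Re}(e^{-i\theta}d_\ell))|$. If $\sum_\ell(\operatorname{Re} e^{-i\theta}d_\ell)^2\ge c_1N$ and each $|d_\ell|\le2$, this product is at most $e^{-c'Nu^2}$ for $|u|\le c_0$, which yields the $C(r+1)/\sqrt N$ term.

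The main obstacle is showing that, except with probability $e^{-cN}$ over the random pairing, the weights $d_\ell=\zeta^{jk_\ell}-\zeta^{jk'_\ell}$ have $\sum_\ell|d_\ell|^2\ge c_1N$ and admit a direction $\theta$ capturing a constant fraction of that energy. The second part is immediate, since averaging over $\theta$ gives half the total energy. For the first, note that $|d_\ell|^2=2-2\cos(2\pi j(k_\ell-k'_\ell)/N)$. Because $N$ is prime and $j\ne0$, the map $k\mapsto jk\bmod N$ is a bijection. For a uniformly random pair of distinct positions among $N-1$, the probability that $|d_\ell|^2<1$ is therefore at most about $1/3+O(1/N)$. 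The count of "good" pairs is a function of a uniform random matching with bounded differences, so McDiarmid/martingale concentration for random permutations gives at least $c\,m'$ good pairs except with probability $e^{-cN}$. Putting the failure event into the additive $e^{-cN}$ term and averaging over the conditioning yields~\eqref{uf:ntru-fixed-type-small-ball-bound}, with constants depending only on $\eta$ through $m'\ge\eta N/2$.
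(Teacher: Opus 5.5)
Your proposal is correct and is essentially the paper's proof. It uses the same three ingredients: the random-pairing construction with independent fair signs, a bounded-difference martingale on the uniform ordering showing that a constant fraction of pairs are well separated except with probability $e^{-cN}$, and one-dimensional Fourier smoothing after projecting the disc onto a line.

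The differences are minor. You choose the projection direction $\theta$ by averaging and call a pair good when $|d_\ell|\ge1$. The paper instead projects onto the real axis and counts pairs joining $\{\operatorname{Re}\zeta^{jk}\ge1/2\}$ to $\{\operatorname{Re}\zeta^{jk}\le-1/2\}$. You also obtain the tail from permutation concentration rather than from Hoeffding over the signs conditional on the pairing.

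One wording point needs fixing: the $2m'$ paired positions must be uniformly random (for example, the first $2m'$ entries of a uniform ordering), not a fixed subset. Both your exchangeability claim and your pair-marginal computation depend on this.
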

\begin{proof}
Put $d=N-1$.  Draw a uniform ordering $\pi(1),\ldots,\pi(d)$ of
$\{0,\ldots,N-2\}$, pair its first $2m$ entries, and assign one plus
and one minus to each pair using independent fair signs $\sigma_a$.
The resulting law is invariant under all permutations of the $d$
positions and has the required counts, so it is exactly uniform.
Conditional on $\pi$,
\[
 Q(z)=\sum_{a=1}^m\sigma_a
       (z^{\pi(2a-1)}-z^{\pi(2a)}).
\]
Fix $z=\zeta^j$.  Let $A$ and $B$ consist of positions whose powers
have real part at least $1/2$ and at most $-1/2$, respectively.
Because $N$ is prime, multiplication by $j$ permutes all $N$ residues.
After deleting one position, each set therefore has size at least
$N/3-2$, and hence at least $N/4$ for sufficiently large $N$.
If $L$ counts pairs joining $A$ and $B$, then
\[
 \mathbb E[L]=\frac{2m|A||B|}{d(d-1)}\ge c_1N,
\]
where $c_1>0$ depends only on $\eta$.
Transposing two entries of $\pi$ changes at most two pair indicators.
In the Doob martingale exposing the permutation one entry at a time,
two possible next entries can be coupled by transposing their labels
in the remaining completion.  Its increments thus have absolute value
at most $2$.  Azuma's inequality gives
\[
 \Pr[L<\mathbb E[L]/2]
 \le\exp\!\left(-\frac{(\mathbb E[L])^2}{32d}\right)
 \le e^{-c_2N}.
\]
On the complementary event, at least $c_1N/2$ of the real coefficients
$b_a=\operatorname{Re}(z^{\pi(2a-1)}-z^{\pi(2a)})$ have
$|b_a|\ge1$, while all satisfy $|b_a|\le2$.
For $|u|\le1/4$ the conditional characteristic function therefore obeys
\[
 \left|\mathbb E[e^{iu\operatorname{Re}Q(z)}\mid\pi]\right|
 =\prod_{a=1}^m|\cos(ub_a)|\le e^{-c_3Nu^2}.
\]
Apply the one-dimensional version of
\eqref{uf:ntru-fourier-smoothing} and use that a complex ball projects
into a real interval of the same radius.  Averaging over $\pi$ proves
\eqref{uf:ntru-fixed-type-small-ball-bound}; increasing $C$ covers the
finitely many smaller $N$.  For the upper tail, no good-pair event is
needed: conditional on every $\pi$, the summands are independent
centred signs with complex coefficients of modulus at most $2$.
Hoeffding's inequality for each component and $m\le N/2$ give the
stated uniform estimate.  The identity $Q(1)=0$ follows from balance.
\end{proof}

\begin{lemma}
\label{uf:ntru-ternary-plus-transfer}
For the input $\mathbf X$ of Lemma~\ref{uf:ntru-independent-small-ball}, put
$U=\operatorname{TernaryPlus}(\mathbf X)$. The same small-ball estimates hold
for $U$ with at most twice the constant, and
\[
 \Pr[|U(z)|>t]\le8e^{-ct^2/N}\qquad(|z|=1).
\]
Outputs obtained from independent input blocks remain independent.
\end{lemma}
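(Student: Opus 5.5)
By \eqref{uf:ntru-plus-branches}, $U$ is always one of two explicit transforms of $P_{\mathbf X}$: either $U(z)=P_{\mathbf X}(z)$, or $U(z)=-P_{\mathbf X}(-z)$. A single union bound over these two branches should therefore transfer every estimate of Lemma~\ref{uf:ntru-independent-small-ball} to $U$, at the cost of a factor of two. No conditioning on which branch occurs is needed.

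For the small-ball estimates, fix an admissible point $z$ and a centre $w$. The event $|U(z)-w|\le r$ is contained in
\[
 \{|P_{\mathbf X}(z)-w|\le r\}\cup\{|P_{\mathbf X}(-z)-(-w)|\le r\}.
\]
The set $\{\zeta^j,-\zeta^j:1\le j<N\}$ is closed under $z\mapsto-z$, and so is $\{1,-1\}$. Consequently $-z$ belongs to the same admissible family as $z$, and the supremum over centres already covers $-w$. Applying \eqref{uf:ntru-iid-complex-small-ball} or \eqref{uf:ntru-iid-real-small-ball} to each of the two events gives the same bound with the constant doubled. The negation symmetry of the complex family is the point to check; that is exactly why the earlier lemma was stated for both $\zeta^j$ and $-\zeta^j$.

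The tail bound follows the same way. Since $|U(z)|$ equals either $|P_{\mathbf X}(z)|$ or $|P_{\mathbf X}(-z)|$, and $|-z|=1$ whenever $|z|=1$, we get
\[
 \Pr[|U(z)|>t]\le\Pr[|P_{\mathbf X}(z)|>t]+\Pr[|P_{\mathbf X}(-z)|>t]\le8e^{-ct^2/N}.
\]

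Independence holds because TernaryPlus is a deterministic function of its own input block. Functions of independent blocks are independent, so outputs built from independent blocks are independent.

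I do not expect a real obstacle here. The only place needing care is closure of the frequency sets under negation, which holds as explained above.
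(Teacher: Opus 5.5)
Your proposal is correct and follows the paper's own proof. Like the paper, you use the two-branch identity \eqref{uf:ntru-plus-branches}, take a union bound over the branches without conditioning on which one occurs (with the frequency families closed under $z\mapsto-z$) for both the small-ball and tail estimates, and obtain independence because each output is a deterministic function of its own input block.
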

\begin{proof}
The specification's sign adjustment gives, for every input,
$U(z)=P_{\mathbf X}(z)$ or $U(z)=-P_{\mathbf X}(-z)$.
Thus the event $|U(z)-w|\le r$ is contained in the union of
$|P_{\mathbf X}(z)-w|\le r$ and $|P_{\mathbf X}(-z)+w|\le r$.
Lemma~\ref{uf:ntru-independent-small-ball} covers both signs of every
nontrivial frequency, and also the two real frequencies $1,-1$.
The same union bound proves the upper tail.  This argument does not
condition on which sign adjustment was selected.  Each output is a
deterministic function of its own input block, which proves the final
assertion.
\end{proof}

\begin{proposition}[Spectral bound]
\label{uf:ntru-sampler-spectrum}
Suppose $q=\Theta(N)$, $q\ge2N+1$, and fix
$2/3<\alpha<1$ and a positive integer $K_{\rm sp}$.
For the raw HRSS and ideal HPS laws of
Definition~\ref{uf:ntru-proposal-laws},
\begin{equation}
 \Pr\left[
 G_{\min}<q/N^\alpha\ \text{or}\
 G_{\max}>C_{\rm sp}q\ln N
 \right]
 \le r_{\rm sp}(N):=
 C\left(N^{1-3\alpha/2}+N^{-\alpha/2}+N^{-K_{\rm sp}}\right)
 \label{uf:ntru-spectral-tail}
\end{equation}
where $C_{\rm sp}$ is a fixed public rational constant and $C$ is
effective, depending only on $K_{\rm sp}$ and the fixed modulus
window. One may take $C_{\rm sp}=32(K_{\rm sp}+3)$.
\end{proposition}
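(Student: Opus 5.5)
The plan is to treat the two spectral extrema separately, frequency by frequency, and combine the per-frequency estimates by a union bound over the $N$ roots $\zeta^j$, $0\le j<N$. The inputs are the raw-sampler estimates already proved: Lemma~\ref{uf:ntru-independent-small-ball} and Lemma~\ref{uf:ntru-ternary-plus-transfer} for $P_{\mathbf X}$ and $U_{\mathbf X}$, and Lemma~\ref{uf:ntru-fixed-type-small-ball} for the HPS fixed type $g$. Throughout, $f$ and $g$ are independent, and $q=\Theta(N)$ with $q\ge2N+1$.

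\emph{Upper extremum.} For every $|z|=1$ the tail bounds give $\Pr[|F(z)|>t]\le8e^{-ct^2/N}$ for each raw block $F\in\{P_{\mathbf X},U_{\mathbf X},Q\}$. For HRSS I use $|g(z)|=|z-1|\,|U_{\mathbf Y}(z)|\le2|U_{\mathbf Y}(z)|$. I choose $t^2=c^{-1}(K_{\rm sp}+2)N\ln N$, so each block exceeds $t$ at one frequency with probability $O(N^{-K_{\rm sp}-2})$. A union bound over the $N$ frequencies and the two blocks then gives $G_{\max}\le5t^2$ except with probability $O(N^{-K_{\rm sp}})$. Since $q\ge2N$, we have $5t^2\le C_{\rm sp}q\ln N$ once $C_{\rm sp}$ is a suitable constant multiple of $K_{\rm sp}+3$. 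I expect the explicit value $32(K_{\rm sp}+3)$ to follow by tracking $c$ from Hoeffding's inequality; if it does not, I would absorb the discrepancy into the effective constant.

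\emph{Lower extremum, trivial frequency.} Put $r=\sqrt{q/N^\alpha}=\Theta(N^{(1-\alpha)/2})$. The event $G_{\min}<q/N^\alpha$ at $\zeta^j$ requires both $|f(\zeta^j)|<r$ and $|g(\zeta^j)|<r$. At $j=0$ we have $g(1)=0$ in both profiles: balanced weights for HPS, and the factor $x-1$ for HRSS. The only randomness is then the real value $f(1)$. The real small-ball bound \eqref{uf:ntru-iid-real-small-ball} gives probability $O((r+1)/\sqrt N)=O(N^{-\alpha/2})$, doubled for TernaryPlus. This produces the $N^{-\alpha/2}$ term of $r_{\rm sp}$.

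\emph{Lower extremum, nontrivial frequencies.} For $1\le j<N$ I use independence to multiply the two small-ball probabilities. The complex bound \eqref{uf:ntru-iid-complex-small-ball} gives $\Pr[|f(\zeta^j)|<r]\le C(r+1)^2/N=O(N^{-\alpha})$.

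For HPS, Lemma~\ref{uf:ntru-fixed-type-small-ball} gives $\Pr[|g(\zeta^j)|<r]\le C(r+1)/\sqrt N+e^{-cN}=O(N^{-\alpha/2})$. The product is $O(N^{-3\alpha/2})$, and summing over $j$ yields $O(N^{1-3\alpha/2})$.

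For HRSS, $|g(\zeta^j)|<r$ forces $|U_{\mathbf Y}(\zeta^j)|<r/|\zeta^j-1|$. Write $\|j/N\|$ for the distance of $j/N$ to the nearest integer; then $|\zeta^j-1|\asymp\|j/N\|$. The complex bound therefore gives
\[
 \Pr[|g(\zeta^j)|<r]\le\min\{1,\;C(rN/\min(j,N-j)+1)^2/N\}.
\]
Multiplying by the $O(N^{-\alpha})$ factor for $f$ and summing over $j$, the frequencies with $\min(j,N-j)\lesssim rN^{1/2}=N^{1-\alpha/2}$ contribute $O(N^{-\alpha}\cdot N^{1-\alpha/2})$. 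The remaining frequencies contribute $O\bigl(N^{-\alpha}\cdot r^2N\sum_{k\ge N^{1-\alpha/2}}k^{-2}\bigr)$, which is of the same order. Both are $O(N^{1-3\alpha/2})$, and this is $o(1)$ exactly because $\alpha>2/3$.

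\emph{Main obstacle.} I expect the main difficulty to be this HRSS low-frequency bookkeeping. The factor $x-1$ degrades the small-ball estimate for $g$ near $j=0$, and the proof must check that the loss is compensated by the independent $f$ factor with the right exponent. The remaining steps are direct unions of the quoted lemmas, with the effective constant $C$ collecting the lemma constants and the modulus-window constants.
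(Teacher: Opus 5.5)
Your proposal is correct and follows the paper's proof essentially step for step: products of the independent small-ball bounds at the nontrivial frequencies (with the HRSS sum split near $\min(j,N-j)\approx\sqrt{T_0N}$), the real small-ball bound at $j=0$ where $g(1)=0$, and Hoeffding tails with a union bound for $G_{\max}$; the paper additionally checks $N/16\le m\le N/3$, which you should state so that the fixed-type lemma applies. Your fallback of absorbing a worse constant into $C$ would not work, because $C_{\rm sp}$ appears inside the event rather than in the prefactor, but the explicit Hoeffding constants (about $1/4$ for the ternary blocks and $1/8$ for the fixed type) already give $C_{\rm sp}=32(K_{\rm sp}+3)$.
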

\begin{proof}
Put $T_0=q/N^\alpha$.  Since $q=\Theta(N)$ and $\alpha<1$, we may
assume $T_0\ge4$ throughout the asymptotic estimates.
For HPS, $m=q/16-1\ge N/16$ for sufficiently large $N$, while
$2m\le2N/3$ ensures the required support size.
Lemmas~\ref{uf:ntru-independent-small-ball}
and~\ref{uf:ntru-fixed-type-small-ball}, applied to the independent
polynomials $f,g$ with radius $\sqrt{T_0}$, give, for $1\le j<N$,
\[
 \Pr[|f(\zeta^j)|^2+|g(\zeta^j)|^2<T_0]
 \le C\frac{T_0}{N}
       \left(\sqrt{\frac{T_0}{N}}+e^{-cN}\right).
\]
Summing over the nontrivial frequencies gives
$O(T_0^{3/2}/\sqrt N+T_0e^{-cN})
 =O(N^{1-3\alpha/2})$.
At the trivial frequency, $g(1)=0$, and
\eqref{uf:ntru-iid-real-small-ball} bounds the lower-tail probability
by $O(\sqrt{T_0/N})=O(N^{-\alpha/2})$.

For HRSS, write $f=\operatorname{TernaryPlus}(\mathbf X)$ and
$g=(x-1)\operatorname{TernaryPlus}(\mathbf Y)$, with independent
coefficient vectors $\mathbf X,\mathbf Y$.
If $a_j=|\zeta^j-1|$, Lemma~\ref{uf:ntru-ternary-plus-transfer}
and $\sqrt{T_0}/a_j\ge1$ give
\[
 \Pr[|f(\zeta^j)|^2+|g(\zeta^j)|^2<T_0]
 \le C\frac{T_0}{N}
       \min\!\left\{1,\frac{T_0}{Na_j^2}\right\}.
\]
Writing $d_j=\min(j,N-j)$, we have $a_j\ge4d_j/N$ and hence
\[
 \sum_{j=1}^{N-1}\min\!\left\{1,\frac{T_0}{Na_j^2}\right\}
 \le 2\sum_{d=1}^{(N-1)/2}
       \min\!\left\{1,\frac{T_0N}{16d^2}\right\}
 =O(\sqrt{T_0N}).
\]
The last bound follows by splitting the sum at $\sqrt{T_0N}$ and
using $\sum_{d>L}d^{-2}\le1/L$.  The HRSS lower-tail sum is therefore
also $O(T_0^{3/2}/\sqrt N)=O(N^{1-3\alpha/2})$.
Here $g(1)=0$ again, and the real-frequency part of
Lemma~\ref{uf:ntru-ternary-plus-transfer} gives $O(N^{-\alpha/2})$.

For both profiles the upper-tail bounds in the lemmas, together with
$|\zeta^j-1|\le2$ for HRSS, imply
\[
 \Pr[G_{\max}>C_{\rm sp}q\ln N]
 \le16N\exp(-C_{\rm sp}q\ln N/(32N))\le N^{-K_{\rm sp}}
\]
for the stated $C_{\rm sp}$ and sufficiently large $N$.
Indeed the scalar Hoeffding estimates above allow the common upper
tail $8\exp(-t^2/(8N))$; for HRSS the independent-coefficient
TernaryPlus tail is $8\exp(-t^2/(4N))$, and the factor
$|\zeta^j-1|\le2$ gives the denominator $32N$ displayed here.
This proves
\eqref{uf:ntru-spectral-tail} without assuming independence between
different frequencies. All small-ball constants are obtained from
the fixed smoothing kernel, Gaussian integrals, and the displayed
permutation concentration bound, and thus have effective bounds.
\end{proof}

\subsubsection{Geometry of accepted keys}
\label{uf:ntru-accepted-bounds}

We now combine the spectral and saturation bounds for the raw
polynomials. Conditioning on the key generation acceptance event
gives the joint geometric event used in
Section~\ref{uq:ntru-distributional}. The raw failure bound is
$r_{\rm sp}(N)+r_{\rm sat}(N)$, and the acceptance probability
accounts for the denominator in the recovery guarantee.

For the HPS sorting sampler, the distribution comparison adds
$\varepsilon_{\rm FT}(N)$ to the same raw event before conditioning.

\begin{proposition}
\label{uf:ntru-accepted-geometry}
Fix rational $2/3<\alpha<\beta<1$ and use the finite schedule
$D_N\sim\ln\ln N$. Define on the raw polynomials
\begin{equation}
 \mathcal E_N^{\rm raw}=\left\{
 G_{min}\ge q/N^\alpha,\quad G_{max}\le C_{\rm sp}q\ln N,
 \quad\ln\Xi_N\le N/D_N\right\}.
 \label{uf:ntru-raw-good-event}
\end{equation}
For $\mathsf P\in\{\mathsf{HPS},\mathsf{HRSS}\}$ and its ideal
accepted law,
\begin{equation}
 \Pr_{\mu_N^{\mathsf P}}[(\mathcal E_N^{\rm raw})^c]
 \le\frac{r_{\rm sp}(N)+r_{\rm sat}(N)}{a_N^{\mathsf P}}=o(1).
 \label{uf:ntru-accepted-good-probability}
\end{equation}
For the accepted HPS sorting law, the same bound holds with
$r_{\rm sp}+r_{\rm sat}+\varepsilon_{\rm FT}$ in the numerator.
On $\mathcal E_N^{\rm raw}$, the actual saturation index satisfies
$\ln\Delta_{\rm sat}\le N/D_N$.
\end{proposition}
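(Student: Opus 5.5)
The proof is a union bound on the raw law, a transfer to the accepted law by conditioning, and a deterministic comparison $\Delta_{\rm sat}\le\Xi_N$ on the good event.

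\emph{Raw failure bound.} The complement of $\mathcal E_N^{\rm raw}$ is the union of the spectral failure event and the saturation event $\{\ln\Xi_N>N/D_N\}$. Proposition~\ref{uf:ntru-sampler-spectrum} bounds the spectral failure by $r_{\rm sp}(N)$ for the raw HRSS and ideal HPS laws, using $\alpha$ and $K_{\rm sp}$ and its constant $C_{\rm sp}$. Proposition~\ref{uf:ntru-log-saturation} bounds the saturation event by $r_{\rm sat}(N)$ for every raw law of Definition~\ref{uf:ntru-proposal-laws}. Hence $\Pr_\nu[(\mathcal E_N^{\rm raw})^c]\le r_{\rm sp}+r_{\rm sat}$. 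The event depends only on the raw pair $(f,g)$: the extrema $G_{\min},G_{\max}$ are functions of $f,g$, and $\Xi_N$ was defined in~\eqref{uf:ntru-saturation-proxy} before acceptance. Therefore~\eqref{uf:ntru-conditioning} of Lemma~\ref{uf:ntru-algebraic-acceptance} applies and gives the displayed bound~\eqref{uf:ntru-accepted-good-probability}, with $a_N^{\mathsf P}$ taken from~\eqref{uf:ntru-acceptance-probabilities}. Since $a_N^{\mathsf P}\to1$ and both numerators are $o(1)$, the right side is $o(1)$.

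\emph{Sorting law.} The spectral proposition is stated only for the ideal fixed-type law. By Lemma~\ref{uf:ntru-fixed-type-tv}, the raw sorting law is within total variation distance $\varepsilon_{\rm FT}$ of the ideal raw law. The same raw event therefore fails under the sorting law with probability at most $r_{\rm sp}+r_{\rm sat}+\varepsilon_{\rm FT}$. Lemma~\ref{uf:ntru-algebraic-acceptance} says the acceptance probability is unchanged for every label length, so conditioning again divides by $1-a^{N-1}$.

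\emph{Actual saturation index.} On an accepted input in $\mathcal E_N^{\rm raw}$, the bound $\ln\Xi_N\le N/D_N<\infty$ forces $f(1)\ne0$ and $\pi_N(g)\ne0$. For both profiles $g(1)=0$: for HRSS because $g=(x-1)g_0$, and for HPS because the fixed type is balanced. Also $k=(\widehat hf-g)/q\in R_N$ for an integer lift of $\widehat h$; this uses the relation $\widehat hf=g\pmod q$ in $R_N$, which the CRT construction in Lemma~\ref{uf:ntru-algebraic-acceptance} supplies. Lemma~\ref{uf:ntru-index-bridge} then gives $\Delta_{\rm sat}\le N|f(1)|\operatorname{Norm}(I)=\Xi_N$, and taking logarithms yields $\ln\Delta_{\rm sat}\le N/D_N$. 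Finiteness of the index follows as well, since $G_{\min}>0$ and Proposition~\ref{uf:ntru-cyclic-geometry} makes $R_N\subseteq T$ a finite-index inclusion.

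\emph{Main obstacle.} The delicate point is hypothesis matching for the sorting variant. The raw event includes the spectral condition, which was proved only for the ideal law, so the total variation transfer must be applied to the whole raw event before conditioning. The bound is then charged once, as in Corollary~\ref{uf:ntru-hps-tags-distributional}. The saturation bound already covers the sorting law directly, so no further transfer is needed for that part.
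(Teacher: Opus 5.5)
Your proposal is correct and follows the paper's proof step for step. You bound the raw failure by a union bound over Propositions~\ref{uf:ntru-sampler-spectrum} and~\ref{uf:ntru-log-saturation}, transfer to the accepted law via~\eqref{uf:ntru-conditioning}, charge $\varepsilon_{\rm FT}$ once on the complete raw event for the sorting law, and obtain $\Delta_{\rm sat}\le\Xi_N$ from Lemma~\ref{uf:ntru-index-bridge}. Your explicit check of that lemma's hypotheses ($g(1)=0$, $k\in R_N$ from the CRT representative, and $f(1)\ne0$, $\pi_N(g)\ne0$ from $\Xi_N<\infty$) spells out what the paper leaves implicit.
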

\begin{proof}
The raw failure probability is at most $r_{\rm sp}+r_{\rm sat}$ by
Propositions~\ref{uf:ntru-sampler-spectrum} and~\ref{uf:ntru-log-saturation}
and the union bound. Lemma~\ref{uf:ntru-algebraic-acceptance} transfers
this event to the accepted law. For sorting, first apply
Lemma~\ref{uf:ntru-fixed-type-tv} once to the complete raw event, then
condition using the same HPS acceptance probability. Finally,
Lemma~\ref{uf:ntru-index-bridge} bounds the actual index by $\Xi_N$.
\end{proof}

\section{Changing the Public Distribution}
\label{uk:section}

ML-KEM and ML-DSA generate their public matrices by expanding a seed.
In ML-KEM, key generation derives the public matrix from the seed
$\rho$~\cite[Section~5.1, Algorithm~13]{FIPS203}. ML-DSA constructs
its matrix $\mathbf A$ from $\rho$ through
\texttt{ExpandA}~\cite[Section~6.1, Algorithms~6 and~32]{FIPS204}.
Each scheme therefore specifies an induced distribution on its public
operator. Since the geometric bounds in this paper are stated for
specified operator distributions, we record conditions under which
a bound transfers to another public law.

\begin{proposition}
\label{ug:public-transfer}
\label{ua:public-transfer}
\label{un:geo:distribution}
Fix a public space $\Omega_n$, a schedule, and its geometric event
$\mathcal G_n$. If a reference law $\mu_n$ satisfies
$\mu_n(\mathcal G_n^c)\le\omega_n$, then any law $\nu_n$ obeys
\begin{align}
 \|\nu_n-\mu_n\|_{\rm TV}\le\varepsilon_n
 &\ \Longrightarrow\
 \nu_n(\mathcal G_n^c)\le\omega_n+\varepsilon_n,
 \label{ua:tv-transfer}\\
 \nu_n(x)\le K_n\mu_n(x)\quad(x\in\Omega_n)
 &\ \Longrightarrow\
 \nu_n(\mathcal G_n^c)\le K_n\omega_n.
 \label{ua:density-transfer}
\end{align}
If the new input law has energy failure at most $p_E$ and the
fixed-operator algorithm has internal failure at most $\xi_n$,
total recovery failure is at most
$p_E+\nu_n(\mathcal G_n^c)+\xi_n$, with unchanged resource bounds.
\end{proposition}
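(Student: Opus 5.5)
The plan is to prove the two transfer implications directly from the definitions, and then obtain the recovery bound by conditioning on the public operator and applying a union bound.

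For \eqref{ua:tv-transfer}, recall that total variation distance is the supremum over events of $|\nu_n(E)-\mu_n(E)|$. Take $E=\mathcal G_n^c$, which is a fixed subset of $\Omega_n$ determined by the schedule alone. This gives
\[
 \nu_n(\mathcal G_n^c)\le\mu_n(\mathcal G_n^c)+\|\nu_n-\mu_n\|_{\rm TV}\le\omega_n+\varepsilon_n .
\]
For \eqref{ua:density-transfer}, sum the pointwise inequality over $x\in\mathcal G_n^c$. If $\Omega_n$ is not countable, integrate it instead, since the hypothesis then says $\nu_n\le K_n\mu_n$ as measures. This yields $\nu_n(\mathcal G_n^c)\le K_n\mu_n(\mathcal G_n^c)\le K_n\omega_n$.

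For the recovery statement, the point to record is that the schedule, the event $\mathcal G_n$, and the algorithm are all fixed before any public draw, and none of them refers to $\mu_n$. Accordingly, the guarantee for a fixed operator (Theorem~\ref{ue:list-engine} together with Theorem~\ref{ur:shared-list}, or the analogous statement from the quotient framework) says the following. For every fixed $\mathbf M\in\mathcal G_n$, and every admissible input satisfying the energy certificate, the algorithm with fresh internal coins fails with probability at most $\xi_n$. The resource bounds are uniform over all public inputs.

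Now draw $\mathbf M\sim\nu_n$ and the input from its conditional law. Failure requires at least one of three events: the energy event fails (probability at most $p_E$), $\mathbf M\notin\mathcal G_n$ (probability $\nu_n(\mathcal G_n^c)$), or the internal randomness fails on a good operator with a certified input (probability at most $\xi_n$ after averaging the pointwise bound). A union bound gives the total $p_E+\nu_n(\mathcal G_n^c)+\xi_n$, exactly as in Corollary~\ref{ur:dependence}. The resource bounds are unchanged because they never depended on the public law.

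I expect no real obstacle. The only thing to check carefully is that the fixed-operator guarantee is truly uniform, meaning it holds for every $\mathbf M\in\mathcal G_n$ and does not rely on any property of $\mu_n$. This holds because the paper's schedules are chosen before the draw, and the internal failure bounds hold pointwise on the good event.
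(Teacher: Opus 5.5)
Your proposal is correct and follows the paper's own argument. You apply the total variation bound to the fixed event $\mathcal G_n^c$, sum the density inequality over $\mathcal G_n^c$, and then add the energy and internal failure probabilities by a union bound; your remark that the fixed-operator guarantee holds pointwise on $\mathcal G_n$, with internal coins independent of the input, is exactly what justifies the final averaging step.
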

\begin{proof}
Apply the event bound for total variation or sum the density inequality
over $\mathcal G_n^c$, then add the energy and internal failures.
\end{proof}

\paragraph{Seed-expanded operators.}
\begin{remark}
\label{ug:seed-support}
\label{ua:seed-support}
If $\mu_n$ is uniform and $\nu_n$ expands an $L_n$-bit seed, then
\begin{equation}\label{ua:seed-tv}
 \|\nu_n-\mu_n\|_{\rm TV}\ge\max\{0,1-2^{L_n}/|\Omega_n|\},
 \qquad |\Omega_n|\max_x\nu_n(x)\ge|\Omega_n|/2^{L_n}.
\end{equation}
Here $|\Omega_n|$ is $q^{n^2}$ for scalar matrices, $q^{kn}$ for
rank-$k$ square module matrices, and $q^n$ for field multipliers.
A short seed alone gives neither transfer bound. One may instead
prove the geometry directly for the induced law, or use
pseudorandomness covering the recovery algorithm's running time and
success experiment.
\end{remark}

\section{Discussion of Fine-Grained Reductions}
\label{ux:section}

A recovery algorithm also constrains reductions into its input regime.
The question is quantitative: how much dimension growth is needed when
a reduction produces a low-energy target? We compare the recovery
exponent with the number of variables of a source $3$-SAT instance.
The comparison requires a matching recovery task and a bound on the
exceptional probability across all oracle queries.

\subsection{Recovery Cost and the Reduction Model}
\label{ux:regime}

Let $n$ count the secret coefficients of a linear recovery instance.
For fixed module rank, this is the rank times the ring degree.
Use the normalized candidate radius $U_n$ from~\eqref{ui:depth} and put
\begin{equation}\label{ux:depth}
 q=n^{\kappa+o(1)},\qquad \kappa>1/2,\qquad
 Q_n=\ln(q/\sqrt n),\qquad
 D_n=\ln\frac{Q_n}{1+\ln U_n}.
\end{equation}
In the recovery window $D_n\to\infty$, Theorem~\ref{ue:main} and the
finite implementation in Appendix~\ref{ue:finite} give the cost scale
$2^{O(n/D_n)}$. The timeout construction there supplies a computable
worst-case time bound for every public input, while retaining constant
success probability on each good instance. This suffices for composition.

Fix a target law $\mu_n$ specifying the public operator, labels,
observation count, energy bounds, and a uniquely specified correct
recovery output. Let $\mathcal G_n$ be the set on which both the
energy and geometric conditions hold, with
\[
 \mu_n(\mathcal G_n^c)\le\delta_n.
\]
On every instance in $\mathcal G_n$, the time-bounded recovery procedure
returns the specified output with probability at least $2/3$.
The integer secret task of the framework for linear relations supplies
this uniquely specified output.

Consider a classical randomized reduction from $3$-SAT on $N$ variables.
We use formulas without repeated clauses, whose length is polynomial
in $N$; arbitrary formulas can first be normalized in polynomial input
time. Assume the following conditions, uniformly over source formulas:
\begin{enumerate}
\item Apart from its oracle calls, the reduction takes polynomial time
and makes at most $a(N)=\operatorname{poly}(N)$ queries. Each query has
dimension $n(N)\to\infty$, with $n(N)\le\operatorname{poly}(N)$, and a
polynomial-length encoding.
\item With the specified correct oracle, the reduction decides the
source instance with probability at least $5/6$.
\item Conditional on any positive-probability query history in this
correct-oracle execution, the law of the next query has total variation
distance at most $\zeta_N$ from $\mu_{n(N)}$.
\item The accumulated exceptional probability satisfies
\begin{equation}\label{ux:ensemble-error}
 a(N)\bigl(\delta_{n(N)}+\zeta_N\bigr)=o(1).
\end{equation}
\end{enumerate}
The distribution-transfer principle in Appendix~\ref{uk:section}
then bounds the bad-instance probability of each query by
$\delta_{n(N)}+\zeta_N$. It is important to specify the full target
task: a short-error law alone does not fix the secret distribution,
number of observations, or geometric guarantee.

\subsection{Composition and Randomized ETH}
\label{ux:eth}

\begin{proposition}\label{ux:composition}
Under the reduction conditions above, $3$-SAT has a bounded-error
randomized algorithm with running time
\begin{equation}\label{ux:composed-time}
 2^{O(n(N)/D_{n(N)})}\operatorname{poly}(N).
\end{equation}
\end{proposition}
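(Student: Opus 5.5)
The plan is to simulate the reduction, replacing each oracle call by the time-bounded recovery procedure of Theorem~\ref{ue:main}. The procedure uses the timeout from Appendix~\ref{ue:finite} and is amplified by independent repetition. The error is then controlled with a hybrid argument over the $a(N)$ queries.

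\emph{Oracle simulation.} Let $n=n(N)$. By the recovery model, on every instance in $\mathcal G_n$ the time-bounded procedure returns the specified output with probability at least $2/3$. Its running time is at most $2^{O(n/D_n)}$ on every input, including bad ones, because the timeout is worst-case. I repeat it $k=O(\log a(N))$ times with fresh coins and take the plurality answer. An unanswered or timed-out run is treated as an arbitrary output. By a Chernoff bound, on each good instance the amplified oracle errs with probability at most $1/(100a(N))$. Each query has polynomial-length encoding, so reading it and running the amplified procedure costs $2^{O(n/D_n)}\operatorname{poly}(N)$. There are $a(N)=\operatorname{poly}(N)$ calls and polynomial time outside them, which gives the bound~\eqref{ux:composed-time}.

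\emph{Correctness by hybrids.} I couple the simulated execution with the correct-oracle execution. They agree until the first query $j$ at which the simulated answer differs from the specified one. Conditioned on a common history up to query $j$, the next query has the correct-oracle conditional law, which by condition~3 is within $\zeta_N$ of $\mu_n$. By the transfer bound~\eqref{ua:tv-transfer} of Proposition~\ref{ug:public-transfer}, it lies outside $\mathcal G_n$ with probability at most $\delta_n+\zeta_N$. On $\mathcal G_n$ the amplified oracle errs with probability at most $1/(100a(N))$. The probability that the two executions diverge at query $j$ is therefore at most $\delta_n+\zeta_N+1/(100a(N))$. Summing over $j\le a(N)$ gives a total divergence probability of at most $a(N)(\delta_n+\zeta_N)+1/100=o(1)+1/100$, by~\eqref{ux:ensemble-error}. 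Since the correct-oracle reduction decides with probability at least $5/6$, the simulation decides with probability at least $5/6-1/100-o(1)>2/3$ for large $N$. This is bounded error.

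\emph{Main obstacle.} The delicate step is making the hybrid rigorous under adaptivity. Condition~3 bounds the next-query law only along histories of the correct-oracle execution. So I must make sure the coupling applies that bound only before the executions diverge. I would do this by defining the stopping index of the first divergence and bounding $\Pr[\text{stop}=j]$ through conditioning on the common prefix history, for which the conditional law is exactly the one in condition~3. A secondary point is that the recovery procedure's coins must be independent of the query. This holds because fresh randomness is used per call, so conditioning on the query gives the per-instance $2/3$ guarantee.
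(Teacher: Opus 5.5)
Your proposal is correct and follows essentially the same route as the paper: plurality amplification of the time-bounded recovery procedure, then a coupling of the simulation with the correct-oracle reference transcript, with a union bound of $\delta_{n(N)}+\zeta_N$ plus the amplified internal error over the at most $a(N)$ queries before the first divergence. The only difference is the amplification budget. The paper uses $c\ln((a(N)+2)(N+2))$ runs, so its per-query error $\eta_N$ satisfies $a(N)\eta_N=o(1)$, while your per-query error $1/(100a(N))$ leaves a constant $1/100$, which is equally sufficient (write the run count as $O(\log(a(N)+2))$ so that it stays positive when $a(N)$ is bounded).
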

\begin{proof}
Answer a query by the most frequent output of
$c\ln((a(N)+2)(N+2))$ independent runs of the time-bounded recovery
procedure, with deterministic tie breaking. For a sufficiently large
constant $c$, a Chernoff bound makes its error on every good instance
at most
\[
 \eta_N=\frac1{100(a(N)+1)(N+2)^2}.
\]
The specified output has per-run probability at least $2/3$, so a
strict majority of correct outputs implies that this rule is correct.

Use the correct-oracle execution as a reference transcript, and attach
fresh independent recovery randomness to each query on that transcript.
Its conditional query-law assumption bounds the chance that a query
is bad by $\delta_{n(N)}+\zeta_N$. On a good query the internal error
is at most $\eta_N$. A union bound over the reference transcript gives
\[
 a(N)\bigl(\delta_{n(N)}+\zeta_N+\eta_N\bigr)=o(1).
\]
Couple the actual simulation to this transcript with the same reduction
randomness. The transcripts agree until the first incorrect simulated
answer, so the same bound controls the probability of any divergence.
Adding the correct-oracle reduction's error $1/6$ leaves total error
below $1/3$ for all sufficiently large $N$. The number of queries and
repetitions contributes only a polynomial factor to the time bound.
\end{proof}

We use randomized ETH, denoted $\mathsf{rETH}$, for the assertion that
$3$-SAT on $N$ variables has no bounded-error randomized algorithm in
time $2^{o(N)}\operatorname{poly}(|\phi|)$, where $|\phi|$ is the formula
length. This is the randomized counterpart of the exponential-time
hypothesis~\cite{IP01,IPZ01}. Randomization is essential to the formulation
because the composed recovery algorithm is randomized.

\begin{corollary}\label{ux:reth-exclusion}
Under $\mathsf{rETH}$, no reduction satisfying the conditions above has
\[
 n(N)/D_{n(N)}=o(N).
\]
\end{corollary}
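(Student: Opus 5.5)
The plan is a proof by contradiction that composes Proposition~\ref{ux:composition} with the definition of $\mathsf{rETH}$. Suppose a reduction satisfies the conditions of Section~\ref{ux:regime} and has $n(N)/D_{n(N)}=o(N)$. Proposition~\ref{ux:composition} then yields a bounded-error randomized algorithm for $3$-SAT on $N$ variables whose running time is $2^{O(n(N)/D_{n(N)})}\operatorname{poly}(N)$.

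The only real step is to convert this bound into the form $2^{o(N)}\operatorname{poly}(|\phi|)$ that $\mathsf{rETH}$ forbids. We may assume a fixed constant $C$ with $\log_2(\text{time})\le C\,n(N)/D_{n(N)}+O(\log N)$. From $n(N)/D_{n(N)}=o(N)$ it follows that $C\,n(N)/D_{n(N)}=o(N)$, so the exponential factor is $2^{o(N)}$.

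The remaining factor $\operatorname{poly}(N)$ is bounded by $\operatorname{poly}(|\phi|)$. For formulas without repeated clauses that mention all $N$ variables, $|\phi|\ge N$. A general formula is normalized in polynomial input time, as the reduction model already allows. Variables that do not occur in any clause can be deleted, which only decreases $N$ and so preserves an $o(N)$ bound. This gives total time $2^{o(N)}\operatorname{poly}(|\phi|)$, contradicting $\mathsf{rETH}$.

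The main thing to check is that the $O(\cdot)$ constant hidden in Proposition~\ref{ux:composition} is uniform in $N$. It does not depend on the reduction beyond fixed parameters. This holds because the time bound comes from the fixed timeout construction of Appendix~\ref{ue:finite} applied with the computable bound $\bar T_n$, together with polynomially many repetitions.

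A minor point concerns $n(N)$, which may not be monotone. This causes no difficulty, because the contradiction only needs the asymptotic statement along the sequence of input sizes $N$. No further obstacle is expected, since the error bookkeeping was already handled in Proposition~\ref{ux:composition}.
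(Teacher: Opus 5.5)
Your proof is correct and follows the paper's argument: you argue by contradiction, combining Proposition~\ref{ux:composition} with formula normalization to obtain an algorithm that $\mathsf{rETH}$ excludes. Your variable-deletion step, used to bound $\operatorname{poly}(N)$ by $\operatorname{poly}(|\phi|)$, is harmless but unnecessary, since $\operatorname{poly}(N)=2^{O(\log N)}$ is already absorbed into the $2^{o(N)}$ factor.
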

\begin{proof}
Otherwise Proposition~\ref{ux:composition}, together with formula
normalization, would give the algorithm excluded by $\mathsf{rETH}$.
\end{proof}

Repeating recovery improves its internal success on a good instance.
It does not reduce the probability that a query itself violates the
energy or geometry assumptions. This is why the accumulated budget
\eqref{ux:ensemble-error} is separate from amplification.

\subsection{Dimension and Hidden Energy}
\label{ux:energy-tradeoff}

The simplest consequence concerns linear dimension growth. If
$n(N)=O(N)$ and $D_{n(N)}\to\infty$, then
\[
 \frac{n(N)}{D_{n(N)}}=o(N).
\]
Thus $\mathsf{rETH}$ excludes such a matching linear-dimension
reduction throughout the recovery window.

For polylogarithmic $U_n$ and polynomial modulus,
$D_n\sim\ln\ln n$. In this common window, the corollary also excludes
\begin{equation}\label{ux:polylog-exclusion}
 n(N)=o(N\ln\ln N).
\end{equation}
Indeed, if $n(N)\le\sqrt N$, then $n(N)/D_{n(N)}=o(N)$ directly.
For the remaining dimensions, $\sqrt N<n(N)\le\operatorname{poly}(N)$
gives $\ln\ln n(N)=\ln\ln N+O(1)$, and
\eqref{ux:polylog-exclusion} again implies the excluded exponent scale.
These conclusions exclude the stated asymptotic growth windows.

More generally, fixing the target dimension and modulus while increasing
$U_n$ decreases $D_n$ and raises the recovery cost. For example,
\[
 U_n=\exp(\ln n/L_n),\qquad 1\ll L_n\le\ln n
 \quad\Longrightarrow\quad
 D_n=\ln L_n+O(1).
\]
The composed exponent is then $O(n/\ln L_n)$. Lower target energy
therefore requires more dimension growth to avoid the excluded windows.
Dimension and energy must be compared together.

Known reductions provide background for this matching question, including
Regev's LWE reduction~\cite{Reg05}, small-parameter SIS and LWE
reductions~\cite{MP13}, and short-distribution module-LWE
reductions~\cite{BJRW23}. Applying the composition above to any such route
also requires a dimension-controlled reduction from $3$-SAT and matching
its actual target law, output task, observation count, and error budget.
A quantum reduction would require a separate quantum-oracle simulation.

\section{Finite-Parameter Substitutions}
\label{uc:parameters}

For each listed parameter set, we substitute its dimension and modulus,
together with the deterministic joint energy bound, into the depth
expression. The resulting formal leading term values are shown in
Table~\ref{uc:finite-parameter-table}. From~\eqref{ui:depth},
\[
 \begin{gathered}
 R^2=nV+3B^2,\qquad U=\sqrt{V+3B^2/n},\\
 D=\ln\frac{\ln(q/\sqrt n)}{1+\ln U},\qquad
 b_D=\frac{n}{2D},\qquad
 b_\infty=\frac{n}{2\ln\ln n}.
 \end{gathered}
\]
Here $nV$ bounds the joint secret and error energy, $B$ bounds each
secret coefficient, and all logarithms are natural. The quantity
$b_D$ is the finite formal substitution into the leading exponent
$n/(2D)$, while $b_\infty$ is the asymptotic comparison obtained by
replacing $D$ with $\ln\ln n$. The displayed nonintegral values of
$V$ are rounded only for presentation. All calculations use the exact
fractions in the diagnostic script.

For ML-KEM, public key generation uses $\mathrm{CBD}(\eta_1)$ for
both secret and error, giving $V=2\eta_1^2$ and $B=\eta_1$
\cite{KyberSpec21,FIPS203}. For FrodoKEM, the support bounds
$H=12,10,6$ give $V=2H^2$ and $B=H$. Here $n$ is the dimension of
one secret column, and all eight columns share the Gaussian list
\cite{FrodoSpec20}. For SABER, $B=\mu/2$ and the rounding step
$q/p=8$ gives $V=B^2+16$, including the rounding error
\cite{SaberSpec}. For ML-KEM and SABER, $n$ is the module rank times
the ring degree. For NTRU LPRime, $n=p$ and a ternary secret of exact
weight $w$ gives $V=1+w/n$ and $B=1$~\cite{NTRUPrimeSpec}. This
fraction is evaluated exactly before computing $D$.

For ML-DSA, $n=\ell m$, $B=\eta$, and
$V=\eta^2+(\eta+2^{d-1})^2$, as in
\eqref{umldsa:energy}. The public key rounding error has coefficient
magnitude at most $2^{d-1}$ and is included together with
$\mathbf s_2$ in the lifted error. The ML-DSA parameter sets use
$m=256$, $d=13$, and $\eta=2,4,2$, respectively~\cite{FIPS204}.

\begin{table}[!htbp]
\centering
\papertablesetup
\setlength{\tabcolsep}{5pt}
\renewcommand{\arraystretch}{1.12}
\caption{Parameter values, joint energy bounds, and formal leading term
values for five linear public families.}
\label{uc:finite-parameter-table}
\begin{tabular}{@{}lrrrrr@{}}
\toprule
Parameter set & $n$ & $q$ & $V$ & $b_D$ & $b_\infty$\\
\midrule
ML-KEM-512 & 512 & 3329 & 18 & 359.1 & 139.8 \\
ML-KEM-768 & 768 & 3329 & 8 & 450.2 & 202.8 \\
ML-KEM-1024 & 1024 & 3329 & 8 & 622.5 & 264.5 \\
\addlinespace
FrodoKEM-640 & 640 & 32768 & 288 & 511.3 & 171.5 \\
FrodoKEM-976 & 976 & 65536 & 200 & 659.6 & 253.0 \\
FrodoKEM-1344 & 1344 & 65536 & 72 & 772.8 & 340.3 \\
\addlinespace
LightSaber & 512 & 8192 & 41 & 354.0 & 139.8 \\
Saber & 768 & 8192 & 32 & 524.0 & 202.8 \\
FireSaber & 1024 & 8192 & 25 & 679.4 & 264.5 \\
\addlinespace
ntrulpr653 & 653 & 4621 & 1.3859 & 218.3 & 174.7 \\
ntrulpr761 & 761 & 4591 & 1.3285 & 254.0 & 201.1 \\
ntrulpr857 & 857 & 5167 & 1.3279 & 283.8 & 224.3 \\
ntrulpr953 & 953 & 6343 & 1.3620 & 311.9 & 247.4 \\
ntrulpr1013 & 1013 & 7177 & 1.3870 & 329.5 & 261.8 \\
ntrulpr1277 & 1277 & 7879 & 1.3359 & 412.0 & 324.5 \\
\addlinespace
ML-DSA-44 & 1024 & 8380417 & 16793608 & 1754.6 & 264.5 \\
ML-DSA-65 & 1280 & 8380417 & 16810016 & 2263.3 & 325.2 \\
ML-DSA-87 & 1792 & 8380417 & 16793608 & 3329.4 & 444.9 \\
\bottomrule
\end{tabular}
\end{table}

The modulus and energy jointly determine the available depth. The quantity
$b_D$ records this finite comparison, while $b_\infty$ records the common
asymptotic scale.

\begin{remark}
\label{uc:mldsa-depth}
The ML-DSA family in Definition~\ref{umldsa:family} extends the standard
compressed public key relation by increasing the ring degree and
modulus at fixed module ranks, with the secret coefficient bound
$\eta$ and rounding step $\Delta$ bounded or polylogarithmic.
In particular, $\eta$ and the number $d$ of discarded bits may stay
fixed. This choice separates dimension growth from the coefficient
scales contributing to the joint energy.

The depth can be written exactly as
\[
 D=\ln\ln n+
 \ln\!\left(\frac{\ln(q/\sqrt n)}{\ln n}\right)
 -\ln(1+\ln U).
\]
The modulus bounds $q\ge2m+1$ and $q\le n^{C_q}$, with
$n=\ell m$ and fixed $\ell$, give
$\ln(q/\sqrt n)=\Theta(\ln n)$. Hence the middle term is $O(1)$,
while $U=(\ln n)^{O(1)}$ makes the last term
$O(\ln\ln\ln n)$. Thus $D\sim\ln\ln n$.
At the listed ML-DSA dimensions, the rounding contribution
$2^{d-1}=4096$ dominates the deterministic energy bound.
For ML-DSA-44, $U\approx4098$ gives $D\approx0.292$, compared
with $\ln\ln n\approx1.936$. Consequently,
$b_D/b_\infty=\ln\ln n/D\approx6.63$.
The finite correction accounts for the separation between the
two formal exponents in the ML-DSA rows.
\end{remark}

\end{document}